\documentclass[11pt,letterpaper,table]{article}
\usepackage{amsmath,amssymb,amsfonts,amsthm}
\usepackage{thm-restate}
\usepackage{cellspace}
\usepackage{dsfont}
\usepackage{mathrsfs}
\usepackage{enumitem}
\usepackage{bbm}
\usepackage[dvipsnames]{xcolor}
\usepackage{float}
\usepackage[margin=1in]{geometry}
\usepackage[colorlinks, citecolor = blue, linkcolor = magenta]{hyperref}
\usepackage[nameinlink]{cleveref}
\Crefname{ntn}{Notation}{Notations}
\numberwithin{equation}{section}

\usepackage{tikz}
\usepackage{tikz-cd}
\usetikzlibrary{shapes.geometric, arrows, arrows.meta, fit, positioning, calc}
\tikzstyle{node} = [rectangle, rounded corners, minimum width=2cm, minimum height=1cm, text centered, draw=black, fill=blue!10]
\tikzstyle{hidden} = [draw=none, fill=none] 
\tikzstyle{single arrow} = [thick, distance=2pt, -{Latex[round]}]
\tikzstyle{dashed arrow} = [thick, dashed, distance=2pt, -{Latex[round]}]

\newtheorem{theorem}{Theorem}[section]
\newtheorem{lemma}[theorem]{Lemma}
\newtheorem{corollary}[theorem]{Corollary}
\newtheorem{conjecture}[theorem]{Conjecture}
\newtheorem{claim}[theorem]{Claim}

\theoremstyle{definition}
\newtheorem{remark}[theorem]{Remark}
\newtheorem{definition}[theorem]{Definition}
\newtheorem{proposition}[theorem]{Proposition}
\newtheorem{condition}[theorem]{Condition}
\newtheorem{notation}[theorem]{Notation}

\renewcommand{\leq}{\leqslant}
\renewcommand{\le}{\leqslant}
\renewcommand{\geq}{\geqslant}

\newcommand{\cA}{\ensuremath{\mathcal{A}}}
\newcommand{\cB}{\ensuremath{\mathcal{B}}}
\newcommand{\cC}{\ensuremath{\mathcal{C}}}
\newcommand{\cD}{\ensuremath{\mathcal{D}}}
\newcommand{\cE}{\ensuremath{\mathcal{E}}}
\newcommand{\cF}{\ensuremath{\mathcal{F}}}
\newcommand{\cG}{\ensuremath{\mathcal{G}}}

\newcommand{\cJ}{\ensuremath{\mathcal{J}}}

\newcommand{\cL}{\ensuremath{\mathcal{L}}}
\newcommand{\cM}{\ensuremath{\mathcal{M}}}
\newcommand{\cN}{\ensuremath{\mathcal{N}}}

\newcommand{\cQ}{\ensuremath{\mathcal{Q}}}
\newcommand{\cR}{\ensuremath{\mathcal{R}}}
\newcommand{\cS}{\ensuremath{\mathcal{S}}}
\newcommand{\cT}{\ensuremath{\mathcal{T}}}
\newcommand{\cU}{\ensuremath{\mathcal{U}}}
\newcommand{\cV}{\ensuremath{\mathcal{V}}}
\newcommand{\cW}{\ensuremath{\mathcal{W}}}

\newcommand{\bE}{\ensuremath{\mathbb{E}}}
\newcommand{\bF}{\ensuremath{\mathbb{F}}}
\newcommand{\bN}{\ensuremath{\mathbb{N}}}
\newcommand{\bP}{\ensuremath{\mathbb{P}}}
\newcommand{\bR}{\ensuremath{\mathbb{R}}}

\newcommand{\bZ}{\ensuremath{\mathbb{Z}}}

\newcommand{\bfv}{\ensuremath{\mathbf{v}}}

\newcommand{\Gapdone}[2]{\ensuremath{\mathsf{Gap}\textup{-}#1\textup{-}\mathsf{to}\textup{-}#2}}
\newcommand{\GapSdone}[2]{\ensuremath{\mathsf{GapS}\textup{-}#1\textup{-}\mathsf{to}\textup{-}#2}}

\newcommand{\GameAa}{\ensuremath{\mathsf{Game1a}}}
\newcommand{\GameAb}{\ensuremath{\mathsf{Game1b}}}
\newcommand{\GameBa}{\ensuremath{\mathsf{Game2a}}}
\newcommand{\GameBb}
{\ensuremath{\mathsf{Game2b}}}
\newcommand{\GameBc}
{\ensuremath{\mathsf{Game2c}}}
\newcommand{\GameCa}{\ensuremath{\mathsf{Game3}}}

\newcommand{\sfv}{\ensuremath{\mathsf{v}}}
\newcommand{\inn}{\ensuremath{\mathsf{in}}}
\newcommand{\midd}{\ensuremath{\mathsf{mid}}}
\newcommand{\outt}{\ensuremath{\mathsf{out}}}
\newcommand{\rev}{\ensuremath{\mathsf{rev}}}

\newcommand{\scrA}{\ensuremath{\mathscr{A}}}
\newcommand{\scrB}{\ensuremath{\mathscr{B}}}
\newcommand{\scrE}{\ensuremath{\mathscr{E}}}
\newcommand{\scrV}{\ensuremath{\mathscr{V}}}
\newcommand{\nil}{\textup{\texttt{nil}}}
\newcommand{\emp}{\textup{\texttt{empty}}}

\newcommand{\qbinom}[2]{\genfrac{[}{]}{0pt}{}{#1}{#2}_{2}}

\DeclareMathOperator{\toplus}{{\textstyle\bigoplus}}

\DeclareMathOperator{\Lspan}{Lspan}
\DeclareMathOperator{\Had}{Had}
\DeclareMathOperator{\Lvec}{Lvec}
\DeclareMathOperator{\Bit}{Bit}

\newcommand{\Varu}[2]{\underset{ #1 } {\mathrm{Var}} \left[ #2 \right]}

\newcommand{\Ex}[1]{\bE \left[ #1 \right]}
\newcommand{\Exu}[2]{\underset{#1} \bE \left[ #2 \right] }

\newcommand{\Exs}[2]{\bE_{#1}\left[ #2 \right]}
\renewcommand{\Pr}[1]{\bP \left[ #1 \right]} % Probability
\newcommand{\Prs}[2]{\bP_{#1}\left[ #2 \right]}
\newcommand{\Pru}[2]{\underset{ #1 }\bP \left[ #2 \right]}

\newcommand{\ind}[1]{\mathds{1} \left[ #1 \right] }

\newcommand{\inp}[2]{\left\langle #1, #2 \right\rangle}

\newcommand{\codim}{\ensuremath{\mathrm{codim}}}

\newcommand{\Gr}{\ensuremath{\mathsf{Gr}}}
\newcommand{\eps}{\varepsilon}
\newcommand{\inner}[2]{\langle{#1},{#2}\rangle}

\newcommand{\Vrf}{\operatorname{Vrf}}

\allowdisplaybreaks

\title{On the Hardness of $4$-to-$1$ Games with Perfect Completeness}
\author{Yumou Fei\thanks{Department of EECS, Massachusetts Institute of Technology. Supported by NSF CCF award 2310818 and NSF DMS award 2022448.}\and Dor Minzer\thanks{Department of Mathematics, Massachusetts Institute of Technology. Supported by NSF CCF award 2227876 and NSF CAREER award 2239160.}\and Shuo Wang\thanks{Department of Mathematics, Massachusetts Institute of Technology. Supported by NSF award 2239160.}}
\date{\vspace{-5ex}}
\begin{document}
\maketitle
\begin{abstract}
    We prove that for all $\eps>0$, there exists a positive integer $k$ such that given a $4$-to-$1$ game $\Psi$ with alphabet size at most $k$, it is $\mathbf{NP}$-hard to distinguish between the case that $\operatorname{val}(\Psi)=1$ and the case that $\operatorname{val}(\Psi)\leq \eps$. This confirms the $4$-to-$1$ Games Conjecture from [Khot, \textit{CCC 2002}]. Previously, the best known result, due to [Dinur, Khot, Kindler, Minzer, Safra], established the almost-perfect completeness version (but applied to the stricter problem of $2$-to-$1$ games). Using results from the literature, we get the following implications:
    \begin{enumerate}[label=(\arabic*)]
        \item For all $k\in \bN$, given a $3$-colorable graph $G$, it is $\mathbf{NP}$-hard to find a proper $k$-coloring. 
        \item For all $\delta>0$, given a $2$-colorable $3$-uniform hypergraph $G$, it is $\mathbf{NP}$-hard to find in it an independent set containing at least $\delta$ fraction of the vertices.
    \end{enumerate}
    
    Our proof is a three-step construction that builds on the two-step framework of [Dinur, Khot, Kindler, Minzer, Safra]. In the outer-PCP step, we use quadratic equations to gain perfect completeness. We then construct a new middle PCP that performs low-rank tests while preserving a key covering property. Finally, we construct a new inner PCP based on a tensor product of the standard Grassmann encoding with its low-rank variant due to [Golowich, FOCS 2023].
\end{abstract}
% \vspace{30ex}
% The current version of the manuscript is complete mathematically, but it is \textbf{not} in the shape we wished to share in. We have chosen to do so due to rumors that surfaced around Friday, September 11th. We will work on a more complete version of the manuscript in the close future.

\newpage 
\tableofcontents

\newpage 

\section{Introduction}
The PCP Theorem~\cite{FGLSS,AS,ALMSS} is a fundamental result in theoretical computer science with many consequences. Most relevant to the current paper are the applications in the area of hardness of approximation, in which the PCP theorem often serves as the starting point for reductions. The combination of Fourier analysis with the long-code (or, in some cases, other error correcting codes) has proven itself to be a very successful recipe, leading to many strong hardness of approximation results~\cite{Feige,Hastad,DinurSafra2005,Zuckerman2007}. However, this paradigm appears to be incapable of establishing optimal results for many other problems, and below are a few notable examples:
\begin{enumerate}[label=(\arabic*)]
    \item The Max-Cut problem: given a graph $G = (V,E)$, the goal is to partition the vertex set into two sides $A,B$ so as to maximize the number of edges crossing between $A$ and $B$. A polynomial time algorithm due to~\cite{GoemansWilliamson1995} achieves the approximation ratio $\alpha_{GW}\approx 0.878$, and the best known $\mathbf{NP}$-hardness result~\cite{TrevisanSorkinSudanWilliamson2000} has ratio $\frac{16}{17}+\eps$, for any $\eps>0$.
    \item The Vertex-Cover problem: given a graph $G = (V,E)$, the goal is to find the smallest set of vertices $A\subseteq V$ that touches all edges of $E$. Known polynomial time algorithms achieve approximation ratio $2$, and the best known $\mathbf{NP}$-hardness result~\cite{KMS1,DKKMS18,DKKMS2,KMS23} has ratio $\sqrt{2}-\eps$, for any $\eps>0$.
    \item The $3$-coloring problem: given a graph $G = (V,E)$ promised to be $3$-colorable, the goal is to find a proper $k$-coloring of the vertices of $G$ with $k$ as small as possible. Polynomial time algorithms for this problem have been studied extensively over the years~\cite{Wigderson1983,KargerMotwaniSudan1998,BlumKarger1997,AroraChlamtacCharikar2006,Chlamtac2007,KawarabayashiThorup2017,KawarabayashiThorupYoneda2024,BansalHuangLee2026}, and the state-of-the-art algorithm achieves $k=|V|^{0.2-\eps}$ for a small explicit $\eps>0$. The $\mathbf{NP}$-hardness of this problem has also been studied~\cite{KhannaLinialSafra2000,GuruswamiKhanna2004,BartoBulinKrokhinOprsal2021}, and the state-of-the-art $\mathbf{NP}$-hardness result is for $k=5$.
\end{enumerate}
To make progress on these problems, Khot~\cite{Khot} introduced the Unique-Games Conjecture and the $d$-to-$1$ Games conjectures, which we present next.
\subsection{Label-Cover, Unique Games and $d$-to-$1$ Games}
It is convenient to discuss PCPs in the language of the label-cover problem, defined as follows.
\begin{definition}\label{def:lc}
    An instance $\Psi$ of the label-cover problem consists of a bipartite graph $G = (L\sqcup R, E)$,\footnote{We always allow parallel edges, so $E$ is a multi-set.} alphabets $\Sigma_L$, $\Sigma_R$, and a collection of constraints $\{\Phi_{e}\subseteq \Sigma_L\times \Sigma_R\}_{e\in E}$. For any given positive integers $d$ and $d'$, a constraint $\Phi_e\subseteq \Sigma_{L}\times \Sigma_{R}$ is called a $d$-to-$d'$ constraint if there is a partition $\Sigma_{L}=\Sigma_{L,1}\sqcup \dots\sqcup\Sigma_{L,t}$ into sets of size $d$ and a partition $\Sigma_{R}=\Sigma_{R,1}\sqcup \dots\sqcup \Sigma_{R,t}$ into sets of size $d'$ such that
    \[
    \Phi_{e}=\bigcup_{i=1}^{t}\Sigma_{L,i}\times \Sigma_{R,i}.
    \]
\end{definition}
Given a label-cover instance $\Psi$ and assignments $A_L\colon L\to\Sigma_L$, $A_R\colon R\to\Sigma_R$, we define 
\[
\operatorname{val}_{\Psi}(A_L,A_R)
=\frac{\left|\{e=(u,v)\in E~|~(A_L(u),A_R(v))\in\Phi_e\}\right|}{|E|},
\]
that is, the fraction of constraints that are satisfied by $A_L,A_R$. We define the value of $\Psi$ as
\[
\operatorname{val}(\Psi)=\max_{A_L,A_R} \operatorname{val}_{\Psi}(A_L,A_R).
\]

For parameters $0<s<c\leq 1$, an input to the promise problem $\mathsf{GapLC}(c,s)$ is a label cover instance $\Psi$ that either has 
$\operatorname{val}(\Psi)\geq c$ or $\operatorname{val}(\Psi)\leq s$, and the goal is to distinguish between the two cases. The basic PCP theorem~\cite{FGLSS,AS,ALMSS} combined with the parallel repetition theorem~\cite{Raz} gives:
\begin{theorem}\label{thm:PCP}
    For all $\eps>0$, there exists a positive integer $r$ such that $\mathsf{GapLC}_r(1,\eps)$ is $\mathbf{NP}$-hard. Here $\mathsf{GapLC}_r$ is the $\mathsf{GapLC}$ problem in which both alphabets have size at most $r$.
\end{theorem}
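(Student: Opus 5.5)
The plan is the standard two-ingredient argument: first get constant-factor hardness of label cover with perfect completeness and constant alphabet from the PCP theorem, then amplify the soundness gap by parallel repetition, which preserves perfect completeness.

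\textbf{Step 1: a base label-cover instance.} By the PCP theorem~\cite{FGLSS,AS,ALMSS}, there is an absolute constant $\delta>0$ such that $\mathsf{Gap}$-$3\mathsf{SAT}$ is $\mathbf{NP}$-hard. Here satisfiable $3$-CNF formulas must be distinguished from those in which every assignment violates at least a $\delta$ fraction of clauses. We may also assume each variable occurs in a bounded number of clauses.

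We build the clause--variable game $\Psi_0$ as follows:
\begin{itemize}
\item $L$ is the set of clauses and $R$ is the set of variables, with an edge $(C,x)$ whenever $x$ appears in $C$.
\item $\Sigma_L$ is the set of the at most $7$ satisfying assignments to the three variables of a clause, and $\Sigma_R=\{0,1\}$.
\item The constraint $\Phi_{(C,x)}$ accepts $(a,b)$ exactly when $a$ restricted to $x$ equals $b$.
\end{itemize}
If the formula is satisfiable, then $\operatorname{val}(\Psi_0)=1$.

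For soundness, let $A_R$ be any assignment to the variables. At least a $\delta$ fraction of clauses are violated by $A_R$. For each such clause $C$, whatever satisfying local assignment $A_L(C)$ is chosen, it disagrees with $A_R$ on at least one of the three edges at $C$. Hence $\operatorname{val}(\Psi_0)\le 1-\delta/3$. The alphabets have size at most $7$.

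\textbf{Step 2: parallel repetition.} Apply Raz's parallel repetition theorem~\cite{Raz} to $\Psi_0$, taking the $t$-fold repeated game $\Psi_0^{\otimes t}$:
\begin{itemize}
\item its vertices are $t$-tuples $L^t\sqcup R^t$ and its edges are $t$-tuples of edges;
\item its alphabets are $\Sigma_L^t$ and $\Sigma_R^t$;
\item a tuple-edge is satisfied exactly when every coordinate constraint is satisfied.
\end{itemize}

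Completeness is immediate, since the coordinatewise product of perfect assignments is perfect, so the value stays $1$.

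For soundness, Raz's theorem gives $\operatorname{val}(\Psi_0^{\otimes t})\le (1-c)^{t}$ for some $c>0$ depending only on $\delta$ and the alphabet size $7$. Choosing $t=O(\log(1/\eps))$ makes this at most $\eps$. The alphabet size becomes $r=7^t$, a constant depending only on $\eps$. The reduction runs in polynomial time because $t$ is a constant.

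\textbf{Main obstacle.} The only deep step is the soundness bound of parallel repetition, which we take as the cited theorem. The key point is that its decay rate depends only on the base soundness and alphabet size, not on the instance size; this is what lets us keep $t$ and $r$ constant. Everything else is a routine check that the reduction is polynomial-time and that the resulting game fits Definition~\ref{def:lc}.
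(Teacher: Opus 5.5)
Your proposal is correct and follows the same route as the paper, which gives no separate proof and simply attributes the theorem to the PCP theorem~\cite{FGLSS,AS,ALMSS} combined with Raz's parallel repetition theorem~\cite{Raz}; the same clause--variable game followed by repetition also underlies \Cref{thm:bi-regular_LLC} in the appendix. One cosmetic point: \Cref{def:lc} uses a single left alphabet, so index each clause's satisfying assignments by $[7]$ (or take $\Sigma_L=\{0,1\}^3$ and fold clause satisfaction into the constraint); this changes nothing in the argument.
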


Khot conjectured that versions of~\Cref{thm:PCP} continue to hold even when the constraints are $d$-to-$1$ for a small $d$. Towards presenting these conjectures, we make the following definition.
\begin{definition}\label{def:Gap4to1}
    Let $d$ and $d'$ be positive integers. An instance $\Psi$ of label-cover is called a $d$-to-$d'$ game if all of its constraints are $d$-to-$d'$. In the case that $d=d'=1$, $\Psi$ is called a unique-game. We write \(\mathsf{GapUG}_r(c,s)\) for the restriction of \(\mathsf{GapLC}_r(c,s)\) to unique-game instances, and \(\Gapdone{d}{d'}_r(c,s)\) for its restriction to \(d\)-to-\(d'\) game instances.
\end{definition}
It is easy to see that one cannot hope for~\Cref{thm:PCP} to hold for unique-games, and Khot \cite{Khot} conjectured that allowing for imperfect completeness remedies that: 
\begin{conjecture}\label{conj:UGC}
    For any $\eps>0$, there exists $r\in\mathbb{N}^{+}$, such that $\mathsf{GapUG}_r(1-\eps,\eps)$ is $\mathbf{NP}$-hard.
\end{conjecture}
For $d\geq 2$ and $d'=1$, he conjectured that the assertion of~\Cref{thm:PCP} should hold as is. More precisely:
\begin{conjecture}\label{conj:d-to-1}
    For any $\eps>0$ and any integer $d\geq 2$, there exists a positive integer $r$ such that $\Gapdone{d}{1}_r(1,\eps)$ is $\mathbf{NP}$-hard.
\end{conjecture}
These conjectures, and especially~\Cref{conj:UGC}, have been shown to have remarkable implications, see for example~\cite{KhotRegev,KKMO,Raghavendra,KhotVishnoi2015} and~\cite{Khot2010UniqueGamesSurvey,Khot2010ICM,Trevisan2012UGC,Khot2014Hardness} for surveys, and significant effort has gone into studying them. 
On the algorithmic front, connections with semi-definite programming, sum-of-squares algorithms and graph expansion have led to the development of very powerful techniques and a number of algorithms for unique-games with non-trivial guarantees~\cite{CharikarMakarychevMakarychev2006,AroraKhotKollaSteurerTulsianiVishnoi2008,AroraBarakSteurer2015}. On the hardness front, a line of work~\cite{KMS1,DKKMS18,DKKMS2,KMS23} (with contributions from~\cite{BKS19,KMMs}) established the imperfect completeness version of~\Cref{conj:d-to-1},  showing that for all $\eps>0$, there is $r\in\mathbb{N}^{+}$ such that $\Gapdone{2}{1}_r(1-\eps,\eps)$ is $\mathbf{NP}$-hard. This implies that for all $\eps>0$ there exists $r\in\mathbb{N}^{+}$ such that $\mathsf{GapUG}_r(1/2,\eps)$ is $\mathbf{NP}$-hard, which to date is the best evidence towards~\Cref{conj:UGC}.

The difference between~\Cref{conj:d-to-1} and its imperfect completeness version is significant in a number of ways:
\begin{enumerate}[label=(\arabic*)]
    \item \textbf{Implications:} there are a number of hardness of approximation results that crucially rely on perfect completeness,  most prominently graph coloring problems.
    \item \textbf{Techniques:} the approach taken by~\cite{KMS1,DKKMS18,DKKMS2,KMS23} inherently incurs imperfect completeness. Currently, constructions of low-soundness PCPs with perfect completeness follow composition of PCPs, for example~\cite{MoshkovitzRaz2010,DinurHarsha2013,DinurHarshaKindler2015}, which at the very least require constructions that already have perfect completeness and small soundness (with perhaps a large alphabet which is then reduced), or from parallel repetition~\cite{Raz,Holenstein,rao2008parallel}, which inherently increases the $d$-parameter rapidly.
\end{enumerate}

\subsection{Main Result}
Our main result is that~\Cref{conj:d-to-1} holds for $d=4$ and $d'=1$.
\begin{theorem}\label{thm:main}
    For all $\eps>0$, there exists $r\in\mathbb{N}^{+}$, such that 
    $\Gapdone{4}{1}_r(1,\eps)$ is $\mathbf{NP}$-hard.
\end{theorem}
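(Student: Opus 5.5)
We prove \Cref{thm:main} by a three-step composition following the abstract: an outer PCP, a middle PCP, and an inner PCP, each preserving perfect completeness. The guiding template is the 2-to-1 reduction of~\cite{DKKMS18,DKKMS2,KMS23}. There, an outer linear-equations label cover is composed with a Grassmann-code test whose constraints are 2-to-1 on subspaces. Soundness is decoded via the expansion/pseudorandomness theorem for the Grassmann graph. The single source of imperfect completeness is the outer game, which uses linear equations over $\bF_2$ with noise. Our first change is to replace it by a system of \emph{quadratic} equations over $\bF_2$ with perfect completeness. Such systems are NP-hard to satisfy exactly, and via \Cref{thm:PCP} and standard techniques we can obtain a smooth outer label cover, still with perfect completeness and small soundness. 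In it, a left vertex is a tuple of quadratic equations, a right vertex is a sub-tuple, and labels are satisfying assignments.

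\textbf{Middle PCP.} Quadratic constraints are no longer affine, so the Grassmann encoding of a linear-space label does not directly apply. Instead we linearize: an assignment $x$ is encoded through the pair $(x, x\otimes x)$, so each quadratic equation becomes linear in the lifted vector. Consistency between $x$ and $x\otimes x$ then has to be checked. We plan to do this with a low-rank test: the verifier queries subspaces of matrices of bounded rank together with subspaces of the original space. The main requirement is a \emph{covering property}. Every honest low-rank restriction must extend to a consistent tuple, so that completeness stays perfect. The test must also be smooth enough that the composition with the outer game is well defined.

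\textbf{Inner PCP.} For each middle-PCP vertex we encode its label space with a tensor product of two codes: the standard Grassmann encoding (for the $x$ part) and Golowich's low-rank Grassmann variant (for the $x\otimes x$ part). The resulting test compares a $(\ell+1)$-dimensional object with an $\ell$-dimensional one in each factor. Each factor is 2-to-1, so the combined constraint is 4-to-1, which explains the $4$ in \Cref{thm:main}. Completeness is immediate: the honest encodings satisfy every check.

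\textbf{Soundness (the main obstacle).} We must show that any strategy with value $\geq\eps$ decodes to an outer strategy with non-negligible value. We will first prove an agreement/expansion theorem for the tensor-product test. The claim is that a non-negligibly successful assignment on the product scheme must have large density on some product zoom-in, meaning a subspace containing a fixed small set of vectors and a fixed low-rank structure, inside a hyperplane. To prove this, we will try to reduce to the known Grassmann pseudorandomness theorem~\cite{KMS23} and to Golowich's low-rank analogue. We will use hypercontractivity and level-$d$ inequalities for each factor, then combine them through a tensorization argument. The delicate point is that the two factors are correlated through the covering constraint, so plain tensorization may fail. We would then decode a short list of outer labels from the zoom-ins, as in the 2-to-1 proof. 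Finally, we use the smoothness of the outer game to argue that the lists for neighboring vertices agree with noticeable probability, which contradicts the outer soundness.
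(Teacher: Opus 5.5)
Your proposal has two genuine gaps, and both sit exactly where the paper needs its new ideas.

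\textbf{The outer game is too weak.} Whatever local rank test you use, Alice must see the test subspace, since Bob's subspace contains hers. The available decoding tools (Grassmann agreement and equality tests) give structure only after a zoom-in $Q$ and a zoom-out $W$ of codimension about $q$. A form that agrees with $x\otimes x$ on $W\otimes W$ can differ from it by any form vanishing on $W\otimes W$, which can have rank up to $2\codim W$. So within this framework you can only extract an equation-satisfying symmetric bilinear form of rank $O(q)$, never an actual assignment $x$. In the paper this is rank at most $2q_2+1$, via \Cref{lem:rk1bound} inside \Cref{lem:invoked_middle}.

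Hence the outer soundness must hold against $r$-fold joint satisfaction. Quadratic systems obtained from \Cref{thm:PCP} do not have this property, because the rank relaxation of an unsatisfiable system can be satisfiable. For example, $x_1=0,\ x_1x_2=1$ is jointly satisfied by $(1,1),(1,0)$. The paper gets the needed robustness from a PCSP-type starting point, the Dinur--Regev--Smyth hypergraph-coloring hardness (\Cref{thm:hg_coloring} with $4^r$ colors). It then passes from joint satisfaction to low-rank bilinear forms.

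Relatedly, your middle-PCP test is unspecified, and its soundness is not automatic because Alice knows $U$. The paper needs Cauchy--Schwarz plus the equality-test decoding \Cref{thm:grdecoding2}, coordinated through the extra advice $A_{\midd}$, just to reach the bounded-rank conclusion.

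\textbf{The soundness core is missing.} Your plan rests on an agreement theorem, with zoom-ins, zoom-outs and a list-decoding bound, for a product of the standard and low-rank Grassmann graphs. Nothing of this kind is available: the paper explicitly has neither a decoding theorem nor a list bound on the low-rank/mixed-type domain. You note yourself that tensorization may fail, so this is not a step but the core difficulty.

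The covering property you state is also not the relevant one. What is needed is that the lift of Bob's query be nearly uniform in Alice's domain, so that Alice's table can be decoded. For the standard Grassmann encoding over $\bF_2^{E}\otimes\bF_2^{E}$ this fails, since coordinates in $E\setminus E'$ give identically zero rows and columns. Putting the low-rank variant on both sides only moves the burden back onto the missing decoding theory.

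The idea you are missing is lazy parallel repetition:
\begin{itemize}
\item Plant the real challenge in a random block $c$ among $t\gg k$ blocks, with uniform matrix advice on the other blocks.
\item Use the low-rank (Hadamard-span) Grassmann only on block $c$, whose zoom-ins are then fixed by $O(k)$ bits of outer advice.
\item As a result, Alice's $L$ is nearly uniform in $\mathsf{Gr}(\cV(\cE),\ell_1)$, so \Cref{thm:grassmann_linear_dec} and \Cref{lem:dkkms_list_bound} are only ever applied to Alice's standard-Grassmann table.
\item Bob's decodings are shown to exist by transferring Alice's, after two-sided pruning (\Cref{lem:Bob_pre_answerable,lem:MGr_agreement_extension}).
\item Consistency comes from sending Bob's decodings back to Alice and patching many of them, from distinct critical blocks, by a sunflower argument (\Cref{lem:sunflower_transfer}), not from any list bound on Bob's side.
\end{itemize}

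Finally, the composed game is only sub-$4$-to-$1$, because answers are restricted to SNR/SR-satisfying maps. The padding reduction of \Cref{sec:exact_4_to_1} is still needed to reach exact $4$-to-$1$.
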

%\begin{remark}\label{remark:2to2}
%    With minor modifications, our argument shows that~\Cref{thm:main} holds also for $2$-to-$2$ games, which are label-cover instances where each constraint is defined by a $2$-to-$2$ map. We do not know how 
%    to prove an $\mathbf{NP}$-hardness result for $2$-to-$1$ games, though.
%\end{remark}

\subsubsection{Implications}
We now discuss some implications of~\Cref{thm:main}. Using known connections between $d$-to-$1$ games and graph coloring/independent sets~\cite{DinurMosselRegev2009,krokhin2023topology,GuruswamiSandeep2020}, we get the following results.
\begin{corollary}\label{cor:coloring}
    For any integer $k\geq 3$, given a graph $G$, it is $\mathbf{NP}$-hard to distinguish between the following two cases:
    \begin{enumerate}[label=(\arabic*)]
        \item \textbf{Yes case:} $G$ is $3$-colorable.
        \item \textbf{No case:} $G$ is not $k$-colorable.
    \end{enumerate}
\end{corollary}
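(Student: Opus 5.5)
The plan is to derive \Cref{cor:coloring} from \Cref{thm:main} through the known reductions from $d$-to-$1$ games with perfect completeness to approximate graph coloring. The most direct route is the work of Guruswami and Sandeep~\cite{GuruswamiSandeep2020}, extended via the topological framework of~\cite{krokhin2023topology}. They show that $\Gapdone{d}{1}_r(1,\eps)$ hardness, for some fixed $d$ and all small $\eps$, implies $\mathbf{NP}$-hardness of $3$-coloring versus $k$-coloring for every constant $k$. This extends the Dinur--Mossel--Regev~\cite{DinurMosselRegev2009} approach, which originally handled $4$-coloring via $2$-to-$1$ games. So the proof will consist of recalling that reduction and verifying that its hypotheses are exactly what \Cref{thm:main} provides.

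Concretely, I would proceed in three steps.

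\textbf{Step 1 (the gadget).} Fix $k$. Take a $4$-to-$1$ game $\Psi$ from \Cref{thm:main} with $\eps=\eps(k)$ to be chosen. Replace each left vertex $u$ by a cloud $\{u\}\times[3]^{\Sigma_L}$ and each right vertex $v$ by a cloud $\{v\}\times[3]^{\Sigma_R}$. For each constraint $\pi_e\colon\Sigma_L\to\Sigma_R$, add edges between $x\in[3]^{\Sigma_L}$ and $y\in[3]^{\Sigma_R}$ whenever $x_a\neq y_{\pi_e(a)}$ for all $a$. One may also add intra-cloud edges, taking the tensor power of the $K_3$ structure, as in the known reductions.

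\textbf{Step 2 (completeness).} A satisfying labeling $(A_L,A_R)$ gives the coloring $x\mapsto x_{A_L(u)}$ and $y\mapsto y_{A_R(v)}$. This is proper because $A_R(v)=\pi_e(A_L(u))$ and the edge condition requires $x_{A_L(u)}\neq y_{\pi_e(A_L(u))}$. Perfect completeness of \Cref{thm:main} is exactly what makes the graph genuinely $3$-colorable. This is the reason the earlier imperfect-completeness results did not suffice.

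\textbf{Step 3 (soundness).} Given a proper $k$-coloring, decode each cloud to a short list of labels. One uses the invariance/topological argument of~\cite{GuruswamiSandeep2020,krokhin2023topology}: a $k$-coloring of the $d$-to-$1$ ``$3$-coloring-to-$k$-coloring'' gadget must have a bounded-size set of influential coordinates, consistent across constraints. Randomly picking labels from these sets then satisfies an $\Omega_k(1)$ fraction of constraints, contradicting $\operatorname{val}(\Psi)\le\eps$ once $\eps$ is small enough in terms of $k$.

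The main obstacle is Step 3. One must check that the cited soundness analysis applies for $d=4$ with arbitrary constant $k$, rather than only for some specific $d$ or with extra structural conditions on the constraints. I expect this to hold because~\cite{GuruswamiSandeep2020} states its result for any fixed $d\ge2$. If some smoothness or structural condition on the projections is needed, I would first try to obtain it by a standard preprocessing of the instance from \Cref{thm:main} that preserves the $4$-to-$1$ form and perfect completeness.
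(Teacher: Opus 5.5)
At the level you actually rely on, your plan is the paper's. \Cref{cor:coloring} is obtained by feeding \Cref{thm:main} into the known implication (\cite{GuruswamiSandeep2020}; see also \cite{DinurMosselRegev2009,krokhin2023topology}) that the $d$-to-$1$ conjecture with perfect completeness, for a single fixed $d$, yields $\mathbf{NP}$-hardness of $3$-coloring versus $k$-coloring for every constant $k$. Your hypothesis check is also correct: \Cref{thm:main} is exactly that conjecture for $d=4$, so no preprocessing is needed. However, the reduction you write down in Steps 1--3 as a reconstruction of theirs is not their reduction, and it fails. Every constraint edge in Step 1 joins a left cloud to a right cloud, so without intra-cloud edges the graph is bipartite. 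With the intra-cloud $K_3^{\otimes n}$ edges added, color each left cloud by $x\mapsto x_{a_0}$ for an arbitrary fixed $a_0\in\Sigma_L$ (colors $1,2,3$), and each right cloud by $y\mapsto 3+y_{b_0}$ for an arbitrary fixed $b_0\in\Sigma_R$ (colors $4,5,6$). This is a proper coloring for \emph{every} instance $\Psi$. Thus the output is always $6$-colorable, and Step 3 fails for all $k\geq 6$. This coloring also refutes your claim that a $k$-coloring must have a bounded set of influential coordinates that is consistent across constraints: every cloud is colored by a dictator, but on arbitrary, mutually inconsistent labels.

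Switching to a one-sided composition as in \cite{DinurMosselRegev2009} does not rescue a $3$-color gadget either. There, edges go between the clouds of two left vertices sharing a right neighbour, with $x_a\neq x'_{a'}$ for all $a,a'$ in corresponding blocks. The color sets of $x$ and $x'$ on a pair of $4$-element blocks must then be disjoint nonempty subsets of $[3]$. So one of the two blocks is monochromatic, and blocks using all three colors have no neighbours at all. Hence no Markov operator supported on the gadget is even defined on all of $[3]^4$, let alone reversible with respect to the uniform measure, and the invariance-principle decoding does not apply. This degeneracy is essentially why DMR could handle $3$-colorable graphs only under a special ``fish-shaped'' variant of the $2$-to-$1$ conjecture. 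Circumventing it is exactly what \cite{GuruswamiSandeep2020,krokhin2023topology} supply. So the proof should simply invoke that implication as a black box, as the paper does; drop the explicit gadget and decoding in Steps 1--3 rather than offering them as the argument.
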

Previously,~\Cref{cor:coloring} was known~\cite{BartoBulinKrokhinOprsal2021} for $k=5$, or in the case the condition in the yes-case only required a proper $3$-coloring of $1-\eps$ fraction of the vertices~\cite{HechtMinzerSafra2023}. 
\begin{remark}
    Under slightly super-polynomial time reductions, our techniques give versions of~\Cref{cor:coloring} for $k$ growing with the instance size, but at a very slow rate. Namely, assuming $\mathbf{NP}$ has no algorithms running in time $n^{t(n)}$, we get that there are no polynomial time algorithms that color a given $3$-colorable graph with $O(\log\cdots\log t(n))$ colors, where the number of logs is finite (probably at most $10$, say).
\end{remark}

In the next corollary, the condition in the completeness case is weaker, but the condition in the soundness case is stronger:

\begin{corollary}\label{cor:IS_vs_coloring}
    For all $\delta>0$, given a graph $G$, it is $\mathbf{NP}$-hard to distinguish between the following two cases:
    \begin{enumerate}[label=(\arabic*)]
        \item \textbf{Yes case:} $G$ is $8$-colorable.
        %\textcolor{red}{Dor: this is $4$ if the remark I wrote about $2$-to-$2$ is correct.}
        \item \textbf{No case:} $G$ has no independent set of relative size at least $\delta$.
\end{enumerate}
\end{corollary}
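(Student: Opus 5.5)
We derive Corollary~\ref{cor:IS_vs_coloring} from Theorem~\ref{thm:main} by a direct reduction in the style of Dinur--Mossel--Regev~\cite{DinurMosselRegev2009}. We start from a $4$-to-$1$ game $\Psi$ with soundness $\eps$ (chosen later as a function of $\delta$), alphabets $\Sigma_L$ of size $4|\Sigma_R|$ and $\Sigma_R$ of size at most $r$. For each left vertex $u$ we place a block of vertices $[8]^{\Sigma_L}$; these are colorings of the labels with $8$ colors. The blocks are equipped with an $8$-colorable \emph{noise graph}: two assignments $x,y$ are adjacent when $x_a\neq y_b$ for all pairs of labels $a,b$ constrained in the relevant way.

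Concretely, for two left vertices $u,u'$ sharing a right neighbor $v$, with projections $\pi,\pi'$, we join $x\in[8]^{\Sigma_L}$ (at $u$) and $y\in[8]^{\Sigma_L}$ (at $u'$) whenever $x_a\neq y_b$ for every $a\in\pi^{-1}(c)$, $b\in\pi'^{-1}(c)$, and every $c\in\Sigma_R$. The edge is weighted according to a distribution $\mu$ on pairs of $4$-tuples in $[8]^4\times[8]^4$. This $\mu$ is supported on pairs $(s,t)$ where every coordinate of $s$ differs from every coordinate of $t$.

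\textbf{Completeness.} Given a satisfying assignment $A_L$, color the vertex $x$ at $u$ by $x_{A_L(u)}$. Since $\pi(A_L(u))=\pi'(A_L(u'))$, adjacent vertices receive different colors, so we obtain a proper $8$-coloring. Eight colors are exactly what is needed for a connected support of $\mu$ on $4$-blocks: we need disjoint color sets, for instance two disjoint $4$-subsets of $[8]$, with enough richness that $\mu$ is connected in the Mossel sense.

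\textbf{Soundness.} Suppose $I$ is an independent set of relative size $\delta$. By averaging, a $\delta/2$ fraction of the vertices $u$ have $f_u=1_{I\cap\text{block}(u)}$ with mean at least $\delta/2$. Independence means $\mathbb{E}_\mu[f_u(x)f_{u'}(y)]=0$ across every edge. We then apply the invariance-principle/Mossel reverse-hypercontractivity theorem for correlated spaces whose correlation is $\rho(\mu)<1$ (connected support). From this, $f_u$ and $f_{u'}$ must share an influential coordinate block: there are $a,b$ with $\pi(a)=\pi'(b)$ and $\mathrm{Inf}_{\pi^{-1}(c)}(T_{1-\gamma}f_u),\mathrm{Inf}_{\pi'^{-1}(c)}(T_{1-\gamma}f_{u'})\ge\tau$, where $\tau$ depends only on $\delta$. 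Decoding each $u$ by picking a random label among its boundedly many ($\le 1/(\gamma\tau)$) influential coordinates, and decoding $v$ via a random neighbor, satisfies a fraction $\mathrm{poly}(\delta,\tau)$ of the constraints. Choosing $\eps$ below this quantity gives a contradiction.

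\textbf{Main obstacle.} The hard step is the choice of $\mu$. Its marginals must be uniform on $[8]^4$, so that independent-set density transfers across edges. Its support must be connected, so that $\rho(\mu)<1$ and the multi-block invariance theorem applies with influences measured on $4$-element blocks. It must also be compatible with the $8$-coloring. I would try $\mu$ as follows: sample a uniformly random partition of $[8]$ into two $4$-sets $S,T$, then draw $s\in S^4$ and $t\in T^4$ uniformly. This gives uniform-ish marginals (correcting by reweighting), and I would check connectivity of the support graph on $[8]^4$ directly. If the marginals fail to be uniform, I would instead adopt the construction of~\cite{GuruswamiSandeep2020} or~\cite{krokhin2023topology} for $d$-to-$1$ games with $2d$ colors and cite it.
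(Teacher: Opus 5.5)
Your route is the one the paper relies on. The paper does not prove this corollary directly; it invokes the known reduction from $d$-to-$1$ games to coloring and independent set (Dinur--Mossel--Regev style, with $2d=8$ colors), and your completeness argument and decoding scheme are the standard ones. However, your proposal leaves its only nontrivial ingredient, the block distribution $\mu$, unbuilt, and the candidate you write down does not have uniform marginals.

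Write $S(x)\subseteq[8]$ for the set of entries of $x\in[8]^4$. A random split of $[8]$ into two $4$-sets gives $x$ marginal weight proportional to $\binom{8-|S(x)|}{4-|S(x)|}$, i.e.\ ratios $35:15:5:1$. You cannot compensate by weighting vertices. This marginal is not a product over coordinates, and the partition of $\Sigma_L$ into blocks $\pi^{-1}(c)$ changes with the right neighbor $v$. So no single weighting of a cloud matches all the measures $\mu_1^{\otimes\Sigma_R}$ that arise, and exact uniformity on $[8]^4$ is really needed.

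The reweighting you allude to does exist. Define:
\begin{itemize}
\item $n_i$ as the number of $x$ with $|S(x)|=i$;
\item $E_{ij}=E_{ji}>0$ as the number of pairs $(x,y)$ with $S(x)\cap S(y)=\emptyset$, $|S(x)|=i$, $|S(y)|=j$ (positive since $i+j\le 8$);
\item $P_{ij}>0$ as any symmetric matrix with row sums $n_i/8^4$, e.g.\ $P_{ij}=\eta$ off the diagonal and $P_{ii}=n_i/8^4-3\eta$ for small $\eta>0$.
\end{itemize}
Set $\mu(x,y)=P_{|S(x)|,|S(y)|}/E_{|S(x)|,|S(y)|}$ on disjoint pairs. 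The number of $y$ of each size disjoint from $x$ depends only on $|S(x)|$, so both marginals are exactly uniform. The support is the whole disjointness relation. It is connected, since each $x$ is adjacent to a constant tuple $(c,c,c,c)$ with $c\notin S(x)$ and constant tuples are pairwise adjacent. It is non-bipartite, since three constant tuples form a triangle. Hence $\rho(\mu)<1$, and $\mu$ is invariant under permuting coordinates.

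Second, the analytic tool is misnamed. No reverse-hypercontractive bound $\mathbb{E}[f_uf_{u'}]\ge c(\delta)>0$ can hold for this $\mu$. In the yes case, the union over $u$ of $\{x:x_{A_L(u)}=1\}$ is an independent set of density $1/8$, so $\mathbb{E}[f_uf_{u'}]=0$ across every edge. What you need is Mossel's low-common-influence theorem for correlated spaces with $\rho(\mu)<1$:
\begin{itemize}
\item if every block $c$ has $\min\{\mathrm{Inf}^{\le d}_{c}(f_u),\mathrm{Inf}^{\le d}_{c}(f_{u'})\}\le\tau$, then $\mathbb{E}[f_uf_{u'}]\ge c(\delta,\rho(\mu))-o_\tau(1)>0$;
\item its contrapositive gives your shared block;
\item subadditivity $\mathrm{Inf}_{\pi^{-1}(c)}\le\sum_{a\in\pi^{-1}(c)}\mathrm{Inf}_a$ turns the block into single coordinates, so $u$'s candidate list does not depend on $v$.
\end{itemize}

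Finally, \Cref{thm:main} does not provide left-regular instances, so replicate the cloud of $u$ in proportion to its degree. Otherwise the relative size of the independent set in $G$ does not control the densities of $f_u$ under the edge-marginal distribution of $u$, which is what your averaging step actually uses.
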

Using~\cite{KhotSaket2014}, we get the following result for hypergraphs:
\begin{corollary}\label{cor:IS_vs_coloring_hypergraph}
    For all $\delta>0$, given a $3$-uniform hypergraph $G$, it is $\mathbf{NP}$-hard to distinguish between the following two cases:
    \begin{enumerate}[label=(\arabic*)]
        \item \textbf{Yes case:} $G$ is $2$-colorable.
        \item \textbf{No case:} $G$ has no independent set of relative size at least $\delta$.
\end{enumerate}
\end{corollary}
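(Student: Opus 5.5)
The plan is to derive \Cref{cor:IS_vs_coloring_hypergraph} from \Cref{thm:main} by plugging it into the reduction of Khot and Saket~\cite{KhotSaket2014}. That reduction starts from a $d$-to-$1$ label-cover instance with perfect completeness and arbitrarily small soundness, and produces a $3$-uniform hypergraph. In the completeness case the hypergraph is $2$-colorable; in the soundness case it has no independent set of relative size $\delta$. The proof is therefore a composition: fix $\delta>0$, determine the soundness $\eps=\eps(\delta)$ (and the constant $d=4$) that the Khot--Saket analysis needs, and invoke \Cref{thm:main} with that $\eps$. This yields an alphabet bound $r$ and an $\mathbf{NP}$-hard instance family of $\Gapdone{4}{1}_r(1,\eps)$.

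Concretely, I would proceed in three steps.

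\textbf{Construction.} Given a $4$-to-$1$ game $\Psi$ on $G=(L\sqcup R,E)$ with $|\Sigma_L|\le r$, replace each vertex by a cloud of $\{0,1\}$-valued long-code (or dictator) coordinates over its alphabet. Hyperedges of size $3$ are placed among clouds of neighbouring vertices via the Khot--Saket test distribution. Since $r$ is a constant, the hypergraph has polynomial size.

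\textbf{Completeness.} Starting from a perfectly satisfying assignment $(A_L,A_R)$, $2$-color each cloud by the dictator coordinate of its label. Because every constraint is $4$-to-$1$ and exactly satisfied, the test distribution never produces a monochromatic hyperedge. Perfect completeness of \Cref{thm:main} is essential here: with only $1-\eps$ completeness, some hyperedges would be monochromatic and $2$-colorability would fail. This is exactly what the earlier $2$-to-$1$ result with imperfect completeness could not provide.

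\textbf{Soundness.} Given an independent set of relative size $\delta$, a Fourier or invariance argument finds, for many edges, influential coordinates on both sides. Decoding these into short label lists gives a labeling satisfying more than an $\eps$ fraction of constraints, contradicting $\operatorname{val}(\Psi)\le\eps$.

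Since all three steps are contained in~\cite{KhotSaket2014}, the proof essentially amounts to checking that their hypothesis matches our conclusion. The main obstacle is parameter matching. One must verify that the Khot--Saket theorem accepts $d$-to-$1$ games for a fixed $d$, here $d=4$, rather than only $2$-to-$1$ games. One must also verify that their required soundness $\eps(\delta,d)$ may depend only on $\delta$ and $d$, and not on the alphabet size $r$. The latter is consistent with the quantifier order in \Cref{thm:main}, where $r$ is chosen after $\eps$. If their reduction were stated only for $2$-to-$1$ games, I would first check whether their invariance-based soundness proof uses $d$ only through constants, which I expect to be the case.
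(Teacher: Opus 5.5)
Your proposal is correct and follows the paper's route. The paper obtains this corollary by feeding the perfect-completeness $4$-to-$1$ hardness of \Cref{thm:main} into the conditional reduction of~\cite{KhotSaket2014}, which derives exactly this hypergraph statement from the $d$-to-$1$ Games Conjecture; your parameter check (soundness $\eps$ depending only on $\delta$ and $d$, with $r$ chosen afterwards) is the compatibility being relied on, and your sketch of the Khot--Saket gadget's internals can be replaced by the citation.
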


Using the results of~\cite{HardtMekaRaghavendraWeitz2014},~\Cref{cor:coloring,cor:IS_vs_coloring} show hardness results for low-rank matrix completion. An instance in the $(k,r,\eps,c)$-completion problem is a matrix $A\in(\mathbb{R}\cup\{\perp\})^{n\times n}$, such that there exists $M\in\mathbb{R}^{n\times n}$ satisfying:
\begin{enumerate}[label=(\arabic*)]
    \item $|M(i,j)|\leq c$ and $M$ has rank at most $r$.
    \item Letting $\Omega\subseteq[n]\times [n]$ be the set of $(i,j)$ such that $A(i,j)\neq \bot$, we have that $|\Omega|\geq 0.9 n^2$ and for each $(i,j)\in \Omega$ we have $M(i,j) = A(i,j)$.
\end{enumerate}
The goal is to find a matrix $B\in\mathbb{R}^{n\times n}$ of rank at most $k$, such that $|B(i,j)|\leq c$ for all $(i,j)$ and 
 \[
 \sum\limits_{(i,j)\in\Omega}|A(i,j)-B(i,j)|^2\leq \eps n.
 \]
 Applying~\cite[Theorem 1.8]{HardtMekaRaghavendraWeitz2014} on~\Cref{cor:coloring} gives:
\begin{corollary}
    For all $k > 3$ and $c>1$, the $(k,3,0,c)$-completion problem is $\mathbf{NP}$-hard.
\end{corollary}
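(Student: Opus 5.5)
The statement follows by a reduction: apply~\cite[Theorem 1.8]{HardtMekaRaghavendraWeitz2014} to the hard coloring problem of~\Cref{cor:coloring}. That theorem gives a polynomial-time map from a graph $G$ on $n$ vertices to a partial matrix $A$. If $G$ is $3$-colorable, $A$ agrees on a $0.9$ fraction of its entries with a bounded rank-$3$ matrix. If $G$ is not $k$-colorable, then no rank-$k$ matrix with entries bounded by $c$ fits $A$ on $\Omega$. So an algorithm for the $(k,3,0,c)$-completion problem would distinguish $3$-colorable graphs from non-$k$-colorable ones, and~\Cref{cor:coloring} says this is $\mathbf{NP}$-hard. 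Most of the work is therefore in~\Cref{cor:coloring}, which in turn rests on~\Cref{thm:main}.

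\textbf{Step 1: \Cref{thm:main} implies \Cref{cor:coloring}.} I would use the Dinur--Mossel--Regev style reduction~\cite{DinurMosselRegev2009}, in the perfect-completeness form of~\cite{GuruswamiSandeep2020,krokhin2023topology}. Each right vertex $v$ of the $4$-to-$1$ game is replaced by a cloud of vertices indexed by $\{1,2,3\}^{\Sigma_R}$, and each left vertex by a cloud indexed by $\{1,2,3\}^{\Sigma_L}$. Edges are placed between clouds according to a $3$-coloring gadget that respects each $4$-to-$1$ projection.
\begin{itemize}
\item \emph{Completeness.} A satisfying labeling is encoded by dictator colorings, and these give a proper $3$-coloring.
\item \emph{Soundness.} Given a $k$-coloring, each cloud contains a large color class. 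Invariance-principle or hypercontractivity arguments then decode a short list of influential labels from it. Random choices from these lists satisfy an $\Omega_k(1)$ fraction of the constraints, which contradicts value $\le\eps$ once $\eps$ is small in terms of $k$.
\end{itemize}
Because the projections are $4$-to-$1$ rather than $2$-to-$1$, the gadget must use an inner Markov chain on $\{1,2,3\}^4\times\{1,2,3\}$ that is connected and has no perfect correlation. This is the condition the cited works verify.

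\textbf{Step 2: \Cref{cor:coloring} implies the stated result.} For fixed $k>3$ and $c>1$, choose the parameters of~\cite[Theorem 1.8]{HardtMekaRaghavendraWeitz2014} so that:
\begin{itemize}
\item the yes-case matrix has rank $3$, entries bounded by $c$, and at least $0.9n^2$ observed entries;
\item the no-case has error $0$, which is exactly the completion target $\eps=0$.
\end{itemize}
Then check that the reduction runs in polynomial time.

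\textbf{Main obstacle.} The hardest part is \Cref{thm:main} itself, which is taken as given here. Within this derivation, the delicate point is the soundness analysis of the coloring gadget for $4$-to-$1$ projections. Concretely, one must show that the correlated spaces involved have correlation strictly below $1$, so that the invariance-principle argument applies. For this I would rely directly on the analysis in~\cite{GuruswamiSandeep2020}.
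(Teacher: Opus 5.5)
Your argument is correct and is the paper's own: the corollary is obtained simply by applying \cite[Theorem 1.8]{HardtMekaRaghavendraWeitz2014} to \Cref{cor:coloring}. Since \Cref{cor:coloring} holds for every number of colors, the exact color bound that theorem's no-case requires (as a function of $k$ and $c$) does not matter.

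Your Step~1 is unnecessary here, because the paper obtains \Cref{cor:coloring} from \Cref{thm:main} purely by citing \cite{DinurMosselRegev2009,krokhin2023topology,GuruswamiSandeep2020}. You should also not rely on its sketch. With edges only between left and right clouds, the graph is bipartite and hence trivially $2$-colorable. On the one-sided composed constraints, a block pattern in $\{1,2,3\}^4$ that uses all three colors has no admissible partner under dictator completeness, so no chain with uniform marginals exists. The $3$-colorable case genuinely needs the additional ideas of the cited works.
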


Applying~\cite[Theorem 1.9]{HardtMekaRaghavendraWeitz2014} on~\Cref{cor:IS_vs_coloring} gives:
\begin{corollary}
    For all $k> 8$ and $c>1$, the $(k,8,\eps,c)$-completion problem is $\mathbf{NP}$-hard for all $0<\eps<\frac{1}{(2ck)^2}$.
\end{corollary}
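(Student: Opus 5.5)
This corollary is a direct application of \cite[Theorem 1.9]{HardtMekaRaghavendraWeitz2014} to the gap problem of \Cref{cor:IS_vs_coloring}. That theorem is a gap-preserving reduction: from a graph $G$ that is either $k'$-colorable or has no independent set of relative size $\delta$, it builds a partial matrix $A$. The plan is to check that the hypotheses line up with $k'=8$ and then read off the parameters.

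\textbf{Completeness.} Start from an $8$-coloring $\chi\colon V\to[8]$. Following the HMRW construction, embed each color class as one of $8$ fixed vectors $u_1,\dots,u_8\in\bR^8$ with bounded entries. Set $M(i,j)=\inp{u_{\chi(i)}}{u_{\chi(j)}}$, suitably scaled by $c$. Then $M$ has rank at most $8$ and entries of magnitude at most $c$. The revealed entries $\Omega$ are the diagonal together with the non-edges and edges prescribed by the reduction. On edges the values encode that endpoints receive different colors, so they agree with $M$ exactly.

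\textbf{Revealed fraction.} The condition $|\Omega|\geq 0.9n^2$ is obtained by blowing up the vertex set (tensoring with a clique or taking balanced copies), exactly as in the HMRW reduction. Everything is polynomial time.

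\textbf{Soundness.} This is the step where care is needed. Suppose there is a rank-$k$ matrix $B$ with $|B(i,j)|\leq c$ and $\sum_{(i,j)\in\Omega}|A(i,j)-B(i,j)|^2\leq\eps n$. Write $B=XY^\top$ with rows of bounded norm. A rank-$k$ bounded matrix admits a net argument: most vertices $i$ must have row $X_i$ nearly consistent with their constraints. Since the total error is at most $\eps n$, all but a small fraction of vertices incur error $O(\eps)$ on their diagonal and incident entries.

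Next, partition these good vertices by rounding $X_i$ into cells of a coarse partition of the bounded ball in $\bR^k$; the number of cells depends only on $k$ and $c$. Two adjacent good vertices cannot fall in the same cell, because the edge entry forces their inner product to be far from the diagonal value, with a gap of order $1/(ck)$. This is where the condition $\eps<1/(2ck)^2$ is used: it makes the per-entry error smaller than half of that gap.

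So the good vertices are colored with $f(k,c)$ colors and one class has relative size at least $\delta(k,c)>0$. That class is an independent set, contradicting the no case of \Cref{cor:IS_vs_coloring} once we invoke that corollary with this $\delta$.

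Thus the corollary follows for all $k>8$ and $c>1$. The HMRW theorem provides these steps as stated, so the only thing to verify is that the hypotheses match, namely $8$-colorability on the yes side and the small-$\delta$ independence bound on the no side.
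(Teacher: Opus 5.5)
Your proposal is correct and matches the paper's argument. The corollary follows by feeding \Cref{cor:IS_vs_coloring} (hardness of distinguishing $8$-colorable graphs from graphs with no independent set of relative size $\delta$, valid for every $\delta>0$ and hence for whatever $\delta$ depends on $k,c$) into \cite[Theorem 1.9]{HardtMekaRaghavendraWeitz2014} as a black box; your reconstruction of the HMRW reduction's internals is heuristic and unnecessary, since the citation alone carries the proof, exactly as in the paper.
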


\subsection{Our Techniques}
In this section we give a brief account of the components that go into the proof of~\Cref{thm:main}.
We refer the reader to~\Cref{sec:proof_overview} for a more thorough description.

For $2$-to-$1$ games with almost-perfect completeness, the proof of~\cite{KMS1,DKKMS18,DKKMS2,KMS23} combines the following two components:
\begin{enumerate}[label=(\arabic*)]
    \item An outer PCP with completeness close to $1$, vanishing soundness, and a feature called the ``covering property'' (originally appearing in~\cite{KhotSafra}). This game is very far from being a $2$-to-$1$ game, and to turn it into a $2$-to-$1$ game it is composed with an appropriate inner PCP. The goal of the covering property is to facilitate the composition step.
    \item An inner PCP with perfect completeness and a non-standard notion of soundness. The inner PCP is based on the Grassmann graph, the Grassmann encoding and the Grassmann agreement test. The Grassmann agreement test can be viewed as a local tester for the Grassmann code, but it turns out not to be sound in the traditional error correcting codes sense. Instead, it satisfies a weaker notion of soundness, which we refer to as non-standard soundness.
\end{enumerate}
The outer PCP and inner PCP are then composed, and the covering property of the outer PCP is used in the soundness analysis of the composed PCP (even though the inner PCP only satisfies a non-standard notion of soundness). 

The outer PCP construction of~\cite{KMS1,DKKMS18,DKKMS2,KMS23} is based on the hardness of approximation result of~\cite{Hastad} for solving linear equations, which inherently has imperfect completeness, and this propagates throughout their construction. To remedy that, a key difficulty is to come up with an outer PCP that has perfect completeness and simultaneously admits (some version of) the covering property. 

We start our outer PCP construction based on hardness of solving a system of quadratic equations (which has perfect completeness). However, the Grassmann agreement test (which is still basically the only tool we have to achieve $d$-to-$1$-ness in the inner PCP) is inherently a ``linear'' test and cannot incorporate quadratic equations. In order for quadratic equations to be accommodated in a $d$-to-$1$ way, we add a layer of construction called the ``middle PCP'' between the outer and inner PCPs. In the middle PCP, we carry out a low-rank test that allows us to convert the quadratic equations into linear ones.

However, this linearization of quadratic equations destroys the covering property that is essential for composing with a Grassmann-based inner PCP. To remedy that, we have to come up with another two key ideas: 
\begin{enumerate}[label=(\arabic*)]
\item We do a ``lazy parallel repetition'' in the middle PCP. This creates room for covering properties while preserving the soundness.
\item The covering property we get from lazy parallel repetition is still not sufficient for composing with the Grassmann agreement test of~\cite{KMS1,DKKMS18,DKKMS2,KMS23}. As a workaround, we base our inner PCP on a combination of standard Grassmann graphs over matrix spaces $\mathbb{F}_2^{n\times n}$ with a sparser version of them, known as the low-rank Grassmann graph over $\mathbb{F}_2^{n\times n}$, recently defined by~\cite{Golowich23}.
\end{enumerate}

\paragraph{Statement of AI use:} the mathematical ideas and the overall approach presented in this paper are due to the authors. ChatGPT 5.6, 6 and 6.1 were used for polishing and proofreading.

\section{Proof Overview}\label{sec:proof_overview}

In this section we give an overview of the proof of~\Cref{thm:main}. It will be useful to first discuss the construction of~\cite{KMS1,DKKMS18,DKKMS2,KMS23} which achieves imperfect completeness. We then discuss why it fails to have perfect completeness, and then in~\Cref{sec:techniques} we explain preliminary attempts at fixing that, building towards our construction. 

\subsection{The Imperfect Completeness Construction}
In this section we explain the components that go into the construction of~\cite{KMS1,DKKMS18,DKKMS2,KMS23}.
\subsubsection{The Outer PCP}\label{sec:outer_pcp_imperfect}
It will often be easier for us to describe label-cover instances in the equivalent, active language of $2$-prover-$1$-round games, which we introduce next. In a $2$-prover-$1$-round game, a verifier samples a challenge $e = (u,v)$, sends $u$ to a prover named Alice, sends $v$ to a prover named Bob. Upon receiving their challenges, they respond with labels (they are not allowed to communicate once they get their challenges), and based on their responses the verifier decides to accept or reject. The value of the game is the maximum, over all strategies of Alice and Bob, of the probability that the verifier accepts. Identifying the individual challenges the players receive as vertices of a bipartite graph, pairs of challenges the verifier may sample with its edges, and the verifier's acceptance criteria with the constraints, one gets the equivalence. It is easy to see that the value of a game is exactly the value of the resulting label cover instance. 

With this in mind, the reduction of~\cite{KMS1,DKKMS18,DKKMS2,KMS23} starts with the result of~\cite{Hastad} regarding the approximability of $3$\textsf{Lin}$_{\mathbb{F}_2}$.
\begin{definition}
    An instance of the $3$\textsf{Lin}$_{\mathbb{F}_2}$ problem consists of a set of variables $\scrV$ and a set of equations $\scrE$, where each $e\in \scrE$ is a linear equation over $\mathbb{F}_2$ containing $3$ distinct variables.
\end{definition}
H{\aa}stad proved that for all $\eps>0$, \textsf{Gap}$3$\textsf{Lin}$_{\mathbb{F}_2}(1-\eps,1/2+\eps)$ is $\mathbf{NP}$-hard, and the reduction of~\cite{KMS1,DKKMS18,DKKMS2,KMS23} starts off with (some regularized version of) this problem. More specifically, they use the parallel repetition theorem to prove $\mathbf{NP}$-hardness for a certain type of label-cover instance $\Psi_{q,\beta,k} = \Psi_{q,\beta,k}(\scrV,\scrE)$, which we present below in the active $2$-prover-$1$-round game view, and refer to as the $\beta$-smooth $k$-fold repeated game. For technical reasons, this game contains an additional feature called ``advice'', which we encourage the reader to ignore in a first reading.
 
\begin{enumerate}
    \item \textbf{Alice's challenge:} the verifier samples equations $e_1,\ldots,e_k$ from $\scrE$ uniformly and independently, and sets $E$ to be the set of all variables that appear in them.
    \item \textbf{Bob's challenge:} for each $i\in \{1,\dots,k\}$ independently, with probability $1-\beta$, the verifier takes $E_i'$ to be the set of variables in $e_i$, and with probability $\beta$ the verifier chooses $E_i'$ to consist of a single uniformly chosen variable from $e_i$. The verifier sets $E' = E_1'\cup\ldots\cup E_k'$.
    \item \textbf{Advice:} the verifier chooses $b^{(1)},\ldots,b^{(q)}\in \mathbb{F}_2^{E'}$ uniformly, and then takes $a^{(1)},\ldots,a^{(q)}\in\mathbb{F}_2^{E}$ where $a^{(i)}$ is obtained from $b^{(i)}$ by filling the missing coordinates with $0$'s. We write $B=\bigl(b^{(1)},\ldots,b^{(q)}\bigr)$ and $A=\bigl(a^{(1)},\ldots,a^{(q)}\bigr)$.
    \item \textbf{Acceptance criteria:} the verifier sends $(E,A)$ to Alice, 
    and $(E',B)$ to Bob. They respond with vectors $x\in \mathbb{F}_2^E$ and $y\in\mathbb{F}_2^{E'}$ respectively, and the verifier accepts if and only if $x$ satisfies all of the equations $e_1,\ldots,e_k$ and $x|_{E'}= y$.
\end{enumerate}
The works~\cite{KMS1,DKKMS18,DKKMS2,KMS23} prove that for all $\eps>0$, it is $\mathbf{NP}$-hard to distinguish between the case that $\operatorname{val}(\Psi_{q,\beta,k})\geq 1-\eps$, and the case that $\operatorname{val}(\Psi_{q,\beta,k})\leq 2^{-\Omega_{q}(\beta k)}$. Here and throughout, one should think of $q$ as an absolute constant, and of $\beta$ as slightly larger than $1/k$, say $\beta= k^{-3/4}$.
The game $\Psi_{q,\beta,k}$ has several useful features in the context of $2$-to-$1$ Games:
\begin{enumerate}[label=(\arabic*)]
    \item \textbf{Simplicity of the constraints:} the constraints of $\Psi_{q,\beta,k}$ can all be seen to be linear equations in the responses of the provers. This is because the starting point, the $3$\textsf{Lin}$_{\mathbb{F}_2}$, consisted of linear constraints, and on top of that we have the consistency constraints between the responses of Alice and Bob, which are also linear.
    \item \textbf{The covering property:} for a fixed $E$, the spaces $\mathbb{F}_2^{E'}$ living on Bob's side cover Alice's space $\mathbb{F}_2^{E}$ almost perfectly. For example, fixing $E$, taking $E'$ randomly as described above conditioned on $E$, then $y\in\mathbb{F}_2^{E'}$ and lifting it to $x\in\mathbb{F}_2^{E}$ by filling in the missing coordinates with $0$'s, we have that the distribution of $x$ is statistically close to the uniform distribution over $\mathbb{F}_2^{E}$ (this property first appeared in~\cite{KhotSafra} in a somewhat different context).
\end{enumerate}
The relevance of these properties will become clear in the next part, the inner PCP. 

\subsubsection{The Inner PCP}\label{sec:inner_pcp_imperfect}

The 2-prover 1-round game $\Psi_{q,\beta,k}$ constructed in~\Cref{sec:outer_pcp_imperfect} is far from $2$-to-$1$, because given a Bob-answer $y\in \bF_{2}^{E'}$, there are many possible Alice-answers $x\in \bF_{2}^{E}$ consistent with it. In order to make it a $2$-to-$1$ game, in a single query we can ask the provers only for a smaller piece of information.   

For each possible question $(E,A)$ to Alice, we ask her to prepare her answer in ``Grassmann encoding,'' and instead of requesting a full answer $x\in \bF_{2}^{E}$, we ask her to return only the information in one location of the Grassmann encoding of $x$. 

The Grassmann encoding scheme is based on the Grassmann graph
$\mathsf{Gr}(\bF_{2}^{E},\ell)$, whose vertices are the $\ell$-dimensional
linear subspaces of $\bF_{2}^{E}$. Two vertices $L$ and $L'$ are adjacent
if and only if $\dim(L\cap L')=\ell-1$. By a slight abuse of notation,
we also use $\mathsf{Gr}(\bF_{2}^{E},\ell)$ to denote the vertex set of
this graph. The Grassmann encoding of a vector $x\in\bF_{2}^{E}$ stores, at each
location $L\in\mathsf{Gr}(\bF_{2}^{E},\ell)$, the local linear function
\[
F[L]:L\rightarrow \bF_{2}\text{ defined by }z\mapsto \inp{z}{x}.
\]

Similarly, we ask Bob to prepare his answer $y\in \bF_{2}^{E'}$ in Grassmann encoding, where each location $L'\in\mathsf{Gr}(\bF_{2}^{E'},\ell-1)$ stores the local function
\[
G[L']:L'\rightarrow \bF_{2}\text{ defined by }z\mapsto \inp{z}{y}.
\]

The final $2$-to-$1$ game is the composition of the outer PCP with the Grassmann graph gadget. In the composed PCP, upon choosing $(E,A)$ and $(E',B)$ as in $\Psi_{q,\beta,k}$, the verifier chooses a space $L'$ from $\mathsf{Gr}(\mathbb{F}_2^{E'},\ell-1)$ and a space $L\in\mathsf{Gr}(\bF_{2}^{E'},\ell)$ containing $L'$. The subspace $L\subseteq \bF_{2}^{E'}$ will be regarded as a subspace of Alice's space $\bF_{2}^{E}$ by appending $0$'s in the missing coordinates. The verifier then sends $(E,A,L)$ to Alice and $(E',B,L')$ to Bob, expecting to get linear functions $F[E,A,L]\colon L\to\mathbb{F}_2$ and 
$G[E',B,L']\colon L'\to\mathbb{F}_2$. The verifier accepts only if $G[E',B,L']$ agrees with the restriction $F[E,A,L]|_{L'}$. This is essentially a $2$-to-$1$ game because for any possible choice of $G[E',B,L']$, there are at most two choices of $F[E,A,L]$ consistent with it.

\paragraph{Covering property and consistency:} consider a question $(E,A)$ for Alice in the outer PCP (think of $\ell$ as much smaller than $k$), and the Grassmann graph $\mathsf{Gr}(\bF_{2}^{E},\ell)$. The covering property above asserts that if we sample $E'$ as in the outer PCP and a vertex in the graph $\mathsf{\Gr}(\bF_{2}^{E'},\ell)$, lifting it to $\mathsf{Gr}(\bF_{2}^{E},\ell)$ yields a nearly uniform vertex in the latter graph. This implies that if Alice's assignment $F[E,A,\cdot]$ has agreement $\varepsilon'$ with an actual Grassmann codeword $F$, then with decent probability over $E'$, Bob's assignment $G[E',B,\cdot]$ agrees with the projection of $F$ to $\mathbb{F}_2^{E'}$. In short, the covering property ensures the consistency between decodings of assignments of the provers holds.

\paragraph{The linear equations side conditions:} recall that in the outer PCP we want to ensure that Alice's response $x\in\mathbb{F}_2^{E}$ satisfies a collection of $k$ linear equations. Note that these equations correspond to requiring that the function $z\mapsto\inner{z}{x}$ attains specific values on $k$ inputs. More precisely, letting the equations be $e_1,\ldots,e_k$, letting $s^{(i)}\in\mathbb{F}_{2}^{E}$ be the indicator vector of the variables appearing in $e_i$, and letting $t_i\in\mathbb{F}_2$ be the constant term of the equation $e_i$, we want to ensure that $\inp{s^{(i)}}{x}=t_i$. To do so, we define the space $N_{E} = \operatorname{span}(s^{(1)},\ldots,s^{(k)})$, and request Alice to give assignment $F[E,A,L+N_{E}]\colon (L+N_{E})\to\mathbb{F}_2$ (instead of just $L\rightarrow \bF_{2}$). The verifier then checks consistency between Alice and Bob similarly to the above, and additionally that $F[E,A,L+N_{E}]|_{N_{E}}= g_E$, where the linear function $g_{E}\colon N_{E}\to\mathbb{F}_2$ is defined by $g_E(s^{(i)})=t_i$.

\subsubsection{On the Non-Standard Notion of Soundness}\label{sec:nonstnadrad_soundness}

One would wish that the Grassmann gadget in the inner PCP has the following type of soundness: suppose $F[L]$ is a linear map $L\rightarrow \bF_{2}$ for each $L\in \mathsf{Gr}(\bF_{2}^{E},\ell)$, such that
\begin{equation}\label{eq:overview_Grassmann_test}
\Pru{L_{1},L_{2}\in \mathsf{Gr}(\bF_{2}^{E},\ell)}{F[L_{1}]|_{L_{1}\cap L_{2}}=F[L_{2}]|_{L_{1}\cap L_{2}}\,\Big|\,\dim(L_{1}\cap L_{2})=\ell-1}\geq \delta
\end{equation}
for some constant $\delta\in (0,1)$, then there exists a linear function $f:\bF_{2}^{E}\rightarrow \bF_{2}$ such that
\[
\Pru{L\in \mathsf{Gr}(\bF_{2}^{E},\ell)}{F[L]=f|_{L}\Big.}\geq \delta'
\]
for some $\delta'\in (0,1)$. However, this conclusion cannot be made, as we next explain. Take $\varepsilon>0$ to be a small constant, and take a random set $X\subseteq\mathbb{F}_2^{E}\setminus\{0\}$ of size $2^{-\ell}\eps  \cdot 2^{|E|}$. For each $x\in X$ choose a random linear function $f_x\colon \mathbb{F}_2^{E}\to\mathbb{F}_2$, and set a table $F$ on $\ell$-dimensional subspaces $L$ as follows:
\begin{itemize}
    \item If $L$ contains a unique point $x\in X$, define $F[L] = f_x|_{L}$.
    \item Otherwise, choose $F[L]$ as a random linear function on $L$.
\end{itemize}
Using the inclusion-exclusion formula, one can prove that for a random edge $\{L_1,L_2\}$ in the Grassmann graph, with probability at least $\Omega(\varepsilon)$ there is $x\in X$ such that $L_1\cap X = L_2\cap X = \{x\}$, and $F$ satisfies such constraints. Thus, $F$ passes the Grassmann test \Cref{eq:overview_Grassmann_test} with $\delta=\Omega(\eps)$. However, it only agrees on $o(1)$ locations with any given codeword from the Grassmann code. 

This example tells us that while there might be no linear function that agrees with $F$ globally, it may be possible to extract a linear function by focusing on certain substructures inside the Grassmann graph. Indeed, it turns out that what can be concluded from~\Cref{eq:overview_Grassmann_test} is something of this form: for at least $\exp(-O(\ell^{2}))$ fraction of $Q\in \mathsf{Gr}(\bF_{2}^{E},q)$, where $q$ depends only on $\delta$, there exists $W\in \mathsf{Gr}(\bF_{2}^{E},|E|-q)$ and a linear function $f:W\rightarrow \bF_{2}$ such that
\begin{equation}\label{eq:overview_Grassmann_non_standard}
\Pru{L\in \mathsf{Gr}(\bF_{2}^{E},\ell)}{F[L]\equiv f|_{L}\,\Big|\,Q\subseteq L\subseteq W}\geq \delta'.
\end{equation}
We refer to the spaces $Q$ and $W$ in~\Cref{eq:overview_Grassmann_non_standard} as zoom-ins and zoom-outs, respectively, and encourage the reader to ignore $W$ (it will be irrelevant for the majority of the discussion). In words,~\Cref{eq:overview_Grassmann_non_standard} asserts that if $F$ passes the Grassmann test with noticeable probability, then inside non-trivial fraction of zoom-ins of constant dimension $q$, it does have a global linear structure.

The proof of~\cite{KMS1,DKKMS18,DKKMS2,KMS23} has to deal with this non-standard notion of soundness, and this is the purpose of the advice feature in the game $\Psi_{q,\beta,k}$. More precisely, given Alice and Bob's tables of assignments $F$ and $G$ for the composed PCP, one needs to be able to derive a strategy for the original game $\Psi_{q,\beta,k}$. However, the strategy that one can derive using \Cref{eq:overview_Grassmann_non_standard} heavily depends on the choice of what zoom-in $Q$ is used. The advice feature allows the two provers in $\Psi_{q,\beta,k}$ to use (essentially) the same $Q$ for decoding --- namely the span of their advice vectors --- which guarantees consistency between their answers.

\subsubsection{Difficulties in Achieving Perfect Completeness}

The construction for $2$-to-$1$ games with imperfect completeness can be illustrated using~\Cref{fig:imperfect_completeness}.

\begin{figure}[H]
\centering
\begin{tikzpicture}
\node (A) [node, align = center] at (0,0) {\textbf{Outer PCP}\\ 1. equation vs. variable game\\
2. smooth parallel repetition\\
(for low soundness\\
and covering property)};
\node (B) [node, align = center, right=4cm of A]
{\textbf{Inner PCP}\\ Grassmann agreement test};

\draw [single arrow] (A) -- (B);
\end{tikzpicture}
\caption{The imperfect-completeness construction of~\cite{KMS1,DKKMS18,DKKMS2,KMS23}}
\label{fig:imperfect_completeness}
\end{figure}

The above construction loses perfect completeness right from the start: it uses the $3$\textsf{Lin}$_{\mathbb{F}_2}$ problem, which inherently has imperfect completeness. Moreover, in some sense this seems necessary: the Grassmann encoding presented above only allows for linear checks in the encoded string $x\in\mathbb{F}_2^n$. 

Thus, to achieve perfect completeness it is necessary to come up with an encoding scheme which is more powerful than the Grassmann encoding, in the sense that it allows for non-linear checks on the encoded string $x\in\mathbb{F}_2^{E}$. By itself this is not a hard task; for example, the long-code would work. However, the key difficulty is that one additionally needs the encoding scheme to satisfy a covering property, and then build an outer PCP with a suitable advice feature.

\subsection{Our Techniques}\label{sec:techniques}
In this section we discuss the new ideas that go into our construction.
\subsubsection{A First Attempt: the Quadratic Grassmann Encoding}\label{sec:first_attempt}
We begin by outlining a natural first attempt at a reduction achieving perfect completeness. It starts with the problem of quadratic equations, $3\mathsf{QS}_{\mathbb{F}_2}$, defined as follows.
\begin{definition}\label{def:3QS}
    An instance of the $3$\textsf{QS}$_{\mathbb{F}_2}$ problem consists of a set of variables $\scrV$ and a set of equations $\scrE$, where each $e\in \scrE$ is a quadratic equation over $\mathbb{F}_2$ containing $3$ distinct variables.
\end{definition}
Using standard results, it can be proved that there is $s<1$ such that 
$\mathsf{Gap}3\mathsf{QS}_{\mathbb{F}_2}[1,s]$ is $\mathbf{NP}$-hard. Starting with an instance $(\scrV,\scrE)$ of $3$\textsf{QS}$_{\mathbb{F}_2}$, one can construct the $\beta$-smooth $k$-fold repeated game corresponding to it, $\Psi_{\beta,k} = \Psi_{\beta,k}(\scrV,\scrE)$ in the same way as in~\Cref{sec:outer_pcp_imperfect} (we ignore the advice for now). The main question is how to compose it with a Grassmann-style encoding so as to get to a $d$-to-$1$ Game. 

In the game $\Psi_{\beta,k}$, the verifier performs two types of checks: consistency between the responses of Alice and Bob, and quadratic equations in the response of Alice. The Grassmann encoding is compatible with the former type of checks, but not with the latter type of checks, in the sense that we do not know how to incorporate checking quadratic equations on the encoded message into the Grassmann encoding itself. Instead, it is natural to consider the \emph{quadratic Grassmann encoding}.

Let $x\in\mathbb{F}_2^{E}$ represent Alice's response in the game $\Psi_{\beta,k}$. In the quadratic Grassmann encoding, one considers the functions over $\mathbb{F}_2^{E\times E}$ defined by $z\mapsto\inner{z}{x\otimes x}$, and takes its Grassmann encoding $QF$ of $x$, which assigns to each $\ell$-dimensional subspace $L\subseteq \mathbb{F}_2^{E\times E}$ the linear function
\[QF[L]:L\rightarrow \bF_{2}\text{ defined by }z\mapsto\inp{z}{x\otimes x}.\]
The main benefit of the quadratic Grassmann encoding is that it indeed allows us to test quadratic equations as side conditions, as we explain next.

\paragraph{The quadratic-equation side conditions:} Given a legitimate quadratic Grassmann codeword $QF$, it is possible to incorporate with it checks of quadratic equations on the encoded message. Indeed, suppose that $e_1,\ldots,e_k$ are quadratic equations over $E$, let $s^{(1)},\ldots,s^{(k)}\in\mathbb{F}_2^{E\times E}$ be the indicator vectors of the pairs that appear in $e_1,\ldots,e_k$ respectively, and let $t_1,\ldots,t_k$ be the constant terms. Then the equation $e_i$ can be written as $\inp{s^{(i)}}{x\otimes x}=t_i$ for each $i\in \{1,\ldots,k\}$. As before, this can be folded into the encoding by defining $N_{E} = \operatorname{span}(s^{(1)},\ldots,s^{(k)})$, asking for the table to assign values to the sum $L+N_E$ (instead of to just $L$), and requiring that their projection onto $N_E$ completely agrees with the linear function $g\colon N_E\to\mathbb{F}_2$ defined by $g(s^{(i)})=t_i$. This way of incorporating quadratic side conditions will work for all of the constructions we present below, so for simplicity of presentation they will be omitted henceforth.

\vspace{1ex}

On the other hand, the quadratic Grassmann encoding has $2$ very significant drawbacks:
\begin{enumerate}[label=(\arabic*)]
    \item \textbf{Codeword testing:} Even ignoring the issue of non-standard-type soundness, the Grassmann agreement test from~\Cref{sec:inner_pcp_imperfect} is not actually a tester for the quadratic Grassmann code. Namely, if a table $G$ that assigns each $\ell$-dimensional subspace $L\subseteq \mathbb{F}_2^{E\times E}$ a linear function $G[L]\colon L\to \mathbb{F}_2$ passes it with non-negligible probability, then at best we will be able to argue that it has significant agreement with the Grassmann encoding of an arbitrary $x'\in\mathbb{F}_2^{E\times E}$, i.e., not necessarily of rank $1$. We need an additional feature that will guarantee that $x'$ has rank $1$ (or at the very least small rank) and is symmetric.
    \item \textbf{The covering property:} The covering property does not hold for the quadratic Grassmann encoding. Namely, fixing a question for Alice $E$ and looking at typical possible questions $E'$ for Bob (conditioned on $E$), the (lifting of the) spaces $\mathbb{F}_2^{E'\times E'}$ do not cover $\mathbb{F}_2^{E\times  E}$. Indeed, the number of such typical questions for Bob is roughly $\binom{3k}{2\beta k}$, and the size of Bob's space is roughly $2^{(3-2\beta )^2 k^2}$, so in total they can cover at most 
    \[
    \binom{3k}{2\beta k}2^{(3-2\beta )^2 k^2}
    \leq 2^{O(\beta k\log(1/\beta))} 2^{9k^2 - 8\beta k^{2}} = o(2^{9k^2})
    \]
    points in $\mathbb{F}_2^{E\times E}$, which is a tiny fraction.
\end{enumerate}
The first issue, namely ensuring that the encoded message has a low rank, is significantly easier to handle, and to do so we incorporate tensor tests. More precisely, instead of asking for the value of the Grassmann encoding at a subspace $L\subseteq\mathbb{F}_2^{E\times E}$, we will also take a random subspace $U\subseteq\mathbb{F}_2^{E}$ of dimension $\ell$, and then ask for the value at the subspace $L + U\otimes U$, requiring that the projection onto $U\otimes U$ is of rank at most $1$. This yields the following $4$-to-$4$ Grassmann agreement +  tensor test. Upon receiving a table $F$ assigning linear functions to spaces $L + U\otimes U$ as above, do the following:
\begin{enumerate}
    \item Sample $L_{1},L_{2}\subseteq\mathbb{F}_2^{E\times E}$ of dimension $\ell$ intersecting at dimension $\ell-1$, and $U_{1},U_{2}\subseteq\mathbb{F}_2^E$ of dimension $\ell$ intersecting at dimension $\ell-1$.
    \item Query $F[L_{1}+U_{1}\otimes U_{1}]$ and $F[L_{2}+U_{2}\otimes U_{2}]$. Accept if they agree on \[L_{1}\cap L_{2}+(U_{1}\cap U_{2})\otimes (U_{1}\cap U_{2}),\] and their respective projections onto $U_{1}\otimes U_{1}$ and $U_{2}\otimes U_{2}$ are symmetric and of rank at most $1$.
\end{enumerate}
Given the (non-standard) soundness of the basic Grassmann agreement test, one can analyze the Grassmann agreement + tensor test and show that it satisfies a similar (non-standard) notion of soundness. Also, this test is $4$-to-$4$: given a linear function on $L_{1}$, there are at most $2$ linear functions on $L_{2}$ that agree with it, and given a symmetric rank-$1$ functional on $U_{1}\otimes U_{1}$, there are at most $2$ symmetric rank-$1$ linear functions on $U_{2}\otimes U_{2}$ agreeing with it. 

Ensuring symmetry is more straightforward. To do that, define for distinct points $i,j\in E$ the matrix $M_{i,j}\in \mathbb{F}_2^{E\times E}$ which is $1$ only at locations $(i,j)$ and $(j,i)$, and set $N_{\mathsf{sym}} = \operatorname{span}(\{M_{i,j}~|~i\neq j\})$. Then, $x'\in\mathbb{F}_2^{E\times E}$ is symmetric if and only if the function $z\mapsto \inp{z}{x'}$ vanishes on $N_{\mathsf{sym}}$. Thus, instead of requesting an assignment on $L+U\otimes U$, the verifier requests an assignment on $L+U\otimes U+N_{\mathsf{sym}}$, and checks that its projection on $N_{\mathsf{sym}}$ vanishes.

The second issue, namely the fact that the covering property fails, is substantially more difficult to handle, and in the next section we begin the discussion about it.

\subsubsection{Variations on the Quadratic Grassmann Encoding}\label{sec:variations_gras}
From now on, it will be convenient for us to think of points from $\mathbb{F}_2^{E\times E}$ and $\mathbb{F}_2^{E'\times E'}$ as matrices. The covering property fails for the quadratic Grassmann encoding because if, letting Bob's question be denoted by $E'$, we have $i\in E\setminus E'$, then $i$-th row and column will always be $0$ in liftings from $\mathbb{F}_2^{E'\times E'}$ to $\mathbb{F}_2^{E\times E}$. On the other hand, typical matrices from $\mathbb{F}_2^{E\times E}$ do not contain fully-$0$ rows. This raises the question of whether we can replace the structure of $E\times E$ with a sparser structure that circumvents this atypical behaviour of liftings. Moreover, requiring a response from the provers for each pair of variables in their respective variable sets is not necessary, as the quadratic equations we wish to check only require access to $\Theta(k)$ pairs (out of the $\binom{k}{2}$ many ones that exist).

This view motivates the following general recipe: can we find a graph $H$ with vertex set $E$ and edge set $\cE$, such that we can restrict the quadratic Grassmann encoding to $\mathbb{F}_2^{\cE}$ in a meaningful way? The quadratic Grassmann encoding discussed above corresponds to choosing $H$ to be a complete graph, but there are many other options for $H$. Below are a few simple choices we tried:
\begin{enumerate}[label=(\arabic*)]
    \item Expander graphs: One can try to take $H$ to be an arbitrary sparse expander graph. It can be shown that if the degree of $H$ is small then now the covering property will hold. Unfortunately, an arbitrary expander graph does not support any form of tensor tests.
    \item Cliques in a sparse high-dimensional expander: one can take a sparse high-dimensional expander $X = (X(0),\ldots,X(k))$, where $X(0)$ is the vertex set, and for each $i$, $X(i)\subseteq\binom{X(0)}{i+1}$, and take $H$ to be the graph underlying it. Because the degree of $X$ is constant, the covering property holds. Also, we can perform local type of tensor tests: if $K\in X(k)$ is a $(k+1)$-face, then $K$ is a clique in $X$, so we can choose a subspace $U\subseteq \mathbb{F}_2^{X(0)}$ fully supported in $K$, and have access to all of the entries in $U\otimes U$. 
    
    This observation initially seems very promising, but it has one serious issue. We can only afford to perform such a local tensor test for a single clique (or perhaps $O(1)$-many cliques if we are fine with increasing the value of $d$ for which we are proving~\Cref{conj:d-to-1}). Unfortunately, these localized tensor tests do not actually guarantee that the encoded function has a small rank, and Alice and Bob can use a cheating strategy in the following way. When Alice receives a space of the type $L+U\otimes U$, she extracts from it a rank-$1$ vector, and takes any element $v$ in its support. Then it is highly likely that $v$ belongs to $K$, and because $X$ has constant degree, the neighbourhood $N(v)$ of $v$ has constant size. Thus, Alice can choose a linear function on $\mathbb{F}_2^{X(1)}$ which is rank-$1$ on $\mathbb{F}_2^{X(1)\cap (N(v)\times  N(v))}$ but otherwise arbitrary, and responds according to it. Bob proceeds in the same way, where they agree beforehand on what is the rank-$1$ function they will use, and what the arbitrary function they will use on the rest of the space.

    We remark that another attempt at implementing tensor tests via high-dimensional expanders is based on coboundary-type expansion, and we have not explored this thoroughly.
\end{enumerate}
In summary, the attempts above suggest that there are meaningful variants of the quadratic Grassmann encoding that enjoy the covering property, but they come at the cost of, at best, being able to ensure small rank locally. This type of guarantee seems insufficient, at least with an outer PCP along the lines of~\Cref{sec:outer_pcp_imperfect}. 

\begin{figure}[H]
\centering
\begin{tikzpicture}
\node (A) [node, align = center] at (0,0) {\textbf{Outer PCP}\\ 1. equation vs. variable game\\
2. smooth parallel repetition\\
(for low soundness\\
and covering property)};
\node (B) [node, align = center, right=4cm of A]
{\textbf{Inner PCP}\\ 1. Grassmann agreement test \\
2. tensor test\\
3. expander variations? \\
(for covering property)};

\draw [dashed arrow] (A) -- (B) node[midway, above, align = center] {not enough\\ covering property};
\end{tikzpicture}
\caption{Our failed attempt for perfect completeness}
\label{fig:failed_perfect_completeness}
\end{figure}

\subsubsection{A Second Attempt: Three-Step Construction}\label{sec:second_attempt}

So far, the described attempts to create covering property for a quadratic-Grassmann-based inner PCP are not successful. In particular, it seems that the idea of ``sparsifying'' the quadratic-Grassmann encoding conflicts with tensor test. To address this conflict, we separate the tensor test from the inner PCP and name it the ``middle PCP'' (see~\Cref{fig:perfect_completeness}). Notice that the tensor test (proposed in~\Cref{sec:first_attempt}) is a Grassmann structure over $\bF_{2}^{E}$, instead of over $\bF_{2}^{E\times E}$. Therefore, the tensor test by itself is compatible with the covering property provided by the smooth parallel repetition in the outer PCP. We only have to create more covering property for the inner PCP (which now consists only of a Grassmann agreement test) \emph{after} incorporating tensor tests into the middle PCP. This will be provided by a ``lazy parallel repetition'' on the level of the middle PCP (see~\Cref{fig:perfect_completeness}).

\begin{figure}[H]
\centering
\begin{tikzpicture}
\node (A) [node, align = center] at (0,0) {\textbf{Outer PCP}\\ 1. equation vs. variable game\\
2. smooth parallel repetition\\
(for low soundness\\
and covering property)};
\node (B) [node, align = center, right=0.5cm of A]
{\textbf{Middle PCP}\\ 1. tensor test \\
2. lazy parallel repetition\\
(for covering property)};
\node (C) [node, align = center, right=0.5cm of B]
{\textbf{Inner PCP}\\
Grassmann agreement test\\
with low-rank variation
};
\draw [single arrow] (A) -- (B) ;
\draw [single arrow] (B) -- (C) ;
\end{tikzpicture}
\caption{Our construction for perfect completeness}
\label{fig:perfect_completeness}
\end{figure}

\subsubsection{Our Outer PCP}\label{sec:outer_pcp_intro}

Besides using quadratic equations to replace linear ones, the main difference between our outer PCP and the one in~\Cref{sec:outer_pcp_imperfect} is that we need to establish a stronger soundness condition. This is because the tensor tests we ultimately perform will only guarantee that the decoded assignment has low rank (as opposed to rank-$1$).

Given a collection of quadratic equations $(\scrV,\scrE)$ over $\mathbb{F}_2$ and a solution 
$x\in \bF_{2}^{\scrV}$, one can consider the rank-$1$ (or 0) matrix $x\otimes x\in \bF_{2}^{\scrV\times \scrV}$ and view it as an assignment to $(\scrV,\scrE)$. This proposes a broader set of solutions given by low-rank matrices, motivating the following definition:
\begin{definition}
    Let $e$ be a quadratic equation over $\scrV$ described by $\inp{x}{Ax}=b$ for a matrix $A\in \bF_{2}^{\scrV\times \scrV}$, and let $x^{(1)},\ldots,x^{(r)}\in\mathbb{F}_2^{\scrV}$ be assignments. We say that $x^{(1)},\ldots,x^{(r)}$ jointly satisfy $e$ if
    \[
    \sum\limits_{i=1}^{r}\sum\limits_{u,v\in \scrV}A_{uv}x^{(i)}_{u}x^{(i)}_{v}=b.
    \]
\end{definition}
We remark that~\cite{KhotSaketSuper} use the same notion, which they refer to as ``superposition complexity''. With this definition in mind, the starting point of our reduction is the following result.\footnote{A more formal version of~\Cref{thm:starting_point_intro} is stated in~\Cref{thm:hg_coloring}.}
\begin{theorem}\label{thm:starting_point_intro}
    For all $r\in\mathbb{N}^{+}$, there exists $\eps>0$ such that the following holds. Given an instance $\Psi=\Psi(\scrV,\scrE)$ of $3\mathsf{QS}_{\mathbb{F}_2}$, it is $\mathbf{NP}$-hard to distinguish between the following two cases:
    \begin{enumerate}[label=(\arabic*)]
        \item \textbf{Yes case:} $\operatorname{val}(\Psi)=1$.
        \item \textbf{No case:} for all $x^{(1)},\ldots,x^{(r)}\in \bF_{2}^{\scrV}$, we have that $x^{(1)},\ldots,x^{(r)}$ jointly satisfy at most $1-\eps$ fraction of the equations in $\scrE$.
    \end{enumerate}
\end{theorem}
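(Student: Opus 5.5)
The plan is to obtain the statement by a direct gadget reduction from hardness of coloring $2$-colorable $3$-uniform hypergraphs. The dictionary is: a single assignment $x\in\mathbb{F}_2^{\scrV}$ is a $2$-coloring, and $r$ assignments $x^{(1)},\ldots,x^{(r)}$ together form a $2^r$-coloring $c(v)=(x^{(1)}_v,\ldots,x^{(r)}_v)\in\{0,1\}^r$.

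\textbf{The gadget.} Given a $3$-uniform hypergraph on vertex set $\scrV$, each hyperedge $\{u,v,w\}$ becomes the single quadratic equation
\[
x_ux_v+x_vx_w+x_ux_w+x_u+x_v+x_w=1 .
\]
Since $x_u=x_u^2$ over $\mathbb{F}_2$, the linear terms are diagonal entries of the matrix $A$, so this is of the form $\inp{x}{Ax}=1$ on $3$ distinct variables. Evaluating the left-hand side on the eight inputs shows it equals $1$ exactly when $(x_u,x_v,x_w)$ is not all-equal.

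\textbf{Completeness.} A proper $2$-coloring of the hypergraph is an $x$ satisfying every equation, so $\operatorname{val}(\Psi)=1$.

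\textbf{Soundness.} Suppose $x^{(1)},\ldots,x^{(r)}$ jointly satisfy the equation of an edge. Then the number of indices $i$ for which $x^{(i)}$ is not-all-equal on that edge is odd, hence at least one. In particular the edge is not monochromatic under the $2^r$-coloring $c$. Consequently, if the assignments jointly satisfy more than a $1-\eps$ fraction of the equations, then $c$ is a $2^r$-coloring leaving fewer than an $\eps$ fraction of hyperedges monochromatic.

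It therefore suffices to invoke a robust form of hardness for hypergraph coloring: for every constant $C$ (here $C=2^r$) there is $\eps>0$ such that, given a $3$-uniform hypergraph, it is $\mathbf{NP}$-hard to distinguish the following two cases.
\begin{enumerate}[label=(\arabic*)]
    \item \textbf{Yes case:} the hypergraph is $2$-colorable.
    \item \textbf{No case:} every $C$-coloring leaves at least an $\eps$ fraction of its edges monochromatic.
\end{enumerate}
This is the formal starting point I expect is recorded as \Cref{thm:hg_coloring}.

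\textbf{Main obstacle: robustness.} The Dinur--Regev--Smyth theorem is usually stated only as ``$2$-colorable vs.\ not $C$-colorable.'' I would first re-examine its soundness analysis, which goes through label cover, a long-code-type gadget and Fourier/invariance arguments. Those arguments are density-based, so an almost-proper $C$-coloring should still decode to a good label-cover strategy, giving the robust version with $\eps=\eps(C)>0$.

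As a fallback, I would pass through a bounded-degree, regularized instance of degree $D$. Deleting one vertex from each monochromatic edge removes at most an $O(\eps D)$ fraction of vertices. The largest color class minus these vertices is then an independent set of relative size at least $2^{-r}-O(\eps D)$. I would then reduce from the version of the hardness result whose soundness is ``no independent set of relative size $2^{-r-1}$.''

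Either route gives the statement with $\eps$ depending only on $r$. The bookkeeping of the gadget itself is routine; all the difficulty lies in securing the robust soundness of the hypergraph source problem.
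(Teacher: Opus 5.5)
Your gadget and soundness translation match the paper. The hyperedge $\{u,v,w\}$ becomes $\operatorname{NAE}_3(x_u,x_v,x_w)=1$. Joint satisfaction means $\sum_i \operatorname{NAE}_3(x^{(i)}_u,x^{(i)}_v,x^{(i)}_w)=1$, so the edge is not monochromatic under $v\mapsto (x^{(1)}_v,\dots,x^{(r)}_v)$; this is the argument of \Cref{lem:Game1a_soundness_no_advice}. The source problem is the robust coloring statement \Cref{thm:hg_coloring}, which the paper also obtains by re-inspecting Dinur--Regev--Smyth.

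\textbf{The missing robustness argument.} The step you call the whole difficulty is only asserted, and the rationale is wrong. The DRS analysis is not Fourier- or invariance-based. It is combinatorial: a multi-layered label cover composed with a Kneser-graph gadget, analysed via the chromatic number of Kneser graphs and a hitting-set lemma. Its robustness does not come from being ``density-based''. It comes from a counting step (Step~6 of \Cref{sec:apx_hyp}):
\begin{itemize}
\item Each label-cover pair $(\bfv,\bfv')\in\cE_{i,j}$ supports at most $2^{2|R_i|+|R_j|}$ hyperedges.
\item The total number of hyperedges is at most a factor depending only on $c$ times $|\cE_{i,j}|$.
\item Hence with $\eps=\exp(-\exp(O(c^3)))$, all but a $1/(4c^2)$ fraction of the pairs touch no monochromatic hyperedge.
\end{itemize}
Two earlier steps never use properness: choosing a leading color $c_{\bfv}$ that carries two disjoint sets, and locating two layers sharing a leading color $c^*$. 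Properness enters only when concluding that no $S\supseteq\pi_{(\bfv,\bfv')}(B(\bfv))$ receives color $c^*$. That step is applied only to the surviving pairs $\cE^*$, which still have density at least $1/(4c^2)$ in $\cE_{i,j}$. The rest of the decoding is unchanged. You need to supply this argument for your main route to be a proof.

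\textbf{The fallback is circular.} NP-hardness of finding an independent set of relative size $\delta$ in a $2$-colorable $3$-uniform hypergraph is not an available starting point. Khot--Saket proved it only assuming the $d$-to-$1$ conjecture. In this paper it is \Cref{cor:IS_vs_coloring_hypergraph}, derived from the main theorem, whose reduction begins with the very statement you are proving. The robust coloring statement you need is much weaker: your own deletion argument shows the independent-set version implies it on bounded-degree instances. That is why it can be extracted from DRS directly.
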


As in~\Cref{sec:outer_pcp_imperfect}, we can view the instances in~\Cref{thm:starting_point_intro} as an equation vs. variable game, and then apply a $\beta$-smooth $k$-fold parallel repetition to get a game $\Psi_{q,\beta,k}$. 

The soundness condition in~\Cref{thm:starting_point_intro} translates to a variant of $\Psi_{q,\beta,k}$ where we ask the provers to return $r$ vectors. In particular, upon receiving a question $(E,A)$, Alice needs to return $x^{(1)},\ldots,x^{(r)}\in \bF_{2}^{E}$ that jointly satisfy all the equations in $E$. It turns out that one can modify this game into a different game where we ask the provers to return low-rank symmetric matrices. In particular, Alice now needs to return a symmetric matrix $X\in \bF_{2}^{E\times E}$ that satisfies the equations in $E$, in the sense that for each $e$ in $E$ described by $\inp{x}{Ax}=b$, we have
\[
\sum_{u,v\in E}A_{uv}X_{uv}=b.
\]
In~\Cref{subsec:joint_satisfaction_to_bilinear}, we will show that this conversion from joint satisfaction to low-rank-matrix answers does not affect the soundness by too much. 

For the purpose of composing with the middle and inner PCPs, we view an Alice-answer $X\in \bF_{2}^{E\times E}$ as a symmetric bilinear map $f_{X}:\bF_{2}^{E}\otimes \bF_{2}^{E}\rightarrow \bF_{2}$ defined by
\[
f_{X}(x\otimes y)=\sum_{u,v\in E}X_{uv}x_{u}y_{v}\qquad\text{for all }x,y\in \bF_{2}^{E}.
\]
Similarly, Bob should return a bilinear map $g:\bF_{2}^{E'}\otimes \bF_{2}^{E'}\rightarrow \bF_{2}$.

\subsubsection{Our Middle PCP}\label{sec:middle_PCP_intro}

In the outer PCP described in~\Cref{sec:outer_pcp_intro}, the verifier (roughly speaking) carries out the following (cf.~\Cref{sec:outer_pcp_imperfect}):
\begin{enumerate}
\item Sample $(E,A)$ and send it to Alice. Sample $(E',B)$ and send it to Bob.
\item Receive from Alice a symmetric bilinear map $f:\bF_{2}^{E}\otimes \bF_{2}^{E}\rightarrow \bF_{2}$ of rank at most $r$, and from Bob a symmetric bilinear map $g:\bF_{2}^{E'}\otimes \bF_{2}^{E'}\rightarrow \bF_{2}$.
\item Accept if the restriction of $f$ to $\bF_{2}^{E'}\otimes \bF_{2}^{E'}$ is identical to $g$.
\end{enumerate}

\paragraph{Local tensor check:} the first goal of the middle PCP is to replace the global rank check ``$\operatorname{rank}(f)\leq r$'' with a local tensor check. We move to the game $\Psi_{\mathsf{mid}}$ where the verifier carries out the following:
\begin{enumerate}
\item Sample $(E,A)$ and $(E',B)$ as in the outer PCP game $\Psi_{q,\beta,k}$.
\item Sample a subspace $U\in \mathsf{Gr}(\bF_{2}^{E'},\ell_{2})$ and a subspace $U'\in \mathsf{Gr}(\bF_{2}^{E'},\ell_{2}+1)$ containing $U$. The subspace $U\subseteq \bF_{2}^{E'}$ will be regarded as a subspace of Alice's space $\bF_{2}^{E}$ by appending $0$'s in the missing coordinates.
\item Send $(E,A,U)$ to Alice and $(E',B,U')$ to Bob.
\item Receive from Alice a symmetric bilinear map $f:\bF_{2}^{E}\otimes \bF_{2}^{E}\rightarrow \bF_{2}$ such that $\operatorname{rank}(f|_{U\otimes U})\leq 1$, and from Bob a symmetric bilinear map $g:\bF_{2}^{E'}\otimes \bF_{2}^{E'}\rightarrow \bF_{2}$.
\item Accept if the restriction of $f$ to $\bF_{2}^{E'}\otimes \bF_{2}^{E'}$ is identical to $g$.
\end{enumerate}

In words, we have replaced the global rank check on Alice's response with a local rank check, that verifies that the rank of $f$ on $U\otimes U$ is at most $1$, where $U$ is a (nearly uniformly chosen) $\ell_{2}$-dimensional subspace given to Alice.

The completeness of this transformation is clear, but the soundness is much less so. Indeed, because we provide $U$ to Alice, she knows where the rank test will be performed, and therefore she can potentially use that information to cheat. Nevertheless, we show that for all $\delta>0$, if we fix $r$, $q$, $\ell_{2}$ appropriately and let $k$ be sufficiently large, then in the ``no case'' of~\Cref{thm:starting_point_intro} we have $\operatorname{val}(\Psi_{\mathsf{mid}})\leq \delta$. The intuition here is as follows:
\begin{enumerate}[label=(\arabic*)]
    \item If Alice did not use any cheating strategy, but her $f:\bF_{2}^{E}\otimes \bF_{2}^{E}\rightarrow \bF_{2}$ was of rank more than $r$, then the probability that $f|_{U\times U}$ has rank at most $1$ approaches $0$ as $\ell_{2}$ goes to infinity. 
    \item The type of cheating strategies Alice may use is limited. To see that, fix strategies $F,G$ for Alice and Bob that achieve value at least $s$, fix the question of Bob, and sample two questions for Alice that only differ on the tensor test, say $(E,A,U_1)$ and 
    $(E,A,U_2)$. Using the Cauchy-Schwarz inequality, we get that with probability at least $s^2$, Alice and Bob win both of these challenges. In that case, Alice's functions $F[E,A,U_1]$, $F[E,A,U_2]$ have to agree with Bob's function, which is the same in both scenarios (his challenge has not changed). In particular, in that scenario we must have that
    \[
    F[E,A,U_1]|_{\bF_{2}^{E'}\otimes \bF_{2}^{E'}}
    =F[E,A,U_2]|_{\bF_{2}^{E'}\otimes \bF_{2}^{E'}}.
    \]
    In words, fixing $(E,A)$, we get that Alice's responses (as a function of the tensor test $U$ she receives) must pass a sort of Grassmann agreement test. If that test had the standard notion of soundness, it would mean that Alice's responses have to basically follow an honest strategy, and a version of the argument from the first item will further guarantee that this honest strategy uses a low-rank function. 
    \item  Using the non-standard notion of soundness (which we know to hold), the discussion in~\Cref{sec:nonstnadrad_soundness} implies that after an appropriate zoom-in (and zoom-out, which we ignore for now), Alice's responses do behave like an honest strategy, and one can argue as in the first item. The implementation of the zoom-ins in the soundness analysis of $\Psi_{\mathsf{mid}}$ uses the advice $A$.
\end{enumerate} 

\paragraph{Lazy parallel repetition:} next, we wish to turn the check that 
$f|_{\mathbb{F}_2^{E'}\otimes \mathbb{F}_2^{E'}}=g$ from $\Psi_{\mathsf{mid}}$ into a $2$-to-$1$ test, and towards this end we intend to use the Grassmann encoding. A naive approach would have been to apply the Grassmann encoding directly on Alice's space $\mathbb{F}_2^{E}\otimes \mathbb{F}_2^{E} = \mathbb{F}_2^{E\times E}$ and on Bob's space $\mathbb{F}_2^{E'}\otimes \mathbb{F}_2^{E'} = \mathbb{F}_2^{E'\times E'}$, but as discussed in~\Cref{sec:first_attempt} this fails due to the lack of the covering property. Our solution for this issue is to first hide the actual challenge within a large collection of trivial challenges, in an operation that we refer to as lazy repetition (we remark that it has appeared in PCP literature previously, see~\cite{KhotSmooth}). In the terminology of~\Cref{sec:variations_gras}, this amounts to taking a graph $H$ which is a union of a large number of cliques (with no edges between them at all).

To describe the operation, fix a parameter $t$ thought of as much larger than any of the previously mentioned parameters. Consider the game $\Psi_{\mathsf{lazy}}$ where the verifier carries out the following:
\begin{enumerate}
\item Sample $c\in [t]$, and sample challenges $(E_{c},A_{c},U_{c})$, $(E'_{c},B_{c},U'_{c})$ as in $\Psi_{\mathsf{mid}}$. We refer to $c$ as the critical index.
\item For each $j\in [t]\setminus \{c\}$, sample an Alice-challenge $(E_{j},A_{j},U_{j})$ from~$\Psi_{\mathsf{mid}}$ independently. We discard the obtained $A_{j}$ and instead let $A_{j}$ be a sequence of uniformly sampled vectors
\[
a^{(1)},\dots,a^{(q')}\in \bF_{2}^{E_{j}}\otimes \bF_{2}^{E_{j}}.
\]
Let $E_{j}'=E_{j}$, $B_{j}=A_{j}$ and $U'_{j}=U_{j}$.
\item We use calligraphic fonts to denote a sequence of objects. For instance, write $\cE=(E_{1},\dots,E_{t})$ and $\cU'=(U_{1}',\dots,U_{t}')$. Similarly define sequences $\cA$, $\cB$ and $\cU$. Let $\cV(\cE)=\bigoplus_{j=1}^{t}\Bigl(\bF_{2}^{E_{j}}\otimes \bF_{2}^{E_{j}}\Bigr)$ and $\cV(\cE')=\bigoplus_{j=1}^{t}\Bigl(\bF_{2}^{E_{j}'}\otimes \bF_{2}^{E_{j}'}\Bigr)$.
\item Send $(\cE,\cA,\cU)$ to Alice and $(\cE',\cB,\cU')$ to Bob.
\item Alice must return a linear map $f:\cV(\cE)\rightarrow\bF_{2}$ such that for each $j\in [t]$, the restriction of $f$ to $\bF_{2}^{E_{j}}\otimes \bF_{2}^{E_{j}}$ is symmetric and satisfies all equations in $E_{j}$, and $\operatorname{rank}(f|_{U_{j}\otimes U_{j}})\leq 1$. Bob must return a linear map \(g:\cV(\cE')\rightarrow \bF_{2}\).
\item Accept if $f|_{\cV(\cE')}=g$.
\end{enumerate}

Lazy repetition is easily seen to preserve perfect completeness. It also does not worsen the soundness, because Alice and Bob still have to answer the questions $(E_{c},A_{c},U_{c})$ and $(E_{c}',B_{c},U_{c}')$ planted in the critical index, independent of what they do in the non-critical indices.

The main benefit of the lazy repetition is that it improves the covering property. More precisely, consider a random $x\in \cV(\cE)$ obtained by the following process:
\begin{enumerate}
    \item Sample $c$ and $\cE'$ as in $\Psi_{\mathsf{lazy}}$.
    \item Sample $y\in \cV(\cE')$, and lift to $x\in \cV(\cE)$ by appending $0$'s in missing coordinates (which only happens in the critical index).
\end{enumerate}
Then, provided that $t$ is sufficiently large compared to $k$, the distribution of $x$ is close to uniform in $\cV(\cE)$. The intuition here comes from the following basic fact:
\begin{claim}\label{claim:prob_fact_intro}
    Suppose that $(\Omega,\mathbb{P})$ is a discrete probability space, and $\Gamma\subseteq\Omega$ is an event such that $p = \mathbb{P}(\Gamma) >0$. Then for each $t$, the following two distributions are $O_{p}(1/\sqrt{t})$-close in statistical distance:
    \begin{enumerate}[label=(\arabic*)]
        \item Sample $x_1,\ldots,x_t\sim \mathbb{P}$ independently.
        \item Sample $c\in [t]$, sample $x_c\sim \mathbb{P}$ conditioned on $x_c\in \Gamma$, and for $j\neq c$ sample $x_j\sim \mathbb{P}$ independently.
    \end{enumerate}
\end{claim}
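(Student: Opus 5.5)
The plan is to compare the two distributions through their likelihood ratio. Let $\mu$ denote the product distribution $\mathbb{P}^{\otimes t}$ in (1), and $\nu$ the distribution in (2). For a fixed $c$, the density of $\nu$ conditioned on that critical index is $\mathbbm{1}[x_c\in\Gamma]/p$ relative to $\mu$. Averaging over the uniform choice of $c\in[t]$ gives
\[
\frac{d\nu}{d\mu}(x_1,\ldots,x_t)=\frac{1}{t}\sum_{j=1}^{t}\frac{\mathbbm{1}[x_j\in\Gamma]}{p}=\frac{N(x)}{pt},
\]
where $N(x)=|\{j : x_j\in\Gamma\}|$. This identity is the key step, and it only uses that $\Omega$ is discrete, so all densities are honest ratios of point masses. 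Equivalently, $\nu$ is $\mu$ reweighted by the empirical frequency of $\Gamma$ divided by its expected value.

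Next I would bound the statistical distance:
\[
d_{\mathrm{TV}}(\mu,\nu)=\tfrac12\,\mathbb{E}_{\mu}\left|\frac{N}{pt}-1\right|.
\]
Under $\mu$, $N\sim\mathrm{Bin}(t,p)$, so $\mathbb{E}_\mu[N]=pt$ and $\mathrm{Var}_\mu[N]=tp(1-p)$. By Cauchy--Schwarz,
\[
\mathbb{E}_\mu|N-pt|\le\sqrt{tp(1-p)}.
\]
Dividing by $pt$ gives
\[
d_{\mathrm{TV}}(\mu,\nu)\le \frac12\sqrt{\frac{1-p}{pt}}=O_p(1/\sqrt t),
\]
which is the claim.

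There is no real obstacle here. The only step needing care is the likelihood-ratio identity, in particular checking that the averaging over $c$ yields exactly the normalized count $N/(pt)$. After that, the argument is a single second-moment (Chebyshev-type) estimate.
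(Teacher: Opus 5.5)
Your proof is correct and is essentially the paper's argument: the paper proves this claim in the more general form of \Cref{prop:covering_inner_pre}, where it likewise writes the density of the planted distribution as an average $\frac{1}{t}\sum_{j} f_j(a_j)$ of $t$ independent bounded terms of mean $1$, and bounds the statistical distance by Cauchy--Schwarz applied to the variance of this average. The only extra feature there is conditioning on a coordinatewise projection $\pi(a)=b$, which the advice requires. With trivial projections and the density bound $k$ replaced by $1/p$, the paper's bound essentially reduces to your $\tfrac12\sqrt{(1-p)/(pt)}$.
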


In summary, the game $\Psi_{\mathsf{lazy}}$ has perfect completeness, small soundness, local tensor tests and the covering property. It can actually be composed with an appropriate version of the Grassmann graph but this is still not straightforward. In the next section we discuss why a na{\"i}ve composition with the standard Grassmann encoding still fails, and explain how we circumvent this issue by taking an appropriate variant of the Grassmann encoding.

\subsubsection{Our Inner PCP}\label{sec:inner_PCP_intro}

A natural attempt at composing $\Psi_{\mathsf{lazy}}$ with the Grassmann encoding is to encode the functions $f\colon \cV(\cE)\to\mathbb{F}_2$ and $g\colon \cV(\cE')\to\mathbb{F}_2$ via the Grassmann encoding, and then check consistency via the Grassmann test. While this test does satisfy the same notion of non-standard soundness as discussed in~\Cref{sec:nonstnadrad_soundness}, we do not know how to implement suitable advice vectors to facilitate the zoom-in coordination. Indeed, the Grassmann encoding via $\mathsf{Gr}(\cV(\cE),\ell_{1})$ would require uniformly random advice vectors in $\cV(\cE)$, and specifying a vector 
\[x\in\cV(\cE)=\bigoplus_{j=1}^{t}\Bigl(\bF_{2}^{E_{j}}\otimes \bF_{2}^{E_{j}}\Bigr)\]
would require $|E_{j}|^{2}$ bits of information for each index $j$. However, although the advice $\cA=(A_{1},\dots,A_{t})$ available in~$\Psi_{\mathsf{lazy}}$ does provide quadratic-in-$k$ amount of information for each non-critical index $j$ via
\[
A_{j}=\bigl(a^{(j,1)},\dots,a^{(j,q')}\bigr)\in \Bigl(\bF_{2}^{E_{j}}\otimes \bF_{2}^{E_{j}}\Bigr)^{q'},
\]
the advice available for the critical index $c\in [t]$ is
\[
A_{c}=\bigl(a^{(1)},\dots,a^{(q)}\bigr)\in \Bigl(\bF_{2}^{E_{c}}\Bigr)^{q},
\]
which contains only $O_{q}(k)$ bits of information. Therefore, it is of interest to have a Grassmann-type structure on $\bF_{2}^{E_{c}}\otimes \bF_{2}^{E_{c}}\Big.$ with only $\exp(O_{q}(k))$ different zoom-in possibilities.

\paragraph{The low-rank Grassmann encoding:} the idea is to consider not all, but only a special type of subspaces in $\mathsf{Gr}(\bF_{2}^{E_{c}}\otimes \bF_{2}^{E_{c}},\ell_{1})\Big.$. We only allow subspaces that consist of low-rank vectors in $\bF_{2}^{E_{c}}\otimes \bF_{2}^{E_{c}}$. By the rank-1 decomposition, any vector in $\bF_{2}^{E_{c}}\otimes \bF_{2}^{E_{c}}$ of rank at most $s$ can be specified by $O(s|E_{c}|)$ bits of information.

In order to systematically construct linear subspaces consisting of low-rank vectors, we use the Hadamard code (the same idea has appeared in~\cite{Golowich23}). Given a positive integer $\ell$, let $\Had_{\ell}\in \bF_{2}^{2^{\ell}\times (2^{\ell}-1)}$ be the $\{0,1\}$-Hadamard matrix with the first column (the all-$0$ column) deleted. For any $\bF_{2}$-vector space $V$ and vectors $x^{(1)},\dots,x^{(2^{\ell}-1)}\in V$, let
\[
\Lspan\bigl(x^{(1)},\dots,x^{(2^{\ell}-1)}\bigr)=\left\{\sum\limits_{j=1}^{2^{\ell}-1}\Had_{\ell}(i,j) x^{(j)}\,\middle|\,i\in \{1,2,\ldots,2^{\ell}\}\right\}.
\]
In words, for each row in $\Had_{\ell}$, the set $\Lspan\bigl(x^{(1)},\dots,x^{(2^{\ell}-1)}\bigr)$ contains a vector which is the linear combination of 
$x^{(1)},\dots,x^{(2^{\ell}-1)}$ with the coefficients given by that row. Because any linear combination of rows in $\Had_{\ell}$ is a row in $\Had_{\ell}$, it follows that $\Lspan\bigl(x^{(1)},\dots,x^{(2^{\ell}-1)}\bigr)$ is always a linear subspace, and if $x^{(1)},\dots,x^{(2^{\ell}-1)}$ are linearly independent then that subspace has dimension exactly $\ell$. 

The upshot is that if $x^{(1)},\dots,x^{(2^{\ell}-1)}\in \bF_{2}^{E_{c}}\otimes \bF_{2}^{E_{c}}$ are rank-$1$ vectors, then all vectors in the space $\Lspan\bigl(x^{(1)},\dots,x^{(2^{\ell}-1)}\bigr)$ have ranks at most $2^{\ell-1}$.
For the sake of reference, we present the following definition of~\cite{Golowich23}:\footnote{\Cref{def:intro_LGr} will not be used explicitly in this paper.}

\begin{definition}[{\cite[Definition 17]{Golowich23}}]\label{def:intro_LGr}
Suppose $V$ is an $\bF_{2}$-vector space of dimension at least $2^{\ell}$. We define $\mathsf{LGr}(V,\ell)$, called \emph{the low-rank Grassmann graph}, to be the subgraph of the Grassmann graph $\mathsf{Gr}(V\otimes V,\ell)$ induced by the vertex set
\[
\left\{\Lspan\bigl(u^{(1)}\otimes v^{(1)},\ldots,u^{(2^{\ell}-1)}\otimes v^{(2^{\ell}-1)}\bigr)~\middle|~
    \substack{\text{\normalsize  $u^{(1)},\ldots,u^{(2^{\ell}-1)}\in V$
    \text{ linearly independent}}\\\text{\normalsize $v^{(1)},\ldots,v^{(2^{\ell}-1)}\in V$
    \text{ linearly independent}}}\right\}.
\]
\end{definition}

\paragraph{Composing with middle PCP:} we can now make $\Psi_{\mathsf{lazy}}$ into a $4$-to-$1$ game by using the low-rank Grassmann encoding for the critical block $\bF_{2}^{E_{c}}\otimes \bF_{2}^{E_{c}}\Big.$, and using the ordinary Grassmann encoding for the non-critical blocks $\bF_{2}^{E_{j}}\otimes \bF_{2}^{E_{j}}$ where $j\in [t]\setminus \{c\}$. We define $\Psi_{\mathsf{final}}$ to be the game where the verifier carries out the following:
\begin{enumerate}
\item Sample the critical index $c\in [t]$, and sample $\cE,\cE',\cU,\cU'$ as in $\Psi_{\mathsf{lazy}}$.
\item For each $i\in \{1,2,\dots,2^{\ell_{1}+1}-1\}$, sample
\[
x^{(i)}=x^{(i,1)}+\dots +x^{(i,t)}\in \cV(\cE'),
\]
where $x^{(i,c)}$ is a uniformly random rank-1 vector in $\bF_{2}^{E_{c}'}\otimes \bF_{2}^{E_{c}'}$, while $x^{(i,j)}$ is a uniformly random vector in $\bF_{2}^{E_{j}'}\otimes \bF_{2}^{E_{j}'}$ for each $j\in [t]\setminus \{c\}$.

\item Define the linear subspace $L'=\Lspan\bigl(x^{(1)},\dots,x^{(2^{\ell_{1}+1}-1)}\bigr)\subseteq \cV(\cE')$. If $\dim L'<\ell_{1}+1$, give up on asking questions and accept immediately. Otherwise, let $L$ be a uniformly random subspace of $L'$ with dimension $\ell_{1}$. We regard $L\subseteq \cV(\cE')$ as a subspace of $\cV(\cE)$ by appending $0$'s in the missing coordinates (which only happens in the critical index).

\item Send the challenge $(\cE,\cU,L)$ to Alice and $(\cE',\cU',L')$ to Bob.

\item Alice must return a linear map $f:\left(L+\bigoplus_{j=1}^{t}U_{j}\otimes U_{j}\right)\rightarrow \bF_{2}$ such that $\operatorname{rank}(f|_{U_{j}\otimes U_{j}})\leq 1$ for each $j\in [t]$. Bob must return a linear map $g:\left(L'+\bigoplus_{j=1}^{t}U_{j}'\otimes U_{j}'\right)\rightarrow \bF_{2}$ such that $\operatorname{rank}(g|_{U_{j}'\otimes U_{j}'})\leq 1$ for each $j\in [t]$.

\item Accept if the restriction of $g$ to $\left(L+\bigoplus_{j=1}^{t}U_{j}\otimes U_{j}\right)$ is identical to $f$.
\end{enumerate}

The game $\Psi_{\mathsf{final}}$ is our final composed PCP (modulo incorporating the side conditions checking satisfiability of the chosen quadratic equations, and the side conditions checking symmetry; this is done by adding the spaces $N_{E_{j}}$ and $N_{\mathsf{sym}}$ as discussed in~\Cref{sec:first_attempt}).

It is easy to see that in $\Psi_{\mathsf{final}}$, for any response of Alice, there are at most 4 possible answers of Bob that can make the verifier accept. This is due to the following facts:
\begin{itemize}
\item Since $\dim L'=1+\dim L$, any linear function on $L$ has two linear extensions to $L'$.
\item For each $j\in [t]\setminus \{c\}$, we have $U_{j}=U_{j}'$. For the critical index $c$, we have $\dim U_{c}'=1+\dim U_{c}$, and hence any symmetric bilinear function on $U_{c}\otimes U_{c}$ of rank at most 1 has two symmetric bilinear extensions to $U_{c}'\otimes U_{c}'$ of rank at most 1. 
\end{itemize}

The completeness of $\Psi_{\mathsf{final}}$ is clear by using the Grassmann encoding on the strategies of the provers from $\Psi_{\mathsf{lazy}}$. For soundness, we show that a strategy for $\Psi_{\mathsf{final}}$ achieving value at least $\varepsilon$ can be used to construct a strategy for $\Psi_{\mathsf{lazy}}$ achieving value at least $\Omega_{\eps}(1)$. This argument is more elaborate, and we only give a high-level overview.

First, we describe a high-level overview where we pretend there was no need for zoom-outs in~\Cref{eq:overview_Grassmann_non_standard}. We show that for noticeable fraction of the partial challenges $(\cE,\cU)$ for Alice, the table $L\mapsto F[\cE,\cU,L]$ passes the standard Grassmann test (\Cref{eq:overview_Grassmann_test}) over $\cV(\cE)$ with noticeable probability. Applying~\Cref{eq:overview_Grassmann_non_standard}, we know that Alice can use the advice $\cA$ in $\Psi_{\mathsf{lazy}}$ as the zoom-in and thus decodes a global linear function $g$ on $\cV(\cE)$ which has rank at most $1$ on $U_j\otimes U_j$ for each $j$. We argue that for noticeable fraction of the challenges $(\cE',\cU')$ of Bob (conditioned on $\cE,\cU,\cA$), the restriction $g|_{\cV(\cE')}$ is a global linear function for Bob that has noticeable agreement with his table $L'\mapsto G[\cE',\cU',L']$ under the same zoom in (constructed from his advice). Together, this shows that for a fixed successful advice, each one of Alice and Bob only has constantly many global linear functions that have noticeable agreement with their table, and there is at least one pair of choices between them which is consistent. Thus, if each one of them chooses a random such global linear function, they will produce a satisfactory answer to $\Psi_{\mathsf{lazy}}$ with constant probability.

The need for zoom-outs in~\Cref{eq:overview_Grassmann_non_standard} complicates matters considerably, as the number of ``successful'' zoom-outs could be rather large. This issue was already handled in~\cite{DKKMS2}, and we will make use of some ideas from there. However, the fact that we use the low-rank Grassmann encoding introduces further difficulties. In~\cite{DKKMS2} it is shown that if Bob's table has many zoom-outs of codimension $w$ that admit a global linear function with noticeable agreement, then there is also a zoom-out of codimension $w-1$ that admits a global linear function with noticeable agreement. While this may be possible in our case, it requires extra features from the low-rank Grassmann graph which we do not establish. Instead, we show how to stitch together global linear functions from many different Bob-questions $(\cE',\cU')$ to a function of Alice, on which we can apply the result of~\cite{DKKMS2} (since Alice's space has the standard Grassmann graph structure).

\section{Preliminaries}
In this section we set up basic notations and tools that will be used throughout the paper.

\subsection{General Notations}\label{subsec:general_notation}

In this subsection we summarize the general notational conventions used throughout the paper. Additional notation will be introduced as needed, typically within dedicated ``Notation'' environments.

\paragraph{Linear algebra.} Given a finite dimensional $\bF_{2}$-vector space $V$ and a linear subspace $W\subseteq V$, the codimension of $W$ with respect to $V$ is denoted by $\operatorname{codim}_{V}(W)=\dim V-\dim W$. Given a finite dimensional $\bF_{2}$-vector space $V$, we let $V^*$ denote the dual space of $V$, which consists of all linear functionals $f:V\rightarrow \bF_{2}$. For any finite-dimensional $\bF_{2}$-vector spaces $V_{1},V_{2}$ and any bilinear form $f:V_{1}\otimes V_{2}\rightarrow \bF_{2}$, we define the linear map $T_{f}:V_{1}\rightarrow V_{2}^*$ by letting $T_{f}(x)$ be the functional $y\mapsto f(x\otimes y)$. 

As indicated in~\Cref{sec:outer_pcp_intro,sec:middle_PCP_intro} of the proof overview, bilinear forms will play an important role in our construction and analysis of 4-to-1 games. More linear algebraic preliminaries about bilinear forms can be found in~\Cref{sec:linear_algebra}. 

\paragraph{Grassmann graph.}
Given a finite dimensional $\bF_{2}$-vector space $V$ and a positive integer $\ell\leq \dim V$, the Grassmann graph $\mathsf{Gr}(V,\ell)$ has as vertices all $\ell$-dimensional subspaces $L\subseteq V$, and two vertices are adjacent if they intersect in dimension $\ell-1$. By a slight abuse of notation,
we also use $\mathsf{Gr}(V,\ell)$ to denote the vertex set of
this graph. We denote by $\cG(V,\ell)$ the uniform distribution over the edges of $\mathsf{Gr}(V,\ell)$. For any nonnegative integer $b\leq \ell$ (such that $b\geq 2\ell-\dim V$), we use $\cG(V,\ell,b)$ to denote the uniform distribution over the set
\[
\left\{(L_{1},L_{2})\,\big|\,L_{1},L_{2}\in\mathsf{Gr}(V,\ell)\textup{ such that }\dim(L_{1}\cap L_{2})=b\right\},
\]
i.e. pairs of $\ell$-dimensional subspaces of $V$ with an intersection of dimension $b$.

\paragraph{Probability.}  
For a finite set $\Lambda$, we write $\Exs{x \in \Lambda}{\cdot}$ and $\Prs{x \in \Lambda}{\cdot}$ to denote expectation and probability, respectively, when $x$ is drawn uniformly at random from $\Lambda$.  
If $x$ is sampled according to a specific distribution $\cD$ over $\Lambda$, we write $x \sim \cD$ in place of $x \in \Lambda$ in the subscripts.

\paragraph{Assignment tables.} In a $2$-prover  $1$-round game, we often organize a prover's strategy as an \emph{assignment table} $F[\cdot]$, where for each possible question $Q$ from the verifier, the entry $F[Q]$ takes value from an alphabet of answers. It is often convenient to allow partial assignment tables, where each entry $F[Q]$ is allowed to take a special nullity value outside of the prescribed alphabet. We usually denote such a nullity value by the symbol $\nil$. Whenever a $\nil$ entry is tested for agreement with other entries or objects, we always assume the test rejects (i.e. reports a disagreement). For instance, if $L_{1}$ and $L_{2}$ are linear subspaces of a vector space $V$, the entry $F[L_{1}]$ takes value in $(L_{1})^*\cup\{\nil\}$ and $F[L_{2}]$ takes value in $(L_{2})^*\cup \{\nil\}$, then the statement
\[
\Big[\left.F[L_{1}]\right|_{L_{1}\cap L_{2}}=\left.F[L_{2}]\right|_{L_{1}\cap L_{2}}\Big]
\]
evaluates to ``false'' whenever either of $F[L_{1}]$ or $F[L_{2}]$ is $\nil$.

\paragraph{Empty sequences.} A sequence of objects $a^{(1)},\dots,a^{(q)}$ is denoted by $\bigl(a^{(1)},\dots,a^{(q)}\bigr)$. If $q=0$, the sequence is empty and is denoted by the symbol $\emp$.
\subsection{Sub-4-to-1 Games}

In order to prove the main theorem (\Cref{thm:main}), it is convenient to work with a slight relaxation of the $4$-to-$1$ requirement: 

\begin{definition}\label{def:sub4to1_constraint}
In the setting and notation of~\Cref{def:lc}, a constraint $\Phi_{e}\subseteq \Sigma_{L}\times \Sigma_{R}$ is called a sub-$d$-to-$d'$ constraint if there is a partition $\Sigma_{L}=\Sigma_{L,1}\sqcup \dots\sqcup\Sigma_{L,t}$ into sets of size $d$ and a partition $\Sigma_{R}=\Sigma_{R,1}\sqcup \dots\sqcup \Sigma_{R,t}$ into sets of size $d'$ such that
    \[
\Phi_{e}=\bigcup_{i=1}^{t}\Gamma_{L,i}\times \Gamma_{R,i},\qquad\textup{for some subsets }\Gamma_{L,i}\subseteq \Sigma_{L,i}\textup{ and }\Gamma_{R,i}\subseteq \Sigma_{R,i}.
    \]
\end{definition}

The notion of sub-$d$-to-$d'$ constraints motivates the following definition parallel to~\Cref{def:Gap4to1}.

\begin{definition}
Let $d,d'$ and $r$ be positive integers. An instance $\Psi$ of label-cover is called a sub-$d$-to-$d'$ game if all of its constraints are sub-$d$-to-$d'$. For parameters $0<s<c\leq 1$, we write $\GapSdone{d}{d'}_{r}(c,s)$ for the restriction of $\mathsf{GapLC}_{r}(c,s)$ to sub-$d$-to-$d'$ game instances.
\end{definition}

Accordingly, we have the following slight weakening of~\Cref{thm:main}.

\begin{restatable}{theorem}{submainthm}\label{thm:sub-main}
For any $\delta\in (0,1)$, there exists a positive integer $N$ such that $\GapSdone{4}{1}_{N}(1,\delta)$ is $\mathbf{NP}$-hard.
\end{restatable}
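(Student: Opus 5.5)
We prove \Cref{thm:sub-main} by following the three-step construction of \Cref{sec:proof_overview}, composing reductions that each preserve perfect completeness.

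\textbf{Outer PCP.} We start from the formal version of \Cref{thm:starting_point_intro}: $3\mathsf{QS}_{\mathbb{F}_2}$ instances where either all equations are satisfiable, or no $r$ assignments jointly satisfy more than a $1-\eps$ fraction. We form the $\beta$-smooth $k$-fold repeated equation-vs-variable game with advice, $\Psi_{q,\beta,k}$. We then convert jointly-satisfying $r$-tuples into symmetric bilinear answers of rank at most $r$ (\Cref{subsec:joint_satisfaction_to_bilinear}). For this game we need two facts. The soundness must be $2^{-\Omega_q(\beta k)}$ against low-rank answers, which we get from smooth parallel repetition. The game must also have the covering property for $\mathbb{F}_2^{E'}\subseteq \mathbb{F}_2^{E}$.

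\textbf{Middle PCP.} We replace the global rank check by the local check $\operatorname{rank}(f|_{U\otimes U})\le 1$, giving $\Psi_{\mathsf{mid}}$. We then apply lazy repetition with $t$ blocks to get $\Psi_{\mathsf{lazy}}$. Completeness is immediate in both steps, since an honest $x\otimes x$ has rank $1$ on every $U\otimes U$. For soundness of $\Psi_{\mathsf{mid}}$ we argue as in \Cref{sec:middle_PCP_intro}:
\begin{enumerate}[label=(\roman*)]
\item Apply Cauchy--Schwarz over two Alice questions $U_1,U_2$ that share Bob's question. This shows Alice's table passes a Grassmann-type agreement test on $\mathbb{F}_2^{E'}\otimes\mathbb{F}_2^{E'}$.
\item Invoke the non-standard (zoom-in/zoom-out) soundness of the Grassmann test, using the advice to coordinate the zoom-ins.
\item Show that the decoded global bilinear form has rank at most $r$. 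Otherwise a random $\ell_2$-dimensional $U$ would see rank at least $2$ with probability $1-o_{\ell_2}(1)$.
\end{enumerate}
Soundness of $\Psi_{\mathsf{lazy}}$ then follows by planting, since the provers must still answer the critical block. The covering property follows from \Cref{claim:prob_fact_intro} once $t\gg 2^{O(k^2)}$.

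\textbf{Inner PCP.} We compose $\Psi_{\mathsf{lazy}}$ with the tensor of the standard Grassmann encoding on non-critical blocks and the low-rank (Hadamard $\Lspan$) encoding on the critical block. We add the side spaces $N_{E_j}$ and $N_{\mathsf{sym}}$ by folding, which yields $\Psi_{\mathsf{final}}$. The constraint is sub-$4$-to-$1$ for two reasons. First, $\dim L'=\dim L+1$ gives a factor of $2$. Second, the rank-$\le1$ symmetric extensions from $U_c\otimes U_c$ to $U_c'\otimes U_c'$ give another factor of $2$. The constraint is only sub-$4$-to-$1$ rather than exactly $4$-to-$1$ because the side conditions and the rank conditions forbid some labels; this is why we use the relaxation in \Cref{def:sub4to1_constraint}. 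The alphabet size is bounded by a function of $\ell_1,\ell_2,t,k$ only, which gives $N$. Completeness is by honest encoding.

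\textbf{Main obstacle.} The hard step is the soundness of the composition: turning an $\eps$-good strategy for $\Psi_{\mathsf{final}}$ into an $\Omega_\eps(1)$-good strategy for $\Psi_{\mathsf{lazy}}$. The plan has four steps.
\begin{enumerate}[label=(\alph*)]
\item Show that for many $(\cE,\cU)$, Alice's table $L\mapsto F[\cE,\cU,L]$ passes the Grassmann test on $\cV(\cE)$. The covering property lets us relate edges through a common Bob vertex.
\item Apply the zoom-in/zoom-out structure theorem, with zoom-ins drawn from the advice $\cA$. Only $O_q(k)$ bits of advice are available in the critical block, and this is exactly why the low-rank encoding is used there.
\item Handle possibly many zoom-outs. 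Rather than reducing codimension on Bob's low-rank side, we stitch Bob's decoded linear functions across many Bob questions into a function on Alice's standard Grassmann space, and then apply the codimension-reduction argument of \cite{DKKMS2} there.
\item Let both provers output a random member of their constant-size lists of decoded functions, so that they agree with constant probability.
\end{enumerate}
Step (c) needs hypercontractivity-type estimates for the mixed standard/low-rank Grassmann structure, and this is where I expect most of the technical work.
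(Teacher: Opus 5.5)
Your architecture matches the paper's (\GameAa$\to$\GameAb$\to$\GameBa$\to$\GameBb$\to$\GameCa). Steps (a) and (b) are \Cref{lem:inner_prepare_decode,lem:answerable_Alice_question}, and the idea in (c) of moving Bob's decodings back to Alice's standard Grassmann space is the right one. The gap is in how Bob is handled. Your plan never says where Bob's decoded functions come from, and in (d) Bob samples from a constant-size list. Bob's table $G[\cE',\cU',\cdot]$ lives on the mixed-type domain $\mathsf{MGr}_c(\cE',\ell_1+1,2^{\ell_1-1}s)$. For that domain neither a zoom-in/zoom-out decoding theorem (an analogue of \Cref{thm:grassmann_linear_dec}) nor a list bound (an analogue of \Cref{lem:dkkms_list_bound}) is available. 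Your expectation that (c) needs hypercontractivity-type estimates for the mixed structure amounts to proving exactly these results. That is open, and the paper's argument is built to avoid it. As written, (c) stitches together objects whose existence you have not established, and (d) rests on an unproved list bound.

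The paper's route has four parts.
\begin{itemize}
\item \emph{Pruning.} $F$ and $G$ are pruned iteratively until they are $2^{-q_0}$-transferable (\Cref{def:transferability}), at a cost of at most $2^{1-q_0}$ in value.
\item \emph{Transfer from Alice to Bob} (\Cref{lem:Bob_pre_answerable}). Suppose $F[\cE,\cU,L]=f|_L$ often for a smooth, answerable $Q$. Transferability then gives $G[\cE',\cU',L']|_L=f|_L$ often. \Cref{lem:MGr_agreement_extension} upgrades this to agreement on all of $L'$ for an extension $g$ defined on a zoom-out of codimension at most $q_0$. On the low-rank side this uses only elementary zoom-out estimates such as \Cref{prop:MGr_zoom_out}.
\item \emph{Asymmetric strategies.} Bob outputs a random extension of an arbitrary \emph{fixed} element of his possibly unbounded list. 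Only Alice samples from a list, namely $\mathsf{LIST}^{\vec\tau}_{Q'}(F)$, where $Q'\subseteq Q$ has codimension at most $1$ and zoom-outs have codimension up to $2q_0$. \Cref{lem:dkkms_list_bound} bounds this list.
\item \emph{Second use of lazy repetition.} To show Alice's list usually contains an extension of Bob's choice, the paper uses lazy repetition again, beyond covering. An Alice triple $(\cE,\cU,Q)$ receives Bob's choices from many distinct critical coordinates. Suppose $\exp(k^8)$ of them, with a common $Q'$ and critical block size, were $F$-friendly but not $F$-attached. Then a sunflower among their zoom-outs (\Cref{lem:subspacesunflower}) would aggregate the planted local tests into an $\ell_1$-vs-$\lfloor\ell_1/10\rfloor$ linearity test on $\cV(\cE)$. \Cref{cor:list_decoding_linearity_test_nil,lem:list_decoding_bound_baby} would then produce a global function extending one of them (\Cref{lem:sunflower_transfer}), a contradiction.
\end{itemize}
Hence each $(\cE,\cU,Q)$ has at most $\exp(2k^8)$ bad critical coordinates. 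A counting over the advice $\rho(\cA_{[q_0]})$, which costs a factor $2^{9q_0k^2}\ll t$, then shows good-but-not-excellent tuples have probability at most $t^{-1/2}$.

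A smaller point concerns your outer PCP. Going from rank-$r$ bilinear answers back to joint satisfaction needs up to $3r/2$ linear functionals over $\bF_2$ (\Cref{lem:rank1_decomposition}), plus control of rank drop under thinning. That is why \Cref{lem:Game1a_soundness} is applied with rank $2r$. It yields soundness of about $4^{q_1+q_2}r\beta\le k^{-1/2}$ for the bilinear game, not $2^{-\Omega_q(\beta k)}$, which still suffices.
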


Although~\Cref{thm:sub-main} is a relaxation of~\Cref{thm:main}, we will show in~\Cref{sec:exact_4_to_1} that~\Cref{thm:sub-main} actually implies~\Cref{thm:main}. The main body of this paper (\Cref{sec:outer_PCP,sec:middle_PCP,sec:inner_PCP}) will be devoted to proving~\Cref{thm:sub-main}.

\subsection{Technical Ingredients from Prior Work} 

In this subsection, we record the main technical ingredients of this paper that are imported from prior works.

\subsubsection{Approximate Coloring of Hypergraphs}

As discussed in~\Cref{sec:outer_pcp_intro}, our construction starts from
the hardness of approximating satisfiable $3\mathsf{QS}_{\mathbb{F}_2}$
instances. However, limitations of the tensor tests used in our middle
PCP require us to start from an $\mathbf{NP}$-hard gap problem in which
the ``yes'' and ``no'' cases impose conditions on \emph{different} constraint
satisfaction instances. Specifically, the soundness condition
in~\Cref{thm:starting_point_intro} must assert that, even after the
alphabet of the $3\mathsf{QS}_{\mathbb{F}_2}$ instance is suitably
enlarged (from $\bF_{2}$ to $(\bF_{2})^{r}$, roughly speaking), the resulting
$3\mathsf{QS}_{(\mathbb{F}_2)^{r}}$ instance still has small
value.\footnote{This formulation is characteristic of promise
constraint satisfaction problems (PCSPs); see, for
example,~\cite{krokhin2022invitation}.}

Fortunately, the desired starting point
in~\Cref{thm:starting_point_intro} is essentially implicit in the result
of~\cite{DinurRegevSmyth} on the hardness of coloring $2$-colorable
$3$-uniform hypergraphs.\footnote{In~\Cref{subsec:NAE_predicate}, we will explain why $2$-coloring a $3$-uniform hypergraph can be viewed as solving a $3\mathsf{QS}_{\mathbb{F}_2}$ instance.} Although their result is stated with the
no-case condition that the hypergraph $H$ is not $c$-colorable,
a closer inspection of their argument yields the statement below.

\begin{restatable}[\cite{DinurRegevSmyth}]{theorem}{hgcolor}\label{thm:hg_coloring}
    For every fixed positive integer $c\geq 2$, there exists a constant \linebreak $\varepsilon\geq \exp(-\exp(O(c^{3})))\Big.$ such that given a $3$-uniform hypergraph $H$, it is $\mathbf{NP}$-hard to distinguish between the following cases:
    \begin{enumerate}[label=(\arabic*)]
        \item Yes case: $H$ is $2$-colorable.
        \item No case: in every $c$-coloring of $H$, at least $\varepsilon$ fraction of the edges are monochromatic.
    \end{enumerate}
\end{restatable}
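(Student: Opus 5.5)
Proving Theorem~\ref{thm:hg_coloring} means re-running the Dinur--Regev--Smyth reduction and strengthening only its soundness analysis. The plan: start from a Label-Cover instance with perfect completeness and tiny soundness (Theorem~\ref{thm:PCP}), and compose it with the DRS gadget. That gadget is a $3$-uniform hypergraph on the long-code cube $\{0,1\}^{R}$ of each left vertex (in DRS, with a multi-layered outer PCP, to handle the non-uniqueness of the projections). A hyperedge $\{x,y,z\}$ is placed, across blocks joined by a projection $\pi$, whenever the triple is coordinatewise not-all-equal after folding through $\pi$, with a small noise $\eta$. Completeness is unchanged from DRS: a satisfying labeling gives the dictator $2$-coloring of every block, which leaves no edge monochromatic.

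For soundness, I would take a $c$-coloring $\chi$ of $H$ with fewer than an $\varepsilon$ fraction of edges monochromatic, and show that the Label-Cover instance has high value. By averaging, there is a constant fraction of constraints $(u,v)$ whose local gadget has at most $O(\varepsilon)$ fraction of monochromatic edges. Some color class $\chi^{-1}(i)$ then has density at least $1/c$ in the relevant blocks. For such a class, the DRS argument shows a dichotomy. Either it has an influential coordinate of bounded degree (computed via the noise operator), or the class is ``pseudorandom''. In the pseudorandom case, the NAE-with-noise structure forces a monochromatic-edge density of at least some $\mu(c,\eta)>0$. This is an invariance-principle or Borell-type counting estimate; DRS get it through a Friedgut/Kindler--Safra junta approximation. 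The only change from DRS is that we need this quantitative lower bound on the density of monochromatic edges rather than mere existence. That bound is exactly what their proof establishes before it concludes non-emptiness. Choosing $\varepsilon<\mu(c,\eta)\cdot\Omega(1/c)$ then forces the influential-coordinate case on a constant fraction of constraints.

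Decoding is standard. Each vertex takes a random label from its set of influential coordinates, whose size is bounded in terms of $c$. Consistency across constraints follows from the projection structure (smoothness, layered). This gives Label-Cover value at least $\mathrm{poly}(1/c)$-type bounds times the fraction of good constraints. Choosing the outer soundness smaller than this yields a contradiction.

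I expect the main obstacle to be making the counting lemma robust and explicit. We need a lower bound on the fraction of monochromatic edges that degrades to $\exp(-\exp(O(c^3)))$, rather than merely ``positive''. I would track the parameter dependencies in DRS's junta and Kindler--Safra step: the noise rate is polynomial in $1/c$ and the junta size is exponential in $\mathrm{poly}(c)$. The doubly exponential bound then arises from the density of a fixed-size junta configuration.
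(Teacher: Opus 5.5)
You have a genuine gap, and it sits at the step you call ``the only change from DRS.'' The Dinur--Regev--Smyth construction is not a noisy long-code NAE test analysed through influences, invariance/Borell bounds and Kindler--Safra juntas. Its vertices are pairs $(\mathbf{v},S)$ with $S$ an $r$-subset of the label set and $r=\lceil(|R|-c)/2\rceil$. Its hyperedges are $\{(\mathbf{v},S_1),(\mathbf{v},S_2),(\mathbf{v}',S_3)\}$ with $S_1\cap S_2=\emptyset$ and $\pi(R\setminus(S_1\cup S_2))\subseteq S_3$. Its soundness rests on the Kneser/Schrijver chromatic-number bound: some colour class at $\mathbf{v}$ has density at least $|R|^{-c-1}$ and contains two disjoint sets. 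A combinatorial hitting-set decoding follows.

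There is no pseudorandom-versus-influential dichotomy in that proof, so there is no quantitative counting estimate ``before it concludes non-emptiness'' for you to track. What you are really proposing is a new invariance-type theorem for a two-block NAE test composed with projections, with perfect completeness, and you do not supply it. It is also not routine:
\begin{itemize}
\item independent bit-flip noise already makes dictator colourings produce monochromatic edges, which breaks completeness;
\item grouping coordinates by projection preimages pushes the correlation towards $1$, and you only name smoothness without showing it resolves this.
\end{itemize}
Your explanation of the $\exp(-\exp(O(c^3)))$ bound via junta sizes is correspondingly a guess. There is also a concrete false step. The hyperedges must straddle two blocks, and a colouring may use disjoint palettes on the two sides, so ``some colour class has density at least $1/c$ in the relevant blocks'' fails for the pair of blocks of a constraint. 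The multilayered label cover with $\ell=2c^2$ layers exists precisely to fix this, not to handle non-uniqueness of projections. Pigeonhole and inclusion--exclusion over layers give two layers $i<j$ and one colour $c^*$ that is leading at both ends of at least a $1/(2c^2)$ fraction of the pairs in $\mathcal{E}_{i,j}$.

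The idea you are missing is elementary. The gadget attached to a pair $(\mathbf{v},\mathbf{v}')\in\mathcal{E}_{i,j}$ has at most $2^{2|R_i|+|R_j|}\le 8^{7^{u(\ell-1)}}$ hyperedges. With $u=O(c)$ and $\ell=2c^2$ this is $\exp(\exp(O(c^3)))$, independent of the instance. Each hyperedge lies over a single pair. So if fewer than an $\varepsilon=\exp(-\exp(O(c^3)))$ fraction of all hyperedges are monochromatic, at most a $1/(4c^2)$ fraction of the pairs in $\mathcal{E}_{i,j}$ carry any monochromatic hyperedge. Discarding them leaves a set $\mathcal{E}^*$ of pairs with $c_{\mathbf{v}}=c_{\mathbf{v}'}=c^*$ and no monochromatic hyperedge, of relative size at least $1/(4c^2)$. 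On $\mathcal{E}^*$, DRS's original existence-only argument runs verbatim: the sets $B(\mathbf{v})$, the hitting-set lemma, and the embedding back into the base label cover. This is how the paper proves the statement, and it is where the doubly exponential dependence on $c$ comes from.
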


Since~\Cref{thm:hg_coloring} differs slightly from the result proved
in~\cite{DinurRegevSmyth}, we provide a proof in~\Cref{sec:apx_hyp}.
Our proof is a minor adaptation of their argument, which in turn uses the PCP
theorem~\cite{FGLSS,AS,ALMSS} and parallel repetition~\cite{Raz}.

\subsubsection{Grassmann Agreement Test}\label{subsubsec:Grassmann_agreement_test}

The main analytical/combinatorial tool used in this paper is the (non-standard) soundness of the Grassmann agreement test, informally stated in~\Cref{eq:overview_Grassmann_non_standard}. 
Under the notation specified in~\Cref{subsec:general_notation}, given a table $F[\cdot]$ that assigns to each $\ell$-dimensional subspace $L\subseteq\mathbb{F}_2^n$ either a linear function $F[L]\colon L\to\mathbb{F}_2$ or the nullity element $\nil$, the passing probability of the Grassmann test is
\[
\Pru{(L_{1},L_{2})\sim \cG(\bF_{2}^{n},\ell)}{\left.F[L_{1}]\right|_{L_{1}\cap L_{2}}=\left.F[L_{2}]\right|_{L_{1}\cap L_{2}}\Big.},
\]
where the event does not hold if either $F[L_1]$ or $F[L_2]$ are $\nil$. The following result is due to~\cite{KMS23,ellis2023analogue}, but we use a slightly different formulation which we derive directly from their results.

\begin{restatable}[\cite{KMS23,ellis2023analogue}]{theorem}{grdecodingA}\label{thm:grassmann_linear_dec}
There is an absolute constant $C>0$ such that the following holds. Let $n,\ell,q$ be nonnegative integers such that $n-q-3\geq \ell\geq q+3$. Suppose $F[\cdot]$ is a table where $F[L]\in L^*\cup\{\nil\}$ for each $L\in \mathsf{Gr}(\bF_{2}^{n},\ell)$, such that 
\begin{equation}\label{eq:Grassmann_decoding_2_assumption}
\Pru{(L_{1},L_{2})\sim \cG(\bF_{2}^{n},\ell)}{\left.F[L_{1}]\right|_{L_{1}\cap L_{2}}=\left.F[L_{2}]\right|_{L_{1}\cap L_{2}}\Big.}\geq 2^{-q+5}.
\end{equation}
Then for at least $2^{-q-\ell^{2}}$ fraction of subspaces $Q\in \mathsf{Gr}(\bF_{2}^{n},q)$, there exist a subspace $W\in \mathsf{Gr}(\bF_{2}^{n},n-q)$ containing $Q$ and a linear map $f:W\rightarrow \bF_{2}$ such that
\[
\Pru{L\in \mathsf{Gr}(\bF_{2}^{n},\ell)}{F[L]=\left.f\right|_{L}\,\Big|\, Q\subseteq L\subseteq W}\geq 2^{-Cq^{2}}.
\]
\end{restatable}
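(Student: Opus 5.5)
We will derive the statement from the zoom-in/zoom-out soundness theorem for the Grassmann test of~\cite{KMS23} (as refined in~\cite{ellis2023analogue}). We apply it in the form that gives a local structure: a triple $(Q,W,f)$ with $\dim Q=q$ and $\dim W=n-q$, such that $F[L]=f|_L$ for a $2^{-O(q^2)}$ fraction of $L$ with $Q\subseteq L\subseteq W$. The work is in converting the hypothesis and conclusion of that theorem into the present quantitative form. There are three issues: the $\nil$ entries, the threshold $2^{-q+5}$, and the density $2^{-q-\ell^2}$ of good zoom-ins.

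\textbf{Handling $\nil$.} Replace each $\nil$ entry by an independent uniformly random linear function on $L$. This can only increase the passing probability, so the hypothesis still holds. Afterwards we check that, for a fixed zoom-in/zoom-out pair, the random fill-in agrees with a fixed $f|_L$ with probability only $2^{-\ell}$. Hence the random entries contribute negligibly to the agreement $2^{-Cq^2}$, provided $2^{-\ell}\ll 2^{-Cq^2}$. If this fails we shrink constants; alternatively we argue by averaging that for most fill-ins the structure found is carried by genuine entries. I expect this bookkeeping to be routine given $\ell\ge q+3$, possibly after adjusting $C$.

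\textbf{Applying the cited theorem.} The KMS theorem states: if the test passes with probability $\delta$, then there are $r=O(\log(1/\delta))$ and a constant $C'$ with the following property. A $\ge\delta^{C'}$-ish fraction of zoom-in/zoom-out pairs $(Q_0,W_0)$, with $\dim Q_0+\operatorname{codim}W_0\le r$, carry a linear $f$ agreeing with $F$ on $\ge \delta^{C'}$ (or $2^{-O(r^2)}$) of the $L$ between them. We set $\delta=2^{-q+5}$, so $r\le q$ for suitable normalization; this is where the constant $5$ is used.

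\textbf{Padding to exact dimensions.} Extend $Q_0$ to a random $q$-dimensional $Q\subseteq W_0$ and shrink $W_0$ to a random $W\supseteq Q$ of codimension exactly $q$, and restrict $f$ to $W$. The conditional agreement is an average over such refinements, so by averaging some (indeed a noticeable fraction of) refinements keep agreement $\ge 2^{-Cq^2}/2$.

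\textbf{Density of good $Q$ (main obstacle).} Each good $Q_0$ extends to $Q$'s, and we need the resulting set of good $Q$ to have density at least $2^{-q-\ell^2}$ in $\mathsf{Gr}(\bF_2^n,q)$. Pushing forward the density of good $Q_0$ through the random extension, we pass from density on $\dim\le r$ spaces to density on $q$-spaces. We lose factors like $2^{-O(q\ell)}$ from the probability that the random extension lies in $W_0$ and preserves agreement; this is bounded by the $2^{-\ell^2}$ slack, using $q\le \ell$. To verify this we would compute the probability that a random $L\supseteq Q_0$ contains a given extension $Q$: it is roughly $2^{-(q-\dim Q_0)(n-\ell)}$ relative to uniform. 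We would then apply a Cauchy--Schwarz/averaging argument over the incidence graph between $q$-spaces and $\ell$-spaces, showing that good $L$'s, which form a $2^{-O(\ell^2)}$-dense collection in the appropriate sense, yield $2^{-q-\ell^2}$-dense good $Q$'s. Getting the exponents to line up with exactly $2^{-q-\ell^2}$ is the delicate part. If necessary we use that the original theorem already yields density $2^{-O(r)}$ with respect to zoom-ins of dimension $r\le q$, and absorb the difference into $\ell^2$.
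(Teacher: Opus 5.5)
Your proposal has a genuine gap at the step you flag as the main obstacle, and a second one in the $\nil$ handling.

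First, the results of \cite{KMS23,ellis2023analogue} do not provide the density-of-good-pairs statement you invoke. What they give, via expansion of pseudorandom sets (\Cref{lem:EKL_theorem}) plus an averaging over global linear functions, is a \emph{single} triple $(Q,W,f)$ with agreement $2^{-Cq^{2}}$ whenever the test passes with probability $2^{-q+4}$ (\Cref{thm:Grassmann_decoding}). Assuming abundance of good $(Q_0,W_0)$ essentially assumes the conclusion. Even granting it, your push-forward cannot reach $2^{-q-\ell^{2}}$. By averaging, only about a $2^{-Cq^{2}}$ fraction of the refinements $(Q,W)$ of a good $(Q_0,W_0)$ are guaranteed to keep agreement $2^{-Cq^{2}}/2$. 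Requiring $Q\subseteq W_0$ costs a further $2^{-\operatorname{codim}(W_0)(q-\dim Q_0)}$. You therefore end with density $2^{-O(q^{2})}$ with a large constant, which is smaller than $2^{-q-\ell^{2}}$ when $\ell=q+3$, a case the hypothesis allows.

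The missing idea is a pruning argument by contradiction. Let $\cQ$ be the set of good $Q$ and suppose $|\cQ|<2^{-q-\ell^{2}}|\mathsf{Gr}(\bF_{2}^{n},q)|$. Set $F'[L]=\nil$ whenever $L$ contains some member of $\cQ$, and $F'[L]=F[L]$ otherwise. The fraction of modified $L$ is at most $|\cQ|\qbinom{\ell}{q}\qbinom{n}{q}^{-1}\leq 2^{-q}$. Hence $F'$ still passes with probability at least $2^{-q+5}-2\cdot 2^{-q}\geq 2^{-q+4}$. The single-triple theorem then yields $(Q,W,f)$ good for $F'$, hence good for $F$, so $Q\in\cQ$. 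But $F'$ is $\nil$ on every $L\supseteq Q$, a contradiction. This is where the constant $5$ and the $\ell^{2}$ come from: the latter cancels $\qbinom{\ell}{q}\leq 2^{\ell^{2}}$. Neither comes from the normalization $r\leq q$ you suggest.

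Second, the random fill-in of $\nil$ entries does not work in the stated range of parameters. A random entry agrees with a fixed $f|_{L}$ with probability $2^{-\ell}$. This is negligible against $2^{-Cq^{2}}$ only if $\ell\gtrsim Cq^{2}$. For $\ell$ close to $q$, the structure found in the filled table may be carried entirely by the random entries, so it cannot be transferred back to $F$. In the paper's route the issue never arises. One works with $\cL_{f}=\{L\mid F[L]=f|_{L}\}$, which automatically excludes $\nil$ entries. The two averages over a uniformly random global $f$ then hold verbatim with $\nil$'s present: the density of $\cL_f$ is at most $2^{-\ell}$ on average, and its internal edge density is at least $2^{-\ell-1}$ times the passing probability. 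Hence some $\cL_{f}$ is non-pseudorandom by \Cref{lem:Grassmann_expansion}.
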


A derivation of~\Cref{thm:grassmann_linear_dec} from the formulation of~\cite{ellis2023analogue} will be presented in~\Cref{sec:grassmann_dec}.

We also need the following simpler result, which concerns not a table of local linear functions, but a table where every entry takes value from the same abstract alphabet. The following result asserts that if the ``Grassmann equality test'' passes with noticeable probability, then after zoom-ins and zoom-outs the entries take a certain constant value noticeably often.

\begin{restatable}{theorem}{grdecodingB}\label{thm:grdecoding2}
There is an absolute constant $C>0$ such that the following holds. Let $n,\ell,q$ be nonnegative integers such that $n-q-3\geq \ell\geq q+3$. Suppose $\Sigma$ is a finite set and $F[\cdot]$ is a table where $F[L]\in \Sigma\cup\{\nil\}$ for each $L\in \mathsf{Gr}(\bF_{2}^{n},\ell)$, such that 
\begin{equation}\label{eq:Grassmann_equality_test}
\Pru{(L_{1},L_{2})\sim \mathcal{G}(\bF_{2}^{n},\ell)}{F[L_{1}]=F[L_{2}]\neq \nil\Big.}\geq 2^{-q+4}.
\end{equation}
Then for at least $2^{-q-\ell^{2}}$ fraction of subspaces $Q\in \mathsf{Gr}(\bF_{2}^{n},q)$, there exist an element $\sigma\in \Sigma$ and a subspace $W\in \mathsf{Gr}(\bF_{2}^{n},n-q)$ containing $Q$ such that
\[
\Pru{L\in \mathsf{Gr}(\bF_{2}^{n},\ell)}{F[L]=\sigma\,\Big|\, Q\subseteq L\subseteq W}\geq 2^{-Cq^{2}}.
\]
\end{restatable}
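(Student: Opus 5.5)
We derive \Cref{thm:grdecoding2} from \Cref{thm:grassmann_linear_dec} by encoding an abstract-alphabet table as a table of local linear functions. The naive encoding, assigning to $L$ a linear function on $L$ determined by $F[L]$, breaks because agreement on $L_{1}\cap L_{2}$ cannot recover an abstract symbol. Instead we embed the symbol into an auxiliary coordinate block. Concretely, we may assume $\Sigma\subseteq \bF_{2}^{m}$ for some $m$ (identify $\Sigma$ with distinct vectors). Consider the larger space $\bF_{2}^{n}\oplus \bF_{2}^{m}$ and define a table $\widetilde F$ on $\ell'$-dimensional subspaces there, with $\ell'=\ell+m$. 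A subspace $\widetilde L$ is called \emph{good} if it has the form $L\oplus \bF_{2}^{m}$ with $L\in\mathsf{Gr}(\bF_{2}^{n},\ell)$. For such $\widetilde L$ with $F[L]=\sigma\neq\nil$, set $\widetilde F[\widetilde L](z,w)=\inp{w}{\sigma}$, i.e.\ zero on the $L$ part. Set $\widetilde F$ to $\nil$ on all other subspaces.

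This approach has a problem: good subspaces are a negligible fraction of $\mathsf{Gr}(\bF_{2}^{n+m},\ell')$, so the agreement hypothesis \Cref{eq:Grassmann_decoding_2_assumption} fails. I therefore prefer a direct route that mimics the derivation of \Cref{thm:grassmann_linear_dec} from the underlying theorem of \cite{ellis2023analogue,KMS23}. That theorem is really a statement about sets in the Grassmann graph that are not small-set expanders. So the plan is as follows.

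First, for each $\sigma\in\Sigma$, let $S_{\sigma}=\{L: F[L]=\sigma\}$. The hypothesis \Cref{eq:Grassmann_equality_test} says that $\sum_{\sigma}\Pr{L_{1}\in S_{\sigma},L_{2}\in S_{\sigma}}\geq 2^{-q+4}$, while $\sum_{\sigma}\mu(S_{\sigma})\leq 1$. Averaging (weighting $\sigma$ by $\mu(S_{\sigma})$) gives some $\sigma$ with edge expansion $\Pr{L_{2}\in S_{\sigma}\mid L_{1}\in S_{\sigma}}\geq 2^{-q+4}$. More usefully, the $\mu$-weighted family of such $\sigma$ carries total mass at least $2^{-q+3}$.

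Second, apply the small-set expansion theorem for the Grassmann graph (the form in \cite{ellis2023analogue}): a set $S$ with $\Pr{L_{2}\in S\mid L_{1}\in S}\geq 2^{-q+3}$ has a zoom-in/zoom-out pair $(Q,W)$ with $\dim Q+\codim W\leq q$ on which its density is at least $2^{-Cq^{2}}$. That theorem yields a single pair. To get the claimed fraction $2^{-q-\ell^{2}}$ of $Q\in\mathsf{Gr}(\bF_{2}^{n},q)$, I will reuse the counting step already carried out in \Cref{sec:grassmann_dec} for \Cref{thm:grassmann_linear_dec}. There, a pair of lower dimensions is padded to dimension exactly $q$ and codimension exactly $q$ by taking random super/subspaces, which loses only a $2^{-O(q^{2})}$ density factor. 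A random $\ell$-space $L\in S_{\sigma}$ lands inside a good pair with probability at least $2^{-\ell^{2}}$, since each $L$ contains at most $2^{\ell q}$ $q$-spaces. Fractions of $Q$ from the different $\sigma$ then add up, weighted by $\mu(S_{\sigma})$.

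The main obstacle is this last counting step: converting "many $L$ lie in some good pair" into "a $2^{-q-\ell^{2}}$ fraction of all $Q$ are good", while accounting for several $\sigma$ competing. I will handle it by double counting incidences $(Q,L)$ with $Q\subseteq L\in S_{\sigma}$ and $(Q,W)$ good for $\sigma$, exactly as in the proof of \Cref{thm:grassmann_linear_dec}. Alternatively, the cleanest route is to note that the proof of \Cref{thm:grassmann_linear_dec} in \Cref{sec:grassmann_dec} only uses the table through the sets where consistency holds, and to repeat it verbatim with "$F[L]=\sigma$" replacing "$F[L]=f|_{L}$".
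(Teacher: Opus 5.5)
Your first step is correct and is exactly the paper's \Cref{lem:Grassmann_decoding_baby}. You average over $\sigma$ ($\mu(S_\sigma)$-weighted; the paper averages uniformly over $\Sigma$) to find a non-expanding $S_\sigma$, and the small-set expansion theorem then yields a single triple $(\sigma,Q,W)$ of density $2^{-Cq^{2}}$.

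The gap is the upgrade from this single triple to a $2^{-q-\ell^{2}}$ fraction of good $Q$. Your double-counting plan rests on the claim that a random $L\in S_\sigma$ contains a good zoom-in with probability at least $2^{-\ell^{2}}$. Nothing you have established supports this. The expansion theorem certifies only that one non-pseudorandom pair exists. It says nothing about how much of $S_\sigma$, or of the agreement mass, is covered by good zoom-ins. The fact that each $L$ contains at most $2^{\ell q}$ $q$-spaces converts ``many covered $L$'' into ``many good $Q$'', but it does not produce the covered $L$'s. Your description of the derivation of \Cref{thm:grassmann_linear_dec} is also inaccurate. It is not a padding or incidence-counting argument, since the expansion statement is already phrased with $\dim Q=q$ and $\codim W=q$. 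So ``repeat it verbatim'' is the right instinct, but not for the reason you give.

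The missing idea is an erase-and-reapply contradiction. Let $\cQ$ be the set of good $Q$, and suppose $|\cQ|<2^{-q-\ell^{2}}|\mathsf{Gr}(\bF_2^n,q)|$. Define $F'[L]=\nil$ if $L$ contains some member of $\cQ$, and $F'[L]=F[L]$ otherwise. By a union bound, the erased set has measure at most $|\cQ|\qbinom{\ell}{q}\qbinom{n}{q}^{-1}\le 2^{-q}$. Both endpoints of a random Grassmann edge are uniform, so $F'$ still passes the equality test with probability at least $2^{-q+4}-2\cdot 2^{-q}\ge 2^{-q+3}$. The extra $\nil$ entries are harmless, because the test only counts non-$\nil$ agreements.

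Now apply the single-triple lemma to $F'$. It gives $(\sigma,Q,W)$ with density at least $2^{-Cq^{2}}$ for $F'$. Since $F'[L]=\sigma$ implies $F[L]=\sigma$, this $Q$ lies in $\cQ$. But then $F'[L]=\nil$ for every $L\supseteq Q$, so the density is $0$, a contradiction.

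This argument is the source of both the $2^{-q-\ell^{2}}$ bound and the slack between the hypotheses $2^{-q+4}$ and $2^{-q+3}$. It also removes your concern about several $\sigma$ competing, since $\cQ$ is defined as the set of $Q$ admitting some $\sigma$ and $W$. With this step supplied, your proof coincides with the paper's.
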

\begin{proof}
    Deferred to~\Cref{sec:grassmann_dec}.
\end{proof}

\subsubsection{List Decoding Bounds}
When we use~\Cref{thm:grassmann_linear_dec} we will often need to argue that given a zoom-in $Q$, the number of possible decodings $(f,W)$ is not too large. In this section we precisely define this notion and state the bounds we will use, established by~\cite{DKKMS18}.

\begin{definition}[\cite{DKKMS18}]\label{def:tau_occurring}
Let $n,\ell,q,w$ be nonnegative integers such that $n-w\geq \ell\geq q+1$. Fix parameters $\tau_{0},\tau_{1},\dots,\tau_{w}\in (0,1)$ and write $\vec{\tau}=(\tau_{0},\dots,\tau_{w})$. Fix an $\bF_{2}$-vector space $V$ of dimension $n$ and a subspace $Q\in\mathsf{Gr}(V,q)$. Suppose $F[\cdot]$ is a table where $F[L]\in L^*\cup\{\nil\}$ for each $L\in \mathsf{Gr}(V,\ell)$. For a linear subspace $W\subseteq V$ with $\operatorname{codim} W\leq w$ and a linear functional $f:W\rightarrow \bF_{2}$, we say that the pair $(f,W)$ is \emph{$\vec{\tau}$-occurring in $F[\cdot]$ modulo $Q$} if $Q\subseteq W$ and
\[
\Pru{L\in \mathsf{Gr}(W,\ell)}{F[L]=f|_{L}\,\Big|\,Q\subseteq L}\geq \tau_{i},\qquad\textup{where }i=n-\operatorname{dim} W.
\]
\end{definition}

It is not possible to give a bound on the number of $\vec{\tau}$-occurring pairs that is independent of $n$, simply because the number of zoom-outs $W$ can be very large. We have to restrict our attention to pairs $(f,W)$ that ``cannot be extended,'' as defined below.

\begin{definition}\label{def:LIST_Q}
In the setting and notation of~\Cref{def:tau_occurring}, we say that $(f,W)$ is \emph{$\vec{\tau}$-maximal in $F[\cdot]$ modulo $Q$} if 
\begin{enumerate}[label=(\arabic*)]
\item $(f,W)$ is $\vec{\tau}$-occurring in $F[\cdot]$ modulo $Q$, but
\item for any linear subspace $W'\subseteq V$ that properly contains $W$ and any extended linear functional $f':W'\rightarrow \bF_{2}, f'|_{W} = f$, the pair $(f',W')$ is not $\vec{\tau}$-occurring in $F[\cdot]$ modulo $Q$.
\end{enumerate}
The set of all $\vec{\tau}$-maximal pairs will be denoted by $\mathsf{LIST}^{\vec{\tau}}_{Q}(F)$.
\end{definition}
We use the following bound on the size of maximal pairs proved by \cite{DKKMS18}. 
\begin{lemma}[\cite{DKKMS18}]\label{lem:dkkms_list_bound}
There is an absolute constant $C\geq 40$ such that the following holds. Let $n,\ell,q,w$ be nonnegative integers such that $n-w\geq 2\ell\geq C(q+1)$. Suppose $\vec{\tau}=(\tau_{0},\dots,\tau_{w})$ is a vector such that 
\[
\tau_{0}\geq 2^{-\ell/C}\qquad\text{and}\qquad\tau_{i-1}\leq C^{-1}\tau_{i}^{C}\textup{ for each }i\in [w].\]
Then for any subspace $Q\in \mathsf{Gr}(\bF_{2}^{n},q)$ and any table $F[\cdot]$ where $F[L]\in L^{*}\cup\{\nil\}$ for each $L\in \mathsf{Gr}(\bF_{2}^{n},\ell)$ we have \[\left|\mathsf{LIST}_{Q}^{\vec{\tau}}(F)\right|\leq \exp(C(w+1)^{2}\ell).\]
\end{lemma}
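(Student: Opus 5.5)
The plan is to bound the number of maximal pairs of each codimension $i\in\{0,\dots,w\}$ separately by $\exp(O((i+1)\ell))$, and then sum over $i$. Since $\tau_{i}\geq (C\tau_{i-1})^{1/C}\geq\tau_{0}$, every threshold satisfies $\tau_i\geq 2^{-\ell/C}$, so $1/\tau_i\leq 2^{\ell/C}$ can be absorbed into the final bound $\exp(C(w+1)^2\ell)$. We may quotient by $Q$ throughout: conditioning on $Q\subseteq L$ is the same as working with $(\ell-q)$-dimensional subspaces of $V/Q$, with $F$ and the $f$'s re-indexed. The hypothesis $2\ell\geq C(q+1)$ guarantees that $\ell-q\geq \ell/2$, so all the ``random subspace lies in a fixed proper subspace'' probabilities are at most $2^{-\Omega(\ell)}$ per unit of codimension.

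\textbf{Step 1: the case $i=0$ (Johnson-type bound).} Let $f_1,\dots,f_N$ be distinct linear functionals on $V$, each with $\Pr_{L\supseteq Q}[F[L]=f_j|_L]\geq\tau_0$. For $j\neq j'$, the event $f_j|_L=f_{j'}|_L$ forces $L\subseteq\ker(f_j-f_{j'})$, which has probability at most $2^{-(\ell-q)}$ (it is $0$ if $Q\not\subseteq\ker(f_j-f_{j'})$). The events $\{F[L]=f_j|_L\}$ are therefore almost disjoint. Bonferroni gives
\[
1\geq N\tau_0-\binom{N}{2}2^{-(\ell-q)},
\]
and truncating to $N'=\min(N,\lceil 2/\tau_0\rceil)$ functionals yields $N\leq 2/\tau_0\leq 2^{\ell/C+1}$.

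\textbf{Step 2: general codimension $i$, via maximality.} Fix $i\geq 1$ and let $(f_1,W_1),\dots,(f_N,W_N)$ be distinct maximal pairs with $\operatorname{codim}W_j=i$. I would measure everything with respect to a single uniformly random $L\supseteq Q$ in $V$.
\begin{itemize}
\item[(a)] Each pair contributes $\Pr[L\subseteq W_j,\ F[L]=f_j|_L]\geq \tau_i\cdot\Pr[L\subseteq W_j]\approx\tau_i 2^{-i(\ell-q)}$.
\item[(b)] For a collision between $j\neq j'$, consider $W=W_j\cap W_{j'}$. If $f_j$ and $f_{j'}$ disagree on $W$, then a collision requires $L\subseteq W\cap\ker(f_j-f_{j'})$, which has codimension at least $i+1$. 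If they agree on $W$ and $\operatorname{codim}W\geq i+1$, the collision probability is again at most $2^{-(i+1)(\ell-q)}$. In both cases a collision is smaller than a single contribution by a factor $2^{-(\ell-q)}\ll\tau_i$.
\item[(c)] The remaining possibility is that $f_j,f_{j'}$ agree on $W$ and $\operatorname{codim} W = i$. Then $W_j=W_{j'}$, which would give the same pair, except that maximality is precisely what excludes near-duplicates one level up. Concretely, if many pairs have $W_j+W_{j'}$ of codimension $i-1$ with compatible functionals, glue them into $(f',W_j+W_{j'})$. Averaging over $L\subseteq W_j+W_{j'}$ shows this glued pair has agreement $\gtrsim\tau_i\cdot 2^{-O(\ell)}$ restricted to the union. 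A cleaner route is to bound the number of codimension-$i$ pairs sitting inside a common codimension-$(i-1)$ space $W'$ with compatible functionals, using the $i=0$ argument inside $W'$. Maximality of each pair forces the agreement of any compatible extension on $W'$ to be $<\tau_{i-1}\leq C^{-1}\tau_i^C$, which should contradict the gluing lower bound.
\end{itemize}
Given (a)--(c), the same Bonferroni computation on the events $\{L\subseteq W_j,\ F[L]=f_j|_L\}$ gives $N\lesssim 2^{i(\ell-q)}/\tau_i\cdot 2^{O(i\ell)}$. The extra $2^{O(i\ell)}$ accounts for the number of ways the subspaces $W_j$ can be arranged relative to the random $L$. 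I would make this precise by first fixing the ``type'' of $W_j$ relative to a fixed set of $i$ linear constraints.

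\textbf{Main obstacle.} I expect the main difficulty to be step (c): quantitatively turning maximality into a bound on how many codimension-$i$ pairs can be pairwise compatible inside one codimension-$(i-1)$ space. This needs a sampling argument showing that if $M$ such pairs coexist, then some extension to $W'$ achieves agreement at least $\mathrm{poly}(\tau_i)/M$ or better. The condition $\tau_{i-1}\leq C^{-1}\tau_i^{C}$ is tailored exactly to make this contradiction go through. Once per-level counts of the form $\exp(O((i+1)\ell))$ are in hand, summing over $i\leq w$ gives $|\mathsf{LIST}^{\vec\tau}_Q(F)|\leq (w+1)\exp(O((w+1)\ell))\leq\exp(C(w+1)^2\ell)$.
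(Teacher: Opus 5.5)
Your Step 1 is correct; it is essentially \Cref{lem:list_decoding_bound_baby}. Step 2, which is the whole content of the lemma, has a genuine gap.

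\textbf{Why Step 2 fails.} The inclusion--exclusion in (a)--(b) cannot bound $N$ for any $i\geq 1$. Single contributions are $p\approx\tau_i2^{-i(\ell-q)}$ and pairwise collisions are $c\lesssim 2^{-(i+1)(\ell-q)}$. The Bonferroni lower bound $N_0p-\binom{N_0}{2}c$ therefore never exceeds about $p^2/(2c)\approx\tau_i^{2}2^{-(i-1)(\ell-q)}/2<1$, so no choice of $N_0$ gives a contradiction. This is structural, not a matter of constants. If $F[L]=f|_L$ for a single global $f$, then every $(f|_W,W)$ with $Q\subseteq W$ and $\operatorname{codim}W=i$ is $\vec{\tau}$-occurring. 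There are roughly $2^{i(n-q-i)}$ of them, they satisfy your (a) and (b), and none of them is maximal.

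So the bound must come entirely from maximality, and (c) does not supply it. As written, (c) is vacuous: if $W_j\cap W_{j'}$ has codimension $i$, the two pairs coincide. The gluing picture assumes that many codimension-$i$ pairs sit inside one codimension-$(i-1)$ space with functionals that agree on overlaps. In general the $W_j$ are spread out and $f_j,f_{j'}$ disagree on $W_j\cap W_{j'}$, so there is nothing to glue. Even when gluing is possible, boundedly many pieces give agreement only about $2^{-(\ell-q)}\tau_i$, far below $\tau_{i-1}\geq 2^{-\ell/C}$. The agreement of an extension grows with the number of compatible pieces (one needs on the order of $2^{\ell-q}\tau_{i-1}/\tau_i$ of them); it is not of the form $\mathrm{poly}(\tau_i)/M$.

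\textbf{What is missing.} The paper imports the argument from \cite{DKKMS18}; compare the proof of \Cref{lem:sunflower_transfer}. It has three ingredients.
\begin{enumerate}
\item \emph{Sunflower.} Apply \Cref{lem:subspacesunflower} to $(m2^{i})^{i+1}$ distinct codimension-$i$ zoom-outs to get a sunflower $(W,W_{1},\dots,W_{m})$ whose petals have codimension $w_0\geq 1$ in $W$, so $\operatorname{codim}W<i$. Sample $L_1,L_2\supseteq Q$ inside a random $L_+\subseteq W$ of dimension $2\ell-b$, and let $h=\dim L_+-q$. The sunflower is exactly what rescues your Bonferroni computation, but relative to the event $\{L_+\subseteq W\}$: each petal has conditional mass about $2^{-w_0h}$ and each pairwise overlap about $2^{-2w_0h}$. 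Hence $m\approx\mathrm{poly}(\tau_i)\,2^{w_0h}$ petals cover a noticeable fraction.
\item \emph{Patching via a linearity test.} Because the $f_j$ are not mutually compatible, they are patched through the $\ell$-vs.-$b$ agreement test on $W$. The test passes with probability $\mathrm{poly}(\tau_i)$, since $L_+$ often lies in a single petal where both $F[L_1]$ and $F[L_2]$ agree with $f_j$. Then \Cref{cor:list_decoding_linearity_test_nil} yields a global $g\colon W\to\bF_2$ with agreement $\Omega(\tau_i^{O(1)})$. This is the only place where $\nil$ entries require care.
\item \emph{Consistency and contradiction.} A list-decoding and pruning step, as in Steps 5--6 of \Cref{lem:sunflower_transfer}, shows that some such $g$ satisfies $g|_{W_j}=f_j$ for some petal. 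For $C$ large, its agreement is at least $C^{-1}\tau_i^{C}\geq\tau_{i-1}\geq\tau_{\operatorname{codim}W}$. So $(g,W)$ is $\vec{\tau}$-occurring and properly extends $(f_j,W_j)$, contradicting maximality.
\end{enumerate}

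This caps the number of distinct codimension-$i$ zoom-outs by $(m2^{i})^{i+1}=\exp(O((i+1)^{2}\ell))$. Note the square: your claimed per-level bound $\exp(O((i+1)\ell))$ is not what this argument gives. \Cref{lem:list_decoding_bound_baby} caps the number of functionals per zoom-out, and summing over the $w+1$ levels gives $\exp(C(w+1)^{2}\ell)$.
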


The original statement of the above lemma in~\cite{DKKMS18} does not allow for $\nil$ entries in the table. In~\Cref{sec:list_decoding_bound}, we will explain how to adapt their arguments to tables with $\nil$ entries.

In a part of the analysis of our inner PCP, we are not able to use~\Cref{lem:dkkms_list_bound} as a black box, and we will instead draw from the proof of~\Cref{lem:dkkms_list_bound}. A useful component of the proof by~\cite{DKKMS18} is the notion of \emph{sunflower} of zoom-outs, defined as follows.

\begin{definition}\label{def:subspacesunflower}
    Let $V$ be a vector space over $\mathbb{F}_2$, and let $W_1,\ldots,W_m\subseteq V$ be  subspaces of co-dimension $w$. We say that $(V,W_1,\ldots,W_m)$ forms a sunflower if $\operatorname{codim}_{V}(W_{i}\cap W_{j})=2w$ for each pair of distinct indices $i,j\in [m]$.
\end{definition}
In words, a sunflower is a collection of subspaces $W_1,\ldots,W_m$ living inside a space $V$ that have as small as possible pairwise intersections. The following result from~\cite{DKKMS18} asserts that any sufficiently large collection of zoom-outs of constant codimension contains a large sunflower structure.
\begin{lemma}[{\cite[Corollary 3.14]{DKKMS18}}]\label{lem:subspacesunflower}
    Let $V$ be a finite dimensional $\bF_{2}$-vector space, and let $W_1,\dots ,W_N\subseteq V$ be a list of subspaces of codimension $w$. Let $m$ be a positive integer such that $(m\cdot 2^w)^{w+1} \le N$. Then, there exists a subspace $W\subseteq V$ containing $m$ of the $W_i$'s, say, $W_1,\dots, W_m$, such that $(W,W_1,\dots, W_m)$ forms a sunflower. 
\end{lemma}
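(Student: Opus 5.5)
The statement to prove is Lemma~\ref{lem:subspacesunflower}: from $N\geq (m2^w)^{w+1}$ codimension-$w$ subspaces $W_1,\dots,W_N\subseteq V$, we must extract $m$ of them together with an ambient subspace $W$ in which they form a sunflower. I will pass to annihilators and argue as in the Erd\H{o}s--Rado sunflower lemma, by induction on $w$. Write $A_i=W_i^{\perp}\subseteq V^*$, a $w$-dimensional subspace. Then $(W,W_{i_1},\dots,W_{i_m})$ is a sunflower exactly when, with $A=W^{\perp}\subseteq A_{i_j}$ for all $j$, the quotients $A_{i_j}/A$ all have dimension $w'=w-\dim A$ and are pairwise independent, i.e.\ $\dim(A_{i_j}+A_{i_{j'}})=\dim A+2w'$. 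This is the statement $\operatorname{codim}_W(W_{i_j}\cap W_{i_{j'}})=2w'$. So the task is to find a common subspace $A$ (the ``kernel'') modulo which $m$ of the $A_i$ are pairwise in general position.

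For the induction, the base case $w=0$ is trivial. For $w\geq1$, greedily choose a maximal subfamily $A_{i_1},\dots,A_{i_s}$ whose members are pairwise independent, i.e.\ pairwise intersecting only in $0$. If $s\geq m$ we are done with $A=0$ and $W=V$. Otherwise, every $A_i$ meets $S=A_{i_1}+\dots+A_{i_s}$ nontrivially, and $\dim S\leq (m-1)w$. Each $A_i$ therefore contains one of fewer than $2^{(m-1)w}$ nonzero vectors of $S$.

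This pigeonhole step is where the crude bound $2^{mw}$ would appear, and it is the main obstacle, since the target bound is $(m2^w)^{w+1}$ and not $2^{mw^2}$. To fix it, I will argue that $A_i\cap A_{i_j}\neq0$ for some $j$, which holds by maximality when pairwise independence is the only requirement. Then $A_i$ contains one of at most $s(2^w-1)<m2^w$ nonzero vectors lying in some single $A_{i_j}$. By pigeonhole, some nonzero $v$ lies in at least $N/(m2^w)\geq (m2^w)^{w}$ of the $A_i$. Passing to $V^*/\langle v\rangle$, equivalently to $v^{\perp}\subseteq V$ with each $W_i\subseteq v^{\perp}$ of codimension $w-1$ there, gives a family of codimension-$(w-1)$ subspaces. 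Its size is at least $(m2^w)^{w}\geq (m2^{w-1})^{w}$, so the induction hypothesis applies inside $v^{\perp}$. It yields $W\subseteq v^{\perp}$ and $m$ of the $W_i$ forming a sunflower with respect to $W$, with codimensions now measured in $W$, and this is exactly the required conclusion in $V$.

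A final check is that the definition's uniform codimension is consistent. In the conclusion, the codimension of each $W_{i_j}$ inside $W$ is a common value $w'$, automatically equal for all of them since they all contain $W^{\perp}$ and have the same dimension. Pairwise intersection codimension $2w'$ inside $W$ is then the statement of the definition applied with ambient space $W$. I expect the bookkeeping to be routine; the only real content is the pigeonhole over $<m2^w$ vectors in the union of the chosen annihilators.
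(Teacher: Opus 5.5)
Your proof is correct. You pass to annihilators, take a maximal subfamily whose members pairwise intersect trivially, pigeonhole over the fewer than $m2^w$ nonzero vectors in their union, and recurse inside $v^{\perp}$; this is the standard Erd\H{o}s--Rado-style induction, and I believe it is how the cited result of \cite{DKKMS18} is proved, though the paper here only imports the lemma and gives no proof of its own to compare against. If you run the induction with the sharper threshold it naturally supports, your per-level count even gives $m\prod_{j=1}^{w}(m2^{j})=m^{w+1}2^{w(w+1)/2}\le (m2^w)^{w+1}$.
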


\subsection{Proof Organization}

The proof of~\Cref{thm:sub-main} spans~\Cref{sec:outer_PCP,sec:middle_PCP,sec:inner_PCP}, during which we construct a sequence of five 2-prover 1-round games; see~\Cref{fig:proof_organization}.

\begin{figure}[H]
\centering
\begin{tikzpicture}[dottedbox/.style={draw, dotted, thick, rounded corners=3pt}]
\node (A) [node, align = center] at (0,0) {$\GameAa$\\
(\Cref{subsec:equation_vs_variable})};
\node (B) [node, align = center, right=0.5cm of A]
{$\GameAb$\\
(\Cref{subsec:joint_satisfaction_to_bilinear})};
\node (C) [node, align = center, right=1cm of B]
{$\GameBa$\\
(\Cref{subsec:Game2a})};
\node (D) [node, align = center, right=0.5cm of C]{
$\GameBb$\\
(\Cref{subsec:Game2b})
};
\node (E) [node, align = center, right=1cm of D]{
$\GameCa$\\
(\Cref{subsec:Game3})
};
\draw [single arrow] (A) -- (B) ;
\draw [single arrow] (B) -- (C) ;
\draw [single arrow] (C) -- (D) ;
\draw [single arrow] (D) -- (E) ;

\node[
    dottedbox,
    fit = {($(A.north west)+(-2pt,0)$)($(B.south east)+(2pt,0)$)},
    inner ysep=10pt,
] (outer) {};
\node[
    dottedbox,
    fit = {($(C.north west)+(-2pt,0)$)($(D.south east)+(2pt,0)$)},
    inner ysep=10pt,
] (middle) {};
\node[
    dottedbox,
    fit = {($(E.north west)+(-2pt,0)$)($(E.south east)+(2pt,0)$)},
    inner ysep=10pt,
] (inner) {};

% Label below the dotted box
\node[below=3pt of outer, align=center] {\textbf{Outer PCP}};
\node[below=3pt of middle, align=center] {\textbf{Middle PCP}};
\node[below=3pt of inner, align=center] {\textbf{Inner PCP}};

\end{tikzpicture}
\caption{Proof organization}
\label{fig:proof_organization}
\end{figure}

For each version label $\mathsf{xx}\in\{\mathsf{1a},\mathsf{1b},\mathsf{2a},\mathsf{2b},\mathsf{3}\}$, we will prove completeness and soundness guarantees of the following form: 
\begin{enumerate}[label=(\arabic*)]
\item Completeness: if $H$ is $2$-colorable, then $\operatorname{val}(\mathsf{Gamexx}[H])=1$.
\item Soundness: if every $c$-coloring of $H$ makes at least $\varepsilon$ fraction of the edges monochromatic, then $\operatorname{val}(\mathsf{Gamexx}[H])$ is small.
\end{enumerate}

Completeness is straightforward and can be proved independently for each
game, without invoking the completeness of preceding games.
Most of the work lies in establishing soundness: for each game after
the first, the soundness proof relies on that of its predecessor
in the sequence. The final construction $\GameCa[H]$ is a sub-$4$-to-$1$ game,
whose completeness and soundness yield~\Cref{thm:sub-main}.

\subsection{Parameter Relations}

During the course of our construction of $4$-to-$1$ games, there will be many parameters on different scales (they are all constant with respect to the size of the starting hardness instance in~\Cref{thm:hg_coloring}). We summarize the relations between these constants in the following condition.

\begin{condition}\label{cond:parameters}
Let $k,q_{0},q_{1},q_{2},t,r,\ell_1,\ell_2,s$ be positive integers, let $\delta,\beta,\varepsilon\in (0,1)$, and let $\vec{\tau}=(\tau_{0},\tau_{1},\dots,\tau_{2q_{0}})$ be a vector such that for a sufficiently large absolute constant $C>0$, the conditions in~\Cref{tab:parameters} hold.
\end{condition}

\begin{table}[H]
\centering
\begin{tabular}{|Sc|Sc|Sc|Sc|}
\hline
\rowcolor{gray!30} \textbf{Parameter} & \textbf{Requirement} & \textbf{First appears in} & \textbf{Why require this}\\ \hline

$\delta\in (0,1/2)$ & & inner PCP & \\ \hline
$q_{0}\in \bN^{+}$ & $q_{0}\geq C\log_{2}(1/\delta)$ & inner PCP & inner PCP \\ \hline
$\tau_{2q_{0}}\in(0,1)$ & $\tau_{2q_{0}}\leq \exp(-q_{0}^{C})$ & inner PCP & inner PCP\\
\hline
$\tau_{2q_{0}-1},\dots,\tau_{0}\in (0,1)$ &
$\tau_{i-1}\leq C^{-1}\tau_{i}^{C},\;\forall i\in [2q_{0}]$ & inner PCP & inner PCP\\
\hline
$\ell_{1}\in \bN^{+}$ &
$\ell_{1}\geq C\log_{2}(1/\tau_{0})$ & inner PCP & inner PCP\\
\hline
$s\in \bN^{+}$ & $s=3\ell_{1}$ & inner PCP & inner PCP\\
\hline 
$q_{1}\in \bN^{+}$ & $q_{1}=2s(2^{2\ell_{1}}-2^{2\ell_{1} -q_{0}})$ & outer PCP & inner PCP\\
\hline
$q_{2}\in \bN^{+}$ & $q_{2}\geq C\ell_{1}^{2}$ & outer PCP & inner PCP\\
\hline
$\ell_{2}\in \bN^{+}$ & $\ell_{2}\geq Cq_{2}^{2}$ & middle PCP & middle PCP\\
\hline
$r\in \bN^{+}$ & $r=2q_{2}+1$ & outer PCP & middle PCP\\
\hline 
$\varepsilon\in (0,1)$ & $\varepsilon\leq \exp(-\exp(\exp(Cr)))$ & outer PCP & outer PCP\\
\hline
$k\in \bN^{+}$ & $k\geq \varepsilon^{-C}\exp(\exp(\ell_{2}))$ & outer PCP & outer PCP\\
\hline
$\beta \in (0,1)$ & $\beta=k^{-3/4}$ & outer PCP & outer PCP \\
\hline
$t\in \bN^{+}$ & $t\geq \exp(k^{C})$ & middle PCP & inner PCP \\
\hline
\end{tabular}
\caption{Parameter relations}\label{tab:parameters}
\end{table}

\Cref{cond:parameters} is meant to be a consistent setting of parameters that we can invoke in~\Cref{sec:outer_PCP,sec:middle_PCP,sec:inner_PCP}. Whenever the condition is invoked, the absolute constant $C>0$ can be different from previous invocations. More precisely, a statement of the form
\[
\Big[\textup{Assume~\Cref{cond:parameters}. Then [XX] holds.}\Big]
\]
should be interpreted as
\[
\Big[\textup{There exists an absolute constant }C>0\textup{ such that the conditions in~\Cref{tab:parameters} imply [XX].}\Big]
\]

\section{The Outer PCP}\label{sec:outer_PCP}

In this section, we formally construct and analyze our outer PCP,
following the outline in~\Cref{sec:outer_pcp_intro}. We first introduce
the necessary framework
in~\Cref{subsec:tripled_set,subsec:NAE_predicate} before turning to
the actual construction
in~\Cref{subsec:equation_vs_variable,subsec:joint_satisfaction_to_bilinear}.

\subsection{Preparation: Tripled Sets}\label{subsec:tripled_set}

As discussed in~\Cref{sec:outer_pcp_imperfect,sec:outer_pcp_intro},
Alice's challenge in the outer PCP consists of a set $E$ of variables
together with an advice sequence. The variables in $E$ are naturally
grouped into triples, each consisting of the three distinct variables
appearing in an equation of the $3\mathsf{QS}_{\mathbb{F}_2}$ instance.
We call such a set $E$ a \emph{tripled set}, formally defined as follows.

\begin{definition}
A \emph{tripled set} is a finite set \(E\) equipped with a partition into
subsets of size \(3\). We refer to the parts of this implicit partition as the
\emph{triples} of \(E\), and we write \(\mathrm{Triples}(E)\) for the collection of
its triples.
\end{definition}

As discussed in~\Cref{sec:outer_pcp_imperfect,sec:outer_pcp_intro},
our outer PCP construction applies (smooth) parallel repetition to a
$3\mathsf{QS}_{\mathbb{F}_2}$ instance. Since the $3\mathsf{QS}_{\mathbb{F}_2}$ instance provided by~\Cref{thm:hg_coloring} is
represented by a $3$-uniform hypergraph $H$, $k$-fold parallel repetition yields a
$2$-prover $1$-round game played on the vertex-disjoint union of
$k$ copies of $H$.

\begin{definition}\label{def:VHk_and_EHk}
Let $H$ be a 3-uniform hypergraph with vertex set $\mathscr{V}(H)$ and edge set $\scrE(H)$. Given a positive integer $k$, we let $H^{\sqcup k}$ denote the $k$-fold vertex-disjoint union of $H$. More explicitly, the vertex set of $H^{\sqcup k}$ is $[k]\times \mathscr{V}(H)$, and we denote $\scrV(H,k)=[k]\times \scrV(H)$; the edge set of $H^{\sqcup k}$ is $\bigcup_{i=1}^{k}\scrE_{i}(H,k)$, where
\[
\scrE_{i}(H,k)=\bigl\{\{(i,\sfv_{1}),(i,\sfv_{2}),(i,\sfv_{3})\}\,\big|\,\{\sfv_{1},\sfv_{2},\sfv_{3}\}\in \scrE(H)\bigr\}.
\]
We let $\scrE(H,k)$ be the collection of all vertex subsets  $E\subseteq [k]\times \mathscr{V}(H)$ that result from picking one edge from each $\scrE_{i}(H,k)$. That is, it consists of $E$ of the form
\begin{equation}\label{eq:def_3k_vertex_set}
E=\{v_{i,j}\mid i\in [k],\,j\in [3]\}\qquad\text{such that}\qquad \{v_{i,1},v_{i,2},v_{i,3}\}\in\scrE_{i}(H,k)\text{ for all }i\in [k],
\end{equation}
% In words, such a vertex set $E$ contains one edge from each copy of $H$. 
and we view it naturally as a tripled set, with
$\operatorname{Triples}(E)=\bigl\{\{v_{i,1},v_{i,2},v_{i,3}\}\,\big|\, i\in [k]\bigr\}$.
\end{definition}
In words, viewing the set of hyperedges in $H$ as the challenges that a verifier samples, the edges in $H^{\sqcup k}$ are naturally viewed as the challenges that the verifier chooses in the $k$-fold repeated game. 

\begin{remark}\label{rem:triple_ordered}
For a tripled set $E$ of the form \eqref{eq:def_3k_vertex_set}, the $k$ triples of $E$ are naturally ordered. For each $i\in [k]$, we will call $\{v_{i,1},v_{i,2},v_{i,3}\}$ the $i$-th triple of $E$.
\end{remark}

\subsubsection{Thinning, and Thinning with Advice}

We next introduce notation to formalize the smoothness in our application of parallel repetition. The following definition captures the structure of
the challenges in each coordinate: with probability $1-\beta$, the two
players receive identical challenges; otherwise, Alice receives the
full challenge, while Bob receives only a single variable from it.

\begin{definition}
Let \(E\) be a tripled set and let \(\beta \in [0,1]\). We write
\(E' \sim \operatorname{Thin}_{\beta}(E)\) for the random subset \(E' \subseteq E\)
obtained as follows. Independently for each triple \(T \in \operatorname{Triples}(E)\),
we include all elements of \(T\) in \(E'\) with probability \(1-\beta\), and
include a uniformly random element of \(T\) in \(E'\) with probability \(\beta\). We stress that the output \(E'\) itself is only a set, not a tripled set.
\end{definition}

In the outer PCP construction, each prover will receive \emph{two} advice sequences $A_{\inn}$ and $A_{\midd}$, the latter supplying the Grassmann-type zoom-ins needed in the middle PCP, while the former is reserved for zoom-ins in the inner PCP. The next definition formalizes the process by which the verifier samples the two advice sequences.

\begin{definition}\label{def:Joint_dist_S_A_Aprime}
Let $E$ be a tripled set. Fix nonnegative integers $q_{1},q_{2}$ and a parameter $\beta\in [0,1]$. We let $\cJ_{\outt}(E,q_{1},q_{2},\beta)$ denote the output distribution of the following process:
\begin{enumerate}
\item Sample a $\beta$-thinning $E'\sim \operatorname{Thin}_{\beta}(E)$.
\item Sample $q=q_{1}+q_{2}$ vectors $a^{(1)},\dots,a^{(q)}\in \bF_{2}^{E'}$ independently and uniformly at random. We denote $A_{\inn}=\big(a^{(1)},\dots,a^{(q_{1})}\big)$ and $A_{\midd}=\big(a^{(q_{1}+1)},\dots,a^{(q_{1}+q_{2})}\big)$.
\item Output the tuple $(E',A_{\inn},A_{\midd})$.
\end{enumerate}
\end{definition}

\begin{remark}\label{rem:ambient_1}
For a subset $E'$ of a finite set $E$, we always view $\bF_{2}^{E'}$ as a linear subspace of the ambient space $\bF_{2}^{E}$. In particular, the vectors $a^{(1)},\dots,a^{(q)}$ in~\Cref{def:Joint_dist_S_A_Aprime} are viewed also as elements of the vector space $\bF_{2}^{E}$. 
\end{remark}

In the outer PCP, Alice will see a tripled set $E$ and the two advice sequences $A_{\inn}$ and $A_{\midd}$, but not the $\beta$-thinning $E'$ which is only sent to Bob. The next proposition 
states that, even conditioned on $A_{\inn}$ and $A_{\midd}$, the distribution of $E'$ is still nicely structured. In particular, it is independent over the triples of $E$.

\begin{restatable}{proposition}{coveringouterA}\label{prop:covering_outer}
Let $E$ be a tripled set. Fix nonnegative integers $q_{1},q_{2}$ and a parameter $\beta\in [0,1)$. Fix arbitrary tuples
\[
A_{1}\in (\mathbb F_{2}^{E})^{q_{1}}
\qquad\text{and}\qquad
A_{2}\in (\mathbb F_{2}^{E})^{q_{2}}.
\]
Sample
\[
(E',A_{\inn},A_{\midd})\sim \cJ_{\outt}(E,q_{1},q_{2},\beta)
\]
conditioned on the event \(\{A_{\inn}=A_{1}\}\cap\{A_{\midd}=A_{2}\}\). Then the resulting marginal distribution of \(E'\) is independent over triples of $E$. Furthermore, each triple of $E$ is entirely included in $E'$ with conditional probability at least $1-4^{q_{1}+q_{2}}\beta$. 
\end{restatable}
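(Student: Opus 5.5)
We compute the conditional law of $E'$ directly with Bayes' rule. Write $q=q_1+q_2$ and $A=(A_1,A_2)$. For a fixed $\beta$-thinning outcome $E'$, the advice vectors are independent and uniform in $\bF_2^{E'}$. Hence the probability of observing $A$ equals $2^{-q|E'|}$ if every vector of $A$ is supported in $E'$, and $0$ otherwise.

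\textbf{Step 1: factorization.} The prior on $E'$ is a product over triples $T\in\mathrm{Triples}(E)$: the restriction $E'\cap T$ equals $T$ with probability $1-\beta$, and equals $\{v\}$ with probability $\beta/3$ for each $v\in T$. The likelihood also factors over triples:
\[
2^{-q|E'|}\,\mathbb 1[\operatorname{supp}(A)\subseteq E']=\prod_{T}2^{-q|E'\cap T|}\,\mathbb 1[\operatorname{supp}(A)\cap T\subseteq E'\cap T].
\]
Here $\operatorname{supp}(A)$ denotes the union of the supports of all $q$ vectors. The posterior is proportional to prior times likelihood, so it is a product measure over triples. This gives the independence claim.

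\textbf{Step 2: per-triple bound.} Fix a triple $T$. The unnormalized posterior weight of the outcome $E'\cap T=T$ is $(1-\beta)2^{-3q}$. This outcome is always consistent with $A$. Each singleton outcome $\{v\}$ has weight at most $(\beta/3)2^{-q}$, and there are at most $3$ of them. So the conditional probability that $T\subseteq E'$ is at least
\[
\frac{(1-\beta)2^{-3q}}{(1-\beta)2^{-3q}+\beta 2^{-q}}
=\frac{1}{1+\frac{\beta}{1-\beta}4^{q}}
\geq 1-\frac{\beta}{1-\beta}4^{q}.
\]

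\textbf{Step 3: removing the factor $1-\beta$.} The bound in Step 2 is weaker than the required $1-4^q\beta$ by a factor of $1/(1-\beta)$ in the error term, so we need a sharper estimate. If $4^q\beta\geq1$, the claim is trivial. Otherwise, write the failure probability exactly as $x/(1+x)$ with $x=\beta 4^q/(1-\beta)$. The target inequality $x/(1+x)\le 4^q\beta$ is equivalent to
\[
\frac{\beta 4^q}{1-\beta+\beta4^q}\le 4^q\beta,
\]
that is, to $1-\beta+\beta4^q\ge1$. This holds because $4^q\ge1$. So the claimed bound follows.

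The only delicate point is the normalization: the posterior weights must be handled so that the $(1-\beta)$ factor cancels, which the exact form $x/(1+x)$ accomplishes. The rest is bookkeeping.
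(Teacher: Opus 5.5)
Your proof is correct and follows the paper's approach: a per-triple Bayes computation in which the full-triple outcome has posterior weight $(1-\beta)2^{-3q}$ against singleton weights $(\beta/3)2^{-q}$, giving failure probability at most $2^{2q}\beta/(2^{2q}\beta+1-\beta)\le 4^{q}\beta$. The only difference is that the paper splits into cases by how many coordinates of the triple meet the advice supports, whereas you bound every case at once by the worst case (no advice support in the triple). Consequently your $x/(1+x)$ is an upper bound on the failure probability, not the exact value you call it in Step 3.
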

\begin{proof}
    Deferred to~\Cref{subsec:covering_outer}.
\end{proof}

As emphasized throughout the proof overview, ``covering properties'' are crucial in our construction of $4$-to-$1$ games, and they will appear many times (in different forms) throughout~\Cref{sec:outer_PCP,sec:middle_PCP,sec:inner_PCP}. They are important primarily for Grassmann encodings, and a job of our outer PCP construction is to prepare a strong enough ``covering property'' for the Grassmann-based tensor tests in the middle PCP to work. On the most basic level, we need that the distribution of vectors sampled uniformly from $\bF_{2}^{E'}$ (where $E'$ is a random $\beta$-thinning), once lifted to $\bF_{2}^{E}$ as discussed in~\Cref{rem:ambient_1}, is close to uniform over $\bF_{2}^{E}$. The next proposition (essentially from~\cite{DKKMS18}) proves that this is the case if $\beta=o(|E|^{-1/2})$.

\begin{restatable}{proposition}{coveringouterB}\label{prop:covering_outer_2}
Let $E$ be a tripled set. Fix nonnegative integers $q_{1},q_{2}$ and a parameter $\beta\in [0,4^{-q_{1}-q_{2}})$. Fix an arbitrary tuple
\(
A_{1}\in (\mathbb F_{2}^{E})^{q_{1}}
\).
Sample
\[
(E',A_{\inn},A_{\midd})\sim \cJ_{\outt}(E,q_{1},q_{2},\beta)
\]
conditioned on the event \(\{A_{\inn}=A_{1}\}\). Then the resulting marginal distribution of \(A_{\midd}\) has total variation distance at most $4^{q_{1}+q_{2}}\beta\sqrt{|E|/3}$ to the uniform distribution over $(\bF_{2}^{E})^{q_{2}}$.
\end{restatable}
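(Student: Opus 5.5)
Condition on $A_{\inn}=A_1$ and treat the remaining randomness as a mixture over $E'$. By \Cref{prop:covering_outer}, applied with $q_2$ free (the condition on $A_{\midd}$ is dropped by averaging, or equivalently by applying it with $q_2=0$), the conditional law of $E'$ given $A_{\inn}=A_1$ is a product over triples. Moreover, each triple is fully included with probability at least $1-4^{q_1}\beta$. Given $E'$ and $A_{\inn}$, the vectors $A_{\midd}$ are independent of $A_{\inn}$ and uniform on $(\bF_2^{E'})^{q_2}$. Hence the target law is the mixture $\Exs{E'}{\mathrm{Unif}((\bF_2^{E'})^{q_2})}$ with $E'$ drawn from this product measure. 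Since everything factorizes over triples, we may view $A_{\midd}$ as a tuple $(Z_T)_{T\in\mathrm{Triples}(E)}$ with $Z_T\in(\bF_2^{T})^{q_2}$, and the $Z_T$ are independent.

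For a single triple $T$, let $P_T$ be the law of $Z_T$ and $U_T$ the uniform law on $(\bF_2^T)^{q_2}$, a set of size $8^{q_2}$. Then $P_T=(1-\gamma_T)U_T+\gamma_T R_T$, where $\gamma_T\le 4^{q_1}\beta$ and $R_T$ is a mixture of uniform laws on $(\bF_2^{\{v\}})^{q_2}$, each supported on $2^{q_2}$ points. A point $z$ has $P_T(z)/U_T(z)-1=\gamma_T(R_T(z)8^{q_2}-1)$, and $R_T(z)8^{q_2}\le 4^{q_2}$. This gives the chi-square bound
\[
\chi^2(P_T\|U_T)\le \gamma_T^2\,\Exs{z\sim U_T}{(R_T(z)8^{q_2}-1)^2}\le \gamma_T^2 4^{q_2}\le 16^{q_1+q_2}\beta^2 .
\]

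We then tensorize. Chi-square divergence satisfies $1+\chi^2(\otimes P_T\|\otimes U_T)=\prod_T(1+\chi^2(P_T\|U_T))$, so with $m=|E|/3$ triples,
\[
\chi^2\le (1+16^{q_1+q_2}\beta^2)^m-1 .
\]
Finally $d_{TV}\le\frac12\sqrt{\chi^2}$. Write $x=m\,16^{q_1+q_2}\beta^2$. If $x\le1$, then $e^x-1\le 2x$, which gives $d_{TV}\le \frac12\sqrt{2x}\le 4^{q_1+q_2}\beta\sqrt{m}$. If $x>1$, the claimed bound exceeds $1$ and holds trivially.

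The main obstacle is the first step, namely ensuring that conditioning on $A_{\inn}$ preserves the product structure and that the per-triple full-inclusion probability is at least $1-4^{q_1}\beta$. I would take this directly from \Cref{prop:covering_outer}. The hypothesis $\beta<4^{-q_1-q_2}$ keeps $\gamma_T<1$, so the mixture decomposition above is meaningful. The rest is the standard chi-square calculation, and the only care needed is in the constants.
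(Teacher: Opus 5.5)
Your proof is correct and follows the same route as the paper. Both arguments reduce to independent per-triple mixtures $(1-\gamma_T)U_T+\gamma_T R_T$ with $\gamma_T\le 4^{q_1}\beta$, using the conditional description from \Cref{prop:covering_outer}; note that the $q_2=0$ instance is the right justification here, since averaging the $A_{\midd}$-conditioned product measures would not by itself give a product. Both then bound each triple's divergence from uniform via the pointwise density ratio, tensorize, and pass to total variation. The only difference is that the paper tensorizes the Hellinger affinity, so the total squared Hellinger distance is at most the sum of the per-triple ones and no case split on $x\le 1$ is needed, whereas you tensorize the multiplicative chi-square identity.
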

\begin{proof}
    Deferred to~\Cref{subsec:covering_outer}.
\end{proof}

We remark that although we include~\Cref{prop:covering_outer_2} here in~\Cref{sec:outer_PCP}, it will not be used until~\Cref{sec:middle_PCP}.

\subsection{Preparation: The NAE Predicate}\label{subsec:NAE_predicate}

The reason that 2-colorability of a 3-uniform hypergraph can be viewed as a system of quadratic equations is that the ternary not-all-equal predicate on a Boolean alphabet is a ``quadratic'' constraint: 

\begin{definition}\label{def:NAE_predicate}
We define a function $\operatorname{NAE}_{3}:\bF_{2}^{3}\rightarrow\bF_{2}$ defined by
\[
\operatorname{NAE}_{3}(x_{1},x_{2},x_{3})=x_{1}+x_{2}+x_{3}+x_{1}x_{2}+x_{2}x_{3}+x_{3}x_{1}.
\]
We note that $\operatorname{NAE}_{3}(x_{1},x_{2},x_{3})=1$ if and only if $x_{1},x_{2},x_{3}$ are not all equal.
\end{definition}

As discussed in~\Cref{sec:outer_pcp_intro}, we will use the notion of ``joint NAE-satisfaction'', which we formalize below. Here we assume that an assignment to the variables in $E$ is presented as a linear functional $f:\bF_{2}^{E}\rightarrow \bF_{2}$.

\begin{definition}\label{def:NAE_linear_functional}
Let $E$ be a tripled set and let $r$ be a positive integer. Given $r$ linear functionals $f^{(1)},\dots,f^{(r)}:\bF_{2}^{E}\rightarrow \bF_{2}$, let $a^{(1)},\dots,a^{(r)}\in \bF_{2}^{E}$ be the unique vectors such that 
\[f^{(i)}(x)=\sum_{v\in E}a^{(i)}_{v}x_{v}\qquad\text{for all }i\in [r]\text{ and }x\in \bF_{2}^{E}.\]
If we have
\[
\sum_{i=1}^{r}\mathrm{NAE}_{3}\bigl(a^{(i)}_{v_{1}},a_{v_{2}}^{(i)},a_{v_{3}}^{(i)}\bigr)=1\qquad\text{for all }\{v_{1},v_{2},v_{3}\}\in \operatorname{Triples}(E),
\]
then the linear functionals $f^{(1)},\dots,f^{(r)}$ are said to be \emph{jointly \textup{NAE}-satisfying}. In the case $r=1$, then the functional $f^{(1)}$ is simply said to be \emph{\textup{NAE}-satisfying}.
\end{definition}

We can now translate the ``2-colorability'' of a 3-uniform hypergraph $H$ into the language of NAE-satisfaction. For the sake of future reference, the following proposition is stated for the $k$-fold disjoint union $H^{\sqcup k}$.

\begin{proposition}\label{prop:2color_NAE_linear}
Suppose $H$ is a 2-colorable 3-uniform hypergraph and $k$ is a positive integer. Then there exists a linear functional $f:\bF_{2}^{\scrV(H,k)}\rightarrow \bF_{2}$ such that for any $E\in \scrE(H,k)$, the restriction $\left.f\right|_{\bF_{2}^{E}}$ is NAE-satisfying.
\end{proposition}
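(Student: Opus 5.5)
Since $H$ is $2$-colorable, we fix a proper $2$-coloring $\chi\colon \scrV(H)\to\bF_{2}$, meaning that no hyperedge of $H$ is monochromatic under $\chi$. We lift it to the disjoint union by setting $a_{(i,\sfv)}=\chi(\sfv)$ for every $(i,\sfv)\in\scrV(H,k)=[k]\times\scrV(H)$. This gives a vector $a\in\bF_{2}^{\scrV(H,k)}$, and we define
\[
f(x)=\sum_{v\in\scrV(H,k)}a_{v}x_{v}.
\]
This $f$ is a linear functional on $\bF_{2}^{\scrV(H,k)}$, and it is chosen once, independently of $E$.

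Now fix $E\in\scrE(H,k)$. The restriction $f|_{\bF_{2}^{E}}$, where $\bF_{2}^{E}$ is viewed as a subspace of $\bF_{2}^{\scrV(H,k)}$ as in~\Cref{rem:ambient_1}, is represented in the sense of~\Cref{def:NAE_linear_functional} by the vector $a|_{E}$. The reason is that for $x$ supported on $E$ we have $f(x)=\sum_{v\in E}a_{v}x_{v}$, and the representing vector is unique.

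With $r=1$, we must check that $\mathrm{NAE}_{3}(a_{v_{1}},a_{v_{2}},a_{v_{3}})=1$ for every triple $\{v_{1},v_{2},v_{3}\}\in\operatorname{Triples}(E)$. By~\Cref{def:VHk_and_EHk}, each such triple lies in some $\scrE_{i}(H,k)$, so it has the form $\{(i,\sfv_{1}),(i,\sfv_{2}),(i,\sfv_{3})\}$ with $\{\sfv_{1},\sfv_{2},\sfv_{3}\}\in\scrE(H)$. Hence $(a_{v_{1}},a_{v_{2}},a_{v_{3}})=(\chi(\sfv_{1}),\chi(\sfv_{2}),\chi(\sfv_{3}))$, which is not all equal because $\chi$ is proper. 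By the remark after~\Cref{def:NAE_predicate}, $\mathrm{NAE}_{3}$ evaluates to $1$ on it.

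The only point needing care is that the restricted functional is represented by $a|_{E}$. There is no real obstacle: the argument is a direct unwinding of the definitions.
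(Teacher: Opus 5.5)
Your proposal is correct and follows essentially the same route as the paper: take a proper $2$-coloring of $H^{\sqcup k}$ (which you build explicitly by copying a coloring $\chi$ of $H$ to each layer, where the paper simply notes that $H^{\sqcup k}$ is $2$-colorable), use it as the coefficient vector of $f$, and observe that every triple of $E$ is a hyperedge of $H^{\sqcup k}$ and hence not monochromatic. Your extra check that $f|_{\bF_{2}^{E}}$ is represented by $a|_{E}$ is fine and just makes explicit a step the paper leaves implicit.
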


\begin{proof}
Since $H$ is 2-colorable, so is the $k$-fold vertex-disjoint union $H^{\sqcup k}$. Let $\varphi:\scrV(H,k)\rightarrow \bF_{2}$ be a proper 2-coloring of $H^{\sqcup k}$.  Let $f:\bF_{2}^{\scrV(H,k)}\rightarrow \bF_{2}$ be the linear functional defined by
\[
f(x)=\sum_{v\in \scrV(H,k)}\varphi(v)x_{v}\qquad\text{for all }x\in \bF_{2}^{\scrV(H,k)}.
\]
For any $E\in \scrE(H,k)$ and any $\{v_{1},v_{2},v_{3}\}\in \operatorname{Triples}(E)$, since $\{v_{1},v_{2},v_{3}\}$ is an edge of $H^{\sqcup k}$, by definition we know that $\varphi(v_{1}),\varphi(v_{2}),\varphi(v_{3})$ are not all equal. Therefore, it follows from~\Cref{def:NAE_linear_functional} that $\left.f\right|_{\bF_{2}^{E}}$ is NAE-satisfying.
\end{proof}

\subsubsection{Linear Functionals vs. Bilinear Forms}\label{subsubsec:NAE_linear_bilinear}

In later stages of the outer PCP construction, an assignment to variables in $E$ will be presented as a symmetric bilinear map $f:\mathbb{F}_2^{E}\otimes \mathbb{F}_2^{E}\rightarrow \bF_{2}$. The \textup{NAE}-satisfaction notion generalizes to bilinear maps as follows.

\begin{definition}
Let $E$ be a tripled set. Given a symmetric bilinear map $f:\bF_{2}^{E}\otimes \bF_{2}^{E}\rightarrow \bF_{2}$, let $X\in \bF_{2}^{E\times E}$ be the unique matrix such that
\[f(x\otimes y)=\sum_{u,v\in E}X_{uv}x_{u}y_{v}\qquad\text{for all }x,y\in \bF_{2}^{E}.\]
If we have
\[
X_{v_{1}v_{1}}+X_{v_{2}v_{2}}+X_{v_{3}v_{3}}+X_{v_{1}v_{2}}+X_{v_{2}v_{3}}+X_{v_{3}v_{1}}=1\qquad\text{for all }\{v_{1},v_{2},v_{3}\}\in \operatorname{Triples}(E),
\]
then the bilinear map $f$ is said to be \emph{\textup{NAE}-satisfying}.
\end{definition}

In~\Cref{subsec:joint_satisfaction_to_bilinear}, we will need to decode from a low-rank symmetric bilinear map a collection of linear functionals. This process is essentially the standard ``rank-1 decomposition'' in linear algebra, which we formalize below. 

\begin{definition}\label{def:tensor_decomposition}
Let $V$ be a finite-dimensional $\bF_{2}$-vector space and let $r$ be a positive integer. Given a symmetric bilinear map $f:V\otimes V\rightarrow \bF_{2}$ and $r$ linear functionals $f^{(1)},\dots,f^{(r)}:V\rightarrow \bF_{2}$, we say that $\bigl(f^{(1)},\dots,f^{(r)}\bigr)$ is a \emph{tensor decomposition} of $f$ if the following conditions hold:
\begin{enumerate}[label=(\arabic*)]
\item For each $i\in [r]$, we have $f^{(i)}\in \operatorname{image}(T_{f})$. In other words, there exists a vector $x^{(i)}\in V$ such that $f^{(i)}(y)=f(x^{(i)}\otimes y)$ for each $y\in V$.
\item For each $x,y\in V$, we have
\begin{equation}\label{eq:def_tensor_decomposition}
f(x\otimes y)=\sum_{i=1}^{r}f^{(i)}(x)f^{(i)}(y).
\end{equation}
\end{enumerate}
\end{definition}

The following two propositions show that the \textup{NAE}-satisfaction of symmetric bilinear maps is interchangeable with the joint-\textup{NAE}-satisfaction of its tensor decompositions.

\begin{proposition}\label{prop:NAE_quadratic_to_linear}
Let $E$ be a tripled set and let $r$ be a positive integer. Suppose $f:\bF_{2}^{E}\otimes \bF_{2}^{E}\rightarrow \bF_{2}$ is an \textup{NAE}-satisfying symmetric bilinear map. Then any tensor decomposition $\bigl(f^{(1)},\dots,f^{(r)}\bigr)$ of $f$ is jointly \textup{NAE}-satisfying.
\end{proposition}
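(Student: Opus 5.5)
The plan is a direct computation: write $f$ and each $f^{(i)}$ in coordinates, then compare the NAE expressions term by term.

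Let $X\in\bF_2^{E\times E}$ be the symmetric matrix representing $f$. Let $a^{(i)}\in\bF_2^{E}$ be the vector representing $f^{(i)}$, so that $f^{(i)}(x)=\sum_{v}a^{(i)}_v x_v$. Apply \eqref{eq:def_tensor_decomposition} to the standard basis vectors $x=e_u$, $y=e_v$. This gives the entrywise identity
\[
X_{uv}=f(e_u\otimes e_v)=\sum_{i=1}^{r}a^{(i)}_u a^{(i)}_v\qquad\text{for all }u,v\in E.
\]
In particular, the diagonal satisfies $X_{vv}=\sum_i (a^{(i)}_v)^2=\sum_i a^{(i)}_v$, because $z^2=z$ in $\bF_2$. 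Condition (1) of \Cref{def:tensor_decomposition} is not needed for this argument; only the identity \eqref{eq:def_tensor_decomposition} is used.

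Now fix a triple $\{v_1,v_2,v_3\}\in\operatorname{Triples}(E)$ and substitute these expressions into the NAE-satisfaction condition for $f$. Summing over $i$ and regrouping, we get
\[
X_{v_1v_1}+X_{v_2v_2}+X_{v_3v_3}+X_{v_1v_2}+X_{v_2v_3}+X_{v_3v_1}
=\sum_{i=1}^{r}\Bigl(a^{(i)}_{v_1}+a^{(i)}_{v_2}+a^{(i)}_{v_3}+a^{(i)}_{v_1}a^{(i)}_{v_2}+a^{(i)}_{v_2}a^{(i)}_{v_3}+a^{(i)}_{v_3}a^{(i)}_{v_1}\Bigr).
\]
By \Cref{def:NAE_predicate}, the right-hand side is exactly $\sum_{i=1}^{r}\mathrm{NAE}_3\bigl(a^{(i)}_{v_1},a^{(i)}_{v_2},a^{(i)}_{v_3}\bigr)$. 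Since $f$ is NAE-satisfying, the left-hand side equals $1$, and therefore so does the right-hand side. This holds for every triple, which is precisely joint NAE-satisfaction in the sense of \Cref{def:NAE_linear_functional}.

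There is no real obstacle here. The only point that needs care is the diagonal step: the linear terms of $\mathrm{NAE}_3$ are matched with the diagonal entries of $X$ using $z^2=z$ over $\bF_2$. The off-diagonal terms match directly, since each pair $\{v_j,v_{j'}\}$ appears exactly once in both expressions.
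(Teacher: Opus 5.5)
Your proof is correct and takes essentially the same route as the paper. Both arguments read off $X_{uv}=\sum_{i} a^{(i)}_u a^{(i)}_v$ from the decomposition identity, then match the NAE condition for $f$ term by term with $\sum_i \mathrm{NAE}_3\bigl(a^{(i)}_{v_1},a^{(i)}_{v_2},a^{(i)}_{v_3}\bigr)$, using $z^2=z$ on the diagonal (a step the paper leaves implicit); your remark that condition (1) of the tensor decomposition definition is never used is also accurate.
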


\begin{proof}
Let $a^{(1)},\dots,a^{(r)}\in \bF_{2}^{E}$ be the unique vectors such that 
\[f^{(i)}(x)=\sum_{v\in E}a^{(i)}_{v}x_{v}\qquad\text{for all }i\in [r]\text{ and }x\in \bF_{2}^{E}.\]
Let $X\in \bF_{2}^{E\times E}$ be the matrix defined by 
\[
X_{uv}=\sum_{i=1}^{r}a^{(i)}_{u}a^{(i)}_{v}\qquad\text{for all }u,v\in E.
\]
Plugging the expansion of $f^{(i)}$ into~\eqref{eq:def_tensor_decomposition} and changing the order of summation, we get that
\[
f(x\otimes y)=\sum_{u,v\in E}X_{uv}x_{u}y_{v}\qquad\text{for all }x,y\in \bF_{2}^{E}.
\]
Since $f$ is \textup{NAE}-satisfying, for each triple $\{v_{1},v_{2},v_{3}\}\in \operatorname{Triples}(E)$ we must have
\[
X_{v_{1}v_{1}}+X_{v_{2}v_{2}}+X_{v_{3}v_{3}}+X_{v_{1}v_{2}}+X_{v_{2}v_{3}}+X_{v_{3}v_{1}}=1,
\]
and hence
\[
\sum_{i=1}^{r}\left(a_{v_{1}}^{(i)}a_{v_{1}}^{(i)}+a_{v_{2}}^{(i)}a_{v_{2}}^{(i)}+a_{v_{3}}^{(i)}a_{v_{3}}^{(i)}+a_{v_{1}}^{(i)}a_{v_{2}}^{(i)}+a_{v_{2}}^{(i)}a_{v_{3}}^{(i)}+a_{v_{3}}^{(i)}a_{v_{1}}^{(i)}\right)=1.
\]
By~\Cref{def:NAE_predicate} we thus have
\[
\sum\limits_{i=1}^{r}\mathrm{NAE}_{3}\bigl(a^{(i)}_{v_{1}},a_{v_{2}}^{(i)},a_{v_{3}}^{(i)}\bigr)=1\qquad\text{for all }\{v_{1},v_{2},v_{3}\}\in \operatorname{Triples}(E),
\]
and thus $f^{(1)},\dots,f^{(r)}$ are jointly \textup{NAE}-satisfying.
\end{proof}

\begin{proposition}\label{prop:NAE_linear_to_quadratic}
Let $E$ be a tripled set and let $r$ be a positive integer. Suppose $f:\bF_{2}^{E}\otimes \bF_{2}^{E}\rightarrow \bF_{2}$ is a symmetric bilinear map and $f^{(1)},\dots,f^{(r)}:\bF_{2}^{E}\rightarrow \bF_{2}$ are jointly \textup{NAE}-satisfying linear functionals. If \eqref{eq:def_tensor_decomposition} holds for all $x,y\in \bF_{2}^{E}$, then $f$ is \textup{NAE}-satisfying.
\end{proposition}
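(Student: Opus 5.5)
This is the converse of \Cref{prop:NAE_quadratic_to_linear}, and the plan is to run that computation backwards. The argument is a direct coordinate calculation, and condition (1) of \Cref{def:tensor_decomposition} is not needed: we only use the identity \eqref{eq:def_tensor_decomposition}.

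\emph{Step 1 (coordinates).} Let $a^{(1)},\dots,a^{(r)}\in \bF_{2}^{E}$ be the coefficient vectors of $f^{(1)},\dots,f^{(r)}$, so that $f^{(i)}(x)=\sum_{v}a^{(i)}_{v}x_{v}$. Substituting this into \eqref{eq:def_tensor_decomposition} and swapping the order of summation gives
\[
f(x\otimes y)=\sum_{u,v\in E}\Bigl(\sum_{i=1}^{r}a^{(i)}_{u}a^{(i)}_{v}\Bigr)x_{u}y_{v}.
\]
Evaluating at $x=e_u$ and $y=e_v$ shows that the matrix $X$ representing $f$ is uniquely determined, namely $X_{uv}=\sum_{i}a^{(i)}_{u}a^{(i)}_{v}$.

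\emph{Step 2 (check each triple).} Fix a triple $\{v_{1},v_{2},v_{3}\}\in\operatorname{Triples}(E)$. Substituting the formula for $X$ gives
\[
X_{v_{1}v_{1}}+X_{v_{2}v_{2}}+X_{v_{3}v_{3}}+X_{v_{1}v_{2}}+X_{v_{2}v_{3}}+X_{v_{3}v_{1}}=\sum_{i=1}^{r}\Bigl(\bigl(a^{(i)}_{v_{1}}\bigr)^{2}+\bigl(a^{(i)}_{v_{2}}\bigr)^{2}+\bigl(a^{(i)}_{v_{3}}\bigr)^{2}+a^{(i)}_{v_{1}}a^{(i)}_{v_{2}}+a^{(i)}_{v_{2}}a^{(i)}_{v_{3}}+a^{(i)}_{v_{3}}a^{(i)}_{v_{1}}\Bigr).
\]
Over $\bF_{2}$ we have $a^{2}=a$, so each summand equals $\mathrm{NAE}_{3}\bigl(a^{(i)}_{v_{1}},a^{(i)}_{v_{2}},a^{(i)}_{v_{3}}\bigr)$ by \Cref{def:NAE_predicate}. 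The sum therefore equals $1$ by joint NAE-satisfaction (\Cref{def:NAE_linear_functional}), and hence $f$ is NAE-satisfying.

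There is no real obstacle. The only point needing care is the observation in Step 1 that $f$ has a unique representing matrix, so the matrix computed from the decomposition is the one used in the definition of NAE-satisfaction.
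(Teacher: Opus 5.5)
Your proposal is correct and is essentially the paper's argument. The paper omits this proof, saying only that it follows the same lines as \Cref{prop:NAE_quadratic_to_linear}, and your computation is that argument run in reverse: the representing matrix is $X_{uv}=\sum_i a^{(i)}_u a^{(i)}_v$, and using $a^2=a$ over $\bF_2$ each triple's sum becomes $\sum_i \mathrm{NAE}_3\bigl(a^{(i)}_{v_1},a^{(i)}_{v_2},a^{(i)}_{v_3}\bigr)=1$.
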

\begin{proof}
  The proof follows the same lines as in~\Cref{prop:NAE_quadratic_to_linear}, and we omit the details.
\end{proof}

The following straightforward corollary of~\Cref{prop:NAE_linear_to_quadratic} will be used in~\Cref{lem:Game2a_soundness}. 

\begin{corollary}\label{cor:NAE_rank1_exists}
Let $E$ be a nonempty tripled set. There exists an \textup{NAE}-satisfying symmetric bilinear map $f:\bF_{2}^{E}\otimes \bF_{2}^{E}\rightarrow \bF_{2}$ of rank 1.
\end{corollary}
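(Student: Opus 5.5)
We will construct the desired map directly as the square of a single NAE-satisfying linear functional, and then invoke \Cref{prop:NAE_linear_to_quadratic} with $r=1$.

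\textbf{Step 1: choose the vector.} Define $a\in\bF_{2}^{E}$ on each triple $\{v_{1},v_{2},v_{3}\}\in\operatorname{Triples}(E)$ by $a_{v_{1}}=1$ and $a_{v_{2}}=a_{v_{3}}=0$, with an arbitrary choice of which element is labelled $v_{1}$. Since the triples partition $E$, this defines $a$ on all of $E$. On every triple the values of $a$ are not all equal, so \Cref{def:NAE_predicate} gives $\operatorname{NAE}_{3}(a_{v_{1}},a_{v_{2}},a_{v_{3}})=1$. Hence the functional $f^{(1)}(x)=\sum_{v}a_{v}x_{v}$ is NAE-satisfying in the sense of \Cref{def:NAE_linear_functional} with $r=1$.

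\textbf{Step 2: form the bilinear map.} Set $f(x\otimes y)=f^{(1)}(x)f^{(1)}(y)$. This is clearly a symmetric bilinear map, and \eqref{eq:def_tensor_decomposition} holds with $r=1$ by construction. \Cref{prop:NAE_linear_to_quadratic} then shows that $f$ is NAE-satisfying.

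\textbf{Step 3: check the rank.} The associated matrix is $X=a a^{\top}$. Because $E$ is nonempty, it has at least one triple, so $a\neq 0$ and therefore $X\neq 0$. A nonzero matrix of the form $aa^{\top}$ has rank exactly $1$, so $\operatorname{rank}(f)=1$.

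None of these steps is a real obstacle. The one point needing care is that the rank is exactly $1$ rather than $0$, and this is precisely where the nonemptiness of $E$ is used.
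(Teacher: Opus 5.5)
Your proposal is correct and follows the same route as the paper: take a nonzero NAE-satisfying linear functional, square it to obtain a rank-$1$ symmetric bilinear form, and apply \Cref{prop:NAE_linear_to_quadratic} with $r=1$. You also spell out two points the paper leaves implicit: the explicit functional (a single $1$ in each triple), and the check that the rank is exactly $1$, which is where nonemptiness of $E$ is used.
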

\begin{proof}
There clearly exists a nonzero linear functional $f^{(1)}:\bF_{2}^{E}\rightarrow \bF_{2}$ that is \textup{NAE}-satisfying (by itself). We can then let $f:\bF_{2}^{E}\otimes \bF_{2}^{E}\rightarrow \bF_{2}$ be the rank-$1$ bilinear map defined by $f(x\otimes y)=f^{(1)}(x)f^{(1)}(y)$ for all $x,y\in \bF_{2}^{E}$, and apply~\Cref{prop:NAE_linear_to_quadratic} to show that $f$ is \textup{NAE}-satisfying.
\end{proof}

Combining~\Cref{prop:2color_NAE_linear,prop:NAE_linear_to_quadratic} yields the following ``bilinear'' version of~\Cref{prop:2color_NAE_linear}.

\begin{corollary}\label{cor:2color_NAE_quad}
Suppose $H$ is a $2$-colorable $3$-uniform hypergraph and $k$ is a positive integer. Then there exists a rank-$1$ symmetric bilinear form $f:\bF_{2}^{\scrV(H,k)}\otimes \bF_{2}^{\scrV(H,k)}\rightarrow \bF_{2}$ such that for any $E\in \scrE(H,k)$, the restriction $\left.f\right|_{\bF_{2}^{E}\otimes \bF_{2}^{E}}$ is \textup{NAE}-satisfying.
\end{corollary}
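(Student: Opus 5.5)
The plan is to combine \Cref{prop:2color_NAE_linear} with \Cref{prop:NAE_linear_to_quadratic}, applied blockwise on each $E$. First, invoke \Cref{prop:2color_NAE_linear} to obtain a linear functional $g:\bF_{2}^{\scrV(H,k)}\rightarrow \bF_{2}$ such that $g|_{\bF_{2}^{E}}$ is NAE-satisfying for every $E\in\scrE(H,k)$. Then define the candidate bilinear form
\[
f(x\otimes y)=g(x)g(y)\qquad\text{for all }x,y\in \bF_{2}^{\scrV(H,k)}.
\]

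Next I would check that $f$ has the required algebraic properties. It is clearly symmetric and bilinear. Its rank is $1$, because $T_{f}(x)=g(x)\,g$, so $\operatorname{image}(T_{f})=\operatorname{span}(g)$, which is one-dimensional as long as $g\neq 0$. To see that $g\neq 0$: $H$ has at least one edge (the case of interest), and a proper $2$-coloring cannot assign color $0$ to all three vertices of an edge. So $g$ is nonzero on the standard basis vector of some vertex.

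Now fix $E\in\scrE(H,k)$ and restrict to $\bF_{2}^{E}$. The restriction $f|_{\bF_{2}^{E}\otimes\bF_{2}^{E}}$ is the symmetric bilinear map $(x,y)\mapsto g|_{\bF_{2}^{E}}(x)\,g|_{\bF_{2}^{E}}(y)$. Hence identity \eqref{eq:def_tensor_decomposition} holds with $r=1$ and $f^{(1)}=g|_{\bF_{2}^{E}}$. Since $f^{(1)}$ is NAE-satisfying, which is exactly joint NAE-satisfaction with $r=1$, \Cref{prop:NAE_linear_to_quadratic} gives that the restriction is NAE-satisfying.

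No step here is a real obstacle; the only point needing care is that ``rank-$1$'' requires $g\neq0$. That follows from $H$ having an edge, since otherwise no $E$ exists and the statement is vacuous.
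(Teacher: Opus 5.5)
Your proposal is correct and follows the paper's argument exactly: take the functional from \Cref{prop:2color_NAE_linear}, form the rank-$1$ symmetric form $g(x)g(y)$, and apply \Cref{prop:NAE_linear_to_quadratic} on each $\bF_{2}^{E}$. Your check that $g\neq 0$ is a detail the paper leaves implicit. One small quibble: when $H$ has no edges the rank-$1$ requirement is not vacuous, but then any nonzero functional works.
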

\begin{proof}
By~\Cref{prop:2color_NAE_linear}, there is a (nonzero) linear functional $f^{(1)}:\bF_{2}^{\scrV(H,k)}\rightarrow \bF_{2}$ such that $\left.f^{(1)}\right|_{\bF_{2}^{E}}$ is \textup{NAE}-satisfying for every $E\in \scrE(H,k)$. We can then let \[f:\bF_{2}^{\scrV(H,k)}\otimes \bF_{2}^{\scrV(H,k)}\rightarrow \bF_{2}\] be the rank-$1$ symmetric bilinear map defined by $f(x\otimes y)=f^{(1)}(x)f^{(1)}(y)$ for all $x,y\in \bF_{2}^{\scrV(H,k)}$, and apply~\Cref{prop:NAE_linear_to_quadratic} to show that $\left.f\right|_{\bF_{2}^{E}\otimes \bF_{2}^{E}}$ is \textup{NAE}-satisfying for each $E\in\scrE(H,k)$.
\end{proof} 

\subsection{The Equation vs.~Variable Game}\label{subsec:equation_vs_variable}

We are now ready to start the construction of our outer PCP. The first version of our outer PCP is defined as follows (cf.~\Cref{sec:outer_pcp_imperfect}). 

\begin{definition}\label{def:game1a}
Given positive integers $k,r$, nonnegative integers $q_{1},q_{2}$, a parameter $\beta\in [0,1]$ and a 3-uniform hypergraph $H$, let $\GameAa[H,k,q_1,q_2,r,\beta]$ be the $2$-prover $1$-round game where the verifier executes the following:
\begin{enumerate}
\item Sample a uniformly random tripled set $E\in\scrE(H,k)$.

\item Sample $(E',A_{\inn},A_{\midd})\sim\cJ_{\outt}(E,q_{1},q_{2},\beta)$. Send $(E,A_{\inn},A_{\midd})$ to Alice\footnote{In Alice's view, $A_{\inn}$ and $A_{\midd}$ are sequences of vectors in $\bF_{2}^{E}$, instead of in $\bF_{2}^{E'}$; it is important that Alice does not get to see $E'$.} and $(E',A_{\inn},A_{\midd})$ to Bob. 

  \item Alice must return $r$ linear functionals $f^{(1)},\dots,f^{(r)}: \bF_{2}^{E}\rightarrow \bF_{2}$ that are jointly \textup{NAE}-satisfying; Bob must return $r$ linear functionals $g^{(1)},\dots,g^{(r)}:\bF_{2}^{E'}\rightarrow \bF_{2}$. Accept if 
  \[\left.f^{(i)}\right|_{\bF_{2}^{E'}}=g^{(i)}\qquad\text{for all }i\in [r].\]
\end{enumerate}
\end{definition}
\begin{remark}\label{rem:k-fold-repetition-view}
We remark that the game $\GameAa[H,k,q_{1},q_{2},r,\beta]$ can be viewed as the $k$-fold parallel repetition of $\GameAa[H,1,q_{1},q_{2},r,\beta]$, and the game $\GameAa[H,1,0,0,r,1]$ is the equation-vs.-variable game corresponding to the result of~\Cref{thm:hg_coloring}.
\end{remark}

We now analyze the completeness and the soundness of $\GameAa$. 

\begin{lemma}[Game1a completeness]\label{lem:Game1a_completeness}
For any positive integer $k$, nonnegative integers $q_{1},q_{2}$, parameter $\beta\in [0,1]$ and $2$-colorable $3$-uniform hypergraph $H$, we have
\[
\operatorname{val}\Bigl(\GameAa[H,k,q_{1},q_{2},1,\beta]\Bigr)=1.
\]
\end{lemma}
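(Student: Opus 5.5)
The plan is to exhibit an explicit honest strategy built from a global proper $2$-coloring, and to check that it wins on every question. Since $H$ is $2$-colorable, \Cref{prop:2color_NAE_linear} gives a linear functional $f:\bF_{2}^{\scrV(H,k)}\rightarrow \bF_{2}$ such that $f|_{\bF_{2}^{E}}$ is NAE-satisfying for every $E\in\scrE(H,k)$. Here $\bF_{2}^{E}$ and $\bF_{2}^{E'}$ are viewed as coordinate subspaces of $\bF_{2}^{\scrV(H,k)}$, in the spirit of \Cref{rem:ambient_1}.

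With $r=1$, the strategies are defined as follows.
\begin{enumerate}[label=(\arabic*)]
\item On question $(E,A_{\inn},A_{\midd})$, Alice ignores the advice and answers $f^{(1)}=f|_{\bF_{2}^{E}}$.
\item On question $(E',A_{\inn},A_{\midd})$, Bob ignores the advice and answers $g^{(1)}=f|_{\bF_{2}^{E'}}$.
\end{enumerate}
Both answers are well defined because they depend only on what each prover sees. Alice's answer is a legal answer because $f|_{\bF_{2}^{E}}$ is NAE-satisfying, and for $r=1$ this is exactly joint NAE-satisfaction (\Cref{def:NAE_linear_functional}).

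It remains to check acceptance. Every question pair in the support of the verifier's distribution has $E'\subseteq E$, since $E'\sim\operatorname{Thin}_{\beta}(E)$. Therefore $\bF_{2}^{E'}\subseteq\bF_{2}^{E}$, and
\[
f^{(1)}|_{\bF_{2}^{E'}}=\bigl(f|_{\bF_{2}^{E}}\bigr)|_{\bF_{2}^{E'}}=f|_{\bF_{2}^{E'}}=g^{(1)}.
\]
So the verifier accepts with probability $1$ for any $\beta$, $q_1$, $q_2$, and the value of $\GameAa[H,k,q_{1},q_{2},1,\beta]$ is $1$.

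There is no real obstacle here. The one point that needs care is that Bob's answer must be a function of his own question alone. This holds because his answer is the restriction of a single global $f$ to $\bF_{2}^{E'}$, and so never depends on $E$.
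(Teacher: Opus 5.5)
Your proposal is correct and matches the paper's proof: both obtain a global functional $f$ from \Cref{prop:2color_NAE_linear}, have Alice answer $f|_{\bF_{2}^{E}}$ and Bob answer $f|_{\bF_{2}^{E'}}$, and conclude consistency from $E'\subseteq E$.
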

\begin{proof}

We obtain from~\Cref{prop:2color_NAE_linear} a linear functional $f:\bF_{2}^{\scrV(H,k)}\rightarrow \bF_{2}$, whose restriction on $\bF_{2}^{E}$ is \textup{NAE}-satisfying for any $E\in \scrE(H,k)$. We can then construct a strategy for Alice in the game $\GameAa[H,k,q_{1},q_{2},1,\beta]$, as follows:
\begin{enumerate}
\item Receive the question $(E,A_{\inn},A_{\midd})$.
\item Return the functional $\left. f\right|_{\bF_{2}^{E}}$, which is guaranteed by~\Cref{prop:2color_NAE_linear} to be \textup{NAE}-satisfying.
\end{enumerate}
Similarly, the strategy for Bob is:
\begin{enumerate}
\item Receive the question $(E',A_{\inn},A_{\midd})$.
\item Return the functional $\left. f\right|_{\bF_{2}^{E'}}$.
\end{enumerate}
It is clear that Bob's answer is always consistent with Alice's answer. Therefore, if Alice and Bob operate under such strategies, the verifier in the game $\GameAa[H,k,q_{1},q_{2},1,\beta]$ accepts with probability 1.
\end{proof}

To show soundness of $\GameAa[H,k,q_{1},q_{2},r,\beta]$, we start by analyzing the soundness of the basic game $\GameAa[H,1,0,0,r,1]$, which is effectively the standard equation-versus-variable game with ``superpositions.''
\begin{lemma}\label{lem:Game1a_soundness_no_advice}
Fix a positive integer $r$ and a parameter $\varepsilon\in (0,1)$. Suppose $H$ is a 3-uniform hypergraph such that in every $2^{r}$-coloring of $H$, at least $\varepsilon$ fraction of the edges are monochromatic. Then we have
\[
\operatorname{val}\Bigl(\GameAa[H,1,0,0,r,1]\Bigr)\leq 1-\frac{\varepsilon}{3}.
\]
\end{lemma}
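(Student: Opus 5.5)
Our plan is to take an arbitrary pair of strategies for $\GameAa[H,1,0,0,r,1]$, use Bob's answers to build a $2^{r}$-coloring of $H$, and show that every monochromatic edge forces a rejection with probability at least $1/3$.

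First we unpack the game. With $k=1$, $q_{1}=q_{2}=0$ and $\beta=1$, Alice receives a single hyperedge $E=\{\sfv_{1},\sfv_{2},\sfv_{3}\}$ of $H$, drawn uniformly. There is no advice. Bob receives $E'=\{\sfv_{j}\}$ for a uniformly random $j\in[3]$, since with $\beta=1$ the thinning always keeps exactly one element.

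A linear functional on $\bF_{2}^{\{\sfv\}}$ is just a bit. Hence Bob's strategy assigns to each vertex $\sfv$ a vector $\varphi(\sfv)=\bigl(g^{(1)}_{\sfv},\dots,g^{(r)}_{\sfv}\bigr)\in\bF_{2}^{r}$, where $g^{(i)}_{\sfv}$ is the value of his $i$-th functional on the unit vector $e_{\sfv}$. It suffices to consider deterministic strategies, because the value is a maximum over strategies and randomized ones are convex combinations of deterministic ones. We regard $\varphi$ as a $2^{r}$-coloring of $H$ (colors indexed by $\bF_{2}^{r}$).

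Next, fix an edge $E=\{\sfv_{1},\sfv_{2},\sfv_{3}\}$ that is monochromatic under $\varphi$, say $\varphi(\sfv_{1})=\varphi(\sfv_{2})=\varphi(\sfv_{3})=c$. Alice's answer $f^{(1)},\dots,f^{(r)}$ corresponds to vectors $a^{(1)},\dots,a^{(r)}\in\bF_{2}^{E}$. Suppose Alice were consistent with Bob on all three possible $E'$. Then $a^{(i)}_{\sfv_{j}}=c_{i}$ for every $i$ and $j$. In that case $\mathrm{NAE}_{3}\bigl(a^{(i)}_{\sfv_{1}},a^{(i)}_{\sfv_{2}},a^{(i)}_{\sfv_{3}}\bigr)=\mathrm{NAE}_{3}(c_{i},c_{i},c_{i})=0$ for each $i$. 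This follows from the formula in \Cref{def:NAE_predicate}: it gives $3c_{i}+3c_{i}^{2}=c_{i}+c_{i}=0$ over $\bF_{2}$. So the sum over $i$ is $0\neq 1$, and Alice's answer would not be jointly NAE-satisfying, which is a contradiction. Therefore, for any legal answer of Alice, at least one $j\in[3]$ has $a_{\sfv_{j}}\neq\varphi(\sfv_{j})$. Conditioned on this $E$, the verifier rejects with probability at least $1/3$ over Bob's uniform choice of $j$.

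Finally, by hypothesis the coloring $\varphi$ makes at least an $\varepsilon$ fraction of edges monochromatic. Since $E$ is uniform over edges, the rejection probability is at least $\varepsilon/3$, giving value at most $1-\varepsilon/3$.

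The only step needing care is the observation that the answers are determined pointwise, so a single Bob strategy yields one coloring. There is no real obstacle; the argument is a direct translation of Bob's strategy into a coloring.
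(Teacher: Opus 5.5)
Your proposal is correct and follows the paper's proof. Like the paper, you aggregate Bob's single-vertex answers into a coloring $\varphi\colon V\to\bF_{2}^{r}$. You then observe that on a monochromatic edge no jointly NAE-satisfying answer of Alice can agree with $\varphi$ on all three vertices, and conclude that the verifier rejects with probability at least $\tfrac{1}{3}\cdot\varepsilon$.
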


\begin{proof}
Let $V$ be the vertex set of $H^{\sqcup 1}$ (which is isomorphic to $H$). 

We fix an arbitrary strategy for Alice and Bob in the game $\GameAa[H,1,0,0,r,1]$, and we will prove that the verifier rejects with probability at least $\varepsilon/3$. Since there are no advice vectors in this game, the question for Alice is simply a random edge of $H^{\sqcup 1}$; and as the smoothness parameter is set to 1, the question for Bob will be a single vertex.

Our goal is to define a $2^{r}$-coloring map $\varphi:V\rightarrow\bF_{2}^{r}$. For each vertex $v\in V$, the strategy for Bob must assign an answer to the question $\{v\}$, and the answer must be $r$ linear functionals \[g_{v}^{(1)},\dots,g_{v}^{(r)}:\bF_{2}^{\{v\}}\rightarrow \bF_{2}.\] 
Aggregating Bob's answers for all $v\in V$ yields $r$ linear functionals
\[
g^{(1)},\dots,g^{(r)}:\bF_{2}^{V}\rightarrow \bF_{2}.
\]
We can then let $\varphi:V\rightarrow \bF_{2}^{r}$ be the unique assignment map such that
\begin{equation}\label{eq:game1a_soundness_g_varphi}
\left(g^{(1)}(x),g^{(2)}(x),\dots,g^{(r)}(x)\right)=\sum_{v\in V}\varphi(v)x_{v}\qquad\text{for all }x\in \bF_{2}^{V}.
\end{equation}

By the assumption of the lemma, for at least $\varepsilon$ fraction of the edges $\{v_{1},v_{2},v_{3}\}\in \scrE(H,1)$, we have $\varphi(v_{1})=\varphi(v_{2})=\varphi(v_{3})$. Fix such an edge $E=\{v_{1},v_{2},v_{3}\}$, and let $f^{(1)},\dots,f^{(r)}:\bF_{2}^{E}\rightarrow \bF_{2}$ be Alice's answer to the question $E$. By~\Cref{def:game1a}, $f^{(1)},\dots,f^{(r)}$ must be jointly NAE-satisfying. However,  $\varphi(v_{1})=\varphi(v_{2})=\varphi(v_{3})$ implies (via~\Cref{def:NAE_linear_functional} and~\Cref{eq:game1a_soundness_g_varphi}) that
\[
\left.g^{(1)}\right|_{\bF_{2}^{E}},\left.g^{(2)}\right|_{\bF_{2}^{E}},\dots,\left.g^{(r)}\right|_{\bF_{2}^{E}}
\]
are not jointly NAE-satisfying. Therefore, there exists a vertex $v\in \{v_{1},v_{2},v_{3}\}$ with
\begin{equation}\label{eq:game_1a_soundness}
\left(\left.f^{(1)}\right|_{\bF_{2}^{\{v\}}},\left.f^{(2)}\right|_{\bF_{2}^{\{v\}}},\dots,\left.f^{(r)}\right|_{\bF_{2}^{\{v\}}}\right)\neq \left(g_{v}^{(1)},g^{(2)}_{v},\dots,g^{(r)}_{v}\right).
\end{equation}
Conditioned on the event that Alice receives the question $E$, with probability $1/3$ the question for Bob is $\{v\}$, in which case their answers are not consistent due to \eqref{eq:game_1a_soundness}. Since this is true for at least $\varepsilon$-fraction of the edges $E\in \scrE(H,1)$, the verifier must reject with probability at least $\varepsilon/3$.
\end{proof}

Next, we incorporate the advice as well as constantly many repetitions, and show that the value does not increase by too much:
\begin{lemma}\label{lem:Game1a_soundness_first_rep}
Fix positive integers $k,r$, nonnegative integers $q_{1},q_{2}$ and parameters $\beta,\varepsilon\in (0,1)$ such that
\(k\geq 2^{q_{1}+q_{2}}\beta^{-1}\).
Suppose $H$ is a 3-uniform hypergraph such that in every $2^{r}$-coloring of $H$, at least $\varepsilon$ fraction of the edges are monochromatic. Then we have
\[
\operatorname{val}\Bigl(\GameAa[H,k,q_{1},q_{2},r,\beta]\Bigr)\leq 1-\frac{\varepsilon}{6}.
\]
\end{lemma}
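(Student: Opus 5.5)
The plan is to reduce to the basic game of \Cref{lem:Game1a_soundness_no_advice} by \emph{planting} a copy of it in one coordinate of the $k$-fold game, at a coordinate where the advice carries no information. Write $q=q_{1}+q_{2}$. The key observation is that the joint distribution of questions in $\GameAa[H,k,q_{1},q_{2},r,\beta]$ is a product over the $k$ coordinates. In coordinate $i$ it consists of the following, sampled independently of all other coordinates:
\begin{itemize}
\item the $i$-th triple $e_{i}$ of $E$;
\item Bob's part $E'_{i}$ of that triple;
\item the restrictions of all $q$ advice vectors to $E'_{i}$ (equivalently, as vectors in $\bF_{2}^{E}$, to $e_{i}$, vanishing off $E'_{i}$).
\end{itemize}
Indeed, each triple is thinned independently and the advice vectors are uniform on $\bF_{2}^{E'}$, hence have independent uniform entries on $E'$.

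For each $i\in[k]$, let $Z_{i}$ be the event that the $i$-th triple is thinned, i.e.\ $E'_{i}=\{v\}$ is a single vertex, and all $q$ advice vectors vanish at $v$. Then $\Pr{Z_{i}}=\beta 2^{-q}$, and the events $Z_{1},\dots,Z_{k}$ are independent. Conditioned on $Z_{i}$, the following hold:
\begin{itemize}
\item coordinate $i$ of the challenge is distributed exactly as in $\GameAa[H,1,0,0,r,1]$: a uniform edge $e_{i}$ for Alice and a uniform vertex $v\in e_{i}$ for Bob;
\item the advice restricted to coordinate $i$ is identically zero, so it carries no information;
\item the remaining coordinates keep their unconditioned product distribution.
\end{itemize}

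Define $i^{*}$ as the smallest index $i$ for which $Z_{i}$ occurs, if any. The event $\{i^{*}=i\}=Z_{i}\cap\bigcap_{j<i}\overline{Z_{j}}$ depends on coordinate $i$ only through $Z_{i}$. Hence, conditioned on $\{i^{*}=i\}$, coordinate $i$ is still distributed as the basic game, independently of the other coordinates, whose (conditioned) distribution is some fixed product distribution.

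Now fix any strategy for the $k$-fold game. I will show $\Pr{\text{reject}\mid i^{*}=i}\geq \varepsilon/3$ for each $i$ by building a strategy for $\GameAa[H,1,0,0,r,1]$:
\begin{enumerate}
\item Using shared randomness, the provers sample the other coordinates (triples, thinnings, advice) from the conditioned distribution given $\{i^{*}=i\}$.
\item Each prover places their basic-game question in coordinate $i$, with zero advice there, and feeds the resulting $k$-fold question to the given strategy. This is possible because Alice only needs her edge and Bob only needs his vertex, and the zero advice requires no knowledge of $v$.
\item Each prover restricts the $r$ returned functionals to coordinate $i$.
\end{enumerate}
Restrictions of jointly NAE-satisfying functionals on $\bF_{2}^{E}$ to the $i$-th triple are jointly NAE-satisfying, since the condition in \Cref{def:NAE_linear_functional} is triple-wise. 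Moreover, acceptance in the $k$-fold game implies consistency on $\bF_{2}^{E'_{i}}$. Hence the $k$-fold game rejects at least as often as the embedded basic game, which by \Cref{lem:Game1a_soundness_no_advice} rejects with probability at least $\varepsilon/3$.

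Averaging over $i$, we get $\Pr{\text{reject}}\geq (\varepsilon/3)\Pr{\bigcup_{i}Z_{i}}$. By independence of the $Z_{i}$,
\[
\Pr{\textstyle\bigcup_{i}Z_{i}}=1-(1-\beta2^{-q})^{k}\geq 1-e^{-1}\geq 1/2,
\]
where the first inequality uses $k\geq 2^{q}\beta^{-1}$. This yields value at most $1-\varepsilon/6$.

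The only delicate step is checking that conditioning on $\{i^{*}=i\}$ leaves coordinate $i$ exactly distributed as the basic game and independent of the rest. This is exactly why I condition on zero advice at the thinned vertex: the zero advice is a deterministic value both provers can simulate without knowing $v$, whereas random advice at $v$ could not be shared consistently between Alice, who does not know $v$, and Bob.
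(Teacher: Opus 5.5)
Your proof is correct and is essentially the paper's argument. Your event $Z_{i}$ is exactly the paper's notion of a ``critical'' triple (thinned to a single vertex at which all $q$ advice vectors vanish), and the planting of $\GameAa[H,1,0,0,r,1]$ into that coordinate, together with the bound $1-(1-2^{-q}\beta)^{k}\geq 1/2$, matches the paper step for step. The only difference is cosmetic: you plant at the smallest critical index rather than at a uniformly random one, which makes the conditional independence of the planted coordinate slightly more explicit.
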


\begin{proof}
We write $q=q_{1}+q_{2}$. As in~\Cref{rem:k-fold-repetition-view}, we view $\GameAa[H,k,q_{1},q_{2},r,\beta]$ as the $k$-fold parallel repetition of the game $\GameAa[H,1,q_{1},q_{2},r,\beta]$. Let $(E,A_{\inn},A_{\midd})$ and $(E',A_{\inn},A_{\midd})$ be the questions for Alice and Bob in the $k$-fold repetition game, respectively. We write
\[
A_{\inn}=\big(a^{(1)},\dots,a^{(q_{1})}\big)\qquad\text{and}\qquad A_{\midd}=\big(a^{(q_{1}+1)},\dots,a^{(q_{1}+q_{2})}\big).
\]

We say a triple $\{v_{1},v_{2},v_{3}\}\in \operatorname{Triples}(E)$ is ``critical'' if we have
\[a^{(i)}_{v_{1}}=a^{(i)}_{v_{2}}=a^{(i)}_{v_{3}}=0\text{ for all }i\in [q]\qquad\text{and}\qquad \left|E'\cap \{v_{1},v_{2},v_{3}\}\right|=1.\]
Let $I\subseteq [k]$ be the collection of indices $i\in [k]$ such that the $i$-th triple of $E$ is critical (see~\Cref{rem:triple_ordered}). Conditioned on the event that $I\neq \emptyset$, we pick a uniformly random $i\in I$ and note that the $i$-th ``coordinate'' of the parallel repetition is a copy of the equation vs. variable game $\GameAa[H,1,0,0,r,1]$ and is independent of the remaining coordinates. Therefore,~\Cref{lem:Game1a_soundness_no_advice} implies that conditioned on $\{I\neq \emptyset\}$, the verifier rejects with probability at least $\varepsilon/3$. Since
\[
\Pr{I\neq \emptyset}=1-\left(1-2^{-q}\beta\right)^{k}\geq 1-2^{-k\cdot 2^{-q}\beta}\geq 1-2^{-1}=\frac{1}{2},
\]
we conclude that in the $k$-fold repetition game $\GameAa[H,k,q_{1},q_{2},r,\beta]$, the verifier rejects with probability at least $\frac{1}{2}\cdot \varepsilon/3=\varepsilon/6$.
\end{proof}

Finally, we apply the parallel repetition theorem for projection games by~\cite{rao2008parallel} to get the soundness of $\GameAa$.

\begin{lemma}[Game1a soundness]\label{lem:Game1a_soundness}
Assume \Cref{cond:parameters}. Suppose $H$ is a 3-uniform hypergraph such that in every $4^{r}$-coloring of $H$, at least $\varepsilon$ fraction of the edges are monochromatic. Then we have\footnote{We have $2r$ instead of $r$ as the rank parameter here, in order to prepare for applying~\Cref{lem:Game1b_soundness_low_rank} later.}
\[
\operatorname{val}\Bigl(\GameAa[H,k,q_{1},q_{2},2r,\beta]\Bigr)\leq \exp(-k^{1/5}).
\]
\end{lemma}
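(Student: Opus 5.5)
The plan is to combine \Cref{lem:Game1a_soundness_first_rep} with the parallel repetition theorem for projection games of~\cite{rao2008parallel}. Choose a block size $k_0=\lceil 2^{q_1+q_2}\beta^{-1}\rceil$. Then view $\GameAa[H,k,q_1,q_2,2r,\beta]$ as the $m$-fold parallel repetition, with $m=\lfloor k/k_0\rfloor$, of the base game $\GameAa[H,k_0,q_1,q_2,2r,\beta]$. Any leftover $k-mk_0$ coordinates can be absorbed into one block, or the value can be bounded via the sub-game on the first $mk_0$ coordinates.

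This block decomposition is legitimate because of \Cref{rem:k-fold-repetition-view}. The $k$-fold game is the coordinatewise product of single-copy games $\GameAa[H,1,q_1,q_2,2r,\beta]$. The advice vectors, the thinning, and the uniformly random tripled set all factor over coordinates, and the acceptance predicate also factors. Joint NAE-satisfaction is defined triple-by-triple, and the consistency check $f^{(i)}|_{\bF_2^{E'}}=g^{(i)}$ is coordinatewise. So grouping the coordinates into $m$ blocks of size $k_0$ gives exactly the $m$-fold repetition of the $k_0$-fold game.

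Next I check the hypotheses and the base value. The base game is a projection game: Bob's answer is determined by Alice's answer through restriction to $\bF_2^{E'}$. \Cref{lem:Game1a_soundness_first_rep} with rank parameter $2r$ needs every $2^{2r}=4^r$-coloring of $H$ to have an $\varepsilon$ fraction of monochromatic edges, which is exactly our hypothesis. It also needs $k_0\ge 2^{q_1+q_2}\beta^{-1}$, which holds by the choice of $k_0$. Hence the base game has value at most $1-\varepsilon/6$. Rao's theorem then gives
\[
\operatorname{val}\le (1-\varepsilon^2/C')^{m}\le \exp(-c\,\varepsilon^2 k/k_0)
\]
with no dependence on the alphabet size.

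The last step is the parameter check, which I expect to be the only real obstacle, though a routine one. We have $k_0\approx 2^{q_1+q_2}k^{3/4}$ since $\beta=k^{-3/4}$. Therefore
\[
\varepsilon^2 k/k_0\approx \varepsilon^2 2^{-q_1-q_2}k^{1/4}.
\]
Under \Cref{cond:parameters} we have $k\ge \varepsilon^{-C}\exp(\exp(\ell_2))$, and $\ell_2$ dominates $q_2$. Also $q_1=O(\ell_1 2^{2\ell_1})$ and $q_2\ge C\ell_1^2$ with $\ell_2\ge Cq_2^2$. So $2^{q_1+q_2}\le \exp(\exp(\ell_2))^{o(1)}\le k^{o(1)}$ and $\varepsilon^{2}\ge k^{-2/C}$. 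This yields $c\,\varepsilon^2 k/k_0\ge k^{1/4-o(1)}\ge k^{1/5}$ once $C$ is large, which gives the claimed bound $\exp(-k^{1/5})$.
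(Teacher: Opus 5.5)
Your proposal is correct and matches the paper's proof: it uses the same block size $k_0=\lceil 2^{q_1+q_2}\beta^{-1}\rceil$, \Cref{lem:Game1a_soundness_first_rep} for the base value $1-\varepsilon/6$, the projection-game parallel repetition theorem of~\cite{rao2008parallel} applied $\lfloor k/k_0\rfloor$ times via \Cref{rem:k-fold-repetition-view}, and the same parameter check against \Cref{cond:parameters}. For the leftover coordinates, use the second of your two options (bound by the game on the first $\lfloor k/k_0\rfloor k_0$ coordinates, with the provers simulating the rest using shared randomness), since absorbing them into one block would break the identical-copies hypothesis of the repetition theorem.
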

\begin{proof}
Let $k_{0}=\lceil 2^{q_{1}+q_{2}}\beta^{-1}\rceil$. By~\Cref{lem:Game1a_soundness_first_rep} we know that
\[
\operatorname{val}\Bigl(\GameAa[H,k_{0},q_{1},q_{2},2r,\beta]\Bigr)\leq 1-\frac{\varepsilon}{6}.
\]
Since $\GameAa[H,k_{0},q_{1},q_{2},2r,\beta]$ is a projection game, we may apply~\cite[Theorem 1.4]{rao2008parallel} to conclude that its $\lfloor k/k_{0}\rfloor$-fold parallel repetition has value at most $\exp(-c\varepsilon^{2}\lfloor k/k_{0}\rfloor)$, for some absolute constant $c\in (0,1)$. It is then easy to see (via~\Cref{rem:k-fold-repetition-view}) that
\[
\operatorname{val}\Bigl(\GameAa[H,k,q_{1},q_{2},2r,\beta]\Bigr)\leq \exp(-c\varepsilon^{2}\lfloor k/k_{0}\rfloor)\leq \exp(-k^{1/5}),
\]
where in the last transition we use the relation $k/k_{0}\gg \varepsilon^{-2}k^{1/5}$ guaranteed by~\Cref{cond:parameters}.
\end{proof}

\subsection{From Joint Satisfaction to Bilinear Forms}\label{subsec:joint_satisfaction_to_bilinear}

As hinted in~\Cref{sec:outer_pcp_intro} and~\Cref{subsubsec:NAE_linear_bilinear}, the last goal in the outer PCP stage is to change the format of the provers' responses to symmetric bilinear forms. More precisely, we will consider the following $2$-prover $1$-round game defined below. It is identical to $\GameAa[H,k,q_{1},q_{2},r,\beta]$, except that the provers are expected to provide a single bilinear form of low-rank (instead of $r$ linear forms).
\begin{definition}\label{def:game1b}
Given positive integers $k,r$, nonnegative integers $q_{1},q_{2}$, a parameter $\beta\in [0,1)$ and a 3-uniform hypergraph $H$, let $\GameAb[H,k,q_{1},q_{2},r,\beta]$
be the $2$-prover-$1$-round game where the verifier executes the following:
\begin{enumerate}
\item Sample a uniformly random tripled set $E\in\scrE(H,k)$.

\item Sample $(E',A_{\inn},A_{\midd})\sim\cJ_{\outt}(E,q_{1},q_{2},\beta)$. Send $(E,A_{\inn},A_{\midd})$ to Alice and $(E',A_{\inn},A_{\midd})$ to Bob.  
\item Alice must return a symmetric bilinear form $f:\bF_{2}^{E}\otimes \bF_{2}^{E}\rightarrow\bF_{2}$ that is \textup{NAE}-satisfying and has rank at most $r$; Bob must return a bilinear form $g: \bF_{2}^{E'}\otimes \bF_{2}^{E'}\rightarrow \bF_{2}$. Accept if
\[
\left.f\right|_{\bF_{2}^{E'}\otimes \bF_{2}^{E'}}=g.
\]
\end{enumerate}
\end{definition}

% With this in mind, the main result of this section is as follows:
% \begin{lemma}\label{lem:outer_thirdstep}
%     Let $r,q_1,q_2,k\in\mathbb{N}$, $\eps>0$ and $\beta\in (0,1)$ be parameters as in~\Cref{cond:parameters}. There is a polynomial time reduction that given a $3$-uniform hypergraph $H$ produces $\GameAa[H,k,q_{1},q_{2},r,\beta]$ such that the following properties hold:
%     \begin{enumerate}
%         \item \textbf{Completeness:} if $H$ is $2$-colorable, then $\mathsf{val}(\GameAb[H,k,q_{1},q_{2},1,\beta])=1$.
%         \item \textbf{Soundness:} if in any $2^r$-coloring of $H$ at least $\eps$ fraction of the edges are monochromatic, then 
%         $\mathsf{val}(\GameAb[H,k,q_{1},q_{2},r,\beta])\leq 2^{-k^{1/5}}$.
%         \end{enumerate}
% \end{lemma}
% The reduction is from~\Cref{thm:hg_coloring}, and is described in~\Cref{def:game1b}. We next present the analysis.
% \textcolor{red}{Dor: working from here.}

We now analyze the completeness and the soundness of $\GameAb$.

\begin{lemma}[Game1b completeness]\label{lem:Game1b_completeness}
For any positive integer $k$, nonnegative integers $q_{1},q_{2}$, parameter $\beta\in [0,1)$ and $2$-colorable $3$-uniform hypergraph $H$, we have
\[
\operatorname{val}\Bigl(\GameAb[H,k,q_{1},q_{2},1,\beta]\Bigr)=1.
\]
\end{lemma}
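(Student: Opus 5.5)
The plan is to mirror the proof of \Cref{lem:Game1a_completeness}, replacing the linear functional by its rank-$1$ bilinear lift. By \Cref{cor:2color_NAE_quad}, since $H$ is $2$-colorable, there is a rank-$1$ symmetric bilinear form
\[
f:\bF_{2}^{\scrV(H,k)}\otimes \bF_{2}^{\scrV(H,k)}\rightarrow \bF_{2}
\]
such that $\left.f\right|_{\bF_{2}^{E}\otimes \bF_{2}^{E}}$ is \textup{NAE}-satisfying for every $E\in \scrE(H,k)$. Concretely, $f(x\otimes y)=f^{(1)}(x)f^{(1)}(y)$, where $f^{(1)}$ comes from a proper $2$-coloring as in \Cref{prop:2color_NAE_linear}.

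Alice's strategy ignores the advice $(A_{\inn},A_{\midd})$ and answers the question $(E,A_{\inn},A_{\midd})$ with $\left.f\right|_{\bF_{2}^{E}\otimes \bF_{2}^{E}}$. Bob's strategy likewise ignores the advice and answers $(E',A_{\inn},A_{\midd})$ with $\left.f\right|_{\bF_{2}^{E'}\otimes \bF_{2}^{E'}}$. Three things must then be checked.
\begin{enumerate}[label=(\arabic*)]
\item Alice's answer is a symmetric bilinear form. This holds because restricting a symmetric bilinear form to $V\otimes V$ for a subspace $V$ keeps it symmetric and bilinear.
\item Alice's answer is \textup{NAE}-satisfying. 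This is exactly the guarantee of \Cref{cor:2color_NAE_quad}.
\item Alice's answer has rank at most $1$. The restriction equals $x\otimes y\mapsto f^{(1)}|_{\bF_{2}^{E}}(x)\,f^{(1)}|_{\bF_{2}^{E}}(y)$, so its associated map $T$ has image contained in $\operatorname{span}\bigl(f^{(1)}|_{\bF_{2}^{E}}\bigr)$. (In fact $f^{(1)}|_{\bF_{2}^{E}}$ is nonzero, since an NAE-satisfying functional cannot be constant on a triple.)
\end{enumerate}

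Consistency is immediate from the nesting $\bF_{2}^{E'}\otimes \bF_{2}^{E'}\subseteq \bF_{2}^{E}\otimes \bF_{2}^{E}\subseteq \bF_{2}^{\scrV(H,k)}\otimes\bF_{2}^{\scrV(H,k)}$, which follows from \Cref{rem:ambient_1}. It gives
\[
\left.\Bigl(\left.f\right|_{\bF_{2}^{E}\otimes\bF_{2}^{E}}\Bigr)\right|_{\bF_{2}^{E'}\otimes \bF_{2}^{E'}}=\left.f\right|_{\bF_{2}^{E'}\otimes \bF_{2}^{E'}},
\]
so the verifier accepts with probability $1$.

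There is no real obstacle here. The only point requiring a moment's care is the rank bound, which is settled by the one-line observation in item (3) above.
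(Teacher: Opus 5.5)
Your proposal is correct and follows essentially the same argument as the paper: take the rank-$1$ symmetric form from \Cref{cor:2color_NAE_quad} and have both provers answer with its restrictions. Restriction preserves symmetry, NAE-satisfaction and the rank bound, and consistency is automatic from the nesting of the spaces; your check of the rank bound via the image of the associated map is just a hands-on version of the paper's remark that restricting a rank-$1$ form cannot increase its rank.
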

\begin{proof}
We obtain from~\Cref{cor:2color_NAE_quad} a rank-1 symmetric bilinear map $f:\bF_{2}^{\scrV(H,k)}\otimes \bF_{2}^{\scrV(H,k)}\rightarrow \bF_{2}$, whose restriction on $\bF_{2}^{E}\otimes \bF_{2}^{E}$ is \textup{NAE}-satisfying for every $E\in \scrE(H,k)$. We can then construct a strategy for Alice in the game $\GameAb[H,k,q_{1},q_{2},1,\beta]$, as follows:
\begin{enumerate}
\item Receive the question $(E,A_{\inn},A_{\midd})$.
\item Return the symmetric bilinear map $\left.f\right|_{\bF_{2}^{E}\otimes \bF_{2}^{E}}$, which is guaranteed by~\Cref{cor:2color_NAE_quad} to be \textup{NAE}-satisfying. Furthermore, it must have rank at most 1 because $\operatorname{rank}(f)=1$. 
\end{enumerate}
Similarly, the strategy for Bob is:
\begin{enumerate}
\item Receive the question $(E',A_{\inn},A_{\midd})$.
\item Return the symmetric bilinear map $\left.f\right|_{\bF_{2}^{E'}\otimes \bF_{2}^{E'}}$.
\end{enumerate}
It is clear that under this set of strategies, the verifier accepts with probability 1.
\end{proof}

\begin{remark}\label{rem:Game1b_completeness_Bob}
In $\GameAb[H,k,q_{1},q_{2},1,\beta]$, although Bob's answer $g:\bF_{2}^{E'}\otimes \bF_{2}^{E'}\rightarrow \bF_{2}$ to a question $(E',A_{\inn},A_{\midd})$ is not required to be symmetric or of rank at most 1 (we only pose such requirements on Alice's answer), we note that under the strategy in the completeness proof (\Cref{lem:Game1b_completeness}), Bob's answers are indeed always symmetric and have rank at most 1. This feature will propagate to the subsequent constructions $\GameBa$, $\GameBb$, and their completeness proofs. Eventually, this observation will be used in the completeness proof of $\GameCa$ (\Cref{lem:Game3_completeness}).  
\end{remark}

We now move on to the soundness. The following lemma establishes a connection between the soundness of $\GameAa$ and $\GameAb$, and more precisely it shows how to derive a strategy for the former from a strategy for the latter. Intuitively, this comes 
from the fact that a rank $r$ bilinear form has at most $O_{r}(1)$ many decompositions into rank-$1$ bilinear forms, but some care is needed (specifically, the event that the rank of Alice's bilinear form drops in Bob's space is problematic, but thankfully it has low probability).
\begin{lemma}\label{lem:Game1b_soundness_low_rank}
For any positive integers $k,r$, nonnegative integers $q_{1},q_{2}$, parameter $\beta\in [0,4^{-q_1-q_2})$ and 3-uniform hypergraph $H$, we have
\[
\operatorname{val}\Big(\GameAa[H,k,q_{1},q_{2},2r,\beta]\Big)\geq 4^{-r^{2}}\left(\operatorname{val}\Big(\GameAb[H,k,q_{1},q_{2},r,\beta]\Big)-4^{q_{1}+q_{2}}r\beta\right).
\]
\end{lemma}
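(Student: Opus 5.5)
Let $(F,G)$ be optimal strategies for $\GameAb[H,k,q_{1},q_{2},r,\beta]$, where $F$ answers each Alice-question with an NAE-satisfying symmetric form $f$ of rank at most $r$ and $G$ answers each Bob-question with a bilinear form $g$. From them I will build randomized strategies for $\GameAa[H,k,q_{1},q_{2},2r,\beta]$. In these strategies each prover outputs a uniformly random tensor decomposition, of length exactly $2r$, of the form it would have answered.

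\textbf{Step 1 (existence and few decompositions).} First I need a linear-algebra fact. Every symmetric bilinear form $h$ over $\bF_{2}$ of rank $\rho\leq r$ has a tensor decomposition (\Cref{def:tensor_decomposition}) $\bigl(h^{(1)},\dots,h^{(2r)}\bigr)$ of length $2r$, allowing zero functionals as padding. The reason is that symmetric forms over $\bF_{2}$ split into an orthogonal sum of diagonal $1\times 1$ blocks and alternating hyperbolic $2\times 2$ blocks. A $1\times 1$ block costs one square. A hyperbolic block $x_1y_2+x_2y_1$ costs three squares, via $(x_1+x_2)(y_1+y_2)+x_1y_1+x_2y_2$. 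So rank $\rho$ needs at most $\lceil 3\rho/2\rceil\leq 2r$ squares, and all the functionals used lie in $\operatorname{image}(T_h)$.

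Every entry of such a decomposition lies in $\operatorname{image}(T_h)$, which has dimension $\rho\leq r$. Hence the number of length-$2r$ decompositions is at most $(2^{r})^{2r}=4^{r^{2}}$.

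Alice, on question $(E,A_{\inn},A_{\midd})$, computes $f=F[\cdot]$ and outputs a uniformly random length-$2r$ decomposition of $f$. By \Cref{prop:NAE_quadratic_to_linear} this output is jointly NAE-satisfying, so it is a legal answer. Bob computes $g=G[\cdot]$. If $g$ is symmetric, he outputs a uniformly random length-$2r$ decomposition of $g$ whose entries lie in $\operatorname{image}(T_g)$; otherwise he outputs anything.

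\textbf{Step 2 (consistency when rank is preserved).} Suppose the $\GameAb$ verifier accepts, i.e.\ $g=f|_{\bF_2^{E'}\otimes\bF_2^{E'}}$, and suppose moreover that $\operatorname{rank}(g)=\operatorname{rank}(f)$. Take Alice's decomposition $\bigl(f^{(i)}\bigr)$ with $f^{(i)}=f(x^{(i)}\otimes\cdot)$, and restrict each $f^{(i)}$ to $\bF_2^{E'}$.

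The restricted functionals satisfy \eqref{eq:def_tensor_decomposition} for $g$. It remains to check that each $f^{(i)}|_{\bF_2^{E'}}$ lies in $\operatorname{image}(T_g)$. The restriction map sends $\operatorname{image}(T_f)$ onto a space containing $\operatorname{image}(T_g)$. That image has dimension at most $\operatorname{rank}(f)=\operatorname{rank}(g)$, so it equals $\operatorname{image}(T_g)$.

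Hence Alice's restricted tuple is one of the at most $4^{r^{2}}$ decompositions from which Bob samples. Since the two provers' random choices are independent, they agree with probability at least $4^{-r^{2}}$, conditioned on this event.

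\textbf{Step 3 (rank drop is rare; the main obstacle).} I expect the main care to be needed here: bounding the probability that $\operatorname{rank}(f|_{E'})<\operatorname{rank}(f)$, given Alice's full question. A symmetric matrix of rank $\rho$ over any field has a nonsingular principal $\rho\times\rho$ submatrix, indexed by a set $S\subseteq E$ with $|S|=\rho\leq r$. If every triple meeting $S$ is fully contained in $E'$, then this submatrix survives in $g$, so the rank is preserved. The set $S$ meets at most $r$ triples.

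By \Cref{prop:covering_outer}, conditioned on Alice's entire question $(E,A_{\inn},A_{\midd})$, each triple is fully included in $E'$ with probability at least $1-4^{q_{1}+q_{2}}\beta$. A union bound then gives failure probability at most $4^{q_1+q_2}r\beta$. Conditioning on Alice's full view is essential here, because $S$ depends on $f$, which depends on the advice.

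Putting the steps together, the $\GameAa$ verifier accepts with probability at least
\[
4^{-r^{2}}\Bigl(\Pr{\text{$\GameAb$ accepts}}-\Pr{\text{rank drops}}\Bigr)\geq 4^{-r^{2}}\left(\operatorname{val}\bigl(\GameAb\bigr)-4^{q_1+q_2}r\beta\right),
\]
which is the claimed bound.
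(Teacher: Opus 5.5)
Your proof is correct and follows the paper's argument. Rank preservation comes from a nonsingular principal $\rho\times\rho$ submatrix combined with \Cref{prop:covering_outer} (the paper's \Cref{cor:full_rank_core}); the padded length-$2r$ decomposition comes from the $\lceil 3\rho/2\rceil$ bound (\Cref{lem:rank1_decomposition}); and restriction maps $\operatorname{image}(T_f)$ onto $\operatorname{image}(T_g)$ when the ranks agree (\Cref{cor:full_rank_restriction}). The differences are cosmetic: the paper lets Alice use a fixed decomposition and has Bob pick $2r$ independent uniform elements of $\operatorname{image}(T_g)$. This also avoids your (harmless) need for Bob to fall back to an arbitrary answer when a symmetric $g$ has no length-$2r$ decomposition.
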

\begin{proof}
We organize the proof into the following three steps.

\paragraph{Step 1: controlling rank drop.} 
We denote
\[
s=\operatorname{val}\Big(\GameAb[H,k,q_{1},q_{2},r,\beta]\Big).
\]
Let $F[\cdot]$ and $G[\cdot]$ be the strategies of Alice and Bob, respectively, under which the verifier accepts with probability $s$ in $\GameAb[H,k,q_{1},q_{2},r,\beta]$. For each $E\in \scrE(H,k)$ and $(E',A_{\inn},A_{\midd})$ in the support of $\cJ_{\outt}(E,q_{1},q_{2},\beta)$, we have bilinear maps
\[
F[E,A_{\inn},A_{\midd}]:\bF_{2}^{E}\otimes \bF_{2}^{E}\rightarrow \bF_{2}\qquad\text{and}\qquad G[E',A_{\inn},A_{\midd}]:\bF_{2}^{E'}\otimes \bF_{2}^{E'}\rightarrow \bF_{2}.
\]

A pair of questions $(E,A_{\inn},A_{\midd})$ and $(E',A_{\inn},A_{\midd})$ for Alice and Bob is said to be ``good'' if
\[
\left.F[E,A_{\inn},A_{\midd}]\right|_{\bF_{2}^{E'}\otimes \bF_{2}^{E'}}=G[E',A_{\inn},A_{\midd}];
\]
it is said to be ``excellent'' if in addition to being good, it also satisfies
\[
\operatorname{rank}\bigl(F[E,A_{\inn},A_{\midd}]\bigr)=\operatorname{rank}\bigl(G[E',A_{\inn},A_{\midd}]\bigr).
\]

By definition, a random pair of questions is ``good'' with probability $s$. Furthermore, conditioned on each fixed Alice question $(E,A_{\inn},A_{\midd})$, since $\operatorname{rank}\bigl(F[E,A_{\inn},A_{\midd}]\bigr)\leq r$, we know from~\Cref{cor:full_rank_core} and~\Cref{prop:covering_outer} that
\begin{align*}
\Pru{E'}{\operatorname{rank}\Bigl(\left.F[E,A_{\inn},A_{\midd}]\right|_{\bF_{2}^{E'}\otimes \bF_{2}^{E'}}\Bigr)=\operatorname{rank}\Bigl(F[E,A_{\inn},A_{\midd}]\Bigr)\,\middle|\, E,A_{\inn},A_{\midd}}&\geq (1-4^{q_{1}+q_{2}}\beta)^{r}\\
&\geq 1-4^{q_{1}+q_{2}}r\beta.
\end{align*}
Therefore, the probability that a random pair of questions is ``good'' but not ``excellent'' is at most $4^{q_{1}+q_{2}}r\beta$. Consequently, a random pair of questions is ``excellent'' with probability at least $s-4^{q_{1}+q_{2}}r\beta$.

\paragraph{Step 2: deriving a strategy for Game1a.} 
Note that in $\GameAa[H,k,q_{1},q_{2},2r,\beta]$ the joint distribution of questions to Alice and Bob is the same as in $\GameAb[H,k,q_{1},q_{2},r,\beta]$. We next derive strategies for the provers in the former game from the tables $F[\cdot]$ and $G[\cdot]$. The strategy for Alice is as follows:
\begin{enumerate}
\item Receive the question $(E,A_{\inn},A_{\midd})$, and let $f=F[E,A_{\inn},A_{\midd}]$. Note that $f:\bF_{2}^{E}\otimes \bF_{2}^{E}\rightarrow \bF_{2}$ is symmetric and \textup{NAE}-satisfying.
\item Use~\Cref{lem:rank1_decomposition} to obtain a tensor decomposition $\bigl(f^{(1)},\dots,f^{(2r)}\bigr)$ of $f$. 
\item Return the linear functionals $f^{(1)},\dots,f^{(2r)}:\bF_{2}^{E}\rightarrow \bF_{2}$. They must be jointly \textup{NAE}-satisfying, due to~\Cref{prop:NAE_quadratic_to_linear}.
\end{enumerate}
Accordingly, the strategy for Bob is:
\begin{enumerate}
\item Receive the question $(E',A_{\inn},A_{\midd})$, and let $g=G[E',A_{\inn},A_{\midd}]$.
\item For each $i\in [2r]$, select a linear functional $g^{(i)}$ from the subspace $\operatorname{image}(T_{g})\subseteq (\bF_{2}^{E'})^*$ independently and uniformly at random.
\item Return the linear functionals $g^{(1)},\dots,g^{(2r)}:\bF_{2}^{E'}\rightarrow \bF_{2}$.
\end{enumerate}

\paragraph{Step 3: analysis of the strategy.} We show that the above strategy answers every ``excellent'' pair of questions consistently with probability at least $4^{-r^{2}}$, thereby concluding the desired inequality
\[
\operatorname{val}\Big(\GameAa[H,k,q_{1},q_{2},2r,\beta]\Big)\geq 4^{-r^{2}}(s-4^{q_{1}+q_{2}}r\beta).
\]
Suppose $(E,A_{\inn},A_{\midd})$ and $(E',A_{\inn},A_{\midd})$ are an ``excellent'' pair of questions for Alice and Bob, and let
\[
f=F[E,A_{\inn},A_{\midd}]\qquad\text{and}\qquad g=G[E',A_{\inn},A_{\midd}].
\]
Let $\bigl(f^{(1)},\dots,f^{(2r)}\bigr)$ be an arbitrary tensor decomposition of $f$. By~\Cref{def:tensor_decomposition}, we know that $f^{(i)}\in \operatorname{image}(T_{f})$ for each $i\in [2r]$. Furthermore, since $\operatorname{rank}(f)=\operatorname{rank}(g)$ (due to the excellence condition), it follows from~\Cref{cor:full_rank_restriction} that
\[
\left.f^{(i)}\right|_{\bF_{2}^{E'}}\in \operatorname{image}(T_{g})\qquad\text{for all }i\in [2r].
\]
Since the cardinality of $\operatorname{image}(T_{g})$ is at most $2^{\operatorname{rank}(g)}\leq 2^{r}$, if $g^{(1)},\dots,g^{(2r)}$ are independently and uniformly sampled from $\operatorname{image}(T_{g})$, we have
\[
\Pr{\left.f^{(i)}\right|_{\bF_{2}^{E'}}=g^{(i)}\text{ for all }i\in [2r]}\geq (2^{-r})^{2r}=4^{-r^2},
\]
as desired.
\end{proof}

Using~\Cref{lem:Game1b_soundness_low_rank}, we can now prove the soundness of $\GameAb$.
\begin{lemma}[Game1b soundness]\label{lem:Game1b_soundness}
Assume \Cref{cond:parameters}. Suppose $H$ is a 3-uniform hypergraph such that in every $4^{r}$-coloring of $H$, at least $\varepsilon$ fraction of the edges are monochromatic. Then we have
\[
\operatorname{val}\Bigl(\GameAb[H,k,q_{1},q_{2},r,\beta]\Bigr)\leq k^{-1/2}.
\]
\end{lemma}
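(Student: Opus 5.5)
The bound will follow by combining \Cref{lem:Game1b_soundness_low_rank} with \Cref{lem:Game1a_soundness}. Write $s=\operatorname{val}\bigl(\GameAb[H,k,q_{1},q_{2},r,\beta]\bigr)$. The hypothesis that every $4^{r}$-coloring of $H$ leaves at least $\varepsilon$ fraction of edges monochromatic is exactly the hypothesis of \Cref{lem:Game1a_soundness}. That lemma is stated with rank parameter $2r$, so it gives
\[
\operatorname{val}\Bigl(\GameAa[H,k,q_{1},q_{2},2r,\beta]\Bigr)\leq \exp(-k^{1/5}).
\]

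Next we apply \Cref{lem:Game1b_soundness_low_rank}. Its hypothesis $\beta<4^{-q_{1}-q_{2}}$ should follow from \Cref{cond:parameters}: there $\beta=k^{-3/4}$, and $k\geq \varepsilon^{-C}\exp(\exp(\ell_{2}))$ while $q_{1},q_{2}\ll \ell_{2}$, so $4^{q_{1}+q_{2}}$ is tiny compared to $k^{3/4}$. Rearranging the lemma's conclusion gives
\[
s\leq 4^{r^{2}}\exp(-k^{1/5})+4^{q_{1}+q_{2}}r\beta .
\]

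It remains to show that each of the two terms is at most $\tfrac12 k^{-1/2}$; this parameter bookkeeping is the only real work.

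\textbf{First term.} Since $r=2q_{2}+1\leq \ell_{2}$ and $k\geq\exp(\exp(\ell_{2}))$, we have $4^{r^{2}}\leq \exp(O(\ell_{2}^{2}))\leq \exp(k^{1/10})$, say. Hence
\[
4^{r^{2}}\exp(-k^{1/5})\leq \exp\bigl(-k^{1/5}/2\bigr),
\]
which is far below $\tfrac12 k^{-1/2}$.

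\textbf{Second term.} Here $4^{q_{1}+q_{2}}r\beta=4^{q_{1}+q_{2}}r\,k^{-3/4}$. The factor $4^{q_{1}+q_{2}}r$ is bounded by $\exp(O(2^{2\ell_{1}}\ell_{1}+\ell_{2}))$, using $q_{1}=O(\ell_{1}2^{2\ell_{1}})$ and $q_{2}\leq\ell_{2}$. By the same growth of $k$, this factor is at most $\tfrac12 k^{1/4}$, so the second term is at most $\tfrac12 k^{-1/2}$.

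The one step needing care is confirming that $q_{1}$, which is exponential in $\ell_{1}$, is still dominated by $\log k$. This holds because $\ell_{2}\geq Cq_{2}^{2}\geq C\ell_{1}^{4}$ and $k$ is doubly exponential in $\ell_{2}$, so $\log k\geq \exp(\ell_{2})\gg 2^{2\ell_{1}}\ell_{1}$. Summing the two terms gives $s\leq k^{-1/2}$.
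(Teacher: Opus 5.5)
Your proof is correct and is essentially the paper's argument: combine \Cref{lem:Game1a_soundness} (with rank parameter $2r$) and \Cref{lem:Game1b_soundness_low_rank} to get $\operatorname{val}\bigl(\GameAb[H,k,q_{1},q_{2},r,\beta]\bigr)\leq 4^{q_{1}+q_{2}}r\beta+4^{r^{2}}\exp(-k^{1/5})$, then conclude from \Cref{cond:parameters}. Your explicit bookkeeping fills in details the paper leaves implicit, namely that $\beta<4^{-q_{1}-q_{2}}$ and that $q_{1}=O(\ell_{1}4^{\ell_{1}})$ is dominated by $\log k\geq\exp(\ell_{2})$.
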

\begin{proof}
Combining~\Cref{lem:Game1a_soundness,lem:Game1b_soundness_low_rank}, we have
\[
\operatorname{val}\Bigl(\GameAb[H,k,q_{1},q_{2},r,\beta]\Bigr)\leq 4^{q_{1}+q_{2}}r\beta + 4^{r^{2}}\exp(-k^{1/5})\leq k^{-1/2},
\]
where the last transition is due to~\Cref{cond:parameters}.
\end{proof}

\section{The Middle PCP}\label{sec:middle_PCP}

In this section we construct and analyze our middle PCP as discussed in~\Cref{sec:middle_PCP_intro}. \Cref{subsec:thinning_subspace,subsec:Game2a,subsec:middle_soundness} are devoted to the design and analysis of $\GameBa$, where we incorporate local tensor checks. We then perform a lazy parallel repetition to get $\GameBb$ in~\Cref{subsec:Game2b}.

\subsection{Preparation: Thinning with Subspaces}\label{subsec:thinning_subspace}

In the middle PCP, we will use a Grassmann-type structure to perform ``local tensor checks.'' Along with a question $(E,A_{\inn},A_{\midd})$ sent to Alice (as in the outer PCP), the verifier also sends a subspace $U\subseteq \bF_{2}^{E}$ and checks whether the restriction of Alice's answer $f:\bF_{2}^{E}\otimes \bF_{2}^{E}\rightarrow \bF_{2}$ to $U\otimes U$ has rank at most $1$. The following augmentation of~\Cref{def:Joint_dist_S_A_Aprime} captures the process by which the verifier samples the tensor-check subspace $U$. 

\begin{definition}\label{def:Joint_dist_S_A_Aprime_U}
Let $E$ be a tripled set. Fix a parameter $\beta\in [0,1]$ and nonnegative integers $q_{1},q_{2},\ell$ such that $\ell\geq q_{2}$. We let $\cJ_{\midd}(E,q_{1},q_{2},\beta,\ell)$ denote the output distribution of the following process:
\begin{enumerate}
\item Sample $(E',A_{\inn},A_{\midd})\sim \cJ_{\outt}(E,q_{1},q_{2},\beta)$.
\item Sample $\ell-q_{2}$ vectors $a^{(1)},\dots,a^{(\ell-q_{2})}\in \bF_{2}^{E'}$ independently and uniformly at random. Let $A'=\bigl(a^{(1)},\dots,a^{(\ell-q_{2})}\bigr)$, and obtain a subspace $U=\operatorname{span}(A_{\midd})+\operatorname{span}(A')\subseteq \bF_{2}^{E'}$.
\item Output the tuple $(E',A_{\inn},A_{\midd},U)$.
\end{enumerate}
\end{definition}
\begin{remark}
Similarly to~\Cref{rem:ambient_1}, the subspace $U\subseteq \bF_{2}^{E'}$ in~\Cref{def:Joint_dist_S_A_Aprime_U} is viewed also as a subspace of $\bF_{2}^{E}$.
\end{remark}

\begin{remark}\label{rem:Jmid_no_dependence_q_2}
Consider a sample $(E',A_{\inn},A_{\midd},U)\sim \cJ_{\midd}(E,q_{1},q_{2},\beta,\ell)$ as defined in~\Cref{def:Joint_dist_S_A_Aprime_U}. We note that the distribution of $(E',A_{\inn},U)$ does not depend on $q_{2}$. This is because after fixing $E'$ and $A_{\inn}$, the advice sequence $A_{\midd}$ is uniformly random over $(\bF_{2}^{E'})^{q_{2}}$.
\end{remark}

The next proposition (which can be derived from~\Cref{prop:covering_outer_2}) is the crucial ``covering property'' we need in the middle PCP: a random subspace of $\bF_{2}^{E'}$ (where $E'$ is a random $\beta$-thinning of $E$) lifted to $\bF_{2}^{E}$ is close to a random subspace of $\bF_{2}^{E}$ (even with conditioning on an advice). 

\begin{restatable}{proposition}{coveringmiddleA}\label{prop:covering_middle}
Let $E$ be a tripled set. Fix a parameter $\beta\in [0,1)$ and nonnegative integers $q_{1},q_{2},\ell$ such that $|E|\geq \ell\geq q_{2}$. Fix arbitrary tuples
\[
A_{1}\in (\mathbb F_{2}^{E})^{q_{1}}
\qquad\text{and}\qquad
A_{2}\in (\mathbb F_{2}^{E})^{q_{2}}\text{ such that }\dim\operatorname{span}(A_{2})=q_{2}.
\]
Sample
\[
(E',A_{\inn},A_{\midd},U)\sim \cJ_{\midd}(E,q_{1},q_{2},\beta,\ell)
\]
conditioned on the event \(\{A_{\inn}=A_{1}\}\cap\{A_{\midd}=A_{2}\}\). Then the resulting marginal distribution of \(U\) has total variation distance at most
\[
4^{q_{1}+\ell}\beta\sqrt{|E|/3}+2^{\ell-|E|}
\]
to the uniform distribution over the set \(\big\{U\in \mathsf{Gr}(\mathbb F_{2}^{E},\ell)\,\big|\, \operatorname{span}(A_{2})\subseteq U\big\}\).
\end{restatable}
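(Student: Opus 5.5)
The plan is to reduce the claim to \Cref{prop:covering_outer_2}, applied with the advice sequence $A_{\midd}$ lengthened from $q_{2}$ to $\ell$ vectors. First I would use \Cref{rem:Jmid_no_dependence_q_2} and its proof idea to recouple the sampling. Conditioned on $E'$ and $A_{\inn}$, the concatenation $(A_{\midd},A')$ is a uniformly random tuple in $(\bF_{2}^{E'})^{\ell}$. Hence $(E',A_{\inn},(A_{\midd},A'))$ has exactly the law $\cJ_{\outt}(E,q_{1},\ell,\beta)$, and we have $U=\operatorname{span}(A_{\midd},A')$.

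Next, conditioned on $A_{\inn}=A_{1}$, \Cref{prop:covering_outer_2} (with $q_{2}$ replaced by $\ell$) says the law of the full $\ell$-tuple $(A_{\midd},A')$ is within total variation $4^{q_{1}+\ell}\beta\sqrt{|E|/3}$ of uniform on $(\bF_{2}^{E})^{\ell}$. This step needs $\beta<4^{-q_{1}-\ell}$. If that fails, the stated bound exceeds $1$ and there is nothing to prove. The difficulty is that we must additionally condition on $A_{\midd}=A_{2}$. Conditioning can inflate total variation in general, so I would not condition directly. Instead I would go back into the proof of \Cref{prop:covering_outer_2}, presumably a per-triple Hellinger or $\chi^{2}$ computation using the independence from \Cref{prop:covering_outer}, and observe that it bounds the distance uniformly over conditionings on arbitrary fixed advice coordinates. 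Concretely, I would apply the same argument treating $A_{\inn}\cup A_{\midd}$ as the fixed advice, of total length $q_{1}+q_{2}$, and $A'$ as the free advice of length $\ell-q_{2}$. \Cref{prop:covering_outer} gives that $E'$ conditioned on $(A_{1},A_{2})$ is independent over triples, with each triple kept with probability $\geq 1-4^{q_{1}+q_{2}}\beta$. The proof of \Cref{prop:covering_outer_2} should then go through verbatim. It yields that $A'$ given $(A_{1},A_{2})$ is within $4^{q_{1}+q_{2}+(\ell-q_{2})}\beta\sqrt{|E|/3}=4^{q_{1}+\ell}\beta\sqrt{|E|/3}$ of uniform on $(\bF_{2}^{E})^{\ell-q_{2}}$, which is exactly the first error term. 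I expect this justification to be the main obstacle. The cleanest route is to state the covering lemma as: for any fixed advice conditioning, the remaining fresh advice is close to uniform.

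Finally, I would push the distribution forward under $A'\mapsto \operatorname{span}(A_{2})+\operatorname{span}(A')$; total variation does not increase under this map. Take $A'$ exactly uniform in $(\bF_{2}^{E})^{\ell-q_{2}}$. The event that $A_{2},A'$ are linearly independent has probability at least $1-\sum_{i=q_{2}}^{\ell-1}2^{i-|E|}\geq 1-2^{\ell-|E|}$. Conditioned on this event, $U$ is uniform among $\ell$-dimensional spaces containing $\operatorname{span}(A_{2})$, by symmetry under $\mathrm{GL}$ maps fixing $A_{2}$ pointwise. So the pushforward is within $2^{\ell-|E|}$ of the target uniform distribution. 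Adding the two errors by the triangle inequality gives the claimed bound.
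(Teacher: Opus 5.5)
Your proposal is correct and follows the paper's proof. The paper also views $(E',A_{\inn}\uplus A_{\midd},A')$ as a sample from $\cJ_{\outt}(E,q_1+q_2,\ell-q_2,\beta)$ and applies \Cref{prop:covering_outer_2} with the concatenation as the fixed advice, giving $A'$ within $4^{q_1+\ell}\beta\sqrt{|E|/3}$ of uniform; it then pays $2^{\ell-|E|}$ for linear dependence and pushes forward to $U$.

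The step you flag as the main obstacle is not one. \Cref{prop:covering_outer_2} already allows conditioning on an arbitrary fixed advice tuple of any length, so you can apply it as a black box without reopening its proof. Your remark that the bound is trivial when $\beta\geq 4^{-q_1-\ell}$ covers the range hypothesis of \Cref{prop:covering_outer_2}, which the paper leaves unaddressed.
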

\begin{proof}
    Deferred to~\Cref{subsec:covering_middle}.
\end{proof}

We also need the following simpler smoothness property of $\cJ_{\midd}$ (derived from~\Cref{prop:covering_outer}).

\begin{proposition}\label{prop:covering_middle_2}
Let $E$ be a tripled set. Fix a parameter $\beta\in [0,1)$ and nonnegative integers $q_{1},\ell$. Fix an arbitrary tuple $A\in (\bF_{2}^{E})^{q_{1}}$ and an arbitrary subspace $U_{0}\in \mathsf{Gr}(\bF_{2}^{E},\ell)$. Sample
\[
(E',A_{\inn},\emp,U)\sim\cJ_{\midd}(E,q_{1},0,\beta,\ell)
\]
conditioned on the event $\{A_{\inn}=A\}\cap\{U=U_{0}\}$. Then for any subset $S\subseteq E$ with $|S|\leq 2$, the probability that $S\subseteq E'$ is at least $1-4^{q_{1}+\ell+1}\beta$.
\end{proposition}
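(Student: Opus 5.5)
Since $q_2=0$, the sample $(E',A_{\inn},\emp,U)\sim\cJ_{\midd}(E,q_1,0,\beta,\ell)$ is generated by drawing $E'\sim\operatorname{Thin}_\beta(E)$ and then $q_1+\ell$ independent uniform vectors in $\bF_2^{E'}$: the first $q_1$ form $A_{\inn}$ and the remaining $\ell$ form $A'$, with $U=\operatorname{span}(A')$. The plan is to reduce to \Cref{prop:covering_outer} by conditioning on the finer event $\{A_{\inn}=A\}\cap\{A'=B\}$ for a tuple $B\in(\bF_2^{E})^{\ell}$ with $\operatorname{span}(B)=U_0$. The event $\{A_{\inn}=A\}\cap\{U=U_0\}$ is the disjoint union over such $B$ of these finer events. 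So it suffices to prove the desired lower bound on $\Pr{S\subseteq E'}$ conditioned on each finer event, and then average over $B$ with the conditional weights.

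Fix such a $B$. The joint law of $(E',A_{\inn},A')$ is exactly the law of $(E',A_{\inn},A_{\midd})$ under $\cJ_{\outt}(E,q_1,\ell,\beta)$, with $A'$ playing the role of $A_{\midd}$. Therefore \Cref{prop:covering_outer}, applied with parameters $q_1$ and $q_2=\ell$ and tuples $A_1=A$, $A_2=B$, gives two facts. First, conditioned on $\{A_{\inn}=A\}\cap\{A'=B\}$, the events ``triple $T$ is fully included in $E'$'' are independent across triples. Second, each of these events has probability at least $1-4^{q_1+\ell}\beta$. The triple-independence is not even needed here.

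Since $|S|\le 2$, the set $S$ meets at most two triples $T_1,T_2$ of $E$. If both triples lie entirely in $E'$, then $S\subseteq E'$. A union bound then gives
\[
\Pr{S\subseteq E'}\ \ge\ 1-2\cdot 4^{q_1+\ell}\beta\ \ge\ 1-4^{q_1+\ell+1}\beta
\]
under each finer conditioning, and averaging over $B$ completes the proof.

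The only points needing care are that conditioning on $\{U=U_0\}$ is indeed a mixture of the conditionings $\{A'=B\}$, and that the law of $A'$ in $\cJ_{\midd}$ matches that of $A_{\midd}$ in $\cJ_{\outt}(E,q_1,\ell,\beta)$. Both follow directly from the definitions. The constant $4$ in $4^{q_1+\ell+1}$ leaves slack beyond the factor $2$ used.
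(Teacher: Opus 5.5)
Your proposal is correct and follows the paper's proof essentially verbatim: both decompose $\{U=U_0\}$ into the disjoint events $\{A'=B\}$ over bases $B$ of $U_0$, identify the conditional law with $\cJ_{\outt}(E,q_1,\ell,\beta)$ conditioned on $\{A_{\inn}=A\}\cap\{A_{\midd}=B\}$, and invoke \Cref{prop:covering_outer}. The only difference is that you write out the final union bound over the at most two triples meeting $S$, giving $1-2\cdot 4^{q_1+\ell}\beta\geq 1-4^{q_1+\ell+1}\beta$, which the paper leaves implicit.
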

\begin{proof}
According to~\Cref{def:Joint_dist_S_A_Aprime_U}, when $(E',A_{\inn},A_{\midd},U)$ is sampled from $\cJ_{\midd}(E,q_{1},0,\beta,\ell)$, a vector sequence $A'\in (\bF_{2}^{E'})^{\ell}$ is generated, and then $U$ is taken to be $\operatorname{span}(A')$. Let
\[
\cB=\left\{\bigl(a^{(1)},\dots,a^{(\ell)}\bigr)\in (\bF_{2}^{E})^{\ell}\,\middle|\,\operatorname{span}\bigl(a^{(1)},\dots,a^{(\ell)}\bigr)=U_{0}\right\}
\]
be the collection of bases of $U_{0}$. The event $\{U=U_{0}\}$ is thus equal to the disjoint union of events $\bigcup_{B\in \cB}\{A'=B\}$. It suffices to prove that for each $B\in \cB$, conditioned on the event $\{A_{\inn}=A\}\cap \{A'=B\}$, we have $\Pr{S\subseteq E'}\geq 1-4^{q_{1}+\ell+1}\beta$ for any two-element subset $S\subseteq E$. It is easy to see that the distribution of $E'$ conditioned on $\{A_{\inn}=A\}\cap \{A'=B\}$ is the same as in the following process (by comparing~\Cref{def:Joint_dist_S_A_Aprime,def:Joint_dist_S_A_Aprime_U}):
\begin{itemize}
\item sample $(E',A_{\inn},A_{\midd})\sim \cJ_{\outt}(E,q_{1},\ell,\beta)$ conditioned on $\{A_{\inn}=A\}\cap\{A_{\midd}=B\}$.
\end{itemize}
The desired conclusion thus easily follows from~\Cref{prop:covering_outer}. 
\end{proof}

The following notation will be used later in~\Cref{subsubsec:inner_PCP_covering}. Here the subscript ``$\mathsf{rev}$'' stands for ``reverse,'' meaning that we reversely generate a thinning $E'$ after seeing a subspace $U$ contained in $\bF_{2}^{E'}$.

\begin{definition}\label{def:Joint_reverse}
Let $E$ be a tripled set. Fix a parameter $\beta\in [0,1)$ and a nonnegative integer $\ell$. Given any subspace $U_{0}\subseteq \bF_{2}^{E}$ with dimension at most $\ell$, we let $\cJ_{\rev}(E,U_{0},\beta,\ell)$ be the output distribution of the following process:
\begin{enumerate}
\item Sample $(E',\emp,\emp,U)\sim \cJ_{\midd}(E,0,0,\beta,\ell)$ conditioned on the event $\{U=U_{0}\}$.
\item Output $E'$.
\end{enumerate}
\end{definition}

\subsection{The Local Tensor Check Game}\label{subsec:Game2a}
We now construct the first version of our middle PCP. The 2-prover 1-round game constructed in the next definition is similar to $\GameAb$ but with one important distinction. Letting $f$ be Alice's response, instead of requiring $\mathrm{rank}(f)$ is small, we require that $\mathrm{rank}(f|_{U\otimes U})$ is small, where $U$ is (roughly) a random $\ell$-dimensional subspace in Alice's space $\bF_{2}^{E}$ which is given to Alice as part of her challenge. Additionally, as part of his challenge, we give Bob a random subspace $U'$ of dimension (roughly) $\ell+1$ that contains $U$.

\begin{definition}\label{def:Game2}
Given positive integers $k,\ell_{2}$, nonnegative integers $q_{1},q_{2}$ such that $\ell_{2}\geq q_{2}$, a parameter $\beta\in [0,1)$ and a 3-uniform hypergraph $H$, let $\GameBa[H,k,\ell_{2},q_{1},q_{2},\beta]$ be the $2$-Prover-$1$-Round game where the verifier executes the following:
\begin{enumerate}
\item Sample a uniformly random tripled set $E\in \scrE(H,k)$.
\item Sample $(E',A_{\inn},A_{\midd},U)\sim \cJ_{\midd}(E,q_{1},q_{2},\beta,\ell_{2})$. Sample a random vector $x\in \bF_{2}^{E'}$ and let $U'=U+\operatorname{span}(x)$. 
\item Send $(E,A_{\inn},U)$ to Alice and $(E',A_{\inn},U')$ to Bob.
\item Alice must return an \textup{NAE}-satisfying symmetric bilinear form $f:\bF_{2}^{E}\otimes \bF_{2}^{E}\rightarrow\bF_{2}$ such that $\mathrm{rank}(f|_{U\otimes U})\leq 1$; Bob must return a bilinear form $g: \bF_{2}^{E'}\otimes \bF_{2}^{E'}\rightarrow \bF_{2}$. Accept if
\[
\left.f\right|_{\bF_{2}^{E'}\otimes \bF_{2}^{E'}}=g.
\]
\end{enumerate}
\end{definition}

\begin{remark}
In~\Cref{def:Game2}, the verifier does not use $A_{\midd}$ after generating it. Therefore, the distribution of $(E,A_{\inn},U,U')$ does not depend on $q_{2}$. Consequently, the value of the game $\GameBa[H,k,\ell_{2},q_{1},q_{2},\beta]$ does not depend on $q_{2}$ either. The purpose of $A_{\midd}$ in~\Cref{def:Game2} is solely for the convenience of its soundness analysis in~\Cref{subsec:middle_soundness}.
\end{remark}

\begin{remark}\label{rem:middle_Uprime}
In~\Cref{def:Game2}, it is easy to see that the distribution of $(E',A_{\inn},A_{\midd},U')$ is identical to $\cJ_{\midd}(E,q_{1},q_{2},\beta,\ell_{2}+1)$.
\end{remark}

The completeness of $\GameBa$ is straightforward and is given in the following lemma.
\begin{lemma}[Game2a completeness]\label{lem:Game2a_completeness}
For any positive integers $k,\ell_{2}$, nonnegative integers $q_{1},q_{2}$ such that $k\geq \ell_{2}\geq q_{2}$, parameter $\beta \in [0,1)$ and 2-colorable 3-uniform hypergraph $H$, we have
\[
\operatorname{val}\Bigl(\GameBa[H,k,\ell_{2},q_{1},q_{2},\beta]\Bigr)=1.
\]
\end{lemma}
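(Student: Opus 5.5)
We will follow the completeness argument for $\GameAb$ (\Cref{lem:Game1b_completeness}) verbatim. Since $H$ is $2$-colorable, \Cref{cor:2color_NAE_quad} provides a rank-$1$ symmetric bilinear form
\[
f:\bF_{2}^{\scrV(H,k)}\otimes \bF_{2}^{\scrV(H,k)}\rightarrow \bF_{2}
\]
such that $f|_{\bF_{2}^{E}\otimes \bF_{2}^{E}}$ is \textup{NAE}-satisfying for every $E\in \scrE(H,k)$. Concretely, $f(x\otimes y)=f^{(1)}(x)f^{(1)}(y)$ for a single linear functional $f^{(1)}$.

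Alice's strategy ignores $A_{\inn}$ and $U$: on question $(E,A_{\inn},U)$ she returns $f_E:=f|_{\bF_{2}^{E}\otimes \bF_{2}^{E}}$. Bob's strategy ignores $A_{\inn}$ and $U'$: on question $(E',A_{\inn},U')$ he returns $g:=f|_{\bF_{2}^{E'}\otimes \bF_{2}^{E'}}$.

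We then check Alice's three requirements.
\begin{enumerate}[label=(\arabic*)]
\item $f_E$ is symmetric, because $f$ is.
\item $f_E$ is \textup{NAE}-satisfying, by the choice of $f$.
\item $\operatorname{rank}(f_E|_{U\otimes U})\leq 1$. Here we note that $f_E|_{U\otimes U}(x\otimes y)=f^{(1)}|_{U}(x)\,f^{(1)}|_{U}(y)$, which is a product of a single linear functional on $U$ with itself. Hence its rank is $0$ or $1$; this is the only place where rank-$1$ (rather than merely low rank) of $f$ is used.
\end{enumerate}

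Acceptance holds because $\bF_{2}^{E'}\subseteq \bF_{2}^{E}$ (\Cref{rem:ambient_1}), so
\[
f_E|_{\bF_{2}^{E'}\otimes \bF_{2}^{E'}}=f|_{\bF_{2}^{E'}\otimes \bF_{2}^{E'}}=g
\]
for every pair of questions. The verifier therefore accepts with probability $1$.

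The only thing to be mildly careful about is that the game is well-defined, i.e.\ that $\cJ_{\midd}$ is defined. This requires $\ell_2\geq q_2$, which is assumed, and $k\geq \ell_2$ guarantees there is room for the subspaces. There is no real obstacle in this proof.
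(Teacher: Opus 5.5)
Your proof is correct and is essentially the paper's argument. The paper has the provers ignore $U,U'$ and reuse the rank-$1$ strategy from \Cref{lem:Game1b_completeness}, which is exactly the strategy you build from \Cref{cor:2color_NAE_quad}; your explicit check that $f_E|_{U\otimes U}$ has rank at most $1$ just spells out a step the paper leaves implicit, since it follows from global rank at most $1$.
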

\begin{proof}
It is clear that the provers can ignore the subspaces $U$ or $U'$ given to them and pretend that they are in the game $\GameAb[H,k,q_{1},0,1,\beta]$. Since there is a strategy for $\GameAb[H,k,q_{1},0,1,\beta]$ with 0 rejection probability due to~\Cref{lem:Game1b_completeness}, it follows that the same strategy also passes $\GameBa[H,k,\ell_{2},q_{1},q_{2},\beta]$ with probability 1.  
\end{proof}
The soundness analysis of $\GameBa$, presented in the next subsection, is much more involved. Intuitively, if we did not supply Bob with the space $U'$, the analysis would be relatively straightforward. Indeed, in that case, while in principle Alice's response could depend on the space $U$ she receives, in actuality such strategies do not achieve a high value, as they have to be consistent with Bob (whose response does not depend on $U$ at all). Thus, in that case Alice's response can be thought of as not being dependent on $U$, and hence the tensor test on $U$ is random and faithfully checks whether Alice's response has low-rank globally. 

To make our final game $4$-to-$1$ we must supply Bob with the space $U'$ (otherwise we would not be able to compose with the inner PCP), and the soundness analysis in that case becomes more complicated. Indeed, there are genuine cheating strategies that Alice and Bob can use, and our argument below analyzes them (using the soundness of the Grassmann test) and shows that they can still be converted into good strategies for $\GameAb$. 
\subsection{Soundness Analysis}\label{subsec:middle_soundness}

In this subsection, we fix a 3-uniform hypergraph $H$, positive integers $k,\ell_{2}$, nonnegative integers $q_{1},q_{2}$ and a parameter $\beta\in (0,1)$. We consider the game $\GameBa[H,k,\ell_{2},q_{1},q_{2},\beta]$ as defined in~\Cref{def:Game2}. The following notational convention will be used throughout~\Cref{subsec:middle_soundness} (but \emph{not} in later parts of the paper).

\begin{notation}\label[ntn]{ntn:D_mid}
We use $\cD_{\midd}$ to denote the joint distribution of $(E,A_{\inn},A_{\midd},U,E',U')$ in the game $\GameBa[H,k,\ell_{2},q_{1},q_{2},\beta]$, where $(E,A_{\inn},U)$ is the question sent to Alice and $(E',A_{\inn},U')$ is the question sent to Bob. We use the symbol \(*\) to denote
marginalization over the corresponding component. For instance, sampling from
the marginal distribution of \((A_{\inn},U,E')\) under \(\cD_{\midd}\) is denoted by $(*,A_{\inn},*,U,E',*)\sim \cD_{\midd}$.
\end{notation}

\begin{notation}\label[ntn]{ntn:D_mid_condition}
We will often condition the distribution $\cD_{\midd}$ on some subset of the six components
\((E,A_{\inn},A_{\midd},U,E',U')\) and consider the induced joint distribution of the remaining
components. For instance, if $E$ and $U'$ are fixed, then the conditional joint distribution of $(U,E')$ is denoted by
\[
\Pru{(E,*,*,U,E',U')\sim \cD_{\midd}}{\cdot\,\big|\, E,U'}.
\]
\end{notation}

\begin{notation}\label{ntn:middle_Alice-table}
A table $F[\cdot]$ is called an Alice-table if for each possible Alice-question $(E,A_{\inn},U)$ under $\cD_{\midd}$, the entry $F[E,A_{\inn},U]$ is either $\nil$ or an \textup{NAE}-satisfying symmetric bilinear form \[F[E,A_{\inn},U]:\bF_{2}^{E}\otimes \bF_{2}^{E}\rightarrow \bF_{2}\qquad\text{such that}\qquad\operatorname{rank}\left(\left.F[E,A_{\inn},U]\right|_{U\otimes U}\right)\leq 1.\]
Similarly, a table $G[\cdot]$ is called a Bob-table if for each possible Bob-question $(E',A_{\inn},U')$ under $\cD_{\midd}$, the entry $G[E',A_{\inn},U']$ is either $\nil$ or a bilinear form
\[G[E',A_{\inn},U']:\bF_{2}^{E'}\otimes \bF_{2}^{E'}\rightarrow \bF_{2}.\]
\end{notation}

In the following notation, the function $\operatorname{Vrf}_{F,G}(\cdot,\cdot,\cdot,\cdot,\cdot)$ is the indicator of whether the verifier accepts under the strategy given by $F[\cdot]$ and $G[\cdot]$. Since $A_{\midd}$ is irrelevant to the verifier's acceptance, the $\operatorname{Vrf}_{F,G}(\cdot,\cdot,\cdot,\cdot,\cdot)$ only has five arguments instead of six.

\begin{notation}\label{ntn:middle_vrf}
Fix an Alice-table $F[\cdot]$ and a Bob-table $G[\cdot]$. For any tuple $(E,A_{\inn},*,U,E',U')$ with nonzero probability under $\cD_{\midd}$, we define
\[
\operatorname{Vrf}_{F,G}(E,A_{\inn},U,E',U')=\begin{cases}
1,&\text{if } \left.F[E,A_{\inn},U]\right|_{\bF_{2}^{E'}\otimes \bF_{2}^{E'}}=G[E',A_{\inn},U'],\\
0,&\text{otherwise.}
\end{cases}
\]
We will often sample \((E,A_{\inn},*,U,E',U')\) from \(\cD_{\midd}\) conditioned on some
subset of its components. For instance, if \(E\) and \(U'\) are fixed, then
we denote the corresponding conditional expectation of
\(\operatorname{Vrf}_{F,G}(E,A_{\inn},U,E',U')\) by
\[
\operatorname{Vrf}_{F,G}(E,*,*,*,U')
=
\Pru{(E,A_{\inn},*,U,E',U')\sim \cD_{\midd}}
{
\left. F[E,A_{\inn},U]\right|_{\mathbb F_{2}^{E'}\otimes \mathbb F_{2}^{E'}}
=
G[E',A_{\inn},U']
\,\middle|\, E,U'}.
\]
When none of the components is fixed, we write $\operatorname{Vrf}_{F,G}=\operatorname{Vrf}_{F,G}(*,*,*,*,*)$ for short.
\end{notation}

\subsubsection{Preparing for Grassmann Decoding}

The core of the soundness analysis of $\GameBa$ is to use the Grassmann decoding theorem (\Cref{thm:grdecoding2}). In order to make use of that theorem, we first need to convert the agreement between Alice's and Bob's tables into the condition that Alice's table \emph{alone} passes the ``Grassmann equality test'' (see \Cref{eq:Grassmann_equality_test}) with noticeable probability. This is achieved by the following lemma. 

\begin{lemma}\label{lem:preparing_middle}
Assume~\Cref{cond:parameters}. Let $F[\cdot]$ be an Alice-table and $G[\cdot]$ be a Bob-table. For any $E\in \scrE(H,k)$ and any $A_{\inn}\in (\bF_{2}^{E})^{q_{1}}$, we have
\[
\Pru{(U_{1},U_{2})\sim \cG(\bF_{2}^{E},\ell_{2})}{F[E,A_{\inn},U_{1}]=F[E,A_{\inn},U_{2}]\neq \nil\Big.}\geq \operatorname{Vrf}_{F,G}(E,A_{\inn},*,*,*)^{2}-2^{-\ell_{2}}.
\]
\end{lemma}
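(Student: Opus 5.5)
Fix $E$ and $A_{\inn}$. The plan is a Cauchy--Schwarz argument over Bob's question. Sample Bob's question $(E',U')$ conditioned on $(E,A_{\inn})$. Then sample two Alice subspaces $U_1,U_2$ independently from the conditional distribution of $U$ given $(E,A_{\inn},E',U')$. Write $p(E',U')$ for the conditional acceptance probability $\operatorname{Vrf}_{F,G}(E,A_{\inn},*,E',U')$. The event that both $(U_1,E',U')$ and $(U_2,E',U')$ are accepted has probability $\bE[p(E',U')^2]\geq \operatorname{Vrf}_{F,G}(E,A_{\inn},*,*,*)^2$ by Jensen. On that event $F[E,A_{\inn},U_1]$ and $F[E,A_{\inn},U_2]$ are both non-$\nil$. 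They also have the same restriction $G[E',A_{\inn},U']$ to $\bF_2^{E'}\otimes\bF_2^{E'}$.

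The first issue is that this only gives equality of restrictions to $\bF_2^{E'}\otimes\bF_2^{E'}$, not equality of the full forms. I would handle it using the requirement that Alice's forms are NAE-satisfying, so their values on the missing triples are constrained; but that alone does not determine them. So instead I would reread the test as an equality test on a modified table. For each Bob question, the relevant object is the restriction, and I want $F[U_1]=F[U_2]$ exactly. I suspect the intended reading is weaker than it looks. The cleanest route I would try first is to note that the coupling $(U_1,U_2)$ produced this way is, in distribution, a pair of $\ell_2$-dimensional subspaces of $U'$. Such a pair is exactly an edge of $\mathsf{Gr}$ with intersection $\ell_2-1$ when $U_1\neq U_2$. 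I would then argue equality via the structure of $E'$: whenever $\ell_2$ is fixed and the pair lies inside $\bF_2^{E'}$, I would aim to show the forms coincide. If exact equality of full forms cannot be extracted, I would fall back to replacing $F$ by a table indexed by its restriction data. (I expect the paper's intended argument does directly yield equality, perhaps because $E'=E$ is forced in the relevant event, or because only restrictions matter downstream.)

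The second step concerns the distribution of $(U_1,U_2)$ given $(E,A_{\inn})$. Given $U'$ of dimension $\ell_2+1$, the subspace $U$ is essentially a uniformly random hyperplane of $U'$. This is because $U'=U+\operatorname{span}(x)$ with $x$ uniform in $\bF_2^{E'}$, so by symmetry the hyperplane is uniform, up to the event $x\in U$. Two independent hyperplanes of $U'$ are equal with probability $1/(2^{\ell_2+1}-1)$; otherwise they intersect in dimension $\ell_2-1$. Conditioned on $U_1\neq U_2$, I would claim the pair is distributed as $\cG(\bF_2^E,\ell_2)$ once averaged over $U'$. To justify this, I would use that $U'$, given $A_{\inn}$ and $E'$, is a uniform $(\ell_2+1)$-space in $\bF_2^{E'}$ (using \Cref{rem:middle_Uprime}). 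Degenerate cases, namely $\dim U'<\ell_2+1$, $U_1=U_2$, or the mismatch between $\bF_2^{E'}$-spaces and $\bF_2^E$-spaces, are charged to the $2^{-\ell_2}$ error term.

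The main obstacle I expect is exactly this distributional matching. For fixed $E'$ the pair is uniform among edges inside $\bF_2^{E'}$, not inside $\bF_2^E$. So I would need to show that the law of the pair $(U_1,U_2)$, averaged over $E'$ given $(E,A_{\inn})$, is uniform over edges of $\mathsf{Gr}(\bF_2^E,\ell_2)$. Alternatively, I would set up the probability so that it is bounded below by a quantity in which uniformity is exact. I would try to obtain this by symmetry: the full conditional law of each $U_i$ given $(E,A_{\inn})$ is the same as the law of $U$ in $\cD_{\midd}$. I would then bound the collision term $\Pr{U_1=U_2}\le 2^{-\ell_2}$ and discard it, leaving $\operatorname{Vrf}^2-2^{-\ell_2}$.
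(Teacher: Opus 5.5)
Your Cauchy--Schwarz skeleton matches the paper, as does the observation that two independent hyperplanes of $U'$ either coincide (probability $(2^{\ell_2+1}-1)^{-1}$) or form a Grassmann edge. However, both steps you flag as obstacles are left open, and your proposed fallbacks do not work.

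\textbf{First gap.} You never pass from equality of the restrictions to $\bF_2^{E'}\otimes\bF_2^{E'}$ to the full equality $F[E,A_{\inn},U_1]=F[E,A_{\inn},U_2]$ that the lemma asserts. This cannot be weakened away. \Cref{lem:invoked_middle} applies \Cref{thm:grdecoding2} to the table of full forms on $\mathsf{Gr}(\bF_2^{E},\ell_2)$, and a table of restrictions is not even well defined there, since $E'$ varies. Nor is $E'=E$ forced: it has probability about $(1-\beta)^k$. NAE-satisfaction also does not pin down the missing entries.

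The missing idea is to condition on $(E,A_{\inn},U')$ rather than on $(E,A_{\inn},E',U')$. Given $(E,A_{\inn},E',U')$, $U$ is uniform on $\mathsf{Gr}(U',\ell_2)$ whatever $E'$ is, so $(U_1,U_2)$ is conditionally independent of $E'$ given $(E,A_{\inn},U')$. Now fix $U_1,U_2$ with $F[E,A_{\inn},U_1]\neq F[E,A_{\inn},U_2]$, both non-$\nil$. Their difference is a nonzero bilinear form, so it equals $1$ on some $e_u\otimes e_v$. The restrictions can then agree only if $\{u,v\}\not\subseteq E'$, and by \Cref{prop:covering_middle_2} (with \Cref{rem:middle_Uprime}) this has conditional probability at most $4^{q_1+\ell_2+2}\beta$. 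Exchanging the order of averaging gives
\[
\Exu{E'}{\operatorname{Vrf}_{F,G}(E,A_{\inn},*,E',U')^{2}\,\Big|\,E,A_{\inn},U'}\leq\Pru{U_1,U_2\in\mathsf{Gr}(U',\ell_2)}{F[E,A_{\inn},U_1]=F[E,A_{\inn},U_2]\neq\nil}+4^{q_1+\ell_2+2}\beta .
\]

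\textbf{Second gap.} The law of $U'$ given $(E,A_{\inn})$ is \emph{not} uniform on $\mathsf{Gr}(\bF_2^{E},\ell_2+1)$, and no symmetry argument will make it so. Whenever a coordinate $v$ is thinned away, $U'$ lies in the hyperplane $\{x_v=0\}$, so such subspaces are overweighted by order $\beta$ per coordinate. What you need is the quantitative covering property \Cref{prop:covering_middle} (via \Cref{rem:middle_Uprime}). It says this law is $(4^{q_1+\ell_2+1}\beta\sqrt{k}+2^{\ell_2+1-3k})$-close to uniform, which rests on $\beta\sqrt{k}\ll 1$ through the Hellinger tensorization behind \Cref{prop:covering_outer_2}.

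The argument then runs as follows. Apply Cauchy--Schwarz over $(E',U')$. Replace the law of $U'$ by the uniform one at this cost, which is allowed because the averaged quantity lies in $[0,1]$. Then use the display above. The total error is
\[
(2^{\ell_2+1}-1)^{-1}+4^{q_1+\ell_2+1}\beta(4+\sqrt{k})+2^{\ell_2+1-3k},
\]
and it is at most $2^{-\ell_2}$ only because of \Cref{cond:parameters}, not merely because the collision term is below $2^{-\ell_2}$.
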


\begin{proof} We organize the proof into the following three steps:

\paragraph{Step 1: fixing $E'$ and $U'$.} Fix a subset $E'\subseteq E$ with size $|E'|\geq \ell_{2}+1$ and a subspace $U'\in \mathsf{Gr}(\bF_{2}^{E'},\ell_{2}+1)$ such that the tuple $(E,A_{\inn},*,*,E',U')$ has nonzero probability under $\cD_{\midd}$. By~\Cref{def:Joint_dist_S_A_Aprime_U,def:Game2}, it is easy to see that sampling $(E,A_{\inn},*,U,E',U')\sim \cD_{\midd}$ conditioned on $E$,~$A_{\inn}$,~$E'$ and $U'$ is equivalent to sampling a uniformly random subspace $U\in \mathsf{Gr}(U',\ell_{2})$. Therefore, we have
\begin{align}
\operatorname{Vrf}_{F,G}(E,A_{\inn},*,E',U')^{2}
&=\Exu{(E,A_{\inn},*,U,E',U')\sim\cD_{\midd}}{\operatorname{Vrf}_{F,G}(E,A_{\inn},U,E',U')\,\Big|\,E,A_{\inn},E',U'}^{2}\nonumber\\
&=\Exu{U_{1},U_{2}\in \mathsf{Gr}(U',\ell_{2})}{\operatorname{Vrf}_{F,G}(E,A_{\inn},U_{1},E',U')\operatorname{Vrf}_{F,G}(E,A_{\inn},U_{2},E',U')}.\label{eq:preparing_middle_UfromUprime}
\end{align}
Expanding the right-hand side of~\Cref{eq:preparing_middle_UfromUprime} by definition, we get
\begin{align}
&\quad\operatorname{Vrf}_{F,G}(E,A_{\inn},*,E',U')^{2}\nonumber\\
&=\Pru{U_{1},U_{2}\in \mathsf{Gr}(U',\ell_{2})}{\left.F[E,A_{\inn},U_{1}]\right|_{\bF_{2}^{E'}\otimes \bF_{2}^{E'}}=G[E',A_{\inn},U']=\left.F[E,A_{\inn},U_{2}]\right|_{\bF_{2}^{E'}\otimes \bF_{2}^{E'}}}\nonumber\\
&\leq \Pru{U_{1},U_{2}\in \mathsf{Gr}(U',\ell_{2})}{\left.F[E,A_{\inn},U_{1}]\right|_{\bF_{2}^{E'}\otimes \bF_{2}^{E'}}=\left.F[E,A_{\inn},U_{2}]\right|_{\bF_{2}^{E'}\otimes \bF_{2}^{E'}}}.\label{eq:preparing_middle_res_on_Eprime}
\end{align}

\paragraph{Step 2: fixing only $U'$.} For any fixed $U'\in\mathsf{Gr}(\bF_{2}^{E},\ell_{2}+1)$ and $U_{1},U_{2}\in \mathsf{Gr}(U',\ell_{2})$ such that $F[E,A_{\inn},U_{1}]\neq F[E,A_{\inn},U_{2}]$, one of the following two possibilities holds:
\begin{enumerate}[label=(\arabic*)]
\item At least one of $F[E,A_{\inn},U_{1}]$ and $F[E,A_{\inn},U_{2}]$ is $\nil$.
\item Both of $F[E,A_{\inn},U_{1}]$ and $F[E,A_{\inn},U_{2}]$ are bilinear forms $\bF_{2}^{E}\otimes \bF_{2}^{E}\rightarrow \bF_{2}$, and their difference $f=F[E,A_{\inn},U_{1}]-F[E,A_{\inn},U_{2}]$ is nonzero. In this case, there exist two unit vectors $x,y\in \bF_{2}^{E}$ (i.e. $|\textup{supp}(x)|=|\textup{supp}(y)|=1$) such that $f(x\otimes y)=1$. Therefore, for any subset $E'\subseteq E$ that contains $\textup{supp}(x)\cup \textup{supp}(y)$, we have $\left.F[E,A_{\inn},U_{1}]\right|_{\bF_{2}^{E'}\otimes \bF_{2}^{E'}}\neq \left.F[E,A_{\inn},U_{2}]\right|_{\bF_{2}^{E'}\otimes \bF_{2}^{E'}}$.
\end{enumerate}
In either of the above two cases, it follows from~\Cref{prop:covering_middle_2} and~\Cref{rem:middle_Uprime} that (note that $U',U_{1}$ and $U_{2}$ remain fixed in the following inequality)
\[
\Pru{(E,A_{\inn},*,*,E',U')\sim \cD_{\midd}}{\left.F[E,A_{\inn},U_{1}]\right|_{\bF_{2}^{E'}\otimes \bF_{2}^{E'}}=\left.F[E,A_{\inn},U_{2}]\right|_{\bF_{2}^{E'}\otimes \bF_{2}^{E'}}\,\Big|\,E,A_{\inn},U'}\leq 4^{q_{1}+\ell_{2}+2}\beta.
\]
Therefore, taking expectation of~\Cref{eq:preparing_middle_res_on_Eprime} over the conditional distribution of $E'$ yields (note that $U_{1}$ and $U_{2}$ are no longer fixed in the following inequality)
\begin{align}
&\quad\Exu{(E,A_{\inn},*,*,E',U')\sim \cD_{\midd}}{\operatorname{Vrf}_{F,G}(E,A_{\inn},*,E',U')^{2}\,\Big|\, E,A_{\inn},U'}\nonumber\\
&\leq \Pru{U_{1},U_{2}\in \mathsf{Gr}(U',\ell_{2})}{F[E,A_{\inn},U_{1}]=F[E,A_{\inn},U_{2}]\neq \nil\Big.}+4^{q_{1}+\ell_{2}+2}\beta.\label{eq:preparing_middle_smoothness_2}
\end{align}

\paragraph{Step 3: both $E'$ and $U'$ unfixed.}  

By \Cref{prop:covering_middle} and~\Cref{rem:middle_Uprime}, we know that conditioned on the fixed $E$ and $A_{\inn}$, the distribution of $U'$ in the sample $(E,A_{\inn},*,*,*,U')\sim \cD_{\midd}$ is \(\bigl(4^{q_{1}+\ell_{2}+1}\beta\sqrt{k}+2^{\ell_{2}+1-3k}\bigr)\)-close to uniform over $\mathsf{Gr}(\bF_{2}^{E},\ell_{2}+1)$. Therefore, we have
\begin{align*}
&\quad\operatorname{Vrf}_{F,G}(E,A_{\inn},*,*,*)^{2}\nonumber\\
&= \Exu{(E,A_{\inn},*,*,E',U')\sim \cD_{\midd}}{\operatorname{Vrf}_{F,G}(E,A_{\inn},*,E',U')\,\Big|\,E,A_{\inn}}^{2}\tag{by definition}\nonumber\\
&\leq 
\Exu{(E,A_{\inn},*,*,E',U')\sim \cD_{\midd}}{\operatorname{Vrf}_{F,G}(E,A_{\inn},*,E',U')^{2}\,\Big|\,E,A_{\inn}}\tag{by Cauchy-Schwarz}\nonumber\\
&\leq \Exu{U'\in \mathsf{Gr}(\bF_{2}^{E},\ell_{2}+1)}{\Exu{(E,A_{\inn},*,*,E',U')\sim \cD_{\midd}}{\operatorname{Vrf}_{F,G}(E,A_{\inn},*,E',U')^{2}\,\Big|\,E,A_{\inn},U'}}+\\
&\qquad 4^{q_{1}+\ell_{2}+1}\beta\sqrt{k}+2^{\ell_{2}+1-3k}.
\end{align*}
We then apply~\Cref{eq:preparing_middle_smoothness_2} to the above and get
\begin{align}
\operatorname{Vrf}_{F,G}(E,A_{\inn},*,*,*)^{2}
&\leq \Exu{U'\in \mathsf{Gr}(\bF_{2}^{E},\ell_{2}+1)}{\Pru{U_{1},U_{2}\in \mathsf{Gr}(U',\ell_{2})}{F[E,A_{\inn},U_{1}]=F[E,A_{\inn},U_{2}]\neq \nil\Big.}}+\nonumber\\
&\qquad 4^{q_{1}+\ell_{2}+1}\beta\left(4+\sqrt{k}\right)+2^{\ell_{2}+1-3k}.\label{eq:preparing_middle_smoothness}
\end{align}

For any fixed $U'\in \mathsf{Gr}(\bF_{2}^{E},\ell_{2}+1)$, two independent samples $U_{1},U_{2}\in\mathsf{Gr}(U',\ell_{2})$ are equal to each other with probability $(2^{\ell_{2}+1}-1)^{-1}$. Furthermore, if $U'\in \mathsf{Gr}(\bF_{2}^{E},\ell_{2}+1)$ is sampled uniformly at random and $U_{1},U_{2}\in\mathsf{Gr}(U',\ell_{2})$ are sampled uniformly conditioned on $U_{1}\neq U_{2}$, the distribution of $(U_{1},U_{2})$ is uniform over the edges of the Grassmann graph $\mathsf{Gr}(\bF_{2}^{E},\ell_{2})$. Therefore, \Cref{eq:preparing_middle_smoothness} implies
\begin{align*}
\operatorname{Vrf}_{F,G}(E,A_{\inn},*,*,*)^{2}
&\leq \Pru{(U_{1},U_{2})\sim \cG(\bF_{2}^{E},\ell_{2})}{F[E,A_{\inn},U_{1}]=F[E,A_{\inn},U_{2}]\neq \nil \Big.}+\frac{1}{2^{\ell_{2}+1}-1}\\
& +4^{q_{1}+\ell_{2}+1}\beta\left(4+\sqrt{k}\right)+2^{\ell_{2}+1-3k}\\
&\leq \Pru{(U_{1},U_{2})\sim \cG(\bF_{2}^{E},\ell_{2})}{F[E,A_{\inn},U_{1}]=F[E,A_{\inn},U_{2}]\neq \nil\Big.}+2^{-\ell_{2}},
\end{align*}
where we used~\Cref{cond:parameters} in the last transition.
\end{proof}

\subsubsection{Invoking the Grassmann Decoding Theorem}

We are now ready to apply the Grassmann decoding theorem to Alice's table. The advice $A_{\midd}$, which played no real role in the definition of $\GameBa$, helps us state the following lemma.

\begin{lemma}\label{lem:invoked_middle}
Assume~\Cref{cond:parameters}. Let $F[\cdot]$ be an Alice-table and $G[\cdot]$ be a Bob-table. For any $E\in \scrE(H,k)$ and any $A_{\inn}\in (\bF_{2}^{E})^{q_{1}}$ such that $\operatorname{Vrf}_{F,G}(E,A_{\inn},*,*,*)\geq 2^{-q_{2}/3}$, the following holds: with probability at least $4^{-\ell_{2}^{2}}$ over $(E,A_{\inn},A,*,*,*)\sim \cD_{\midd}$ conditioned on $E$ and $A_{\inn}$, there exists an \textup{NAE}-satisfying symmetric bilinear map $f_{A}:\bF_{2}^{E}\otimes \bF_{2}^{E}\rightarrow \bF_{2}$ of rank at most $2q_{2}+1$ satisfying
\[
\Pru{(E,A_{\inn},A,U,*,*)\sim \cD_{\midd}}{F[E,A_{\inn},U]=f_{A}\,\Big|\,E,A_{\inn},A}\geq 2^{-\ell_{2}^{2}}.
\]
\end{lemma}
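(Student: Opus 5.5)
The plan is to feed Alice's table, for fixed $(E,A_{\inn})$, into the Grassmann equality decoding theorem (\Cref{thm:grdecoding2}). The advice $A_{\midd}$ will play the role of the zoom-in $Q$, and the decoded constant value $\sigma$ will serve as $f_A$.

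\textbf{Step 1: the equality test passes.} By \Cref{lem:preparing_middle} and the hypothesis $\operatorname{Vrf}_{F,G}(E,A_{\inn},*,*,*)\geq 2^{-q_2/3}$, the Grassmann equality test on $U\mapsto F[E,A_{\inn},U]$ over $\mathsf{Gr}(\bF_2^E,\ell_2)$ passes with probability at least
\[
2^{-2q_2/3}-2^{-\ell_2}\geq 2^{-q_2+4},
\]
using $\ell_2\gg q_2$ and $q_2$ large. Here the alphabet $\Sigma$ is the set of NAE-satisfying symmetric bilinear forms on $\bF_2^E$. Since $n=|E|=3k\gg \ell_2\geq q_2+3$, \Cref{thm:grdecoding2} applies with $(n,\ell,q)=(3k,\ell_2,q_2)$. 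It yields that for at least a $2^{-q_2-\ell_2^2}$ fraction of $Q\in\mathsf{Gr}(\bF_2^E,q_2)$ there exist $\sigma_Q\in\Sigma$ and $W_Q\supseteq Q$ of codimension $q_2$ such that
\[
\Pr_{U}\bigl[F[E,A_{\inn},U]=\sigma_Q \,\big|\, Q\subseteq U\subseteq W_Q\bigr]\geq 2^{-Cq_2^2}.
\]

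\textbf{Step 2: transfer to the advice distribution.} Conditioned on $(E,A_{\inn})$, two facts hold.
\begin{itemize}
\item By \Cref{prop:covering_outer_2}, $A_{\midd}$ is $4^{q_1+q_2}\beta\sqrt{k}$-close to uniform in $(\bF_2^E)^{q_2}$. Its entries are linearly independent except with probability $2^{q_2-3k}$, and in that case $\operatorname{span}(A_{\midd})$ is a uniform $q_2$-dimensional subspace. So with probability at least $2^{-q_2-\ell_2^2}-o(1)\geq 4^{-\ell_2^2}$, the span $Q=\operatorname{span}(A)$ is a good zoom-in. We then set $f_A=\sigma_Q$.
\item By \Cref{prop:covering_middle}, conditioned on $A$ the subspace $U$ is close to uniform among $\ell_2$-dimensional subspaces containing $Q$, with total variation distance $4^{q_1+\ell_2}\beta\sqrt{k}+2^{\ell_2-3k}$, which is negligible by \Cref{cond:parameters}.
\end{itemize}
For a uniform such $U$, we have $U\subseteq W_Q$ with probability at least $2^{-q_2(\ell_2-q_2)}$: extend a basis of $Q$ by $\ell_2-q_2$ random vectors, each lying in $W_Q$ with probability about $2^{-q_2}$. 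Hence
\[
\Pr\bigl[F[E,A_{\inn},U]=f_A \,\big|\, E,A_{\inn},A\bigr]\geq 2^{-Cq_2^2-q_2\ell_2}-o(1)\geq 2^{-\ell_2^2}.
\]
The value $f_A$ is NAE-satisfying and symmetric because it is a non-$\nil$ entry of an Alice-table.

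\textbf{Step 3: rank bound (main obstacle).} Every $U$ with $F[E,A_{\inn},U]=f_A$ satisfies $\operatorname{rank}(f_A|_{U\otimes U})\leq 1$. I will show $\operatorname{rank}(f_A|_{W_Q\otimes W_Q})\leq 1$. Since $W_Q$ has codimension $q_2$, this gives $\operatorname{rank}(f_A)\leq 1+2q_2$.

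Suppose instead that $f_A|_{W_Q}$ has rank at least $2$. Write $U=Q+\operatorname{span}(u_1,\dots,u_{\ell_2-q_2})$ with the $u_i$ uniform in $W_Q$; conditioning on linear independence costs only a negligible factor. Group the $u_i$ into $m=\lfloor(\ell_2-q_2)/4\rfloor$ disjoint quadruples $(u,u',v,v')$. If $\operatorname{rank}(f_A|_{U\otimes U})\leq 1$, then every $2\times 2$ minor
\[
f_A(u,v)f_A(u',v')-f_A(u,v')f_A(u',v)
\]
vanishes. This minor is a nonzero polynomial of degree $4$ in the coordinates of $(u,u',v,v')$: for instance, take $u=v$, $u'=v'$ to be vectors on which a rank-$2$ block of $f_A$ is nondegenerate. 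By Schwartz–Zippel over $\bF_2$, it is nonzero with probability at least $2^{-4}$. Independence across the quadruples then bounds the probability that all minors vanish by $(1-2^{-4})^{m}=2^{-\Omega(\ell_2)}$.

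This is smaller than $2^{-Cq_2^2-q_2\ell_2}$ only if the decay constant exceeds $q_2$, which a single minor per quadruple does not give. So I expect to strengthen the count: use that a rank-$\geq 2$ form restricted to random $u_1,\dots,u_j$ has rank $\leq 1$ only with probability $2^{-\Omega(j)\cdot j}$-type decay, i.e.\ count all minors among the $u_i$ rather than disjoint ones. The alternative is to compare directly with the $2^{-Cq_2^2}$ density inside $Q\subseteq U\subseteq W_Q$ before paying the $2^{-q_2\ell_2}$ factor, which needs only $2^{-\Omega(\ell_2)}<2^{-Cq_2^2}$ and holds since $\ell_2\geq Cq_2^2$. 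That second route is the one I would pursue first. Either way, the contradiction shows that the rank of $f_A$ on $W_Q$ is at most $1$.
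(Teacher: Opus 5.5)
Your proposal is correct, and its overall structure is the paper's. Steps 1 and 2 match the paper's proof almost line by line: decode $U\mapsto F[E,A_{\inn},U]$ via \Cref{lem:preparing_middle} and \Cref{thm:grdecoding2}; use \Cref{prop:covering_outer_2} to make $\operatorname{span}(A_{\midd})$ a good zoom-in; and use \Cref{prop:covering_middle} together with the zoom-out probability $\Pr{U\subseteq W_{Q}\mid Q\subseteq U}$ to transfer the $2^{-Cq_{2}^{2}}$ density to the advice-conditioned distribution of $U$.

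In Step 3, your ``second route'' is exactly the paper's comparison, and it is the only one you need. Because your $u_{i}$ are drawn inside $W_{Q}$, your argument already bounds $\Pr{\operatorname{rank}(f_{A}|_{U\otimes U})\leq 1\mid Q\subseteq U\subseteq W_{Q}}$. This should be compared with the density $2^{-Cq_{2}^{2}}$ inside the same zoom-in/zoom-out, not with $2^{-Cq_{2}^{2}-q_{2}\ell_{2}}$. So the worry about needing faster decay, and the proposed strengthening, can simply be dropped.

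The one genuine difference is how the local rank bound is proved.
\begin{itemize}
\item The paper's \Cref{lem:rk1bound} restricts to $U_{1}\otimes U_{2}$ for two independent random subspaces of $W$ of dimension $\lfloor(\ell-q)/2\rfloor$. It then counts subspaces lying one dimension above $\ker(T_{g})$, obtaining the bound $2^{4-(\ell-q)/2}$.
\item You instead use disjoint quadruples and the observation that the $2\times 2$ minor $f(u,v)f(u',v')+f(u,v')f(u',v)$ is a nonzero $4$-linear form on $W_{Q}^{4}$. It is therefore nonzero with probability at least $2^{-4}$, giving $(15/16)^{\lfloor(\ell_{2}-q_{2})/4\rfloor}$, up to the factor $(1-2^{\ell_{2}+q_{2}-3k})^{-1}$ from conditioning on linear independence.
\end{itemize}
Your bound has a far worse constant in the exponent, but it is more elementary and still suffices. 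This is because \Cref{cond:parameters} allows $\ell_{2}\geq Cq_{2}^{2}$ with $C$ a large multiple of the decoding constant. The conclusion $\operatorname{rank}(f_{A})\leq 2q_{2}+1$ then follows from \Cref{prop:rank_inequality} exactly as you say.
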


\begin{proof}
We organize the proof into the following three steps.

\paragraph{Step 1: Grassmann decoding.} Since $\operatorname{Vrf}_{F,G}(E,A_{\inn},*,*,*)\geq 2^{-q_{2}/3}$,~\Cref{lem:preparing_middle} implies
\[
\Pru{(U_{1},U_{2})\sim \cG(\bF_{2}^{E},\ell_{2})}{F[E,A_{\inn},U_{1}]=F[E,A_{\inn},U_{2}]\neq \nil\Big.}\geq 2^{-2q_{2}/3}-2^{-\ell_{2}}\geq 2^{-q_{2}+4},
\]
where we used~\Cref{cond:parameters} in the last transition. Applying~\Cref{thm:grdecoding2} to the table $F[E,A_{\inn},\cdot]$ yields\footnote{We apply~\Cref{thm:grdecoding2} with $\Sigma$ equal to the collection of \textup{NAE}-satisfying symmetric bilinear maps from $\bF_{2}^{E}\otimes \bF_{2}^{E}$ to $\bF_{2}$.} a collection of subspaces $\cQ\subseteq \mathsf{Gr}(\bF_{2}^{E},q_{2})$ such that:
\begin{enumerate}[label=(\arabic*)]
\item $|\cQ|\geq 2^{-q_{2}-\ell_{2}^{2}}\left|\mathsf{Gr}(\bF_{2}^{E},q_{2})\right|$;
\item for each $Q\in \cQ$, there exists a subspace $W_{Q}\in \mathsf{Gr}(\bF_{2}^{E},3k-q_{2})$ and an \textup{NAE}-satisfying symmetric bilinear map $h_{Q}:\bF_{2}^{E}\otimes \bF_{2}^{E}\rightarrow \bF_{2}$ satisfying
\begin{equation}\label{eq:invoking_middle_zoomout_1}
\Pru{U\in \mathsf{Gr}(\bF_{2}^{E},\ell_{2})}{F[E,A_{\inn},U]=h_{Q}\,\Big|\,Q\subseteq U\subseteq W_{Q}}\geq 2^{-Cq_{2}^{2}},
\end{equation}
where $C$ is an absolute constant. For any $U\in \mathsf{Gr}(\bF_{2}^{E},\ell_{2})$ such that $F[E,A_{\inn},U]=h_{Q}$, by definition (see~\Cref{ntn:middle_Alice-table}) we must have
\(
\operatorname{rank}(h_{Q}|_{U\otimes U})\leq 1\). It thus follows from~\Cref{eq:invoking_middle_zoomout_1} that
\[
\Pru{U\in \mathsf{Gr}(\bF_{2}^{E},\ell_{2})}{\operatorname{rank}(h_{Q}|_{U\otimes U})\leq 1\,\big|\,Q\subseteq U\subseteq W_{Q}}\geq 2^{-Cq_{2}^{2}}>2^{4-(\ell_{2}-q_{2})/2},
\]
where we used~\Cref{cond:parameters} in the last transition. By~\Cref{lem:rk1bound}, we must have $\operatorname{rank}(h_{Q})\leq 2q_{2}+1$.
\end{enumerate}

For each $Q\in \cQ$, since $W_{Q}$ has codimension $q_{2}$ with respect to $\bF_{2}^{E}$, we have
\begin{equation}\label{eq:invoking_middle_zoomout_2}
\Pru{U\in \mathsf{Gr}(\bF_{2}^{E},\ell_{2})}{U\subseteq W_{Q}\,\big|\,Q\subseteq U}=\qbinom{3k-2q_{2}}{\ell_{2}-q_{2}}\qbinom{3k-q_{2}}{\ell_{2}-q_{2}}^{-1}\geq 2^{-\ell_{2}(q_{2}+1)}.
\end{equation}
Combining \Cref{eq:invoking_middle_zoomout_1} and~\Cref{eq:invoking_middle_zoomout_2} yields
\begin{equation}\label{eq:invoking_middle_zoomin}
\Pru{U\in \mathsf{Gr}(\bF_{2}^{E},\ell_{2})}{F[E,A_{\inn},U]=h_{Q}\,\Big|\,Q\subseteq U}\geq 2^{-Cq_{2}^{2}-\ell_{2}(q_{2}+1)}\geq 2^{1-\ell_{2}^{2}},
\end{equation}
where we used~\Cref{cond:parameters} in the last transition.

\paragraph{Step 2: good $Q$ translates to good $A$.} Now for any $A\in (\bF_{2}^{E})^{q_{2}}$, we let 
\[f_{A}=\begin{cases}
h_{\operatorname{span}(A)},&\text{if }\dim\operatorname{span}(A)=q_{2}\text{ and }\operatorname{span}(A)\in \cQ,\\
0,&\text{otherwise}.
\end{cases}\]
If $\dim\operatorname{span}(A)=q_{2}$ and $\operatorname{span}(A)\in \cQ$, then by~\Cref{eq:invoking_middle_zoomin} we have
\begin{equation}\label{eq:invoking_middle_zoomin_2}
\Pru{U\in \mathsf{Gr}(\bF_{2}^{E},\ell_{2})}{F[E,A_{\inn},U]=f_{A}\,\Big|\,\operatorname{span}(A)\subseteq U}\geq 2^{1-\ell_{2}^{2}}.
\end{equation}
By~\Cref{prop:covering_middle}, we know that conditioned on the fixed $E$, $A_{\inn}$ and $A_{\midd}=A$, the distribution of $U$ in the sample $(E,A_{\inn},A,U,*,*)\sim \cD_{\midd}$ is \(\bigl(4^{q_{1}+q_{2}+\ell_{2}}\beta\sqrt{k}+2^{\ell_{2}-3k}\bigr)\)-close to uniform over the set \(\big\{U\in \mathsf{Gr}(\mathbb F_{2}^{E},\ell_{2})\,\big|\, \operatorname{span}(A)\subseteq U\big\}\). Therefore,~\Cref{eq:invoking_middle_zoomin_2} implies
\[
\Pru{(E,A_{\inn},A,U,*,*)\sim \cD_{\midd}}{F[E,A_{\inn},U]=f_{A}\,\Big|\,E,A_{\inn},A}\geq 2^{1-\ell_{2}^{2}}-4^{q_{1}+q_{2}+\ell_{2}}\beta\sqrt{k}-2^{\ell_{2}-3k}\geq 2^{-\ell_{2}^{2}},
\]
where we used~\Cref{cond:parameters} in the last transition.

\paragraph{Step 3: abundance of good $A$'s.} It remains to show that
\[
\Pru{(E,A_{\inn},A,*,*,*)\sim \cD_{\midd}}{\dim\operatorname{span}(A)=q_{2}\text{ and }\operatorname{span}(A)\in \cQ\,\Big|\,E,A_{\inn}}\geq 4^{-\ell_{2}^{2}}.
\]
By~\Cref{prop:covering_outer_2}, it suffices to show
\[
\Pru{A\in (\bF_{2}^{E})^{q_{2}}}{\dim\operatorname{span}(A)=q_{2}\text{ and }\operatorname{span}(A)\in \cQ\Big.}\geq 4^{-\ell_{2}^{2}}+4^{q_{1}+q_{2}}\beta\sqrt{k}.
\]
But the left-hand side of the above display is clearly equal to
\[
\Pru{A\in (\bF_{2}^{E})^{q_{2}}}{\dim \operatorname{span}(A)=q_{2}\Big.}\cdot\frac{|\cQ|}{\left|\mathsf{Gr}(\bF_{2}^{E},q_{2})\right|}\geq \frac{1}{4}\cdot 2^{-q_{2}-\ell_{2}^{2}}\geq 4^{-\ell_{2}^{2}}+4^{q_{1}+q_{2}}\beta\sqrt{k},
\]
where we used~\Cref{cond:parameters} in the last transition.
\end{proof}

\subsubsection{Deriving Strategy for the Outer PCP}
We now use the structural results regarding Alice's strategy to complete the soundness analysis of $\GameBa$. Before we do so, fixing an Alice-table and a Bob-table, we prune the Alice-table and only keep the responses on questions which have noticeable success probability. This will ensure that whatever functions we decode from Alice's table have noticeable consistency with Bob's table.

\begin{definition}\label{def:middle_pruning}
Let $\eta\in (0,1)$ be a threshold parameter. Given an Alice-table $F[\cdot]$ and a Bob-table $G[\cdot]$ for $\GameBa[H,k,\ell_{2},q_{1},q_{2},\beta]$, consider the Alice-table $F'[\cdot]$ defined by
\[
F'[E,A_{\inn},U]=\begin{cases}
F[E,A_{\inn},U],&\text{if }\operatorname{Vrf}_{F,G}(E,A_{\inn},U,*,*)\geq \eta,\\
\texttt{nil},&\text{otherwise}.
\end{cases}
\]
The pair of tables $(F',G)$ is called the \emph{$\eta$-pruning} of $(F,G)$.
\end{definition}

The following straightforward proposition shows that $\eta$-pruning drops the success probability of Alice and Bob by at most $\eta$. 
\begin{proposition}\label{prop:middle_pruning}
Let $\eta\in (0,1)$ be a threshold parameter. Given an Alice-table $F[\cdot]$ and a Bob-table $G[\cdot]$ for $\GameBa[H,k,\ell_{2},q_{1},q_{2},\beta]$, let $(F',G)$ be the $\eta$-pruning of $(F,G)$. Then
\[
\operatorname{Vrf}_{F',G}\geq \operatorname{Vrf}_{F,G}-\eta.\]
\end{proposition}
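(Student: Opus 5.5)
The plan is to write the acceptance probability as an average over Alice-questions and compare the two tables question by question. Since the verifier's decision depends only on $(E,A_{\inn},U,E',U')$, and Bob's table is unchanged, we have
\[
\operatorname{Vrf}_{F,G}=\Exu{(E,A_{\inn},*,U,*,*)\sim \cD_{\midd}}{\operatorname{Vrf}_{F,G}(E,A_{\inn},U,*,*)},
\]
and likewise for $F'$. For each fixed Alice-question $(E,A_{\inn},U)$ we then compare $\operatorname{Vrf}_{F',G}(E,A_{\inn},U,*,*)$ with $\operatorname{Vrf}_{F,G}(E,A_{\inn},U,*,*)$.

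There are two cases. If $\operatorname{Vrf}_{F,G}(E,A_{\inn},U,*,*)\geq\eta$, then $F'$ and $F$ agree on this entry, so the two conditional acceptance probabilities are equal. Otherwise $F'[E,A_{\inn},U]=\nil$, and by the convention for $\nil$ the verifier always rejects, so $\operatorname{Vrf}_{F',G}(E,A_{\inn},U,*,*)=0$. In this case the original value was below $\eta$, so the loss on this question is less than $\eta$.

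It follows that, pointwise in the Alice-question,
\[
\operatorname{Vrf}_{F',G}(E,A_{\inn},U,*,*)\geq \operatorname{Vrf}_{F,G}(E,A_{\inn},U,*,*)-\eta.
\]
Averaging over $(E,A_{\inn},U)$ gives the claim.

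Two small points need checking, and I expect no real obstacle. First, $F'$ is a valid Alice-table, since each of its entries is either $\nil$ or an unchanged entry of $F$. Second, the conditional quantity $\operatorname{Vrf}_{F,G}(E,A_{\inn},U,*,*)$ is well defined for every question in the support of $\cD_{\midd}$, because it is the conditional expectation over $(E',U')$ given Alice's question.
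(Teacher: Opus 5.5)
Your proposal is correct and follows essentially the same argument as the paper: the pointwise bound $\operatorname{Vrf}_{F',G}(E,A_{\inn},U,*,*)\geq \operatorname{Vrf}_{F,G}(E,A_{\inn},U,*,*)-\eta$ for every Alice-question, followed by averaging over $(E,A_{\inn},*,U,*,*)\sim\cD_{\midd}$. Your explicit case split (the entry is kept, or it is pruned to $\nil$ and the verifier rejects by convention) is exactly what the paper leaves implicit in ``it is easy to see.''
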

\begin{proof}
For any $E\in \scrE(H,k)$, any $A_{\inn}\in (\bF_{2}^{E})^{q_{1}}$ and any subspace $U\subseteq \bF_{2}^{E}$ with $\dim U\leq \ell_{2}$, from~\Cref{def:middle_pruning} it is easy to see that
\[
\operatorname{Vrf}_{F',G}(E,A_{\inn},U,*,*)\geq \operatorname{Vrf}_{F,G}(E,A_{\inn},U,*,*)-\eta.
\]
Taking expectation over $(E,A_{\inn},*,U,*,*)\sim \cD_{\midd}$ then yields the desired conclusion.
\end{proof}

We are now ready to state and prove the soundness of $\GameBa$.

\begin{lemma}[Game2a soundness]\label{lem:Game2a_soundness}
Assume~\Cref{cond:parameters}. Suppose that in every $4^{r}$-coloring of $H$, at least $\varepsilon$ fraction of the edges are monochromatic. Then we have
\[
\operatorname{val}\Bigl(\GameBa[H,k,\ell_{2},q_{1},q_{2},\beta]\Bigr)\leq 2^{-q_{2}/4}.
\]
\end{lemma}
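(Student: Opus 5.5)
Assume for contradiction that tables $F[\cdot],G[\cdot]$ achieve $\operatorname{Vrf}_{F,G}> 2^{-q_2/4}$. The plan is to turn them into a strategy for $\GameAb[H,k,q_1,q_2,r,\beta]$ with $r=2q_2+1$ whose value exceeds $k^{-1/2}$, which contradicts \Cref{lem:Game1b_soundness}.

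First I apply $\eta$-pruning with $\eta=2^{-q_2/4}/2$. By \Cref{prop:middle_pruning}, this gives $\operatorname{Vrf}_{F',G}\geq 2^{-q_2/4}/2$. By averaging, at least a $2^{-q_2/4}/4$ fraction of pairs $(E,A_{\inn})$ satisfy $\operatorname{Vrf}_{F',G}(E,A_{\inn},*,*,*)\geq 2^{-q_2/4}/4\geq 2^{-q_2/3}$. For each such pair, \Cref{lem:invoked_middle} applied to $(F',G)$ says that with probability at least $4^{-\ell_2^2}$ over $A_{\midd}=A$, there is an NAE-satisfying symmetric $f_A$ of rank at most $2q_2+1$ such that $F'[E,A_{\inn},U]=f_A$ with probability at least $2^{-\ell_2^2}$ over $U$ conditioned on $(E,A_{\inn},A)$.

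\textbf{Strategies for $\GameAb$.} On input $(E,A_{\inn},A_{\midd})$, Alice answers $f_{A_{\midd}}$ (any fixed rank-$1$ NAE form from \Cref{cor:NAE_rank1_exists} if none exists). On input $(E',A_{\inn},A_{\midd})$, Bob samples $U'$ exactly as the $\GameBa$ verifier does given $(E',A_{\inn},A_{\midd})$: the extra vectors and $x$ are drawn from $\bF_2^{E'}$, so Bob can simulate this. He then answers $G[E',A_{\inn},U']$.

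\textbf{Analysis.} The joint law of $(E,A_{\inn},A_{\midd},U,E',U')$ produced this way is exactly $\cD_{\midd}$. Fix a good $(E,A_{\inn},A)$ and condition on it. With probability at least $2^{-\ell_2^2}$ over $U$ we have $F'[E,A_{\inn},U]=f_A\neq\nil$. By the pruning threshold, every such $U$ has $\operatorname{Vrf}_{F,G}(E,A_{\inn},U,*,*)\geq\eta$. The main obstacle is here: the pruning guarantee conditions only on $(E,A_{\inn},U)$, not on $A_{\midd}$. I will handle it by noting that $A_{\midd}$ is determined by $U$ only through being a spanning subsequence, and the verifier's acceptance ignores $A_{\midd}$. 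So I should prune with respect to the conditional law given $(E,A_{\inn},A_{\midd},U)$, i.e.\ define $\operatorname{Vrf}$ to condition on $A_{\midd}$ as well. \Cref{lem:invoked_middle} and \Cref{prop:middle_pruning} go through verbatim with this change. Alternatively, I can check that conditioned on $(E,A_{\inn},U)$, the law of $(E',U')$ is unaffected by additionally fixing $A_{\midd}\subseteq U$; this holds because $A_{\midd}$ and $A'$ play symmetric roles in forming $U$.

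Granting this, conditioned on $(E,A_{\inn},A,U)$ the pair $(E',U')$ satisfies $f_A|_{\bF_2^{E'}\otimes\bF_2^{E'}}=G[E',A_{\inn},U']$ with probability at least $\eta$. Hence the $\GameAb$ acceptance probability is at least
\[
2^{-q_2/4}/4\cdot 4^{-\ell_2^2}\cdot 2^{-\ell_2^2}\cdot 2^{-q_2/4}/2,
\]
which by \Cref{cond:parameters} ($k\geq\exp(\exp(\ell_2))$) exceeds $k^{-1/2}$. The hypothesis on $4^{r}$-colorings with $r=2q_2+1$ is exactly what \Cref{lem:Game1b_soundness} requires, so this is a contradiction.
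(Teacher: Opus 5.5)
Your proposal is correct and is essentially the paper's proof. You prune $(F,G)$, use \Cref{lem:invoked_middle} to get, for a noticeable fraction of $A_{\midd}$, an \textup{NAE}-satisfying $f_{A_{\midd}}$ of rank at most $2q_2+1=r$ for Alice to output, let Bob resample $U'$ from $(E',A_{\inn},A_{\midd})$ and answer $G[E',A_{\inn},U']$, and contradict \Cref{lem:Game1b_soundness}; only the pruning threshold differs. The conditional-independence issue you flag is real, and the paper uses it silently when it rewrites the acceptance probability as $\bE\bigl[\ind{F'[E,A_{\inn},U]=f}\cdot\Vrf_{F',G}(E,A_{\inn},U,*,*)\,\big|\,E,A_{\inn},A_{\midd}\bigr]$. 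Your second justification is the right one: given $E'$, the sequence $(A_{\midd},A')$ conditioned on spanning $U\subseteq\bF_2^{E'}$ is uniform over the spanning sequences of $U$ regardless of $E'$, so $(E',U')$ is independent of $A_{\midd}$ given $(E,A_{\inn},U)$. Your first fix, by contrast, would not go through verbatim, because a pruning that depends on $A_{\midd}$ no longer produces a table indexed by Alice's question $(E,A_{\inn},U)$.
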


\begin{proof}
Assume on the contrary that the value of $\GameBa[H,k,\ell_{2},q_{1},q_{2},\beta]$ is larger than $2^{-q_{2}/4}$. Then by definition, there are an Alice-table $F[\cdot]$ and a Bob-table $G[\cdot]$ for $\GameBa[H,k,\ell_{2},q_{1},q_{2},\beta]$ such that \(\operatorname{Vrf}_{F,G}>2^{-q_{2}/4}\). Let $(F',G)$ be the $(2^{1-q_{2}/3})$-pruning of $(F,G)$. By~\Cref{prop:middle_pruning} we have
\[
\operatorname{Vrf}_{F',G}\geq 2^{-q_{2}/4}-2^{1-q_{2}/3}\geq 2^{1-q_{2}/3},
\]
where we used $q_{2}\geq 24$ (as guaranteed by~\Cref{cond:parameters}).

\paragraph{Step 1: strategy for Game1b.} For each $E\in \scrE(H,k)$ and each possible sample $(E',A_{\inn},A_{\midd})$ from $\cJ_{\outt}(E,q_{1},q_{2},\beta)$, let $\cF[E,A_{\inn},A_{\midd}]$ be the collection of all \textup{NAE}-satisfying symmetric bilinear maps $f:\bF_{2}^{E}\otimes \bF_{2}^{E}\rightarrow \bF_{2}$ such that $\operatorname{rank}(f)\leq 2q_{2}+1$ and
\[
\Pru{(E,A_{\inn},A_{\midd},U,*,*)\sim \cD_{\midd}}{F'[E,A_{\inn},U]=f\,\Big|\,E,A_{\inn},A_{\midd}}\geq 2^{-\ell_{2}^{2}}.
\]
We now derive strategies for the provers in $\GameAb[H,k,q_{1},q_{2},r,\beta]$ from the tables $F'[\cdot]$ and $G[\cdot]$. The strategy for Alice is as follows:
\begin{enumerate}
\item Receive the question $(E,A_{\inn},A_{\midd})$. 
\item If $\cF[E,A_{\inn},A_{\midd}]$ is empty, return an arbitrary NAE-satisfying symmetric bilinear map $f:\bF_{2}^{E}\otimes \bF_{2}^{E}\rightarrow \bF_{2}$ of rank $1$ (see~\Cref{cor:NAE_rank1_exists}). 
\item If $\cF[E,A_{\inn},A_{\midd}]$ is nonempty, return an arbitrary NAE-satisfying symmetric bilinear map $f\in \cF[E,A_{\inn},A_{\midd}]$. By the definition of $\cF[E,A_{\inn},A_{\midd}]$, we have $\operatorname{rank}(f)\leq 2q_{2}+1=r$, where the last transition is due to~\Cref{cond:parameters}.
\end{enumerate}
The strategy for Bob is:
\begin{enumerate}
\item Receive the question $(E',A_{\inn},A_{\midd})$.
\item Conditioned on $E'$, $A_{\inn}$ and $A_{\midd}$, sample $(*,A_{\inn},A_{\midd},*,E',U')\sim \cD_{\midd}$.
\item Return the bilinear map $G[E',A_{\inn},U']$ (note that by definition, the table $G[\cdot]$ does not contain $\nil$ entries).
\end{enumerate}

\paragraph{Step 2: analysis of the strategy.} Since
\[
\Exu{(E,A_{\inn},*,*,*,*)\sim \cD_{\midd}}{\operatorname{Vrf}_{F',G}(E,A_{\inn},*,*,*)}=\operatorname{Vrf}_{F',G}\geq2^{1-q_{2}/3},
\]
we have\footnote{Here we use the fact that $\operatorname{Vrf}_{F',G}(E,A_{\inn},*,*,*)$ always lies in $[0,1]$.}
\begin{equation}\label{eq:middle_soundness_1}
\Pru{(E,A_{\inn},*,*,*,*)\sim \cD_{\midd}}{\operatorname{Vrf}_{F',G}(E,A_{\inn},*,*,*)\geq 2^{-q_{2}/3}}\geq 2^{-q_{2}/3}.
\end{equation}
For any $E\in \scrE(H,k)$ and $A_{\inn}\in (\bF_{2}^{E})^{q_{1}}$ such that $\operatorname{Vrf}_{F',G}(E,A_{\inn},*,*,*)\geq 2^{-q_{2}/3}$, we know from~\Cref{lem:invoked_middle} that
\[
\Pru{(E,A_{\inn},A_{\midd},*,*,*)\sim \cD_{\midd}}{\cF[E,A_{\inn},A_{\midd}]\neq \emptyset\,\Big|\,E,A_{\inn}}\geq 4^{-\ell_{2}^{2}}.
\]
Combining with~\Cref{eq:middle_soundness_1} yields
\begin{equation}\label{eq:middle_soundness_2}
\Pru{(E,A_{\inn},A_{\midd},*,*,*)\sim \cD_{\midd}}{\cF[E,A_{\inn},A_{\midd}]\neq \emptyset\Big.}\geq 2^{-2\ell_{2}^{2}-q_{2}/3}.
\end{equation}

Given a tuple $(E,A_{\inn},A_{\midd})$ and an element $f\in \cF[E,A_{\inn},A_{\midd}]$, we have
\begin{align*}
&\quad\Pru{(E,A_{\inn},A_{\midd},*,E',U')\sim \cD_{\midd}}{G[E',A_{\inn},U']=\left. f\right|_{\bF_{2}^{E'}\otimes \bF_{2}^{E'}}\,\middle|\, E,A_{\inn},A_{\midd}}\\
&\geq \Pru{(E,A_{\inn},A_{\midd},U,E',U')\sim \cD_{\midd}}{F'[E,A_{\inn},U]=f\text{ and }\operatorname{Vrf}_{F',G}(E,A_{\inn},U,E',U')=1\,\Big|\,E,A_{\inn},A_{\midd}}\\
&=\Exu{(E,A_{\inn},A_{\midd},U,*,*)\sim \cD_{\midd}}{\ind{F'[E,A_{\inn},U]=f}\cdot \operatorname{Vrf}_{F',G}(E,A_{\inn},U,*,*)\,\Big|\,E,A_{\inn},A_{\midd}}\\
&\geq 2^{1-q_{2}/3}\cdot \Pru{(E,A_{\inn},A_{\midd},U,*,*)\sim \cD_{\midd}}{F'[E,A_{\inn},U]=f\,\Big|\,E,A_{\inn},A_{\midd}}\tag{because $(F',G)$ is the $(2^{1-q_{2}/3})$-pruning of $(F,G)$}\\
&\geq 2^{1-\ell_{2}^{2}-q_{2}/3}.\tag{since $f\in \cF[E,A_{\inn},A_{\midd}]$}
\end{align*}
Importantly, conditioned on $E$, $A_{\inn}$, $A_{\midd}$ and $E'$, the distribution of $U'$ in a conditional sample $(E,A_{\inn},A_{\midd},*,E',U')\sim \cD_{\midd}$ does not depend on $E$. Therefore, the above display implies that under the strategies presented in Step 1 of the proof, for any tuple $(E,A_{\inn},A_{\midd})$ with any element $f\in \cF[E,A_{\inn},A_{\midd}]$, the following expectation is at least $2^{1-\ell_{2}^{2}-q_{2}/3}$:
\[
\Exu{(E,A_{\inn},A_{\midd},*,E',*)\sim \cD_{\midd}}{\Pru{\text{Bob}}{\text{Bob returns }\left.f\right|_{\bF_{2}^{E'}\otimes \bF_{2}^{E'}}\text{ on question }(E',A_{\inn},A_{\midd})}\,\middle|\,E,A_{\inn},A_{\midd}}.
\]
Combining with~\Cref{eq:middle_soundness_2}, this implies that\footnote{Here we use $\textup{Alice}(E,A_{\inn},A_{\midd})$ to denote Alice's answer to the question $(E,A_{\inn},A_{\midd})$, and $\textup{Bob}(E',A_{\inn},A_{\midd})$ denotes Bob's answer to $(E',A_{\inn},A_{\midd})$.} 
\[\Exu{(E,A_{\inn},A_{\midd},*,E',*)\sim \cD_{\midd}}{\Pru{\textup{Bob}}{\left.\textup{Alice}(E,A_{\inn},A_{\midd})\right|_{\bF_{2}^{E'}\otimes \bF_{2}^{E'}}=\textup{Bob}(E',A_{\inn},A_{\midd})}}\]
is at least
\[2^{-2\ell_{2}^{2}-q_{2}/3}\cdot 2^{1-\ell_{2}^{2}-q_{2}/3}> k^{-1/2},\]
where we used~\Cref{cond:parameters} in the last transition. Since the distribution of $(E,A_{\inn},A_{\midd},E')$ in $\GameAb[H,k,q_{1},q_{2},r,\beta]$ is identical to the distribution of the sample $(E,A_{\inn},A_{\midd},*,E',*)\sim \cD_{\midd}$, it follows that our prescribed strategies pass $\GameAb[H,k,q_{1},q_{2},r,\beta]$ with probability 
\[
\operatorname{val}\Bigl(\GameAb[H,k,q_{1},q_{2},r,\beta]\Bigr)>k^{-1/2},
\]
contradicting~\Cref{lem:Game1b_soundness}.
\end{proof}

\subsection{Lazy Parallel Repetition}\label{subsec:Game2b}
In this subsection, we apply a ``lazy parallel repetition'' to $\GameBa$ to get $\GameBb$. More precisely, we consider the following $2$-prover $1$-round game, in which Alice and Bob get $t$-blocks of challenges, one of which is a true challenge from $\GameBa$, whereas on each one of the rest of the blocks they receive the same randomly sampled challenge of Alice from $\GameBa$. Additionally, on each block the players receive a list of advice vectors.

\begin{definition}\label{def:Game2b}
Given positive integers $k,\ell_{2},t$, nonnegative integers $q_{0},q_{1}$, a parameter $\beta\in [0,1)$ and a $3$-uniform hypergraph $H$, let $\GameBb[H,k,\ell_{2},t,q_{0},q_{1},\beta]$ be a $2$-Prover $1$-Round game where the verifier executes the following:
\begin{enumerate}
\item For each $j\in [t]$, sample a uniformly random tripled set $E_{j}\in \scrE(H,k)$.
\item Sample a uniformly random index $c\in [t]$. Sample $(E_{c}',A_{c},\emp,U_{c})\sim \cJ_{\midd}(E_{c},q_{1},0,\beta,\ell_{2})$. Sample a random vector $x\in \bF_{2}^{E'_{c}}$ and let $U'_{c}=U_{c}+\operatorname{span}(x)$.
\item For each $j\in [t]\setminus \{c\}$, do the following:
\begin{itemize}
\item Sample a uniformly random advice sequence $A_{j}=\bigl(a^{(j,1)},\dots,a^{(j,q_{0})}\bigr)\in \bigl(\bF_{2}^{E_{j}}\otimes \bF_{2}^{E_{j}}\bigr)^{q_{0}}$, i.e. $a^{(j,i)}$ is a uniformly random vector in $\bF_{2}^{E_{j}}\otimes \bF_{2}^{E_{j}}$ for each $i\in [q_{0}]$. 
\item Sample $(E_{j}^{\mathrm{ghost}},\emp,\emp,U_{j})\sim \cJ_{\midd}(E_{j},0,0,\beta,\ell_{2})$.
\item Let $E_{j}'=E_{j}$ and $U_{j}'=U_{j}$.
\end{itemize}
\item Denote sequences
\begin{alignat*}{3}
\cE&=(E_{1},\dots,E_{t}),\qquad& \cU&=(U_{1},\dots,U_{t}),\qquad& \cA&=(A_{1},\dots,A_{t}),\qquad\text{and}\\
\cE'&=(E_{1}',\dots,E_{t}'),\qquad&\cU'&=(U_{1}',\dots,U_{t}').
\end{alignat*}
\item Send $(\cE,\cA,\cU)$ to Alice and $(c,\cE',\cA,\cU')$ to Bob.
\item Alice must return NAE-satisfying symmetric bilinear forms 
\[f_{j}:\bF_{2}^{E_{j}}\otimes\bF_2^{ E_{j}}\rightarrow \bF_{2}\text{ such that }\operatorname{rank}(f_{j}|_{U_{j}\otimes U_{j}})\leq 1\qquad\text{for each }j\in [t];\]
Bob must return bilinear forms 
\[
g_{j}:\bF_{2}^{E'_{j}}\otimes\bF_{2}^{E'_{j}}\rightarrow \bF_{2}\qquad\text{for each }j\in [t]. 
\]
Accept if $g_{j}$ is the restriction of $f_{j}$ to $\bF_{2}^{E'_{j}}\otimes\bF_{2}^{E'_{j}}$ for each $j\in [t]$.
\end{enumerate}
\end{definition}
If we ignore the lists $A_j$, the game we have above is exactly the lazy parallel repetition version of $\GameBa$. Lazy parallel repetition is well-known to preserve completeness and soundness (but not improve them), as well as to improve the ``smoothness'' of a game. We include the lists $A_j$ to enable the advice feature in the inner PCP, and we show that they do not hurt the soundness of the construction (they have no effect on other features of the construction). 

The completeness of $\GameBb$ is again straightforward:

\begin{lemma}[Game2b completeness]\label{lem:Game2b_completeness} 
For any positive integers $k,\ell_{2},t$ such that $k\geq \ell_{2}$, nonnegative integers $q_{0},q_{1}$, parameter $\beta\in [0,1)$ and 2-colorable 3-uniform hypergraph $H$, we have
\[
\operatorname{val}\Bigl(\GameBb[H,k,\ell_{2},t,q_{0},q_{1},\beta]\Bigr)=1.
\]
\end{lemma}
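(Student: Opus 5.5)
The plan is to exhibit a single global strategy that ignores all advice, tensor-test subspaces, and ghost thinnings, exactly as in the proofs of \Cref{lem:Game1b_completeness} and \Cref{lem:Game2a_completeness}. Since $H$ is $2$-colorable, \Cref{cor:2color_NAE_quad} provides a rank-$1$ symmetric bilinear form
\[
f:\bF_{2}^{\scrV(H,k)}\otimes \bF_{2}^{\scrV(H,k)}\rightarrow \bF_{2}
\]
such that $f|_{\bF_{2}^{E}\otimes \bF_{2}^{E}}$ is \textup{NAE}-satisfying for every $E\in\scrE(H,k)$.

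\textbf{Alice's strategy.} On a question $(\cE,\cA,\cU)$, Alice returns $f_{j}=f|_{\bF_{2}^{E_{j}}\otimes \bF_{2}^{E_{j}}}$ for each $j\in[t]$. Each $f_{j}$ is symmetric and \textup{NAE}-satisfying by \Cref{cor:2color_NAE_quad}. Since $f$ has rank $1$, any restriction of it has rank at most $1$, so
\[
\operatorname{rank}\bigl(f_{j}|_{U_{j}\otimes U_{j}}\bigr)\leq 1\qquad\text{for every }j\in[t].
\]
Hence Alice's answer is always admissible.

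\textbf{Bob's strategy.} On a question $(c,\cE',\cA,\cU')$, Bob returns $g_{j}=f|_{\bF_{2}^{E'_{j}}\otimes \bF_{2}^{E'_{j}}}$ for each $j\in[t]$.

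\textbf{Acceptance.} The verifier checks that $g_{j}=f_{j}|_{\bF_{2}^{E'_{j}}\otimes \bF_{2}^{E'_{j}}}$ for every $j$. This holds identically in all cases:
\begin{enumerate}[label=(\arabic*)]
\item For $j=c$, we have $E'_{c}\subseteq E_{c}$, and restricting $f$ first to $E_{c}$ and then to $E'_{c}$ equals restricting $f$ directly to $E'_{c}$.
\item For $j\neq c$, we have $E'_{j}=E_{j}$, so $g_{j}=f_{j}$.
\end{enumerate}
Therefore the verifier accepts with probability $1$, and the value of $\GameBb[H,k,\ell_{2},t,q_{0},q_{1},\beta]$ equals $1$.

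\textbf{Hypothesis $k\geq\ell_{2}$.} This condition is only needed so that the sampling of $U_{j}$ via $\cJ_{\midd}$ is well defined (since $|E_{j}|=3k\geq \ell_{2}$); it plays no other role in the argument.

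\textbf{Obstacle.} There is no real obstacle here. The only point needing care is that the rank condition is imposed on Alice's answer only on $U_{j}\otimes U_{j}$, and this follows immediately from the fact that restriction cannot increase rank.
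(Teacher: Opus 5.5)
Your proposal is correct and is essentially the paper's argument: the paper invokes the value-$1$ strategies from \Cref{lem:Game2a_completeness} blockwise (Alice's strategy on every block; for Bob, Bob's strategy on the critical block and Alice's on the others), and those strategies are exactly the restrictions of the global rank-$1$ form from \Cref{cor:2color_NAE_quad}, which you inline directly. One minor correction to your aside: with your direct construction the hypothesis $k\geq \ell_{2}$ is never used, and it is not needed for well-definedness either, since the span defining $U_{j}$ in $\cJ_{\midd}$ is always defined; in the paper it is there only to match the hypotheses of \Cref{lem:Game2a_completeness}.
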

\begin{proof}
    We fix strategies from~\Cref{lem:Game2a_completeness} that achieve value $1$. Alice ignores the lists $A_j$ and answers each block $j\in [t]$ separately according to her strategy from $\GameBa$. Bob answers according to Alice's strategy for $\GameBa$ on the noncritical blocks $j\in [t]\setminus \{c\}$, and according to Bob's strategy for $\GameBa$ on the critical block $j=c$.
\end{proof}

The following lemma establishes the soundness of $\GameBb$.

\begin{lemma}[Game2b soundness]\label{lem:Game2b_soundness}
Assume~\Cref{cond:parameters}. Suppose that in every $4^{r}$-coloring of $H$, at least $\varepsilon$ fraction of the edges are monochromatic. Then we have
\[
\operatorname{val}\Bigl(\GameBb[H,k,\ell_{2},t,q_{0},q_{1},\beta]\Bigr)\leq 2^{-q_{2}/4}.
\]
\end{lemma}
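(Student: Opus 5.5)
We reduce \Cref{lem:Game2b_soundness} to \Cref{lem:Game2a_soundness} by embedding a $\GameBa$ instance into the critical block. The argument shows that any strategy pair for $\GameBb$ yields a strategy pair for $\GameBa[H,k,\ell_{2},q_{1},q_{2},\beta]$ with at least the same value. Since the value of $\GameBa$ does not depend on $q_{2}$, the bound $2^{-q_{2}/4}$ then transfers directly.

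The first step is to note what the provers see on the critical block. The question pair there, $((E_c,A_c,U_c),(E_c',A_c,U_c'))$, is distributed exactly as a $\GameBa$ question pair $((E,A_{\inn},U),(E',A_{\inn},U'))$, by \Cref{def:Game2} with $A_{\midd}$ discarded, since in $\GameBb$ we sample $\cJ_{\midd}(E_c,q_1,0,\beta,\ell_2)$ and $A_{\midd}$ is irrelevant. The noncritical blocks $j\neq c$ are sampled independently of the critical block. On such a block, both provers receive the same data $(E_j,A_j,U_j)$, and the ghost thinning $E_j^{\mathrm{ghost}}$ is never revealed.

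The second step builds the $\GameBa$ strategy. Given $\GameBb$ tables $F,G$, we average over $c$ and over the noncritical blocks to fix them. Concretely, we choose $c^\ast\in[t]$ and noncritical data $\{(E_j,A_j,U_j)\}_{j\neq c^\ast}$ that maximize the conditional acceptance probability; the maximum is at least the average, which is $\operatorname{val}(\GameBb)$ for the given tables.

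On a $\GameBa$ question $(E,A_{\inn},U)$, Alice plants it at index $c^\ast$ by setting $(E_{c^\ast},A_{c^\ast},U_{c^\ast})=(E,A_{\inn},U)$. She fills the other blocks with the fixed data and returns the $c^\ast$-th component $f_{c^\ast}$ of $F$. This component is NAE-satisfying, symmetric, and satisfies $\operatorname{rank}(f_{c^\ast}|_{U\otimes U})\leq 1$, as required. Bob, on $(E',A_{\inn},U')$, plants it at $c^\ast$ in the same way. He also uses the same fixed noncritical data, with $E_j'=E_j$ and $U_j'=U_j$, and supplies $c^\ast$ itself. He returns the $c^\ast$-th component of $G$. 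Whenever the $\GameBb$ verifier accepts, the $c^\ast$-th components agree as $\GameBa$ requires.

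The one point needing care is the conditioning. We must check that, after conditioning on $c=c^\ast$ and on the noncritical data, the joint law of the critical question pair is still the $\GameBa$ law, and that this data is sent identically to both provers. Both facts follow from the independence built into \Cref{def:Game2b}. Hence $\operatorname{val}(\GameBa)\geq\operatorname{val}(\GameBb)$, and \Cref{lem:Game2a_soundness} concludes. We expect no real obstacle here: the advice lists $A_j$ on noncritical blocks are just part of the fixed data, so they do not affect the argument.
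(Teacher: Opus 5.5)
Your proposal is correct and follows the paper's proof. Both plant the $\GameBa$ question in the critical block, fill the noncritical blocks with data common to both provers, and conclude via \Cref{lem:Game2a_soundness} together with \Cref{rem:Jmid_no_dependence_q_2}. The only difference is that you fix the best $c^\ast$ and noncritical data by averaging, whereas the paper draws them with shared public randomness; yours is simply the derandomized form of the same argument.
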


\begin{proof}
It suffices to show that
\[
\operatorname{val}\Bigl(\GameBb[H,k,\ell_{2},t,q_{0},q_{1},\beta]\Bigr)\leq \operatorname{val}\Bigl(\GameBa[H,k,\ell_{2},q_{1},0,\beta]\Bigr),
\]
after which the conclusion follows from~\Cref{lem:Game2a_soundness} (combined with~\Cref{rem:Jmid_no_dependence_q_2}). 

Assume Alice and Bob have a strategy for $\GameBb[H,k,\ell_{2},t,q_{0},q_{1},\beta]$, and we next show how to derive a strategy for $\GameBa[H,k,\ell_{2},q_{1},0,\beta]$. In the latter game, Alice and Bob can first agree upon the following ``shared data'' using public randomness:
\begin{itemize}
\item a uniformly random critical index $c\in [t]$, and
\item a uniformly random tuple $(E_{j},A_{j},U_{j})$ for each non-critical index $j\in [t]\setminus \{c\}$. More precisely, sample $E_{j}$ from $\scrE(H,k)$ uniformly, sample $A_{j}$ from $\bigl(\bF_{2}^{E_{j}}\otimes \bF_{2}^{E_{j}}\bigr)^{q_{0}}$ uniformly, and sample $U_{j}$ by drawing $(E_{j}^{\mathrm{ghost}},\emp,\emp,U_{j})\sim \cJ_{\midd}(E_{j},0,0,\beta,\ell_{2})$.
\end{itemize}
Given a question $(E,A,U)$ in $\GameBa[H,k,\ell_{2},q_{1},0,\beta]$, Alice can construct $(\cE,\cA,\cU)$ by planting $(E,A,U)$ in the critical block $c$ and using the shared data for the non-critical blocks. Similarly, given a question $(E',A',U')$, Bob can construct $(\cE',\cA',\cU')$ by planting $(E',A',U')$ in the critical block $c$ and using the shared data for the non-critical blocks. The provers can then pretend that they are in $\GameBb[H,k,\ell_{2},t,q_{0},q_{1},\beta]$, and use their strategy for $\GameBb[H,k,\ell_{2},t,q_{0},q_{1},\beta]$ to answer the verifier in $\GameBa[H,k,\ell_{2},q_{1},0,\beta]$. Whenever the strategy for $\GameBb[H,k,\ell_{2},t,q_{0},q_{1},\beta]$ provides consistent answers to $(\cE,\cA,\cU)$ and $(c,\cE',\cA',\cU')$, the consistency in the critical block implies that the derived answers to the questions $(E,A,U)$ and $(E',A',U')$ in $\GameBa[H,k,\ell_{2},q_{1},0,\beta]$ are also consistent. Therefore, the verifier in $\GameBa[H,k,\ell_{2},q_{1},0,\beta]$ accepts with probability at least the acceptance probability in $\GameBb[H,k,\ell_{2},t,q_{0},q_{1},\beta]$.
\end{proof}

\section{The Inner PCP}\label{sec:inner_PCP}

In this section we construct and analyze our inner PCP as discussed in~\Cref{sec:inner_PCP_intro}. We first prepare some necessary framework in~\Cref{subsec:tripled_packet,subsec:low_rank_Grassmann}. The final sub-$4$-to-$1$ game, denoted by $\GameCa$, will then be defined in~\Cref{subsec:Game3}. The proof of the soundness of $\GameCa$ is rather long, and will be completed in~\Cref{subsec:inner_soundness}.

\subsection{Preparation: Tripled Packets}\label{subsec:tripled_packet}

In the inner PCP, we will add another layer of Grassmann-type structure onto $\GameBb$, which is already somewhat complicated. To make future notations more concise, in this subsection we set up some notational framework for the objects featured in $\GameBb$ (\Cref{def:Game2b}). 

Recall that in~\Cref{def:Game2b}, for each index $j\in [t]$, Alice needs to return a bilinear map $f:\bF_{2}^{E_{j}}\otimes \bF_{2}^{E_{j}}\rightarrow \mathbb{F}_2$ that is both symmetric and \text{NAE}-satisfying. The following definition combines the two conditions, exploiting the fact that both are ``linear'' conditions in $f$.

\begin{definition}
Let $E$ be a tripled set. For any $v\in E$, we let $e_{v}\in \bF_{2}^{E}$ denote the standard basis vector supported on $v$; that is, for $u\in E$ let $(e_{v})_{u}=1$ if and only if $u=v$. 
\begin{enumerate}[label=(\arabic*)]
\item We define
$\cN_{1}(E)\subseteq \bF_{2}^{E}\otimes \bF_{2}^{E}$ to be the linear subspace spanned by all vectors of the form
\[e_{v_{1}}\otimes e_{v_{1}}+e_{v_{2}}\otimes e_{v_{2}}+e_{v_{3}}\otimes e_{v_{3}}+e_{v_{1}}\otimes e_{v_{2}}+e_{v_{2}}\otimes e_{v_{3}}+e_{v_{3}}\otimes e_{v_{1}},\]
where $\{v_{1},v_{2},v_{3}\}$ ranges in $\operatorname{Triples}(E)$.\footnote{We assume that for each triple $\{v_{1},v_{2},v_{3}\}\in \operatorname{Triples}(E)$, there is a fixed ordering of the elements $v_{1},v_{2},v_{3}$.}
\item We let $\cN_{2}(E)\subseteq \bF_{2}^{E}\otimes \bF_{2}^{E}$ be the linear subspace spanned by all anti-symmetric vectors, i.e.
\[
\cN_{2}(E)=\operatorname{span}\left\{x\otimes y+y\otimes x\,\middle|\,x,y\in\bF_{2}^{E}\right\}.
\]
\item Let $\cN(E)=\cN_{1}(E)+\cN_{2}(E)$ be the sum of the two subspaces.
\item A linear map $f:\cN(E)\rightarrow \bF_{2}$ is said to be \emph{\textup{SymNAE}-satisfying} if we have:
\begin{itemize}
\item 
\(
f(e_{v_{1}}\otimes e_{v_{1}}+e_{v_{2}}\otimes e_{v_{2}}+e_{v_{3}}\otimes e_{v_{3}}+e_{v_{1}}\otimes e_{v_{2}}+e_{v_{2}}\otimes e_{v_{3}}+e_{v_{3}}\otimes e_{v_{1}})=1
\)
for all triples $\{v_{1},v_{2},v_{3}\}\in \operatorname{Triples}(E)$; and
\item $f(x\otimes y+y\otimes x)=0$ for all $x,y\in \bF_{2}^{E}$.
\end{itemize}
\end{enumerate}
\end{definition}
Intuitively, the space $\cN_{1}(E)$ from the first item above is the subspace of $\bF_{2}^{E}\otimes \bF_{2}^{E}$ on which the value of a bilinear form $f\colon \bF_{2}^{E}\otimes \bF_{2}^{E}\to\bF_2$ is fully determined if it is \textup{NAE}-satisfying. The space $\cN_{2}(E)$ from the second item above is the subspace of $\bF_{2}^{E}\otimes \bF_{2}^{E}$ on which the value of a bilinear form $f\colon \bF_{2}^{E}\otimes \bF_{2}^{E}\to\bF_2$ is fully determined if it is symmetric. Thus, overall, whether $f$ is both NAE-satisfying and symmetric is fully captured by its value on $\cN(E)$, and this is the definition of \textup{SymNAE}-satisfying. 

The following proposition asserts that there is always a linear map on $\cN(E)$ which is \textup{SymNAE}-satisfying.

\begin{proposition}\label{prop:SymNAE_uniqueness}
For any tripled set $E$, there is a unique linear map $f:\cN(E)\rightarrow \bF_{2}$ that is \textup{SymNAE}-satisfying.
\end{proposition}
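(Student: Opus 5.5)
Existence and uniqueness are handled separately. Uniqueness is immediate: $\cN(E)$ is spanned by the generators of $\cN_1(E)$ (one per triple) together with the vectors $x\otimes y+y\otimes x$. A SymNAE-satisfying map has its value prescribed on every one of these generators, so two such maps agree on a spanning set and hence on all of $\cN(E)$.

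For existence, the prescribed values must be consistent, i.e.\ they must come from one linear map. The cleanest way is to restrict a global bilinear form. By \Cref{cor:NAE_rank1_exists}, or directly by taking $f^{(1)}$ to be an NAE-satisfying functional (for instance one equal to $1$ on exactly one vertex of each triple) and setting $f(x\otimes y)=f^{(1)}(x)f^{(1)}(y)$, there is a symmetric NAE-satisfying bilinear form $f\colon\bF_2^E\otimes\bF_2^E\to\bF_2$. Viewing $f$ as a linear functional on $\bF_2^E\otimes\bF_2^E$, we consider its restriction to $\cN(E)$.

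Two checks remain. First, symmetry gives $f(x\otimes y+y\otimes x)=f(x\otimes y)+f(y\otimes x)=0$. Second, writing $f$ via its matrix $X$, the value of $f$ on the $\cN_1$ generator of a triple $\{v_1,v_2,v_3\}$ equals
\[
X_{v_1v_1}+X_{v_2v_2}+X_{v_3v_3}+X_{v_1v_2}+X_{v_2v_3}+X_{v_3v_1},
\]
which is $1$ by NAE-satisfaction. The ordering convention of the footnote does not matter here, since $X$ is symmetric.

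The only subtle point is this consistency: one might worry that the triple generators together with the antisymmetric span (which contains $e_v\otimes e_v$? no, over $\bF_2$ it contains $x\otimes x+x\otimes x=0$, but it does contain $e_u\otimes e_v+e_v\otimes e_u$) produce a linear dependency forcing contradictory values. Exhibiting a global form that satisfies every constraint simultaneously sidesteps this worry entirely, so I expect no real obstacle. The one thing to confirm is that an NAE-satisfying functional exists for every triple, and it does: the functional that is $1$ on exactly one vertex of each triple gives NAE value $1+0=1$ on that triple.
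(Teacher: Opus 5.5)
Your proof is correct. The uniqueness half is the same trivial spanning observation the paper leaves implicit, but for existence you take a different route.

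The paper's one-line proof rests on $\cN_1(E)\cap\cN_2(E)=\{0\}$. Over $\bF_2$, $\cN_2(E)$ is exactly the space of symmetric matrices with zero diagonal. The triple generators of $\cN_1(E)$, on the other hand, have disjoint nonzero diagonal parts. Hence they are linearly independent and no nonzero combination of them lies in $\cN_2(E)$. So $\cN(E)=\cN_1(E)\oplus\cN_2(E)$, and the prescribed values ($1$ on each triple generator, $0$ on $\cN_2(E)$) define a linear map.

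You instead exhibit a global witness: the restriction to $\cN(E)$ of a symmetric NAE-satisfying bilinear form. This makes consistency of the prescribed values automatic.

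Each route buys something different. The paper's gives structural information, namely the direct-sum decomposition and hence $\dim\cN(E)$, though it leaves the independence of the triple generators unstated. Yours needs no linear-independence check at all. Moreover, your verification works for an arbitrary symmetric NAE-satisfying form, so it also shows that every such form restricts on $\cN(E)$ to the unique SymNAE-satisfying map. That is precisely the correspondence behind \Cref{prop:Game2b=Game2c}.

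The parenthetical aside about $e_v\otimes e_v$ in your last paragraph is muddled. It can simply be replaced by the description of $\cN_2(E)$ as the zero-diagonal symmetric matrices.
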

\begin{proof}
    This is clear because $\cN_1(E)\cap\cN_2(E) = \{0\}$.
\end{proof}

In order to aggregate the spaces $\bF_{2}^{E_{j}'}\otimes \bF_{2}^{E_{j}'}$ in all blocks, we introduce the following notation. Here we use $\cE'$ instead of $\cE$ to emphasize that the sets $E_{1}',\dots,E_{t}'$ are not necessarily tripled sets.

\begin{definition}
For a sequence of (not necessarily tripled) finite sets $\cE'=(E_{1}',\dots,E_{t}')$, we let $\cV(\cE')$ denote the $\bF_{2}$-vector space
\[
\cV(\cE')=\bigoplus_{j=1}^{t}\left(\bF_{2}^{E_{j}'}\otimes \bF_{2}^{E_{j}'}\right).
\]
\end{definition}

\begin{remark}
Let $H$ be a 3-uniform hypergraph. For $E_{1},\dots,E_{t}\in \scrE(H,k)$ and $\cE=(E_{1},\dots,E_{t})$, the space $\cV(\cE)$ can be regarded as a subspace of the huge ambient space
\[
\left(\bF_{2}^{\scrV(H,k)}\otimes \bF_{2}^{\scrV(H,k)}\right)^{t}.
\]
\end{remark}

Next, we define subspaces of $\cV(\cE')$ that correspond to the tensor checks we impose on Alice or Bob in $\GameBb$.

\begin{definition}
Fix a sequence of (not necessarily tripled) finite sets $\cE'=(E_{1}',\dots,E_{t}')$ and a sequence of linear subspaces $\cU=(U_{1},\dots,U_{t})$ with $U_{j}\subseteq  \bF_{2}^{E_{j}'}$ for each $j\in [t]$. We let $\bigoplus \cU\subseteq \cV(\cE')$ be the linear subspace defined by
\[
\toplus \cU=\bigoplus_{j=1}^{t}\Bigl(U_{j}\otimes U_{j}\Bigr)\subseteq \bigoplus_{j=1}^{t}\Bigl(\bF_{2}^{E_{j}'}\otimes \bF_{2}^{E_{j}'}\Bigr)=\cV(\cE').
\]
A linear map $f:\toplus \cU\rightarrow \bF_{2}$ is said to be \emph{\textup{SR}-satisfying}\footnote{In this terminology, ``S'' stands for symmetric, and ``R'' stands for rank.} if for each $j\in [t]$, the restriction $f|_{U_{j}\otimes U_{j}}$ is a symmetric bilinear map of rank at most 1.
\end{definition}

Combining the side conditions, symmetry and tensor checks gives rise to the space $\cN(\cE,\cU)$ defined as follows.
\begin{definition}
A \emph{tripled packet} is a pair $(\cE,\cU)$, where $\cE=(E_{1},\dots,E_{t})$ is a sequence of tripled sets, and $\cU=(U_{1},\dots,U_{t})$ is a sequence of linear subspaces with $U_{j}\subseteq \bF_{2}^{E_{j}}$ for each $j\in [t]$. We let $\cN(\cE,\cU)\subseteq \cV(\cE)$ be the linear subspace defined by
\[
\cN(\cE,\cU)=\bigoplus_{j=1}^{t}\Bigl(\cN(E_{j})+U_{j}\otimes U_{j}\Bigr)\subseteq \bigoplus_{j=1}^{t}\Bigl(\bF_{2}^{E_{j}}\otimes \bF_{2}^{E_{j}}\Bigr)=\cV(\cE).
\]
\end{definition}

Thus, whether $f:\cV(\cE)\rightarrow \bF_{2}$ satisfies all side conditions, symmetry and tensor tests amounts to checking the following condition on $f|_{\cN(\cE,\cU)}$.

\begin{definition}\label{def:SNR_satisfying}
Given a tripled packet $(\cE,\cU)$ where $\cE=(E_{1},\dots,E_{t})$ and $\cU=(U_{1},\dots,U_{t})$, a linear map $f:\cN(\cE,\cU)\rightarrow \bF_{2}$ is said to be \emph{\textup{SNR}-satisfying}\footnote{In this terminology, ``S'' stands for symmetric, ``N'' stands for \textup{NAE}, and ``R'' stands for rank.} if we have:
\begin{itemize}
\item $\left.f\right|_{\cN(E_{j})}$ is \textup{SymNAE}-satisfying for each $j\in [t]$.
\item $\operatorname{rank}(f|_{U_{j}\otimes U_{j}})\leq 1$ for each $j\in [t]$.
\end{itemize}
\end{definition}

\subsubsection{Thinning of Tripled Packets}
Recall that in $\GameBb$, we plant the true challenge in one randomly chosen coordinate, and it plays a crucial role in our argument. We often refer to it as the \emph{critical index/coordinate/block}, and make use of the following notations.

\begin{definition}\label{def:critical_descendant}
Let $(E_{1},\dots,E_{t})$ be a sequence of tripled sets. Fix an index $c\in [t]$ and a parameter $\beta\in [0,1]$. We write $\cE'\sim \operatorname{Thin}_{c,\beta}(\cE)$ for the random sequence $\cE'=(E_{1}',\dots,E_{t}')$ obtained by setting $E_{j}'=E_{j}$ for all $j\in [t]\setminus\{c\}$ and sampling $E_{c}'\sim\operatorname{Thin}_{\beta}(E_{c})$. A sequence $\cE'$ in the support of the distribution $\operatorname{Thin}_{c,1/2}(\cE)$ is called a \emph{$c$-critical descendant} of $\cE$.
\end{definition}

The following definition captures the distribution of $(\cU,\cE',\cU')$ conditioned on $\cE$ in~\Cref{def:Game2b}. Here we do not take into consideration the advice list $\cA$, because it will not be used to construct our final sub-$4$-to-$1$ game (it is crucial however for the soundness analysis).

\begin{definition}\label{def:Joint_inner}
Let $\cE=(E_{1},\dots,E_{t})$ be a sequence of tripled sets. Fix a parameter $\beta\in [0,1]$ and a nonnegative integer $\ell$. We let $\cJ_{\inn}(\cE,\beta,\ell)$ denote the output distribution of the following process:
\begin{enumerate}
\item Sample a uniformly random index $c\in [t]$. Sample $(E_{c}',\emp,\emp,U_{c})\sim \cJ_{\midd}(E_{c},0,0,\beta,\ell)$. Sample a random vector $x\in \bF_{2}^{E_{c}'}$ and let $U_{c}'=U_{c}+\operatorname{span}(x)$.
\item For each $j\in [t]\setminus\{c\}$, sample $(E_{j}^{\mathrm{ghost}},\emp,\emp,U_{j})\sim \cJ_{\midd}(E_{j},0,0,\beta,\ell)$. Let $E_{j}'=E_{j}$ and $U_{j}'=U_{j}$.
\item Let $\cE'=(E_{1}',\dots,E_{t}')$, $\cU=(U_{1},\dots,U_{t})$ and $\cU'=(U_{1}',\dots,U_{t}')$.
\item Output the tuple $(c,\cU,\cE',\cU')$.
\end{enumerate}
\end{definition}

We will make use of the following simple observation:
\begin{proposition}\label{prop:Joint_inner_independence}
In the setting and notation of~\Cref{def:Joint_inner}, the random subspaces $U_{1},\dots,U_{t}$ are mutually independent. Furthermore, they are jointly independent of the random index $c$.
\end{proposition}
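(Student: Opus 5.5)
The claim follows by reading off the sampling procedure in \Cref{def:Joint_inner} and writing the joint law of $(c,U_{1},\dots,U_{t})$ as a product. The one point that needs an argument is that the marginal law of $U_{c}$ on the critical block does not depend on $c$. Once that is settled, independence is immediate.

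\textbf{Step 1 (a common per-block law).} For each $j\in[t]$, let $\mu_{j}$ be the law of $U$ when $(E^{\ast},\emp,\emp,U)\sim\cJ_{\midd}(E_{j},0,0,\beta,\ell)$. We check that in both branches of \Cref{def:Joint_inner}, block $j$'s subspace $U_{j}$ has law $\mu_{j}$.
\begin{itemize}
\item On a non-critical block, this holds by definition: $U_{j}$ is drawn from $\cJ_{\midd}(E_{j},0,0,\beta,\ell)$, and the ghost set $E_{j}^{\mathrm{ghost}}$ is discarded.
\item On the critical block, $U_{c}$ is drawn from the same distribution $\cJ_{\midd}(E_{c},0,0,\beta,\ell)$. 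The extra randomness, namely the thinning $E_{c}'$ and the vector $x$ used to form $U_{c}'$, only affects $E_{c}'$ and $U_{c}'$. Marginalizing it out leaves $U_{c}\sim\mu_{c}$.
\end{itemize}

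\textbf{Step 2 (factorization).} Conditioned on $c=c_{0}$, the procedure samples the blocks with fresh, independent randomness: block $c_{0}$ via Step 1 of the definition and each $j\neq c_{0}$ via Step 2. Hence the conditional law of $(U_{1},\dots,U_{t})$ given $c=c_{0}$ is $\mu_{1}\otimes\cdots\otimes\mu_{t}$. By Step 1 this product measure is the same for every $c_{0}$.

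Since the conditional law of $(U_{1},\dots,U_{t})$ given $c$ does not depend on $c$, the tuple $(U_{1},\dots,U_{t})$ is independent of $c$. Its unconditional law is also the product $\mu_{1}\otimes\cdots\otimes\mu_{t}$, so the $U_{j}$ are mutually independent.

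\textbf{Main obstacle.} The only real subtlety is the claim in Step 1 that the critical block's extra sampling does not distort the marginal of $U_{c}$. This follows because $U_{c}'$ is defined after $U_{c}$ has been drawn, and $U_{c}$ is never conditioned on anything downstream. I would state this explicitly and not rely on \Cref{rem:middle_Uprime}, which concerns $U'$ rather than $U$.
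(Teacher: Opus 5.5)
Your proof is correct and is the same argument as the paper's one-line proof: in both branches of \Cref{def:Joint_inner}, $U_j$ is the $U$-component of an independent sample from $\cJ_{\midd}(E_j,0,0,\beta,\ell)$, so conditioned on $c$ the law of $(U_1,\dots,U_t)$ is a product measure that does not depend on $c$. One small wording fix: the thinning $E_c'$ is not extra randomness, since it plays exactly the role of $E_j^{\mathrm{ghost}}$ and does influence $U_c$; the only genuinely extra randomness on the critical block is $x$, and the only other difference is that $E_c'$ is kept rather than discarded.
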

\begin{proof}
    Regardless of whether $j=c$ or $j\neq c$, the subspace $U_{j}$ is spanned by $\ell$ independent vectors supported on a random $\beta$-thinning of $E_{j}$.
\end{proof}

\subsection{Preparation: Low-Rank Grassmann}\label{subsec:low_rank_Grassmann}
As discussed in~\Cref{sec:inner_PCP_intro}, in the inner PCP we want to use linear subspaces of $\bF_{2}^{E_{c}}\otimes \bF_{2}^{E_{c}}$ that consist only of low-rank vectors. As stated in~\Cref{def:intro_LGr}, these linear subspaces will form a subset $\mathsf{LGr}(\bF_{2}^{E_{c}},\ell)$ of $\mathsf{Gr}(\bF_{2}^{E_{c}}\otimes \bF_{2}^{E_{c}},\ell)$, and the subgraph it induces from the Grassmann graph is called the \emph{low-rank Grassmann graph}. In this subsection, we give formal definitions of the low-rank Grassmann and prove a number of elementary facts about it. We will primarily be interested in the basic ``statistical'' properties of $\mathsf{LGr}(\bF_{2}^{E_{c}},\ell)$, such as its zoom-out probabilities being close to those of the full Grassmann graph:
\begin{equation}\label{eq:LGR_zoom_out_informal}
\Pru{L\in \mathsf{LGr}(V,\ell)}{L\subseteq W}\approx \Pru{L\in \mathsf{Gr}(V\otimes V,\ell)}{L\subseteq W},
\end{equation}
for any vector space $V$ and subspace $W\subseteq V\otimes V$ with small codimension. More sophisticated analysis of the low-rank Grassmann graph, such as its spectral expansion (see~\cite{Golowich23}), will not be used in this paper.

\subsubsection{Construction from Hadamard Code}
The definition of the low-rank Grassmann will use linear combinations of vectors given by the Hadamard matrix, defined as follows.
\begin{definition}
Let $\ell$ be a positive integer. For each $i\in \{0,1,\dots,2^{\ell}-1\}$ and $b\in \{0,1,\dots,\ell-1\}$, define $\Bit_{\ell}(i,b)\in \bF_{2}$ to be the mod-$2$ remainder of the integer $\lfloor i\cdot2^{-b}\rfloor$. Equivalently, the sequence 
\[\bigl(\Bit_{\ell}(i,\ell-1),\Bit_{\ell}(i,\ell-2),\dots,\Bit_{\ell}(i,0)\bigr)\]
is the length-$\ell$ binary representation of $i$, with the most significant bit listed first. 

For each integer $i\in \{0,1,\dots,2^{\ell}-1\}$ and vector $u\in \bF_{2}^{\ell}$, define
\[
\Had_{\ell}(u,i)=\sum_{b=1}^{\ell}u_{b}\Bit_{\ell}(i,\ell-b)\in \bF_{2},
\]
to be the ``inner product'' of $u$ with the binary representation of $i$ (viewed as a vector in $\bF_{2}^{\ell}$).
\end{definition}

We next define the span of a list of vectors according to the Hadamard matrix.
\begin{definition}\label{def:Lspan}
Let $\ell,s$ be positive integers, and let $q=2s(2^{\ell}-1)$. Suppose $E'$ is a finite set, and $A\in (\bF_{2}^{E'})^{q}$ is a sequence of vectors
\[
A=\bigl(a^{(1)},\dots,a^{(q)}\bigr)
.\]
For each $u\in \bF_{2}^{\ell}$, define a vector
\begin{equation}\label{eq:def_of_Lvec}
\Lvec(A,\ell,u)=\sum_{i=0}^{2^{\ell}-2}\left(\Had_{\ell}(u,2^{\ell}-i-1)\sum_{j=1}^{s}a^{(2si+2j-1)}\otimes a^{(2si+2j)}\right)\in \bF_{2}^{E'}\otimes \bF_{2}^{E'}.
\end{equation}
Since $u\mapsto \Had_{\ell}(u,2^{\ell}-i-1)$ is a linear map for each $i$, the map $u\mapsto \Lvec(A,\ell,u)$ from $\bF_{2}^{\ell}$ to $\bF_{2}^{E'}\otimes \bF_{2}^{E'}$ is also linear. For any linear subspace $U\subseteq \bF_{2}^{\ell}$, we denote the image of the latter linear map restricted to $U$ by
\[
\Lspan(A,\ell,U)=\left\{\Lvec(A,\ell,u)\,\middle|\,u\in U\right\}\subseteq \bF_{2}^{E'}\otimes \bF_{2}^{E'}.
\]
\end{definition}

Let us make a few remarks about~\Cref{eq:def_of_Lvec}. To get some intuition it is useful to consider a vector defined along the lines of~\Cref{eq:def_of_Lvec}, except that we replace the sum of rank $1$ vectors with arbitrary vectors
$x^{(i)}\in \bF_{2}^{E'}\otimes \bF_{2}^{E'}$, getting
\begin{equation}\label{eq:simple_span_H}
\sum_{i=0}^{2^{\ell}-2}\Had_{\ell}(u,2^{\ell}-i-1)x^{(i)}.
\end{equation}
Ranging over all $u\in\mathbb{F}_2^{\ell}$, the resulting collection of vectors is easily seen to be a linear subspace of $\bF_{2}^{E'}\otimes \bF_{2}^{E'}$ of dimension at most $\ell$. Moreover, if the vectors $x^{(i)}$ are taken uniformly at random, then, conditioned on this subspace having dimension $\ell$, it is uniformly distributed among all $\ell$-dimensional subspaces of $\bF_{2}^{E'}\otimes \bF_{2}^{E'}$.

It follows that the space $\Lspan(A,\ell,U)$ is also a subspace of $\bF_{2}^{E'}\otimes \bF_{2}^{E'}$ with at most $\ell$ dimensions. However, it is not distributed uniformly when the vectors in the list $A$ are taken uniformly, and this fact is crucial for us. Indeed, it is easy to see that each vector in $\Lspan(A,\ell,U)$ has rank at most $2^{\ell-1} s$; this is the source of the term ``low-rank'' in the name. In the following remark, we argue that if the list $A$ is generic, then the rank of each nonzero vector is exactly $2^{\ell-1}s$ (which will be convenient for us for technical reasons).

\begin{remark}\label{rem:saturated}
In the setting and notation of~\Cref{def:Lspan}, suppose  that the odd-indexed vectors $a^{(1)},a^{(3)},\dots,a^{(q-1)}$ are linearly  independent, and that the even-indexed vectors $a^{(2)},a^{(4)},\dots,a^{(q)}$ are also linearly independent. Then for any $u\in \bF_{2}^{\ell}\setminus\{0\}$, since $\Had_{\ell}(u,i)=1$ for exactly $2^{\ell-1}$ of the $i$'s in $\{1,2,\dots,2^{\ell}-1\}$, it follows from~\Cref{eq:def_of_Lvec} that
\[
\operatorname{rank}\left(\Lvec(A,\ell,u)\right)=2^{\ell-1}s.
\]
Furthermore, it is easy to see that $\dim(\Lspan(A,\ell,\bF_{2}^{\ell}))=\ell$.
\end{remark}

\subsubsection{Mixed-Type Advice Sequences}
Our inner PCP will include a tensor product of the low-rank Grassmann graph (on the critical block) with the standard Grassmann graph (on non-critical blocks). In this section, we present a few notations that will be useful in presenting and working with the resulting product space.
\begin{definition}\label{def:mixed_vectors}
Given a sequence of finite sets $\cE'=(E_{1}',\dots,E_{t}')$, a subset of indices $J\subseteq [t]$ and positive integers $\ell,s$, we let $\scrA_{J}(\cE',\ell,s)$ be the collection of advice sequences $(A_{1},\dots,A_{t})$ such that:
\begin{enumerate}[label=(\arabic*)]
\item For each $j\in [t]\setminus J$, the component $A_{j}$ is a sequence of $\ell$ vectors $A_{j}=(a^{(j,1)},\dots,a^{(j,\ell)})$, where \[a^{(j,i)}\in \bF_{2}^{E_{j}'}\otimes \bF_{2}^{E_{j}'}\qquad\text{for all }i\in [\ell].\]
\item For each $c\in J$, the component $A_{c}$ is a sequence of $q=2s(2^{\ell}-1)$ vectors $A_{c}=(a^{(c,1)},\dots,a^{(c,q)})$, where \[a^{(c,i)}\in \bF_{2}^{E_{c}'}\qquad\text{for all }i\in [q].\] 
If both sequences $\bigl(a^{(c,1)},a^{(c,3)},\dots,a^{(c,q-1)}\bigr)$ and $\bigl(a^{(c,2)},a^{(c,4)},\dots,a^{(c,q)}\bigr)$ are linearly independent, we say that $A_{c}$ is \emph{saturated}.
\end{enumerate}
We let $\scrA_{J}^*(\cE',\ell,s)$ denote the sub-collection of advice sequences $(A_{1},\dots,A_{t})\in \scrA_{J}(\cE',\ell,s)$ such that $A_{c}$ is saturated for each $c\in J$.
\end{definition}

\begin{remark}
A few remarks are in order:
\begin{enumerate}[label=(\arabic*)]
    \item The set of coordinates in $J$ in~\Cref{def:mixed_vectors}  should be thought of as the potentially critical coordinates, hence the collection there is specified by vectors from the base space $\bF_2^{E_j'}$. The coordinates outside $J$ are not critical, and the vectors there are completely arbitrary vectors from $\bF_2^{E_j'}\otimes \bF_2^{E_j'}$. 
    \item An advice sequence in $\scrA_{J}(\cE',\ell,s)$ should be thought of as the analog of a spanning set of a subspace. In other words, there may be linear dependencies between vectors in a member from $\scrA_{J}(\cE',\ell,s)$. In our setting of parameters this will be rare (analogously to the fact that a random set of $o(n)$ independently chosen vectors in $\mathbb{F}_2^{n}$ are linearly independent with high probability), and $\scrA_{J}^{*}(\cE',\ell,s)\subseteq \scrA_{J}(\cE',\ell,s)$ is the subcollection that corresponds to tuples with no linear dependencies.
\end{enumerate}    
\end{remark}

The following definition associates vectors from $\cV(\cE')$ with each advice sequence. In words, on each potentially critical coordinate we take the low-rank Hadamard span as per~\Cref{def:Lspan}, and on non-critical coordinates we take the span for arbitrary vectors as per~\Cref{eq:simple_span_H}.

\begin{definition}\label{def:Mvec_Mspan}
Fix a sequence of finite sets $\cE'=(E_{1}',\dots,E_{t}')$, a proper subset of indices $J\subsetneq [t]$ and positive integers $\ell,s$. Given an advice sequence $\cA=(A_{1},\dots,A_{t})\in \scrA_{J}(\cE',\ell,s)$, write $A_{j}=\bigl(a^{(j,1)},\dots,a^{(j,\ell)}\bigr)$ for each $j\in [t]\setminus J$. For each $u\in \bF_{2}^{\ell}$, define\footnote{We suppress the parameters $\ell,s$ in the notation $\operatorname{Mvec}_{J}(\cA,u)$; indeed, the values of $\ell$ and $s$ can be inferred from $\cA$, as long as $J\neq \emptyset$ and $J\neq [t]$.}
\begin{equation}\label{eq:def_of_Mvec}
\operatorname{Mvec}_{J}(\cA,u)=\sum_{c\in J}\operatorname{Lvec}(A_{c},\ell,u)+\sum_{j\in [t]\setminus J}\left(\sum_{b=1}^{\ell}u_{b}\cdot a^{(j,b)}\right)\in \cV(\cE').
\end{equation}
For any linear subspace $U\subseteq \bF_{2}^{\ell}$, we denote
\begin{equation}\label{eq:def_of_Mspan}
\operatorname{Mspan}_{J}(\cA,U)=\left\{\operatorname{Mvec}_{J}(A,u)\,\middle|\,u\in U\right\}\subseteq \cV(\cE').
\end{equation}
\end{definition}

\begin{notation}
In the remainder of the paper, we will often use the notation
\(\operatorname{Mspan}_{J}(\cA,U)\) (defined in~\Cref{eq:def_of_Mspan})
with \(U=\bF_{2}^{\ell}\). When \(J\) is clear from context, we abbreviate
\[
\sigma(\cA)
:=
\operatorname{Mspan}_{J}\bigl(\cA,\bF_{2}^{\ell}\bigr).
\]
We also write
\[
\rho(\cA)
:=
\bigl(
\operatorname{Mvec}_{J}(\cA,e_{1}),
\dots,
\operatorname{Mvec}_{J}(\cA,e_{\ell})
\bigr)\in (\cV(\cE'))^{\ell},
\]
where \(e_{1},\dots,e_{\ell}\) denote the standard basis vectors of
\(\bF_{2}^{\ell}\). Thus, $\sigma(\cA)$ is the linear span of the vectors in the sequence $\rho(\cA)$, and we write $\sigma(\cA)=\operatorname{span}(\rho(\cA))$.
\end{notation}

Given an advice sequence $\cA\in \scrA_{J}(\cE',\ell,s)$, the subspace $\sigma(\cA)$ is ``mixed-type'' in the sense that its projection on a critical block $c\in J$ consists of low-rank vectors in $\bF_{2}^{E_{c}'}\otimes \bF_{2}^{E_{c}'}$, while its projection on a non-critical block $j\in [t]\setminus J$ consists of generic vectors in $\bF_{2}^{E_{j}'}\otimes \bF_{2}^{E_{j}'}$.

Next, we will want to facilitate notions analogous to subspace containment for mixed-type subspaces. Towards this end, it is useful to note that if we considered~\Cref{eq:simple_span_H} and only looked at vectors $u$ with $u_{\ell}=0$, then we would get an $(\ell-1)$-dimensional space which is a subspace of the larger $\ell$-dimensional space resulting from taking all $u$. Moreover, this subspace would not depend on what is $x^{(2^{\ell}-2)}$ (as its coefficient will always be $0$). Generalizing this intuition, we get the following analogous notion of sub-advice.

\begin{definition}\label{def:sub-advice}
Fix a sequence of finite sets $\cE'=(E_{1}',\dots,E_{t}')$, a proper subset of indices $J\subsetneq [t]$, positive integers $\ell,s$ and a nonnegative integer $q_{0}\leq \ell$. Given an advice sequence $\cA=(A_{1},\dots,A_{t})\in \scrA_{J}(\cE',\ell,s)$, we define a \emph{sub-advice} $\cA_{[q_{0}]}=(A_{1}',\dots,A_{t}')\in\scrA_{J}(\cE',q_{0},2^{\ell-q_{0}}s)$ as follows:
\begin{enumerate}[label=(\arabic*)]
\item For each $j\in [t]\setminus J$, write $A_{j}=\bigl(a^{(j,1)},\dots,a^{(j,\ell)}\bigr)$ and let $A'_{j}=\bigl(a^{(j,1)},\dots,a^{(j,q_{0})}\bigr)$.
\item For each $c\in J$, write $A_{c}=\bigl(a^{(c,1)},\dots,a^{(c,q)}\bigr)$, where $q=2s(2^{\ell}-1)$. Denote $q_{1}=2s(2^{\ell}-2^{\ell-q_{0}})$ and let $A_{c}'=\bigl(a^{(c,1)},\dots,a^{(c,q_{1})}\bigr)$.
\end{enumerate}
\end{definition}

The following proposition formalizes the intuition that taking a sub-advice corresponds to spanning a subspace of the original space.
\begin{proposition}
In the setting and notation of~\Cref{def:sub-advice}, we have\footnote{Here, we identify $\bF_2^{q_0}$ with the subspace
$
\bF_2^{q_0}\times\{0\}^{\ell-q_0}\subseteq\bF_2^\ell;
$
that is, a $q_0$-bit string is extended to an $\ell$-bit string by appending $\ell-q_0$ zeros.}
\[
\operatorname{Mvec}_{J}(\cA,u)=\operatorname{Mvec}_{J}(\cA_{[q_{0}]},u)\qquad\text{for all }u\in \bF_{2}^{q_{0}},
\]
Consequently, by~\Cref{eq:def_of_Mspan} we have \[\operatorname{Mspan}_{J}(\cA,\bF_{2}^{q_{0}})=\operatorname{Mspan}_{J}(\cA_{[q_{0}]},\bF_{2}^{q_{0}}).\]
\end{proposition}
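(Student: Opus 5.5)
The plan is to verify the identity block by block, since $\operatorname{Mvec}_{J}$ is defined in \Cref{eq:def_of_Mvec} as a sum over blocks. Fix $u\in\bF_{2}^{q_{0}}$, viewed as $u\in\bF_{2}^{\ell}$ with $u_{b}=0$ for $b>q_{0}$. For a non-critical block $j\in[t]\setminus J$, the contribution of $\cA$ is $\sum_{b=1}^{\ell}u_{b}a^{(j,b)}$. The terms with $b>q_{0}$ vanish, so this equals $\sum_{b=1}^{q_{0}}u_{b}a^{(j,b)}$, which is exactly the contribution of $A_{j}'$. So the substance lies in the critical blocks $c\in J$.

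For a critical block, I will compare $\Lvec(A_{c},\ell,u)$ with $\Lvec(A_{c}',q_{0},u)$, where $A_{c}'$ carries the parameter $s'=2^{\ell-q_{0}}s$. I will proceed in three steps.

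\textbf{Step 1: locate the nonzero coefficients.} In $\Lvec(A_{c},\ell,u)$, the $i$-th group $\sum_{j=1}^{s}a^{(2si+2j-1)}\otimes a^{(2si+2j)}$ is multiplied by $\Had_{\ell}(u,m)$ with $m=2^{\ell}-i-1$. Since $u$ vanishes outside its first $q_{0}$ coordinates, this coefficient equals $\sum_{b=1}^{q_{0}}u_{b}\Bit_{\ell}(m,\ell-b)$. That sum depends only on the top $q_{0}$ bits of $m$, namely on $\lfloor m/2^{\ell-q_{0}}\rfloor$.

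\textbf{Step 2: drop the zero-coefficient groups.} The groups with $i\ge 2^{\ell}-2^{\ell-q_{0}}$ have $m<2^{\ell-q_{0}}$, so their top bits are all zero and their coefficient vanishes. These are exactly the vectors with indices beyond $q_{1}=2s(2^{\ell}-2^{\ell-q_{0}})$, which is why discarding them in $A_{c}'$ changes nothing.

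\textbf{Step 3: regroup the remaining groups.} The remaining groups are $i=0,\dots,2^{\ell}-2^{\ell-q_{0}}-1$. I partition them into consecutive runs of length $2^{\ell-q_{0}}$, indexed by $i'=\lfloor i/2^{\ell-q_{0}}\rfloor\in\{0,\dots,2^{q_{0}}-2\}$. I need to check that within each run the top $q_{0}$ bits of $m=2^{\ell}-1-i$ are constant and equal to the bits of $2^{q_{0}}-i'-1$. This holds because
\[
\bigl\lfloor (2^{\ell}-1-i)/2^{\ell-q_{0}}\bigr\rfloor=2^{q_{0}}-1-i'.
\]
To see this, write $i=i'2^{\ell-q_{0}}+r$ with $0\le r<2^{\ell-q_{0}}$; the low part $2^{\ell-q_{0}}-1-r$ then lies in $[0,2^{\ell-q_{0}})$. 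It follows that
\[
\Had_{\ell}(u,2^{\ell}-i-1)=\Had_{q_{0}}(u,2^{q_{0}}-i'-1).
\]
Each run consists of $2^{\ell-q_{0}}s=s'$ consecutive rank-one pairs, starting at index $2s'i'+1$. This matches exactly the $i'$-th group in the definition of $\Lvec(A_{c}',q_{0},u)$ with parameter $s'$, so summing the runs gives $\Lvec(A_{c},\ell,u)=\Lvec(A_{c}',q_{0},u)$.

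Summing over blocks yields the identity for all $u\in\bF_{2}^{q_{0}}$. The statement about $\operatorname{Mspan}_{J}$ then follows by taking images via \Cref{eq:def_of_Mspan}. The only delicate point is the floor-identity bookkeeping in Step 3, including the off-by-one conventions between $\Bit$ indices and $b$; everything else is direct substitution.
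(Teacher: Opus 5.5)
Your proof is correct and follows essentially the same route as the paper's: the Hadamard coefficient depends only on the top $q_{0}$ bits, the groups with $i\geq 2^{\ell}-2^{\ell-q_{0}}$ have vanishing coefficient and are exactly the vectors beyond index $q_{1}$, and the remaining groups are regrouped into runs of $2^{\ell-q_{0}}$ to match $\Lvec(A_{c}',q_{0},u)$ with parameter $2^{\ell-q_{0}}s$. Your explicit identity $\lfloor (2^{\ell}-1-i)/2^{\ell-q_{0}}\rfloor = 2^{q_{0}}-1-i'$ is a more detailed form of the paper's regrouping step via $\pi(i)=2^{\ell-q_{0}}\lfloor i/2^{\ell-q_{0}}\rfloor$.
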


\begin{proof}
Denote $s_{1}=2^{\ell-q_{0}}s$. As in~\Cref{def:sub-advice}, let $q=2s(2^{\ell}-1)$ and $q_{1}=2s(2^{\ell}-2^{\ell-q_{0}})$.

Let $i$ be an integer ranging in $\{0,1,\dots,2^{\ell}-2\}$. For each $u\in \bF_{2}^{q_{0}}$ considered as an $\ell$-bit string, since its last $(\ell-q_{0})$ bits are all 0's, the value of \[\Had_{\ell}(u,2^{\ell}-i-1)=\sum_{b=1}^{q_{0}}u_{b}(1-\Bit_{\ell}(i,\ell-b))\] does not depend on the last $(\ell-q_{0})$ bits of the binary representation of $i$. Furthermore, if the first $q_{0}$ bits of $i$ are all 1's, then $\Had_{\ell}(u,2^{\ell}-i-1)=0$. Therefore, if we define $\pi(i)=2^{\ell-q_{0}}\lfloor i/2^{\ell-q_{0}}\rfloor$ for each $i\in \{0,1,\dots,2^{\ell}-2^{\ell-q_{0}}-1\}$, for each $c\in J$ and $u\in \bF_{2}^{q_{0}}$ we can deduce from~\Cref{eq:def_of_Lvec} that
\begin{align*}
\Lvec(A_{c},\ell,u)&=\sum_{i=0}^{2^{\ell}-2^{\ell-q_{0}}-1}\left(\Had_{\ell}(u,2^{\ell}-\pi(i)-1)\sum_{j=1}^{s}a^{(c,2si+2j-1)}\otimes a^{(c,2si+2j)}\right)\\
&=\sum_{i=0}^{2^{q_{0}}-2}\left(\Had_{\ell}(u,2^{\ell}-2^{\ell-q_{0}}i-1)\sum_{j=1}^{s_{1}}a^{(c,2s_{1}i+2j-1)}\otimes a^{(c,2s_{1}i+2j)}\right)\\
&=\sum_{i=0}^{2^{q_{0}}-2}\left(\Had_{q_{0}}(u,2^{q_{0}}-i-1)\sum_{j=1}^{s_{1}}a^{(c,2s_{1}i+2j-1)}\otimes a^{(c,2s_{1}i+2j)}\right)=\Lvec(A_{c}',q_{0},u).
\end{align*}
It then follows from~\Cref{eq:def_of_Mvec} that
\[
\operatorname{Mvec}_{J}(\cA,u)=\sum_{c\in J}\Lvec(A_{c}',q_{0},u)+\sum_{j\in [t]\setminus J}\left(\sum_{b=1}^{q_{0}}u_{b}\cdot a^{(j,b)}\right)=\operatorname{Mvec}_{J}(\cA_{[q_{0}]},u),
\]
for any $u\in \bF_{2}^{q_{0}}$.
\end{proof}

\begin{notation}
Throughout the remainder of the paper, the set \(J\subseteq[t]\) of
critical indices will always satisfy \(|J|\in \{1,2\}\). When
\(J=\{c_1,c_2\}\) with \(c_1\neq c_2\), we suppress the braces and write
\(c_1,c_2\) in place of \(\{c_1,c_2\}\) whenever \(J\) appears as a
subscript. For example, we write
\[
\scrA_{c_1,c_2}(\cE',\ell,s)
\qquad\text{instead of}\qquad
\scrA_{\{c_1,c_2\}}(\cE',\ell,s).
\]
Similarly, when \(J=\{c\}\), we write, for instance,
\[
\operatorname{Mspan}_{c}(\cA,U)
\qquad\text{instead of}\qquad
\operatorname{Mspan}_{\{c\}}(\cA,U).
\]
\end{notation}

\subsubsection{Zoom-out Probabilities} 

We next prove a few elementary properties of the ``zoom-out probabilities'' associated with mixed-type advice sequences.
Specifically, given a linear subspace $W\subseteq\cV(\cE')$ of small
codimension, we aim to show that the probability that
$\sigma(\cA)\subseteq W$, where $\cA$ is sampled at random
from $\scrA_{J}(\cE',\ell,s)$, is close to the probability that a
random subspace $L\in\mathsf{Gr}(\cV(\cE'),\ell)$ is contained in $W$.

For technical reasons, we also need a ``zoomed-in'' version of this
comparison. Specifically, we consider a random advice sequence
$\cA\in\scrA_{J}(\cE',\ell,s)$ conditioned on
$\sigma(\cA_{[q_{0}]})=Q$, where $Q\subseteq\cV(\cE')$ is a fixed
$q_{0}$-dimensional subspace. This should be analogous to sampling
a random subspace $L\in\mathsf{Gr}(\cV(\cE'),\ell)$ conditioned on
$Q\subseteq L$. Our main goal is to argue that
\[
    \Pru{\cA\in\scrA_{J}(\cE',\ell,s)}
    {\sigma(\cA)\subseteq W\,\middle|\,\sigma(\cA_{[q_{0}]})=Q}
    \approx
    \Pru{L\in\mathsf{Gr}(\cV(\cE'),\ell)}
    {L\subseteq W\,\middle|\,Q\subseteq L}.
\]

The first step is to show that conditioned on a fixed sub-advice $\cA_{[q_{0}]}$, the vectors in $\sigma(\cA)\setminus \sigma(\cA_{[q_{0}]})$ still behave randomly with respect to any test function $f:\cV(\cE')\rightarrow \bF_{2}$.

\begin{proposition}\label{prop:mixed_zoom_out_1}
Fix a sequence of non-empty finite sets $\cE'=(E_{1}',\dots,E_{t}')$, a subset of indices $J\subsetneq [t]$, positive integers $\ell,s$ and a nonnegative integer $q_{0}< \ell$. Let $f:\cV(\cE')\rightarrow \bF_{2}$ be an arbitrary linear functional that is not identically zero. For any fixed advice sequence $\cA'\in \scrA_{J}(\cE',q_{0},2^{\ell-q_{0}}s)$ and any vector $u\in \bF_{2}^{\ell}\setminus \bF_{2}^{q_{0}}$, we have
\[
\frac{1}{2}-2^{-s-1}\leq \Pru{\cA\in \scrA_{J}(\cE',\ell,s)}{f\bigl(\operatorname{Mvec}_{J}(\cA,u)\bigr)=1\,\middle|\,\cA_{[q_{0}]}=\cA'}\leq \frac{1}{2}+2^{-s-1}.
\]
\end{proposition}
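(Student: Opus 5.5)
The plan is to isolate the fresh randomness. Conditioning on $\cA_{[q_{0}]}=\cA'$ fixes, on each $j\notin J$, the first $q_{0}$ vectors $a^{(j,1)},\dots,a^{(j,q_{0})}$. On each $c\in J$ it fixes the first $q_{1}=2s(2^{\ell}-2^{\ell-q_{0}})$ vectors. The remaining vectors stay independent and uniform: $a^{(j,b)}$ for $b>q_{0}$ in $\bF_{2}^{E_{j}'}\otimes\bF_{2}^{E_{j}'}$, and $a^{(c,i)}$ for $i>q_{1}$ in $\bF_{2}^{E_{c}'}$. Fix $u\in\bF_{2}^{\ell}\setminus\bF_{2}^{q_{0}}$, so some $b_{0}>q_{0}$ has $u_{b_{0}}=1$. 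Then $\operatorname{Mvec}_{J}(\cA,u)$ equals a fixed vector plus a sum of fresh terms. For $j\notin J$ the fresh term is $\sum_{b>q_{0}}u_{b}a^{(j,b)}$, which is uniform on the $j$-th block. For $c\in J$ the fresh terms are $\Had_{\ell}(u,2^{\ell}-i-1)\sum_{j'=1}^{s}a^{(c,2si+2j'-1)}\otimes a^{(c,2si+2j')}$ for indices $i\geq 2^{\ell}-2^{\ell-q_{0}}$. These indices are exactly those $i\le 2^{\ell}-2$ whose top $q_{0}$ bits are all $1$.

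Since $f$ is linear and not identically zero, write $f=\sum_{j}f_{j}$ with $f_{j}$ acting on block $j$. Some $f_{j_{0}}\neq 0$. The key step is to exhibit a single fresh independent piece on which $f$ is nearly unbiased, conditioned on everything else.

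\textbf{Case 1: $j_{0}\notin J$.} The fresh term on block $j_{0}$ is uniform, because $u_{b_{0}}=1$ and $a^{(j_{0},b_{0})}$ is uniform and independent of everything else. Hence $f_{j_{0}}$ of it is exactly unbiased, and so is $f(\operatorname{Mvec}_{J}(\cA,u))$, giving probability exactly $1/2$.

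\textbf{Case 2: every nonzero $f_{j}$ has $j\in J$.} Take $c=j_{0}\in J$ and view $f_{c}$ as a nonzero bilinear form on $\bF_{2}^{E_{c}'}\times\bF_{2}^{E_{c}'}$. We need a fresh index $i$ with $\Had_{\ell}(u,2^{\ell}-i-1)=1$. Write $i$ as top $q_{0}$ bits equal to $1$ followed by a low part $m\in\{0,\dots,2^{\ell-q_{0}}-2\}$. Then $2^{\ell}-i-1$ has top bits $0$ and low part equal to the bitwise complement $\bar m$, which is nonzero. So $\Had_{\ell}(u,2^{\ell}-i-1)=\langle u_{\mathrm{low}},\bar m\rangle$, where $u_{\mathrm{low}}\neq 0$ because $u\notin\bF_{2}^{q_{0}}$. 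Choose $\bar m$ to be a standard basis vector at a coordinate where $u_{\mathrm{low}}$ is $1$; this gives $\bar m\neq 0$ and inner product $1$.

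For that $i$, the fresh contribution to $f$ contains $\sum_{j'=1}^{s}f_{c}(x_{j'},y_{j'})$, with $x_{j'},y_{j'}$ independent and uniform. Everything else is independent of these pairs. Condition on everything else, which amounts to adding a constant. The bias of a sum of independent bits is the product of their biases. For a nonzero bilinear form of rank $r\ge1$, the bias of $f_{c}(x,y)$ is $\Ex{(-1)^{f_{c}(x,y)}}=2^{-r}\le 1/2$. The total bias is therefore at most $2^{-s}$ in absolute value, and the probability lies in $[\tfrac12-2^{-s-1},\tfrac12+2^{-s-1}]$.

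The only mildly delicate step is the Hadamard index bookkeeping: verifying that the freshly sampled block indices (those with the top $q_{0}$ bits of $i$ set) include one on which $u$'s Hadamard coefficient is $1$. The remaining steps are routine independence and bias calculations.
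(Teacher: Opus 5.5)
Your proposal is correct and follows essentially the same route as the paper: the same case split on whether $f$ is nonzero on a block outside $J$ (exactly unbiased via the fresh vector $a^{(j,b)}$) or on a block $c\in J$. In the second case, your choice of $\bar m$ as a standard basis vector at a coordinate where $u_{\mathrm{low}}$ is $1$ is exactly the paper's fresh index $i=2^{\ell}-2^{\ell-b}-1$, followed by the same bias bound for a sum of $s$ independent evaluations of a nonzero bilinear form. You also spell out the bias computation ($2^{-r}$ per pair) that the paper leaves as ``easy to see.''
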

\begin{proof}
As usual, we denote $\cA=(A_{1},\dots,A_{t})$ and $A_{j}=\bigl(a^{(j,1)},a^{(j,2)},\dots\bigr)$ for each $j\in [t]$. Since $f$ is not identically zero, there exists an index $j\in [t]$ such that the restriction of $f$ to $\bF_{2}^{E_{j}'}\otimes \bF_{2}^{E_{j}'}$ is nonzero. Depending on whether this index lies in $J$, we divide into two cases:

\textbf{Case 1:} the restriction of $f$ to $\bF_{2}^{E_{j}'}\otimes \bF_{2}^{E_{j}'}$ is nonzero for some $j\in [t]\setminus J$. In this case, if we let $b\in \{q_{0}+1,\dots,\ell\}$ be an index such that $u_{b}=1$, even after conditioning on all vectors in $\cA$ except for $a^{(j,b)}$, the value of $f\bigl(\operatorname{Mvec}_{J}(\cA,u)\bigr)$ (see \Cref{eq:def_of_Mvec}) is still an unbiased Bernoulli random variable purely due to the randomness of the vector $a^{(j,b)}\in \bF_{2}^{E_{j}'}\otimes \bF_{2}^{E_{j}'}$. Since $a^{(j,b)}$ is not included in the sub-advice $\cA_{[q_{0}]}$, we conclude that
\[
\Pru{\cA\in \scrA_{J}(\cE',\ell,s)}{f\bigl(\operatorname{Mvec}_{J}(\cA,u)\bigr)=1\,\middle|\,\cA_{[q_{0}]}=\cA'}=\frac{1}{2}.
\]

\textbf{Case 2:} the restriction of $f$ to $\bF_{2}^{E_{c}'}\otimes \bF_{2}^{E_{c}'}$ is nonzero for some $c\in J$. As in case 1, we let $b\in \{q_{0}+1,\dots,\ell\}$ be an index such that $u_{b}\neq 0$. Now let $i=2^{\ell}-2^{\ell-b}-1$, whose $\ell$-bit binary representation has a 0 in the $b$-th position and 1's in all other positions. Even after conditioning on all vectors in $\cA$ except for the $2s$ vectors (none of which is included in the sub-advice $\cA_{[q_{0}]}$)
\[
a^{(c,2si+1)},a^{(c,2si+2)},\dots,a^{(c,2si+2s)},
\]
the value of $f\bigl(\operatorname{Mvec}_{J}(\cA,u)\bigr)$ (see~\Cref{eq:def_of_Mvec,eq:def_of_Lvec}) is still the Bernoulli random variable
\[
X=\sum_{j=1}^{s}f\bigl(a^{(c,2si+2j-1)}\otimes a^{(c,2si+2j)}\bigr)
\]
or its negation. By independence among the vectors in $\cA$, it is easy to see that $\Pr{X=1}$ lies in the interval $[\frac{1}{2}-2^{-s-1},\frac{1}{2}+2^{-s-1}]$, and hence we have
\[
\frac{1}{2}-2^{-s-1}\leq \Pru{\cA\in \scrA_{J}(\cE',\ell,s)}{f\bigl(\operatorname{Mvec}_{J}(\cA,u)\bigr)=1\,\middle|\,\cA_{[q_{0}]}=\cA'}\leq \frac{1}{2}+2^{-s-1}.\qedhere
\]
\end{proof}

Recall that if $W\in\mathsf{Gr}(\bF_{2}^{n},n-w)$, $L'\in \mathsf{Gr}(W,\ell-1)$, and we choose an $\ell$-dimensional subspace $L\subseteq \mathbb{F}_2^{n}$ uniformly at random containing $L'$,
then $L\subseteq W$ with probability $2^{-w}+o(1)$. The following proposition establishes an analogous fact, with the notion of sub-advice replacing the containment $L'\subseteq L$.

\begin{proposition}\label{prop:mixed_zoom_out_2}
Fix a sequence of non-empty finite sets $\cE'=(E_{1}',\dots,E_{t}')$, a subset of indices $J\subsetneq [t]$, and positive integers $\ell,s$. Let $W\subseteq \cV(\cE')$ be a linear subspace with codimension $w$. For any fixed advice sequence $\cA'\in \scrA_{J}(\cE',\ell-1,2s)$ such that $\sigma(\cA')\subseteq W$, we have
\[
2^{-w}-2^{-s}\leq \Pru{\cA\in \scrA_{J}(\cE',\ell,s)}{\sigma(\cA)\subseteq W\,\middle|\,\cA_{[\ell-1]}=\cA'}\leq 2^{-w}+2^{-s}.
\]
\end{proposition}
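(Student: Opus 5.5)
Fix $\cA'$ with $\sigma(\cA')\subseteq W$, and choose linear functionals $f_1,\dots,f_w$ on $\cV(\cE')$ with $W=\bigcap_{i}\ker f_i$; they are linearly independent. Since $\operatorname{Mvec}_J(\cA,\cdot)$ is linear and agrees with $\operatorname{Mvec}_J(\cA',\cdot)$ on $\bF_2^{\ell-1}$, every vector of $\sigma(\cA)$ is either in $\sigma(\cA')\subseteq W$ or equals $\operatorname{Mvec}_J(\cA,e_\ell)+y$ with $y\in\sigma(\cA')$. Hence, conditioned on $\cA_{[\ell-1]}=\cA'$,
\[
\sigma(\cA)\subseteq W \iff z:=\operatorname{Mvec}_J(\cA,e_\ell)\in W \iff f_i(z)=0\ \text{for all } i\in[w].
\]
The event is therefore a statement about the distribution of the single random vector $z$. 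By Fourier analysis over $\bF_2^w$,
\[
\Pr{f_i(z)=0\ \forall i}=2^{-w}\sum_{T\subseteq[w]}\Ex{(-1)^{f_T(z)}},\qquad f_T=\sum_{i\in T}f_i .
\]
The term $T=\emptyset$ gives $2^{-w}$. For $T\neq\emptyset$, $f_T$ is a nonzero functional by linear independence. Applying \Cref{prop:mixed_zoom_out_1} with $q_0=\ell-1$, $u=e_\ell\notin\bF_2^{\ell-1}$ and sub-advice $\cA'\in\scrA_J(\cE',\ell-1,2s)$ (which has the right format $2^{\ell-q_0}s=2s$) gives $|\Ex{(-1)^{f_T(z)}}|\le 2\cdot 2^{-s-1}=2^{-s}$. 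Summing the $2^w-1$ nontrivial terms yields an error of at most $2^{-w}(2^w-1)2^{-s}<2^{-s}$, which is the claimed two-sided bound.

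The only step that needs care is the first reduction. One has to check that $\cA_{[\ell-1]}=\cA'$ determines $\operatorname{Mvec}_J(\cA,u)$ for $u\in\bF_2^{\ell-1}$, which holds by the preceding proposition on sub-advice, and that linearity of $u\mapsto\operatorname{Mvec}_J(\cA,u)$ holds. The latter is immediate from \Cref{eq:def_of_Mvec,eq:def_of_Lvec}. Note also that we never need $\dim\sigma(\cA)=\ell$, since containment only concerns the span. After that the bias bound from \Cref{prop:mixed_zoom_out_1} does all the work, so I expect no real obstacle; the main thing to verify is that \Cref{prop:mixed_zoom_out_1} bounds each character's bias under the same conditioning uniformly, which it does because it holds for any nonzero $f$.
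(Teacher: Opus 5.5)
Your proposal is correct and follows essentially the same route as the paper. The paper also reduces to the event $\operatorname{Mvec}_J(\cA,u)\in W$ for a single $u\in\bF_2^{\ell}\setminus\bF_2^{\ell-1}$, then averages over the $2^w$ functionals whose kernel contains $W$ and applies \Cref{prop:mixed_zoom_out_1} to the nonzero ones. Your character sum over the $f_T$ is exactly that average: the paper's double-counting identity $p=\frac12\bigl(1-\Pr[\operatorname{Mvec}_J(\cA,u)\in W]\bigr)$ is your Fourier inversion written in a different form.
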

\begin{proof}
We fix an arbitrary vector $u\in \bF_{2}^{\ell}\setminus \bF_{2}^{\ell-1}$. Since $\sigma(\cA')\subseteq W$ and \[\sigma(\cA)=\sigma(\cA_{[\ell-1]})+\operatorname{span}(\operatorname{Mvec}_{J}(\cA,u)),\]
the desired inequality is equivalent to
\begin{equation}\label{eq:mixed_zoom_out_2_desired}
2^{-w}-2^{-s}\leq\Pru{\cA\in \scrA_{J}(\cE',\ell,s)}{\operatorname{Mvec}_{J}(\cA,u)\in W\,\middle|\,\cA_{[\ell-1]}=\cA'}\leq 2^{-w}+2^{-s}.
\end{equation}

Let $\cF$ be the collection of linear functionals $f:\cV(\cE')\rightarrow \bF_{2}$ such that $W\subseteq \ker(f)$, and denote
\begin{equation}\label{eq:mixed_zoom_out_doublecount}
p=\Exu{f\in \cF}{\Pru{\cA\in \scrA_J(\cE',\ell,s)}{f\bigl(\operatorname{Mvec}_{J}(\cA,u)\bigr)=1\,\middle|\,\cA_{[\ell-1]}=\cA'}}.
\end{equation}
Note that $|\cF|=2^{w}$, and for each $f\in \cF$ that is not identically zero we may apply~\Cref{prop:mixed_zoom_out_1} to estimate the probability expression on the right-hand side of~\Cref{eq:mixed_zoom_out_doublecount}, which yields
\begin{equation}\label{eq:mixed_zoom_out_doublecount_1}
\frac{2^{w}-1}{2^{w}}\left(\frac{1}{2}-2^{-s-1}\right)\leq p\leq \frac{2^{w}-1}{2^{w}}\left(\frac{1}{2}+2^{-s-1}\right).
\end{equation}
On the other hand, by exchanging the expectation and the probability on the right-hand side of~\Cref{eq:mixed_zoom_out_doublecount} (i.e. double counting), we have
\begin{equation}\label{eq:mixed_zoom_out_doublecount_2}
p=\frac{1}{2}\left(1-\Pru{\cA\in \scrA_{J}(\cE',\ell,s)}{\operatorname{Mvec}_{J}(\cA,u)\in W\,\middle|\,\cA_{[\ell-1]}=\cA'}\right).
\end{equation}
Comparing~\Cref{eq:mixed_zoom_out_doublecount_1,eq:mixed_zoom_out_doublecount_2} immediately yields the desired~\Cref{eq:mixed_zoom_out_2_desired}.
\end{proof}

Given $\cA'\in \scrA_{J}(\cE',\ell-1,2s)$ such that $\sigma(\cA')\subseteq W$,~\Cref{prop:mixed_zoom_out_2} estimates the probability that a random extension $\cA\in \scrA_{J}(\cE',\ell,s)$ of $\cA'$ still ``stays in'' $W$, in the sense that $\sigma(\cA)\subseteq W$. Roughly speaking, this means that ``random extension'' by $1$ dimension from $\sigma(\cA_{[\ell-1]})$ to $\sigma(\cA)$ behaves nicely with respect to zoom-outs. This allows us to iterate the process and show that a random extension by $\ell-q_{0}$ dimensions from $\sigma(\cA_{[q_{0}]})$ to $\sigma(\cA)$ also behaves as expected with respect to zoom-outs. 

\begin{corollary}\label{cor:mixed_zoom_out_advice}
Fix a sequence of non-empty finite sets $\cE'=(E_{1}',\dots,E_{t}')$, a subset of indices $J\subsetneq [t]$, and positive integers $\ell,s$. Let $W\subseteq \cV(\cE')$ be a linear subspace with codimension $w$, and suppose that $s\geq \ell+w$. Then for any nonnegative integer $q_{0}<\ell$ and any fixed advice sequence $\cA'\in \scrA_{J}(\cE',q_{0},2^{\ell-q_{0}}s)$ such that $\sigma(\cA')\subseteq W$, we have
\[
2^{-w(\ell-q_{0})-1}\leq \Pru{\cA\in \scrA_{J}(\cE',\ell,s)}{\sigma(\cA)\subseteq W\,\middle|\,\cA_{[q_{0}]}=\cA'}\leq 2^{-w(\ell-q_{0})+1}.
\]
\end{corollary}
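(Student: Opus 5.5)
We plan to iterate \Cref{prop:mixed_zoom_out_2} one dimension at a time, going from the sub-advice at level $q_{0}$ up to level $\ell$. For $m\in\{q_{0},q_{0}+1,\dots,\ell\}$, write $s_{m}=2^{\ell-m}s$. The sub-advice $\cA_{[m]}$ lies in $\scrA_{J}(\cE',m,s_{m})$. We also expect the sub-advice operation to be transitive, meaning $(\cA_{[m]})_{[m-1]}=\cA_{[m-1]}$; this should follow directly from \Cref{def:sub-advice}, since both sides are prefixes of the same sequences and $2s_{m}(2^{m}-2^{m-1})$ is exactly the prefix length at level $m-1$. The uniform distribution on $\scrA_{J}(\cE',\ell,s)$ has independent uniform coordinates, and $\cA_{[m]}$ is a prefix of them. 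Hence, conditioned on $\cA_{[m-1]}=\cA''$, the random $\cA_{[m]}$ is uniform over the extensions of $\cA''$ in $\scrA_{J}(\cE',m,s_{m})$. This is exactly the setting of \Cref{prop:mixed_zoom_out_2} with $(\ell,s)$ replaced by $(m,s_{m})$.

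Since $\sigma(\cA_{[m-1]})\subseteq\sigma(\cA_{[m]})$, the events $\{\sigma(\cA_{[m]})\subseteq W\}$ are nested, decreasing in $m$. We can therefore write
\[
\Pr{\sigma(\cA)\subseteq W\,\middle|\,\cA_{[q_{0}]}=\cA'}=\prod_{m=q_{0}+1}^{\ell}\Pr{\sigma(\cA_{[m]})\subseteq W\,\middle|\,\cA_{[q_{0}]}=\cA',\ \sigma(\cA_{[m-1]})\subseteq W}.
\]
To estimate each factor, we further condition on the full value $\cA''$ of $\cA_{[m-1]}$ with $\sigma(\cA'')\subseteq W$. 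By the Markov/prefix structure above, conditioning in addition on $\cA_{[q_{0}]}=\cA'$ changes nothing, because $\cA'$ is determined by $\cA''$. Then \Cref{prop:mixed_zoom_out_2} gives the factor in $[2^{-w}-2^{-s_{m}},\,2^{-w}+2^{-s_{m}}]$. Averaging over $\cA''$ keeps each factor in the same interval. Since $s_{m}\geq s$, every factor lies in $[2^{-w}-2^{-s},\,2^{-w}+2^{-s}]$.

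The remaining step is the arithmetic. The product equals $2^{-w(\ell-q_{0})}\bigl(1\pm 2^{w-s}\bigr)^{\ell-q_{0}}$. Using $s\geq \ell+w$, we get $2^{w-s}\leq 2^{-\ell}$, and therefore
\[
\bigl(1+2^{-\ell}\bigr)^{\ell}\leq e^{\ell 2^{-\ell}}\leq 2,\qquad \bigl(1-2^{-\ell}\bigr)^{\ell}\geq 1-\ell2^{-\ell}\geq \tfrac12
\]
for all $\ell\geq1$. This yields the two claimed bounds $2^{-w(\ell-q_{0})\pm1}$.

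The main point of care is the conditioning step: we must justify that the per-step distribution is really that of \Cref{prop:mixed_zoom_out_2}, namely uniform extensions of a fixed level-$(m-1)$ advice with parameter $s_{m}$ (so that its $\cA'$ lies in $\scrA_{J}(\cE',m-1,2s_{m})$). This amounts to checking the index bookkeeping in \Cref{def:sub-advice}. We expect the rest to be routine.
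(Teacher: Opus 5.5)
Your proposal is correct and follows essentially the paper's proof: a telescoping product over the nested events $\{\sigma(\cA_{[m]})\subseteq W\}$, with each factor controlled by \Cref{prop:mixed_zoom_out_2}, followed by the same elementary estimate using $s\geq \ell+w$. Your explicit averaging over the full value of $\cA_{[m-1]}$ makes rigorous a step the paper leaves implicit.

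One small slip in the bookkeeping. The prefix length of the level-$(m-1)$ sub-advice of a level-$m$ advice with parameter $s_m$ is $2s_m(2^m-2)=2s(2^\ell-2^{\ell-m+1})$, not $2s_m(2^m-2^{m-1})$. With this correction the transitivity $(\cA_{[m]})_{[m-1]}=\cA_{[m-1]}$ checks out as you intended.
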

\begin{proof}
For each $i\in \{q_{0},\dots,\ell\}$, let $\Gamma_{i}$ denote the event $\left\{\sigma(\cA_{[i]})\subseteq W\right\}$.
\[
\Pru{\cA\in \scrA_{J}(\cE',\ell,s)}{\sigma(\cA)\subseteq W\,\middle|\,\cA_{[q_{0}]}=\cA'}=\prod_{i=q_{0}}^{\ell-1}\Pr{\Gamma_{i+1}\,\middle|\,\left\{\cA_{[q_{0}]}=\cA'\right\}\cap\Gamma_{i}}.
\]
Since we can apply~\Cref{prop:mixed_zoom_out_2} to every factor on the right-hand side, the conclusion easily follows. Note that the condition $s\geq \ell+w$ guarantees
\[
(2^{-w}-2^{-s})^{\ell-q_{0}}\geq 2^{-w(\ell-q_{0})}\left(1-(\ell-q_{0})2^{-s+w}\right)\geq 2^{-w(\ell-q_{0})}(1-2^{\ell-1-s+w})\geq 2^{-w(\ell-q_{0})-1},
\]
and similarly for the upper bound.
\end{proof}

\subsubsection{Mixed-Type Grassmann Graph}

So far, we have mainly worked with uniformly random mixed-type advice sequences $\cA\in\scrA_{J}(\cE',\ell,s)$, the vectors in which may have linear dependencies (though unlikely). It will be useful later in the analysis of our inner PCP to consider those $\cA$ with no linear dependency among their vectors, spanning actual $\ell$-dimensional subspaces $\sigma(\cA)\subseteq \cV(\cE')$. 

This naturally leads us to the sub-collection $\cA^*_{J}(\cE',\ell,s)$ defined in~\Cref{def:mixed_vectors}. For sequences $\cA=(A_{1},\dots,A_{t})\in \cA^{*}_{J}(\cE',\ell,s)$, the lists $A_{c}$ in the critical blocks $c\in J$ are always saturated, and by~\Cref{rem:saturated}, it follows that $\dim \sigma(\cA)=\ell$ (as long as $J$ is nonempty). Therefore, these subspaces $\sigma(\cA)$ lie in the domain $\mathsf{Gr}(\cV(\cE'),\ell)$, and they induce a subgraph of the Grassmann graph $\mathsf{Gr}(\cV(\cE'),\ell)$. We call this subgraph a ``mixed-type Grassmann graph,'' as defined below.

\begin{definition}\label{def:mixed_grassmann_domain}
Fix a sequence of finite sets $\cE'=(E_{1}',\dots,E_{t}')$, a nonempty subset of indices $J\subsetneq  [t]$ and positive integers $\ell,s$. Define $\mathsf{MGr}_{J}(\cE',\ell,s)$ to be the following collection of linear subspaces:
\[
\mathsf{MGr}_{J}(\cE',\ell,s)=\left\{\sigma(\cA)\,\middle|\,\cA\in \scrA_{J}^*(\cE',\ell,s)\right\}\subseteq \mathsf{Gr}(\cV(\cE'),\ell).
\]
\end{definition}

An important fact (whose proof we omit) is that for different subspaces $L\in \mathsf{MGr}_{J}(\cE',\ell,s)$, the number of $\cA\in \scrA_{J}^{*}(\cE',\ell,s)$ such that $\sigma(\cA)=L$ is the same.\footnote{To prove that, one needs to use the fact that for any $\bF_{2}$-vector space $V$ and any $x\in V\otimes V$ of rank $s$, the number of ways to write $x=a^{(1)}\otimes a^{(2)}+\dots + a^{(2s-1)}\otimes a^{(2s)}$, where $a^{(i)}\in V$ for each $i\in [2s]$, depends only on $s$.} Therefore, if we sample a uniformly random $\cA\in \scrA_{J}^*(\cE',\ell,s)$, the span $\sigma(\cA)$ is uniformly random over $\mathsf{MGr}_{J}(\cE',\ell,s)$. The next proposition asserts that even if one chose $\cA$ from $\scrA_{J}(\cE',\ell,s)$, the resulting distribution of $\sigma(\cA)$ is still close to uniform over $\mathsf{MGr}_{J}(\cE',\ell,s)$, since $\left|\cA^{*}_J(\cE',\ell,s)\right|=(1-o(1))\cdot \left|\cA_J(\cE',\ell,s)\right|$.

% With the view of~\Cref{def:mixed_grassmann_domain}, for a given space $Q = \sigma(\cA')\subseteq \cV(\cE')$ one can consider two possible subsets of vertices of $\mathsf{MGr}_{J}(\cE',\ell,s)$ as analogs of the zoom-ins of $Q$:
% \begin{enumerate}
%     \item \textbf{Containment:} one can consider all $L\in \mathsf{MGr}_{J}(\cE',\ell,s)$ that contain $Q$, and consider the uniform distribution on them.
%     \item \textbf{Sub-advice:} one can consider the uniform distribution over all advice sequences $\cA$ that have $\cA'$ as sub-advice, and then consider the induced distribution on subspaces $\sigma(\cA)$.
% \end{enumerate}
% The following proposition asserts that these two distributions are very close in statistical distance. 

\begin{proposition}\label{prop:mix_advice_vs_MGr}
Fix a sequence of non-empty finite sets $\cE'=(E_{1}',\dots,E_{t}')$, a nonempty subset of indices $J\subsetneq [t]$, positive integers $\ell,s$ and a nonnegative integer $q_{0}<\ell$. Assume that $|E_{c}'|\geq 2^{\ell+2}s$ for each $c\in J$. For any fixed advice sequence $\cA'\in \scrA_{J}^*(\cE',q_{0},2^{\ell-q_{0}}s)$, if we sample $\cA\in\scrA_{J}(\cE',\ell,s)$ conditioned on $\sigma(\cA_{[q_{0}]})=\sigma(\cA')$, the resulting distribution of $\sigma(\cA)$ has total variation distance at most
\[\sum_{c\in J}2^{-|E_{c}'|/2}\]
to the uniform distribution over the set \(\big\{L\in \mathsf{MGr}_{J}(\cE',\ell,s)\,\big|\, \sigma(\cA')\subseteq L\big\}\).
\end{proposition}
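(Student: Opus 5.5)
The plan is to compare the conditioned distribution of $\cA$ with the uniform distribution on saturated extensions. Let $\Omega$ be the set of $\cA\in\scrA_{J}(\cE',\ell,s)$ with $\sigma(\cA_{[q_{0}]})=\sigma(\cA')$, and let $\Omega^*=\Omega\cap\scrA_{J}^*(\cE',\ell,s)$.

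\textbf{Step 1: the saturated part gives exactly the target distribution.} I will show that $\sigma(\cA)$ is exactly uniform on $\{L\in\mathsf{MGr}_{J}(\cE',\ell,s)\mid \sigma(\cA')\subseteq L\}$ when $\cA$ is uniform on $\Omega^*$. Every $\cA\in\Omega^*$ satisfies $\sigma(\cA)\supseteq\sigma(\cA_{[q_0]})=\sigma(\cA')$, so $\sigma(\cA)$ lies in this set. Conversely, fix $L$ in the set and count the $\cA\in\scrA_J^*$ with $\sigma(\cA)=L$ and $\sigma(\cA_{[q_0]})=\sigma(\cA')$. I would argue this count is independent of $L$ by a symmetry argument, refining the omitted "important fact" stated before the proposition.

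A saturated $\cA$ determines the linear isomorphism $u\mapsto\operatorname{Mvec}_J(\cA,u)$ from $\bF_2^\ell$ onto $\sigma(\cA)$. On a critical block $c$, the block-$c$ projection of $\operatorname{Mvec}_J(\cA,u)$ for $u=e_b$ is a rank-$2^{\ell-1}s$ tensor. By the footnoted fact, the number of decompositions of such a tensor into the prescribed rank-one pieces depends only on $s$. The count should therefore factor as
\[
\#\{\text{isomorphisms }\bF_2^\ell\to L\text{ mapping }\bF_2^{q_0}\text{ onto }\sigma(\cA')\text{ of the admissible form}\}\times(\text{decomposition counts}).
\]
I expect the first factor to be uniform in $L$ by the action of $GL(\bF_2^{E'_c})$ on each critical block and of $GL$ on the non-critical blocks. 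The second factor depends only on $(\ell,s,q_0)$.

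I am least sure of this step. The difficulty is that the Hadamard structure ties the $u$-parametrization to specific blocks of rank-one terms. Uniformity over $L$ will then need the full group action that preserves $\mathsf{MGr}$ and fixes $\sigma(\cA')$, not just $GL(\bF_2^{E'_c})$. If this symmetry argument does not close, I would instead reduce to conditioning on $\cA_{[q_0]}=\cA''$ for each individual $\cA''$ with $\sigma(\cA'')=\sigma(\cA')$ and average, which leaves the same uniformity claim for each fiber.

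\textbf{Step 2: the non-saturated part is small.} By the standard fact that conditioning a uniform distribution on a subset $\Omega^*\subseteq\Omega$ moves it by total variation distance $\Pr[\Omega\setminus\Omega^*]$, it suffices to show
\[
\Pr_{\cA\in\Omega}[\cA\notin\scrA_J^*]\le\sum_{c\in J}2^{-|E'_c|/2}.
\]
Condition further on the sub-advice $\cA_{[q_0]}$. This sub-advice is saturated, since $\sigma(\cA_{[q_0]})$ has dimension $q_0$ and its critical lists must be linearly independent (here I would use \Cref{rem:saturated} in reverse, or simply impose it). The remaining $2s(2^{\ell}-2^{\ell-q_0})$ vectors in each critical block $c$ are independent and uniform in $\bF_2^{E'_c}$. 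The odd-indexed family has at most $m=s2^{\ell}$ vectors, and the probability that adding uniform vectors to an independent set of size less than $m$ creates a dependence is at most $\sum_{i<m}2^{i-|E'_c|}\le 2^{m-|E'_c|}$. The same bound holds for the even-indexed family. Since $|E'_c|\ge 2^{\ell+2}s=4m$, the sum of the two bounds is $2\cdot2^{m-|E'_c|}\le 2^{-|E'_c|/2}$. Taking a union bound over $c\in J$ gives the claim.

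Combining the two steps yields the stated total variation bound. I expect Step 1, the exact uniformity of the saturated extensions over $L\supseteq\sigma(\cA')$, to be the main obstacle.
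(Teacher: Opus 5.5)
Your two-step plan is the same as the paper's proof. The paper asserts (``it is not hard to see'') that $\sigma(\cA)$ is exactly uniform on $\{L\in\mathsf{MGr}_J(\cE',\ell,s)\mid \sigma(\cA')\subseteq L\}$ when $\cA$ is uniform among saturated extensions. It then bounds the non-saturated mass by $1-\prod_{c\in J}(1-2^{(2^\ell-1)s-|E_c'|})^2$, which is your Step 2.

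\textbf{Step 1 is left open.} The mechanism you suggest would not close it, for two reasons.
\begin{itemize}
\item Block-diagonal maps, acting on critical and non-critical blocks separately, preserve the dimension of the projection of $L$ onto the non-critical blocks. That dimension varies over the target set in general, so these maps are not transitive on it.
\item Decomposing each $\operatorname{Mvec}_J(\cA,e_b)$ separately counts the wrong objects, because the rank-one pieces are shared across different $u$.
\end{itemize}
What works is to track the rank-$s$ groups. Write $x_c$ for the block-$c$ component and $C(\cdot)$ for column space. Write $\Lvec(A_c,\ell,u)=\sum_{w\neq 0}\langle u,w\rangle\, y_{c,w}$, where $y_{c,w}$ is the sum of the $s$ rank-one terms with Hadamard label $w$.

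First, precomposing with $g\in\mathrm{GL}_\ell(\bF_2)$ only relabels the groups by $w\mapsto g^{\mathsf T}w$ and re-bases the non-critical vectors. Starting from one saturated $\cA_0$ with $\sigma(\cA_0)=L$, every linear isomorphism $\psi\colon\bF_2^\ell\to L$ therefore equals $\operatorname{Mvec}_J(\cA,\cdot)$ for some saturated $\cA$.

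Conversely, saturation makes the spaces $C(y_{c,w})$ independent. Hence $C(y_{c,w})=\bigcap_{u:\langle u,w\rangle=1}C(\psi(u)_c)$, and likewise for row spaces. So $\psi$ determines every $y_{c,w}$, and the fiber over $\psi$ has size $|\mathrm{GL}_s(\bF_2)|^{|J|(2^\ell-1)}$.

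Since $\sigma(\cA_{[q_0]})=\psi(\bF_2^{q_0})$, the count you need equals the number of isomorphisms $\psi$ with $\psi(\bF_2^{q_0})=\sigma(\cA')$, times this constant. That number is independent of $L$.

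\textbf{Step 2 has an unjustified claim.} The conditioning event fixes only $\sigma(\cA_{[q_0]})$. That this space has dimension $q_0$ does not make the $s(2^\ell-2^{\ell-q_0})$ odd-indexed critical vectors of the sub-advice independent: a repeated vector inside one group need not lower the dimension. You also cannot simply impose saturation, since that changes the distribution.

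The correct reason uses equality with $\sigma(\cA')$.
\begin{itemize}
\item For saturated $\cA'$, the space $\sum_{v\in\sigma(\cA')}C(v_c)$ is the span of all odd-indexed vectors of $A'_c$, so it has dimension $s(2^\ell-2^{\ell-q_0})$.
\item For any sub-advice $\cB$, the space $\sum_{v\in\sigma(\cB)}C(v_c)$ lies in the span of exactly that many odd-indexed vectors.
\end{itemize}
Hence $\sigma(\cB)=\sigma(\cA')$ forces those vectors to be independent, and row spaces give the same for the even-indexed ones. With this fact, your fiberwise bound and averaging are correct. The paper's proof, which conditions on $\cA_{[q_0]}=\cA'$, uses the same fact implicitly.
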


\begin{proof}
It is not hard to see that if we sample $\cA\in \scrA_{J}^*(\cE',\ell,s)$ conditioned on $\sigma(\cA_{[q_{0}]})=\sigma(\cA')$, the resulting distribution of $\sigma(\cA)$ is exactly the uniform distribution over $\big\{L\in \mathsf{MGr}_{J}(\cE',\ell,s)\,\big|\, \sigma(\cA')\subseteq L\big\}
$. 
On the other hand, writing $q=2s(2^{\ell}-1)$ and $q_{1}=2s(2^{\ell}-2^{\ell-q_{0}})$, for any $\cA'\in \scrA_{J}^{*}(\cE',q_{0},2^{\ell-q_{0}}s)$ we have
\begin{align*}
&\quad\Pru{\cA\in \scrA_{J}(\cE',\ell,s)}{\cA\in \scrA^*_{J}(\cE',\ell,s)\,\middle|\,\cA_{[q_{0}]}=\cA'}\\
&=\Pr{\bigl(a^{(c,1)},\dots,a^{(c,q)}\bigr)\text{ is saturated for each }c\in J\,\middle|\,\text{fixed saturated sequences $\bigl(a^{(c,1)},\dots,a^{(c,q_{1})}\bigr)$}}\tag{where $a^{(c,i)}$ (for $c\in J$ and $i\in [q]$) are sampled independently and uniformly from $\bF_{2}^{E_{c}'}$}\\
&\geq \prod_{c\in J}\left(1-2^{(2^{\ell}-1)s-|E_{c}'|}\right)^{2}\geq 1-\sum_{c\in J}2^{-|E_{c}'|/2}.\qedhere
\end{align*}
\end{proof}

The next proposition asserts that for a valid zoom-in and zoom-out pair $Q,W$ (i.e., $Q\subseteq W$), among the subspaces in $\mathsf{MGr}_{J}(\cE',\ell,s)$ containing $Q$, the fraction of them contained in $W$ behaves similarly to the standard Grassmann domain.
\begin{proposition}\label{prop:MGr_zoom_out}
Fix a sequence of non-empty finite sets $\cE'=(E_{1}',\dots,E_{t}')$, a subset of indices $J\subsetneq [t]$ such that $1\leq |J|\leq 2$, positive integers $\ell,s$ and nonnegative integers $q_{0},w$. Assume
\[
s\geq \ell+w,\qquad q_{0}<\ell,\qquad\text{and}\qquad |E_{c}'|\geq 2^{\ell+2}s\text{ for each }c\in J
.\]
Then for any fixed subspaces 
\[Q\in \mathsf{MGr}_{J}(\cE',q_{0},2^{\ell-q_{0}}s)\qquad\text{and}\qquad W\in \mathsf{Gr}\bigl(\cV(\cE'),\dim \cV(\cE')-w\bigr)\] such that $Q\subseteq W$, we have
\[
2^{-w(\ell-q_{0})-2}\leq \Pru{L\in \mathsf{MGr}_{J}(\cE',\ell,s)}{L\subseteq W\mid Q\subseteq L}\leq 2^{-w(\ell-q_{0})+2}.
\]
\end{proposition}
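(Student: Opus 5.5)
The plan is to reduce the statement to \Cref{cor:mixed_zoom_out_advice} via \Cref{prop:mix_advice_vs_MGr}. Since $Q\in \mathsf{MGr}_{J}(\cE',q_{0},2^{\ell-q_{0}}s)$, fix a saturated advice sequence $\cA'\in \scrA_{J}^*(\cE',q_{0},2^{\ell-q_{0}}s)$ with $\sigma(\cA')=Q$. By \Cref{prop:mix_advice_vs_MGr}, the uniform distribution over $\{L\in\mathsf{MGr}_{J}(\cE',\ell,s)\mid Q\subseteq L\}$ is within total variation distance $\sum_{c\in J}2^{-|E_c'|/2}$ of the law of $\sigma(\cA)$, where $\cA\in\scrA_J(\cE',\ell,s)$ is sampled conditioned on $\sigma(\cA_{[q_0]})=Q$. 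So it suffices to estimate $\Pr{\sigma(\cA)\subseteq W}$ under this latter distribution, and then absorb the TV error.

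The next step is to pass from conditioning on $\sigma(\cA_{[q_0]})=Q$ to conditioning on $\cA_{[q_0]}=\cA''$ for a fixed sub-advice $\cA''$. The event $\{\sigma(\cA_{[q_0]})=Q\}$ is a disjoint union over sub-advice sequences $\cA''\in\scrA_J(\cE',q_0,2^{\ell-q_0}s)$ with $\sigma(\cA'')=Q$. For each such $\cA''$ we have $\sigma(\cA'')=Q\subseteq W$, so \Cref{cor:mixed_zoom_out_advice} applies; its hypothesis $s\geq \ell+w$ is assumed. Conditioned on $\cA_{[q_0]}=\cA''$, it gives a probability in $[2^{-w(\ell-q_0)-1},2^{-w(\ell-q_0)+1}]$. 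Averaging over the disjoint pieces yields the same bounds conditioned on $\sigma(\cA_{[q_0]})=Q$. This averaging works because \Cref{cor:mixed_zoom_out_advice} holds for arbitrary, not necessarily saturated, $\cA''$.

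It remains to absorb the TV error, which I expect to be the only delicate point. We need $\sum_{c\in J}2^{-|E_c'|/2}\leq 2\cdot 2^{-|E_c'|/2}$ to be at most about $2^{-w(\ell-q_0)-2}$, so that the lower bound $2^{-w(\ell-q_0)-1}$ minus the error is still at least $2^{-w(\ell-q_0)-2}$; the upper bound then holds trivially with room to spare. Using $|E_c'|\geq 2^{\ell+2}s$ and $s\geq \ell+w$, we get $|E_c'|/2\geq 2^{\ell+1}(\ell+w)$, which comfortably exceeds $w(\ell-q_0)+3$: indeed $2^{\ell+1}(\ell+w)\geq 2(\ell+w)+2^{\ell+1}w\geq w\ell+3$ for $\ell\geq 1$. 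Hence the error is at most $2^{-w(\ell-q_0)-2}$, giving both inequalities. The case $w=0$ is trivial, since then $W=\cV(\cE')$ and the probability is $1$.
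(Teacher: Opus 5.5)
Your proposal is correct and follows the paper's proof essentially verbatim. You use \Cref{prop:mix_advice_vs_MGr} to replace a uniform $L\in\mathsf{MGr}_J$ containing $Q$ by $\sigma(\cA)$ with $\cA$ conditioned on $\sigma(\cA_{[q_0]})=Q$, then average \Cref{cor:mixed_zoom_out_advice} over the (uniformly distributed) sub-advice sequences spanning $Q$, and finally absorb the error $\sum_{c\in J}2^{-|E_c'|/2}$. One small slip: the intermediate inequality $2^{\ell+1}(\ell+w)\geq 2(\ell+w)+2^{\ell+1}w$ is false when $w$ is large relative to $\ell$ (e.g.\ $\ell=1$, $w=10$ gives $44<62$), but the bound you actually need, $2^{\ell+1}(\ell+w)\geq w\ell+3$, follows directly from $2^{\ell+1}w\geq \ell w$ and $2^{\ell+1}\ell\geq 4$.
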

\begin{proof}
Pick $\cA'$ uniformly at random from the set
\[\left\{\cA'\in\scrA_{J}(\cE',q_{0},2^{\ell-q_{0}}s)\,\Big|\,\sigma(\cA')=Q\right\}.\]
It follows from~\Cref{prop:mix_advice_vs_MGr} that
\[
\left|\Pru{L\in \mathsf{MGr}_{J}(\cE',\ell,s)}{L\subseteq W\mid Q\subseteq L}-\Exu{\cA'}{\Pru{\cA\in \scrA_{J}(\cE',\ell,s)}{\sigma(\cA)\subseteq W\,\middle|\,\cA_{[q_{0}]}=\cA'}}\right|\leq \sum_{c\in J}2^{-|E_{c}'|/2}\leq 2^{-w\ell-2}.
\]
Combining the above with~\Cref{cor:mixed_zoom_out_advice} yields the desired conclusion.
\end{proof}

\subsubsection{Extension of Agreement}

The next somewhat technical result is needed (only) in~\Cref{subsubsec:Bob_existence_decoding}, and to explain it we first give the analog in the simpler Grassmann setting. Suppose we have a table $F$ that assigns to each subspace in $\mathsf{Gr}(\mathbb{F}_2^n,\ell+1)$ a linear function $F[L]$ (or $\nil$), a subspace $W\subseteq \bF_{2}^{n}$ of constant codimension, and a linear function $f\colon W\to \mathbb{F}_2$ that largely agrees with $F$ in the sense that
\[
\Pru{L\in\mathsf{Gr}(\mathbb{F}_2^{n},\ell+1)}{f|_{L\cap W} = F[L]|_{L\cap W}\Big.}\geq \delta.
\]
We wish to extend $f$ into a linear function $g:\bF_{2}^{n}\rightarrow \bF_{2}$ that agrees with $F$ not only on the intersection with $W$. Then a random extension would suffice: for any $L$ such that  
$f|_{L\cap W} = F[L]|_{L\cap W}$ we will have that $g|_{L} = F[L]$ with constant probability. The following lemma establishes an analogous statement in the context of mixed-type subspaces. 

\begin{lemma}\label{lem:MGr_agreement_extension}
Fix a sequence of non-empty finite sets $\cE'=(E_{1}',\dots,E_{t}')$, a critical index $c\in [t]$, positive integers $\ell,s$, nonnegative integers $q_{0},w$ and a parameter $\delta\in (0,1)$. Assume
\[
s\geq 2\ell\qquad\text{and}\qquad\qquad\ell>2q_{0}+2w.
\]
Let $W\subseteq \cV(\cE')$ be a linear subspace with codimension $w$, and $Q\subseteq W$ a linear subspace such that $\sigma(\cA)=Q$ for some $\cA\in \scrA_{c}(\cE',q_{0},2^{\ell+1-q_{0}}s)$. Suppose $F[\cdot]$ is a table where $F[L]\in L^*\cup\{\nil\}$ for each linear subspace $L\subseteq \cV(\cE')$, and there exists a linear function $f:W\rightarrow \bF_{2}$ such that
\begin{equation}\label{eq:MGr_extension_condition}
\Pru{\cA\in \scrA_{c}(\cE',\ell+1,s)}{\left.F[\sigma(\cA)]\right|_{\sigma(\cA_{[\ell]})}=\left.f\right|_{\sigma(\cA_{[\ell]})}\,\Big|\,\sigma(\cA_{[q_{0}]})=Q\textup{ and }\sigma(\cA_{[\ell]})\subseteq W}\geq \delta.
\end{equation}
Then there exists a subspace $W'\subseteq \cV(\cE')$ containing $W$ and a linear function $g:W'\rightarrow \bF_{2}$ such that
\begin{equation}\label{eq:MGr_extension_goal}
\Pru{\cA\in \scrA_{c}(\cE',\ell+1,s)}{F[\sigma(\cA)]=\left.g\right|_{\sigma(\cA)}\,\Big|\,\sigma(\cA_{[q_{0}]})=Q\textup{ and }\sigma(\cA)\subseteq W'}\geq \frac{\delta}{5}.
\end{equation}
\end{lemma}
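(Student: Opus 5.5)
The plan is to mirror the simple Grassmann argument sketched before the lemma: extend $f$ by one dimension at a time, picking the extension so that it stays consistent with $F$. The difficulty is that $\sigma(\cA)$ need not lie in $W$ even when $\sigma(\cA_{[\ell]})\subseteq W$. So the key random object is the last vector $x=\operatorname{Mvec}_{c}(\cA,e_{\ell+1})$, which generates $\sigma(\cA)$ over $\sigma(\cA_{[\ell]})$.

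First I will split the conditional event in \eqref{eq:MGr_extension_condition} according to whether $x\in W$. By \Cref{prop:mixed_zoom_out_2} (applied with $s$ and $W$, conditioned on $\cA_{[\ell]}$), the conditional probability that $x\in W$ is $2^{-w}\pm 2^{-s}$. This gives two cases.

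\textbf{Case 1.} Suppose at least $\delta/2$ of the mass in \eqref{eq:MGr_extension_condition} comes from $\cA$ with $x\in W$, so that $\sigma(\cA)\subseteq W$. Take $W'=W$ and $g=f$. I then have to show that on such $\cA$ the full agreement $F[\sigma(\cA)]=f|_{\sigma(\cA)}$ holds often. I do not get this for free, since agreement on $\sigma(\cA_{[\ell]})$ does not force agreement on $x$. This is where I would change strategy: instead of conditioning on $\sigma(\cA_{[\ell]})\subseteq W$, reparametrize. The vector $e_{\ell+1}$ is not special, and by symmetry of $\Lvec$ under the action on $\bF_{2}^{\ell+1}$ fixing $\bF_{2}^{q_{0}}$, the distribution of $\sigma(\cA)$ together with a random hyperplane $H\supseteq\bF_{2}^{q_0}$ of $\bF_2^{\ell+1}$ matches that of $(\sigma(\cA),\sigma(\cA_{[\ell]}))$ up to small error. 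This uses \Cref{prop:mix_advice_vs_MGr}, noting that $|E_c'|$ is large. Under this view, the condition becomes ``$F[L]$ agrees with $f$ on $L\cap W$, on a random hyperplane of $L$ inside $W$''.

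\textbf{Case 2.} The complementary case has $x\notin W$, so $\sigma(\cA)\cap W=\sigma(\cA_{[\ell]})$ and $\sigma(\cA)+W$ has dimension one more than $W$. Here I set $W'=\cV(\cE')$, pick a uniformly random linear extension $g$ of $f$ to $\cV(\cE')$, and argue as in the Grassmann analogue. For a fixed $\cA$ with agreement on $\sigma(\cA_{[\ell]})$ and $x\notin W$, the value $g(x)$ is a uniform bit independent of everything else, so $g|_{\sigma(\cA)}=F[\sigma(\cA)]$ with probability $1/2$. To transfer this to \eqref{eq:MGr_extension_goal}, I need the conditional measure $\Pr[\cdot\mid \sigma(\cA_{[q_0]})=Q]$ (no zoom-out) to be comparable to the measure in the hypothesis times $\Pr[\sigma(\cA_{[\ell]})\subseteq W\mid Q]$. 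For the latter, \Cref{cor:mixed_zoom_out_advice} gives roughly $2^{-w(\ell-q_0)}$, which is too small.

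So Case 2 needs intermediate $W'$. The refined plan is to iterate over intermediate spaces $W\subsetneq W_1\subsetneq\dots$: take $W'=W+\operatorname{span}(x)$-type spaces of codimension $w-1$, and average over the choice of the extension direction. Concretely, among the codimension-$(w-1)$ spaces $W'\supseteq W$ (there are $2^{w}-1$ of them) and extensions $g$ of $f$ to $W'$, I would pick $(W',g)$ randomly. Then for $\cA$ conditioned on $\sigma(\cA)\subseteq W'$ and $\sigma(\cA_{[\ell]})\subseteq W$, the agreement probability is at least $\delta/2$ times the chance that the random $g$ is right. I would compare the conditioning ``$\sigma(\cA)\subseteq W'$'' with ``$\sigma(\cA_{[\ell]})\subseteq W$ and $x\in W'$'' using \Cref{cor:mixed_zoom_out_advice} and \Cref{prop:mixed_zoom_out_2}. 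The hypotheses $s\ge 2\ell$ and $\ell>2q_0+2w$ keep all the $2^{-s}$ errors and the ratios of the form $2^{\pm 2}$ under control, and this should produce the constant $1/5$.

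I expect the main obstacle to be exactly this comparison of conditionings. Under $\sigma(\cA)\subseteq W'$, the hyperplane $\sigma(\cA)\cap W$ is a random hyperplane of $\sigma(\cA)$, not specifically $\sigma(\cA_{[\ell]})$. Handling this needs the symmetry of the mixed-type construction under changes of basis of $\bF_2^{\ell+1}$ that fix $\bF_2^{q_0}$, which holds only approximately, through \Cref{prop:mix_advice_vs_MGr}.
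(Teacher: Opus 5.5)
Your split on whether $x\in W$ (i.e.\ whether $L'=\sigma(\cA)$ lies in $W$, given $L=\sigma(\cA_{[\ell]})\subseteq W$), and your random extensions of $f$ to the $2^{w}-1$ spaces $W'\supseteq W$ with $\dim W'=\dim W+1$, match the skeleton of the paper's proof. However, Case~1 as you set it up is false, and repairing it needs an ingredient you do not have.

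Write $N=2^{\ell+1-q_0}-1$, $p=\Pr{L'\subseteq W}$ and $p_{W'}=\Pr{L'\subseteq W',\,L'\not\subseteq W}$. Here is a table for which your Case~1 choice fails:
\begin{itemize}
\item $F[L']=\nil$ whenever $L'\not\subseteq W$;
\item for $L'\subseteq W$, $F[L']$ agrees with $f$ on $Q$ but not on all of $L'$.
\end{itemize}
All the mass in the hypothesis then comes from the accidental event that the random hyperplane $L$ of $L'$ equals $\ker(F[L']-f|_{L'})$. So the hypothesis holds with $\delta=p/(Np+\sum_{W'}p_{W'})>0$, yet $(W,f)$ scores $0$.

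The paper handles this with a second candidate: a uniformly random $g\colon W\to\bF_2$ with $g|_Q=f|_Q$. This gives $\eta\ge 2^{-(\ell+1-q_0)}(\gamma_0+\gamma_1)$, which absorbs the accidental-agreement term $\gamma_1p/N$. Here $\eta$ is the best achievable value of the conclusion. The quantities $\gamma_0$ and $\gamma_1$ are the probabilities, given $L'\subseteq W$, that $F[L']=f|_{L'}$, respectively that $F[L']$ agrees with $f$ on $Q$ but not on $L'$. Reparametrizing by a random hyperplane only restates the hypothesis; it does not produce such a $g$.

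Second, the symmetry you call the main obstacle is exact, and exactness matters. Condition on $\sigma(\cA_{[q_0]})=Q$. The law of the linear map $u\mapsto\operatorname{Mvec}_c(\cA,u)$ on $\bF_2^{\ell+1}$ is then invariant under precomposition with every $T\in\mathrm{GL}(\bF_2^{\ell+1})$ preserving $\bF_2^{q_0}$. Off the critical block this map is a uniformly random linear map. On the critical block, $\Lvec(A_c,\ell+1,u)=\sum_{v\neq 0}\langle u,v\rangle X_v$ with i.i.d.\ $X_v$, which $T$ merely permutes. Hence, given $L'$, $L$ equals each hyperplane of $L'$ containing $Q$ with probability exactly $1/N$.

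Deriving this approximately from \Cref{prop:mix_advice_vs_MGr} would cause two problems. It imports $|E_c'|\ge 2^{\ell+2}s$, which the lemma does not assume. It also leaves an additive error that, after conditioning on events as rare as $\{L\subseteq W\}$, cannot be compared with an arbitrary $\delta\in(0,1)$. With exactness, nothing remains to approximate:
\[
\Pr{F[L']|_L=f|_L,\ L\subseteq W}=\gamma_0p+\tfrac1N\Bigl(\gamma_1p+\textstyle\sum_{W'}\gamma_{W'}p_{W'}\Bigr),\qquad \Pr{L\subseteq W}=p+\tfrac1N\textstyle\sum_{W'}p_{W'},
\]
where $\gamma_{W'}$ is the probability of agreement on $L'\cap W$ given $L'\subseteq W'$ and $L'\not\subseteq W$.

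One then uses three bounds:
\begin{itemize}
\item $\gamma_0\le\eta$;
\item $\tfrac1N(\gamma_0+\gamma_1)\le 2\eta$;
\item $\gamma_0p+\tfrac12\gamma_{W'}p_{W'}\le\eta(p+p_{W'})$ for each $W'$.
\end{itemize}
In the last bound the $p$-part must stay in the numerator, because $\{L'\subseteq W'\}\supseteq\{L'\subseteq W\}$; your Case~2 accounting drops it. Summing these with $2^w-1\le N$ gives $5\eta\ge\delta$. That inequality $2^w-1\le N$, rather than control of $2^{-s}$ errors, is what the hypothesis on $\ell$ is for.
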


\begin{proof}
We let $\eta$ be the maximum possible value of the left-hand side of~\Cref{eq:MGr_extension_goal} when $W'$ ranges in the set of linear subspaces containing $W$ and $g$ ranges in $(W')^*$. The goal is to show that $5\eta\geq \delta$. Throughout the proof, we let $L,L'\subseteq \cV(\cE')$ be random subspaces sampled by the following process:
\begin{enumerate}
\item Sample $\cA\in \scrA_{c}(\cE',\ell+1,s)$ conditioned on $\sigma(\cA_{[q_{0}]})=Q$.
\item Let $L=\sigma(\cA_{[\ell]})$ and $L'=\sigma(\cA)$.
\end{enumerate}
The most important fact about the joint distribution of $(L,L')$ that we are going to use is the following: conditioned on $L'$, for any linear subspace such that $Q\subseteq L_{0}\subseteq L'$ and $\dim L_{0}=\dim L'-1$, the probability that $L=L_{0}$ is exactly $(2^{\ell+1-q_{0}}-1)^{-1}$. 

Let $p=\Pr{L'\subseteq W}$ and denote
\begin{align*}
\gamma_{0}&=\Pr{F[L']=f|_{L'}\,\Big|\,L'\subseteq W},\\
\gamma_{1}&=\Pr{\left.F[L']\right|_{Q}=f|_{Q}\textup{ but }F[L']\neq f|_{L'}\,\Big|\,L'\subseteq W}.
\end{align*}
By the definitions of $\gamma_{0}$ and $\eta$, we immediately have $\eta\geq \gamma_{0}$. Furthermore, by picking a random $g:W\rightarrow \bF_{2}$ conditioned on $g|_{Q}=f|_{Q}$, we have
\[
\Exu{g}{\Pr{F[L']=g|_{L'}\textup{ and }L'\subseteq W\Big.}}\geq 2^{-(\ell+1-q_{0})}\cdot (\gamma_{0}p+\gamma_{1}p),
\]
and hence
\begin{equation}\label{eq:agreement_extension_randomize}
\eta\geq \Exu{g}{\Pr{F[L']=g|_{L'}\,\Big|\,L'\subseteq W}}\geq 2^{-(\ell+1-q_{0})}\cdot (\gamma_{0}+\gamma_{1}).
\end{equation}

We denote by $\cW$ the collection of linear subspaces $W'\subseteq \cV(\cE')$ such that $\dim W'=1+\dim W$ and $W\subseteq W'$. For each $W'\in \cW$, we let 
\begin{align*}
p_{W'}&=\Pr{L'\subseteq W'\textup{ but }L'\not\subseteq W},\textup{ and}\\
\qquad \gamma_{W'}&=\begin{cases}
\Pr{\left.F[L']\right|_{L'\cap W}=f|_{L'\cap W}\,\Big|\,L'\subseteq W'\textup{ but }L'\not\subseteq W},&\text{if }p_{W'}>0\\
0,&\text{if }p_{W'}=0.
\end{cases}
\end{align*}
For each $W'\in \cW$, there are exactly two linear extensions of $f$ to $W'$. If we choose $g:W'\rightarrow \bF_{2}$ to be a uniformly random one of the two, then it is easy to see that
\[
\Exu{g}{\Pr{F[L']=g|_{L'}\textup{ and }L'\subseteq W'\Big.}}=\gamma_{0}p+\frac{1}{2}\gamma_{W'}p_{W'},
\]
and hence
\begin{equation}\label{eq:agreement_extension}
\eta\geq \Exu{g}{\Pr{F[L']=g|_{L'}\,\Big|\,L'\subseteq W'}}=\frac{\gamma_{0}p+\gamma_{W'}p_{W'}/2}{p+p_{W'}}.
\end{equation}

It is not hard to see that
\begin{equation}\label{eq:agreement_extension_LinW}
\Pr{L\subseteq W}=p+\frac{1}{2^{\ell+1-q_{0}}-1}\sum_{W\in \cW}p_{W'},
\end{equation}
and furthermore,
\begin{align*}
&\quad\Pr{\left.F[L']\right|_{L}=f|_{L}\textup{ and }L\subseteq W\Big.}\\
&=\gamma_{0}p+\frac{1}{2^{\ell+1-q_{0}}-1}\left(\gamma_{1}p+\sum_{W'\in\cW}\gamma_{W'}p_{W'}\right)\\
&\leq \gamma_{0}p+2\cdot 2^{-(\ell+1-q_{0})}\cdot (\gamma_{0}+\gamma_{1})p+\frac{2}{2^{\ell+1-q_{0}}-1}\sum_{W'\in \cW}\left(\gamma_{0}p+\frac{1}{2}\gamma_{W'}p_{W'}\right)\\
&\leq 3\eta\cdot p+2\eta\cdot\frac{1}{2^{\ell+1-q_{0}}-1}\sum_{W'\in \cW}(p+p_{W'})\tag{using $\gamma_{0}\leq \eta$ and~\Cref{eq:agreement_extension_randomize,eq:agreement_extension}}\\
&\leq 5\eta\cdot \Pr{L\subseteq W}.\tag{using~\Cref{eq:agreement_extension_LinW} and $|\cW|=2^{w}-1\leq 2^{\ell+1-q_{0}}-1$}
\end{align*}
This gives the desired inequality
\[
5\eta\geq \Pr{\left.F[L']\right|_{L}=f|_{L}\,\Big|\,L\subseteq W}\geq \delta,
\]
where we used~\Cref{eq:MGr_extension_condition} in the last transition.
\end{proof}

\subsection{The 4-to-1 Game Construction}\label{subsec:Game3}

In this subsection, we use the framework prepared in~\Cref{subsec:tripled_packet,subsec:low_rank_Grassmann} to construct our final sub-$4$-to-$1$ game. Before defining the sub-$4$-to-$1$ game, we first rewrite the description of $\GameBb$ (defined in~\Cref{subsec:Game2b}) using the notations introduced in~\Cref{subsec:tripled_packet,subsec:low_rank_Grassmann}.

\begin{definition}
Let $H$ be a 3-uniform hypergraph. Given positive integers $k$ and $t$, we let $\scrE(H,k,t)$ denote the collection of all sequences $\cE=(E_{1},\dots,E_{t})$ such that $E_{j}\in \scrE(H,k)$ for all $j\in [t]$.
\end{definition}

We can now rewrite the definition of $\GameBb[H,k,\ell_{2},t,q_{0},q_{1},\beta]$ as follows.

\begin{definition}\label{def:Game2c}
Given positive integers $k,\ell_{2},t,s_{1}$, a nonnegative integer $q_{0}$, a parameter $\beta\in [0,1)$ and a 3-uniform hypergraph $H$, let $\GameBc[H,k,\ell_{2},t,s_{1},q_{0},\beta]$ be a 2-prover 1-round game where the verifier executes the following:
\begin{enumerate}
\item Sample a uniformly random sequence $\cE=(E_{1},\dots,E_{t})\in \scrE(H,k,t)$.
\item Sample $(c,\cU,\cE',\cU')\sim \cJ_{\inn}(\cE,\beta,\ell_{2})$.
\item Sample a uniformly random sequence $\cA=(A_{1},\dots,A_{t})\in \scrA_{c}(\cE',q_{0},s_{1})$.\footnote{If $q_{0}=0$, we take $\cA=\emp$ to be an empty sequence.}
\item Send $(\cE,\cA,\cU)$ to Alice and $(c,\cE',\cA,\cU')$ to Bob.
\item Alice must return a linear map $f:\cV(\cE)\rightarrow\bF_{2}$ such that $\left.f\right|_{\cN(\cE,\cU)}$ is \textup{SNR}-satisfying; Bob must return a linear map $g:\cV(\cE')\rightarrow \bF_{2}$. Accept if $\left.f\right|_{\cV(\cE')}=g$.
\end{enumerate}
\end{definition}

It is easy to observe that~\Cref{def:Game2c} defines the same game as substituting $q_{1}=2s_{1}(2^{q_{0}}-1)$ in~\Cref{def:Game2b} does. 

\begin{proposition}\label{prop:Game2b=Game2c}
For any positive integers $k,\ell_{2},t,s_{1}$, nonnegative integer $q_{0}$, parameter $\beta\in [0,1)$ and $3$-uniform hypergraph $H$, we have
\[\operatorname{val}\Bigl(\GameBb[H,k,\ell_{2},t,q_{0},2s_{1}(2^{q_{0}}-1),\beta]\Bigr)=\operatorname{val}\Bigl(\GameBc[H,k,\ell_{2},t,s_{1},q_{0},\beta]\Bigr).\]
\end{proposition}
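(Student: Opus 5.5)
The plan is to show that the two games are literally the same game up to a relabeling of questions and answers. Then any strategy for one transfers to the other with the same acceptance probability, which gives equality of values. I would set $q_{1}=2s_{1}(2^{q_{0}}-1)$ and compare the verifier of $\GameBb[H,k,\ell_{2},t,q_{0},q_{1},\beta]$ (\Cref{def:Game2b}) with that of $\GameBc[H,k,\ell_{2},t,s_{1},q_{0},\beta]$ (\Cref{def:Game2c}) in three respects: the question distribution, the answer constraints, and the acceptance predicate.

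First, the question distributions. In both games $\cE$ is uniform in $\scrE(H,k,t)$ and $c$ is uniform in $[t]$. In $\GameBb$, the critical block is drawn from $\cJ_{\midd}(E_{c},q_{1},0,\beta,\ell_{2})$, giving $(E_{c}',A_{c},U_{c})$, and then $U_{c}'=U_{c}+\operatorname{span}(x)$. Unfolding \Cref{def:Joint_dist_S_A_Aprime} and \Cref{def:Joint_dist_S_A_Aprime_U} with $q_{2}=0$:
\begin{itemize}
\item $E_{c}'\sim\operatorname{Thin}_{\beta}(E_{c})$;
\item $A_{c}$ consists of $q_{1}$ independent uniform vectors in $\bF_{2}^{E_{c}'}$;
\item $U_{c}$ is the span of $\ell_{2}$ further independent uniform vectors in $\bF_{2}^{E_{c}'}$, drawn independently of $A_{c}$.
\end{itemize}
Hence $(E_{c}',U_{c},U_{c}')$ has exactly the law produced by step 1 of $\cJ_{\inn}$ (which uses $\cJ_{\midd}(E_{c},0,0,\beta,\ell_{2})$), and conditionally on $E_{c}'$ the advice $A_{c}$ is uniform in $(\bF_{2}^{E_{c}'})^{q_{1}}$, independent of $U_{c},U_{c}'$. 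For $j\ne c$, both games sample $U_{j}$ via $\cJ_{\midd}(E_{j},0,0,\beta,\ell_{2})$, set $E_{j}'=E_{j}$ and $U_{j}'=U_{j}$, and take $A_{j}$ to be $q_{0}$ uniform vectors of $\bF_{2}^{E_{j}}\otimes\bF_{2}^{E_{j}}$, independent of everything else. By \Cref{def:mixed_vectors} with $J=\{c\}$, $\ell=q_{0}$ and $s=s_{1}$, a uniform element of $\scrA_{c}(\cE',q_{0},s_{1})$ is exactly the following: $q=2s_{1}(2^{q_{0}}-1)=q_{1}$ uniform vectors of $\bF_{2}^{E_{c}'}$ in block $c$, and $q_{0}$ uniform vectors of $\bF_{2}^{E_{j}'}\otimes\bF_{2}^{E_{j}'}$ in each other block. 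So the joint law of $(\cE,c,\cA,\cU,\cE',\cU')$ coincides in the two games. The degenerate case $q_{0}=0$ is handled by the empty-sequence convention, since then $q_{1}=0$ as well. The message to Bob is also the same, namely $(c,\cE',\cA,\cU')$, so Bob's question space coincides, and Alice's question $(\cE,\cA,\cU)$ is the same in both games.

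Second, the answers and the acceptance predicate. In $\GameBb$ Alice returns a tuple $(f_{j})_{j}$ of NAE-satisfying symmetric bilinear forms with $\operatorname{rank}(f_{j}|_{U_{j}\otimes U_{j}})\le1$. Identify such a tuple with the linear map $f=\sum_{j}f_{j}$ on $\cV(\cE)=\bigoplus_{j}\bF_{2}^{E_{j}}\otimes\bF_{2}^{E_{j}}$. Under this identification, the constraints become $f|_{\cN(\cE,\cU)}$ being SNR-satisfying (\Cref{def:SNR_satisfying}):
\begin{itemize}
\item $f_{j}$ is symmetric if and only if $f_{j}$ vanishes on $\cN_{2}(E_{j})$;
\item given symmetry, $f_{j}$ is NAE-satisfying if and only if $f_{j}$ takes the prescribed value $1$ on the generators of $\cN_{1}(E_{j})$, using the coordinate formula $f(e_{u}\otimes e_{v})=X_{uv}$;
\item the rank condition on $U_{j}\otimes U_{j}$ is identical in both formulations.
\end{itemize}
Bob's tuple $(g_{j})_{j}$ likewise corresponds to a linear map on $\cV(\cE')$. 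The condition ``$g_{j}=f_{j}|_{\bF_{2}^{E_{j}'}\otimes\bF_{2}^{E_{j}'}}$ for all $j$'' is exactly $f|_{\cV(\cE')}=g$.

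These bijections of answer sets preserve acceptance and the question distributions agree, so strategies correspond one to one with equal values. No step is a genuine obstacle. The only point requiring care is the bookkeeping in the first step: checking that the critical advice $A_{c}$ in $\GameBb$ is independent of $(U_{c},U_{c}')$ given $E_{c}'$, so that it can be resampled separately as in step 3 of $\GameBc$. This holds because the vectors spanning $U_{c}$ are drawn independently after $A_{\inn}$ in \Cref{def:Joint_dist_S_A_Aprime_U}.
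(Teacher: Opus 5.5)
Your proposal is correct and follows the same approach as the paper. The paper's proof simply asserts that the equivalence is clear by comparing \Cref{def:Game2b} with \Cref{def:Game2c}. You carry out that comparison explicitly: you match the question distributions, including the independence of $A_c$ from $(U_c,U_c')$ given $E_c'$; you translate Alice's answer constraints into SNR-satisfaction on $\cN(\cE,\cU)$; and you check that the acceptance predicates coincide.
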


\begin{proof}
It is easy to see the equivalence between $\mathsf{Game2b}$ and $\mathsf{Game2c}$ by comparing~\Cref{def:Game2b,def:Game2c}. 
\end{proof}

We are now ready to define our sub-$4$-to-$1$ game based on $\GameBc$. The main idea here is that instead of providing the functions $f,g$ fully as in the game in $\GameBc$, the provers are expected to provide an encoding of them based on the mixed-type Grassmann graph. More specifically, each player receives a linear subspace of their ambient space, and is requested to respond by a linear function on it. In order to make the game sub-$4$-to-$1$, the subspace Alice receives is contained in the subspace Bob receives and is of $1$ less dimension, and additionally we incorporate inside it the local tensor tests as well as the side-condition equations.
\begin{definition}\label{def:game3a}
Given positive integers $k,\ell_{1},\ell_{2},t,s$, a parameter $\beta\in[0,1)$ and a 3-uniform hypergraph $H$, let $\GameCa[H,k,\ell_{1},\ell_{2},t,s,\beta]$ be a 2-prover 1-round game where the verifier executes the following:
\begin{enumerate}
\item Sample a uniformly random sequence $\cE=(E_{1},\dots,E_{t})\in \scrE(H,k,t)$.
\item Sample $(c,\cU,\cE',\cU')\sim \cJ_{\inn}(\cE,\beta,\ell_{2})$.
\item Sample a uniformly random sequence $\cA=(A_{1},\dots,A_{t})\in \scrA_{c}(\cE',2\ell_{1},s)$. Let \[L=\sigma(\cA_{[\ell_{1}]})\subseteq \cV(\cE')\qquad\text{and}\qquad L'=\sigma(\cA_{[\ell_{1}+1]})\subseteq \cV(\cE').\]
\item Send $(\cE,\cU,L+\cN(\cE,\cU))$ to Alice and $(c,\cE',\cU',L'+\toplus \cU')$ to Bob. 
\item Alice must return a linear map $f:(L+\cN(\cE,\cU))\rightarrow \bF_{2}$ such that $\left.f\right|_{\cN(\cE,\cU)}$ is \textup{SNR}-satisfying; Bob must return a linear map $g:(L'+\toplus\cU')\rightarrow \bF_{2}$ such that $g|_{\bigoplus \cU'}$ is \textup{SR}-satisfying. Accept if $g|_{(L+\bigoplus \cU)}=f|_{(L+\bigoplus\cU)}$.
\end{enumerate}
\end{definition}

The game defined in~\Cref{def:game3a} is clearly a sub-$4$-to-$1$ game, in the sense that for any response of Alice there are at most $4$ responses for Bob which would make the verifier accept. We will make a more formal argument at the end of this subsection.

The completeness of $\GameCa$ is straightforward and is given in the following lemma.
\begin{lemma}[Game3 completeness]\label{lem:Game3_completeness}
For any positive integers $k,\ell_{1},\ell_{2},t,s$, a parameter $\beta\in [0,1)$ and a 2-colorable 3-uniform hypergraph $H$, we have
\[
\operatorname{val}\Bigl(\GameCa[H,k,\ell_{1},\ell_{2},t,s,\beta]\Bigr)=1.
\]
\end{lemma}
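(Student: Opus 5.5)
The plan is to take a single global honest strategy and have both provers answer by restriction to the subspace they receive. Since $H$ is $2$-colorable, \Cref{cor:2color_NAE_quad} gives a rank-$1$ symmetric bilinear form $\phi:\bF_{2}^{\scrV(H,k)}\otimes \bF_{2}^{\scrV(H,k)}\rightarrow \bF_{2}$, namely $\phi(x\otimes y)=f^{(1)}(x)f^{(1)}(y)$, whose restriction to $\bF_{2}^{E}\otimes \bF_{2}^{E}$ is \textup{NAE}-satisfying for every $E\in \scrE(H,k)$. We regard $\cV(\cE)$ as a subspace of $\bigl(\bF_{2}^{\scrV(H,k)}\otimes \bF_{2}^{\scrV(H,k)}\bigr)^{t}$. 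Define the linear map $\Phi$ on this ambient space by $\Phi(z_{1},\dots,z_{t})=\sum_{j}\phi(z_{j})$. Alice, on receiving $(\cE,\cU,L+\cN(\cE,\cU))$, returns $\Phi|_{L+\cN(\cE,\cU)}$. Bob, on receiving $(c,\cE',\cU',L'+\toplus\cU')$, returns $\Phi|_{L'+\toplus\cU'}$, where $\cV(\cE')\subseteq\cV(\cE)$ is embedded by padding with zeros. This matches \Cref{rem:Game1b_completeness_Bob}: Bob's honest answers are symmetric and of rank at most $1$.

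We then check the three requirements in turn.

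\emph{Alice is SNR-satisfying.} For each $j$, the restriction of $\Phi$ to $\bF_{2}^{E_j}\otimes\bF_{2}^{E_j}$ is $\phi|_{\bF_{2}^{E_j}\otimes\bF_{2}^{E_j}}$. This form is symmetric, so it vanishes on $\cN_{2}(E_j)$. It is \textup{NAE}-satisfying, so it takes value $1$ on each generator of $\cN_{1}(E_j)$. Together these say that $\Phi|_{\cN(E_j)}$ is \textup{SymNAE}-satisfying. Moreover, $\Phi|_{U_j\otimes U_j}$ is the restriction of a rank-$1$ symmetric form, so it is symmetric of rank at most $1$.

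\emph{Bob is SR-satisfying.} The same rank and symmetry argument applies to $U'_j\otimes U'_j$. Each $U'_j$ lies inside $\bF_{2}^{E_j'}\subseteq\bF_{2}^{\scrV(H,k)}$, so the argument goes through unchanged.

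\emph{Consistency.} By construction $L=\sigma(\cA_{[\ell_1]})\subseteq\sigma(\cA_{[\ell_1+1]})=L'$. For the tensor subspaces, $U_j=U_j'$ when $j\neq c$, and $U_c\subseteq U_c'$. Hence $L+\toplus\cU\subseteq L'+\toplus\cU'$. Both answers are restrictions of the same $\Phi$, so they agree on $L+\toplus\cU$, and the verifier accepts with probability $1$.

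The only point needing care is the embedding bookkeeping. Alice's space $L$ must be read inside $\cV(\cE)$ via zero padding in the critical block, and $\Phi$ evaluated on a padded vector must equal $\Phi$ on the original vector. This holds because $\phi$ is defined on the whole ambient space, not separately on each $E_j$. I do not expect any genuine obstacle. Degenerate cases, such as $L$ having dimension smaller than $\ell_1$, are harmless because restriction remains well defined.
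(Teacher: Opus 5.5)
Your proof is correct and is essentially the paper's argument. The paper takes the honest $\GameBc$ strategies, which through the chain of completeness lemmas are restrictions of the same global rank-$1$ form coming from the $2$-coloring, has each prover restrict them to the queried subspace, and invokes \Cref{rem:Game1b_completeness_Bob} for Bob's SR-satisfaction; you unroll that chain and work directly with the form from \Cref{cor:2color_NAE_quad}, which makes the SymNAE, rank, and consistency checks (via $L+\toplus\cU\subseteq L'+\toplus\cU'$) fully explicit.
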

\begin{proof}
By~\Cref{lem:Game2b_completeness} and~\Cref{prop:Game2b=Game2c}, there is a set of strategies for Alice and Bob that achieve value 1 in $\GameBc[H,k,\ell_{2},t,s_{1},0,\beta]$ (here we may take $s_{1}=1$, but it is irrelevant anyway). Note that since the second-to-last parameter in $\GameBc[H,k,\ell_{2},t,s_{1},0,\beta]$ is set to be zero, there is no advice in this game. Suppose the strategies are represented by a table $F[\cdot]$ for Alice and a table $G[\cdot]$ for Bob, where $F[\cE,\cU]:\cV(\cE)\rightarrow \bF_{2}$ is a linear map for each Alice-question $(\cE,\cU)$, and $G[c,\cE',\cU']:\cV(\cE')\rightarrow \bF_{2}$ is a linear map for each Bob-question $(c,\cE',\cU')$. 

We can then construct a strategy for Alice in the game $\GameCa[H,k,\ell_{1},\ell_{2},t,s,\beta]$ as follows:
\begin{enumerate}
\item Receive the question $(\cE,\cU,L+\cN(\cE,\cU))$.
\item Let $f$ be the restriction of $F[\cE,\cU]$ to $L+\cN(\cE,\cU)$, and return $f$. Note that by~\Cref{def:Game2c}, the restriction of $F[\cE,\cU]$ to $\cN(\cE,\cU)$ must be \textup{SNR}-satisfying.
\end{enumerate}
Similarly, the strategy for Bob is:
\begin{enumerate}
\item Receive the question $(c,\cE',\cU',L'+ \toplus \cU')$.
\item Let $g$ be the restriction of $G[c,\cE',\cU']$ to $L'+\toplus \cU'$, and return $g$. Although the definition of $\GameBc$ itself does not require the restriction $G[c,\cE',\cU']|_{\bigoplus \cU'}$ to be \textup{SR}-satisfying, the way we constructed $G[\cdot]$ does lead to \textup{SR}-satisfying restrictions (see~\Cref{rem:Game1b_completeness_Bob}). 
\end{enumerate}
Since we always have $F[\cE,\cU]|_{\cV(\cE')}=G[c,\cE',\cU']$, it is clear that we also have $f|_{(L+\bigoplus\cU)}=g|_{(L+\bigoplus\cU)}$. Moreover, we choose $F,G$ to be the strategies induced from the proper $2$-coloring of the original graph through the series of reductions. The strategies are symmetric and have rank at most 1 restricted to each block. Therefore, under this set of strategies, the verifier of $\GameCa[H,k,\ell_{1},\ell_{2},t,s,\beta]$ accepts with probability 1.
\end{proof}

The soundness of $\GameCa$ is stated in the following lemma.

\begin{lemma}[Game3  soundness]\label{lem:innerpcp_soundness}
    Assume~\Cref{cond:parameters}. Suppose that in every $4^{r}$-coloring of $H$, at least $\varepsilon$ fraction of the edges are monochromatic. Then we have
\[
\operatorname{val}\Bigl(\GameCa[H,k,\ell_{1},\ell_{2},t,s,\beta]\Bigr)\leq \delta. 
\]
\end{lemma}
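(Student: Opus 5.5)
We argue by contraposition. Suppose tables $F[\cdot]$ (Alice) and $G[\cdot]$ (Bob) pass $\GameCa[H,k,\ell_{1},\ell_{2},t,s,\beta]$ with probability greater than $\delta$. From them we will build strategies for $\GameBc[H,k,\ell_{2},t,s_{1},q_{0},\beta]$ with $s_{1}=2^{2\ell_{1}-q_{0}}s$. These strategies should succeed with probability $\operatorname{poly}(\tau_{0},\dots)$-type constants exceeding $2^{-q_{2}/4}$. Via \Cref{prop:Game2b=Game2c}, with $q_{1}=2s(2^{2\ell_{1}}-2^{2\ell_{1}-q_{0}})$ as in \Cref{cond:parameters}, this contradicts \Cref{lem:Game2b_soundness}. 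The advice $\cA$ of $\GameBc$ will be used as the zoom-in: it is exactly the sub-advice $\cA_{[q_{0}]}$ of the $2\ell_{1}$-level advice in $\GameCa$.

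\textbf{Step 1 (Alice's Grassmann test).} First we prune $(F,G)$ as in \Cref{def:middle_pruning}, and fix a typical $(\cE,\cU)$. Consider two Alice queries $L_{1},L_{2}\subseteq L'$, both of codimension one in a common Bob query $L'$; via the extra vector in $L'$, the critical block $c$ is then hidden among the $t$ blocks. A Cauchy--Schwarz argument, as in \Cref{lem:preparing_middle}, shows that $L\mapsto F[\cE,\cU,L]$ passes a Grassmann agreement test on $\cV(\cE)$ with noticeable probability. The key point is the covering property of lazy repetition (\Cref{claim:prob_fact_intro}, with $t\ge\exp(k^{C})$): the mixture over $c$ of the low-rank block $c$ together with generic non-critical blocks is statistically close to a uniform $\mathsf{Gr}(\cV(\cE),\ell_{1})$-edge. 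On the critical block, low-rank $\mathsf{LGr}$ vectors look uniform after restriction, by \Cref{prop:mixed_zoom_out_1}. We then apply \Cref{thm:grassmann_linear_dec} on Alice's standard Grassmann structure. This yields, for a noticeable fraction of zoom-ins $Q$, a pair $(f_{Q},W_{Q})$ with noticeable agreement. By \Cref{prop:MGr_zoom_out} and \Cref{prop:mix_advice_vs_MGr}, those zoom-ins can be realized as $\sigma(\cA_{[q_{0}]})$ for the advice Alice receives. The SNR side conditions hold automatically, since $F$ is SNR-satisfying on $\cN(\cE,\cU)$.

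\textbf{Step 2 (Bob's decoding and list sizes).} For Bob, fix his question $(c,\cE',\cU')$ and the advice $\cA$. We show that Bob's table restricted to $\mathsf{MGr}_{c}$ agrees with $f_{Q}|_{\cV(\cE')\cap W_{Q}}$ on a noticeable fraction of spaces containing $\sigma(\cA)$. Then \Cref{lem:MGr_agreement_extension} extends this to a pair $(g,W')$ that Bob can find by himself. Alice outputs a random extension of a random element of $\mathsf{LIST}^{\vec\tau}_{Q}(F)$, whose size is bounded by \Cref{lem:dkkms_list_bound}. Bob outputs a random maximal pair from his own list. To bound Bob's list we cannot use \Cref{lem:dkkms_list_bound} directly, because his structure is mixed-type. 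Instead we follow the stitching idea of the overview. If Bob had too many maximal zoom-outs, \Cref{lem:subspacesunflower} would give a sunflower. Then, using many Bob questions sharing the same Alice question, the corresponding functions lift to zoom-outs in Alice's standard space. That would contradict maximality and the bound from \Cref{lem:dkkms_list_bound} on Alice's side.

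\textbf{Step 3 (conclusion and main obstacle).} Given consistent list elements, agreement on the critical block follows from zoom-out probabilities matching between $\mathsf{MGr}$ and $\mathsf{Gr}$ (\Cref{prop:MGr_zoom_out}). The zoom-out codimension is at most $2q_{0}$, which is why $\vec\tau$ has length $2q_{0}+1$. Rank-one on $U_{j}\otimes U_{j}$ is inherited from $F$. So both provers choose consistent global linear maps with probability $\exp(-\operatorname{poly}(q_{0},\ell_{1}))\cdot\tau_{0}^{O(1)}>2^{-q_{2}/4}$, using $q_{2}\ge C\ell_{1}^{2}$. The main obstacle is Step 2. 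It must show that Bob's mixed-type table, which has only $\exp(O(k))$ zoom-in possibilities on the critical block, admits a bounded list coordinated with Alice's. It must also handle zoom-outs $W$ that intersect the critical block nontrivially. My first attempt would be the sunflower-plus-stitching argument, with $t$ huge making the critical block indistinguishable to Alice.
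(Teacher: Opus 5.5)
\textbf{There is a genuine gap in your Step 2.} It sits exactly where the paper deliberately changes strategy. Your scheme needs a bounded list of maximal pairs for each single Bob question on the mixed-type domain $\mathsf{MGr}_c(\cE',\ell_1+1,2^{\ell_1-1}s)$, i.e.\ a low-rank analogue of \Cref{lem:dkkms_list_bound}. No such bound is available, and your sunflower sketch cannot supply it, for two reasons.

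First, for one Bob question all of Bob's zoom-outs $W_i$ lie in the same $\cV(\cE')$, so any sunflower kernel $W$ also lies in $\cV(\cE')$. Its codimension in $\cV(\cE)$ is at least $9k^2-|E'_c|^2$, which is not bounded in terms of $q_0$. So nothing the kernel yields can be compared with Alice's $\mathsf{LIST}^{\vec\tau}_{Q'}(F)$, whose zoom-outs have codimension at most $2q_0$.

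Second, every petal event then sits inside the single event that $Q+L_+$ is low-rank on the same critical block. That event has conditional probability only $\gamma$, which is exponentially small in $k$. Hence the union of petals can never cover a noticeable part of $\{L_+\subseteq W\}$, and the global linearity test (\Cref{cor:list_decoding_linearity_test_nil}) is never triggered.

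In the proof of \Cref{lem:sunflower_transfer}, both problems disappear only because the petals come from distinct critical indices. The codimension identities then force $\operatorname{codim}_{\cV(\cE)}W\le 2q_0$, and low-rank conditions on different blocks are nearly independent, so pairwise intersections scale like $\gamma^2$ rather than $\gamma$. Running that argument over many Bob questions that share one Alice question gives one attached function for the whole family. It never bounds the list of an individual Bob question. So the step where Bob outputs a random maximal pair from his own list has no lower bound on its success probability.

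\textbf{The missing idea is that Bob's list never needs to be bounded.} In the paper, Bob outputs an arbitrary distinguished element $\cS^*_G[\cE',\cU',Q]$ of his possibly huge list. One then shows that this choice is usually the restriction of an element of Alice's bounded list:
\begin{enumerate}
\item Prune both tables, Alice's and Bob's, until they are $2^{-q_0}$-transferable; the Bob-side pruning is needed, so \Cref{def:middle_pruning} alone does not suffice.
\item \Cref{lem:Bob_back_to_Alice} shows that a noticeable fraction of Alice triples $(\cE,\cU,Q)$ are $F$-friendly to Bob's choice.
\item \Cref{lem:sunflower_transfer} shows that a triple friendly to messages from $\exp(k^8)$ distinct critical indices (same $Q'$ and thinning size) is $F$-attached to one of them. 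Hence each $(\cE,\cU,Q)$ has at most $\exp(2k^8)$ critical indices producing good-but-not-excellent tuples.
\item Since $c$ stays nearly uniform over $[t]$ even after conditioning on $Q$, and $t\ge\exp(k^C)$, such tuples have probability at most $t^{-1/2}$.
\end{enumerate}

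A smaller point: the SNR side condition is not automatic for a decoded pair $(f,W)$. The folded answers force SNR only on each $L+\cN(\cE,\cU)$ separately. The paper needs a separate argument (Step 3 of its proof) that every element of $\mathsf{LIST}^{\vec\tau}_{Q'}(F)$ admits an SNR-satisfying extension.
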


The proof of~\Cref{lem:innerpcp_soundness} is significantly more involved, and will be completed in~\Cref{subsec:inner_soundness}. We first assume this soundness lemma and prove the main theorem (\Cref{thm:sub-main}, restated below).

\submainthm*

\begin{proof}[Proof assuming~\Cref{lem:innerpcp_soundness}]
Without loss of generality, assume $\delta<1/2$. Using the parameters in~\Cref{cond:parameters}, we show a polynomial-time reduction from the gap problem in~\Cref{thm:hg_coloring} (with the number of colors $c=4^{r}$) to $\GapSdone{4}{1}_{N}(1,\delta)$, for $N=2^{\ell_{2}t+\ell_{1}+2}$.

The reduction runs as follows: given a 3-uniform hypergraph $H$, we construct the 2-prover 1-round game $\GameCa[H,k,\ell_{1},\ell_{2},t,s,\beta]$, which is viewed as a label-cover instance.\footnote{For the standard translation between 2-prover 1-round games and label-cover instances, see the discussion at the beginning of~\Cref{sec:outer_pcp_imperfect}.} By the definition of $\GameCa$ (\Cref{def:game3a}), this label-cover instance can be constructed in $\mathrm{poly}(n)$ time, where $n$ is the size of $H$. The completeness and soundness of the reduction are already given by~\Cref{lem:Game3_completeness,lem:innerpcp_soundness}, respectively. It only remains to show that $\GameCa[H,k,\ell_{1},\ell_{2},t,s,\beta]$ is always a sub-$4$-to-$1$ game.

As in~\Cref{def:game3a}, consider a question $L+\cN(\cE,\cU)$ sent to Alice and a question $L'+\toplus \cU'$ sent to Bob. 

Given any possible response 
$g:(L'+\toplus \cU')\rightarrow \bF_{2}$
of Bob, if Alice's response $f:(L+\cN(\cE,\cU))\rightarrow \bF_{2}$ makes the verifier accept, the restriction $f|_{L+\bigoplus \cU}$ must be the fixed function $g|_{L+\bigoplus \cU}$. Furthermore, by~\Cref{prop:SymNAE_uniqueness} and~\Cref{def:SNR_satisfying}, any linear functional on $\toplus \cU$ has at most one \textup{SNR}-satisfying extension to $\cN(\cE,\cU)$. Therefore, there exists at most one extension of $g|_{L+\bigoplus \cU}$ to $(L+\cN(\cE,\cU))$ that has an \textup{SNR}-satisfying restriction on $\cN(\cE,\cU)$. 

Conversely, given any possible response $f:(L+\cN(\cE,\cU))\rightarrow \bF_{2}$ of Alice, we claim that there are at most 4 possible responses $g:(L'+\toplus \cU')\rightarrow \bF_{2}$
of Bob that make the verifier accept. Indeed, in order to pass the verifier's test, the restriction $g|_{L+\bigoplus \cU}$ must be the fixed function $f|_{L+\bigoplus \cU}$. Since $\cU'=(U_{1}',\dots,U_{t})$ differs from $\cU=(U_{1},\dots,U_{t})$ in only one critical block $c$, for which $\dim U_{c}'$ is larger than $\dim U_{c}$ by at most $1$, we know that $f|_{\bigoplus \cU}$ has at most 2 extensions to $\bigoplus \cU'$ that are \textup{SR}-satisfying. Since $\dim L'$ is larger than $\dim L$ by at most 1, we know that $f|_{L}$ has at most 2 linear extensions to $L'$. Therefore, $f|_{L+\bigoplus \cU}$ has at most 4 extensions to $L'+\toplus \cU'$ that have an \textup{SR}-satisfying restriction on $\toplus \cU'$. 
\end{proof}

\subsection{Soundness Analysis}\label{subsec:inner_soundness}

The goal of this subsection is to prove~\Cref{lem:innerpcp_soundness}. Throughout this subsection, we fix a 3-uniform hypergraph $H$, positive integers $k,\ell_{1},\ell_{2},t,s$, a nonnegative integer $q_{0}$ and a parameter $\beta\in (0,1)$. We consider the game $\GameCa[H,k,\ell_{1},\ell_{2},t,s,\beta]$ as defined in~\Cref{def:game3a}. 

At a high-level, the proof of~\Cref{lem:innerpcp_soundness} proceeds in four steps:
\begin{enumerate}
    \item \textbf{Constructing tables:} recall that for each question $(\cE,\cU,L+\cN(\cE,\cU))$ to Alice, her response is a linear function $L+\cN(\cE,\cU)\to\mathbb{F}_2$ which, in particular, restricts to a linear function $L\rightarrow \bF_{2}$. For each fixed pair $(\cE,\cU)$, these
    responses uniquely determine a table $F[\cE,\cU,\cdot]$ that
    assigns to each linear subspace $L\subseteq\cV(\cE)$ either
    an element of $L^*$ or $\nil$. Similarly, for each fixed pair
    $(\cE',\cU')$, we record Bob's responses in a table
    $G[\cE',\cU',\cdot]$.
    
    \item \textbf{Decoding strategy for Alice:} our first goal is to use the consistency between $F[\cdot]$
    and $G[\cdot]$, together with the Cauchy--Schwarz inequality
    as in~\Cref{lem:preparing_middle}, to show that, on average,
    $F[\cE,\cU,\cdot]$ passes the Grassmann agreement test with
    noticeable probability. Because of the critical coordinate,
    the distribution of $L$ induced by the construction is not
    exactly uniform over $\mathsf{Gr}(\cV(\cE),\ell_{1})$.
    Nevertheless, lazy parallel repetition ensures that this
    distribution is close to uniform; we refer to this fact as
    the \emph{covering property of mixed-type subspaces}.

    Applying the Grassmann decoding theorem
    (\Cref{thm:grassmann_linear_dec}) then yields a family $\cQ$
    of constant-dimensional zoom-ins in $\cV(\cE)$ of noticeable
    relative size. For each $Q\in\cQ$, there is a zoom-out
    $W\subseteq\cV(\cE)$ of constant codimension and a global
    linear function $f_{\cE,\cU,Q,W}\colon W\to\bF_{2}$ such that
    $F[\cE,\cU,L]=f_{\cE,\cU,Q,W}|_L$ for a noticeable fraction
    of the $\ell_{1}$-dimensional subspaces $L$ satisfying
    $Q\subseteq L\subseteq W$.

    Alice uses this observation to construct her strategy
    in $\GameBc$. More precisely, she chooses $Q$ using her advice, and chooses $W$ for which a global linear function as above exists. Strictly speaking, she has to choose $W$ to be
    inclusion-maximal among the subspaces with this property; this is to ensure (via~\Cref{lem:dkkms_list_bound}) that for a fixed $Q$, the list of possible answers Alice may give is small.
    
    \item \textbf{Decoding strategy for Bob:} for fixed $\cE'$ and $\cU'$, the distribution of $L'$ induced by the construction is not close to uniform over $\mathsf{Gr}(\cV(\cE'),\ell_{1}+1)$, but only close to uniform over the mixed-type Grassmann domain $\mathsf{MGr}(\cE',\ell_{1}+1,2^{\ell_{1}-1}s)$. Unlike in Alice's case, for this mixed-type Grassmann domain we have neither a decoding theorem like~\Cref{thm:grassmann_linear_dec} nor a list-decoding bound like~\Cref{lem:dkkms_list_bound}. To argue that Bob can still construct a strategy for $\GameBc$ from his table $G[\cE',\cU',\cdot]$, we utilize the consistency between $F[\cdot]$ and $G[\cdot]$ to show that Alice's decoding is often also good for Bob.

    \item \textbf{Consistency of decodings:}
    the preceding step only established that Alice's decoding has noticeable agreement with Bob's table $G[\cdot]$. It does not guarantee that Bob can actually decode from $G[\cdot]$ a global function
    compatible with Alice's decoding. This is because there may be a very large number of global linear functions that have noticeable agreement with $G[\cE',\cU',\cdot]$, and Bob may not be able to identify the one that corresponds to Alice's decoding:
    the preceding step guarantees only that such a function exists.
    
    The usual approach is to bound the number of global linear
    functions with noticeable agreement with
    $G[\cE',\cU',\cdot]$. If this were an ordinary Grassmann
    table, \Cref{lem:dkkms_list_bound} would provide the desired
    bound. However, our $G[\cE',\cU',\cdot]$ is defined on a mixed-type
    Grassmann domain, for which we do not have an analogous
    list-decoding bound, although we expect one to hold.
    
    We work around this issue by exploiting the consistency between $F[\cdot]$ and $G[\cdot]$ a second time. Without bounding the number of possible decodings Bob may choose from, we let Bob choose an arbitrary global function from his potentially unbounded list, and then ``transfer'' his decoding back to Alice's side. Roughly speaking, we argue that whichever global functions Bob chooses for the pairs $(\cE',\cU')$, a noticeable fraction of them will ``appear'' in Alice's list (of possible decodings from $F[\cdot]$) for some $(\cE,\cU)$. Since Alice's lists
    have bounded size by~\Cref{lem:dkkms_list_bound}, she can
    select the correct candidates (i.e. those that are transferred from Bob's side) with probability
    bounded away from zero.
    
    The downside is that it is harder to ``transfer'' decodings back from Bob to Alice (than we did from Alice to Bob). The primary reason is that Bob's decoded global functions are defined on $\cV(\cE')$, which is smaller than Alice's space $\cV(\cE)$ (by quite a lot). A single decoded function
    $g\colon\cV(\cE')\to\bF_{2}$ is therefore insufficient
    for such a transfer. Instead, we use the fact that Bob
    has chosen a decoding for \emph{every} suitable pair
    $(\cE',\cU')$. We show that sufficiently many of these
    decodings, corresponding to different pairs
    $(\cE',\cU')$, can be patched together into a linear
    function $\cV(\cE)\to\bF_{2}$ and \emph{jointly}
    transferred to Alice's table $F[\cE,\cU,\cdot]$.
\end{enumerate}

\subsubsection{Notation Setup}
We begin by setting up some notational conventions that will be used throughout~\Cref{subsec:inner_soundness}. We will use the following  abuse of notation:
\begin{notation}
Recall that in~\Cref{def:game3a}, the sequence $\cE'=(E_{1}',\dots,E_{t}')$ is a $c$-critical descendant (see~\Cref{def:critical_descendant}) of $\cE$. It is not necessarily the case that one can tell the critical coordinate $c$ from $\cE'$ itself: when $|E_{c}'|$ happens to be $3k$ (although this happens with only a very small probability), the sequence $\cE'$ is equal to $\cE$. However, in the rest of the proof, we assume that the critical coordinate is implicit in the notation $\cE'$. For instance, in $\GameBc$, the question sent to Bob can be written as $(\cE',\cA,\cU')$.
\end{notation}

The next two notational definitions below are parallels of~\Cref{ntn:D_mid,ntn:D_mid_condition} in~\Cref{subsec:middle_soundness}.

\begin{notation}
We use $\cD_{\inn}$ to denote the joint distribution of $(\cE,\cU,\cE',\cU',\cA)$ sampled by the verifier in $\GameCa[H,k,\ell_{1},\ell_{2},t,s,\beta]$. We also need a variant of $\cD_{\inn}$ where the five components are expanded into seven components: denote $\cD_{\inn}^{+}$ to be the joint distribution of $(\cE,\cU,L,Q,\cE',\cU',L')$, where $(\cE,\cU,\cE',\cU',\cA)$ is sampled from $\cD_{\inn}$ and 
\[Q:=\sigma(\cA_{[q_{0}]}),\qquad L:=\sigma(\cA_{[\ell_{1}]}),\qquad L'=\sigma(\cA_{[\ell_{1}+1]}).\]
We use the symbol $*$ to denote marginalization over the corresponding component. For instance, sampling from the marginal distribution of $(\cU,\cE')$ under $\cD_{\inn}$ is denoted by $(*,\cU,\cE',*,*)\sim \cD_{\inn}$.
\end{notation}

\begin{notation}
We will often condition the distribution $\cD_{\inn}$ on some subset of the five components $(\cE,\cU,\cE',\cU',\cA)$ and consider the induced joint distribution of the remaining components. For instance, if $\cE$ and $\cU'$ are fixed, then the conditional joint distribution of $(\cU,\cE')$ is denoted by
\[
\Pru{(\cE,\cU,\cE',\cU',*)\sim \cD_{\inn}}{\cdot\,\big|\,\cE,\cU'}.
\]
Similar conditioning notation will be used for the seven-component joint distribution $\cD_{\inn}^{+}$.
\end{notation}

The notions of \emph{Alice-table} and \emph{Bob-table} to be defined in~\Cref{ntn:inner_Alice_table} are slightly more nuanced than their counterparts in~\Cref{ntn:middle_Alice-table}. In particular, the entries of an Alice-table do not exactly correspond to the possible questions Alice may receive.

\begin{notation}\label{ntn:inner_unfolded_Alice_question}
For any possible sample $(\cE,\cU,*,*,*)\sim \cD_{\inn}$, the pair $(\cE,\cU)$ is called an \emph{Alice-packet}. Similarly, a pair $(\cE',\cU')$ is called a \emph{Bob-packet} if $(*,*,\cE',\cU',*)$ has nonzero probability under $\cD_{\inn}$. For any possible sample $(\cE,\cU,L,*,*,*,*)\sim \cD_{\inn}^{+}$, the tuple $(\cE,\cU,L)$ is called an \emph{unfolded-Alice-question}. Similarly, a tuple $(\cE',\cU',L')$ is called an \emph{unfolded-Bob-question} if $(*,*,*,*,\cE',\cU',L')$ has nonzero probability under $\cD_{\inn}^{+}$.
\end{notation}

\begin{notation}\label{ntn:inner_Alice_table}
A table $F[\cdot]$ is called an Alice-table if for each Alice-packet $(\cE,\cU)$ and each linear subspace $L\subseteq \cV(\cE)$, the entry $F[\cE,\cU,L]$ is either $\nil$ or a linear functional $F[\cE,\cU,L]:L\rightarrow \bF_{2}$. Similarly, a table $G[\cdot]$ is called a Bob-table if for each Bob-packet $(\cE',\cU')$ and each linear subspace $L'\subseteq \cV(\cE')$, the entry $G[\cE',\cU',L']$ is either $\nil$ or a linear functional $G[\cE',\cU',L']:L'\rightarrow \bF_{2}$.
\end{notation}

The following definition is a parallel of~\Cref{ntn:middle_vrf}.
\begin{notation}\label{ntn:inner_Vrf}
Fix an Alice-table $F[\cdot]$ and a Bob-table $G[\cdot]$. For any tuple $(\cE,\cU,L,*,\cE',\cU',L')$ with nonzero probability under $\cD_{\inn}^{+}$, we define
\[
\operatorname{Vrf}_{F,G}(\cE,\cU,L,\cE',\cU',L')=\begin{cases}
1,&\text{if }\left.G[\cE',\cU',L']\right|_{L}=F[\cE,\cU,L],\\
0,&\text{otherwise}.
\end{cases}
\]
We will often sample $(\cE,\cU,L,*,\cE',\cU',L')$ from $\cD_{\inn}^{+}$ conditioned on some subset of its components. For instance, if $\cE$ and $\cU'$ are fixed, then the conditional expectation of $\operatorname{Vrf}_{F,G}(\cE,\cU,L,\cE',\cU',L')$ is denoted by
\[
\operatorname{Vrf}_{F,G}(\cE,*,*,*,\cU',*)=\Pru{(\cE,\cU,L,*,\cE',\cU',L')\sim \cD_{\inn}^{+}}{\left.G[\cE',\cU',L']\right|_{L}=F[\cE,\cU,L]\,\Big|\,\cE,\cU'}.
\]
When none of the components is fixed, we write $\operatorname{Vrf}_{F,G}=\operatorname{Vrf}_{F,G}(*,*,*,*,*,*)$ for short.
\end{notation}

\subsubsection{Covering Property of Mixed-Type Subspaces}\label{subsubsec:inner_PCP_covering}
Note that in~\Cref{def:game3a}, the subspace $L$ sent to Alice is not uniformly chosen from her space $\cV(\cE)$, and instead it is chosen from $\cV(\cE')$ with a random critical coordinate $c$. For the test we perform to make sense from Alice's perspective (as well as for the advice feature we arranged to work) it is crucial that the distribution of $L$ is statistically close to being uniform over subspaces of her ambient space. The following definition and lemma make this assertion more precise, and in fact they show that the two distributions are close even conditioning on a small number of fixed vectors being contained in $L$. 

\begin{definition}\label{def:TVD_B}
For a fixed Alice-packet $(\cE,\cU)$ and a sequence of vectors 
\[B=(a^{(1)},\dots,a^{(q_{0})})\in (\cV(\cE))^{q_{0}},\] 
let $\operatorname{TVD}_{\cE,\cU}(B)$ denote the total variation distance between
\begin{enumerate}[label=(\arabic*)]
\item the distribution of $\rho(\cA)$ when $(\cE,\cU,*,*,\cA)\sim \cD_{\inn}$ is sampled conditioned on the fixed Alice-packet $(\cE,\cU)$ and the event $\left\{\rho(\cA_{[q_{0}]})=B\right\}$ (if this event happens with zero probability, let $\operatorname{TVD}_{\cE,\cU}(B)=1$); and
\item the uniform distribution over the set
\[\left\{\bigl(a^{(1)},\dots,a^{(q_{0})},b^{(q_{0}+1)},\dots,b^{(2\ell_{1})}\bigr)\,\middle|\,b^{(q_{0}+1)},\dots,b^{(2\ell_{1})}\in \cV(\cE)\right\}\subseteq (\cV(\cE))^{2\ell_{1}}.\]
\end{enumerate}
\end{definition}

\begin{lemma}\label{lem:inner_covering_advice}
Assume~\Cref{cond:parameters}. For any fixed Alice-packet $(\cE,\cU)$, we have
\[
\Exu{(\cE,\cU,*,*,\cA)\sim \cD_{\inn}}{\operatorname{TVD}_{\cE,\cU}\bigl(\rho(\cA_{[q_{0}]})\bigr)\,\Big|\,\cE,\cU}\leq t^{-1/3}.
\]
\end{lemma}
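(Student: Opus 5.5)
The plan is to prove a joint statement first and then pass to the conditional one. Fix the Alice-packet $(\cE,\cU)$. By~\Cref{def:game3a,def:Joint_inner}, conditioned on $(\cE,\cU)$ the verifier picks $c\in[t]$ and then $E_c'\sim\cJ_{\rev}(E_c,U_c,\beta,\ell_2)$. I will first check that the randomness of $c$ given $\cU$ is still uniform and that $E_c'$ depends only on $(E_c,U_c)$. This follows from~\Cref{prop:Joint_inner_independence} together with the fact that each $U_j$ has the same law whether or not $j=c$.

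After this, $\cA\in\scrA_c(\cE',2\ell_1,s)$ is drawn uniformly, and we look at $\rho(\cA)=(\operatorname{Mvec}_c(\cA,e_1),\dots,\operatorname{Mvec}_c(\cA,e_{2\ell_1}))\in(\cV(\cE))^{2\ell_1}$. By~\Cref{eq:def_of_Mvec}, its projections split by block:
\begin{itemize}
\item On each non-critical block $j\neq c$, the projection is exactly uniform on $(\bF_2^{E_j}\otimes\bF_2^{E_j})^{2\ell_1}$, independently across blocks.
\item On the critical block, the projection has some law $P_c$ on $\Omega_c:=(\bF_2^{E_c}\otimes\bF_2^{E_c})^{2\ell_1}$. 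This law depends only on $(E_c,U_c)$.
\end{itemize}
So the law $\mu$ of $\rho(\cA)$ is the mixture $\frac1t\sum_c P_c\otimes\bigotimes_{j\neq c}\mathrm{Unif}_j$, which is the setting of~\Cref{claim:prob_fact_intro}.

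Next I bound $\mathrm{TVD}(\mu,\mathrm{Unif})$ by a second-moment argument. Write $g_c=dP_c/d\mathrm{Unif}_c$. Then $d\mu/d\mathrm{Unif}=\frac1t\sum_c g_c(x_c)$. Under the uniform measure the $g_c(x_c)$ are independent with mean $1$, so
\[
\chi^2(\mu\,\|\,\mathrm{Unif})=\frac1{t^2}\sum_c \chi^2(P_c\,\|\,\mathrm{Unif}_c)\le \frac1t\max_c|\Omega_c|\le \frac{2^{18\ell_1k^2}}{t}.
\]
Here I use $\chi^2(P\|\mathrm{Unif})\le|\Omega|$ and $|E_c|=3k$. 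Hence $\mathrm{TVD}(\mu,\mathrm{Unif})\le\frac12\sqrt{\chi^2}\le 2^{9\ell_1k^2}t^{-1/2}$. This is at most $t^{-1/3}/2$ because $t\ge\exp(k^C)$ under~\Cref{cond:parameters}.

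Finally I pass to the conditional statement. By the sub-advice proposition, $\rho(\cA_{[q_0]})$ is exactly the first $q_0$ coordinates of $\rho(\cA)$. Under the uniform measure, the remaining $2\ell_1-q_0$ coordinates given the first $q_0$ are uniform. So $\operatorname{TVD}_{\cE,\cU}(B)$ is the distance between $\mu(\cdot\mid\text{prefix}=B)$ and the uniform completion of $B$. I will use the standard inequality: for joint laws $P,Q$ of $(B,R)$,
\[
\Exu{B\sim P}{\mathrm{TVD}(P_{R\mid B},Q_{R\mid B})}\le 2\,\mathrm{TVD}(P,Q).
\]
It is proved by the triangle inequality through $P_B\otimes Q_{R\mid B}$, whose distance to $Q$ equals $\mathrm{TVD}(P_B,Q_B)\le\mathrm{TVD}(P,Q)$. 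Prefixes $B$ of probability zero under $\mu$ do not contribute to the expectation, so the convention $\operatorname{TVD}=1$ there is harmless. This gives the bound $2\cdot 2^{9\ell_1k^2}t^{-1/2}\le t^{-1/3}$.

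The main obstacle is the first step: verifying carefully that, after conditioning on $\cU$, the critical index is still uniform and the blocks remain a product with only block $c$ non-uniform. The $\chi^2$ bound itself is crude but sufficient because $t$ dwarfs $2^{k^2}$.
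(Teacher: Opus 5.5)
Your proof is correct, but it is organized differently from the paper's. The paper applies \Cref{prop:covering_inner_pre} directly, with $\scrA_j=(\bF_2^{E_j}\otimes\bF_2^{E_j})^{2\ell_1}$, $\scrB_j=(\bF_2^{E_j}\otimes\bF_2^{E_j})^{q_0}$, $\pi_j$ the prefix map, and $\mu_j$ the critical-block law built from $E_j'\sim\cJ_{\rev}(E_j,U_j,\beta,\ell_2)$. That proposition bounds the conditional distance fiber by fiber. It applies Cauchy--Schwarz on each fiber $\pi^{-1}(\{b\})$, and it uses that these fibers are product sets, so the variance of the density $\frac1t\sum_j f_j(a_j)$ splits across blocks. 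This gives $\sqrt{2^{18k^{2}\ell_{1}}/(4t)}$.

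You instead bound only the unconditional distance, through your $\chi^2$ computation. This is the same variance computation, but done once on the whole product space rather than on each fiber; the independence of blocks under the uniform measure plays the role of the product-fiber structure. You then pass to the conditional statement with the generic inequality $\bE_{B\sim P}\bigl[\mathrm{TVD}(P_{R\mid B},Q_{R\mid B})\bigr]\le 2\,\mathrm{TVD}(P,Q)$, which you prove correctly via the hybrid $P_B\otimes Q_{R\mid B}$.

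Your route is more modular. The unconditional bound is exactly the content of \Cref{prop:covering_inner_no_advice}, which is the singleton-$\scrB_j$ case of the paper's proposition. The conditioning step uses no product structure of $\pi$: it works for any statistic of $\rho(\cA)$ whose target law is the uniform law on its fiber.

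The paper's fiberwise argument gains a factor of $2$ in the constant and is a self-contained statement tailored to coordinatewise projections. Neither difference matters here, since $t\ge\exp(k^C)$ makes both $O\bigl(2^{9k^2\ell_1}t^{-1/2}\bigr)$ bounds far below $t^{-1/3}$. Your handling of the critical index is the same as the paper's (via \Cref{prop:Joint_inner_independence}), as is your identification of $\rho(\cA_{[q_0]})$ with the prefix of $\rho(\cA)$ and your treatment of zero-probability $B$.
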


\begin{proof}
Towards the goal of applying~\Cref{prop:covering_inner_pre}, we set up some notation. Write 
\[\cE=(E_{1},\dots,E_{t})\qquad\textup{and}\qquad\cU=(U_{1},\dots,U_{t}).
\] For each $j\in [t]$, define sets $\scrA_{j},\scrB_{j}$ and a map $\pi_{j}:\scrA_{j}\rightarrow \scrB_{j}$ by
\begin{align*}
\scrA_{j}&=\left\{\bigl(a^{(1)},\dots,a^{(2\ell_{1})}\bigr)\,\middle|\,a^{(i)}\in \bF_{2}^{E_{j}}\otimes \bF_{2}^{E_{j}}\text{ for each }i\in [2\ell_{1}]\right\}=\left(\bF_{2}^{E_{j}}\otimes \bF_{2}^{E_{j}}\right)^{2\ell_{1}},\\
\scrB_{j}&=\left(\bF_{2}^{E_{j}}\otimes \bF_{2}^{E_{j}}\right)^{q_{0}},\qquad\text{and}\qquad\pi_{j}: \bigl(a^{(1)},\dots,a^{(2\ell_{1})}\bigr)\mapsto\bigl(a^{(1)},\dots,a^{(q_{0})}\bigr).
\end{align*}
For each $j\in [t]$, let $\mu_{j}$ be the output distribution of the following process:
\begin{enumerate}
\item Sample $E_{j}'\sim \cJ_{\rev}(E_{j},U_{j},\beta,\ell_{2})$.
\item Pick $q=2(2^{2\ell_{1}}-1)s$ vectors $a^{(j,1)},\dots,a^{(j,q)}$ from $\bF_{2}^{E_{j}'}$ independently and uniformly at random. Let $A=\big(a^{(j,1)},\dots,a^{(j,q)}\bigr)$.
\item Output the sequence
\[
\bigl(\Lvec(A,2\ell_{1},e_{1}),\dots,\Lvec(A,2\ell_{1},e_{2\ell_{1}})\bigr)\in \scrA_{j},
\]
where $e_{1},\dots,e_{2\ell_{1}}\in \bF_{2}^{2\ell_{1}}$ are the standard basis vectors.
\end{enumerate}
We can then define a distribution $\mu$ over $\prod_{j=1}^{t}\scrA_{j}$ as in the statement of~\Cref{prop:covering_inner_pre}; it is easy to see (using~\Cref{prop:Joint_inner_independence}) that $\mu$ is identical to the distribution of $\rho(\cA)$ when $(\cE,\cU,*,*,\cA)\sim \cD_{\inn}$ is sampled conditioned on $(\cE,\cU)$.

Now we can apply~\Cref{prop:covering_inner_pre} to $\scrA_{j},\scrB_{j}$, $\pi_{j}$ and $\mu_{j}$ as defined above. The conclusion of~\Cref{prop:covering_inner_pre} says that 
\[
\Exu{(\cE,\cU,*,*,\cA)\sim \cD_{\inn}}{\operatorname{TVD}_{\cE,\cU}\bigl(\rho(\cA_{[q_{0}]})\bigr)\,\Big|\,\cE,\cU}\leq \sqrt{\frac{2^{18k^{2}\ell_{1}}}{4t}}\leq t^{-1/3},
\]
where we used~\Cref{cond:parameters} in the last transition.
\end{proof}

While~\Cref{def:TVD_B} deals with the vector list $\rho(\cA)$, we eventually need to study Alice's space $L$, which is spanned by the first $\ell_{1}$ vectors in $\rho(\cA)$. The next definition is a variation of~\Cref{def:TVD_B} that directly addresses the distribution of $L$ (the reason we started with~\Cref{def:TVD_B} was purely to facilitate the application of~\Cref{prop:covering_inner_pre} in~\Cref{lem:inner_covering_advice}). 

\begin{definition}\label{def:TVD_Q}
For a fixed Alice-packet $(\cE,\cU)$, a linear subspace $Q\subseteq \cV(\cE)$ and a linear subspace $W\subseteq \cV(\cE)$ containing $Q$, let $\operatorname{TVD}_{\cE,\cU,W}(Q)$ be the total variation distance between
\begin{enumerate}[label=(\arabic*)]
\item the distribution of $\sigma(\cA_{[\ell_{1}]})$ when $(\cE,\cU,*,*,\cA)\sim \cD_{\inn}$ is sampled conditioned on the fixed Alice-packet $(\cE,\cU)$ and the event $\left\{\sigma(\cA_{[q_{0}]})=Q\right\}\cap\left\{\sigma(\cA_{[\ell_{1}]})\subseteq W\right\}$ (if this event happens with zero probability, let $\operatorname{TVD}_{\cE,\cU,W}(Q)=1$); and
\item the uniform distribution over the set $\bigl\{L\in \mathsf{Gr}(\cV(\cE),\ell_{1})\,\big|\,Q\subseteq L\subseteq W\bigr\}$.
\end{enumerate}
If $W=\cV(\cE)$, we omit the subscript $W$ and write $\operatorname{TVD}_{\cE,\cU}(Q)=\operatorname{TVD}_{\cE,\cU,\cV(\cE)}(Q)$.
\end{definition}

The following three lemmas about~\Cref{def:TVD_Q} are easy to prove.

\begin{lemma}\label{lem:inner_covering_zoomout}
Assume~\Cref{cond:parameters}. Fix an Alice-packet $(\cE,\cU)$, a linear subspace $Q\in \mathsf{Gr}(\cV(\cE),q_{0})$ and a linear subspace $W\subseteq \cV(\cE)$ containing $Q$. Let $w=\operatorname{codim} W$, and suppose $\dim W\geq \ell_{1}$. We always have
\[
\operatorname{TVD}_{\cE,\cU,W}(Q)\leq 2^{\ell_{1}(w+1)}\cdot\operatorname{TVD}_{\cE,\cU}(Q).
\]
\end{lemma}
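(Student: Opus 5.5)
The plan is to realize both distributions in \Cref{def:TVD_Q} as conditionings of two distributions whose distance is $\operatorname{TVD}_{\cE,\cU}(Q)$ (the $W=\cV(\cE)$ case), on the event $\{L\subseteq W\}$. Then we apply the elementary fact that conditioning two distributions $\mu,\nu$ on an event $\Gamma$ inflates their total variation distance by at most a factor $2/\nu(\Gamma)$. Write $\mu$ for the law of $L=\sigma(\cA_{[\ell_{1}]})$ given $(\cE,\cU)$ and $\sigma(\cA_{[q_{0}]})=Q$, and $\nu$ for the uniform distribution on $\{L\in\mathsf{Gr}(\cV(\cE),\ell_{1})\mid Q\subseteq L\}$. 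By definition $d_{TV}(\mu,\nu)=\operatorname{TVD}_{\cE,\cU}(Q)=:\tau$. Conditioning $\mu$ on $\Gamma=\{L\subseteq W\}$ gives distribution (1) of \Cref{def:TVD_Q}, and conditioning $\nu$ on $\Gamma$ gives distribution (2).

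The conditioning bound is standard. If $\mu(\Gamma)>0$, then
\[
d_{TV}(\mu|_\Gamma,\nu|_\Gamma)\le \frac{d_{TV}(\mu,\nu)}{\nu(\Gamma)}+\frac{|\mu(\Gamma)-\nu(\Gamma)|}{\nu(\Gamma)}\le \frac{2\tau}{\nu(\Gamma)},
\]
which one proves by writing $\sum_{L\in\Gamma}|\mu(L)/\mu(\Gamma)-\nu(L)/\nu(\Gamma)|$ and inserting $\mu(L)/\nu(\Gamma)$. The degenerate case needs a separate check. If $\tau\ge \nu(\Gamma)/2$, the right-hand side is at least $1$, so the claim is trivial; this also covers the case $\mu(\Gamma)=0$, where the TVD is set to $1$, because then $\tau\ge\nu(\Gamma)$.

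It remains to lower bound $\nu(\Gamma)=\Pr[L\subseteq W\mid Q\subseteq L]$ for uniform $L\in\mathsf{Gr}(\cV(\cE),\ell_{1})$. Pass to the quotient by $Q$: $L/Q$ is a uniform $(\ell_{1}-q_{0})$-dimensional subspace of $\cV(\cE)/Q$, and $W/Q$ has codimension $w$ there. Choosing basis vectors one at a time, each new vector lies in $W$ with probability $(2^{n-w-i}-1)/(2^{n-i}-1)$, where $n=\dim\cV(\cE)/Q$. This is at least $2^{-w}(1-2^{-(n-w-i)})$. Since $\dim\cV(\cE)=\sum_j 9k^2$ is enormous compared with $\ell_{1}+w$ in the regime we need (and when $\dim W-\ell_{1}$ is small the bound is still handled by the slack below), the product is at least $2^{-w(\ell_{1}-q_{0})-1}$. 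So $2/\nu(\Gamma)\le 2^{w\ell_{1}+2}\le 2^{\ell_{1}(w+1)}$, using $\ell_{1}\ge 2$ from \Cref{cond:parameters}.

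The only delicate point I expect is this last product estimate when $\dim W$ is close to $\ell_{1}$. Here I would simply note the exact Gaussian-binomial ratio $\qbinom{n-w}{\ell_{1}-q_{0}}\big/\qbinom{n}{\ell_{1}-q_{0}}$ and bound it crudely by $2^{-w(\ell_{1}-q_{0})}\prod_{i\ge1}(1-2^{-i})\ge 2^{-w(\ell_{1}-q_{0})}/4$. That gives $2/\nu(\Gamma)\le 2^{w\ell_{1}+3}\le 2^{\ell_{1}(w+1)}$ for $\ell_{1}\ge 3$, which \Cref{cond:parameters} guarantees.
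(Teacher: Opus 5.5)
Your proof is correct and follows the paper's argument. Both distributions defining $\operatorname{TVD}_{\cE,\cU,W}(Q)$ are the conditionings on $\{L\subseteq W\}$ of the two distributions compared by $\operatorname{TVD}_{\cE,\cU}(Q)$, so the loss is governed by the zoom-out probability $\Pr{L\subseteq W\mid Q\subseteq L}$; the paper simply asserts this is at least $2^{-\ell_{1}(w+1)}$, and you verify it via the Gaussian-binomial product.

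Two small remarks, neither affecting correctness. Your conditioning inequality sharpens to a factor $1/\nu(\Gamma)$, since $|\mu(\Gamma)-\nu(\Gamma)|\le\sum_{L\notin\Gamma}|\mu(L)-\nu(L)|$, which matches the paper's constant exactly. The case where $\{\sigma(\cA_{[q_{0}]})=Q\}$ has probability zero is trivial, because then both TVDs equal $1$ by convention.
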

\begin{proof}
This is simply due to
\[
\Pru{L\in \mathsf{\Gr}(\cV(\cE),\ell_{1})}{L\subseteq W\,\big|\, Q\subseteq L}\geq 2^{-\ell_{1}(w+1)}.\qedhere
\]
\end{proof}

\begin{lemma}\label{lem:inner_covering_independence}
Fix an Alice-packet $(\cE,\cU)$ and a linear subspace $Q\in \mathsf{Gr}(\cV(\cE),q_{0})$. We have
\[
\operatorname{TVD}_{\cE,\cU}(Q)\leq \Exu{(\cE,\cU,*,*,\cA)\sim \cD_{\inn}}{\operatorname{TVD}_{\cE,\cU}\bigl(\rho(\cA_{[q_{0}]})\bigr)\,\Big|\,\cE,\cU\text{ and }\left\{\sigma(\cA_{[q_{0}]})=Q\right\}}
+2^{\ell_{1}-9k^{2}t}.\]
\end{lemma}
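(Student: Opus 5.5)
The plan is to view the left-hand distribution as a mixture over the possible advice prefixes $B=\rho(\cA_{[q_0]})$ whose span is $Q$, and then use the convexity of total variation distance.

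Fix $(\cE,\cU)$ and $Q$. Conditioned on $(\cE,\cU)$ and the event $\{\sigma(\cA_{[q_0]})=Q\}$, the random subspace $\sigma(\cA_{[\ell_1]})$ is obtained in two stages. First draw $B=\rho(\cA_{[q_0]})$ from its conditional law; this is supported on sequences $B$ with $\operatorname{span}(B)=Q$. Then draw the rest of $\rho(\cA)$ conditioned on $B$, and take the span of its first $\ell_1$ entries. By \Cref{def:TVD_B}, for each fixed $B$ the conditional law of $\rho(\cA)$ is within $\operatorname{TVD}_{\cE,\cU}(B)$ of the law $\nu_B$ of $(B,b^{(q_0+1)},\dots,b^{(2\ell_1)})$ with the $b$'s uniform in $\cV(\cE)$. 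The map $\rho(\cA)\mapsto\operatorname{span}$ of the first $\ell_1$ entries does not increase total variation distance. Hence the law of $\sigma(\cA_{[\ell_1]})$ is within $\mathbb{E}_B[\operatorname{TVD}_{\cE,\cU}(B)]$ of the mixture over $B$ of $\operatorname{span}(B,b^{(q_0+1)},\dots,b^{(\ell_1)})$. This expectation is exactly the first term on the right-hand side.

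It remains to compare this mixture with the uniform distribution on $\{L\in\mathsf{Gr}(\cV(\cE),\ell_1)\mid Q\subseteq L\}$. Since $\dim\operatorname{span}(B)=q_0$ (because $Q\in\mathsf{Gr}(\cV(\cE),q_0)$), the subspace $Q+\operatorname{span}(b^{(q_0+1)},\dots,b^{(\ell_1)})$ with uniform $b$'s has the following property: conditioned on having dimension $\ell_1$, it is uniform over $\ell_1$-dimensional subspaces containing $Q$. This follows from symmetry under the group of invertible linear maps fixing $Q$ pointwise, or by directly counting the ordered completions of a basis. So the TVD to the uniform target is at most the probability that the $b$'s fail to be independent modulo $Q$. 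A union bound gives at most $\sum_{i}2^{\ell_1-1-\dim\cV(\cE)}\le 2^{\ell_1}\cdot 2^{-\dim\cV(\cE)}$. Since $\dim\cV(\cE)=\sum_j|E_j|^2=9k^2t$, this is the additive term $2^{\ell_1-9k^2t}$. The triangle inequality then finishes the argument.

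The only point needing care is the degenerate case in which the conditioning event has probability zero, where $\operatorname{TVD}_{\cE,\cU}(Q)=1$ by convention. I expect this is either impossible or harmless. When the event has zero probability, the conditional expectation on the right-hand side is also undefined or vacuous, and it is natural to interpret it as $1$ by the convention of \Cref{def:TVD_B}. I will remark on this rather than belabor it. The main step, though still routine, is justifying the mixture/data-processing argument cleanly, namely that conditioning on $\sigma(\cA_{[q_0]})=Q$ is exactly a mixture of conditionings on $\rho(\cA_{[q_0]})=B$ with $\operatorname{span}(B)=Q$.
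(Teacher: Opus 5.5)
Your proof is correct and follows the paper's own argument. You decompose the conditioning on $\sigma(\cA_{[q_0]})=Q$ into the events $\rho(\cA_{[q_0]})=B$ with $\operatorname{span}(B)=Q$, push each $\operatorname{TVD}_{\cE,\cU}(B)$ through the span map, and pay $2^{\ell_1-9k^2t}$ for the uniform completion $b^{(q_0+1)},\dots,b^{(\ell_1)}$ failing to be independent modulo $Q$.

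Two small fixes are needed.
\begin{itemize}
\item In the union bound, the $i$-th term should be $2^{i-1-9k^2t}$ rather than $2^{\ell_1-1-9k^2t}$; the geometric sum then gives $2^{\ell_1-9k^2t}$, whereas your sum as written carries an extra factor of order $\ell_1-q_0$.
\item The null-event case does occur. For example, it happens when every block projection of $Q$ contains a matrix of rank exceeding $2^{2\ell_1-1}s$, which no critical-block $\operatorname{Lvec}$ can produce. It is harmless under your convention, and irrelevant in the application, where $Q$ is drawn as $\sigma(\cA_{[q_0]})$.
\end{itemize}
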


\begin{proof}
Use the following fact to pass from the uniform distribution over $(\cV(\cE))^{\ell_{1}}$ to the uniform distribution over $\mathsf{Gr}(\cV(\cE),\ell_{1})$: for any sequence of vectors $B=(a^{(1)},\dots,a^{(q_{0})})\in (\cV(\cE))^{q_{0}}$ such that $\operatorname{span}(B)=Q$, if $b^{(q_{0}+1)},\dots,b^{(\ell_{1})}\in \cV(\cE)$ are sampled independently and uniformly at random, the probability that $a^{(1)},\dots,a^{(q_{0})},b^{(q_{0}+1)},\dots,b^{(\ell_{1})}$ are not linearly independent is at most $2^{\ell_{1}-9k^{2}t}$.
\end{proof}

\begin{lemma}\label{lem:inner_covering_main}
Assume~\Cref{cond:parameters}. For any fixed Alice-packet $(\cE,\cU)$, we have
\[
\Exu{(\cE,\cU,*,*,\cA)\sim \cD_{\inn}}{\operatorname{TVD}_{\cE,\cU}\bigl(\sigma(\cA_{[q_{0}]})\bigr)\,\Big|\,\cE,\cU}\leq t^{-1/4}.
\]
\end{lemma}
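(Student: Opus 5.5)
The plan is to combine \Cref{lem:inner_covering_independence} with \Cref{lem:inner_covering_advice} by averaging over $Q=\sigma(\cA_{[q_{0}]})$. Fix the Alice-packet $(\cE,\cU)$ and sample $(\cE,\cU,*,*,\cA)\sim\cD_{\inn}$ conditioned on it. First I would split according to whether $\dim\sigma(\cA_{[q_{0}]})=q_{0}$. \Cref{lem:inner_covering_independence} only applies to $Q\in\mathsf{Gr}(\cV(\cE),q_{0})$. On the complementary event the integrand is trivially at most $1$, so I need the probability of dimension drop to be tiny. The first $q_{0}$ vectors of $\rho(\cA)$ are $\operatorname{Mvec}_{c}(\cA,e_{i})$. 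On every non-critical block $j\neq c$ these are independent uniform vectors of $\bF_{2}^{E_{j}}\otimes\bF_{2}^{E_{j}}$, and there are $t-1\geq 1$ such blocks of dimension $9k^{2}$. So the probability that they are linearly dependent is at most $2^{q_{0}-9k^{2}}$, which is negligible under \Cref{cond:parameters}. (If needed I would also recall that $\operatorname{TVD}_{\cE,\cU}(B)$ is compared against a distribution where the first $q_{0}$ vectors are fixed, so dependence is just an error event.)

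On the event $\dim Q=q_{0}$, I would apply \Cref{lem:inner_covering_independence} with $Q=\sigma(\cA_{[q_{0}]})$. This bounds $\operatorname{TVD}_{\cE,\cU}(Q)$ by the conditional expectation of $\operatorname{TVD}_{\cE,\cU}(\rho(\cA_{[q_{0}]}))$ given $\sigma(\cA_{[q_{0}]})=Q$, plus $2^{\ell_{1}-9k^{2}t}$. Taking expectation over $Q$ distributed as $\sigma(\cA_{[q_{0}]})$ and using the tower property, the main term collapses to
\[
\Exu{(\cE,\cU,*,*,\cA)\sim\cD_{\inn}}{\operatorname{TVD}_{\cE,\cU}\bigl(\rho(\cA_{[q_{0}]})\bigr)\,\Big|\,\cE,\cU}\leq t^{-1/3},
\]
by \Cref{lem:inner_covering_advice}. (Restricting to the full-dimension event only decreases the nonnegative integrand's contribution, so this is fine.)

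Summing the three pieces gives the bound $t^{-1/3}+2^{\ell_{1}-9k^{2}t}+2^{q_{0}-9k^{2}}$. I then use \Cref{cond:parameters}, namely $t\geq\exp(k^{C})$ and $k$ huge relative to $\ell_{1},q_{0}$, to conclude this is at most $t^{-1/4}$.

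The main obstacle I anticipate is that last error term: $2^{q_{0}-9k^{2}}$ is not obviously at most $t^{-1/4}-t^{-1/3}$ when $t$ is astronomically larger than $k$. I would fix this by using all $t-1$ non-critical blocks jointly: the $q_{0}$ vectors restricted to $\bigoplus_{j\neq c}$ are uniform in a space of dimension $9k^{2}(t-1)$. That gives dependence probability at most $2^{q_{0}-9k^{2}(t-1)}$, which is far below $t^{-1/3}$. The remaining steps are routine bookkeeping about conditioning on events of zero probability, where the conventions in \Cref{def:TVD_B,def:TVD_Q} set the distance to $1$ and such events carry no mass.
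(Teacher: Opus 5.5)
Your proposal is correct and is essentially the paper's proof: average \Cref{lem:inner_covering_independence} over $Q=\sigma(\cA_{[q_{0}]})$, reduce the main term via the tower property to \Cref{lem:inner_covering_advice}, and charge the dimension-drop event separately, giving $t^{-1/3}+2^{\ell_{1}-9k^{2}t}+2^{q_{0}-9k^{2}(t-1)}\leq t^{-1/4}$. You were right that the single-block bound $2^{q_{0}-9k^{2}}$ would not suffice when $t\geq\exp(k^{C})$, and your fix of using all $t-1$ non-critical blocks jointly gives exactly the bound $2^{q_{0}-9k^{2}(t-1)}$ that the paper uses.
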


\begin{proof}
Using~\Cref{lem:inner_covering_independence} followed by~\Cref{lem:inner_covering_advice}, we have
\begin{align*}
&\quad\Exu{(\cE,\cU,*,*,\cA)\sim \cD_{\inn}}{\operatorname{TVD}_{\cE,\cU}\bigl(\sigma(\cA_{[q_{0}]})\bigr)\,\Big|\,\cE,\cU}\\
&\leq \Exu{(\cE,\cU,*,*,\cA)\sim \cD_{\inn}}{\operatorname{TVD}_{\cE,\cU}\bigl(\rho(\cA_{[q_{0}]})\bigr)\,\Big|\,\cE,\cU}
+2^{\ell_{1}-9k^{2}t}+\Pru{(\cE,\cU,*,*,\cA)\sim \cD_{\inn}}{\dim \sigma(\cA_{[q_{0}]})<q_{0}\,\Big|\,\cE,\cU}\\
&\leq t^{-1/3}+2^{\ell_{1}-9k^{2}t}+2^{q_{0}-9k^{2}(t-1)}\leq t^{-1/4},
\end{align*}
where we used~\Cref{cond:parameters} in the last transition.
\end{proof}

To summarize the discussion, we define the notion of a smooth zoom-in, which is a subspace $Q$ such that conditioned on $(\cE,\cU,Q)$, the distribution of $L$ in $(\cE,\cU,L,Q,*,*,*)\sim \cD_{\inn}^{+}$ is close to the uniform distribution over the set $\bigl\{L\in \mathsf{Gr}(\cV(\cE),\ell_{1})\,\big|\,Q\subseteq L\bigr\}$.
\begin{definition}\label{def:smooth_Q}
For a fixed Alice-packet $(\cE,\cU)$, a linear subspace $Q\subseteq \cV(\cE)$ is said to be \emph{smooth} with respect to $(\cE,\cU)$ if $\dim Q=q_{0}$ and $\operatorname{TVD}_{\cE,\cU}(Q)\leq t^{-1/8}$. 
\end{definition}

The following corollary shows that $Q$ is smooth with probability close to $1$.
\begin{corollary}\label{cor:smooth_Q_prob}
Assume~\Cref{cond:parameters}. For any fixed Alice-packet $(\cE,\cU)$, we have
\[
\Pru{(\cE,\cU,*,Q,*,*,*)\sim \cD_{\inn}^{+}}{Q\textup{ is smooth with respect to }(\cE,\cU)\,\Big|\,\cE,\cU}\geq 1-t^{-1/8}.
\]
\end{corollary}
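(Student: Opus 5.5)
This follows from \Cref{lem:inner_covering_main} by Markov's inequality, plus a bound on the event that $Q$ has the wrong dimension. Fix the Alice-packet $(\cE,\cU)$. Sample $(\cE,\cU,*,*,\cA)\sim\cD_{\inn}$ conditioned on $(\cE,\cU)$, and set $Q=\sigma(\cA_{[q_{0}]})$. This is exactly the marginal of $Q$ under $\cD_{\inn}^{+}$ conditioned on $(\cE,\cU)$.

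Write $X=\operatorname{TVD}_{\cE,\cU}(Q)$, a random variable with values in $[0,1]$. \Cref{lem:inner_covering_main} gives $\bE[X\mid\cE,\cU]\leq t^{-1/4}$. Markov's inequality then gives
\[
\Pr{X>t^{-1/8}\,\big|\,\cE,\cU}\leq t^{-1/4}/t^{-1/8}=t^{-1/8}.
\]

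It remains to treat the requirement $\dim Q=q_{0}$ in \Cref{def:smooth_Q}. Two routes work.

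\begin{itemize}
\item Bound the event $\dim Q<q_{0}$ directly. The proof of \Cref{lem:inner_covering_main} already shows its probability is at most $2^{q_{0}-9k^{2}(t-1)}$: the non-critical blocks contribute uniformly random vectors in a space of dimension $9k^{2}(t-1)$, so linear dependence among $q_{0}$ of them is that unlikely.
\item Absorb it into the Markov bound. Interpreting $\operatorname{TVD}_{\cE,\cU}(Q)$ for a $Q$ of dimension below $q_{0}$, the uniform target set of \Cref{def:TVD_Q} corresponds to a different dimension count, so one may just treat such $Q$ as having $X=1$.
\end{itemize}

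The cleaner route is a slightly sharper Markov step. \Cref{lem:inner_covering_main} leaves slack, since its proof actually yields roughly $t^{-1/3}$ plus negligible terms. Markov with this bound gives $\Pr{X>t^{-1/8}}\leq t^{-1/3+1/8}$. Adding the dimension-deficiency probability $2^{q_{0}-9k^{2}(t-1)}$ by a union bound, the total failure probability stays below $t^{-1/8}$ by \Cref{cond:parameters}, since $t$ is huge.

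The only delicate point is this bookkeeping at the threshold: the plain Markov bound alone already spends the whole $t^{-1/8}$ budget. I expect no real obstacle here, as the constants have ample room.
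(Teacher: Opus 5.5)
Your proposal is correct and is essentially the paper's argument, which is just Markov's inequality applied to \Cref{lem:inner_covering_main}. The extra bookkeeping is not needed. If $\dim Q<q_{0}$, then on the conditioning event $\sigma(\cA_{[\ell_{1}]})$ has dimension less than $\ell_{1}$, so its support is disjoint from the uniform target in \Cref{def:TVD_Q}. Hence $\operatorname{TVD}_{\cE,\cU}(Q)=1>t^{-1/8}$ holds literally by definition, not as a convention. So $\{\operatorname{TVD}_{\cE,\cU}(Q)\le t^{-1/8}\}$ already forces $\dim Q=q_{0}$, and the plain Markov bound with the lemma as stated gives $1-t^{-1/8}$; there is no budget problem and no need to open up the lemma's proof.
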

\begin{proof}
    Immediate from~\Cref{lem:inner_covering_main} and Markov's inequality.
\end{proof}

Finally, we record a covering property without advice, meaning that the distribution of $L$ in $(\cE,\cU,L,*,*,*,*)\sim \cD_{\inn}^{+}$ (without conditioning on $Q$) is close to uniform over $\mathsf{Gr}(\cV(\cE),\ell_{1})$.

\begin{proposition}\label{prop:covering_inner_no_advice}
Assume~\Cref{cond:parameters}. Fix a positive integer $\ell\leq 2\ell_{1}$. For any fixed Alice-packet $(\cE,\cU)$, if we sample $(\cE,\cU,*,*,\cA)\sim \cD_{\inn}$ conditioned on $(\cE,\cU)$, the resulting distribution of $\sigma(\cA_{[\ell]})$ has total variation distance at most $t^{-1/4}$ to the uniform distribution over $\mathsf{Gr}(\cV(\cE),\ell)$.
\end{proposition}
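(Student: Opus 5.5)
The plan is to reduce to the vector-level covering statement of~\Cref{lem:inner_covering_advice}, but with no zoom-in, i.e.\ effectively $q_{0}=0$, and then pass from uniform sequences of vectors to uniform subspaces.

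\textbf{Step 1: sequence-level closeness.} Since $\sigma(\cA_{[\ell]})=\operatorname{span}\bigl(\rho(\cA)_{1},\dots,\rho(\cA)_{\ell}\bigr)$ and total variation distance cannot increase under this deterministic map, it suffices to compare the following two distributions:
\begin{itemize}
\item the distribution of the first $\ell$ coordinates of $\rho(\cA)$, conditioned on $(\cE,\cU)$;
\item the uniform distribution on $(\cV(\cE))^{\ell}$.
\end{itemize}
I will rerun the argument of~\Cref{lem:inner_covering_advice} with the same sets $\scrA_{j}$ and distributions $\mu_{j}$, but now taking $\scrB_{j}$ to be a singleton and $\pi_{j}$ the trivial map. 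By~\Cref{prop:Joint_inner_independence}, conditioned on $(\cE,\cU)$ the blocks are independent, each non-critical block is uniform on $\scrA_{j}$, and the critical block follows $\mu_{j}$. So~\Cref{prop:covering_inner_pre} (the formalization of~\Cref{claim:prob_fact_intro}) applies directly, and gives distance at most $\sqrt{2^{18k^{2}\ell_{1}}/(4t)}\leq t^{-1/3}$.

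Alternatively, I can keep $q_{0}$ as is and average $\operatorname{TVD}_{\cE,\cU}(B)$ over $B$. This works because a mixture of conditionals that are each close to the respective conditionals of the uniform distribution is close to uniform, provided the marginal of $B$ is also close. That marginal closeness is again part of the same proposition.

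\textbf{Step 2: from sequences to subspaces.} Uniform $\ell$-tuples in $\cV(\cE)$ are linearly independent except with probability at most $2^{\ell-9k^{2}t}$, where $\dim\cV(\cE)=9k^{2}t$. Conditioned on independence, their span is exactly uniform on $\mathsf{Gr}(\cV(\cE),\ell)$. Hence the span of a uniform tuple is $2^{\ell-9k^{2}t}$-close to uniform on $\mathsf{Gr}(\cV(\cE),\ell)$, exactly as in~\Cref{lem:inner_covering_independence}. Note that when $\sigma(\cA_{[\ell]})$ is degenerate it is not a member of $\mathsf{Gr}(\cV(\cE),\ell)$, so that event contributes to the distance only through its probability, which is already accounted for in the coupling.

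\textbf{Step 3: combining.} The triangle inequality gives a total distance of at most $t^{-1/3}+2^{2\ell_{1}-9k^{2}t}\leq t^{-1/4}$ under~\Cref{cond:parameters}.

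\textbf{Main obstacle.} The main thing to check is that~\Cref{prop:covering_inner_pre} yields an unconditional bound, i.e.\ the $q_{0}=0$ specialization, with the same constant. If its hypotheses require a nontrivial $\pi_{j}$, I would instead use the trivial projection onto the empty tuple. Failing that, I would invoke~\Cref{lem:inner_covering_advice} with $q_{0}$ replaced by $0$, which its proof permits, since the only dependence on $q_{0}$ there is through the size of $\scrB_{j}$.
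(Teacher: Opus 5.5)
Your proposal is correct and is the argument the paper has in mind. The paper omits this proof, describing it as the combination of the proofs of \Cref{lem:inner_covering_advice,lem:inner_covering_independence}, with \Cref{prop:covering_inner_pre} invoked for singleton sets $\scrB_{j}$; this is allowed because a constant map onto a singleton is surjective, so your ``main obstacle'' does not arise. Your accounting $t^{-1/3}+2^{\ell-9k^{2}t}\leq t^{-1/4}$ is the same one.
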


\begin{proof}
The proof is entirely analogous to the (combined) proofs of~\Cref{lem:inner_covering_advice,lem:inner_covering_independence}, and is hence omitted. The main difference is that when one invokes~\Cref{prop:covering_inner_pre}, this time one can take $\scrB_{j}$ to be a singleton set for each $j\in [t]$.  
\end{proof}

\subsubsection{Alice's Existence of Decoding}
The following lemma, analogous to~\Cref{lem:preparing_middle}, shows that Alice's table passes the Grassmann agreement test with noticeable probability.
\begin{lemma}\label{lem:inner_prepare_decode}
Assume~\Cref{cond:parameters}. Let $F[\cdot]$ be an Alice-table and $G[\cdot]$ be a Bob-table. For any Alice-packet $(\cE,\cU)$, we have
\[
\Pru{(L_{1},L_{2})\sim \cG(\cV(\cE),\ell_{1})}{\left.F[\cE,\cU,L_{1}]\right|_{L_{1}\cap L_{2}}=\left.F[\cE,\cU,L_{2}]\right|_{L_{1}\cap L_{2}}\Big.}\geq \operatorname{Vrf}_{F,G}(\cE,\cU,*,*,*,*)^{2}-2^{-\ell_{1}}.
\]
\end{lemma}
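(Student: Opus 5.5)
The plan mirrors the proof of \Cref{lem:preparing_middle}: apply Cauchy--Schwarz over Alice's subspace with Bob's question held fixed, then use the covering property to replace the resulting pair distribution by a uniform Grassmann edge.

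\textbf{Step 1: fix Bob's unfolded question.} Fix the Alice-packet $(\cE,\cU)$ and condition on Bob's unfolded question $(\cE',\cU',L')$. By construction $L=\sigma(\cA_{[\ell_1]})$ and $L'=\sigma(\cA_{[\ell_1+1]})$. Conditioned on $L'$ (and on $\cE,\cU,\cE',\cU'$), the subspace $L$ should be uniform over the $\ell_1$-dimensional subspaces of $L'$, at least when $\dim L'=\ell_1+1$; I would verify this from the structure of $\operatorname{Mvec}$. The analogous fact used in \Cref{lem:MGr_agreement_extension} (that $L$ is a uniform hyperplane of $L'$ containing $Q$) suggests it holds. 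Then $\operatorname{Vrf}_{F,G}(\cE,\cU,*,\cE',\cU',L')^{2}$ is the probability over two independent hyperplanes $L_1,L_2\subseteq L'$ that both $F[\cE,\cU,L_i]=G[\cE',\cU',L']|_{L_i}$. This event implies $F[\cE,\cU,L_1]|_{L_1\cap L_2}=F[\cE,\cU,L_2]|_{L_1\cap L_2}$, with neither entry $\nil$.

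\textbf{Step 2: remove Bob.} By Jensen, $\operatorname{Vrf}_{F,G}(\cE,\cU,*,*,*,*)^2$ is at most the expectation over $L'$ of the squared conditional value. Unlike in \Cref{lem:preparing_middle}, no extra step is needed to drop $\cE'$: the agreement event from Step 1 concerns only $F$, so the bound depends only on the marginal distribution of $L'$ given $(\cE,\cU)$. By \Cref{prop:covering_inner_no_advice} with $\ell=\ell_1+1\le 2\ell_1$, this marginal is $t^{-1/4}$-close to uniform on $\mathsf{Gr}(\cV(\cE),\ell_1+1)$. The degenerate event $\dim L'<\ell_1+1$ has negligible probability and is absorbed into the same error term.

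\textbf{Step 3: pass to Grassmann edges.} For uniform $L'$ and two independent uniform hyperplanes $L_1,L_2$ of it, the pair $(L_1,L_2)$ is a uniform edge of $\mathsf{Gr}(\cV(\cE),\ell_1)$ conditioned on $L_1\neq L_2$, and $L_1=L_2$ occurs with probability $(2^{\ell_1+1}-1)^{-1}$. This gives
\[
\operatorname{Vrf}_{F,G}(\cE,\cU,*,*,*,*)^{2}\leq \Pr{\text{agreement on a random edge}}+\frac{1}{2^{\ell_1+1}-1}+t^{-1/4}+o(1),
\]
and \Cref{cond:parameters} ($t\ge \exp(k^C)$, which is huge compared to $2^{\ell_1}$) makes the total error at most $2^{-\ell_1}$.

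\textbf{Main obstacle.} I expect the main difficulty to be Step 1: confirming that, conditioned on $L'$ together with the rest of Bob's data, $L$ is exactly a uniform hyperplane of $L'$ despite the mixed low-rank structure in the critical block. I would argue it by symmetry: an invertible change of the Hadamard coordinate $u\in\bF_2^{\ell_1+1}$ fixing $L'$ should be realizable as a measure-preserving bijection on advice sequences. If that symmetry fails, I would instead compare both sides with the uniform distribution via \Cref{prop:mix_advice_vs_MGr} and accept an additional small error.
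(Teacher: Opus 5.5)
Your proposal is correct and follows the paper's proof essentially step for step. You fix Bob's unfolded question, use that $L$ is a uniform hyperplane of $L'$ to bound the squared verification probability by an agreement probability inside $L'$, and then combine Cauchy--Schwarz, \Cref{prop:covering_inner_no_advice}, and the $(2^{\ell_1+1}-1)^{-1}$ correction for $L_1=L_2$. The uniformity of $L$ given $L'$, which the paper simply calls clear, is indeed justified by your symmetry argument, so no fallback is needed: an invertible change of the first $\ell_1+1$ Hadamard coordinates permutes the blocks of $s$ rank-$1$ tensors in the critical coordinate and linearly recombines the advice vectors in the non-critical coordinates, and both operations are measure-preserving bijections.
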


\begin{proof}
We organize the proof into the following two steps.

\paragraph{Step 1: fixing $(\cE',\cU',L'
)$.}
Fix a Bob-packet $(\cE',\cU')$ and a subspace $L'\in \mathsf{Gr}(\cV(\cE'),\ell_{1}+1)$ such that 
\[
\Pru{(\cE,\cU,\cE',\cU',\cA)\sim \cD_{\inn}}{\sigma(\cA_{[\ell_{1}+1]})=L'\,\Big|\,\cE,\cU,\cE',\cU'}>0.
\]
By~\Cref{def:game3a}, it is clear that if we sample $(\cE,\cU,\cE',\cU',\cA)\sim \cD_{\inn}$ conditioned on $\cE,\cU,\cE',\cU'$ and the event $\left\{\sigma(\cA_{[\ell_{1}+1]})=L'\right\}$, the resulting distribution of $\sigma(\cA_{[\ell_{1}]})$ is uniform over the set $\mathsf{Gr}(L',\ell_{1})$. Therefore, we have
\begin{align}
&\quad \operatorname{Vrf}_{F,G}(\cE,\cU,*,\cE',\cU',L')^{2}\nonumber\\
&=\Exu{(\cE,\cU,\cE',\cU',\cA)\sim \cD_{\inn}}{\operatorname{Vrf}_{F,G}(\cE,\cU,*,\cE',\cU',L')\,\Big|\,\cE,\cU,\cE',\cU'\text{ and }\{\sigma(\cA_{[\ell_{1}+1]})=L'\}}^{2}\nonumber\\
&=\Exu{L_{1},L_{2}\in \mathsf{Gr}(L',\ell_{1})}{\operatorname{Vrf}_{F,G}(\cE,\cU,L_{1},\cE',\cU',L')\operatorname{Vrf}_{F,G}(\cE,\cU,L_{2},\cE',\cU',L')}.\label{eq:preparing_inner_LfromLprime}
\end{align}
Expanding the right-hand side of~\Cref{eq:preparing_inner_LfromLprime}, we get
\begin{align}
&\quad \operatorname{Vrf}_{F,G}(\cE,\cU,*,\cE',\cU',L')^{2}\nonumber\\
&=\Pru{L_{1},L_{2}\in \mathsf{Gr}(L',\ell_{1})}{F[\cE,\cU,L_{1}]=\left.G[\cE',\cU',L']\right|_{L_{1}}\text{ and }F[\cE,\cU,L_{2}]=\left.G[\cE',\cU',L']\right|_{L_{2}}}\nonumber\\
&\leq \Pru{L_{1},L_{2}\in \mathsf{Gr}(L',\ell_{1})}{\left.F[\cE,\cU,L_{1}]\right|_{L_{1}\cap L_{2}}=\left.F[\cE,\cU,L_{2}]\right|_{L_{1}\cap L_{2}}\Big.}.\label{eq:preparing_inner_L1L2_fixed_EU}
\end{align}

\paragraph{Step 2: $(\cE',\cU',L')$ unfixed.} By~\Cref{prop:covering_inner_no_advice}, we know that conditioned on $(\cE,\cU)$, the distribution of $\sigma(\cA_{[\ell_{1}+1]})$ in the sample $(\cE,\cU,*,*,\cA)\sim \cD_{\inn}$ is $t^{-1/4}$-close to uniform over $\mathsf{Gr}(\cV(\cE),\ell_{1}+1)$. Therefore, we have 
\begin{align*}
&\quad\operatorname{Vrf}_{F,G}(\cE,\cU,*,*,*,*)^{2}\nonumber\\
&= \Exu{(\cE,\cU,\cE',\cU',\cA)\sim \cD_{\inn}}{\operatorname{Vrf}_{F,G}\left(\cE,\cU,*,\cE',\cU',\sigma(\cA_{[\ell_{1}+1]})\right)\,\Big|\,\cE,\cU}^{2}\tag{by definition}\nonumber\\
&\leq 
\Exu{(\cE,\cU,\cE',\cU',\cA)\sim \cD_{\inn}}{\operatorname{Vrf}_{F,G}\left(\cE,\cU,*,\cE',\cU',\sigma(\cA_{[\ell_{1}+1]})\right)^{2}\,\Big|\,\cE,\cU}\tag{by Cauchy-Schwarz}\nonumber\\
&\leq \Exu{L'\in \mathsf{Gr}(\cV(\cE),\ell_{1}+1)}{\Exu{(\cE,\cU,\cE',\cU',\cA)\sim \cD_{\inn}}{\operatorname{Vrf}_{F,G}(\cE,\cU,*,\cE',\cU',L')^{2}\,\Big|\,\cE,\cU\text{ and }\{\sigma(\cA_{[\ell_{1}+1]})=L'\}}}+t^{-1/4}.
\end{align*}
We can now apply~\Cref{eq:preparing_inner_L1L2_fixed_EU} to the above and get
\begin{align*}
&\quad\operatorname{Vrf}_{F,G}(\cE,\cU,*,*,*,*)^2\\
&\leq \Exu{L'\in \mathsf{Gr}(\cV(\cE),\ell_{1}+1)}{\Pru{L_{1},L_{2}\in \mathsf{Gr}(L',\ell_{1})}{\left.F[\cE,\cU,L_{1}]\right|_{L_{1}\cap L_{2}}=\left.F[\cE,\cU,L_{2}]\right|_{L_{1}\cap L_{2}}\Big.}}+t^{-1/4}\\
&\leq \Pru{(L_{1},L_{2})\sim \cG(\cV(\cE),\ell_{1})}{\left.F[\cE,\cU,L_{1}]\right|_{L_{1}\cap L_{2}}=\left.F[\cE,\cU,L_{2}]\right|_{L_{1}\cap L_{2}}\Big.}+\frac{1}{2^{\ell_{1}+1}-1}+t^{-1/4}\\
&\leq \Pru{(L_{1},L_{2})\sim \cG(\cV(\cE),\ell_{1})}{\left.F[\cE,\cU,L_{1}]\right|_{L_{1}\cap L_{2}}=\left.F[\cE,\cU,L_{2}]\right|_{L_{1}\cap L_{2}}\Big.}+2^{-\ell_{1}},\tag{using~\Cref{cond:parameters}}
\end{align*}
as desired.
\end{proof}

When we eventually prove the soundness of $\GameCa$, we will need to ``decode'' a strategy for $\GameBc$ based on tables $F[\cdot]$ and $G[\cdot]$. The following definition captures the type of tuples $(\cE,\cU,Q)$ for which Alice can successfully decode an answer in $\GameBc$.

\begin{definition}\label{def:answerable_Alice_question}
Let $F[\cdot]$ be an Alice table, and let $(\cE,\cU)$ be an Alice-packet. For any linear subspace $Q\subseteq \cV(\cE)$ with $\dim Q\leq q_{0}$, we say that the tuple $(\cE,\cU,Q)$ is \emph{answerable} with respect to $F[\cdot]$ if $\dim Q=q_{0}$ and the following statement holds: there exists a linear subspace $W\in \mathsf{Gr}(\cV(\cE),9k^{2}t-q_{0})$ containing $Q$ and a linear map $f:W\rightarrow \bF_{2}$ such that
\[
\Pru{L\in \mathsf{Gr}(\cV(\cE),\ell_{1})}{F[\cE,\cU,L]=\left.f\right|_{L}\,\Big|\,Q\subseteq L\subseteq W}\geq \exp(-q_{0}^{3}).
\]
\end{definition}

Combining~\Cref{lem:inner_prepare_decode} and the Grassmann decoding theorem (\Cref{thm:grassmann_linear_dec}) shows that with noticeable probability, Alice can decode an answer in $\GameBc$ from the table $F[\cdot]$.

\begin{lemma}\label{lem:answerable_Alice_question}
Assume~\Cref{cond:parameters}. Let $F[\cdot]$ be an Alice-table and let $G[\cdot]$ be a Bob-table. For any Alice-packet $(\cE,\cU)$ such that $\operatorname{Vrf}_{F,G}(\cE,\cU,*,*,*,*)\geq 2^{-q_{0}/3}$, we have
\[
\Pru{(\cE,\cU,*,Q,*,*,*)\sim \cD_{\inn}^{+}}{(\cE,\cU,Q)\textup{ is answerable with respect to }F[\cdot]\,\Big|\,\cE,\cU}\geq 4^{-\ell_{1}^{2}}.
\]
\end{lemma}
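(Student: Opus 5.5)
The plan is to follow the template of \Cref{lem:invoked_middle}, Steps 1 and 3, now applied to the linear-function version of the decoding theorem.

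\textbf{Step 1: a Grassmann test with good passing probability.} Fix an Alice-packet $(\cE,\cU)$ with $\operatorname{Vrf}_{F,G}(\cE,\cU,*,*,*,*)\geq 2^{-q_{0}/3}$. \Cref{lem:inner_prepare_decode} shows that the table $L\mapsto F[\cE,\cU,L]$ on $\mathsf{Gr}(\cV(\cE),\ell_{1})$ passes the Grassmann agreement test with probability at least $2^{-2q_{0}/3}-2^{-\ell_{1}}$. Since \Cref{cond:parameters} makes $q_{0}$ large and $\ell_{1}\gg q_{0}$, this is at least $2^{-q_{0}+5}$.

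\textbf{Step 2: apply the decoding theorem.} The dimension hypotheses hold here: $n=\dim\cV(\cE)=9k^{2}t$, and $n-q_{0}-3\geq \ell_{1}\geq q_{0}+3$. So \Cref{thm:grassmann_linear_dec}, with $q=q_{0}$ and $\ell=\ell_{1}$, gives a family $\cQ\subseteq \mathsf{Gr}(\cV(\cE),q_{0})$ of relative density at least $2^{-q_{0}-\ell_{1}^{2}}$. Each $Q\in\cQ$ comes with some $W\in\mathsf{Gr}(\cV(\cE),n-q_{0})$ containing $Q$ and some $f\in W^*$ whose agreement with the table, conditioned on $Q\subseteq L\subseteq W$, is at least $2^{-Cq_{0}^{2}}$. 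Since $q_{0}$ is large, $2^{-Cq_{0}^{2}}\geq \exp(-q_{0}^{3})$. Hence every $Q\in\cQ$ makes $(\cE,\cU,Q)$ answerable in the sense of \Cref{def:answerable_Alice_question}.

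\textbf{Step 3: transfer the density bound to the verifier's $Q$.} The remaining task is to show that the verifier's $Q=\sigma(\cA_{[q_{0}]})$, sampled conditioned on $(\cE,\cU)$, lands in $\cQ$ with probability at least $4^{-\ell_{1}^{2}}$. This is the step needing care, because $Q$ is not a uniform element of $\mathsf{Gr}(\cV(\cE),q_{0})$.

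I would use \Cref{prop:covering_inner_no_advice} with $\ell=q_{0}$, which is allowed since $q_{0}\leq 2\ell_{1}$. It says $\sigma(\cA_{[q_{0}]})$ is $t^{-1/4}$-close in total variation to the uniform distribution on $\mathsf{Gr}(\cV(\cE),q_{0})$. Note that this TV bound already accounts for the event $\dim Q<q_{0}$, since the target distribution is supported only on subspaces of dimension exactly $q_{0}$.

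Therefore $\Pr{Q\in\cQ}\geq 2^{-q_{0}-\ell_{1}^{2}}-t^{-1/4}$. Because $t\geq\exp(k^{C})$ and $q_{0}\leq\ell_{1}^{2}$ under \Cref{cond:parameters}, this is at least $4^{-\ell_{1}^{2}}$.

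If the no-advice covering statement is not usable in exactly this form, the fallback is to average $\operatorname{TVD}_{\cE,\cU}(Q)$ via \Cref{lem:inner_covering_main}. The only obstacle I anticipate is this distributional bookkeeping; the rest is a direct invocation of earlier results.
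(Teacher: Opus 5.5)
Your proposal is correct and matches the paper's proof step for step. \Cref{lem:inner_prepare_decode} gives a passing probability of at least $2^{-q_0+5}$, \Cref{thm:grassmann_linear_dec} then yields a family $\cQ$ of answerable zoom-ins of density at least $2^{-q_0-\ell_1^2}$, and \Cref{prop:covering_inner_no_advice} with $\ell=q_0$ transfers this bound to the verifier's $Q$ at a cost of $t^{-1/4}$, so your fallback is not needed.
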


\begin{proof}
Since $\operatorname{Vrf}_{F,G}(\cE,\cU,*,*,*,*)\geq 2^{-q_{0}/3}$,~\Cref{lem:inner_prepare_decode} implies
\[
\Pru{(L_{1},L_{2})\sim \cG(\cV(\cE),\ell_{1})}{\left.F[\cE,\cU,L_{1}]\right|_{L_{1}\cap L_{2}}=\left.F[\cE,\cU,L_{2}]\right|_{L_{1}\cap L_{2}}}\geq 2^{-2q_{0}/3}-2^{-\ell_{1}}\geq 2^{-q_{0}+5},
\]
where we used~\Cref{cond:parameters} in the last transition. Applying~\Cref{thm:grassmann_linear_dec} to the table $F[\cE,\cU,L]$, where $L$ ranges in $\mathsf{Gr}(\cV(\cE),\ell_{1})$, yields a collection of subspaces $\cQ\subseteq \mathsf{Gr}(\cV(\cE),q_{0})$ such that:
\begin{enumerate}[label=(\arabic*)]
\item $|\cQ|\geq 2^{-q_{0}-\ell_{1}^{2}}\left|\mathsf{Gr}(\cV(\cE),q_0)\right|$;
\item for each $Q\in \cQ$, there exists a linear subspace $W_{Q}\in  \mathsf{Gr}(\cV(\cE),9k^{2}t-q_{0})$ and a linear map $f_{Q}:W_{Q}\rightarrow \bF_{2}$ such that
\[
\Pru{L\in \mathsf{Gr}(\cV(\cE),\ell_{1})}{F[\cE,\cU,L]=\left.f_{Q}\right|_{L}\,\Big|\,Q\subseteq L\subseteq W_{Q}}\geq \exp(-Cq_{0}^{2})\geq \exp(-q_{0}^{3}),
\]
where $C$ is an absolute constant and the last transition is due to~\Cref{cond:parameters}. By~\Cref{def:answerable_Alice_question}, this means that for each $Q\in \cQ$, the tuple $(\cE,\cU,Q)$ is answerable with respect to $F[\cdot]$.
\end{enumerate}

Therefore, it now suffices to show that
\[
\Pru{(\cE,\cU,*,*,\cA)\sim \cD_{\inn}}{\sigma(\cA_{[q_{0}]})\in \cQ\,\Big|\,\cE,\cU}\geq 4^{-\ell_{1}^{2}}.
\]
But it follows immediately from~\Cref{prop:covering_inner_no_advice} that
\[
\Pru{(\cE,\cU,*,*,\cA)\sim \cD_{\inn}}{\sigma(\cA_{[q_{0}]})\in \cQ\,\Big|\,\cE,\cU}\geq \Pru{Q\in\mathsf{Gr}(\cV(\cE),q_{0})}{Q\in \cQ}-t^{-1/4}\geq 2^{-q_{0}-\ell_{1}^{2}}-t^{-1/4}\geq 4^{-\ell_{1}^{2}},
\]
where we used~\Cref{cond:parameters} in the last transition.
\end{proof}

%In conclusion, we have shown that if the $F,G$ are Alice and Bob tables that agree with noticeable probability, then with noticeable probability Alice can use her zoom-in $Q$ to come up with a global linear function $f$ agreeing. We next show a similar assertion to Bob.

\subsubsection{Bob's Existence of Decoding}\label{subsubsec:Bob_existence_decoding}

We next develop tools for showing that Bob can also use the table $G[\cdot]$ to obtain answers in $\GameBc$. As mentioned earlier, the Bob-table $G[\cdot]$ uses mixed-type Grassmann graphs (instead of standard Grassmann graphs) to encode answers, and we do not have a decoding theorem for mixed-type Grassmann graphs. As a workaround, we will argue that Bob can still decode by ``transferring'' Alice's decoding to Bob. We stress that Bob will still decode from his table $G[\cdot]$, not Alice's table $F[\cdot]$; the ``transfer'' happens only in the sense that we use the properties satisfied by $F[\cdot]$ to deduce a similar property of $G[\cdot]$.

\begin{definition}\label{def:pre_answerable_Bob_question}
Let $G[\cdot]$ be a Bob table, and let $(\cE',\cU')$ be a Bob-packet with critical index $c\in [t]$. For any linear subspace $Q\subseteq \cV(\cE')$ with $\dim Q\leq q_{0}$, we say that the tuple $(\cE',\cU',Q)$ is \emph{pre-answerable} with respect to $G[\cdot]$ if the following statements hold:
\begin{enumerate}[label=(\arabic*)]
\item The tuple $(*,*,*,Q,\cE',\cU',*)$ has nonzero probability under $\cD_{\inn}^{+}$. In other words, $Q=\sigma(\cA)$ for some $\cA\in\scrA_{c}(\cE',q_{0},2^{2\ell_{1}-q_{0}}s)$.
\item There exists a linear subspace $W\subseteq \cV(\cE')$ containing $Q$ with $\operatorname{codim} W\leq q_{0}$ and a linear map $f:W\rightarrow \bF_{2}$ such that
\[
\Pru{\cA\in \scrA_{c}(\cE',\ell_{1}+1,s')}{G[\cE',\cU',\sigma(\cA)]=f|_{\sigma(\cA)}\,\Big|\,\sigma(\cA_{[q_{0}]})=Q\textup{ and }\sigma(\cA)\subseteq W}\geq \exp(-q_{0}^{4}),
\]
where $s'=2^{\ell_{1}-1}s$.
\end{enumerate}
\end{definition}

The next definition formalizes the type of ``pruning'' we need for Alice's decoding to be able to ``transfer'' to Bob's side.

\begin{definition}\label{def:transferability}
Let $\eta\in (0,1)$ be a threshold parameter. An Alice-table $F[\cdot]$ and a Bob-table $G[\cdot]$ are said to be \emph{$\eta$-transferable} (with each other) if the following statements hold:
\begin{enumerate}[label=(\arabic*)]
\item For any unfolded-Alice-question $(\cE,\cU,L)$ such that $F[\cE,\cU,L]\neq \nil$, we have \[\operatorname{Vrf}_{F,G}(\cE,\cU,L,*,*,*)\geq \eta.\]
\item For any unfolded-Bob-question $(\cE',\cU',L')$ such that $G[\cE',\cU',L']\neq \nil$, we have
\[\operatorname{Vrf}_{F,G}(*,*,*,\cE',\cU',L')\geq \eta.\]
\end{enumerate}
\end{definition}

The next lemma shows how to transfer Alice's decoding to Bob's side, if the tables $F[\cdot]$ and $G[\cdot]$ are transferable with each other.

\begin{lemma}\label{lem:Bob_pre_answerable}
Assume~\Cref{cond:parameters}. Let $F[\cdot]$ be an Alice-table, and let $G[\cdot]$ be a Bob-table that is $2^{-q_{0}}$-transferable with $F[\cdot]$. Fix an Alice-packet $(\cE,\cU)$ and a linear subspace $Q\in  \mathsf{Gr}(\cV(\cE),q_{0})$ such that
\begin{enumerate}[label=(\arabic*)]
\item $Q$ is smooth with respect to $(\cE,\cU)$ (as per~\Cref{def:smooth_Q}), and
\item $(\cE,\cU,Q)$ is answerable with respect to $F[\cdot]$ (as per~\Cref{def:answerable_Alice_question}). 
\end{enumerate}
Then
\begin{equation}\label{eq:Bob_pre_answerable_prob}
\Pru{(\cE,\cU,*,Q,\cE',\cU',*)\sim \cD_{\inn}^{+}}{(\cE',\cU',Q)\textup{ is pre-answerable with respect to }G[\cdot]\,\Big|\,\cE,\cU,Q}\geq 4^{-\ell_{1}^{2}}.
\end{equation}
\end{lemma}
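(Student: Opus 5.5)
Fix $(\cE,\cU,Q)$ as in the statement, and let $W\in\mathsf{Gr}(\cV(\cE),9k^{2}t-q_{0})$ and $f:W\to\bF_{2}$ witness answerability. The plan is to transfer the pair $(f,W)$ to Bob's side by restriction. For a sampled $(\cE',\cU')$ we set $W'=W\cap\cV(\cE')$, whose codimension in $\cV(\cE')$ is at most $q_{0}$, and $f'=f|_{W'}$. We then show that with probability at least $4^{-\ell_{1}^{2}}$ the pair $(f',W')$ witnesses pre-answerability. The first requirement of~\Cref{def:pre_answerable_Bob_question} is automatic, because $Q$ arises as $\sigma(\cA_{[q_{0}]})$ in the sample.

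\textbf{Step 1: Alice's agreement under the game distribution.} Since $Q$ is smooth, \Cref{lem:inner_covering_zoomout} gives $\operatorname{TVD}_{\cE,\cU,W}(Q)\le 2^{\ell_{1}(q_{0}+1)}t^{-1/8}$, which is negligible. Hence, conditioned on $(\cE,\cU,Q)$ and $L\subseteq W$, the game-distributed $L$ satisfies $F[\cE,\cU,L]=f|_{L}$ with probability at least $\exp(-q_{0}^{3})/2$. Moreover, $\Pr{L\subseteq W\mid \cE,\cU,Q}\ge 2^{-\ell_{1}(q_{0}+1)}/2$. Multiplying these, the event $\Gamma=\{L\subseteq W,\ F[\cE,\cU,L]=f|_{L}\}$ has conditional probability at least $2^{-2\ell_{1}q_{0}}\exp(-q_{0}^{3})$.

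\textbf{Step 2: passing to Bob via transferability.} On $\Gamma$ we have $F\neq\nil$. By the first clause of~\Cref{def:transferability}, $\operatorname{Vrf}(\cE,\cU,L,*,*,*)\ge 2^{-q_{0}}$. The complication is that this conditional is not further conditioned on $Q$. To handle it, I would average $\operatorname{Vrf}$ over $(\cE,\cU,L)$ conditioned on $Q$, and argue that conditioning on $Q\subseteq L$ changes the distribution of $(\cE',\cU',L')$ given $(\cE,\cU,L)$ only mildly. The intended route is the sub-advice structure together with the uniformity of $L$ inside $L'$; if that route fails, I would instead prune at the level of $(\cE,\cU,L,Q)$. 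Granting this, with probability at least $2^{-3\ell_{1}q_{0}}\exp(-q_{0}^{3})$ over $(\cE',\cU',L,L')$ given $(\cE,\cU,Q)$, we have $L\subseteq W$ and $G[\cE',\cU',L']|_{L}=f|_{L}$. Note that $L\subseteq W\cap\cV(\cE')=W'$.

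\textbf{Step 3: Markov and extension.} By Markov's inequality over $(\cE',\cU')$, for at least a $4^{-\ell_{1}^{2}}$ fraction of Bob-packets the conditional probability of the Step 2 event is at least $\exp(-2q_{0}^{3})$. For such a packet, I would compare the conditional distribution of $L'$ given $(\cE',\cU',Q)$ with the distribution of $\sigma(\cA)$ for $\cA\in\scrA_{c}(\cE',\ell_{1}+1,s')$ conditioned on $\sigma(\cA_{[q_{0}]})=Q$. By~\Cref{def:game3a} these match up to the negligible advice-saturation issues of~\Cref{prop:mix_advice_vs_MGr}. Using~\Cref{prop:MGr_zoom_out} to lower-bound the probability that $L\subseteq W'$, we obtain hypothesis~\eqref{eq:MGr_extension_condition} of~\Cref{lem:MGr_agreement_extension} with $\delta\approx\exp(-2q_{0}^{3})2^{-O(\ell_{1}q_{0})}$. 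That lemma then yields $W''\supseteq W'$ and $g$ with agreement at least $\delta/5\ge\exp(-q_{0}^{4})$.

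\textbf{Main obstacle.} The difficulty is that the codimension of $W''$ must stay at most $q_{0}$, and the lemma only guarantees $W''\supseteq W'$. Since $W''$ contains $W'$, its codimension is at most that of $W'$, which is at most $q_{0}$, so this does hold. The genuinely delicate point is Step 2, namely the correlation between conditioning on $Q$ and the transferability bound.
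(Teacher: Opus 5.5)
Your skeleton is the paper's: restrict Alice's witness $(f,W)$ to $W\cap\cV(\cE')$, pass agreement to Bob via transferability, and finish with \Cref{lem:MGr_agreement_extension}. However, Step~3 has a genuine quantitative gap. You fold the zoom-out event $\{L\subseteq W\}$ into the event before applying Markov. That event therefore has probability of order $\bP[L\subseteq W\mid\cE,\cU,Q]\cdot\exp(-q_0^3)=2^{-\Theta(\ell_1q_0)}\exp(-q_0^3)$. Markov then only yields packets whose conditional probability is of that same order, not $\exp(-2q_0^3)$. Separately, turning a joint probability into one conditioned on $\{L\subseteq W'\}$ would need an upper bound on $\bP[L\subseteq W']$, not the lower bound you invoke. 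Your own estimate $\delta\approx\exp(-2q_0^3)\,2^{-O(\ell_1q_0)}$ records this loss, and then $\delta/5\ge\exp(-q_0^4)$ is false. Under \Cref{cond:parameters}, $\ell_1\ge C\log_2(1/\tau_0)$ is at least exponential in $q_0$. So no $\ell_1$-dependent factor can be absorbed into the $\ell_1$-independent threshold $\exp(-q_0^4)$ of \Cref{def:pre_answerable_Bob_question}.

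The missing idea is to keep everything conditioned on $\{L\subseteq W\}$ through the Markov step, as the paper does.
\begin{itemize}
\item Transference gives $\bP[G[\cE',\cU',L']|_L=f|_L\mid\cE,\cU,Q,\,L\subseteq W]\ge\exp(-3q_0^3)$.
\item Call $(\cE',\cU')$ good if this probability, further conditioned on $(\cE',\cU')$, is at least $\exp(-3q_0^3)/2$.
\item Markov in the measure conditioned on $\{L\subseteq W\}$ gives good packets conditional mass at least $\exp(-3q_0^3)/2$.
\item Only afterwards multiply by $\bP[L\subseteq W\mid\cE,\cU,Q]\ge 2^{-\ell_1^2}$, which gives unconditional mass at least $4^{-\ell_1^2}$.
\end{itemize}
The $\ell_1$-dependent loss thus lands in the packet probability, where it is harmless. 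For each good packet, the hypothesis of \Cref{lem:MGr_agreement_extension} then holds verbatim with $\delta=\exp(-3q_0^3)/2$, yielding agreement $\delta/5\ge\exp(-q_0^4)$. No comparison via \Cref{prop:MGr_zoom_out} or saturation is needed here: given $(\cE,\cU,\cE',\cU')$, the prefix $\cA_{[\ell_1+1]}$ is exactly uniform in $\scrA_c(\cE',\ell_1+1,2^{\ell_1-1}s)$, and its $q_0$-prefix is $\cA_{[q_0]}$.

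Your Step~2 worry resolves exactly, and the paper uses this implicitly. The $GL_{\ell_1+1}(\bF_2)$-action that permutes the Hadamard blocks of the critical advice, and acts linearly on the non-critical sequences, preserves the uniform measure and $L'=\sigma(\cA_{[\ell_1+1]})$. It changes $\rho(\cA_{[\ell_1+1]})$ by a change of basis, and this action is simply transitive on ordered bases. Hence, given $(c,\cE',\cU',L')$, the flag $Q\subseteq L$ is uniform in $L'$. So $Q$ is conditionally independent of $(\cE',\cU',L')$ given $(\cE,\cU,L)$, and transferability applies directly after conditioning on $Q$ and $\{L\subseteq W\}$. Your fallback of re-pruning is not available inside this lemma anyway, since the tables and their transferability are fixed by hypothesis.
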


\begin{proof}

We organize the proof into the following three steps.

\paragraph{Step 1: transference.} Since $(\cE,\cU,Q)$ is answerable with respect to $F[\cdot]$, by~\Cref{def:answerable_Alice_question} there exists a linear subspace $W\in \mathsf{Gr}(\cV(\cE),9k^{2}t-q_{0})$ and a linear map $f:W\rightarrow \bF_{2}$ such that
\[
\Pru{L\in \mathsf{Gr}(\cV(\cE),\ell_{1})}{F[\cE,\cU,L]=\left.f\right|_{L}\,\Big|\,Q\subseteq L\subseteq W}\geq \exp(-q_{0}^{3}).
\]
We then have
\begin{align*}
&\quad\Pru{(\cE,\cU,L,Q,*,*,*)\sim \cD_{\inn}^{+}}{F[\cE,\cU,L]=\left.f\right|_{L}\,\Big|\,\cE,\cU,Q\text{ and }\{L\subseteq W\}}\\
&\geq \Pru{L\in \mathsf{Gr}(\cV(\cE),\ell_{1})}{F[\cE,\cU,L]=f|_{L}\,\Big|\, Q\subseteq L\subseteq W}-\operatorname{TVD}_{\cE,\cU,W}(Q)\tag{by~\Cref{def:TVD_Q}}\\
&\geq \exp(-q_{0}^{3})-2^{\ell_{1}(q_{0}+1)}\operatorname{TVD}_{\cE,\cU}(Q)\tag{by~\Cref{lem:inner_covering_zoomout}}\\
&\geq \exp(-q_{0}^{3})-2^{\ell_{1}(q_{0}+1)}t^{-1/8}\geq \exp(-2q_{0}^{3})\tag{by~\Cref{def:smooth_Q} and~\Cref{cond:parameters}}.
\end{align*}
By the definition of $\operatorname{Vrf}_{F,G}(\cE,\cU,L,\cE',\cU',L')$ in~\Cref{ntn:inner_Vrf}, we have
\begin{align*}
&\quad\Pru{(\cE,\cU,L,Q,\cE',\cU',L')\sim \cD_{\inn}^{+}}{\left.G[\cE',\cU',L']\right|_{L}=f|_{L}\,\Big|\,\cE,\cU,Q\textup{ and }\{L\subseteq W\}}\\
&\geq \Pru{(\cE,\cU,L,Q,\cE',\cU',L')\sim\cD_{\inn}^{+}}{F[\cE,\cU,L]=f|_{L}\textup{ and }\operatorname{Vrf}_{F,G}(\cE,\cU,L,\cE',\cU',L')=1\,\Big|\,\cE,\cU,Q\textup{ and }\{L\subseteq W\}}\\
&\geq 2^{-q_{0}}\cdot\Pru{(\cE,\cU,L,Q,*,*,*)\sim \cD_{\inn}^{+}}{F[\cE,\cU,L]=f|_{L}\,\Big|\,\cE,\cU,Q\textup{ and }\{L\subseteq W\}}\tag{since $F[\cdot]$ and $G[\cdot]$ are $2^{-q_{0}}$ transferable}\\
&\geq 2^{-q_{0}}\exp(-2q_{0}^{3})\geq \exp(-3q_{0}^{3}).
\end{align*}

\paragraph{Step 2: locating good Bob-packets.} 
For any Bob-packet $(\cE',\cU')$ such that $(\cE,\cU,*,Q,\cE',\cU',*)$ has nonzero probability under $\cD_{\inn}^{+}$, we say it is ``good'' if
\begin{equation}\label{eq:good_Bob_packet_def}
\Pru{(\cE,\cU,L,Q,\cE',\cU',L')\sim \cD_{\inn}^{+}}{\left.G[\cE',\cU',L']\right|_{L}=f|_{L}\,\Big|\,\cE,\cU,Q,\cE',\cU'\textup{ and }\{L\subseteq W\}}\geq \exp(-3q_{0}^{3})/2.
\end{equation}
By the conclusion from Step 1 of the proof, it follows that
\begin{equation}\label{eq:good_Bob_packet_prob}
\Pru{(\cE,\cU,L,Q,\cE',\cU',*)\sim \cD_{\inn}^{+}}{(\cE',\cU')\textup{ is good}\,\Big|\,\cE,\cU,Q\textup{ and }\{L\subseteq W\}}\geq \exp(-3q_{0}^{3})/2.
\end{equation}
Since $Q$ is smooth with respect to $(\cE,\cU)$, by~\Cref{def:TVD_Q,def:smooth_Q} we have
\begin{align*}
\Pru{(\cE,\cU,L,Q,*,*,*)\sim \cD_{\inn}^{+}}{L\subseteq W\,\Big|\,\cE,\cU,Q}&\geq \Pru{L\in \mathsf{Gr}(\cV(\cE),\ell_{1})}{L\subseteq W\,\big|\,Q\subseteq L}-\operatorname{TVD}_{\cE,\cU}(Q)\\
&\geq 2^{-\ell_{1}(q_{0}+1)}-t^{-1/8}\geq 2^{-\ell_{1}^{2}}, 
\end{align*}
where we used~\Cref{cond:parameters} in the last transition. Consequently,~\Cref{eq:good_Bob_packet_prob} implies
\begin{equation}\label{eq:good_Bob_packet_unconditional_prob}
\Pru{(\cE,\cU,*,Q,\cE',\cU',*)\sim \cD_{\inn}^{+}}{(\cE',\cU')\textup{ is good}\,\Big|\,\cE,\cU,Q}\geq 2^{-\ell_{1}^{2}}\cdot\exp(-3q_{0}^{3})/2\geq 4^{-\ell_{1}^{2}},
\end{equation}
where we used~\Cref{cond:parameters} in the last transition.

\paragraph{Step 3: extension.} By~\Cref{def:game3a}, the condition~\Cref{eq:good_Bob_packet_def} is equivalent to
\[
\Pru{\cA\in \scrA_{c}(\cE',\ell_{1}+1,s')}{\left.G[\cE',\cU',\sigma(\cA)]\right|_{\sigma(\cA_{[\ell_{1}]})}=f|_{\sigma(\cA_{[\ell_{1}]})}\,\Big|\,\sigma(\cA_{[q_{0}]})=Q\textup{ and }\sigma(\cA_{[\ell_{1}]})\subseteq W}\geq \exp(-3q_{0}^{3})/2,
\]
where $c$ is the critical index of $\cE'$ and $s'=2^{\ell_{1}-1}s$. By~\Cref{lem:MGr_agreement_extension}, this implies
\[
\Pru{\cA\in \scrA_{c}(\cE',\ell_{1}+1,s')}{G[\cE',\cU',\sigma(\cA)]=g|_{\sigma(\cA)}\,\Big|\,\sigma(\cA_{[q_{0}]})=Q\textup{ and }\sigma(\cA)\subseteq W'}\geq \exp(-3q_{0}^{3})/10\geq \exp(-q_{0}^{4}),
\]
for some linear subspace $W'\subseteq \cV(\cE')$ containing $W\cap \cV(\cE')$ and some linear function $g:W'\rightarrow \bF_{2}$. Since the codimension of $W'$ in $\cV(\cE')$ is at most
\[
\dim \cV(\cE')-\dim( W\cap \cV(\cE'))\leq \dim \cV(\cE)-\dim W= q_{0},
\]
this means that $(\cE',\cU',Q)$ is pre-answerable with respect to $G[\cdot]$ whenever $(\cE',\cU')$ is good. Therefore,~\Cref{eq:good_Bob_packet_unconditional_prob} implies the desired~\Cref{eq:Bob_pre_answerable_prob}.
\end{proof}

For technical reasons, we will later use the following slightly stricter version of~\Cref{def:pre_answerable_Bob_question}. We upgrade the notion of ``pre-answerable'' to ``answerable'' if the same type of decoding as in~\Cref{def:pre_answerable_Bob_question} can be done on the mixed-type Grassmann domain (in~\Cref{def:pre_answerable_Bob_question}, with a very small probability $\sigma(\cA)$ may fall outside of the mixed-type Grassmann domain).

\begin{definition}\label{def:bob_answerable}
Let $G[\cdot]$ be a Bob table, and let $(\cE',\cU')$ be a Bob-packet with critical index $c\in [t]$. For any linear subspace $Q\subseteq\cV(\cE')$ with $\dim Q\leq q_{0}$, we say that the tuple $(\cE',\cU',Q)$ is \emph{answerable} with respect to $G[\cdot]$ if $Q\in \mathsf{MGr}_{c}(\cE',q_{0},2^{2\ell_{1}-q_{0}}s)$ and the following statement holds: there exist a linear subspace $W\subseteq \cV(\cE')$ containing $Q$ with $\operatorname{codim} W\leq q_{0}$ and a linear map $f:W\rightarrow \bF_{2}$ such that
\[
\Pru{L'\in \mathsf{MGr}(\cE',\ell_{1}+1,s')}{G[\cE',\cU',L']=f|_{L'}\,\Big|\,Q\subseteq L'\subseteq W}\geq \exp(-q_{0}^{5}),
\]
where $s'=2^{\ell_{1}-1}s$. More specifically, we also say that $(\cE',\cU',Q)$ is \emph{answerable with respect to $G[\cdot]$ using $(f,W)$},  and define $\cS_{G}[\cE',\cU',Q]$ to be a list of such pairs $(f,W)$.
\end{definition}

The following proposition shows that as long as $Q$ itself is in the mixed-type Grassmann domain $\mathsf{MGr}(\cE',q_{0},2^{2\ell_{1}-q_{0}}s)$, the pre-answerability of $(\cE',\cU',Q)$ can be automatically upgraded to answerability.

\begin{proposition}\label{prop:pre-answerable_to_answerable}
Assume~\Cref{cond:parameters}. Let $F[\cdot]$ be an Alice-table, and let $G[\cdot]$ be a Bob-table that is $2^{-q_{0}}$-transferable with $F[\cdot]$. Suppose $(\cE',\cU')$ is a Bob-packet with critical index $c$. Then for any $Q\in \mathsf{MGr}_{c}(\cE',q_{0},2^{2\ell_{1}-q_{0}}s)$ such that $(\cE',\cU',Q)$ is pre-answerable with respect to $G[\cdot]$, the same tuple $(\cE',\cU',Q)$ is also answerable with respect to $G[\cdot]$.
\end{proposition}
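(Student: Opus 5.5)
The plan is to compare directly the two distributions appearing in Definitions~\ref{def:pre_answerable_Bob_question} and~\ref{def:bob_answerable}. Take the pair $(f,W)$ witnessing pre-answerability, so $Q\subseteq W$, $\operatorname{codim}W=w\leq q_{0}$, and
\[
\Pru{\cA\in \scrA_{c}(\cE',\ell_{1}+1,s')}{G[\cE',\cU',\sigma(\cA)]=f|_{\sigma(\cA)}\,\Big|\,\sigma(\cA_{[q_{0}]})=Q\textup{ and }\sigma(\cA)\subseteq W}\geq \exp(-q_{0}^{4}).
\]
We then show that the same $(f,W)$ witnesses answerability. This amounts to transferring the bound from the advice-sampled distribution of $\sigma(\cA)$ to the uniform distribution over $\{L'\in\mathsf{MGr}_{c}(\cE',\ell_{1}+1,s')\mid Q\subseteq L'\}$, first without and then with the conditioning on containment in $W$.

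The steps run as follows. First, because $Q\in \mathsf{MGr}_{c}(\cE',q_{0},2^{2\ell_{1}-q_{0}}s)$, we may write $Q=\sigma(\cA')$ with $\cA'$ saturated. We note that $2^{2\ell_1-q_0}s=2^{\ell_1+1-q_0}s'$, so the sub-advice parameters match those of Proposition~\ref{prop:mix_advice_vs_MGr} applied with $\ell=\ell_{1}+1$ and $s=s'$. That proposition then says that $\sigma(\cA)$, with $\cA\in\scrA_{c}(\cE',\ell_{1}+1,s')$ conditioned on $\sigma(\cA_{[q_{0}]})=Q$, is within total variation distance $2^{-|E_{c}'|/2}$ of uniform on the $\mathsf{MGr}$-zoom-in of $Q$. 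Its hypothesis $|E_{c}'|\geq 2^{\ell_{1}+3}s'$ holds for Bob-packets of positive probability, since $|E_c'|\geq k$ and $k$ is huge by Condition~\ref{cond:parameters}. Second, the conditioning on $\{\sigma(\cA)\subseteq W\}$ (respectively $\{L'\subseteq W\}$) has probability at least $2^{-w(\ell_{1}+1-q_{0})-1}$ on the advice side by Corollary~\ref{cor:mixed_zoom_out_advice}, and at least $2^{-w(\ell_1+1-q_0)-2}$ on the $\mathsf{MGr}$ side by Proposition~\ref{prop:MGr_zoom_out}. The requirement $s'\geq \ell_{1}+1+q_{0}$ is immediate from $s'=2^{\ell_1-1}\cdot 3\ell_1$. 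Dividing the unconditional total variation bound by this probability, the conditional distributions differ by at most $2^{q_{0}(\ell_{1}+1)+2}\cdot 2^{-|E_c'|/2}$. Third, the agreement probability on the $\mathsf{MGr}$ side is therefore at least
\[
\exp(-q_{0}^{4})-2^{q_{0}(\ell_{1}+1)+2-|E_{c}'|/2},
\]
and this exceeds $\exp(-q_{0}^{5})$ by Condition~\ref{cond:parameters}.

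The main obstacle is the conditioning step. Total variation is not preserved under conditioning on an event, so the argument needs a lower bound on the probability of $\{L'\subseteq W\}$. That bound is exactly what Proposition~\ref{prop:MGr_zoom_out} supplies. The only remaining care is to check that its parameter hypotheses hold: $1\leq |J|\leq 2$ is satisfied since $J=\{c\}$, and $q_{0}<\ell_{1}+1$ also holds. The transferability hypothesis on $G[\cdot]$ is not needed here beyond the existence of the pre-answerable witness.
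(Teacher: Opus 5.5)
Your proposal is correct and follows the paper's own argument. \Cref{prop:mix_advice_vs_MGr} gives closeness between the advice-sampled distribution and the uniform $\mathsf{MGr}$ zoom-in distribution, and the lower bound from \Cref{prop:MGr_zoom_out} on the probability of $\{L'\subseteq W\}$ lets that closeness survive conditioning, so the same witness $(f,W)$ certifies answerability. You make explicit the constants that the paper summarizes as $k^{-2}$ and $k^{-1}$. Conditioning actually costs a factor of about $2/p$ rather than $1/p$, but the slack in your exponent absorbs this.
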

\begin{proof}
By~\Cref{prop:mix_advice_vs_MGr},\footnote{When applying~\Cref{prop:mix_advice_vs_MGr}, note that we always have $|E_{c}'|\geq k$.} the total variation distance between the following two distributions of $L'$ is at most, say, $k^{-2}$:
\begin{enumerate}[label=(\arabic*)]
\item sample $L'\in \mathsf{MGr}_{c}(\cE',\ell_{1}+1,2^{\ell_{1}-1}s)$ conditioned on $Q\subseteq L'$; and
\item sample $\cA\in \scrA_{c}(\cE',\ell_{1}+1,2^{\ell_{1}-1}s)$ conditioned on $\sigma(\cA_{[q_{0}]})=Q$, and let $L'=\sigma(\cA)$.
\end{enumerate}
We know from~\Cref{prop:MGr_zoom_out} that even if we further condition both distributions on $\{L'\subseteq W\}$, for any linear subspace $W\subseteq \cV(\cE')$ of codimension at most $q_{0}$, the resulting distributions are still $k^{-1}$-close in total variation distance. By definition, this allows one to convert pre-answerability to answerability.
\end{proof}

We can now conclude the following counterpart of~\Cref{lem:answerable_Alice_question}, showing that answerability on Bob's side also happens with noticeable probability.

\begin{lemma}\label{lem:Bob_answerable}
Assume~\Cref{cond:parameters}. Let $F[\cdot]$ be an Alice-table, and let $G[\cdot]$ be a Bob-table that is $2^{-q_{0}}$-transferable with $F[\cdot]$. Then for any Alice-packet $(\cE,\cU)$ such that $\operatorname{Vrf}_{F,G}(\cE,\cU,*,*,*,*)\geq 2^{-q_{0}/3}$, we have
\[
\Pru{(\cE,\cU,*,Q,\cE',\cU',*)\sim \cD_{\inn}^{+}}{(\cE',\cU',Q)\textup{ is answerable with respect to }G[\cdot]\,\Big|\,\cE,\cU}\geq \exp(-6\ell_{1}^{2}).
\]
\end{lemma}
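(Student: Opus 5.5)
The plan is to chain the three preceding results: \Cref{lem:answerable_Alice_question}, \Cref{cor:smooth_Q_prob}, and \Cref{lem:Bob_pre_answerable}, followed by the upgrade \Cref{prop:pre-answerable_to_answerable}.

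First, I fix an Alice-packet $(\cE,\cU)$ with $\operatorname{Vrf}_{F,G}(\cE,\cU,*,*,*,*)\geq 2^{-q_0/3}$. By \Cref{lem:answerable_Alice_question}, conditioned on $(\cE,\cU)$, the tuple $(\cE,\cU,Q)$ is answerable with probability at least $4^{-\ell_1^2}$. By \Cref{cor:smooth_Q_prob}, $Q$ fails to be smooth with probability at most $t^{-1/8}$. A union bound then gives
\[
\Pr{\textup{$Q$ smooth and $(\cE,\cU,Q)$ answerable}\mid \cE,\cU}\geq 4^{-\ell_1^2}-t^{-1/8}\geq 4^{-\ell_1^2}/2,
\]
using \Cref{cond:parameters}.

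Second, for each such $Q$, \Cref{lem:Bob_pre_answerable} (which needs exactly the $2^{-q_0}$-transferability hypothesis) shows that, conditioned on $(\cE,\cU,Q)$, the Bob-packet $(\cE',\cU')$ makes $(\cE',\cU',Q)$ pre-answerable with probability at least $4^{-\ell_1^2}$. Multiplying the two bounds, pre-answerability of $(\cE',\cU',Q)$ holds with probability at least $4^{-2\ell_1^2}/2$ conditioned on $(\cE,\cU)$.

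Third, I upgrade pre-answerability to answerability with \Cref{prop:pre-answerable_to_answerable}. This requires $Q\in \mathsf{MGr}_c(\cE',q_0,2^{2\ell_1-q_0}s)$, which holds exactly when the critical-block sub-advice $(\cA_{[q_0]})_c$ is saturated. Under $\cD_{\inn}$, the critical block's vectors are uniform in $\bF_2^{E_c'}$, so by the same counting as in \Cref{prop:mix_advice_vs_MGr} the failure probability is at most $2\cdot 2^{(2^{2\ell_1}-1)s-|E_c'|}\le 2^{-k/2}$, since $|E_c'|\geq k$. The subtle point is that this bound is stated unconditionally, while here I am conditioning on $(\cE,\cU)$. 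I would argue it still holds because, by \Cref{prop:Joint_inner_independence} and the construction, conditioned on $(\cE,\cU)$ and on $c,E_c'$ the critical advice vectors remain uniform over $\bF_2^{E_c'}$, and $c$ is independent of $\cU$. Hence the probability that $Q\notin\mathsf{MGr}$ is at most $2^{-k/2}$, which is negligible next to $4^{-2\ell_1^2}/2$.

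Combining the three steps gives probability at least $4^{-2\ell_1^2}/2-2^{-k/2}\geq \exp(-6\ell_1^2)$. I expect the main obstacle to be justifying the conditional non-saturation bound in the third step, namely checking that conditioning on $(\cE,\cU)$ does not distort the distribution of the critical advice. Everything else is bookkeeping of the constants.
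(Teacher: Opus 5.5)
Your proposal is correct and follows the paper's proof essentially step for step. You combine \Cref{lem:answerable_Alice_question}, \Cref{cor:smooth_Q_prob} and \Cref{lem:Bob_pre_answerable} to get pre-answerability with probability about $4^{-2\ell_1^2}$, then discard the negligible event $Q\notin\mathsf{MGr}_{c}(\cE',q_{0},2^{2\ell_{1}-q_{0}}s)$ and upgrade via \Cref{prop:pre-answerable_to_answerable}. Two small remarks: you only need the implication ``critical sub-advice saturated $\Rightarrow Q\in\mathsf{MGr}$'', not ``exactly when''; and your conditioning worry disappears because $\cA$ is drawn uniformly from $\scrA_{c}(\cE',2\ell_{1},s)$ after $(c,\cU,\cE',\cU')$, so the critical advice vectors are i.i.d.\ uniform in $\bF_{2}^{E_c'}$ under any conditioning on $(\cE,\cU,c,\cE')$.
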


\begin{proof}
First of all, combining~\Cref{cor:smooth_Q_prob} with~\Cref{lem:answerable_Alice_question,lem:Bob_pre_answerable} yields
\begin{align}
\Pru{(\cE,\cU,*,Q,\cE',\cU',*)\sim \cD_{\inn}^{+}}{(\cE',\cU',Q)\textup{ is pre-answerable with respect to }G[\cdot]\,\Big|\,\cE,\cU}&\geq (4^{-\ell_{1}^{2}}-t^{-1/8})\cdot4^{-\ell_{1}^{2}}\nonumber\\
&\geq \exp(-5\ell_{1}^{2}),\label{eq:Bob_answerable_prelim}
\end{align}
where we used~\Cref{cond:parameters} in the last transition.

Using~\Cref{cond:parameters}, it is not hard to see (cf.~\Cref{prop:mix_advice_vs_MGr})
\begin{equation}\label{eq:Bob_answerable_2}
\Pru{(\cE,\cU,*,Q,\cE',*,*)\sim \cD_{\inn}^{+}}{Q\in \mathsf{MGr}(\cE',q_{0},2^{2\ell_{1}-q_{0}}s)\,\Big|\,\cE,\cU}\geq 1-k^{-1}.
\end{equation}
For any Bob-packet $(\cE',\cU')$ and any $Q\in \mathsf{MGr}(\cE',q_{0},2^{2\ell_{1}-q_{0}}s)$, the pre-answerability of $(\cE',\cU',Q)$ can be converted to answerability, due to~\Cref{prop:pre-answerable_to_answerable}. Therefore, combining~\Cref{eq:Bob_answerable_prelim,eq:Bob_answerable_2} yields the desired conclusion.
\end{proof}

\begin{corollary}\label{cor:Bob_answerable}
Assume~\Cref{cond:parameters}. Let $F[\cdot]$ be an Alice-table, and let $G[\cdot]$ be a Bob-table that is $2^{-q_{0}}$-transferable with $F[\cdot]$. If $\Vrf_{F,G}\geq 2^{-q_{0}/4}$, then
\[
\Pru{(*,*,*,Q,\cE',\cU',*)\sim \cD_{\inn}^{+}}{(\cE',\cU',Q)\textup{ is answerable with respect to }G[\cdot]\Big.}\geq \exp(-7\ell_{1}^{2}).
\]
\end{corollary}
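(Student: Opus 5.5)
This is an averaging argument layered on \Cref{lem:Bob_answerable}. The plan is to show that a noticeable fraction of Alice-packets $(\cE,\cU)$ satisfy the hypothesis of that lemma, and then to integrate its conclusion over $(\cE,\cU)$.

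\textbf{Step 1: many good Alice-packets.} By definition,
\[
\Vrf_{F,G}=\Exu{(\cE,\cU,*,*,*)\sim\cD_{\inn}}{\Vrf_{F,G}(\cE,\cU,*,*,*,*)}.
\]
Each conditional value $\Vrf_{F,G}(\cE,\cU,*,*,*,*)$ lies in $[0,1]$. A Markov-type averaging argument, as in \Cref{eq:middle_soundness_1}, therefore gives
\[
\Pru{(\cE,\cU,*,*,*)\sim\cD_{\inn}}{\Vrf_{F,G}(\cE,\cU,*,*,*,*)\geq 2^{-q_0/3}}\geq 2^{-q_0/4}-2^{-q_0/3}.
\]
Since $q_0$ is large by \Cref{cond:parameters}, this is at least $2^{-q_0/4}/2$.

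\textbf{Step 2: apply \Cref{lem:Bob_answerable}.} For every such Alice-packet, \Cref{lem:Bob_answerable} applies, because $G[\cdot]$ is $2^{-q_0}$-transferable with $F[\cdot]$. It says that, conditioned on $(\cE,\cU)$, the tuple $(\cE',\cU',Q)$ is answerable with respect to $G[\cdot]$ with probability at least $\exp(-6\ell_1^2)$. The event ``$(\cE',\cU',Q)$ is answerable'' depends only on the components $(Q,\cE',\cU')$. Hence its unconditional probability under $\cD_{\inn}^{+}$ is obtained by averaging the conditional probabilities over $(\cE,\cU)$, and we get
\[
\Pr{(\cE',\cU',Q)\text{ answerable}}\geq \tfrac12 2^{-q_0/4}\exp(-6\ell_1^2).
\]

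\textbf{Step 3: parameters.} It remains to check that $\tfrac12 2^{-q_0/4}\exp(-6\ell_1^2)\geq \exp(-7\ell_1^2)$. This holds because $\ell_1\geq C\log_2(1/\tau_0)$, which is far larger than $q_0$ by \Cref{cond:parameters}, so $2\cdot 2^{q_0/4}\leq \exp(\ell_1^2)$.

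There is no real obstacle here. The only point needing care is that the marginal of $(Q,\cE',\cU')$ in the corollary is the same as the one used in the lemma's conditional statement, which holds by definition of $\cD_{\inn}^{+}$.
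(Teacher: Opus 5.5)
Your proposal is correct and follows the paper's proof. The paper also uses the averaging bound (via $2^{-q_0/4}\geq 2^{1-q_0/3}$) to show that at least a $2^{-q_0/3}$ fraction of Alice-packets satisfy $\Vrf_{F,G}(\cE,\cU,*,*,*,*)\geq 2^{-q_0/3}$, then averages the conclusion of \Cref{lem:Bob_answerable} over $(\cE,\cU)$ and absorbs the loss into $\exp(-7\ell_1^2)$.
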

\begin{proof}
This follows immediately from~\Cref{lem:Bob_answerable} because the condition $\Vrf_{F,G}\geq 2^{-q_{0}/4}\geq 2^{1-q_{0}/3}$ implies
\[
\Pru{(\cE,\cU,*,*,*,*,*)\sim \cD_{\inn}^{+}}{\Vrf_{F,G}(\cE,\cU,*,*,*,*)\geq 2^{-q_{0}/3}}\geq 2^{-q_{0}/3}.\qedhere
\]
\end{proof}

\subsubsection{Sending Bob's Decoding back to Alice}

As mentioned in the introductory text of~\Cref{subsec:inner_soundness}, the next main goal is to argue that Bob's decoded functions (those in the list $\cS_{G}[\cE',\cU',Q]$) can be ``transferred'' back to Alice's side. 

For intuition, we invite the reader to imagine that each answerable tuple $(\cE',\cU',Q)$ on Bob's side may ``send'' one of the elements of $\cS_{G}[\cE',\cU',Q]$ to each of the Alice-side tuples $(\cE,\cU,Q)$ such that $(\cE,\cU,*,Q,\cE',\cU',*)$ has nonzero probability under $\cD_{\inn}^{+}$. The tuples $(\cE,\cU,Q)$ will then check whether the messages they receive have noticeable agreement with the Alice-table $F[\cE,\cU,\cdot]$. 

We will define two levels of compatibility between these received ``messages'' and $F[\cE,\cU,\cdot]$. The first level is called \emph{$F$-friendliness}, defined as follows.

\begin{definition}\label{def:friendly_decoding}
Let $F[\cdot]$ be an Alice-table. For any possible sample $(\cE,\cU,*,Q,\cE',*,*)$ under $\cD_{\inn}^{+}$, any linear subspace $W\subseteq \cV(\cE')$ containing $Q$, and any linear map $f:W\rightarrow \bF_{2}$, we say that $(\cE,\cU,Q)$ is \emph{$F$-friendly to} the message $(\cE',Q',f,W)$ if the following holds:
\begin{enumerate}[label=(\arabic*)]
\item $Q\in \mathsf{MGr}(\cE',q_{0},2^{2\ell_{1}-q_{0}}s)$, and $\operatorname{codim}_{\cV(\cE')}W\leq q_{0}$;
\item $Q'\subseteq Q$ is a subspace of codimension $\dim Q-\dim Q'=b\in \{0,1\}$; and
\item the following probability is at least $\exp(-2q_{0}^{5})$:
\begin{equation}\label{eq:sending_back_Q+L}
\Pru{L\in\mathsf{MGr}(\cE',\ell_{1},2^{\ell_{1}}s)}{F[\cE,\cU,L]=f|_{L}\,\middle|\,Q'\subseteq L\subseteq W\textup{ and }Q+L\in \mathsf{MGr}(\cE',\ell_{1}+b,2^{\ell_{1}-b}s)}.
\end{equation}
\end{enumerate}
We use $\cR_{F}[\cE,\cU,Q]$ to denote the set of messages that $(\cE,\cU,Q)$ is $F$-friendly to.
\end{definition}

The notion of $F$-friendliness does not require too much: it only inspects the entries of $F[\cE,\cU,\cdot]$ at the mixed-type-Grassmann locations, which (in some sense) are directly ``connected'' to Bob's table $G[\cE',\cU',\cdot]$. The next level of compatibility, defined as follows, is much more stringent. It checks (almost) the full Grassmann table $F[\cE,\cU,\cdot]$, only with a small-codimensional zoom-out $W_{0}$, and is satisfied only if there exists some global well-agreed function $f_{0}:W_{0}\rightarrow \bF_{2}$ that is compatible with the incoming message.

\begin{definition}\label{def:attached_decoding}
In the setting and notation of~\Cref{def:friendly_decoding}, we say that $(\cE,\cU,Q)$ is \emph{$F$-attached to} the message $(\cE',Q',f,W)$ if in addition to the listed three conditions, the following fourth condition is also satisfied.
\begin{enumerate}[label=(\arabic*), start = 4]
\item There exist:
\begin{itemize}
\item a linear subspace $W_{0}\subseteq \cV(\cE)$ such that $\operatorname{codim}_{\cV(\cE)}(W_{0})\leq 2q_{0}$ and $W\subseteq W_{0}$; and 
\item a linear map $f_{0}:W_{0}\rightarrow \bF_{2}$ such that $f_{0}|_{W}=f$ and
\[
    \Pru{L\in \mathsf{Gr}(\cV(\cE),\ell_{1})}{F[\cE,\cU,L]=f_{0}|_{L}\,\Big|\,Q'\subseteq L\subseteq W_{0}}\geq \exp(-q_{0}^9).
\]
\end{itemize}
\end{enumerate}
\end{definition}

We first show below that for each message $(\cE',Q',f,W)$ sent by $(\cE',\cU',Q)$, a noticeable fraction of the receivers $(\cE,\cU,Q)$ will be $F$-friendly to the message. In~\Cref{subsubsec:consistency}, we will argue that any $(\cE,\cU,Q)$ that receives sufficiently many $F$-friendly messages would be $F$-attached to one of them.

\begin{lemma}\label{lem:Bob_back_to_Alice}
Assume~\Cref{cond:parameters}. Let $F[\cdot]$ be an Alice-table, and let $G[\cdot]$ be a Bob-table that is $2^{-q_{0}}$-transferable with $F[\cdot]$. Let $(\cE',\cU')$ be a Bob-packet and $Q\in \mathsf{MGr}(\cE',q_{0},2^{2\ell_{1}-q_{0}}s)$. Suppose that $(\cE',\cU',Q)$ is answerable with respect to $G[\cdot]$ using $(f,W)$. Then
\[
\Pru{(\cE,\cU,*,Q,\cE',\cU',*)\sim \cD_{\inn}^{+}}{(\cE',Q',f,W)\in \cR_{F}[\cE,\cU,Q]\textup{ for some }Q'\,\Big|\,\cE',\cU',Q}\geq \exp(-2q_{0}^{5}).
\]
\end{lemma}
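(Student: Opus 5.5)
The plan is to prove the claim with $b=0$ and $Q'=Q$. Under $\cD_{\inn}^{+}$ the subspace $L=\sigma(\cA_{[\ell_{1}]})$ always contains $Q=\sigma(\cA_{[q_{0}]})$, so in~\eqref{eq:sending_back_Q+L} the extra condition $Q+L\in\mathsf{MGr}(\cE',\ell_{1},2^{\ell_{1}}s)$ reduces to $L\in\mathsf{MGr}$. Conditions (1) and (2) of~\Cref{def:friendly_decoding} then hold by assumption. Everything comes down to showing that the quantity
\[
X(\cE,\cU)=\Pru{L\in\mathsf{MGr}(\cE',\ell_{1},2^{\ell_{1}}s)}{F[\cE,\cU,L]=f|_{L}\,\middle|\,Q\subseteq L\subseteq W}
\]
is at least $\exp(-2q_{0}^{5})$ with probability at least $\exp(-2q_{0}^{5})$ over $(\cE,\cU)$ drawn conditioned on $(\cE',\cU',Q)$.

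The structural fact I will use is a conditional independence. Conditioned on $(\cE',\cU',Q)$, the Alice-packet $(\cE,\cU)$ is independent of the advice $\cA$. Indeed, $\cE$ only refills the critical block of $\cE'$, $\cU=\cU'$ off the critical block, and $U_{c}$ is a random hyperplane of $U'_{c}$, while $\cA$ is sampled from $\scrA_{c}(\cE',2\ell_{1},s)$ given $\cE'$ alone. Hence the joint law of $(L,L')$ given $(\cE,\cU,Q,\cE',\cU')$ does not depend on $(\cE,\cU)$. Moreover, by~\Cref{prop:mix_advice_vs_MGr} (using $|E_{c}'|\geq k$ and~\Cref{cond:parameters}), $L'$ is $k^{-2}$-close to uniform on $\{L'\in\mathsf{MGr}(\cE',\ell_{1}+1,s')\mid Q\subseteq L'\}$ and $L$ is $k^{-2}$-close to uniform on $\{L\in\mathsf{MGr}(\cE',\ell_{1},2^{\ell_{1}}s)\mid Q\subseteq L\}$. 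By~\Cref{prop:MGr_zoom_out}, both statements remain true, up to error $k^{-1}$, after further conditioning on containment in $W$.

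The argument then runs in three steps.

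\textbf{Step 1 (Bob's side).} By answerability via $(f,W)$ and the closeness above, with probability at least $\exp(-q_{0}^{5})/2$ over $L'$ conditioned on $\{Q\subseteq L'\subseteq W\}$ we have $G[\cE',\cU',L']=f|_{L'}$. In particular these entries are non-$\nil$.

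\textbf{Step 2 (transferability).} For each such $L'$, condition (2) of $2^{-q_{0}}$-transferability gives, with probability at least $2^{-q_{0}}$ over $(\cE,\cU,L)$ drawn given $(\cE',\cU',L')$, that $F[\cE,\cU,L]=G[\cE',\cU',L']|_{L}=f|_{L}$. I also need to pass from conditioning on $L'$ to conditioning on $Q$: averaging over $L'$ inside $Q\subseteq L'\subseteq W$ requires that, given $L'$, the law of $(\cE,\cU,L)$ is the same whether or not we also fix $Q$. I will check this directly from~\Cref{def:game3a}. Since $L'\subseteq W$ forces $L\subseteq W$, combining Steps 1 and 2 gives
\[
\Pr{F[\cE,\cU,L]=f|_{L}\textup{ and }L\subseteq W\,\middle|\,\cE',\cU',Q}\geq 2^{-q_{0}-1}\exp(-q_{0}^{5})\cdot\Pr{L'\subseteq W\,\middle|\,\cE',\cU',Q}.
\]

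\textbf{Step 3 (normalizing and averaging).} I divide by $\Pr{L\subseteq W\mid\cE',\cU',Q}$ and compare the two zoom-out probabilities using~\Cref{prop:MGr_zoom_out} (or directly~\Cref{cor:mixed_zoom_out_advice}). Because $L'$ is a one-dimensional extension of $L$, the ratio $\Pr{L'\subseteq W}/\Pr{L\subseteq W}$ is at least $2^{-w-4}\geq 2^{-q_{0}-4}$. By the conditional independence, the left-hand side, conditioned on $L\subseteq W$ and then averaged, equals $\bE_{(\cE,\cU)}[X(\cE,\cU)]$ up to $O(k^{-1})$. This yields $\bE[X]\geq 2^{-2q_{0}-5}\exp(-q_{0}^{5})-O(k^{-1})\geq 2\exp(-2q_{0}^{5})$. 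Since $X\in[0,1]$, the Markov-type bound $\Pr{X\geq\exp(-2q_{0}^{5})}\geq\bE[X]-\exp(-2q_{0}^{5})$ then gives the claim.

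I expect the main obstacle to be the bookkeeping in Steps 2 and 3. Transferability is phrased conditioned on a full unfolded Bob question $(\cE',\cU',L')$, whereas the target is phrased conditioned on $(\cE',\cU',Q)$ with $L$ uniform in $\mathsf{MGr}$ inside $W$. To bridge them I must verify that fixing $L'$ already determines the law of $(\cE,\cU,L)$ regardless of $Q$, and track the change of measure from ``$L'\subseteq W$'' to ``$L\subseteq W$'' so that only a $2^{-O(q_{0})}$ factor is lost.
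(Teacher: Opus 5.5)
There is a genuine gap in Step~2, and it cannot be repaired while insisting on $Q'=Q$ (i.e.\ $b=0$). Under $\cD_{\inn}^{+}$, conditioning on $(\cE',\cU',L')$ alone makes $L$ a uniformly random hyperplane of $L'$, among all $2^{\ell_{1}+1}-1$ of them. Fixing $Q$ as well forces $Q\subseteq L$, so $L$ becomes uniform among the $2^{\ell_{1}+1-q_{0}}-1$ hyperplanes of $L'$ that contain $Q$, which is roughly a $2^{-q_{0}}$ fraction. The identity of conditional laws you intend to ``check directly'' is therefore false: it holds for $(\cE,\cU)$ but not for $L$.

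Condition~(2) of transferability only lower-bounds $\operatorname{Vrf}_{F,G}(*,*,*,\cE',\cU',L')$, which is an average over \emph{all} hyperplanes of $L'$. An average of $2^{-q_{0}}$ can be carried entirely by hyperplanes that do not contain $Q$, so nothing follows for $L\supseteq Q$. In fact the $b=0$ conclusion can fail outright. Start from an honest pair $F[\cE,\cU,L]=h|_{L}$, $G[\cE',\cU',L']=h|_{L'}$ for a global linear $h$, and reset $F[\cE,\cU,L]=\nil$ whenever $L\supseteq Q$. Then:
\begin{itemize}
\item condition~(1) is untouched, since $\nil$ entries are exempt;
\item condition~(2) still holds, because $\operatorname{Vrf}_{F,G}(*,*,*,\cE',\cU',L')>1-2^{-q_{0}}$;
\item $(\cE',\cU',Q)$ is still answerable using $(h|_{\cV(\cE')},\cV(\cE'))$.
\end{itemize}
Yet the $b=0$ probability in~\eqref{eq:sending_back_Q+L} vanishes for every $(\cE,\cU)$.

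The missing idea is the one the paper uses. Keep $L$ a uniform hyperplane of $L'$, where $L'$ is uniform in $\mathsf{MGr}(\cE',\ell_{1}+1,2^{\ell_{1}-1}s)$ subject to $Q\subseteq L'\subseteq W$. Then set $Q'=L\cap Q$, which has codimension $b\in\{0,1\}$ in $Q$. Conditioned on $Q'$, the law of $L$ is uniform on $\{L\in\mathsf{MGr}(\cE',\ell_{1},2^{\ell_{1}}s)\mid Q'\subseteq L\subseteq W,\ Q+L\in\mathsf{MGr}(\cE',\ell_{1}+b,2^{\ell_{1}-b}s)\}$, up to total variation $2^{-s/2}$ in the case $b=0$ (via~\Cref{prop:mixed_zoom_out_2}). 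That is precisely the distribution in~\eqref{eq:sending_back_Q+L}. This is why $F$-friendliness is defined with $Q'$ and $b$, and why the lemma only asserts friendliness for \emph{some} $Q'$.

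This route also makes your change of measure from $\{L'\subseteq W\}$ to $\{L\subseteq W\}$ unnecessary, since the paper conditions on $L'\subseteq W$ throughout. Your other ingredients match the paper and are fine: the independence of $(\cE,\cU)$ from $(L,L',Q)$ given $(\cE',\cU')$, the use of answerability to get agreement on $L'$, and the final averaging over $(\cE,\cU)$.
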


\begin{proof}
By the definition of answerability and transferability, we have
\begin{equation}\label{eq:sending_back_1}
\Exu{L'\in\mathsf{MGr}(\cE',\ell_{1}+1,2^{\ell_{1}-1}s)}{\Pru{(\cE,\cU,L,*,\cE',\cU',L')\sim \cD_{\inn}^{+}}{F[\cE,\cU,L]=f|_{L}\,\Big|\,\cE',\cU',L'}\,\middle|\,Q\subseteq L'\subseteq W}\geq \gamma,
\end{equation}
where $\gamma=2^{-q_{0}}\exp(-q_{0}^{5})$.

Note that if we sample $(\cE,\cU,L,*,\cE',\cU',L')\sim \cD_{\inn}^{+}$ conditioned on $(\cE',\cU')$, then $(\cE,\cU)$ is distributed independently of $(L,L')$. Furthermore, once $L'$ is also fixed and has dimension $\ell_{1}+1$, the subspace $L$ is distributed uniformly over $\mathsf{Gr}(L',\ell_{1})$. Therefore,~\Cref{eq:sending_back_1} implies that with probability at least $\gamma/2$ over $(\cE,\cU)$, we have
\begin{equation}\label{eq:sending_back_2}
\Exu{L'\in\mathsf{MGr}(\cE',\ell_{1}+1,2^{\ell_{1}-1}s)}{\Pru{L\in \mathsf{Gr}(L',\ell_{1})}{F[\cE,\cU,L]=f|_{L}\Big.}\,\middle|\,Q\subseteq L'\subseteq W}\geq \gamma /2.
\end{equation}
Let $\cQ$ be the distribution of $L\cap Q$ in the following sampling process:
\begin{enumerate}
\item Sample $L'$ uniformly at random from $\mathsf{MGr}(\cE',\ell_{1}+1,2^{\ell_{1}-1}s)$ conditioned on $Q\subseteq L'\subseteq W$.
\item Sample $L$ uniformly at random from $\mathsf{Gr}(L',\ell_{1})$.
\end{enumerate}
We can equivalently describe the above process as follows:
\begin{enumerate}
\item Sample $Q'\sim \cQ$. 
\item If $\dim Q'=\dim Q-1$, sample $L\in \mathsf{MGr}(\cE',\ell_{1},2^{\ell_{1}}s)$ conditioned on $Q'\subseteq L\subseteq W$ and $Q+L\in \mathsf{MGr}(\cE',\ell_{1}+1,2^{\ell_{1}-1}s)$ (note that this always leads to $L\cap Q=Q'$).
\item If $\dim Q'=\dim Q$ (which means $Q'=Q$), sample $L'$ uniformly at random from $\mathsf{MGr}(\cE',\ell_{1}+1,2^{\ell_{1}-1}s)$ conditioned on $Q\subseteq L'\subseteq W$, and then sample a uniformly random hyperplane $L\in\mathsf{Gr}(L',\ell_{1})$ containing $Q$.
\end{enumerate}
Note that in the case $Q'=Q$, we can replace the third step above by directly sampling a uniformly random $L\in \mathsf{MGr}(\cE',\ell_{1},2^{\ell_{1}}s)$ conditioned on $Q\subseteq L\subseteq W$, at the cost of incurring a total variation distance of at most $2^{-s/2}$.\footnote{To bound the total variation distance, use~\Cref{prop:mixed_zoom_out_2} and the fact that $\operatorname{codim}_{\cV(\cE')}W\leq q_{0}\ll s.$} Therefore, we can unify the cases $\dim Q'=\dim Q-1$ and $\dim Q'=\dim Q$, by always sampling $L\in \mathsf{MGr}(\cE',\ell_{1},2^{\ell_{1}}s)$ conditioned on $Q'\subseteq L\subseteq W$ and $Q+L\in \mathsf{MGr}(\cE',\ell_{1}+b,2^{\ell_{1}-b}s)$, where $b=\dim Q-\dim Q'$. This corresponds exactly to~\Cref{eq:sending_back_Q+L}. Consequently,~\Cref{eq:sending_back_2} implies
\[
\Exu{Q'\sim \cQ}{\textup{the value of the probability in~\Cref{eq:sending_back_Q+L}}\Big.}\geq \gamma/2-2^{-s/2}\geq \exp(-2q_{0}^{5}),
\]
which implies $(\cE',Q',f,W)\in \cR_{F}[\cE,\cU,Q]$ for some subspace $Q'$ that has nonzero probability under $\cQ$.

To summarize, we have proven that (see the text before~\Cref{eq:sending_back_2})
\[
\Pru{(\cE,\cU,*,*,\cE',\cU',*)\sim \cD_{\inn}^{+}}{(\cE',Q',f,W)\in \cR_{F}[\cE,\cU,Q]\textup{ for some }Q'\,\Big|\,\cE',\cU'}\geq \gamma/2\geq \exp(-2q_{0}^{5}).
\]
This is the same as the desired conclusion, because when we sample $(\cE,\cU,*,Q,\cE',\cU',*)\sim \cD_{\inn}^{+}$ conditioned on $(\cE',\cU')$, the pair $(\cE,\cU)$ is distributed independently of $Q$.
\end{proof}

\subsubsection{Consistency of Decodings}\label{subsubsec:consistency}

As mentioned earlier, we now show that any $(\cE,\cU,Q)$ that receives sufficiently many $F$-friendly messages \emph{with different critical coordinates} would be $F$-attached to one of them. 

Note that the conclusion of the following lemma is just an expanded way to say that $(\cE,\cU,Q)$ is $F$-attached to $(\cE_{i}',Q',f_{i},W_{i})$ for some $i\in [N]$. 

\begin{lemma}\label{lem:sunflower_transfer}
    Assume~\Cref{cond:parameters}. Let $F[\cdot]$ be an Alice-table. Fix an Alice-packet $(\cE,\cU)$ and a linear subspace $Q\in \mathsf{Gr}(\cV(\cE),q_{0})$, and a subspace $Q'\subseteq Q$ of codimension $\dim Q-\dim Q'=b\in \{0,1\}$. Suppose we have the following objects:
    \begin{enumerate}[label=(\arabic*)]
    \item pairwise distinct indices $c_{1},\dots,c_{N}\in [t]$, where $N\geq \exp(k^{8})$;
    \item for each $i\in [N]$, a $c_{i}$-critical descendant $\cE_{i}'=(E_{i,1}',\dots,E_{i,t}')$ of $\cE$, such that 
    \[
|E_{1,c_{1}}'|=|E_{2,c_{2}}'|=\dots=|E_{N,c_{N}}'|;
    \]
    \item for each $i\in [N]$, a pair $(f_{i},W_{i})$ such that $(\cE,\cU,Q)$ is $F$-friendly to $(\cE'_{i},Q',f_{i},W_{i})$.
    \end{enumerate}
    Then there exist a linear subspace $W\subseteq \cV(\cE)$ with $\codim_{\cV(\cE)}(W)\leq 2q_0$ and a linear map $f:W\rightarrow \bF_{2}$ such that 
    \[
    \Pru{L\in \mathsf{Gr}(\cV(\cE),\ell_{1})}{F[\cE,\cU,L]=f|_{L}\,\Big|\,Q'\subseteq L\subseteq W}\geq \exp(-q_{0}^9).
    \]
    Furthermore, for some $i\in [N]$ we have $W_{i}\subseteq W$ and $f|_{W_{i}}=f_{i}$.
\end{lemma}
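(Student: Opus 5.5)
The plan is to combine a sunflower argument on the zoom-outs $W_i$ with a patching argument that glues the functions $f_i$ into a single function on a zoom-out of $\cV(\cE)$ of codimension at most $2q_0$. Then we compare agreement on mixed-type locations with agreement on genuine Grassmann locations of $\cV(\cE)$.

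\textbf{Step 1: pass to a subfamily with common shape.} Each $W_i\subseteq\cV(\cE_i')\subseteq\cV(\cE)$ has codimension at most $q_0$ in $\cV(\cE_i')$. The space $\cV(\cE_i')$ has codimension $9k^2-|E'_{i,c_i}|^2$ in $\cV(\cE)$, which is the same for all $i$. Pigeonholing over the codimension of $W_i$ inside $\cV(\cE_i')$ costs a factor $q_0+1$. We then view all $W_i$ as subspaces of $\cV(\cE)$ of a common codimension $w$. Since $N\geq\exp(k^8)$ is huge compared to $(m2^w)^{w+1}$ for any $m$ polynomial in $k$, \Cref{lem:subspacesunflower} yields a subspace $W^\sharp$ and $m$ of the $W_i$ forming a sunflower inside $W^\sharp$. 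The issue is that $w$ is of order $k^2$, because $W_i$ misses the whole deleted part of block $c_i$. I would therefore run the sunflower argument on $\widetilde W_i:=W_i+\bigoplus_{j}(\text{the part of block }c_i\text{ outside }\bF_2^{E'_{i,c_i}}\otimes\bF_2^{E'_{i,c_i}})$. These have codimension at most $q_0$ in $\cV(\cE)$, and the sunflower lemma then needs only $(m2^{q_0})^{q_0+1}\leq N$. The core $W^\sharp$ has codimension at most $q_0$.

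\textbf{Step 2: patching.} Since the critical indices $c_i$ are distinct, the $W_i$ live on mostly disjoint critical blocks and share all non-critical blocks. The aim is to show that for many pairs $i\neq i'$, $f_i$ and $f_{i'}$ agree on $W_i\cap W_{i'}$. For this I would use $F$-friendliness. A random $L\in\mathsf{MGr}(\cE_i',\ell_1,\cdot)$ has a critical block that is low rank only in coordinate $c_i$, and by \Cref{prop:covering_inner_no_advice} it looks like a nearly uniform subspace of $\cV(\cE)$. Agreement of $F[\cE,\cU,\cdot]$ with both $f_i$ and $f_{i'}$ on noticeable fractions of (nearly) the same distribution of $L$ through $Q'$ therefore forces $f_i|_L=f_{i'}|_L$ often. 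By a standard Grassmann-sampling argument, this yields equality on the intersection up to a bounded codimension.

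Next take a sub-collection of size $m\geq 2^{2q_0}\exp(q_0^{8})$ that is pairwise consistent (Ramsey/pigeonhole on the bounded possible patterns). Using the sunflower structure, sum the $W_i$ and define $f$ on $W:=\sum_i W_i$, or on a slightly smaller space, by gluing. The sunflower condition makes the petals independent modulo the core, so the glued map is well defined. Its codimension is at most that of the core plus one for the petal directions, which is at most $2q_0$.

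\textbf{Step 3: agreement on genuine Grassmann.} Finally I would show that $\Pr_{L\in\mathsf{Gr}(\cV(\cE),\ell_1)}[F[\cE,\cU,L]=f|_L\mid Q'\subseteq L\subseteq W]\geq\exp(-q_0^9)$. The idea is to average the friendliness bounds over $i$. Each mixed-type distribution for index $c_i$ is close to the uniform one by the lazy-repetition covering, so the averaged probabilities over $i$ transfer, with losses like $2^{\ell_1 q_0}$ from \Cref{prop:MGr_zoom_out} and \Cref{lem:inner_covering_zoomout} for conditioning on $W$ versus $W_i$. This conditioning is where the $\exp(-q_0^9)$ slack comes from.

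I expect Step 2, making the patching well-defined and showing consistency of $f_i,f_{i'}$ on intersections, to be the main obstacle. If it fails, I would first try a second-moment argument: count agreements of pairs over random $L$ containing $Q'$, and use $\ell_1\gg q_0$ to dominate collisions.
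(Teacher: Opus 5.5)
There is a genuine gap in Step 2, and it affects Steps 1 and 3 as well.

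\textbf{Step 2.} You claim that a random $L\in\mathsf{MGr}(\cE_i',\ell_1,\cdot)$ looks nearly uniform in $\mathsf{Gr}(\cV(\cE),\ell_1)$, citing \Cref{prop:covering_inner_no_advice}. That proposition holds only after averaging over a uniformly random critical index in $[t]$ and a random thinning. For a fixed $\cE_i'$ the support lies inside $\cV(\cE_i')$, which has codimension $9k^2-(k')^2$ (with $k'=|E'_{i,c_i}|$), and it is low-rank on block $c_i$. For $i\neq i'$ the two domains meet only in $\mathsf{MGr}_{c_i,c_{i'}}(\cE'_{i,i'},\ell_1,\cdot)$, which is an exponentially small fraction of each. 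So friendliness towards $(\cE_i',Q',f_i,W_i)$ and towards $(\cE_{i'}',Q',f_{i'},W_{i'})$ imposes nothing on the overlap. A table can agree with two mutually inconsistent $f_i,f_{i'}$ on their respective domains. Consequently the pairwise consistency, the Ramsey-type selection, and the gluing cannot be established. Consistency only emerges collectively from very many petals, which is why the lemma promises $f|_{W_i}=f_i$ only for \emph{some} $i$.

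\textbf{Step 3.} For the same reason, you cannot invoke the covering property for $N$ specific indices with a fixed thinning size.

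\textbf{Step 1.} Passing to $\widetilde W_i$ goes in the wrong direction. The $f_i$ are undefined on the added directions, and friendliness says nothing about $F$ there. Moreover, the large codimension of the genuine $W_i$ is exactly what makes the argument work.

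The missing idea is to decode directly from $F$ on a sunflower core, rather than gluing the $f_i$. Apply \Cref{lem:subspacesunflower} to the $W_i$ themselves, after pigeonholing their codimension $w'$ in $\cV(\cE_i')$. Write $W$ for the core, $w_0$ for the petal codimension in $W$, and $w=\codim_{\cV(\cE)}W$. Then $w'+9k^2-(k')^2=w_0+w$. Also $2w_0+w=\codim(W_i\cap W_j)\geq 2(9k^2-(k')^2)$, because $W_i\cap W_j\subseteq\cV(\cE'_{i,j})$ (thinned in both blocks $c_i,c_j$). Hence $w\leq 2w'\leq 2q_0$.

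Next, run an $\ell_1$-vs-$d$ linearity test on $W$ with $d=\lfloor\ell_1/10\rfloor$. Pick $L_+\supseteq Q'$ of dimension $D=2\ell_1-d$, and $L_1,L_2\in\mathsf{Gr}(L_+,\ell_1)$ through $Q'$. Define the petal events $\Gamma_i=\{Q+L_+\in\mathsf{MGr}(\cE_i',D+b,2^{d-b}s)\}\cap\{L_+\subseteq W_i\}$. Each has probability about $2^{-(w_0+w)h}\gamma$, and pairwise intersections have probability about $2^{-(2w_0+w)h}\gamma^2$. Here $h=D-\dim Q'$ and $\gamma$ is the common probability that $Q+L_+$ is of mixed type given $L_+\subseteq\cV(\cE_i')$. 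So with $m\approx\exp(-q_0^6)2^{w_0h}\gamma^{-1}$ petals, which can be as many as $\exp(k^3)$ and is why $N\geq\exp(k^8)$ is assumed, the $\Gamma_i$ are nearly disjoint yet have noticeable union.

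Friendliness plus Cauchy--Schwarz gives passing probability at least $\exp(-q_0^6)$ inside each $\Gamma_i$. Inclusion--exclusion then gives at least $\exp(-q_0^7)$ given $L_+\subseteq W$. \Cref{cor:list_decoding_linearity_test_nil} now yields $g:W\to\bF_2$ with agreement $\exp(-q_0^9)$.

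For the ``furthermore'' part:
\begin{enumerate}
\item Bound the list of such $g$ by \Cref{lem:list_decoding_bound_baby}.
\item Prune $F$ to show that most of the petal-passing mass is explained by this list.
\item Extract $g_j$ and $i$ with $\Pr{f_i|_{L_1}=g_j|_{L_1}\,\middle|\,\Gamma_i}\geq\exp(-4q_0^9)$.
\item Conclude $g_j|_{W_i}=f_i$. Otherwise $L_1$ would have to lie in the hyperplane $\ker(f_i-g_j|_{W_i})$ of $W_i$, which given $\Gamma_i$ happens with probability at most $2^{-\ell_1+q_0+1}$.
\end{enumerate}
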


Before starting the formal proof, we give an overview of the proof of~\Cref{lem:sunflower_transfer}, which is inspired by the proof of~\Cref{lem:dkkms_list_bound} in~\cite{DKKMS18,Minzer2022}. For simplicity of presentation we ignore the issue of zoom-ins. 

We know from the $F$-friendliness conditions that each $f_{i}:W_{i}\rightarrow \bF_{2}$ agrees well with the table $F[\cE,\cU,\cdot]$ on the mixed-type-Grassmann locations, i.e. mixed-type subspaces $L$ that are contained in $\cV(\cE')$. In order to patch these locally agreed functions together to some $f:W\rightarrow \bF_{2}$, we use an $\ell_{1}$-vs.-$\ell_{1}/10$-dimensional Grassmann agreement test. In each $W_{i}$, we plant a ``local linearity test'' by sampling $L_{1},L_{2}\in \mathsf{MGr}(\cE',\ell_{1},2^{\ell_{1}}s)$ conditioned on $L_{1},L_{2}\subseteq W_{i}$ and $\dim (L_{1}\cap L_{2})=\lfloor \ell_{1}/10\rfloor$, and test if  
\[
F[\cE,\cU,L_{1}]|_{L_{1}\cap L_{2}}=F[\cE,\cU,L_{2}]|_{L_{1}\cap L_{2}}.
\]
Each of these local linearity tests passes with noticeable probability, due to $F$-friendliness to $(\cE',Q',f_{i},W_{i})$.

If the subspaces $W_{i}\subseteq \cV(\cE)$ satisfy a sunflower condition (as per~\Cref{def:subspacesunflower}), these local linearity tests, when aggregated together, will be noticeably similar to a ``global linearity test'' that samples $L_{1},L_{2}\in \mathsf{Gr}(\cV(\cE),\ell_{1})$ conditioned on $L_{1},L_{2}\subseteq W$ and $\dim (L_{1}\cap L_{2})=\lfloor \ell_{1}/10\rfloor$. Then, due to~\Cref{cor:list_decoding_linearity_test_nil}, we can decode from the table $F[\cE,\cU,\cdot]$ some global linear function $f:W\rightarrow \bF_{2}$ that has good global agreement with $F[\cE,\cU,\cdot]$. 

Of course, the subspaces $W_{i}\subseteq \cV(\cE)$ do not necessarily satisfy the sunflower condition. Nevertheless, using~\Cref{lem:subspacesunflower} we can find a sub-collection of these subspaces that form a sunflower, which is also good enough.

After being able to decode a global linear function $f:W\rightarrow \bF_{2}$, we still have to show that it agrees with $f_{i}$ for some $i\in [N]$. We consider the list of all global linear functions $f:W\rightarrow \bF_{2}$ that has good agreement with $F[\cE,\cU,\cdot]$. A list-decoding argument will show that the size of the list is not large, and nearly all of the passing probability of the ``global linearity test'' above can be attributed to this list. This will imply that some function in the list does agree with some $f_{i}$.

\begin{proof}[Proof of~\Cref{lem:sunflower_transfer}]
Without loss of generality, we assume $c_{i}=i$ for each $i\in [N]$, and let
\begin{equation}\label{eq:consistency_same_deficit}
|E_{1,1}'|=|E_{2,2}'|=\dots=|E_{N,N}'|=k'.
\end{equation}
For each pair of distinct indices $i,j\in [N]$, we use $\cE_{i,j}'$ to denote the ``common thinning'' of $\cE_{i}'$ and $\cE_{j}'$, obtained by replacing the $i$-th block of $\cE=(E_{1},\dots,E_{t})$ with $E_{i,i}'$ and replacing the $j$-th block with $E_{j,j}'$. We then have $\cV(\cE_{i}')\cap \cV(\cE_{j}')=\cV(\cE_{i,j}')$. The proof is divided into the following six steps. 

\paragraph{Step 1: sunflower argument.} 
    
By~\Cref{def:friendly_decoding}, we have $\operatorname{codim}_{\cV(\cE_{i}')}W_{i}\leq q_{0}$ for all $i\in [N]$. By the pigeonhole principle, there exists an integer $w'\in \{0,1,\dots,q_{0}\}$ such that $\Big.\operatorname{codim}_{\cV(\cE_{i}')}W_{i}=w'$ for at least $N_{1}:=\lceil N/(q_{0}+1)\rceil$ of the indices $i\in [N]$. Without loss of generality assume they are the first $N_{1}$ indices in $[N]$. Due to~\Cref{eq:consistency_same_deficit}, for all $i\in [N_{1}]$ we have
\begin{equation}\label{eq:consistency_codim_Wi}
\codim_{\cV(\cE)}W_{i}=\operatorname{codim}_{\cV(\cE'_{i})}W_{i}+\operatorname{codim}_{\cV(\cE)}\cV(\cE_{i}')=w'+9k^{2}-(k')^{2}
\end{equation}

Let $M=\lfloor \exp(k^{3})\rfloor$ and let $m\leq M$ be a positive integer to be determined later. We apply~\Cref{lem:subspacesunflower} to the subspaces $W_{1},\dots,W_{N_{1}}\subseteq \cV(\cE)$, obtaining a subspace $W\subseteq \cV(\cE)$ and indices $i_{1},\dots,i_{M}\in [N_{1}]$ such that $(W,W_{i_{1}},\dots,W_{i_{M}})$ forms a sunflower (see~\Cref{def:subspacesunflower}; only the first $m$ petals will be actually used). Without loss of generality, we assume the indices $i_{1},\dots,i_{m}$ are $1,2,\dots,m$. Note that by the definition of sunflower, we have
$W_{1},\dots,W_{m}\subseteq W$, and
\[
\operatorname{codim}_{W}W_{1}=\dots=\operatorname{codim}_{W}W_{m}=w_{0}
\]
for some integer $w_{0}\geq 0$. Denoting $w=\operatorname{codim}_{\cV(\cE)}W$, we deduce from~\Cref{eq:consistency_codim_Wi} that
\begin{equation}\label{eq:consistency_codim_identity_1}
w'+9k^{2}-(k')^{2}=w_{0}+w.
\end{equation}
Furthermore, the definition of sunflower implies that for any pair of distinct indices $i,j\in [m]$, we have $\operatorname{codim}_{W} W_{i}\cap W_{j}=2w_{0}$. Since $W_{i}\cap W_{j}\subseteq \cV(\cE_{i,j}')$, it follows that
\begin{equation}\label{eq:consistency_codim_identity_2}
2w_{0}+w=\operatorname{codim}_{\cV(\cE)}W_{i}\cap W_{j}={}\operatorname{codim}_{\cV(\cE)}\cV(\cE_{i,j}')+\operatorname{codim}_{\cV(\cE_{i,j}')}W_{i}\cap W_{j}\geq 2(9k^{2}-(k')^{2}). 
\end{equation}
Comparing~\Cref{eq:consistency_codim_identity_1,eq:consistency_codim_identity_2} yields $w\leq 2w'$, and hence  $\operatorname{codim}_{\cV(\cE)}W\leq 2w'\leq 2q_{0}$.

\paragraph{Step 2: probability of lying in a petal.}

Write $d=\lfloor \ell_{1}/10\rfloor$ and $D=2\ell_{1}-d$, and $h=D-\dim Q'=D-q_{0}+b$. Throughout the rest of the proof, we suppose $L_{+}\in \mathsf{Gr}(\cV(\cE),D)$ is a random subspace conditioned on $Q'\subseteq L_{+}$, and $L_{1},L_{2}\in \mathsf{Gr}(L_{+},\ell_{1})$ are then sampled independently conditioned on $Q'\subseteq L_{1},L_{2}$. Note that with probability at least $1-2^{-d+q_{0}}$, we have $\dim (L_{1}\cap L_{2})=d$.

For each $i\in [m]$, we let $\Gamma_{i}$ be the event 
\[\left\{Q+L_{+}\in \mathsf{MGr}(\cE'_{i},D+b,2^{d-b}s)\right\}\cap\left\{L_{+}\subseteq W_{i}\right\}.\]
Due to~\Cref{eq:consistency_same_deficit}, we know that the probability  
\[\Pr{Q+L_{+}\in\mathsf{MGr}(\cE_{i}',D+b,2^{d-b}s)\,\middle|\,L_{+}\subseteq \cV(\cE_{i}')}\]
is the same for each $i\in [m]$.\footnote{This uses the fact that $Q\in \mathsf{MGr}(\cE_{i}',q_{0},2^{2\ell_{1}-q_{0}}s)$ for each $i\in [m]$. This fact holds because otherwise $(\cE,\cU,Q)$ could not be $F$-friendly to $(\cE_{i}',Q',f_{i},W_{i})$.} We use $\gamma$ to denote this probability, and we trivially have $\gamma\geq \exp(-50\ell_{1}k^{2}).$ Note that
\[
\Pr{\Gamma_{i}}=\Pr{L_{+}\subseteq \cV(\cE_{i}')}\cdot \gamma\cdot \Pru{L\in \mathsf{MGr}(\cE_{i}',D+b,2^{d-b}s)}{L\subseteq W_{i}\,\big|\,Q\subseteq L}.
\]
The first and third factors on the right-hand side are ``zoom-out probabilities,'' and are both easy to estimate (for the third, use~\Cref{prop:MGr_zoom_out}), leading to the bounds 
\begin{equation}\label{eq:consistency_Gamma_prob}
2^{-(w_{0}+w)h-3}\cdot \gamma\leq \Pr{\Gamma_{i}}\leq 2^{-(w_{0}+w)h+3}\cdot \gamma.
\end{equation}

Furthermore, it is easy to see that for any pair of distinct indices $i,j\in [m]$, we have (the projections of $L_{+}$ to the $i$-th block and the $j$-th block are almost independent)
\begin{equation}\label{eq:consistency_intersection_MGr}
\gamma^{2}/2\leq \Pr{Q+L_{+}\in\mathsf{MGr}(\cE_{i}',D+b,2^{d-b}s)\cap \mathsf{MGr}(\cE_{j}',D+b,2^{d-b}s)\,\middle|\,L_{+}\subseteq \cV(\cE_{i,j}')}\leq 2\gamma^{2}.
\end{equation}
By~\Cref{def:mixed_grassmann_domain}, the intersection of mixed-type Grassmann domains in~\Cref{eq:consistency_intersection_MGr} is the same as $\mathsf{MGr}_{i,j}(\cE_{i,j}',D+b,2^{d-b}s)$. We write
\[
\Pr{Q+L_{+}\in\mathsf{MGr}_{i,j}(\cE_{i,j}',D+b,2^{d-b}s)\,\middle|\,L_{+}\subseteq \cV(\cE_{i,j}')}=\gamma_{i,j}\in \left[\frac{\gamma^{2}}{2},2\gamma^{2}\right].
\]
Then we have
\[
\Pr{\Gamma_{i}\cap \Gamma_{j}}=\Pr{L_{+}\subseteq \cV(\cE_{i,j}')}\cdot\gamma_{i,j}\cdot \Pru{L\in \mathsf{MGr}(\cE_{i,j}',D+b,2^{d-b}s)}{L\subseteq W_{i}\cap W_{j}\,\big|\,Q\subseteq L}. 
\]
The first and third factors on the right-hand side are easy to estimate (for the third, use~\Cref{prop:MGr_zoom_out}), leading to the bounds (recall~\Cref{eq:consistency_codim_identity_2})
\begin{equation}\label{eq:consistency_Gamma_intersect}
2^{-(2w_{0}+w)h-4}\cdot\gamma^{2}\leq \Pr{\Gamma_{i}\cap \Gamma_{j}}\leq 2^{-(2w_{0}+w)h+4}\cdot \gamma^{2}.
\end{equation}

\paragraph{Step 3: passing probability in a petal.} For each $i\in [m]$, let $\Lambda_{i}$ be the event
\[
\Gamma_{i}\cap\left\{F[\cE,\cU,L_{1}]=\left.f_{i}\right|_{L_{1}}\right\}\cap \left\{F[\cE,\cU,L_{2}]=\left.f_{i}\right|_{L_{2}}\right\}\cap\left\{\dim (L_{1}\cap L_{2})=d\right\}.
\]
We note that conditioned on $\Gamma_{i}$, the distribution of $Q+L_{+}$ is uniformly random over the set
\[\left\{L\in \mathsf{MGr}(\cE_{i}',D+b,2^{d-b}s)\,\middle|\,Q\subseteq L\subseteq  W_{i}\right\}.\] By~\Cref{prop:MGr_zoom_out}, it follows that the distribution of $Q+L_{1}$ is $16$-almost uniformly\footnote{Here we say a distribution over a finite set $S$ is $C$-almost uniform if the probability of every element in the set lies in $[C^{-1}|S|^{-1},C|S|^{-1}]$.} random over the set \[\left\{L\in\mathsf{MGr}(\cE_{i}',\ell_{1}+b,2^{\ell_{1}-b}s)\,\middle|\, Q\subseteq L \subseteq W_{i}\right\}.\] 
Using the assumption that $(\cE,\cU,Q)$ is $F$-friendly to $(\cE_{i}',Q',f_{i},W_{i})$, this implies
\[
\Pr{F[\cE,\cU,L_{1}]=\left.f_{i}\right|_{L_{1}}\,\Big|\,\Gamma_{i}}\geq \exp(-2q_{0}^{5})/16.
\]
By Cauchy-Schwarz, we know that
\begin{align*}
&\quad\Pr{F[\cE,\cU,L_{1}]=\left.f_{i}\right|_{L_{1}}\textup{ and }F[\cE,\cU,L_{2}]=\left.f_{i}\right|_{L_{2}}\,\Big|\,\Gamma_{i}}\\
&=\Exu{L_{+}}{\Pru{L_{1}}{F[\cE,\cU,L_{1}]=\left.f_{i}\right|_{L_{1}}\,\Big|\,L_{+}}^{2}\,\middle|\,\Gamma_{i}}\\
&\geq \Pr{F[\cE,\cU,L_{1}]=\left.f_{i}\right|_{L_{1}}\,\Big|\,\Gamma_{i}}^{2}\geq \exp(-4q_{0}^{5})/256.
\end{align*}
Therefore, 
\begin{equation}\label{eq:consistency_Lambda_prob}
\Pr{\Lambda_{i}\mid \Gamma_{i}}\geq \exp(-4q_{0}^{5})/256-2^{-d+q_{0}}\geq \exp(-q_{0}^{6}),
\end{equation}
where we used~\Cref{cond:parameters} in the last transition.

\paragraph{Step 4: inclusion-exclusion.} 

For each $i\in [m]$, let $\Lambda_{i}^*$ be the event 
\[\Lambda_{i}\setminus\left(\bigcup\nolimits_{j\in [m],\;j\neq i}\Gamma_{j}\right).\]
Combining~\Cref{eq:consistency_Gamma_prob,eq:consistency_Gamma_intersect,eq:consistency_Lambda_prob}, we get
\[
\Pr{\Lambda_{i}^{*}}\geq \exp(-q_{0}^{6})\cdot2^{-(w_{0}+w)h-3}\gamma-(m-1)\cdot 2^{-(2w_{0}+w)h+4}\gamma^{2}. 
\]
Since the events $\Lambda_{1}^*,\dots,\Lambda_{m}^*$ are pairwise disjoint, we have
\[
\Pr{\bigcup\nolimits_{i=1}^{m}\Lambda_{i}^*}\geq 2^{-wh+4}\left(\exp(-q_{0}^{6}-10)\cdot m\cdot 2^{-w_{0}h}\gamma -m(m-1)2^{-2w_{0}h}\gamma^{2}\right).
\]
We now choose $m=\left\lceil \exp(-q_{0}^{6}-11)\cdot 2^{w_{0}h}\gamma^{-1}\right\rceil\leq \exp(k^{3})$. Plugging into the above equation yields
\begin{equation}\label{eq:consistency_union_Lambda}
\Pr{\bigcup\nolimits_{i=1}^{m}\Lambda_{i}^*}\geq 2^{-wh+4}\exp(-q_{0}^{7}).
\end{equation}
For future reference, we also record the estimate (using~\Cref{eq:consistency_Gamma_prob,eq:consistency_Gamma_intersect,eq:consistency_Lambda_prob})
\begin{equation}\label{eq:consistency_Lambda_cond_Gamma}
\Pr{\Lambda_{i}^*|\Gamma_{i}}\geq \exp(-q_{0}^{6})-(m-1)\cdot 2^{-w_{0}h+7}\gamma \geq \exp(-q_{0}^{6})/2.
\end{equation}

Note that
\[
\bigcup\nolimits_{i=1}^{m}\Lambda_{i}^*\subseteq \bigcup\nolimits_{i=1}^{m}\left\{L_{+}\subseteq W_{i}\right\}\subseteq \left\{L_{+}\subseteq W\right\}.
\]
Since $\Pr{L_{+}\subseteq W}\leq 2^{-wh+1}$,~\Cref{eq:consistency_union_Lambda} implies that
\begin{equation}\label{eq:consistency_union_lambda_W}
\Pr{\bigcup\nolimits_{i=1}^{m}\Lambda_{i}^*\,\middle|\,L_{+}\subseteq W}\geq \exp(-q_{0}^{7}).
\end{equation}
In particular, by the definition of the events $\Lambda_{i}$, this means
\[
\Pr{\left.F[\cE,\cU,L_{1}]\right|_{L_{1}\cap L_{2}}=\left.F[\cE,\cU,L_{2}]\right|_{L_{1}\cap L_{2}}\,\Big|\,\dim(L_{1}\cap L_{2})=d\text{ and }L_{+}\subseteq W}\geq \exp(-q_{0}^{7}). 
\]
In the above probability expression, the distribution of the pair $(L_{1},L_{2})$ is identical to sampling from $\cG(W,\ell_{1},d)$ conditioned on $Q'\subseteq L_{1},L_{2}$ (see~\Cref{subsec:general_notation} for the notation $\cG(W,\ell_{1},d)$). Therefore, due to~\Cref{cor:list_decoding_linearity_test_nil}, there exists a linear function $g:W\rightarrow \bF_{2}$ such that
\begin{equation}\label{eq:consistency_global_existence}
\Pru{L\in\mathsf{Gr}(\cV(\cE),\ell_{1})}{F[\cE,\cU,L]=g|_{L}\,\Big|\,Q'\subseteq L\subseteq W}\geq \exp(-q_{0}^{9}).
\end{equation}

\paragraph{Step 5: list-decoding.} Let $g_{1},\dots,g_{p}$ be the list of all linear functions $g:W\rightarrow \bF_{2}$ satisfying~\eqref{eq:consistency_global_existence}, where $p$ is a positive integer. By~\Cref{lem:list_decoding_bound_baby}, we know that $p\leq 2\exp(q_{0}^{9})$. We consider a ``pruned'' Grassmann table $F'[\cdot]$ defined as follows: for each $L\in \mathsf{Gr}(W,\ell_{1})$,
\begin{itemize}
\item if $F[\cE,\cU,L]=g_{j}|_{L}$ for some $j\in [p]$, then let $F'[L]=\nil$;
\item if $F[\cE,\cU,L]\neq g_{j}|_{L}$ for any $j\in [p]$, then let $F'[L]=F[\cE,\cU,L]$.
\end{itemize}

We claim that
\begin{equation}\label{eq:consistency_no_agreement}
\Pr{\left.F'[L_{1}]\right|_{L_{1}\cap L_{2}}=\left.F'[L_{2}]\right|_{L_{1}\cap L_{2}}\,\Big|\,\dim(L_{1}\cap L_{2})=d\text{ and }L_{+}\subseteq W}\leq  \exp(-q_{0}^{8}), 
\end{equation}
Suppose~\Cref{eq:consistency_no_agreement} does not hold. Then by~\Cref{cor:list_decoding_linearity_test_nil}, there would exist another linear function $g_{0}:W\rightarrow \bF_{2}$ (it cannot be among $g_{1},\dots,g_{p}$ because otherwise the left-hand side of the following would be zero) satisfying
\[
\Pru{L\in\mathsf{Gr}(\cV(\cE),\ell_{1})}{F'[L]=g_{0}|_{L}\,\Big|\,Q'\subseteq L\subseteq W}\geq \exp(-q_{0}^{9}),
\]
and since $F'[\cdot]$ is a ``pruning'' of $F[\cE,\cU,\cdot]$, this $g_{0}$ would also satisfy~\Cref{eq:consistency_global_existence}. That would make $g_{0}$ appear in the list $(g_{1},\dots,g_{p})$, which is a contradiction. Combining~\Cref{eq:consistency_union_lambda_W,eq:consistency_no_agreement} yields
\[
\Pr{\Bigl\{F'[L_{1}]\neq F[\cE,\cU,L_{1}]\textup{ or }F'[L_{2}]\neq F[\cE,\cU,L_{2}]\Big\}\cap \bigcup\nolimits_{i=1}^{m}\Lambda_{i}^*\,\middle|\,L_{+}\subseteq W}\geq \exp(-q_{0}^{7})-\exp(-q_{0}^{8}),
\]
and by the symmetry between $L_{1}$ and $L_{2}$, this implies
\[
\Pr{\Bigl\{F'[L_{1}]\neq F[\cE,\cU,L_{1}]\Big\}\cap \bigcup\nolimits_{i=1}^{m}\Lambda_{i}^*\,\middle|\,L_{+}\subseteq W}\geq \frac{1}{2}\left(\exp(-q_{0}^{7})-\exp(-q_{0}^{8})\right)\geq \exp(-q_{0}^{8}).
\]
By the definition of $F'$, this means
\begin{equation}\label{eq:consistency_list_decoding}
\Pr{\left(\bigcup\nolimits_{j=1}^{p}\Bigl\{F[\cE,\cU,L_{1}]= g_{j}|_{L_{1}}\Bigr\}\right)\cap\left(\bigcup\nolimits_{i=1}^{m}\Lambda_{i}^*\right)\,\middle|\,L_{+}\subseteq W}\geq \exp(-q_{0}^{8}).
\end{equation}
\paragraph{Step 6: consistency.}

By~\Cref{eq:consistency_list_decoding}, there exists an index $j\in [p]$ such that
\[
\Pr{\Bigl\{F[\cE,\cU,L_{1}]=g_{j}|_{L_{1}}\Bigr\}\cap \bigcup\nolimits_{i=1}^{m}\Lambda_{i}^*\,\middle|\,L_{+}\subseteq W}\geq \exp(-q_{0}^{8})/p\geq \exp(-3q_{0}^{9}).
\]
Without loss of generality assume this special index $j$ is $1$. Since the events $\Lambda_{1}^*,\dots,\Lambda_{m}^*$ are pairwise disjoint and all contained in the event $\{L_{+}\subseteq W\}$, there exists an index $i\in [m]$ such that
\[
\Pr{F[\cE,\cU,L_{1}]=g_{1}|_{L_{1}}\,\Big|\,\Lambda_{i}^*}\geq \exp(-3q_{0}^9).
\]
Without loss of generality assume this special index $i$ is $1$. By the definitions of $\Lambda_{1}^*$ and $\Lambda_{1}$, this is equivalent to
\[
\Pr{f_{1}|_{L_{1}}=g_{1}|_{L_{1}}\,\Big|\,\Lambda_{1}^*}\geq \exp(-3q_{0}^{9}).
\]
Combining with~\Cref{eq:consistency_Lambda_cond_Gamma} yields
\begin{equation}\label{eq:consistency_f_g_Gamma}
\Pr{f_{1}|_{L_{1}}=g_{1}|_{L_{1}}\,\Big|\,\Gamma_{1}}\geq \exp(-4q_{0}^{9}).
\end{equation}

It remains to show that $f_{1}=g_{1}|_{W_{1}}$. Assume on the contrary that $f_{1}\neq g_{1}|_{W_{1}}$. We let $K=\ker(f_{1}-g_{1}|_{W_{1}})\subsetneq W_{1}$ be the hyperplane in $W_{1}$ on which $f_{1}$ agrees with $g_{1}|_{W_{1}}$. For any $L\in \mathsf{MGr}(\cE_{1}',D+b,2^{d-b}s)$ such that $Q\subseteq L\subseteq W_{1}$ but $L\not\subseteq K$, it is not hard to see (from how we sampled $L_{1}$ and $L_{+}$) that
\[
\Pr{L_{1}\subseteq K\,\Big|\, Q+L_{+}=L}\leq 2^{-(\ell_{1}-q_{0})}.
\]
Furthermore, we know from~\Cref{prop:MGr_zoom_out} (and from the definition of $\Gamma_{1}$) that
\[
\Pr{Q+L_{+}\subseteq K\,\Big|\,\Gamma_{1}}\leq 2^{-(D-q_{0})+4}.
\]
Therefore, by union bound we have
\[
\Pr{L_{1}\subseteq K\,\Big|\,\Gamma_{1}}\leq 2^{-(\ell_{1}-q_{0})}+2^{-(D-q_{0})+4}\leq 2^{-\ell_{1}+q_{0}+1}.
\]
This directly contradicts~\Cref{eq:consistency_f_g_Gamma}, and hence concludes the proof of this lemma.
\end{proof}

\subsubsection{Deriving Strategy for the Middle PCP}

We are now ready to prove the soundness of $\GameCa[H,k,\ell_{1},\ell_{2},t,s,\beta]$.

\begin{proof}[Proof of~\Cref{lem:innerpcp_soundness}]
Assume on the contrary that the value of $\GameCa[H,k,\ell_{1},\ell_{2},t,s,\beta]$ is larger than $\delta$. The goal is to obtain a strategy for 
\[\GameBc[H,k,\ell_{2},t,2^{2\ell_{1}-q_{0}}s,q_{0},\beta]\]
achieving a value larger than $2^{-q_{2}/4}$, which would contradict~\Cref{lem:Game2b_soundness} and~\Cref{prop:Game2b=Game2c}.

We first represent Alice's and Bob's strategies in $\GameCa$ by ``folded tables.'' Recall that in $\GameCa[H,k,\ell_{1},\ell_{2},t,s,\beta]$, the verifier samples $(\cE,\cU,L,*,\cE',\cU',L')\sim \cD_{\inn}^{+}$. We write 
\[\widetilde{F}[\cE,\cU,L+\cN(\cE,\cU)]:(L+\cN(\cE,\cU))\rightarrow \bF_{2}\] for Alice's answer to the question $(\cE,\cU,L+\cN(\cE,\cU))$, and write 
\[\widetilde{G}[\cE',\cU',L'+\toplus \cU']:(L'+\toplus \cU')\rightarrow \bF_{2}\] for Bob's answer to the question $(\cE',\cU',L'+\toplus \cU')$. By assumption, there exist such folded tables $\widetilde{F}[\cdot]$ and $\widetilde{G}[\cdot]$ such that
\begin{equation}\label{eq:inner_soundness_assumption}
\Pru{(\cE,\cU,L,*,\cE',\cU',L')\sim \cD_{\inn}^{+}}{\left.\widetilde{F}[\cE,\cU,L+\cN(\cE,\cU)]\right|_{L+\bigoplus \cU}=\left.\widetilde{G}[\cE',\cU',L'+\toplus \cU']\right|_{L+\bigoplus \cU}}>\delta.
\end{equation}

\paragraph{Step 1: unfolded tables.} We define an (unfolded) Alice-table (see~\Cref{ntn:inner_Alice_table}) $F^*[\cdot]$ as follows. For any Alice-packet $(\cE,\cU)$ and any linear subspace $L\subseteq \cV(\cE)$, if $(\cE,\cU,L)$ is an unfolded-Alice-question (see~\Cref{ntn:inner_unfolded_Alice_question}), let
\[
F^*[\cE,\cU,L]=\left.\widetilde{F}[\cE,\cU,L+\cN(\cE,\cU)]\right|_{L}.
\]
Otherwise, we let $F^*[\cE,\cU,L]=\nil$.

Similarly, we define an (unfolded) Bob-table $G^*[\cdot]$ as follows. For any Bob-packet $(\cE',\cU')$ and any linear subspace $L'\subseteq \cV(\cE')$, if $(\cE',\cU',L')$ is an unfolded-Bob-question, let
\[
G^*[\cE',\cU',L']=\left.\widetilde{G}[\cE',\cU',L'+\toplus \cU']\right|_{L'}.
\]
Otherwise, we let $G^*[\cE',\cU',L']=\nil$.

We run the following iterative ``pruning'' procedure to obtain our final Alice-table $F[\cdot]$ and Bob-table $G[\cdot]$ used for decoding. 
    \begin{enumerate}
        \item Initialize with $F:= F^*$, $G:=G^*$.
        \item If there exists an unfolded-Alice-question $(\cE,\cU,L)$ such that $F[\cE,\cU,L]\neq \nil$ but 
        \begin{align*}
            \Vrf_{F,G}(\cE,\cU,L,*,*,*) < 2^{-q_0}, 
        \end{align*}
        update $F[\cE,\cU,L] = \nil$. Similarly, if there exists an unfolded-Bob-question $(\cE',\cU',L')$ such that $G[\cE',\cU',L']\neq \nil$ but
        \begin{align*}
            \Vrf_{F,G}(*,*,*,\cE',\cU',L') < 2^{-q_0}, 
        \end{align*}
        update $G[\cE',\cU',L'] := \nil$. 
        \item Repeat the preceding operations until for every unfolded-Alice-question $(\cE,\cU,L)$ such that $F[\cE,\cU,L]\neq \nil$, we have $\Vrf_{F,G}(\cE,\cU,L,*,*,*) \geq 2^{-q_0}$, and for every unfolded-Bob-question $(\cE',\cU',L')$ such that $G[\cE',\cU',L']\neq \nil$, we have $\Vrf_{F,G}(*,*,*,\cE',\cU',L') \geq 2^{-q_0}$. 
    \end{enumerate}
    It is obvious that the procedure must terminate within a finite number of operations. Furthermore, by definitions, after termination the tables $F[\cdot ], G[\cdot ]$ are $2^{-q_0}$-transferable with each other (see~\Cref{def:transferability}).
    
    We know from~\Cref{eq:inner_soundness_assumption} that $\operatorname{Vrf}_{F^*,G^*}>\delta$.
    During the pruning procedure, the agreement probability can decrease by at most $2^{-q_0+1}$, since when each unfolded question of Alice or Bob is pruned, its conditional contribution to the agreement is at most $2^{-q_0}$. As a result, we get
    \begin{align}\label{ineq:agreement_folded_tables}
         \Vrf_{F,G}\geq \Vrf_{F^*,G^*}- 2^{-q_0+1} \geq \delta/2,
    \end{align}
    under~\Cref{cond:parameters}. 

\paragraph{Step 2: strategy for Game2c.} Let $\vec{\tau}=(\tau_{0},\dots,\tau_{2q_{0}})$ be as in~\Cref{cond:parameters}. We now derive strategies for the provers in $\GameBc[H,k,\ell_{2},t,2^{2\ell_{1}-q_{0}}s,q_{0},\beta]$ from the tables $F[\cdot]$ and $G[\cdot]$. The strategy for Alice is as follows.
\begin{enumerate}
\item Receive the question $(\cE,\cA,\cU)$. Let $Q=\sigma(\cA_{[q_{0}]})$. If $\dim Q<q_{0}$, return an arbitrary answer and abort.
\item Pick a random subspace $Q'\subseteq Q$ with $\dim Q-\dim Q'\in \{0,1\}$. If $\mathsf{LIST}_{Q'}^{\vec{\tau}}(F)=\emptyset$ (see~\Cref{def:LIST_Q}), return an arbitrary answer and abort. 
\item Pick a random element $(f,W)\in \mathsf{LIST}_{Q'}^{\vec{\tau}}(F)$, and randomly extend $f$ to a linear function $\widetilde{f}:\cV(\cE)\rightarrow \bF_{2}$. If $\widetilde{f}|_{\cN(\cE,\cU)}$ is not \textup{SNR}-satisfying, return an arbitrary answer and abort. Otherwise, return $\widetilde{f}$.
\end{enumerate}
Before describing Bob's strategy, we first prepare a table of ``distinguished answers'' for Bob. For each Bob-packet $(\cE',\cU')$ and each subspace $Q\subseteq \cV(\cE')$ of dimension at most $q_{0}$, if $(\cE',\cU',Q)$ is answerable with respect to $G[\cdot]$ (see~\Cref{def:bob_answerable}), pick a distinguished element of the list $\cS_{G}[\cE',\cU',Q]$, and denote this element by $\cS^*_{G}[\cE',\cU',Q]$. Now the strategy for Bob is:
\begin{enumerate}
\item Receive the question $(\cE',\cA,\cU')$. Let $Q=\sigma(\cA_{[q_{0}]})$. 
\item If $(\cE',\cU',Q)$ is not answerable (see~\Cref{def:bob_answerable}) with respect to $G[\cdot]$, return an arbitrary answer. Otherwise, let $(f,W)=\cS_{G}^*[\cE',\cU',Q]$,  randomly extend $f$ to a linear function $\widetilde{f}:\cV(\cE')\rightarrow \bF_{2}$, and return $\widetilde{f}$.
\end{enumerate}

\paragraph{Step 3: SNR-satisfaction.} In this step, the goal is to show that in the strategy for Alice constructed in Step 2, for any $(f,W)$ in Alice's list $\mathsf{LIST}_{Q'}^{\vec{\tau}}(F)$, there exists an extension \Big.$\widetilde{f}:\cV(\cE)\rightarrow \bF_{2}$ such that $\widetilde{f}|_{\cN(\cE,\cU)}$ is \textup{SNR}-satisfying. Suppose such an extension does not exist for a pair $(f,W)\in \mathsf{LIST}_{Q'}^{\vec{\tau}}(F)$. It follows that $f|_{\cN(\cE,\cU)\cap W}$ does not have an \textup{SNR}-satisfying extension to $\cN(\cE,\cU)$. Then for any linear subspace $\widetilde{L}\subseteq \cV(\cE)$ such that
\begin{equation}\label{eq:inner_soundness_SNR_1}
\Pru{L\in \mathsf{Gr}(\cV(\cE),\ell_{1})}{L+\cN(\cE,\cU)=\widetilde{L}\,\middle|\,Q'\subseteq L\subseteq W}>0,
\end{equation}
we must have
\begin{equation}\label{eq:inner_soundness_SNR_2}
\left.\widetilde{F}[\cE,\cU,\widetilde{L}]\right|_{W\cap \widetilde{L}}\neq \left.f\right|_{W\cap \widetilde{L}}.
\end{equation}
We claim that this implies
\begin{equation}\label{eq:inner_soundness_SNR_3}
\Pru{L\in \mathsf{Gr}(\cV(\cE),\ell_{1})}{F[\cE,\cU,L]=f|_{L}\,\middle|\,Q'\subseteq L\subseteq W\text{ and }L+\cN(\cE,\cU)=\widetilde{L}}\leq 2^{-(\ell_{1}-q_{0})+3}.
\end{equation}
Then taking expectation over $\widetilde{L}$ yields
\[
\Pru{L\in \mathsf{Gr}(\cV(\cE),\ell_{1})}{F[\cE,\cU,L]=f|_{L}\,\middle|\,Q'\subseteq L\subseteq W}\leq 2^{-(\ell_{1}-q_{0})+3},
\]
contradicting the condition that $(f,W)\in\mathsf{LIST}_{Q'}^{\vec{\tau}}(F)$.

We now show that~\Cref{eq:inner_soundness_SNR_1,eq:inner_soundness_SNR_2} together imply~\Cref{eq:inner_soundness_SNR_3}. Indeed,~\Cref{eq:inner_soundness_SNR_1} implies that
\[
\dim\left(W\cap \widetilde{L}\middle/(W\cap \cN(\cE,\cU)+Q')\right)\leq \ell_{1}-\dim Q',
\]
which in turn (combined with~\Cref{eq:inner_soundness_SNR_1}) implies 
\begin{equation}\label{eq:inner_soundness_SNR_4}
\Pru{L\in\mathsf{Gr}(\cV(\cE),\ell_{1})}{L+\cN(\cE,\cU)=\widetilde{L}\,\middle|\,Q'\subseteq L\subseteq W\cap \widetilde{L}}\geq \prod_{i=1}^{\infty}(1-2^{-i})\geq \frac{1}{4}.
\end{equation}
If~\Cref{eq:inner_soundness_SNR_3} does not hold, then due to~\Cref{eq:inner_soundness_SNR_4} we have
\[
\Pru{L\in \mathsf{Gr}(\cV(\cE),\ell_{1})}{F[\cE,\cU,L]=f|_{L}\text{ and }L+\cN(\cE,\cU)=\widetilde{L}\,\middle|\, Q'\subseteq L\subseteq W\cap \widetilde{L}}\geq 2^{-(\ell_{1}-q_{0})+1},
\]
which easily contradicts~\Cref{eq:inner_soundness_SNR_2}.

\paragraph{Step 4: good and excellent tuples.} Fix an arbitrary Bob-packet $(\cE',\cU')$ and a subspace $Q\subseteq \cV(\cE')$ such that $(\cE',\cU',Q)$ is answerable with respect to $G[\cdot]$. Recall from Step 2 that a distinguished element $(f,W)=\cS_{G}^*[\cE',\cU',Q]$ is chosen from the list $\cS_{G}[\cE',\cU',Q]$. Now imagine that for any Alice-packet $(\cE,\cU)$ such that $(\cE,\cU,*,Q,\cE',\cU',*)$ has nonzero probability under $\cD_{\inn}^{+}$,
\[(\cE',\cU',Q)\text{ \emph{sends} the message }(\cE',Q',f,W)\text{ to }(\cE,\cU,Q)\]
for any $Q'\subseteq Q$ with $\dim Q-\dim Q'\in\{0,1\}$ (so there are $2^{q_{0}}$ messages sent to the same $(\cE,\cU,Q)$, because there are $2^{q_{0}}$ possibilities of $Q'$). 
We say that $(\cE,\cU,Q,\cE',\cU')$ is a \emph{good tuple} if
\[
(\cE',\cU',Q)\textup{ sends }(\cE,\cU,Q)\text{ a message that }(\cE,\cU,Q)\text{ is }F\text{-friendly to}.
\]
We say that $(\cE,\cU,Q,\cE',\cU')$ is an \emph{excellent tuple} if
\[
(\cE',\cU',Q)\textup{ sends }(\cE,\cU,Q)\text{ a message that }(\cE,\cU,Q)\text{ is }F\text{-attached to}.
\]

We claim that for any possible sample $(\cE,\cU,\cE',\cU',\cA)\sim \cD_{\inn}$, if $(\cE,\cU,Q,\cE',\cU')$ is an excellent tuple, where $Q=\sigma(\cA_{[q_{0}]})$, then the strategies in Step 2 succeed with probability at least $\exp(-\ell_{1}^{2})$ on the question pair $((\cE,\cA,\cU),(\cE',\cA,\cU'))$. Indeed, if $(\cE,\cU,Q,\cE',\cU')$ is an excellent tuple, then by~\Cref{def:attached_decoding}, for some $Q'\subseteq Q$ with $\dim Q-\dim Q'\in \{0,1\}$, there exists a pair $(f_{0},W_{0})\in \mathsf{LIST}_{Q'}^{\vec{\tau}}(F)$ such that $W\subseteq W_{0}$ and $f_{0}|_{W}=f$. Then
\begin{itemize}
\item with probability (at least) $2^{-q_{0}}$, Alice chooses the correct $Q'$;
\item with probability (at least) $\left|\mathsf{LIST}_{Q'}^{\vec{\tau}}(F)\right|^{-1}$, Alice chooses the correct $(f_{0},W_{0})$;
\item since $\operatorname{codim}_{\cV(\cE)}W_{0}\leq 2q_{0}$, with probability at least $2^{-2q_{0}}$, Alice chooses an extension of $f_{0}$ to $\cV(\cE)$ that has an \textup{SNR}-satisfying restriction on $\cN(\cE,\cU)$ (that such an extension exists is argued in Step 3 of this proof);
\item since $\operatorname{codim}_{\cV(\cE')}W\leq q_{0}$, with probability at least $2^{-q_{0}}$, Bob chooses an extension of $f$ to $\cV(\cE')$ that is consistent with Alice's choice. 
\end{itemize}
In total, the probability that the verifier in $\GameBc$ accepts on the question pair $((\cE,\cA,\cU),(\cE',\cA,\cU'))$ is at least
\[
2^{-4q_{0}}\left|\mathsf{LIST}_{Q'}^{\vec{\tau}}(F)\right|^{-1}\geq \exp(-\ell_{1}^{2}),
\]
where the size of $\mathsf{LIST}_{Q'}^{\vec{\tau}}(F)$ is bounded by~\Cref{lem:dkkms_list_bound}. 

It remains to show that
\begin{equation}\label{eq:inner_soundness_excellence_goal}
\Pru{(\cE,\cU,\cE',\cU',\cA)\sim \cD_{\inn}}{\Bigl(\cE,\cU,\sigma(\cA_{[q_{0}]}),\cE',\cU'\Bigr)\text{ is excellent}}\geq \exp(-9\ell_{1}^{2}).
\end{equation}
Once that is shown, it would follow that the verifier in $\GameBc[H,k,\ell_{2},t,2^{2\ell_{1}-q_{0}}s,q_{0},\beta]$ accepts with probability at least
\[
\exp(-9\ell_{1}^{2})\cdot\exp(-\ell_{1}^{2})> 2^{-q_{2}/4},
\]
and we would be done.

\paragraph{Step 5: finding attached messages.}

Combining~\Cref{cor:Bob_answerable} and~\Cref{lem:Bob_back_to_Alice}, and the definition of good tuples, we know that
\[
\Pru{(\cE,\cU,*,Q,\cE',\cU',*)\sim \cD_{\inn}^{+}}{(\cE,\cU,Q,\cE',\cU')\text{ is good}}\geq \exp(-8\ell_{1}^{2}).
\]
To show~\Cref{eq:inner_soundness_excellence_goal}, it suffices to show that
\[
\Pru{(\cE,\cU,*,Q,\cE',\cU',*)\sim \cD_{\inn}^{+}}{(\cE,\cU,Q,\cE',\cU')\text{ is good but not excellent}}\leq t^{-1/2}.
\]
In fact, in the remainder of the proof, we fix an arbitrary Alice-packet $(\cE,\cU)$ and show that
\begin{equation}\label{eq:inner_soundness_good_not_excellent}
\Pru{(\cE,\cU,*,Q,\cE',\cU',*)\sim \cD_{\inn}^{+}}{(\cE,\cU,Q,\cE',\cU')\text{ is good but not excellent}\,\Big|\, \cE,\cU}\leq t^{-1/2}.
\end{equation}

For any $(\cE,\cU,Q)$ and any integer $k'\in \{k,k+1,\dots,3k\}$, we let $\cC(\cE,\cU,Q,k')\subseteq [t]$ denote the set of indices $c\in [t]$ such that there exists a $c$-critical Bob-packet $(\cE',\cU')$ with:
\begin{itemize}
\item $|E_{c}'|=k'$, where $E_{c}'$ is the critical block of $\cE'=(E_{1}',\dots,E_{t}')$; and
\item $(\cE,\cU,Q,\cE',\cU')$ is good but not excellent.
\end{itemize}
We also define $\cC(\cE,\cU,Q)=\bigcup_{k'\in \{k,\dots,3k\}}\cC(\cE,\cU,Q,k')$.

Roughly speaking, $\cC(\cE,\cU,Q)$ is the set of critical coordinates for which $(\cE,\cU,Q)$ receives some message that it is $F$-friendly to but not $F$-attached to. This set cannot be too large, because otherwise,~\Cref{lem:sunflower_transfer} could help find one of these messages that $(\cE,\cU,Q)$ is $F$-attached to, leading to a contradiction. More formally, for an arbitrary $(\cE,\cU,Q)$ and thinning size $k'\in \{k,\dots,3k\}$, suppose that $\left|\cC(\cE,\cU,Q,k')\right|\geq 2^{q_{0}}\exp(k^{8})$. Then by the pigeonhole principle, for some $Q'$ and on at least $\exp(k^{8})$ distinct critical indices, $(\cE,\cU,Q)$ receives messages of the form $(\cE',Q',f,W)$ with the same thinning size $k'$, and it is $F$-friendly to but not $F$-attached to them. This contradicts~\Cref{lem:sunflower_transfer}. Therefore, we conclude that
\begin{equation}\label{eq:inner_soundness_cC_bound}
\left|\cC(\cE,\cU,Q)\right|\leq \sum_{k'=k}^{3k}\left|\cC(\cE,\cU,Q,k')\right|\leq (2k+1)\cdot 2^{q_{0}}\exp(k^{8})\leq \exp(2k^{8}).
\end{equation}

\paragraph{Step 6: finishing the proof.} We are now ready to prove~\Cref{eq:inner_soundness_good_not_excellent}. When we sample $(\cE,\cU,*,Q,\cE',\cU',*)\sim \cD_{\mathsf{in}}^+$ conditioned on $(\cE,\cU)$, let $c\in [t]$ be the critical coordinate that we use as an auxiliary random variable in the sampling procedure. If we condition only on $(\cE,\cU)$,~\Cref{prop:Joint_inner_independence} shows that the critical coordinate $c$ is uniformly distributed over $[t]$, and~\Cref{eq:inner_soundness_cC_bound} could be directly used to bound the probability of being good but not excellent. In reality (\Cref{eq:inner_soundness_good_not_excellent}), we also need to condition on $Q$. The intuition is that conditioning on $Q$ does not bias the distribution of $c$ by too much. Formally, we do the following calculation.

By definitions, we have
    \begin{align*}
        &\quad\Pru{(\cE,\cU,*,Q,\cE',\cU',*)\sim \cD_{\mathsf{in}}^+}{\text{$(\cE,\cU,Q,\cE',\cU')$ is good but not excellent}\,\Big|\,\cE,\cU} \\
        & \leq \Pru{(\cE,\cU,*,*,\cA)\sim \cD_{\mathsf{in}}}{c \in \cC\Bigl(\cE,\cU,\sigma(\cA_{[q_0]})\Bigr)\,\Big|\,\cE,\cU} \tag{seeing $Q$ as $\sigma (\cA_{[q_0]})$}\\
        &=  \sum_{j=1}^{t}\Pru{(\cE,\cU,*,*,\cA)\sim \cD_{\mathsf{in}}}{ c=j \text{ and }j\in  \cC\Bigl(\cE,\cU,\sigma(\cA_{[q_0]})\Bigr)\,\Big|\,\cE,\cU}\\
        &= \frac{1}{t}\sum_{j=1}^{t}\Pru{(\cE,\cU,*,*,\cA)\sim \cD_{\mathsf{in}}}{ j\in \cC\Bigl(\cE,\cU,\sigma(\cA_{[q_0]})\Bigr)\,\Big|\,\cE,\cU\text{ and }\{c=j\}}  \tag{the marginal of $c$ is uniform}\\
        & = \frac{1}{t}\sum_{B \in \mathcal{V}(\mathcal{E})^{q_0}}\sum_{j=1}^{t} \Pru{(\cE,\cU,*,*,\cA)\sim \cD_{\mathsf{in}}}{ j\in \cC(\cE,\cU,\operatorname{span}(B))\text{ and }\rho (\cA_{[q_0]}) = B \,\Big|\,\cE,\cU\text{ and }\{c=j\}}  \\
        &\leq \frac{\exp(2k^{8})}{t} \sum_{B \in \mathcal{V}(\mathcal{E})^q_0}  \max_{j\in [t]}\Pru{(\cE,\cU,*,*,\cA)\sim \cD_{\mathsf{in}}}{ \rho (\cA_{[q_0]}) = B \,\Big|\,\cE,\cU\text{ and }\{c=j\}}. \tag{using~\Cref{eq:inner_soundness_cC_bound}}
    \end{align*}
    It suffices to prove the following upper bound $$\sum_{B \in \mathcal{V}(\mathcal{E})^{q_0}}  \max_{j\in [t]}\Pru{(\cE,\cU,*,*,\cA)\sim \cD_{\mathsf{in}} }{ \rho (\cA_{[q_0]}) = B \,\Big|\,\cE,\cU\text{ and }\{c=j\} }\leq 2^{9q_0k^2}. $$ 
    Let $N:= 2^{9q_0k^2}$. It is easy to see that the number of valid $B$ is at most $N^{t}$. Also, for every fixed $j\in [t]$ and $B\in \cV(\cE)^{q_0}$, we have $$\Pru{(\cE,\cU,*,*,\cA)\sim \cD_{\mathsf{in}} }{ \rho (\cA_{[q_0]}) = B \,\Big|\,\cE,\cU\text{ and }\{c=j\} } \leq N^{-t+1}$$
    since the projections of $\rho (\cA_{[q_0]})$ on the non-critical blocks are independent and uniformly random. Combining the two facts, we obtain the desired estimate. 
\end{proof}

\section{Acknowledgments}
We thank Mark Braverman, Irit Dinur, Guy Kindler, Subhash Khot, Ran Raz, Muli Safra for several discussions and collaborations that led to this work.
\bibliographystyle{alpha}
\bibliography{reference}
\appendix

\section{Basic Linear Algebra}\label{sec:linear_algebra}

In this appendix, we collect some basic linear algebraic facts about bilinear maps. Recall from~\Cref{subsec:general_notation} that given an $\bF_{2}$-vector space $V$ and a bilinear map $f:V\otimes V\rightarrow \bF_{2}$, we use $T_{f}$ to denote the associated linear map from $V$ to $V^*$. The rank of $f$ is defined to be the rank of $T_{f}$. The first proposition below analyzes how the rank of $f$ changes after being restricted to a subspace $U\otimes U\subseteq V\otimes V$.

\begin{proposition}\label{prop:rank_inequality}
Let $V$ be a finite dimensional $\bF_{2}$-vector space and let $U\subseteq V$ be a linear subspace. For any bilinear form $f:V\otimes V\rightarrow \bF_{2}$, we have 
\[\mathrm{rank}(f)- 2 \operatorname{codim} U\leq \mathrm{rank}(f|_{U\otimes U})\leq \mathrm{rank}(f).\]
\end{proposition}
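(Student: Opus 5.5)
The plan is to reduce the statement to elementary matrix rank facts by choosing coordinates. Fix a basis of $V$ that extends a basis of $U$: let $u_1,\dots,u_m$ be a basis of $U$ and complete it to a basis $u_1,\dots,u_n$ of $V$, so that $\operatorname{codim} U = n-m$. Let $M\in\bF_{2}^{n\times n}$ be the Gram matrix $M_{ij}=f(u_i\otimes u_j)$. Then $\operatorname{rank}(f)=\operatorname{rank}(T_f)=\operatorname{rank}(M)$, because in these bases (with the dual basis on $V^*$) the matrix of $T_f$ is $M$ or its transpose. Likewise, $f|_{U\otimes U}$ has Gram matrix equal to the top-left $m\times m$ block $M_{UU}$ of $M$, so $\operatorname{rank}(f|_{U\otimes U})=\operatorname{rank}(M_{UU})$.

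The upper bound is immediate, since the rank of a submatrix is at most the rank of the whole matrix. More intrinsically, $T_{f|_{U\otimes U}}$ is the composition of the restriction $T_f|_U\colon U\to V^*$ with the restriction map $V^*\to U^*$, and composing with linear maps cannot increase rank.

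For the lower bound, partition $M$ into blocks
\[
M=\begin{pmatrix} M_{UU} & B\\ C & D\end{pmatrix},
\]
where $B$ has $n-m$ columns and $C$ has $n-m$ rows. Rank is subadditive, and changing a matrix in $r$ rows or in $r$ columns changes its rank by at most $r$. Replacing the last $n-m$ rows by zero gives a matrix of rank at least $\operatorname{rank}(M)-(n-m)$. Then replacing the last $n-m$ columns by zero gives the matrix $\begin{pmatrix} M_{UU}&0\\0&0\end{pmatrix}$, whose rank is at least $\operatorname{rank}(M)-2(n-m)$. This rank equals $\operatorname{rank}(M_{UU})$, which is the claimed inequality. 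Equivalently, the same argument can be phrased as $\operatorname{rank}(M)\le \operatorname{rank}(M_{UU})+\operatorname{rank}(B)+\operatorname{rank}(\begin{smallmatrix}C & D\end{smallmatrix})$, together with $\operatorname{rank}(B)\le n-m$ and $\operatorname{rank}(C\ D)\le n-m$.

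There is no real obstacle. The only point needing care is the identification of $\operatorname{rank}(T_f)$ with the matrix rank, with the same identification applied to the restriction; no symmetry of $f$ is needed.
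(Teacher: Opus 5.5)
Your proposal is correct and is essentially the paper's argument written in coordinates. The paper factors $T_{f|_{U\otimes U}}=\iota^{\dagger}\circ T_f\circ\iota$, with $\iota\colon U\hookrightarrow V$, and notes that composing with $\iota$ on the right and with $\iota^{\dagger}$ on the left each lowers the rank by at most $\operatorname{codim} U$. That is exactly your step of zeroing out $n-m$ columns and $n-m$ rows of the Gram matrix.
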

\begin{proof}
Let $\iota:U\hookrightarrow V$ be the embedding map, and let $g=f|_{U\otimes U}$. Then we have the commutative diagram
\[
\begin{tikzcd}
U \arrow[r, "\iota"]
  \arrow[rrr, "T_g", bend right=30]
& V \arrow[r, "T_f"]
& V^* \arrow[r, "\iota^\dagger"]
& U^*.
\end{tikzcd}
\]
Both composing with $\iota$ on the right and composing with $\iota^{\dagger}$ on the left reduce the rank by at most $\operatorname{codim} U$, so we have
\[
\operatorname{rank}(T_{f})-2\operatorname{codim}U\leq \operatorname{rank}(T_{g})\leq \operatorname{rank}(T_{f}),
\]
as desired.
\end{proof}

The next two corollaries of~\Cref{prop:rank_inequality} are both used in~\Cref{lem:Game1b_soundness_low_rank}.

\begin{corollary}\label{cor:full_rank_restriction}
Let $V$ be a finite dimensional $\bF_{2}$-vector space and let $U\subseteq V$ be a linear subspace. Suppose $f:V\otimes V\rightarrow \bF_{2}$ is a bilinear form such that $\mathrm{rank}(f|_{U\otimes U})=\mathrm{rank}(f)$. Then for any $v\in V$, there exists $u\in U$ such that
\[
f(v\otimes x)=f(u\otimes x)\qquad\text{for all }x\in V.
\]
In other words, letting $g=f|_{U\otimes U}$, we have $h|_{U}\in \operatorname{image}(T_{g})$ for each $h\in \operatorname{image}(T_{f})$.
\end{corollary}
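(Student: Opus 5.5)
The plan is to use the factorization $T_g=\iota^\dagger\circ T_f\circ\iota$ from the proof of \Cref{prop:rank_inequality}, where $\iota:U\hookrightarrow V$ is the inclusion and $g=f|_{U\otimes U}$. The hypothesis $\operatorname{rank}(T_g)=\operatorname{rank}(T_f)$ should force each of the two maps in this composite to preserve rank. That rigidity is exactly what lets us replace an arbitrary $v\in V$ by some $u\in U$.

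\textbf{Step 1 (first rank equality).} Write $r=\operatorname{rank}(T_f)$. Since
\[
r=\operatorname{rank}(T_g)\le \operatorname{rank}(T_f\circ\iota)\le \operatorname{rank}(T_f)=r,
\]
we get $\operatorname{rank}(T_f\circ\iota)=r$. The image of $T_f\circ\iota$ is $T_f(U)$, which sits inside $\operatorname{image}(T_f)$ and has the same dimension. Hence $T_f(U)=\operatorname{image}(T_f)$.

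\textbf{Step 2 (conclusion).} Given $v\in V$, the functional $T_f(v)$ lies in $\operatorname{image}(T_f)=T_f(U)$. So there is $u\in U$ with $T_f(u)=T_f(v)$, which is precisely the identity $f(v\otimes x)=f(u\otimes x)$ for all $x\in V$.

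\textbf{Step 3 (reformulation).} Take any $h\in\operatorname{image}(T_f)$, say $h=T_f(v)$, and choose $u\in U$ as in Step 2. Then $h=T_f(u)$, so
\[
h|_U=\iota^\dagger(T_f(u))=T_g(u)\in\operatorname{image}(T_g).
\]

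There is no genuine obstacle here. The only point that needs care is that the statement applies $T_f$ to the first tensor factor, so we must use the composite $T_f\circ\iota$ (restricting the first argument) rather than $\iota^\dagger\circ T_f$. The rank bound from \Cref{prop:rank_inequality} handles exactly this. The other equality, $\operatorname{rank}(\iota^\dagger|_{\operatorname{image}T_f})=r$, is not needed.
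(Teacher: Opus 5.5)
Your proof is correct and follows the paper's argument: both use the factorization $T_g=\iota^\dagger\circ T_f\circ\iota$ to deduce $\operatorname{image}(T_f\circ\iota)=\operatorname{image}(T_f)$, and the conclusion follows immediately. The only difference is that you write out the rank sandwich and the reformulation $h|_U=T_g(u)$, which the paper leaves implicit.
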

\begin{proof}
Let $\iota:U\hookrightarrow V$ be the embedding map. Since $\mathrm{rank}(f|_{U\otimes U})=\mathrm{rank}(f)$, it follows from the proof of~\Cref{prop:rank_inequality} that
\[\operatorname{image}( T_{f}\circ \iota)=\operatorname{image}(T_{f}),\]
which is exactly the desired conclusion.
\end{proof}

\begin{corollary}\label{cor:full_rank_core}
For any positive integers $n,r$ and any symmetric bilinear form $f:\bF_{2}^{n}\otimes \bF_{2}^{n}\rightarrow \bF_{2}$ with rank $r$, there exists an $r$-element subset $S\subseteq [n]$ such that the restriction of $f$ to $\bF_{2}^{S}\otimes \bF_{2}^{S}$ has rank $r$.
\end{corollary}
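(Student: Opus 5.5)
Write $M\in\bF_{2}^{n\times n}$ for the symmetric matrix of $f$, so that $\operatorname{rank}(f)=\operatorname{rank}(M)=r$. The plan is to show that a symmetric matrix of rank $r$ has a nonsingular principal $r\times r$ submatrix. The restriction of $f$ to $\bF_{2}^{S}\otimes \bF_{2}^{S}$ is represented by $M_{S,S}$. So the statement reduces to finding an $r$-element $S\subseteq[n]$ with $M_{S,S}$ invertible.

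\textbf{Step 1.} Choose $S$ so that the rows of $M$ indexed by $S$ are linearly independent; this is possible because the row rank is $r$. Every other row of $M$ is then a linear combination of these. In matrix form, $M=C\,M_{S,[n]}$ for some $n\times r$ matrix $C$ whose $S$-rows form the identity.

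\textbf{Step 2.} By symmetry, the columns indexed by $S$ also span the column space, since $M^{\top}=M$. Restricting $M=C\,M_{S,[n]}$ to the columns in $S$ gives $M_{[n],S}=C\,M_{S,S}$. The left-hand side has rank $r$, because the columns indexed by $S$ are the transposes of the independent rows indexed by $S$. Hence $\operatorname{rank}(M_{S,S})\ge r$, so $M_{S,S}$ is nonsingular.

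\textbf{Alternative.} One can also argue via \Cref{prop:rank_inequality}-style maps: $T_f\circ\iota_S$ has full image when $S$ indexes a basis of the column space. The only point needing care is Step 2, the passage from independent rows to an invertible principal block, and the factorization above handles it without any characteristic-2 issues: nowhere do we divide or use alternating properties. No part of this is a real obstacle; the whole argument is standard linear algebra.
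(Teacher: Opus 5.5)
Your proof is correct and is essentially the paper's argument written in matrix form. The paper chooses $S$ with $\bF_{2}^{S}+\ker(T_{f})=\bF_{2}^{n}$, i.e.\ $S$ indexes a basis of the column space of $M$, and then uses the symmetry $(T_{f}\circ\iota)^{\dagger}=\iota^{\dagger}\circ T_{f}$; your factorization $M_{[n],S}=C\,M_{S,S}$ spells out the final step more explicitly than the paper's brief ``it suffices to show'' reduction.
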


\begin{proof}
Since $\dim\ker(T_{f})=n-r$, there exists an $r$-element subset $S\subseteq [n]$ such that
\begin{equation}\label{eq:full_rank_restriction}
\bF_{2}^{S}+\ker(T_{f})=\bF_{2}^{n}.
\end{equation}
Let $\iota:\bF_{2}^{S}\rightarrow \bF_{2}^{n}$ be the embedding map. To show that the restriction of $f$ to $\bF_{2}^{S}\otimes \bF_{2}^{S}$ has rank $r$, it suffices to show that
\[
\operatorname{rank}(\iota^{\dagger}\circ T_{f}\circ \iota)=\operatorname{rank}(T_{f}).
\]
The symmetry of $f$ implies
\[
(T_{f}\circ \iota)^{\dagger}=\iota^{\dagger}\circ T_{f},
\]
so it suffices to show that $\operatorname{rank}(T_{f}\circ \iota)=\operatorname{rank}(T_{f})$. But this is immediate due to \Cref{eq:full_rank_restriction}.
\end{proof}

The next lemma is a standard decomposition of bilinear forms into symmetric rank-1 forms. It is used also in~\Cref{lem:Game1b_soundness_low_rank}.

\begin{lemma}\label{lem:rank1_decomposition}
Let $V$ be a finite dimensional $\bF_{2}$-vector space, and let $r$ be a positive integer. For any symmetric bilinear form $f:V\otimes V\rightarrow \bF_{2}$ with rank $r$, there exist a nonnegative integer $k\leq 3r/2$ and $k$ vectors $v_{1},\dots,v_{k}\in V$ such that
\[
f(x\otimes y)=\sum_{i=1}^{k}f(x\otimes v_{i})f(y\otimes v_{i})\qquad\text{for all } x,y\in V.
\]
In other words, any symmetric bilinear form of rank $r$ has a tensor decomposition (see~\Cref{def:tensor_decomposition}) into at most $3r/2$ linear functionals. 
\end{lemma}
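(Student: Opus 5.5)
We will prove \Cref{lem:rank1_decomposition} by induction on $r$, peeling off either one or two dimensions of the nondegenerate part of $f$ at a time. First pass to the nondegenerate quotient: let $K=\ker(T_f)$. Because $f$ is symmetric, $f$ descends to a nondegenerate symmetric form $\bar f$ on $V/K$, which has dimension $r$. Any decomposition $\bar f(\bar x\otimes\bar y)=\sum_i \bar f(\bar x\otimes \bar v_i)\bar f(\bar y\otimes\bar v_i)$ lifts to the required identity for $f$, taking $v_i$ to be an arbitrary lift of $\bar v_i$. It therefore suffices to treat the case where $f$ is nondegenerate on $V$ with $\dim V=r$. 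The case $r=0$ is trivial, with $k=0$.

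For the inductive step we split according to whether $f$ is alternating. In the \textbf{non-alternating case} there is a $v$ with $f(v\otimes v)=1$. Define
\[
f'(x\otimes y)=f(x\otimes y)+f(x\otimes v)f(y\otimes v).
\]
This form is symmetric and satisfies $f'(v\otimes\cdot)=0$. On $v^{\perp}$ it agrees with $f$, and $V=\operatorname{span}(v)\oplus v^{\perp}$ because $f(v\otimes v)=1$. Hence $f'$ has rank $r-1$. We apply the induction hypothesis to $f'$ and obtain vectors $v_i$ with $f'=\sum_i f'(\cdot\otimes v_i)f'(\cdot\otimes v_i)$. To express this in terms of $f$ itself, choose the $v_i$ inside $v^{\perp}$; this is possible because $\operatorname{image}(T_{f'})$ is realized by $v^{\perp}$. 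For $v_i\in v^\perp$ we have $f'(x\otimes v_i)=f(x\otimes v_i)$, so the decomposition of $f'$ is also a decomposition in terms of $f$. This step uses $1$ vector to reduce the rank by $1$.

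In the \textbf{alternating case} we have $f(x\otimes x)=0$ for all $x$. Pick $a,b$ with $f(a\otimes b)=1$. The restriction of $f$ to $P=\operatorname{span}(a,b)$ is the hyperbolic plane, and $V=P\oplus P^{\perp}$. Over $\bF_2$, the form $xy'+x'y$ on $P$ equals
\[
\ell_a\ell_a+\ell_b\ell_b+\ell_{a+b}\ell_{a+b},
\]
where $\ell_u=f(\cdot\otimes u)$. To verify this, write $x=\alpha a+\beta b$ and $y=\alpha'a+\beta'b$. Then $\ell_a(x)=\beta$, $\ell_b(x)=\alpha$ and $\ell_{a+b}(x)=\alpha+\beta$. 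The sum is $\beta\beta'+\alpha\alpha'+(\alpha+\beta)(\alpha'+\beta')=\alpha\beta'+\beta\alpha'$, which is the required form.

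The functionals $\ell_a,\ell_b,\ell_{a+b}$ all vanish on $P^\perp$. Therefore $f-(\ell_a^{\otimes2}+\ell_b^{\otimes2}+\ell_{a+b}^{\otimes2})$ is supported on $P^\perp$, where it equals $f|_{P^\perp}$, a form of rank $r-2$. We apply induction there, choosing the vectors inside $P^\perp$ so that $f|_{P^\perp}(\cdot\otimes v_i)$ agrees with $f(\cdot\otimes v_i)$ on all of $V$. This step uses $3$ vectors to reduce the rank by $2$.

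In both cases the total count of vectors is at most $3r/2$. The main obstacle is the bookkeeping that the functionals at each stage are genuinely of the form $f(\cdot\otimes v_i)$ for the original $f$, and not of the modified or restricted form. Choosing every new vector inside the orthogonal complement of the vectors already used handles this, because on that complement the modified form agrees with $f$. If the complement bookkeeping is awkward, an alternative is to first fix an orthogonal splitting $V=\bigoplus$ (lines)$\oplus\bigoplus$(hyperbolic planes), which exists for nondegenerate symmetric forms over $\bF_2$, and then decompose each summand separately.
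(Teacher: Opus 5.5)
Your proof is correct and follows essentially the same route as the paper: induction on the rank, split into the non-alternating case (subtract $f(\cdot\otimes v)f(\cdot\otimes v)$ for some $v$ with $f(v\otimes v)=1$, dropping the rank by one) and the alternating case (subtract the three squares for $a$, $b$, $a+b$, dropping the rank by two), with your complements $v^{\perp}$ and $P^{\perp}$ playing exactly the role of the images of the paper's projection maps $\pi$. Your explicit choice of the inductive vectors inside these complements, so that the functionals of the modified form coincide with $f(\cdot\otimes v_i)$, spells out a step the paper leaves implicit, namely replacing each $v_i$ by $\pi(v_i)$.
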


\begin{proof}
We proceed by induction on $r$. The conclusion clearly holds for $r=0$. In the rest of the proof, assume $r\geq 1$ and the conclusion is true for all smaller values of $r$. Consider the following two cases.

\textbf{Case 1:} there exists a vector $v\in V$ such that $f(v\otimes v)=1$. Define a subspace
\[
U=\{u\in V\mid f(u\otimes v)=0\}\subseteq V.
\]
It is clear that $\mathrm{ker}(T_{f})\subseteq U$, and since $v\notin U$ we also have $\dim U\leq \dim V-1$. Therefore,
\begin{align*}
\mathrm{rank}(f|_{U\otimes U})\leq \dim U-\dim \ker(T_{f})\leq \dim V-\dim \ker(T_{f})-1=\operatorname{rank}(f)-1.
\end{align*}
Define a linear map $\pi:V\rightarrow V$ by letting $\pi(x)=x-f(x\otimes v)v$ for all $x\in V$. It is clear that
\[
f(\pi(x)\otimes v)=f(x\otimes v)-f(x\otimes v)^{2}=0\qquad\text{for all }x\in V,
\]
and hence the image of $\pi$ is contained in $U$. Furthermore, for each $x,y\in V$ we have
\begin{align}
f\big(\pi(x)\otimes \pi(y)\big)&=f(x\otimes y)-2f(x\otimes v)f(y\otimes v)+f(x\otimes v)f(y\otimes v)f(v\otimes v)\nonumber\\
&=f(x\otimes y)-f(x\otimes v)f(y\otimes v).\label{eq:3r_over_2_case_1a}
\end{align}
Applying the induction hypothesis to the bilinear form $f\big(\pi(\cdot)\otimes \pi(\cdot)\big)$ (which has rank at most $r-1$ because $\mathrm{image}(\pi)\subseteq U$ and $\mathrm{rank}(f|_{U\otimes U})\leq r-1$), we get $k$ vectors $v_{1},\dots,v_{k}\in V$ with $k\leq 3(r-1)/2$ such that
\begin{equation}\label{eq:3r_over_2_case_1b}
f\big(\pi(x)\otimes \pi(y)\big)=\sum_{i=1}^{k}f(x\otimes v_{i})f(y\otimes v_{i})\qquad\text{for all } x,y\in V.
\end{equation}
Combining \Cref{eq:3r_over_2_case_1a,eq:3r_over_2_case_1b}, we get
\[
f(x\otimes y)=f(x\otimes v)f(y\otimes v)+\sum_{i=1}^{k}f(x\otimes v_{i})f(y\otimes v_{i})\qquad\text{for all } x,y\in V,
\]
as desired (noting that $k+1\leq 3r/2$).

\textbf{Case 2:} $f(v\otimes v)=0$ holds for all $v\in V$. Since $\mathrm{rank}(f)\geq 1$, there exist $v,w\in V$ such that $f(v\otimes w)=1$. It follows from the case assumption that $v$ and $w$ are linearly independent. Define a subspace
\[
U=\{u\in V\mid f(u\otimes v)=f(u\otimes w)=0\}\subseteq V.
\]
The rest of the argument is entirely analogous to Case 1. Since $U\cap \mathrm{span}\{v,w\}=\{0\}$ we have $\dim U\leq \dim V-2$. Due to $\ker(T_{f})\subseteq U$, we then have
$\mathrm{rank}(f|_{U\otimes U})\leq r-2$. Define a linear map $\pi:V\rightarrow V$ by letting  
\[
\pi(x)=x-f(x\otimes v)w-f(x\otimes w)v\qquad\text{for all }x\in V.
\]
Similarly to Case 1, it is easy to verify that
\[
f(\pi(x)\otimes v)=f(\pi(x)\otimes w)=0\qquad\text{for all }x\in V,
\]
and hence the image of $\pi$ is contained in $U$. Furthermore, for each $x,y\in V$ one gets
\begin{align*}
&\quad f\big(\pi(x)\otimes \pi(y)\big)\\
&=f(x\otimes y)-f(x\otimes v)f(y\otimes w)-f(x\otimes w)f(y\otimes v)\\
&=f(x\otimes y)-f(x\otimes v)f(y\otimes v)-f(x\otimes w)f(y\otimes w)-f(x\otimes (v+w))f(y\otimes (v+w)).
\end{align*}
As in Case 1, we then apply the induction hypothesis to decompose $f\big(\pi(x)\otimes \pi(y)\big)$ into the sum of at most $3(r-2)/2$ terms of the form $f(x\otimes v_{i})f(y\otimes v_{i})$. Thus $f(x\otimes y)$ can be written as the sum of at most $3(r-2)/2+3=3r/2$ such terms, as desired.
\end{proof}

The next lemma is the key to the success of the ``local tensor tests'' in our middle PCP, and it is used in~\Cref{lem:invoked_middle}. In words, it asserts that if a bilinear form $f$ has rank at least $2+2\codim W$, then (1) restricting it to $W\otimes W$ drops the rank by at most $2\codim W$, and (2) with high probability, restricting it to a random subspace $U\subseteq W$ of sufficiently large dimension does not drop the rank to at most $1$.

\begin{lemma}\label{lem:rk1bound}
Let $V$ be a finite dimensional $\bF_{2}$-vector space, and let $Q,W\subseteq V$ be linear subspaces such that $Q\subseteq W$. Denote $q=\dim Q$. Let $\ell$ be a positive integer such that $q+2\leq \ell\leq \dim W$. Then for any bilinear form $f:V\otimes V\rightarrow \bF_{2}$ such that $\operatorname{rank}(f)\geq 2+2\operatorname{codim} W$, we have
\[
\Pru{U\in\mathsf{Gr}(W,\ell)}{\operatorname{rank}(f|_{U\otimes U})\leq 1\,\big|\,Q\subseteq U}\leq 2^{4-(\ell-q)/2}.
\]
\end{lemma}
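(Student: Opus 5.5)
First I reduce to $W$. By \Cref{prop:rank_inequality} applied to $W\subseteq V$, the form $g=f|_{W\otimes W}$ has $\operatorname{rank}(g)\geq \operatorname{rank}(f)-2\operatorname{codim}W\geq 2$. Since $U\subseteq W$ gives $f|_{U\otimes U}=g|_{U\otimes U}$, it is enough to show: if $g$ is a bilinear form on $W$ of rank at least $2$, then a uniform $U\in\mathsf{Gr}(W,\ell)$ with $Q\subseteq U$ has $\operatorname{rank}(g|_{U\otimes U})\leq 1$ with probability at most $2^{4-(\ell-q)/2}$.

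Next I would pick a $2\times 2$ witness. Because $\operatorname{rank}(g)\geq 2$, there are $x_1,x_2,y_1,y_2\in W$ with $\det\bigl(g(x_i\otimes y_j)\bigr)_{i,j}=1$. The condition $\operatorname{rank}(g|_{U\otimes U})\leq 1$ means every $2\times 2$ minor $g(a\otimes c)g(b\otimes d)+g(a\otimes d)g(b\otimes c)$ with $a,b,c,d\in U$ is $0$. To sample $U$, I fix a basis of $Q$ and add $m=\ell-q$ independent uniform vectors $z_1,\dots,z_m\in W$, conditioned on the span having dimension $\ell$. The unconditioned law conditioned on $Q\subseteq U$ differs from the target by the failure probability of independence, which is $\leq 2^{\ell-\dim W}$. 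I can assume $\dim W$ is large (else the bound exceeds $1$ once $\ell\le\dim W$ and $\ell-q\le 8$) or simply absorb this via a factor $2$ in the final constant. Then I split $z_1,\dots,z_m$ into $\lfloor m/2\rfloor$ disjoint pairs $(z_{2i-1},z_{2i})$.

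The key step is to show that each pair has constant probability of certifying rank $\geq 2$, namely $P(z,z')=g(z\otimes z)g(z'\otimes z')+g(z\otimes z')g(z'\otimes z)=1$ (the minor with $a=c=z$, $b=d=z'$), or a suitable variant such as the minor $(a,b,c,d)=(z,z',z',z)$. The quantity $P$ is a degree-$4$ polynomial over $\mathbb{F}_2$ in the coordinates of $(z,z')$. I would argue it is not identically zero when $\operatorname{rank}(g)\geq 2$; if $P\equiv 0$, I would instead consider the polynomial $\sum$ of minors using the four vectors $z_{2i-1},z_{2i}$ and their sum. A nonzero degree-$d$ polynomial over $\mathbb{F}_2$ equals $1$ with probability at least $2^{-d}$ (the Schwartz--Zippel analogue). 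This gives a per-pair success probability $p\geq 2^{-4}$, and independence across pairs yields failure probability $(1-2^{-4})^{\lfloor m/2\rfloor}$.

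That base is too weak for the stated bound $2^{4-m/2}$, so I would improve the per-pair failure probability to at most $1/2$ by a direct computation. Writing $g=$ symmetric-plus-alternating parts, I would show that among $(z,z')$ the minor is a non-constant quadratic form in $z'$ for most $z$, hence equal to $1$ with probability close to $1/2$. This step is the main obstacle: finding the exact minor and the constant so that each pair fails with probability $\leq 1/2$, up to constants absorbed by the $2^{4}$ prefactor. If that route fails, I would instead use an alternative structural argument: pick $z_1$ with $T_g(z_1)\neq 0$ (probability $\ge 1/2$), then later vectors must land in a codimension-$\geq 1$ affine condition for the rank to stay $\le 1$, losing a factor $1/2$ per pair.
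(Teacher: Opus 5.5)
Your reduction to $g=f|_{W\otimes W}$ with $r=\operatorname{rank}(g)\geq 2$ is exactly the paper's first step. The pairing argument, however, has a genuine gap: it cannot reach the exponent $(\ell-q)/2$, whichever minor you choose.

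Take $a,b,c,d\in\operatorname{span}(z,z')$, with $z,z'$ independent (otherwise everything vanishes). Writing $(a,b)$ and $(c,d)$ in terms of $(z,z')$ via matrices $P,P'$ shows that the minor equals $\det(P)\det(P')\,D(z,z')$. Here $D(z,z')=g(z\otimes z)g(z'\otimes z')+g(z\otimes z')g(z'\otimes z)$ is the Gram determinant. So every certificate available inside a pair, including your sums of minors, is either $0$ or $D$.

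Now take the rank-$2$ form $g(x\otimes y)=x_1y_2+x_2y_1$. Then $g(z\otimes z)=0$ and $D(z,z')=z_1z_2'+z_2z_1'$, which equals $1$ with probability exactly $3/8$. So the event that no pair certifies rank $\geq 2$ has probability $(5/8)^{\lfloor m/2\rfloor}$, and this is the best bound your method can produce. Since $(5/8)^p>2^{4-p}$ for $p\geq 13$, it does not yield the lemma once $\ell-q\geq 26$. This holds even though the true probability in this example is $O(2^{-(\ell-q)})$.

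So the hoped-for per-pair failure probability $\leq 1/2$ is false for small rank, not merely hard to compute. Your closing fallback sentence is where a correct argument would have to live, and it is not carried out.

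What is missing is to use the interaction between many random vectors at once, so that rank $\leq 1$ forces a whole block of them into a proper subspace. The paper proceeds as follows.
\begin{itemize}
\item It sets $\ell'=\lfloor(\ell-q)/2\rfloor$, samples independent $U_1,U_2\in\mathsf{Gr}(W,\ell')$, and then takes a uniform $\ell$-dimensional $U\supseteq Q+U_1+U_2$. This $U$ is uniform among $U\supseteq Q$, which also sidesteps your conditioning on independence.
\item It uses $\operatorname{rank}(g|_{U\otimes U})\geq\operatorname{rank}(g|_{U_1\otimes U_2})$.
\item Since $\operatorname{rank}(T_g\circ\iota_{U_1})=\dim(U_1+\ker T_g)-\dim\ker T_g$, this rank is at most $1$ only if $U_1\subseteq K$ for one of the $2^r-1$ subspaces $K\supseteq\ker T_g$ of codimension $r-1\geq 1$ in $W$. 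A union bound gives probability at most $2^r\cdot 2^{-(r-1)\ell'}\leq 2^{2-\ell'}$.
\item Conditioned on this rank being at least $2$, the same argument for $U_2$ against $\iota_{U_1}^{\dagger}\circ T_g^{\dagger}$ gives another factor $2^{2-\ell'}$.
\end{itemize}
The total is $2^{3-\ell'}\leq 2^{4-(\ell-q)/2}$.

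Separately, your estimate $2^{\ell-\dim W}$ for the dependence probability is vacuous when $\ell=\dim W$. The correct statement is that conditioning on independence inflates probabilities by at most $1/\prod_{j\geq 1}(1-2^{-j})<4$.
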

\begin{proof}
Let $g=f|_{W\otimes W}$. By~\Cref{prop:rank_inequality}, the assumption that $\operatorname{rank}(f)\geq 2+2\operatorname{codim} W$ implies $\operatorname{rank}(g)\geq 2$. It suffices to work with the restricted map $g$ in the rest of the proof.

Let $\ell'=\lfloor (\ell-q)/2\rfloor\geq 1$. To pick a uniformly random $U\in \mathsf{Gr}(W,\ell)$ that contains $Q$, we first pick uniformly random subspaces $U_{1},U_{2}\in\mathsf{Gr}(W,\ell')$ independently, and then pick a uniformly random $U\in \mathsf{Gr}(W,\ell)$ conditioned on $Q,U_{1},U_{2}\subseteq U$. Since we always have \[\operatorname{rank}(f|_{U\otimes U})=\operatorname{rank}(g|_{U\otimes U})\geq \operatorname{rank}(g|_{U_{1}\otimes U_{2}}),\]
it suffices to show that
\[
\Pru{U_{1},U_{2}\in\mathsf{Gr}(W,\ell')}{\operatorname{rank}(g|_{U_{1}\otimes U_{2}})\leq 1\big.}\leq 2^{3-\ell'}.
\]

As in the proof of~\Cref{prop:rank_inequality}, we let $\iota_{U_{1}}:U_{1}\hookrightarrow W$ and $\iota_{U_{2}}:U_{2}\hookrightarrow W$ be the two canonically defined embedding maps. Then we have
\[
\operatorname{rank}(g|_{U_{1}\otimes U_{2}})=\operatorname{rank}\big(\iota_{U_{2}}^{\dagger}\circ T_{g}\circ \iota_{U_{1}}\big).
\]
Note that
\[
\operatorname{rank}(T_{g}\circ \iota_{U_{1}})=\dim (U_{1}+\ker(T_{g}))-\dim \ker(T_{g}).
\]
Therefore, denoting $w=\dim W$ and $r=\operatorname{rank}(g)=w-\dim \ker(T_{g})$, we can calculate
\begin{align*}
&\quad \Pru{U_{1}\in \mathsf{Gr}(W,\ell')}{\operatorname{rank}(T_{g}\circ \iota_{U_{1}})\leq 1\big.}\\
&=\Pru{U_{1}\in \mathsf{Gr}(W,\ell')}{\dim (U_{1}+\ker(T_{g}))-\dim \ker(T_{g})\leq 1\big.}\\
&\leq \sum_{K\in \mathsf{Gr}(W,w-r+1),\;K\supseteq \ker(T_{g})}\Pru{U_{1}\in \mathsf{Gr}(W,\ell')}{U_{1}\subseteq K\big.}\\
&\leq (2^{r}-1) \cdot\qbinom{w-r+1}{\ell'} \qbinom{w}{\ell'}^{-1}\\
&=(2^{r}-1)\cdot\prod_{i=0}^{\ell'-1}\frac{2^{w-r-i+1}-1}{2^{w-i}-1}\\
&\leq 2^{r}\cdot (2^{-r+1})^{\ell'}\leq 2^{-\ell'+2}.\tag{using $r\geq 2$ and $\ell'\geq 1$}
\end{align*}
Similarly, since
\[
\operatorname{rank}\big(\iota_{U_{2}}^{\dagger}\circ T_{g}\circ \iota_{U_{1}}\big)=\dim\big(U_{2}+\ker\big(\iota_{U_{1}}^{\dagger}\circ T_{g}^{\dagger}\big)\big)-\dim \ker\big(\iota_{U_{1}}^{\dagger}\circ T_{g}^{\dagger}\big),
\]
an analogous calculation shows
\[
\Pru{U_{2}\in \mathsf{Gr}(W,\ell')}{\operatorname{rank}\big(\iota_{U_{2}}^{\dagger}\circ T_{g}\circ \iota_{U_{1}}\big)\leq 1}\leq 2^{-\ell'+2}\qquad\text{whenever}\qquad \dim \ker\big(\iota_{U_{1}}^{\dagger}\circ T_{g}^{\dagger}\big)\leq \dim W-2.
\]
But $\dim \ker\big(\iota_{U_{1}}^{\dagger}\circ T_{g}^{\dagger}\big)\leq \dim W-2$ happens exactly when $\operatorname{rank}(T_{g}\circ \iota_{U_{1}})\geq 2$, the probability of which has already been shown to be at least $1-2^{-\ell'+2}$. Therefore, we conclude that
\[
\Pru{U_{1},U_{2}\in \mathsf{Gr}(W,\ell')}{\operatorname{rank}\big(\iota_{U_{2}}^{\dagger}\circ T_{g}\circ \iota_{U_{1}}\big)\leq 1}\leq 2\cdot 2^{-\ell'+2}=2^{-\ell'+3},
\]
as desired.
\end{proof}

\section{From Sub-$4$-to-$1$ to Exact $4$-to-$1$ Games}\label{sec:exact_4_to_1}

In this appendix, we show that the hardness of approximating sub-$4$-to-$1$ games implies the exact $4$-to-$1$ games theorem.

\begin{proof}[Proof of~\Cref{thm:main} assuming~\Cref{thm:sub-main}]
Fix an arbitrary parameter $\varepsilon\in (0,1)$ and a positive integer $r$. Write $k=\lceil 4/\varepsilon\rceil$ and $r'=kr$. We will prove~\Cref{thm:main} by constructing a polynomial-time reduction from $\GapSdone{4}{1}_{r}(1,\varepsilon/2)$ to $\Gapdone{4}{1}_{r'}(1,\varepsilon)$. 

Given a sub-$4$-to-$1$ game instance $\Psi$ with underlying graph $(L\sqcup R,E)$, alphabets $\Sigma_{L},\Sigma_{R}$ and constraints $\{\Phi_{e}\subseteq \Sigma_{L}\times \Sigma_{R}\}_{e\in E}$, we construct a $4$-to-$1$ game instance $\Psi'$ as follows:
\begin{enumerate}
\item Let $\Psi'$ have the same constraint graph $(L\sqcup R,E)$ as $\Psi$, except that each edge in $E$ is replaced by $k$ parallel copies.
\item The left-alphabet of $\Psi'$ is $\Sigma_{L}'=\Sigma_{L}\times (\bZ/k\bZ)$, and the right-alphabet of $\Psi'$ is $\Sigma_{R}'=\Sigma_{R}\times (\bZ/k\bZ)$.
\item For each edge $e=(u,v)\in E$, according to~\Cref{def:sub4to1_constraint}, there exists a partition $\Sigma_{L}=\Sigma_{L,1}\sqcup \dots\sqcup \Sigma_{L,t}$ into sets of size $4$ and a partition $\Sigma_{R}=\Sigma_{R,1}\sqcup \dots\sqcup \Sigma_{R,t}$ into singleton sets such that
\begin{equation}\label{eq:sub4to1_Phi_e}
\Phi_{e}=\bigcup_{i=1}^{t}\Gamma_{L,i}\times \Gamma_{R,i},\qquad\textup{for some subsets }\Gamma_{L,i}\subseteq \Sigma_{L,i}\textup{ and }\Gamma_{R,i}\subseteq \Sigma_{R,i}.
\end{equation}
For each $j,\ell\in \bZ/k\bZ$ and $i\in [t]$, we let
\begin{align}
\Sigma^{(\ell)}_{L,i,j}&=\bigl(\Gamma_{L,i}\times \{j\}\bigr)\cup\bigl((\Sigma_{L,i}\setminus \Gamma_{L,i})\times \{j+\ell\}\bigr)\subseteq \Sigma_{L}\times (\bZ/k\bZ)=\Sigma_{L}',\label{eq:sub4to1_SigmaL}\\
\Sigma^{(\ell)}_{R,i,j}&=\bigl(\Gamma_{R,i}\times \{j\}\bigr)\cup\bigl((\Sigma_{R,i}\setminus \Gamma_{R,i})\times \{j-\ell\}\bigr)\subseteq \Sigma_{R}\times (\bZ/k\bZ)=\Sigma_{R}'.\label{eq:sub4to1_SigmaR}
\end{align}
We then define a constraint $\Phi_{e,\ell}'\subseteq \Sigma_{L}'\times \Sigma_{R}'$
\begin{equation}\label{eq:sub4to1_Phi_e_ell}
\Phi_{e,\ell}'=\bigcup_{i=1}^{t}\bigcup_{j\in \bZ/k\bZ}\Sigma_{L,i,j}^{(\ell)}\times \Sigma_{R,i,j}^{(\ell)}.
\end{equation}
Since for each $i\in [t]$, the sets $\Sigma_{L,i,1}^{(\ell)},\dots,\Sigma_{L,i,k}^{(\ell)}$ have size $|\Sigma_{L,i}|=4$ and form a partition of $\Sigma_{L,i}\times (\bZ/k\bZ)$, and similarly the sets $\Big.\Sigma_{R,i,1}^{(\ell)},\dots,\Sigma_{R,i,k}^{(\ell)}$ have size $|\Sigma_{R,i}|=1$ and form a partition of $\Sigma_{R,i}\times (\bZ/k\bZ)$, it is clear that $\Phi_{e,\ell}'$ is a $4$-to-$1$ constraint.

\item For each edge $e\in E$, we put in $\Psi'$ the $k$ constraints $\Phi_{e,1}',\dots,\Phi_{e,k}^{'}$. The total number of constraints in $\Psi'$ is therefore equal to $k|E|$.
\end{enumerate}

It is easy to see that $\Psi'$ can be constructed from $\Psi$ in polynomial time. We then show the completeness and soundness of the reduction.

\paragraph{Completeness:} if $\operatorname{val}(\Psi)=1$, there exist assignment maps $A_{L}:L\rightarrow \Sigma_{L}$ and $A_{R}:R\rightarrow \Sigma_{R}$ that satisfy all constraints $\Phi_{e}$, for $e\in E$. Then it is clear that the map $A_{L}':L\rightarrow \Sigma_{L}'$ defined by $A_{L}'(u)=(A_{L}(u),1)$ and the map $A_{R}':R\rightarrow \Sigma_{R}'$ defined by $A_{R}'(v)=(A_{R}(v),1)$ jointly satisfy all constraints $\Phi_{e,\ell}'$, for $e\in E$ and $\ell\in \bZ/k\bZ$. Consequently, we have $\operatorname{val}(\Psi')=1$.

\paragraph{Soundness:} suppose we have assignment maps $A_{L}':L\rightarrow \Sigma_{L}'$ and $A_{R}':R\rightarrow \Sigma_{R}'$ given by
\[
A'_{L}(u)=(A_{L}(u),K_{L}(u))\qquad\text{and}\qquad A_{R}'(v)=(A_{R}(v),K_{R}(v)).
\]
For any edge $e=(u,v)\in E$, we claim that if $(A_{L}(u),A_{R}(v))\not\in \Phi_{e}$, then at most 2 of the $k$ constraints $\Phi_{e,1}',\dots,\Phi_{e,k}'$ are satisfied by $(A_{L}',A_{R}')$. This means
\[
\operatorname{val}_{\Psi'}(A_{L}',A_{R}')\leq \operatorname{val}_{\Psi}(A_{L},A_{R})+2/k\leq \operatorname{val}_{\Psi}(A_{L},A_{R})+\varepsilon/2,
\]
and by the arbitrariness of $(A_{L}',A_{R}')$ it follows that
\[
\operatorname{val}(\Psi')\leq \operatorname{val}(\Psi)+\varepsilon/2,
\]
as desired.

We now fix an edge $e=(u,v)\in E$ such that $(A_{L}(u),A_{R}(v))\notin \Phi_{e}$, and show that at most 2 of the $k$ constraints $\Phi_{e,1}',\dots,\Phi_{e,k}'$ are satisfied by $(A_{L}',A_{R}')$. Let the subsets \[\Gamma_{L,i},\Sigma_{L,i}\subseteq \Sigma_{L},\qquad \Gamma_{R,i},\Sigma_{R,i}\subseteq \Sigma_{R},\qquad \Sigma_{L,i,j}^{(\ell)}\subseteq \Sigma_{L}',\qquad\text{and}\qquad\Sigma_{R,i,j}^{(\ell)}\subseteq \Sigma_{R}'\] be as defined in~\Cref{eq:sub4to1_Phi_e,eq:sub4to1_SigmaL,eq:sub4to1_SigmaR}. We divide into the following five cases.

Case 1: $(A_{L}(u),A_{R}(v))\notin \Sigma_{L,i}\times \Sigma_{R,i}$ for any $i\in [t]$. In this case, it is easy to see from~\Cref{eq:sub4to1_SigmaL,eq:sub4to1_SigmaR,eq:sub4to1_Phi_e_ell} that whichever values $K_{L}(u)$ and $K_{R}(v)$ take, we always have
\[
(A_{L}'(u),A'_{R}(v))\notin \bigcup_{\ell=1}^{k}\Phi_{e,\ell}'.
\]
Therefore, none of the constraints $\Phi_{e,1}',\dots,\Phi_{e,k}'$ are satisfied by $(A_{L}',A_{R}')$.

Case 2: $(A_{L}(u),A_{R}(v))\in (\Sigma_{L,i}\setminus\Gamma_{L,i})\times \Gamma_{R,i}$ for some $i\in [t]$. In this case, $(A_{L}'(u),A_{R}'(v))\in \Phi_{e,\ell}'\Big.$ if and only if $K_{L}(u)=K_{R}(v)+\ell$, and hence exactly one of the constraints $\Phi_{e,1}',\dots,\Phi_{e,k}'$ is satisfied by $(A_{L}',A_{R}')$.

Case 3: $(A_{L}(u),A_{R}(v))\in \Gamma_{L,i}\times (\Sigma_{R,i}\setminus \Gamma_{R,i})\Big.$ for some $i\in [t]$. This case is symmetric to the previous case, and exactly one of the constraints $\Phi_{e,1}',\dots,\Phi_{e,k}'$ is satisfied by $(A_{L}',A_{R}')$.

Case 4: $(A_{L}(u),A_{R}(v))\in (\Sigma_{L,i}\setminus\Gamma_{L,i})\times (\Sigma_{R,i}\setminus \Gamma_{R,i})\Big.$ for some $i\in [t]$. In this case, we have $(A_{L}'(u),A_{R}'(v))\in \Phi_{e,\ell}'$ if and only if $K_{L}(u)-\ell=K_{R}(v)+\ell$, and hence at most two of the constraints $\Phi_{e,1}',\dots,\Phi_{e,k}'\Big.$ are satisfied by $(A_{L}',A_{R}')$.

Case 5: $(A_{L}(u),A_{R}(v))\in \Gamma_{L,i}\times \Gamma_{R,i}$ for some $i\in [t]$. This case cannot happen because if it does then $(A_{L},A_{R})$ satisfies $\Phi_{e}$, violating the assumption.
\end{proof}

\section{Approximate Coloring of Hypergraphs}\label{sec:apx_hyp}

In this appendix, we present a proof of the ``starting point'' theorem (restated below) used in our reduction to $4$-to-$1$ games. The proof of this theorem is essentially due to~\cite{DGKR,DinurRegevSmyth}, and we claim no originality here. The inclusion of the proof here is purely for the sake of expositional completeness. 

\hgcolor*

To prove~\Cref{thm:hg_coloring}, we need to use the PCP theorem of~\cite{ALMSS,AS} and the parallel repetition theorem of~\cite{Raz}. More specifically, we apply the parallel repetition theorem to the equation-vs.-variable game associated to $3\mathsf{SAT}5$ instances.\footnote{By $3\mathsf{SAT}5$, we mean $3\mathsf{SAT}$ instances where each variable is contained in exactly $5$ clauses; see~\cite[Proposition 2.1.2]{Feige}.} This step is standard in the hardness of approximation literature, and we state the result (\Cref{thm:bi-regular_LLC}) below without proof.

\begin{definition}
Let $d_{0}$ and $d_{1}$ be positive integers. A \emph{$(d_{0},d_{1})$-biregular projection label cover instance} $\Psi=(V_{0},V_{1},E,\Sigma_{0},\Sigma_{1},(\pi_{e})_{e\in E})$ consists of the following data:
\begin{enumerate}[label=(\arabic*)]
\item two vertex sets $V_{0}$ and $V_{1}$; 
\item a set $E$ of edges between $V_{0}$ and $V_{1}$ such that in the bipartite graph $(V_{0}\sqcup V_{1},E)$, every vertex in $V_{0}$ has degree $d_{0}$ and every vertex in $V_{1}$ has degree $d_{1}$;
\item two finite alphabets $\Sigma_{0},\Sigma_{1}$; and
\item a collection of constraints $(\pi_{e})_{e\in E}$, where each $\pi_{e}$ is a map from $\Sigma_{0}$ to $\Sigma_{1}$.
\end{enumerate}
The value of $\Psi$, denoted by $\mathrm{val}(\Psi)$, is the maximum value of
\[
\frac{1}{|E|}\sum_{(v_{0},v_{1})\in E}\ind{\pi_{(v_{0},v_{1})}\big(A_{0}(v_{0})\big)=A_{1}(v_{1})}
\]
achievable by assignment maps $A_{0}:V_{0}\rightarrow \Sigma_{0}$ and $A_{1}:V_{1}\rightarrow\Sigma_{1}$.
\end{definition}

\begin{theorem}[\cite{ALMSS,AS,Raz}]\label{thm:bi-regular_LLC}
There exists an absolute constant $\gamma\in (0,1)$ such that for every fixed positive integer $u$, given a $(3^{u},5^{u})$-biregular projection label cover instance $\Psi=(V_{0},V_{1},E,\Sigma_{0},\Sigma_{1},(\pi_{e})_{e\in E})$ such that 
\(|\Sigma_{0}|=7^{u}\) and \(|\Sigma_{1}|=2^{u}\), 
it is $\mathbf{NP}$-hard to distinguish between the case $\mathrm{val}(\Psi)=1$ and the case $\mathrm{val}(\Psi)\leq 2^{-\gamma u}$.
\end{theorem}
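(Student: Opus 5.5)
The plan is to obtain the instance as the $u$-fold parallel repetition of the standard clause-vs-variable game of a $3\mathsf{SAT}5$ formula. The PCP theorem~\cite{ALMSS,AS} gives the base gap, Raz's theorem~\cite{Raz} amplifies it, and the biregularity and alphabet sizes are read off from the product structure.

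\textbf{Step 1 (base gap).} By the PCP theorem together with Feige's regularization~\cite[Proposition 2.1.2]{Feige}, there is an absolute $\eps_0>0$ such that the following is $\mathbf{NP}$-hard. Given a $3\mathsf{SAT}5$ formula $\phi$, in which every clause has exactly $3$ distinct variables and every variable occurs in exactly $5$ clauses, distinguish $\phi$ satisfiable from every assignment violating at least an $\eps_0$ fraction of clauses.

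\textbf{Step 2 (base game).} From $\phi$ build the projection label cover $\Psi_1$ as follows:
\begin{itemize}
\item $V_0$ is the set of clauses and $V_1$ is the set of variables.
\item $E$ is the incidence relation, so the bipartite graph is $(3,5)$-biregular.
\item $\Sigma_0$ is the set of $7$ satisfying assignments of a $3$-clause, with the clause's variables in a fixed order, and $\Sigma_1=\{0,1\}$.
\item $\pi_{(C,x)}$ restricts a clause assignment to $x$.
\end{itemize}
Completeness is immediate: use a satisfying assignment on both sides.

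For soundness, fix $A_0,A_1$ and treat $A_1$ as a global assignment. At least an $\eps_0$ fraction of clauses are violated by $A_1$. For each such clause $C$, $A_0(C)$ satisfies $C$, so it disagrees with $A_1$ on some variable of $C$, and that edge is violated. Since each clause has exactly $3$ edges, $\mathrm{val}(\Psi_1)\le 1-\eps_0/3$.

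\textbf{Step 3 (repetition).} Let $\Psi=\Psi_1^{\otimes u}$:
\begin{itemize}
\item $V_0=V_0(\Psi_1)^u$ and $V_1=V_1(\Psi_1)^u$.
\item $((C_1,\dots,C_u),(x_1,\dots,x_u))$ is an edge iff each $(C_i,x_i)\in E(\Psi_1)$.
\item The alphabets are $\Sigma_0^u$ and $\Sigma_1^u$, of sizes $7^u$ and $2^u$.
\item The constraints are coordinatewise projections.
\end{itemize}
Degrees multiply, so $\Psi$ is $(3^u,5^u)$-biregular. For fixed $u$ it has polynomial size, since $|V_0|=|V_0(\Psi_1)|^u$.

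One point needs checking. The uniform distribution on edges of $\Psi$ equals the $u$-fold product of the uniform edge distribution of $\Psi_1$, because $E(\Psi)=E(\Psi_1)^u$ as sets. Hence $\mathrm{val}(\Psi)$ is exactly the value of the $u$-fold parallel repeated two-prover game whose base game is $\Psi_1$ with questions drawn from a uniform edge. Completeness persists coordinatewise.

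\textbf{Step 4 (soundness).} Raz's theorem, in the form $\mathrm{val}(G^{\otimes u})\le (1-\eps^{c})^{\Omega(u/\log|\Sigma|)}$ for a base game of value $1-\eps$, applies here with $\eps=\eps_0/3$ and $|\Sigma|\le 7$ both absolute constants. It gives $\mathrm{val}(\Psi)\le 2^{-\gamma u}$ for an absolute $\gamma\in(0,1)$.

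\textbf{Main obstacle.} The only genuinely deep inputs are the PCP theorem and Raz's theorem, both used as black boxes. The step needing care is Step 1: ensuring exact regularity (each variable in exactly $5$ clauses, each clause on $3$ distinct variables) while keeping a constant gap, since biregularity of $\Psi$ depends on it. I would cite Feige's expander-based regularization for this rather than reprove it, and then verify the edge-distribution identification in Step 3.
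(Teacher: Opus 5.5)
Your proposal is correct and follows the route the paper has in mind. The paper states this theorem without proof as the standard combination of the PCP theorem, Feige's $3\mathsf{SAT}5$ regularization, the clause-vs.-variable projection game and Raz's parallel repetition, which is exactly what you carry out. Your checks of biregularity, of the alphabet sizes $7^u$ and $2^u$ (with the per-clause bijection onto the $7$ satisfying assignments folded into $\pi_e$), and of the product edge distribution are the only details needed.
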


Assuming~\Cref{thm:bi-regular_LLC}, we now present a proof of~\Cref{thm:hg_coloring}.

\begin{proof}[Proof of~\Cref{thm:hg_coloring}] We proceed by a reduction from~\Cref{thm:bi-regular_LLC}. Let $\gamma \in (0,1)$ be as in~\Cref{thm:bi-regular_LLC}, and let $u=\lceil40\gamma ^{-2}c\rceil$.
We fix a $(3^{u},5^{u})$-biregular projection label cover instance \(\Psi=(V_{0},V_{1},E,\Sigma_{0},\Sigma_{1},(\pi_{e})_{e\in E})\) with $|\Sigma_{0}|=7^{u}$ and $|\Sigma_{1}|=2^{u}$; we will show how to map it to a 3-uniform hypergraph $H$ while linking the value of $\Psi$ to the colorability of $H$.

The proof is divided into 8 steps. We first construct the hypergraph $H$ in Steps 1 and 2. The completeness of the reduction is proved in Step 3, while Steps 4 to 8 will be devoted to proving the soundness of the reduction. Except for Step 6, the proof is almost identical to the one in~\cite{DinurRegevSmyth}.\footnote{The arguments in Steps 1, 5 and 8 are due to~\cite{DGKR}.} Throughout the proof, we let $\ell=2c^{2}$.

\paragraph{Step 1: multi-layered label cover.} Following the ideas of~\cite{DGKR,DinurRegevSmyth}, the first step is to perform a ``multi-layered'' parallel repetition on $\Psi$. For each $i\in [\ell]$, let $\cV_{i}$ and $R_{i}$ be the cartesian products
\[
\cV_{i}=V_{1}^{i-1}\times V_{0}^{\ell-i}\qquad\text{and}\qquad R_{i}=\Sigma_{1}^{i-1}\times \Sigma_{0}^{\ell-i}.\]
The $(\ell-1)$-tuples $\bfv=(v_{1},\dots,v_{\ell-1})\in \cV_{i}$ will be the ``blown-up vertices'' in the $i$-th layer, and $R_{i}$ can be viewed as their alphabets. For each $i,j\in [\ell]$ such that $i<j$, let $\cE_{i,j}$ be the collection of pairs
\[
(\mathbf{v},\mathbf{v}')=\big((v_{1},\dots,v_{\ell-1}),(v_{1}',\dots,v_{\ell-1}')\big)\in \cV_{i}\times \cV_{j}
\]
such that:
\begin{enumerate}[label=(\arabic*)]
\item $v_{s}=v_{s}'$ for all $s\in \{1,2,\dots,i-1\}\cup\{j,j+1,\dots,\ell-1\}$; and
\item $(v_{s},v'_{s})\in E$ for all $s\in \{i,i+1,\dots,j-1\}$.
\end{enumerate}
Given such a pair $(\bfv,\bfv')\in \cE_{i,j}$, we define a projection map $\pi_{(\bfv,\bfv')}:R_{i}\rightarrow R_{j}$ by
\[
\pi_{(\bfv,\bfv')}(\sigma_{1},\dots,\sigma_{\ell-1})=\left(\sigma_{1},\dots,\sigma_{i-1},\pi_{(v_{i},v_{i}')}(\sigma_{i}),\dots,\pi_{(v_{j-1},v_{j-1}')}(\sigma_{j-1}),\sigma_{j},\dots,\sigma_{\ell-1}\right),
\]
for all $\sigma_{1},\dots,\sigma_{i-1}\in \Sigma_{1}$ and $\sigma_{i},\dots,\sigma_{\ell-1}\in \Sigma_{0}$.

For each $i,j\in [\ell]$ such that $i<j$, we have
\begin{equation}\label{eq:pairs_in_multi_layer}
\left|\cE_{i,j}\right|=|\cV_{j}|\cdot (5^{u})^{j-i}=(5^{u})^{j-i}\cdot|V_{1}|^{j-1}|V_{0}|^{\ell-j} =5^{u(j-i)}\cdot (5/3)^{u(\ell-j)}\cdot |V_{1}|^{\ell-1},
\end{equation}
where we used the $(3^{u},5^{u})$-biregularity of $\Psi$ and its consequence $|V_{0}|/|V_{1}|=5^{u}/3^{u}$.

\paragraph{Step 2: construction of $H$.} Following~\cite{DinurRegevSmyth}, we construct the hypergraph $H$ by composing the multi-layered label cover from Step 1 with an inner gadget based on the Kneser graph. 

For each $i\in [\ell]$, we let $r_{i}=\big\lceil (|R_{i}|-c)/2\big\rceil$. The vertex set of $H$ is the disjoint union of cartesian products
\(
\bigsqcup_{i=1}^{\ell}\big(\cV_{i}\times \binom{R_{i}}{r_{i}}\big)
\), and the hyperedges of $H$ are the collection of unordered triples
\(
\big\{(\bfv,S_{1}),(\bfv,S_{2}),(\bfv',S_{3})\big\}
\)
such that:
\begin{enumerate}[label=(\arabic*)]
\item $\bfv\in \cV_{i}$ and $\bfv'\in \cV_{j}$ for some $i,j\in [\ell]$ such that $i<j$;
\item $(\bfv,\bfv')\in \cE_{i,j}$;
\item $S_{1},S_{2}\in \binom{R_{i}}{r_{i}}$ and $S_{3}\in \binom{R_{j}}{r_{j}}$; and
\item $S_{1}\cap S_{2}=\emptyset$ and $\pi_{(\bfv,\bfv')}(R_{i}\setminus(S_{1}\cup S_{2}))\subseteq S_{3}$.
\end{enumerate}
Using~\Cref{eq:pairs_in_multi_layer}, we can upper-bound the number of edges in $H$ by
\begin{equation}\label{eq:edges_in_multi_layer}
\sum_{i,j\in [\ell],\;i<j}\left|\cE_{i,j}\right|\cdot 2^{2|R_{i}|}\cdot 2^{|R_{j}|}\leq 8^{7^{u(\ell-1)}}\ell^{2}\cdot\max_{i,j\in [\ell],\;i<j}\left|\cE_{i,j}\right|\leq  5^{u(\ell-1)}8^{7^{u(\ell-1)}}\ell^{2}\cdot |V_{1}|^{\ell-1}.
\end{equation}

It is clear that $H$ can be constructed from $\Psi$ in polynomial time. It suffices to prove the following statements:
\begin{enumerate}[label=(\arabic*)]
\item Completeness: if $\mathrm{val}(\Psi)=1$, then $H$ is 2-colorable.
\item Soundness: if under some $c$-coloring of $H$, less than $\exp(-\exp(\Omega(c^{3})))$ fraction of edges of $H$ are monochromatic, then $\mathrm{val}(\Psi)>2^{-\gamma u}$.
\end{enumerate}
The first statement will be proved in Step 3, and the second will be proved in Steps 4 to 8.

\paragraph{Step 3: completeness.} In this step, we assume $\mathrm{val}(\Psi)=1$ and show that $H$ is 2-colorable. 

Suppose $A_{0}:V_{0}\rightarrow\Sigma_{0}$ and $A_{1}:V_{1}\rightarrow\Sigma_{1}$ are assignments that together satisfy every constraint of $\Psi$. Then for any $i\in [\ell]$, we can define a lifted assignment map $\widetilde{A}_{i}:\cV_{i}\rightarrow R_{i}$ by letting 
\[
\widetilde{A}_{i}(\bfv)=\big(A_{1}(v_{1}),\dots,A_{1}(v_{i-1}),A_{0}(v_{i}),\dots,A_{0}(v_{\ell-1})\big),\qquad\text{for all }\bfv=(v_{1},\dots,v_{\ell-1})\in \cV_{i}.
\]
For any vertex $(\bfv,S)\in \cV_{i}\times \binom{R_{i}}{r_{i}}$, we color it red if $\widetilde{A}_{i}(\bfv)\in S$, and otherwise color it blue. 

Now, for any hyperedge \(
\big\{(\bfv,S_{1}),(\bfv,S_{2}),(\bfv',S_{3})\big\}
\) included in $H$ (see Step 2), we have:
\begin{enumerate}[label=(\arabic*)]
\item The vertices $(\bfv,S_{1})$ and $(\bfv,S_{2})$ cannot be both red, because $S_{1}\cap S_{2}=\emptyset$.
\item Let $i,j\in [\ell]$ be the indices such that $\bfv\in \cV_{i}$ and $\bfv'\in\cV_{j}$. If $(\bfv,S_{1})$ and $(\bfv,S_{2})$ are both colored blue, then $\widetilde{A}_{i}(\bfv)\in R_{i}\setminus(S_{1}\cup S_{2})$ and hence \[\pi_{(\bfv,\bfv')}\big(\widetilde{A}_{i}(\bfv)\big)\in S_{3},\qquad\text{ due to }\qquad \pi_{(\bfv,\bfv')}(R_{i}\setminus(S_{1}\cup S_{2}))\subseteq S_{3}.\]
By the definitions of $\pi_{(\bfv,\bfv')}$ and $\widetilde{A}_{i},\widetilde{A}_{j}$, we have $\pi_{(\bfv,\bfv')}\big(\widetilde{A}_{i}(\bfv)\big)=\widetilde{A}_{j}(\bfv')$. Therefore, if $(\bfv,S_{1})$ and $(\bfv,S_{2})$ are both colored blue, then $(\bfv',S_{3})$ must be colored red.
\end{enumerate}
In conclusion, the hyperedge \(
\big\{(\bfv,S_{1}),(\bfv,S_{2}),(\bfv',S_{3})\big\}
\) cannot be monochromatic under the proposed coloring. This means that $H$ is 2-colorable as long as $\mathrm{val}(\Psi)=1$.

\paragraph{Step 4: finding the leading color.} In the rest of the proof, we fix a $c$-coloring of $H$, denoted by the map
\[
f:\textstyle\bigsqcup_{i=1}^{\ell}\big(\cV_{i}\times \textstyle\binom{R_{i}}{r_{i}}\big)\rightarrow [c].
\]
We suppose that at most $\varepsilon$ fraction of edges of $H$ are monochromatic under $f$, for some $\varepsilon\in (0,1)$ to be specified later.

For each $i\in [\ell]$ and each $\bfv\in \cV_{i}$, we apply~\cite[Lemma 2.2]{DinurRegevSmyth} to the vertices in the set $\left\{(\bfv,S)\,\middle|\,S\in \binom{R_{i}}{r_{i}}\right\}$ to find a color $c_{\bfv}\in [c]$ such that:
\begin{enumerate}[label=(\arabic*)]
\item The fraction of sets $S\in \binom{R_{i}}{r_{i}}$ such that $f(\bfv,S)=c_{\bfv}$ is at least $|R_{i}|^{-c-1}.$
\item There exist disjoint sets $S_{1},S_{2}\in \binom{R_{i}}{r_{i}}$ such that $f(\bfv,S_{1})=f(\bfv,S_{2})=c_{\bfv}$. The set $R_{i}\setminus(S_{1}\cup S_{2})$ will be denoted by $B(\bfv)$ and used later in the proof.
\end{enumerate}
For each $i\in [\ell]$, let $c^{(i)}\in[c]$ be a color that maximizes the number of tuples $\bfv\in\cV_i$ with $c_{\bfv}=c^{(i)}$. Since $\ell=2c^2$, by the pigeonhole principle there exists a color $c^*\in[c]$ such that $c^{(i)}=c^*$ for at least $2c$ distinct indices $i\in[\ell]$. Pick exactly $2c$ indices among them, and let $I$ be the set of these $2c$ indices.

\paragraph{Step 5: locating two critical layers.} Observe that the bipartite graph $(\cV_{i}\sqcup \cV_{i+1},\cE_{i,i+1})$ is bi-regular for all $i\in [\ell-1]$, due to the bi-regularity of $\Psi$. Consider the following random walk:
\begin{enumerate}
\item Start at a uniformly random element $\bfv^{(1)}\in \cV_{1}$.
\item For each $i\in [\ell-1]$, let $\bfv^{(i+1)}$ be a uniformly random element of $\cV_{i+1}$ conditioned on $(\bfv^{(i)},\bfv^{(i+1)})\in \cE_{i,i+1}$.
\end{enumerate}
Due to the bi-regularity of the bipartite graphs $(\cV_{i}\sqcup \cV_{i+1},\cE_{i,i+1})$, this random walk produces a random sequence $(\bfv^{(1)},\dots,\bfv^{(\ell)})\in \cV_{1}\times\cdots\times \cV_{\ell}$ that is marginally uniform over each $\cV_{i}$. Furthermore, by the definition of the edge sets $\cE_{i,j}$, it is easy to see that for any $i,j\in [\ell]$ such that $i<j$, the distribution of the pair $(\bfv^{(i)},\bfv^{(j)})$ is uniform over $\cE_{i,j}$.

For each $i\in [\ell]$, let $\Gamma_{i}$ be the event that $c_{\bfv^{(i)}}=c^{(i)}$. By the marginal uniformity and the choice of the $c^{(i)}$'s we have $\Pr{\Gamma_{i}}\geq c^{-1}$. By the inclusion-exclusion principle, we have
\begin{align*}
1&\geq \Pr{\bigcup\nolimits_{i\in I}\Gamma_{i}}\geq \sum_{i\in I}\Pr{\Gamma_{i}}-\sum_{i,j\in I,\;i<j}\Pr{\Gamma_{i}\cap \Gamma_{j}}\\
&\geq |I|\cdot c^{-1}-\binom{|I|}{2}\cdot \max_{i,j\in I,\;i<j} \Pr{\Gamma_{i}\cap \Gamma_{j}}.
\end{align*}
Therefore, there exist $i,j\in I$ with $i<j$ such that \[\Pr{\Gamma_{i}\cap\Gamma_{j}}\geq \binom{|I|}{2}^{-1}\cdot(c^{-1}|I|-1)=\binom{2c}{2}^{-1}\geq \frac{1}{2c^{2}}.\]
By the uniformity of $(\bfv^{(i)},\bfv^{(j)})$ over $\cE_{i,j}$, this means that
\[
\Pru{(\bfv,\bfv')\in \cE_{i,j}}{c_{\bfv}=c^{(i)}\text{ and }c_{\bfv'}=c^{(j)}}\geq \frac{1}{2c^{2}}.
\]
Since $i,j\in I$, by the definition of $I$, the above is equivalent to
\begin{equation}\label{eq:two_critical_layers}
\Pru{(\bfv,\bfv')\in \cE_{i,j}}{c_{\bfv}=c_{\bfv'}=c^*}\geq \frac{1}{2c^{2}}.
\end{equation}
In the remainder of the proof, we fix the indices $i$ and $j$ identified above (satisfying \Cref{eq:two_critical_layers}).

\paragraph{Step 6: discarding monochromatic edges. }We let $\cE^*$ be the collection of all pairs $(\bfv,\bfv')\in \cE_{i,j}$ such that:
\begin{enumerate}[label=(\arabic*)]
\item $c_{\bfv}=c_{\bfv'}=c^*$; and
\item no hyperedge of $H$ of the form \(
\big\{(\bfv,S_{1}),(\bfv,S_{2}),(\bfv',S_{3})\big\}
\) is colored monochromatically by $f$.
\end{enumerate}
By \Cref{eq:edges_in_multi_layer}, the total number of monochromatic edges of $H$ under the coloring $f$ is at most
\[
\varepsilon\cdot 5^{u(\ell-1)}8^{7^{u(\ell-1)}}\ell^{2}\cdot |V_{1}|^{\ell-1}\leq \varepsilon\cdot 5^{u(\ell-1)}8^{7^{u(\ell-1)}}\ell^{2}\cdot \left|\cE_{i,j}\right|\leq \frac{1}{4c^{2}}\left|\cE_{i,j}\right|,
\]
where we chose a sufficiently small $\varepsilon=\exp(-\exp(O(c^{3})))$ that justifies the last transition. It then follows from \Cref{eq:two_critical_layers} and the definition of $\cE^*$ that
\begin{equation}\label{eq:multi_layer_E_star}
|\cE^{*}|\geq \frac{1}{2c^{2}}\left|\cE_{i,j}\right|- \frac{1}{4c^{2}}\left|\cE_{i,j}\right|= \frac{1}{4c^{2}}\left|\cE_{i,j}\right|.
\end{equation}

\paragraph{Step 7: strategy for the multi-layered label cover.} The goal of this step is to decode from $f$ a ``good'' assignment map for the two critical layers (identified in Step 5) of the $\ell$-layered label cover instance (constructed in Step 1). 

For each $\bfv'\in \cV_{j}$, let $N^*(\bfv')$ be the collection of tuples $\bfv\in \cV_{i}$ such that $(\bfv,\bfv')\in \cE^*$. 

In this paragraph, we define a deterministic map $\widetilde{A}_{j}:\cV_{j}\rightarrow R_{j}$ using the coloring $f$. Fix a tuple $\bfv'\in \cV_{j}$. If $c_{\bfv'}\neq c^*$, we assign $\widetilde{A}_{j}(\bfv')$ arbitrarily. Otherwise, for each $\bfv\in N^*(\bfv')$, we denote
\[
B(\bfv,\bfv'):= \pi_{(\bfv,\bfv')}(B(\bfv)).
\]
Recall from Step 4 that, the set $R_{i}\setminus B(\bfv)\subseteq R_{i}$ can be decomposed as the disjoint union of two sets $S_{1},S_{2}\in\binom{R_{i}}{r_{i}}$ such that $f(\bfv,S_{1})=f(\bfv,S_{2})=c^*$. For each $S\in \binom{R_{j}}{r_{j}}$ such that $B(\bfv,\bfv')\subseteq S$, by the construction in Step 2, the hyperedge
\[
\big\{(\bfv,S_{1}),(\bfv,S_{2}),(\bfv',S)\big\}
\]
is included in $H$. By the definition of $\cE^*$, this hyperedge cannot be monochromatic under the coloring $f$, and therefore $f(\bfv',S)\neq c^*$. However, recall also from Step 4 that the fraction of $S\in \binom{R_{j}}{r_{j}}$ such that $f(\bfv',S)=c^*$ is at least $|R_{j}|^{-c-1}$. Therefore, at least an $|R_{j}|^{-c-1}$ fraction of $S\in \binom{R_{j}}{r_{j}}$ contain none of the sets $B(\bfv,\bfv')$, where $\bfv$ ranges in $N^*(\bfv')$. We can then apply~\Cref{lem:base_game_combinatorics} to find\footnote{Note that for any $\bfv\in N^*(\bfv')$, we have $|B(\bfv,\bfv')|=\left|\pi_{(\bfv,\bfv')}(B(\bfv))\right|\leq |B(\bfv)|\leq |R_{i}|-2r_{i}\leq c$.} an element of $R_{j}$ that is contained in $B(\bfv,\bfv')$ for at least 
\[
\eta=\frac{6^{-c}}{c(c+1)\ln |R_{j}|}
\]
fraction of the tuples $\bfv\in N^*(\bfv')$. Define $\widetilde{A}_{j}(\bfv')$ to be such an element of $R_{j}$.

We now define a random assignment map $\widetilde{\cA}_{i}:\cV_{i}\rightarrow R_{i}$ by picking each $\widetilde{\cA}_{i}(\bfv)$ independently and uniformly at random from $B(\bfv)$. By the definition of $B(\bfv)$ (in Step 4), we have $|B(\bfv)|\leq |R_{i}|-2r_{i}\leq c$. Therefore, we have
\begin{align*}
&\quad\Exs{\widetilde{\cA}_{i}}{\Pru{(\bfv,\bfv')\in \cE^*}{\pi_{(\bfv,\bfv')}\big(\widetilde{\cA}_{i}(\bfv)\big)=\widetilde{A}_{j}(\bfv')}}\\
&=
\Exu{(\bfv,\bfv')\in \cE^*}{\Prs{\widetilde{\cA}_{i}}{\pi_{(\bfv,\bfv')}\big(\widetilde{\cA}_{i}(\bfv)\big)=\widetilde{A}_{j}(\bfv')}}\\
&\geq \eta\cdot \Exu{(\bfv,\bfv')\in \cE^*}{ \Prs{\widetilde{\cA}_{i}}{\pi_{(\bfv,\bfv')}\big(\widetilde{\cA}_{i}(\bfv)\big)=\widetilde{A}_{j}(\bfv')}\,\middle|\, \widetilde{A}_{j}(\bfv')\in B(\bfv,\bfv')}\tag{by the choice of $\widetilde{A}_{j}(\bfv')$}\\
&\geq \eta \cdot \Exu{(\bfv,\bfv')\in \cE^*}{\frac{1}{|B(\bfv)|}}\geq \frac{\eta}{c}\tag{by the definition of $B(\bfv,\bfv')$}.
\end{align*}
Therefore, there exists a deterministic instantiation $\widetilde{A}_{i}:\cV_{i}\rightarrow R_{i}$ of $\widetilde{\cA}_{i}$ such that
\[
\Pru{(\bfv,\bfv')\in \cE^*}{\pi_{(\bfv,\bfv')}\big(\widetilde{A}_{i}(\bfv)\big)=\widetilde{A}_{j}(\bfv')}\geq \frac{\eta}{c}.
\]
By~\Cref{eq:multi_layer_E_star}, it follows that
\begin{equation}\label{eq:multi_layer_strategy}
\Pru{(\bfv,\bfv')\in \cE_{i,j}}{\pi_{(\bfv,\bfv')}\big(\widetilde{A}_{i}(\bfv)\big)=\widetilde{A}_{j}(\bfv')}\geq \frac{\eta}{4c^{3}}.
\end{equation}
In the rest of the proof, we fix $\widetilde{A}_{i}$ to be an instantiation of $\widetilde{\cA}_{i}$ that satisfies~\Cref{eq:multi_layer_strategy}.

\paragraph{Step 8: strategy for the original label cover.} 

We define a random pair of (correlated) embedding maps $\varphi_{0}:V_{0}\hookrightarrow\cV_{i}$ and $\varphi_{1}:V_{1}\hookrightarrow\cV_{j}$ as follows:
\begin{enumerate}
\item For each $s\in \{1,2,\dots,i-1\}$, let $v_{s}=v_{s}'$ be a uniformly random element of $V_{1}$.
\item For each $s\in \{i+1,i+2,\dots,j-1\}$, pick a pair $(v_{s},v_{s}')\in E$ uniformly at random.
\item For each $s\in \{j,j+1,\dots,\ell-1\}$, let $v_{s}=v_{s}'$ be a uniformly random element of $V_{0}$.
\item For each $w\in V_{0}$, let $\varphi_{0}(w)=(v_{1},\dots,v_{i-1},w,v_{i+1},\dots,v_{\ell-1})\in \cV_{i}$.
\item For each $w'\in V_{1}$, let $\varphi_{1}(w')=(v_{1}',\dots,v_{i-1}',w',v_{i+1}',\dots,v_{\ell-1}')\in \cV_{j}$.
\end{enumerate}
If we first sample the random maps $\varphi_{0},\varphi_{1}$ and then independently sample a pair $(w,w')\in E$, it is clear that $(\varphi_{0}(w),\varphi_{1}(w'))$ is uniformly distributed over $\cE_{i,j}$. Therefore, it follows from~\Cref{eq:multi_layer_strategy} that
\begin{equation}\label{eq:multi_layer_strategy_rewrite}
\Exu{(\varphi_{0},\varphi_{1})}{\Pru{(w,w')\in E}{\pi_{(\varphi_{0}(w),\varphi_{1}(w'))}\left(\widetilde{A}_{i}\big(\varphi_{0}(w)\big)\right)=\widetilde{A}_{j}\big(\varphi_{1}(w')\big)}}\geq \frac{\eta}{4c^3}.
\end{equation}

We now define a random pair of (correlated) assignment maps $\cA_{0}:V_{0}\rightarrow \Sigma_{0}$ and $\cA_{1}:V_{1}\rightarrow \Sigma_{1}$ as follows:
\begin{enumerate}
\item Sample the pair of embedding maps $\varphi_{0},\varphi_{1}$ as described earlier.
\item For each $w\in V_{0}$, let $\cA_{0}(w)$ be the $i$-th coordinate of $\widetilde{A}_{i}(\varphi_{0}(w))$.
\item For each $w'\in V_{1}$, let $\cA_{1}(w')$ be the $i$-th coordinate of $\widetilde{A}_{j}(\varphi_{1}(w'))$.
\end{enumerate}
By the definition of $\pi_{(\varphi_{0}(w),\varphi_{1}(w'))}$ in Step 1, it follows from~\Cref{eq:multi_layer_strategy_rewrite} that
\[
\Exu{(\cA_{0},\cA_{1})}{\Pru{(w,w')\in E}{\pi_{(w,w')}\big(\cA_{0}(w)\big)=\cA_{1}(w')\Big.}}\geq \frac{\eta}{4c^3}.
\]
Therefore, there exists a deterministic instantiation $(A_{0},A_{1})$ of the random pair of maps $(\cA_{0},\cA_{1})$ that satisfies at least $\eta/(4c^{3})$ fraction of the constraints in $\Psi$. Thus we have
\[
\mathrm{val}(\Psi)\geq \frac{\eta}{4c^{3}}\geq \frac{6^{-c}}{4c^{4}(c+1)\ln |R_{j}|}\geq \frac{6^{-c}}{4c^{4}(c+1)(\ell-1)\ln(7^{u})}\geq \frac{1}{2^{20c}\cdot u}> 2^{-\gamma u},
\]
where we used $u\geq 40\gamma^{-2}c$ in the last transition. 
\end{proof}

The following combinatorial lemma (adapted from~\cite[Claim 4.5]{DinurRegevSmyth}) is used in Step 7 of the preceding proof.

\begin{lemma}\label{lem:base_game_combinatorics}
Fix a positive integer $c$. Let $R$ be a nonempty finite set with $|R|\geq 6c$, and let $r$ be an integer such that $|R|/3\leq r\leq 2|R|/3$. Suppose $B_{1},\dots,B_{m}\subseteq R$ are a sequence of subsets such that $|B_{i}|\leq c$ for each $i\in [m]$. If at least an $|R|^{-c-1}$ fraction of $r$-element subsets $S\in \binom{R}{r}$ contain none of the sets $B_{1},\dots,B_{m}$, then there exists an element $a\in R$ contained in at least
\[\frac{6^{-c}}{c(c+1)\ln |R|}\]
fraction of the sets $B_{1},\dots,B_{m}$.
\end{lemma}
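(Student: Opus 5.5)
We prove the contrapositive by a counting argument. Suppose no element of $R$ lies in more than a $\theta$ fraction of the sets $B_1,\dots,B_m$, where $\theta=\frac{6^{-c}}{c(c+1)\ln|R|}$. We will show that a uniformly random $S\in\binom{R}{r}$ contains some $B_i$ with probability larger than $1-|R|^{-c-1}$, which contradicts the hypothesis. A single $B_i$ of size $b\le c$ is contained in $S$ with probability roughly $(r/|R|)^{b}\ge 3^{-b}$. To be precise, the probability is
\[
\binom{|R|-b}{r-b}\Big/\binom{|R|}{r}=\prod_{i<b}\frac{r-i}{|R|-i}\ge (1/3-c/|R|)^{b}\ge 6^{-b},
\]
where we used $|R|\ge 6c$ and $r\ge|R|/3$. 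If there were many pairwise disjoint $B_i$'s, the events $B_i\subseteq S$ would be nearly independent (negatively correlated only mildly, since sampling is without replacement). Then $S$ would avoid containing all of them with probability about $(1-6^{-c})^{M}$, where $M$ is the number of disjoint sets.

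The plan is a greedy disjoint-subfamily extraction. First, dispose of the degenerate case: if some $B_i=\emptyset$, every $S$ contains it, which contradicts the hypothesis, so all $|B_i|\ge 1$. Next, build the subfamily greedily. Pick a set $B_{i_1}$, discard every $B_i$ that meets $B_{i_1}$, and repeat. Each chosen set has at most $c$ elements, and each element lies in at most $\theta m$ sets, so each step discards at most $c\theta m$ sets. Hence we obtain at least $M\ge 1/(c\theta)=6^{c}(c+1)\ln|R|$ pairwise disjoint sets $B_{i_1},\dots,B_{i_M}$.

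Finally, bound the probability that $S$ contains none of them. Reveal membership sequentially: conditioned on $S$ failing to contain $B_{i_1},\dots,B_{i_{s-1}}$, the set $S$ has not been constrained on the elements of $B_{i_s}$. The step that needs care is showing that the conditional probability that $B_{i_s}\subseteq S$ is still at least $6^{-c}$ despite the fixed size of $S$. We would handle this either by coupling $S$ with a product measure of density $1/3$, which costs only a $\mathrm{poly}(|R|)$ factor or an $e^{-\Omega(|R|)}$ error via concentration of $|S|$, or by the Harris/FKG-type observation that conditioning on the non-containment of disjoint sets only increases the expected number of remaining free elements placed in $S$. With this in hand, the probability that $S$ contains none of the $B_i$ is at most
\[
(1-6^{-c})^{M}\le \exp\bigl(-(c+1)\ln|R|\bigr)=|R|^{-c-1}.
\]
With a little slack in the constants (which is what the extra factor $c$ in $\theta$ absorbs, since we may take $M\ge 1/(c\theta)$ with strict room), this gives a strict inequality, contradicting the hypothesis. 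The main obstacle is exactly this near-independence under the fixed-size sampling. If the conditioning argument proves awkward, we would pass instead to the product measure $\mu_{1/3}$ and compare it with the uniform measure on $\binom{R}{r}$ via monotonicity: containing some $B_i$ is an up-set, and $r\ge|R|/3$.
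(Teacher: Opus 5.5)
Your proof is correct and is essentially the paper's argument: a maximal (in your case greedy) pairwise disjoint subfamily $B_{i_1},\dots,B_{i_q}$ satisfies $\eta\ge 1/(cq)$, and the probability that a uniform $S\in\binom{R}{r}$ contains none of its members is at most $(1-6^{-c})^{q}$, which forces $q\le 6^{c}(c+1)\ln|R|$.

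The paper handles your ``main obstacle'' in one line by citing negative correlation of the events $\{B_j\subseteq S\}$ for disjoint $B_j$; what is really used there is negative association of the coordinates of a uniform $r$-set, which is your Harris/FKG-type conditioning. Among your product-measure alternatives, the $\mathrm{poly}(|R|)$-factor and monotonicity comparisons work, because density $r/|R|\ge 1/3$ gives per-set probability at least $3^{-c}$, leaving ample slack over $6^{-c}$. By contrast, an additive $e^{-\Omega(|R|)}$ concentration error would not beat $|R|^{-c-1}$ when $|R|$ is as small as $6c$.
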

\begin{proof}
Let
\[
\eta:=\max_{a\in R}\frac{\big|\{i\in[m]\mid a\in B_i\}\big|}{m}.
\]
Without loss of generality, suppose $B_{1},\dots,B_{q}$ are a maximum pairwise disjoint subfamily among the sets $B_1,\dots,B_m$, for some integer $q\in [m]$. By maximality, every $B_i$ contains some element of
\[
U:=B_1\cup\cdots\cup B_q.
\]
Therefore
\[
m
\leq
\sum_{a\in U} \big|\{i\in[m]\mid a\in B_i\}\big|\leq |U|\cdot \eta m\leq cq\cdot \eta m,
\]
and hence $\eta\geq 1/(cq)$.

Next we upper-bound $q$ using the assumptions in the lemma. Let
$S$ be chosen uniformly at random from $\binom{R}{r}$. For each $j\in [q]$, we have
\[
\Pru{S\in \binom{R}{r}}{B_j\subseteq S}
=\binom{|R|-|B_{j}|}{r-|B_{j}|}/\binom{|R|}{r}
=
\prod_{t=0}^{|B_{j}|-1}\frac{r-t}{|R|-t}\geq \left(\frac{1}{6}\right)^{|B_{j}|}\geq 6^{-c},
\]
where we used the assumptions $r\geq |R|/3$ and $|B_{j}|\leq c\leq |R|/6$ in the third transition.

Because the sets $B_1,\dots,B_q$ are pairwise disjoint, the events \(\{B_j\subseteq S\}\), for $j\in [q]$,
are negatively correlated under the uniform distribution over $\binom{R}{r}$. Hence
\[
\Pru{S\in \binom{R}{r}}{B_j\nsubseteq S \text{ for all } j\in[q]\big.}
\leq
\prod_{j=1}^q \Pru{S\in \binom{R}{r}}{B_j\nsubseteq S}
\leq
(1-6^{-c})^q.
\]
Therefore, by the assumption in the lemma, we have
\[
|R|^{-c-1}
\leq
\Pru{S\in \binom{R}{r}}{B_i\nsubseteq S \text{ for all } i\in[m]\big.}
\leq
\Pru{S\in \binom{R}{r}}{B_j\nsubseteq S \text{ for all } j\in[q]\big.}
\leq
(1-6^{-c})^{q}.
\]
Taking logarithms gives
\[
q\leq \frac{(c+1)\ln |R|}{-\ln(1-6^{-c})}\leq 6^{c}(c+1)\ln |R|.
\]
Combining this with $\eta \geq 1/(cq)$, we obtain the desired inequality
\[
\eta\geq \frac{6^{-c}}{c(c+1)\ln|R|}.\qedhere
\]
\end{proof}

\section{Grassmann Decoding Theorems}\label{sec:grassmann_dec}

In this appendix, we explain how the Grassmann decoding theorems (\Cref{thm:grassmann_linear_dec,thm:grdecoding2}) used in this paper can be derived from the slightly different versions in the existing literature.

\subsection{The Expansion Theorem}

To prove Grassmann decoding theorems, the standard approach is to first establish some small-set expansion properties of the Grassmann graph. It is observed in~\cite{KMS1,DKKMS18} that small vertex sets may not expand very well in the Grassmann graph, and the correct approach is to consider sets that remain small even after zoom-ins and zoom-outs. This is formalized in the next definition.

\begin{definition}\label{def:Grassmann_pseudorandom}
Let $n,\ell,q$ be nonnegative integers such that $n-q\geq \ell\geq q+1$, and let $\varepsilon\in (0,1)$. Suppose $\cL\subseteq \mathsf{Gr}(\bF_{2}^{n},\ell)$ is a set of subspaces. We say that $\cL$ is \emph{$(q,\varepsilon)$-pseudorandom} if for any subspaces $Q\in \mathsf{Gr}(\bF_{2}^{n},q)$ and $W\in \mathsf{Gr}(\bF_{2}^{n},n-q)$ such that $Q\subseteq W$, we have
\[
\Pru{L\in \mathsf{Gr}(\bF_{2}^{n},\ell)}{L\in \cL\,\big|\,Q\subseteq L\subseteq W}\leq \varepsilon.
\]
\end{definition}

To facilitate Fourier analysis, it is useful to ``lift'' the Grassmann graph to a Cayley graph over an Abelian group. The ``short-code graph'' is a Cayley graph that serves this purpose very well, with the matrix space $\bF_{2}^{n\times \ell}$ as the additive group and the rank-1 matrices as the generating set. The counterpart of~\Cref{def:Grassmann_pseudorandom} in the short-code graph is given as follows.

\begin{definition}\label{def:short_code_pseudorandom}
Let $n,\ell,q$ be nonnegative integers such that $n-q\geq \ell\geq q+1$, and let $\varepsilon\in(0,1)$. A function $f:\bF_{2}^{n\times \ell}\rightarrow \{0,1\}$ is said to be \emph{$(q,\varepsilon)$-pseudorandom} if for any matrices 
\[Y_{1}\in \bF_{2}^{\ell \times q},\qquad Z_{1}\in \bF_{2}^{n\times q},\qquad Y_{2}\in \bF_{2}^{q\times n}\qquad\text{and}\qquad Z_{2}\in \bF_{2}^{q\times \ell},\] 
such that $\left\{X\in \bF_{2}^{n\times \ell}\,\middle|\, XY_{1}=Z_{1}\text{ and }Y_{2}X=Z_{2}\right\}$ is nonempty, we have
\[
\Exu{X\in \bF_{2}^{n\times \ell}}{f(X)\,\big|\, XY_{1}=Z_{1}\text{ and }Y_{2}X=Z_{2}}\leq \varepsilon.
\]
\end{definition}

It is not hard to relate~\Cref{def:Grassmann_pseudorandom,def:short_code_pseudorandom} together; we use the following lemma from~\cite{minzer2024near}.

\begin{lemma}[{\cite[Lemma A.18]{minzer2024near}}]\label{lem:Grassmann_to_short_code_pseudorandom}
Let $n,\ell,q$ be nonnegative integers such that $n-q\geq \ell\geq q+1$, and let $\varepsilon\in (0,1)$. Suppose $\cL\subseteq \mathsf{Gr}(\bF_{2}^{n},\ell)$ is $(q,\varepsilon)$-pseudorandom (in the sense of~\Cref{def:Grassmann_pseudorandom}). Define a function $f:\bF_{2}^{n\times\ell}\rightarrow\{0,1\}$ by
\[
f(X)=\begin{cases}
1,&\text{if }\operatorname{rank}(X)=\ell\text{ and }\operatorname{col}(X)\in \cL,\\
0,&\text{otherwise}.
\end{cases}
\]
Then $f$ is $(q,4\varepsilon)$-pseudorandom (in the sense of~\Cref{def:short_code_pseudorandom}).
\end{lemma}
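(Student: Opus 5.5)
The plan is to prove that $f$ is pseudorandom by rewriting each conditional expectation in \Cref{def:short_code_pseudorandom} as an average over Grassmann zoom-in/zoom-out pairs, and then using the pseudorandomness of $\cL$.

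Fix $Y_1,Z_1,Y_2,Z_2$ with a nonempty affine set
\[
\mathcal{X}=\{X\in\bF_2^{n\times\ell} : XY_1=Z_1,\ Y_2X=Z_2\}.
\]
Let $X$ be uniform over $\mathcal{X}$. We want $\Pr[\operatorname{rank}X=\ell,\ \operatorname{col}(X)\in\cL]\le 4\varepsilon$. The two constraints translate as follows.

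\begin{itemize}
\item $Y_2X=Z_2$ says that the rows of $Y_2$ (a space of dimension at most $q$) annihilate $\operatorname{col}(X)$ in a prescribed way. When $Z_2=0$ this means $\operatorname{col}(X)\subseteq W_0:=\ker Y_2$, a space of codimension at most $q$.
\item $XY_1=Z_1$ says that $\operatorname{col}(Z_1)\subseteq \operatorname{col}(X)$, so $Q_0:=\operatorname{col}(Z_1)$, of dimension at most $q$, is a zoom-in.
\end{itemize}

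In general I would split $\bF_2^\ell=\operatorname{col}(Y_1)\oplus C$ and write $X=X_0+X'$, where $X_0$ is determined on $\operatorname{col}(Y_1)$. The column space is then $Q_0+X'(C)$, and the remaining freedom is a uniform affine set. The aim is to show that, conditioned on $\operatorname{rank}X=\ell$, $\operatorname{col}(X)$ is a mixture of distributions, each uniform (up to a factor $\le 2$) over $\{L: Q\subseteq L\subseteq W\}$ for some $Q\supseteq Q_0$ of dimension at most $q$ and some $W$ of codimension at most $q$.

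To reduce to exactly dimension $q$ and codimension $q$, I will average. For $Q$ of dimension $q'<q$, the uniform distribution on $\{L: Q\subseteq L\subseteq W\}$ is itself an average of the uniform distributions over $Q_+\supseteq Q$ with $\dim Q_+=q$ and $Q_+\subseteq L$, because this family is symmetric under the stabilizer. The same works for enlarging $W$ down to codimension $q$. Each term is then bounded by $\varepsilon$.

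\textbf{Main obstacle.} The obstacle is the affine case $Z_2\neq 0$ together with the rank condition. Then $\operatorname{col}(X)$ is not constrained to lie in a fixed $W$; instead $Y_2$ restricted to $\operatorname{col}(X)$ is a prescribed map. My first attempt is to condition on $K:=\operatorname{col}(X)\cap\ker Y_2$, which has codimension at most $q$ in $\operatorname{col}(X)$, and on a complement spanned by columns mapping to $\operatorname{col}(Z_2)$. The collection of $L$ with a given such profile should then be covered by zoom-in/zoom-out pairs: take zoom-outs $W=\ker Y_2+\operatorname{span}(\text{lifts})$, each of codimension at most $q$, which bounds the probability by $\varepsilon$ per pair.

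Non-uniformity comes from two places. First, conditioning on full rank: $\Pr[\operatorname{rank}X=\ell]\ge 1/2$ relative to the affine set, using $n-q\ge\ell$. Second, the fact that the map from matrices to spaces has fibers of unequal size. I expect these to cost at most a factor of $2$ each, giving the constant $4$. If this bookkeeping fails, I will cite the argument of \cite[Lemma A.18]{minzer2024near} directly, since that is where the lemma comes from.
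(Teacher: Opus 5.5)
The paper gives no proof of this lemma; it imports it from \cite[Lemma A.18]{minzer2024near}, so your fallback is exactly what the paper does. As a self-contained argument, however, your sketch has a genuine gap, and it sits exactly at the step you flag.

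Conditioning on $K=\operatorname{col}(X)\cap\ker Y_2$ cannot work. $K$ has dimension $\ell-\operatorname{rank}(Z_2)\ge \ell-q$, so once it is fixed you are asking about $L$ inside a zoom-in of dimension far above $q$. There the $(q,\varepsilon)$-pseudorandomness of $\cL$ says nothing; nothing prevents $\cL$ from containing every $L\supseteq K$. Likewise, bounding each window of a \emph{cover} by $\varepsilon$ only helps if the windows carry a probability mixture of the law of $\operatorname{col}(X)$. A union over windows loses a factor equal to their number.

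In fact no conditioning is needed. Put $Q_0=\operatorname{col}(Z_1)$, $R=\operatorname{col}(Z_2)$ and $W_R=Y_2^{-1}(R)$. Then $W_R$ has codimension $\operatorname{rank}(Y_2)-\dim R\le q$ and contains $Q_0$, because $Y_2Z_1=Z_2Y_1$. Every full-rank $X$ in the affine set has $\operatorname{col}(X)\in\mathcal S:=\{L: Q_0\subseteq L\subseteq W_R,\ Y_2(L)=R\}$. For any $L,L'\in\mathcal S$ there is an isomorphism $g\colon L\to L'$ fixing $Q_0$ pointwise with $Y_2\circ g=Y_2|_L$. To build it, take a basis of $Q_0$, lifts in $L$ (resp.\ $L'$) of a basis of a complement of $Y_2(Q_0)$ in $R$, and complete inside $\ker Y_2$. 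Then $X\mapsto gX$ is a bijection between the full-rank fibers over $L$ and $L'$. So, conditioned on full rank, $\operatorname{col}(X)$ is \emph{exactly} uniform on $\mathcal S$. Since $f$ vanishes off full rank, $\mathbb{E}[f(X)]\le\mathbb{P}_{L\in\mathcal S}[L\in\cL]$.

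The only loss is then the density of $\mathcal S$ inside the window $\{L: Q_0\subseteq L\subseteq W_R\}$. Set $m=\ell-\dim Q_0$, $N=\dim W_R-\dim Q_0$ and $d=\dim R-\dim Y_2(Q_0)$. Modulo $Q_0$, the density is the probability that a uniform $m$-dimensional subspace of an $N$-dimensional space maps onto a $d$-dimensional target under a fixed surjection. A Gaussian-binomial count gives exactly $\prod_{j=m-d+1}^{m}(1-2^{-j})\cdot\prod_{j=N-d+1}^{N}(1-2^{-j})^{-1}\ge\prod_{j\ge1}(1-2^{-j})\approx 0.289$. Combined with your (correct) averaging from $(Q_0,W_R)$ to pairs of exact dimension and codimension $q$, this gives $\mathbb{E}[f(X)]\le\varepsilon/0.289<4\varepsilon$.

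This also shows that your accounting of the constant is wrong:
\begin{enumerate}
\item Conditioning on full rank costs nothing. Your claim $\mathbb{P}[\operatorname{rank}X=\ell]\ge 1/2$ is also false in general; it is $0$ when $Z_1=0\neq Y_1$.
\item The fibers have equal size, not sizes within a factor of $2$.
\item The whole constant comes from the surjectivity density. This density really does drop to about $0.29$, e.g.\ for $\ell=q+1$, $\dim Q_0=1$, $Q_0\subseteq\ker Y_2$, $\operatorname{rank}Z_2=q$. So this step alone exceeds your factor-$2$ budget.
\end{enumerate}
In the linear case $Z_2=0$ your outline is fine: $\mathcal S$ is then the whole window and you even get $\varepsilon$.
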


The adjacency matrix of the short-code graph is represented by the following linear operator.

\begin{definition}
Given a function $f:\bF_{2}^{n\times\ell}\rightarrow \bR$, we let $\cT f:\bF_{2}^{n\times\ell}\rightarrow \bR$ be the function defined by
\begin{equation}\label{eq:Grassmann_to_short_code}
\cT f(X)=\Exu{Y_{1}\in \bF_{2}^{n\times 1},\,Y_{2}\in \bF_{2}^{1\times \ell}}{f(X+Y_{1}Y_{2})\big.}.
\end{equation}
\end{definition}

We are now ready to state the main expansion theorem from~\cite{KMS23,ellis2023analogue}, which shows that pseudorandom sets have good expansion in the short-code graph.

\begin{lemma}[{\cite[Theorem 69]{ellis2023analogue}}]\label{lem:EKL_theorem}
There is an absolute constant $C>0$ such that the following holds. Let $n,\ell,q$ be nonnegative integers such that $n-q\geq \ell\geq q+1$. Then any $(q,2^{-Cq^{2}})$-pseudorandom function $f:\bF_{2}^{n\times \ell}\rightarrow \{0,1\}$ satisfies\footnote{Here the inner product $\inp{f}{\cT f}$ should be interpreted as $\Exs{X\in \bF_{2}^{n\times \ell}}{f(X)\cT f(X)}$.}
\[
\inp{f}{\cT f}\leq 2^{-q+1}\Ex{f}.
\]
\end{lemma}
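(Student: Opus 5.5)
This lemma is imported from~\cite{ellis2023analogue}, so the plan is to reduce it to their statement via Fourier analysis on the abelian group $\bF_{2}^{n\times \ell}$ rather than to reprove hypercontractivity. The first step is to diagonalize $\cT$ in the character basis $\chi_{A}(X)=(-1)^{\operatorname{tr}(A^{\top}X)}$ for $A\in\bF_{2}^{n\times \ell}$. From~\Cref{eq:Grassmann_to_short_code} one gets $\cT\chi_{A}=\lambda_{A}\chi_{A}$ with
\[
\lambda_{A}=\Exu{Y_{1}\in\bF_{2}^{n\times 1},\,Y_{2}\in\bF_{2}^{1\times\ell}}{(-1)^{Y_{1}^{\top}AY_{2}^{\top}}}=\Pru{Y_{1}}{Y_{1}^{\top}A=0}=2^{-\operatorname{rank}(A)}.
\]
Writing $W_{d}(f)=\sum_{\operatorname{rank}(A)=d}\widehat{f}(A)^{2}$, this gives the exact identity $\inp{f}{\cT f}=\sum_{d\geq 0}2^{-d}W_{d}(f)$.

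Next I split the sum at level $q$. The high part is easy: by Parseval and $f\in\{0,1\}$,
\[
\sum_{d\geq q}2^{-d}W_{d}(f)\leq 2^{-q}\sum_{A}\widehat{f}(A)^{2}=2^{-q}\Ex{f}.
\]
For the low part I invoke the level-$d$ inequality for global functions on the bilinear scheme, which is the main content of~\cite{ellis2023analogue}. That inequality says that if every restriction of $f$ to an affine ``zoom-in/zoom-out'' subcube $\{XY_{1}=Z_{1},\,Y_{2}X=Z_{2}\}$ of codimension at most $d$ on each side has density at most $\varepsilon$, then
\[
W_{d}(f)\leq 2^{C'd^{2}}\,\varepsilon\,\Ex{f}
\]
for an absolute constant $C'$. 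One has to check that~\Cref{def:short_code_pseudorandom} with parameter $q$ implies this hypothesis at every level $d<q$. This holds because any restriction with $d\leq q$ constraints per side can be padded with extra constraints up to exactly $q$ per side, and averaging then bounds the coarser restriction's density by $\varepsilon$.

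Summing the low levels with $\varepsilon=2^{-Cq^{2}}$ gives
\[
\sum_{d<q}2^{-d}W_{d}(f)\leq q\,2^{C'q^{2}-Cq^{2}}\Ex{f}\leq 2^{-q}\Ex{f}
\]
once $C$ is large compared with $C'$. Adding the two parts yields $\inp{f}{\cT f}\leq 2^{-q+1}\Ex{f}$.

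The main obstacle is the bookkeeping of translating the hypothesis of~\cite[Theorem 69]{ellis2023analogue} into ours. Their globalness notion may be phrased via derivatives/restrictions along subspaces of the row and column spaces, possibly with a different normalization, such as using $\Ex{f}^{2}$ or an $L_{2}$ norm of restrictions instead of densities. I expect this to require only a routine conversion, in which the polynomial-in-$q^{2}$ loss is absorbed into $C$; the hypercontractive estimate itself is taken as given.
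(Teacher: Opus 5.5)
Your argument is correct, but it does not follow the paper. The paper gives no proof: it imports the statement verbatim as \cite[Theorem 69]{ellis2023analogue}. You instead derive the expansion bound from a more primitive ingredient of the same source, the level-$d$ inequality for global functions on $\bF_2^{n\times\ell}$. The steps all check out:
\begin{itemize}
\item The eigenvalue computation $\cT\chi_A=2^{-\operatorname{rank}(A)}\chi_A$ is right, so $\inp{f}{\cT f}=\sum_d 2^{-d}W_d(f)$ holds exactly.
\item The rank-$d$ characters span exactly the degree-$d$ part in EKL's sense, since $\chi_A$ depends on $X$ only through $X$ restricted to the row space of $A$.
\item The levels $d\geq q$ contribute at most $2^{-q}\Ex{f}$ by Parseval.
\item The levels $d<q$ contribute at most $2^{-q}\Ex{f}$ once $C$ dominates the constant in the level-$d$ inequality.
\end{itemize}

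This is the standard way such expansion statements are obtained. What your route buys is transparency: it shows that the threshold $2^{-Cq^{2}}$ is forced by the $2^{O(d^{2})}$ loss in the level-$d$ inequality. A level-$d$ inequality with a worse dependence on $d$ would not yield the lemma at this pseudorandomness level. The paper's black-box citation avoids having to match EKL's globalness notion to the matrix-restriction notion used here.

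Three small remarks on that matching.
\begin{itemize}
\item The padding step is unnecessary. The definition of $(q,\varepsilon)$-pseudorandomness quantifies over all $Y_1,Y_2$, not only full-rank ones, so restrictions with fewer constraints are already covered: take zero columns or rows with zero right-hand sides.
\item The exponent of $\varepsilon$ in the level-$d$ inequality need not be $1$. A bound of the form $W_d(f)\leq 2^{C'd^{2}}\varepsilon^{c}\Ex{f}$ for any fixed $c>0$ is absorbed by enlarging $C$. A H\"older-plus-hypercontractivity argument typically gives such a fractional power.
\item The one point that is not cosmetic: the globalness needed at level $d\leq q-1$ must involve at most $q$ constraints per side. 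You should confirm this in EKL's formulation. If their hypothesis at level $d$ required restrictions of order larger than $q$, the lemma's pseudorandomness assumption would not suffice.
\end{itemize}
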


We then transfer the expansion theorem from the short-code graph to the Grassmann graph.

\begin{lemma}\label{lem:Grassmann_expansion}
There is an absolute constant $C>0$ such that the following holds. Let $n,\ell,q$ be nonnegative integers such that $n-q-3\geq \ell\geq q+3$. Then any $(q,2^{-Cq^{2}})$-pseudorandom subset $\cL\subseteq \mathsf{Gr}(\bF_{2}^{n},\ell)$ satisfies
\[
\Pru{(L_{1},L_{2})\sim \cG(\bF_{2}^{n},\ell)}{L_{1},L_{2}\in \cL\big.}\leq 2^{-q+2}\cdot \Pru{L\in \mathsf{Gr}(\bF_{2}^{n},\ell)}{L\in \cL\big.}.
\]
\end{lemma}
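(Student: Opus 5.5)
Here $f$ will be the indicator of matrices $X\in\bF_{2}^{n\times\ell}$ with $\operatorname{rank}(X)=\ell$ and $\operatorname{col}(X)\in\cL$. The approach is to lift $\cL$ to this function, apply \Cref{lem:EKL_theorem} to it, and then relate the short-code walk back to the Grassmann walk. By \Cref{lem:Grassmann_to_short_code_pseudorandom}, if $\cL$ is $(q,2^{-Cq^{2}})$-pseudorandom then $f$ is $(q,4\cdot 2^{-Cq^{2}})$-pseudorandom. Enlarging the absolute constant $C$ (say replacing $C$ by $C'+2$, where $C'$ is the constant of \Cref{lem:EKL_theorem}) makes $4\cdot 2^{-Cq^2}\le 2^{-C'q^2}$ for $q\ge1$; the case $q=0$ is trivial because $2^{2}\Pr[\cL]$ dominates the left side. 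This gives
\[
\inp{f}{\cT f}\le 2^{-q+1}\Ex{f}.
\]

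Next I would compute both sides combinatorially. The expectation is $\Ex{f}=p_{\ell}\cdot\Pr_{L}[L\in\cL]$, where $p_\ell$ is the probability that a uniform $X$ has full rank. For $\inp{f}{\cT f}$, I condition on $X$ having full rank with column space $L_1=\operatorname{col}(X)$, and consider $X'=X+Y_1Y_2$. I decompose according to three events: $Y_2=0$ or $Y_1\in L_1$ (both give $\operatorname{col}(X')\subseteq L_1$, so $X'$ either has column space $L_1$ or drops rank), and the generic case $Y_1\notin L_1$, $Y_2\neq0$. In the generic case, $X'$ has full rank and $\operatorname{col}(X')$ meets $L_1$ in dimension exactly $\ell-1$. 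By symmetry under the stabilizer of $L_1$ (acting on $Y_1\in\bF_2^n\setminus L_1$, and on $Y_2$ by $\mathrm{GL}_\ell$ right multiplication), $\operatorname{col}(X')$ is then uniform among Grassmann neighbours of $L_1$. So
\[
\inp{f}{\cT f}\ge p_\ell\cdot \Pr[\text{generic}]\cdot\Pr_{(L_1,L_2)\sim\cG}[L_1,L_2\in\cL],
\]
with all terms nonnegative so the degenerate cases can be dropped. Here $\Pr[\text{generic}]=(1-2^{-\ell})(1-2^{\ell-n})\ge 1/2$, using $\ell\ge q+3\ge3$ and $n-\ell\ge3$; this is where the hypothesis $n-q-3\ge\ell\ge q+3$ enters. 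Combining the two displays yields the claimed bound with constant $2\cdot2^{-q+1}=2^{-q+2}$.

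The main obstacle is the uniformity claim in the generic case: I need $\operatorname{col}(X+Y_1Y_2)$ to be exactly uniform over neighbours of $L_1$, conditioned on $L_1$ and on full rank. I would argue that the pair $(Y_1 \bmod L_1,\ \ker Y_2\cap\text{appropriate})$ determines the new space $L_1\cap\operatorname{col}(X')+\operatorname{span}(\cdot)$. Concretely, $\operatorname{col}(X')=\{Xv+Y_1(Y_2v)\}$ contains the hyperplane $X\ker(Y_2)$ and the vector $Xv_0+Y_1$ for any $v_0$ with $Y_2v_0=1$. The hyperplane $X\ker Y_2$ is uniform over the hyperplanes of $L_1$, and given it, $Xv_0+Y_1$ is uniform outside $L_1$ modulo the hyperplane, which gives uniformity of the neighbour. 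The remaining bookkeeping is routine.
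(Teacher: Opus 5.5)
Your proposal is correct and follows the paper's proof essentially step for step. You lift $\cL$ to the full-rank column-space indicator via \Cref{lem:Grassmann_to_short_code_pseudorandom}, absorb the factor $4$ by enlarging the constant (treating $q=0$ trivially), apply \Cref{lem:EKL_theorem}, and compare $\inp{f}{\cT f}$ with the Grassmann edge probability on the event $\{\operatorname{rank}(X)=\ell\}\cap\{Y_1\notin\operatorname{col}(X)\}\cap\{Y_2\neq 0\}$, whose probability is at least $1/2$; your explicit check that $\operatorname{col}(X+Y_1Y_2)$ is a uniform neighbour of $\operatorname{col}(X)$ (the hyperplane $X\ker Y_2$ plus a uniform vector outside $\operatorname{col}(X)$), and your exact cancellation of the full-rank probability, simply fill in details the paper asserts without proof.
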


\begin{proof}
If $q=0$ the conclusion obviously holds. We assume $q\geq 1$, and let $C$ be the constant from~\Cref{lem:EKL_theorem}. Suppose $\cL\subseteq \mathsf{Gr}(\bF_{2}^{n},\ell)$ is $(q,2^{-(C+2)q^{2}})$-pseudorandom. Define a function $f:\bF_{2}^{n\times\ell}\rightarrow\{0,1\}$ from $\cL$ as in \Cref{lem:Grassmann_to_short_code_pseudorandom}. By~\Cref{lem:Grassmann_to_short_code_pseudorandom} we know that $f$ is $(q,2^{-Cq^{2}})$-pseudorandom (in the sense of~\Cref{def:short_code_pseudorandom}).

If $X\in \bF_{2}^{n\times\ell}$, $Y_{1}\in \bF_{2}^{n\times 1}$ and $Y_{2}\in \bF_{2}^{1\times \ell}$ are independent and uniformly random, then conditioned on the event
\begin{equation}\label{eq:Grassmann_edge_event}
\{\operatorname{rank}(X)=\ell\}\cap\{Y_{1}\notin\operatorname{col}(X)\}\cap\{Y_{2}\neq 0\},
\end{equation}
the pair $\bigl(\operatorname{col}(X),\operatorname{col}(X+Y_{1}Y_{2})\bigr)$ is a uniformly random edge of the Grassmann graph $\mathsf{Gr}(\bF_{2}^{n},\ell)$. Note that the probability of the event \Cref{eq:Grassmann_edge_event} is at least \(1-2^{\ell-n}-2^{\ell-n}-2^{-\ell}\geq 1/2.\)
Therefore, we have
\begin{equation}\label{eq:Grassmann_short_code_factor}
\Pru{X\in \bF_{2}^{n\times \ell},\,Y_{1}\in \bF_{2}^{n\times 1},\,Y_{2}\in\bF_{2}^{1\times\ell}}{f(X)=f(X+Y_{1}Y_{2})=1
\big.}\geq \frac{1}{2}\cdot \Pru{(L_{1},L_{2})\sim \cG(\bF_{2}^{n},\ell)}{L_{1},L_{2}\in \cL\big.}
\end{equation}
Since the left-hand side of \Cref{eq:Grassmann_short_code_factor} is exactly $\inp{f}{\cT f}$, using~\Cref{lem:EKL_theorem} we can conclude that
\[
\Pru{(L_{1},L_{2})\sim \cG(\bF_{2}^{n},\ell)}{L_{1},L_{2}\in \cL\big.}\leq 2\inp{f}{\cT f}\leq 2^{-q+2}\cdot \Ex{f}\leq 2^{-q+2}\cdot\Pru{L\in \mathsf{Gr}(\mathbb{F}_2^n,\ell)}{L\in \cL}.\qedhere
\]
\end{proof}

\subsection{From Expansion to Decoding}
We now use an argument, originally due to~\cite{BKS19}, to deduce the decoding statement from the expansion statement.

\begin{theorem}\label{thm:Grassmann_decoding}
There is an absolute constant $C>0$ such that the following holds. Let $n,\ell,q$ be nonnegative integers such that $n-q-3\geq \ell\geq q+3$. Suppose $F[\cdot]$ is a table where $F[L]\in L^*\cup\{\nil\}$ for each $L\in \mathsf{Gr}(\bF_{2}^{n},\ell)$, such that 
\begin{equation}\label{eq:Grassmann_decoding_assumption}
\Pru{(L_{1},L_{2})\sim \cG(\bF_{2}^{n},\ell)}{\left.F[L_{1}]\right|_{L_{1}\cap L_{2}}=\left.F[L_{2}]\right|_{L_{1}\cap L_{2}}\Big.}\geq 2^{-q+4}.
\end{equation}
Then there exist a linear map $f:\bF_{2}^n\rightarrow \bF_{2}$ and subspaces $Q\in \mathsf{Gr}(\bF_{2}^{n},q)$, $W\in \mathsf{Gr}(\bF_{2}^{n},n-q)$ such that $Q\subseteq W$ and
\[
\Pru{L\in \mathsf{Gr}(\bF_{2}^{n},\ell)}{F[L]=\left.f\right|_{L}\,\Big|\, Q\subseteq L\subseteq W}\geq 2^{-Cq^{2}}.
\]
\end{theorem}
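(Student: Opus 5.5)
The plan is the standard expansion-to-decoding argument of~\cite{BKS19}: iteratively peel off the parts of $F$ that agree with global linear functions inside zoom-in/zoom-out pairs, and use \Cref{lem:Grassmann_expansion} to show that what remains cannot carry much agreement. Let $C_{0}$ be the constant from \Cref{lem:Grassmann_expansion}, and set $\varepsilon=2^{-C_{0}q^{2}}$.

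Assume for contradiction that for every linear $f:\bF_{2}^{n}\to\bF_{2}$ and every pair $Q\subseteq W$ with $\dim Q=q$ and $\operatorname{codim}W=q$, we have
\[
\Pr{F[L]=f|_{L}\mid Q\subseteq L\subseteq W}<\varepsilon' ,\qquad\text{where } \varepsilon'=2^{-Cq^{2}} .
\]
We then produce a contradiction with \eqref{eq:Grassmann_decoding_assumption}.

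The first step is to reduce agreement to a self-intersection of sets. Pick a uniformly random linear function $g:\bF_{2}^{n}\to\bF_{2}$ and define
\[
\cL_{g}=\{L : F[L]=g|_{L}\} .
\]
Fix an edge $(L_{1},L_{2})$ on which $F[L_{1}]$ and $F[L_{2}]$ agree on the intersection, so the two restrictions glue to a linear functional on $L_{1}+L_{2}$. Conditioned on this, the probability over $g$ that $g$ extends that functional is $2^{-(\ell+1)}$, in which case both $L_{1},L_{2}\in\cL_{g}$. Also $\Pr{L\in\cL_{g}}=2^{-\ell}\Pr{F[L]\neq\nil}\le 2^{-\ell}$ for each $L$. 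Taking expectations over $g$,
\[
\mathbb{E}_{g}\Pr{L_{1},L_{2}\in\cL_{g}} \;\ge\; 2^{-(\ell+1)}\cdot 2^{-q+4} \;\ge\; 2^{-q+3}\cdot \mathbb{E}_{g}\Pr{L\in\cL_{g}} .
\]
Hence there is a single $g$ with $\Pr{L_{1},L_{2}\in\cL_{g}}\ge 2^{-q+3}\Pr{L\in\cL_{g}}$, and in particular $\cL_{g}\ne\emptyset$.

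The second step is to check pseudorandomness. For fixed $Q\subseteq W$, the fraction of $\cL_{g}$ inside the slice $\{Q\subseteq L\subseteq W\}$ equals $\Pr{F[L]=g|_{L}\mid Q\subseteq L\subseteq W}$, which by the contradiction hypothesis (applied with $f=g$) is less than $\varepsilon'$. If we take $C\ge C_{0}$ then $\varepsilon'\le\varepsilon$, so $\cL_{g}$ is $(q,\varepsilon)$-pseudorandom in the sense of \Cref{def:Grassmann_pseudorandom}.

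The third step is to apply \Cref{lem:Grassmann_expansion}, which is valid because $n-q-3\ge\ell\ge q+3$. It gives
\[
\Pr{L_{1},L_{2}\in\cL_{g}}\le 2^{-q+2}\Pr{L\in\cL_{g}} .
\]
This contradicts the bound $2^{-q+3}\Pr{L\in\cL_{g}}$ from the first step, since $\Pr{L\in\cL_{g}}>0$.

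The only delicate point is the averaging in the first step. The gluing probability must be exactly $2^{-\dim(L_{1}+L_{2})}=2^{-(\ell+1)}$. The factor-of-two loss against $2^{-\ell}$ is absorbed by the gap between the exponent $2^{-q+4}$ in the hypothesis and the $2^{-q+2}$ in the expansion bound. The $\nil$ entries cause no trouble, because they never count as agreement and are never in $\cL_{g}$. So the theorem holds with $C=C_{0}$, and the decoded $f$ is a global linear function on $\bF_{2}^{n}$.
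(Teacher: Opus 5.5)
Your proof is correct and is essentially the paper's argument. Averaging over a uniformly random global linear $g$ works because each agreeing edge survives with probability exactly $2^{-(\ell+1)}$ and each vertex with probability at most $2^{-\ell}$. This yields a $g$ with $\Pr{L_1,L_2\in\cL_g}>2^{-q+2}\Pr{L\in\cL_g}$, and \Cref{lem:Grassmann_expansion} then shows $\cL_g$ is not $(q,2^{-Cq^2})$-pseudorandom, which is exactly the conclusion. You phrase this as a contradiction, whereas the paper argues directly.

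The only inaccuracy is your opening sentence about ``iteratively peeling off'' agreeing parts of $F$, since your argument uses no iteration. That pruning idea is what the paper uses afterwards to upgrade this theorem to the abundant-zoom-in version, \Cref{thm:grassmann_linear_dec}.
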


\begin{proof}
For each linear functional $f:\bF_{2}^{n}\rightarrow \bF_{2}$, let $\cL_{f}$ denote the set of subspaces $L\in\mathsf{Gr}(\bF_{2}^{n},\ell)$ such that $F[L]=\left.f\right|_{L}$. We have
\begin{equation}\label{eq:BKS_argument_vertices}
\Exu{f\in (\bF_{2}^n)^*}{\Pru{L\in \mathsf{Gr}(\bF_{2}^{n},\ell)}{L\in \cL_{f}\big.}}=\Exu{L\in \mathsf{Gr}(\bF_{2}^{n},\ell)}{\Pru{f\in (\bF_{2}^n)^*}{\left.f\right|_{L}=F[L]\Big.}}\leq 2^{-\ell}.
\end{equation}
Furthermore,
\begin{equation}\label{eq:BKS_argument_edges}
\Exu{f\in (\bF_{2}^n)^*}{\Pru{(L_{1},L_{2})\sim \cG(\bF_{2}^{n},\ell)}{L_{1},L_{2}\in\cL_{f}\big.}}=\Exu{(L_{1},L_{2})\sim \cG(\bF_{2}^{n},\ell)}{\Pru{f\in (\bF_{2}^n)^*}{\left.f\right|_{L_{1}}=F[L_{1}]\text{ and }\left.f\right|_{L_{2}}=F[L_{2}]\Big.}}.
\end{equation}
Note that for any edge $(L_{1},L_{2})$ of the Grassmann graph $\mathsf{Gr}(\bF_{2}^{n},\ell)$ such that \[\left.F[L_{1}]\right|_{L_{1}\cap L_{2}}=\left.F[L_{2}]\right|_{L_{1}\cap L_{2}},\] we have
\[
\Pru{f\in (\bF_{2}^n)^*}{\left.f\right|_{L_{1}}=F[L_{1}]\text{ and }\left.f\right|_{L_{2}}=F[L_{2}]\Big.}=2^{-\ell-1}.
\]
Therefore, combining \Cref{eq:Grassmann_decoding_assumption,eq:BKS_argument_edges} yields
\begin{equation}\label{eq:BKS_argument_edges_2}
\Exu{f\in (\bF_{2}^n)^*}{\Pru{(L_{1},L_{2})\sim \cG(\bF_{2}^{n},\ell)}{L_{1},L_{2}\in\cL_{f}\big.}}\geq 2^{-\ell-1}\cdot 2^{-q+4}=2^{-q-\ell+3}.
\end{equation}
Combining \Cref{eq:BKS_argument_vertices,eq:BKS_argument_edges_2}, we know that there exists a linear functional $f\in (\bF_{2}^{n})^*$ such that
\[
\Pru{(L_{1},L_{2})\sim \cG(\bF_{2}^{n},\ell)}{L_{1},L_{2}\in\cL_{f}\big.}>2^{-q+2}\cdot \Pru{L\in \mathsf{Gr}(\bF_{2}^{n},\ell)}{L\in \cL_{f}\big.}.
\]
It then follows from~\Cref{lem:Grassmann_expansion} that $\cL_{f}$ is not $(q,2^{-Cq^{2}})$-pseudorandom, which is exactly the desired conclusion.
\end{proof}

We now restate and prove the precise decoding result stated in~\Cref{subsubsec:Grassmann_agreement_test}. It is almost identical to~\Cref{thm:Grassmann_decoding}, except that we require the feasible zoom-ins $Q$ to be abundant.

\grdecodingA*
\begin{proof}
We take $C$ to be the same constant as in~\Cref{thm:Grassmann_decoding}. Let $\cQ$ be the collection of subspaces $Q\in \mathsf{Gr}(\bF_{2}^{n},q)$ that satisfy the desired property, and we will prove that \[|\cQ|\geq 2^{-q-\ell^{2}}|\mathsf{Gr}(\bF_{2}^n,q)|.\]
In the rest of the proof, we assume on the contrary that $|\cQ|< 2^{-q-\ell^{2}}|\mathsf{Gr}(\bF_{2}^n,q)|$.

We define a table $F'[\cdot]$, where $F'[L]\in L^*\cup\{\nil\}$ for each $L\in \mathsf{Gr}(\bF_{2}^{n},\ell)$, as follows:
\[
F'[L]=\begin{cases}
\nil, &\text{if }Q\subseteq L\text{ for some }Q\in \cQ,\\
F[L],&\text{otherwise}.\\
\end{cases}
\]
We thus have
\begin{align*}
\Pru{L\in \mathsf{Gr}(\bF_{2}^{n},\ell)}{F[L]\neq F'[L]}&\leq \sum_{Q\in \cQ}\Pru{L\in \mathsf{Gr}(\bF_{2}^{n},\ell)}{Q\subseteq L}=|\cQ|\cdot \qbinom{n-q}{\ell-q}\qbinom{n}{\ell}^{-1}\\
&= |\cQ|\cdot \qbinom{n}{q}^{-1}\qbinom{\ell}{q}\leq 2^{-q-\ell^{2}}\cdot 2^{\ell^{2}}=2^{-q}.
\end{align*}
Therefore, it follows from the assumption \Cref{eq:Grassmann_decoding_2_assumption} that
\[
\Pru{(L_{1},L_{2})\sim \cG(\bF_{2}^{n},\ell)}{\left.F'[L_{1}]\right|_{L_{1}\cap L_{2}}=\left.F'[L_{2}]\right|_{L_{1}\cap L_{2}}\Big.}\geq 2^{-q+5}-2\cdot2^{-q}\geq 2^{-q+4}.
\]
Applying~\Cref{thm:Grassmann_decoding} to the table $F'$ yields subspaces $Q\in \mathsf{Gr}(\bF_{2}^{n},q)$, $W\in \mathsf{Gr}(\bF_{2}^{n},n-q)$ such that $Q\subseteq W$ and
\begin{equation}\label{eq:Grassmann_decoding_2_contradiction}
\Pru{L\in \mathsf{Gr}(\bF_{2}^{n},\ell)}{F'[L]=\left.f\right|_{L}\,\Big|\, Q\subseteq L\subseteq W}\geq 2^{-Cq^{2}}.
\end{equation}
By the definition of $\cQ$, this implies that $Q\in \cQ$. However, as $F'[L]=\nil$ for any $L$ containing a member of $\cQ$, the left-hand side of~\Cref{eq:Grassmann_decoding_2_contradiction} must be 0, which is a contradiction.
\end{proof}

We now move on to the second decoding theorem we need. The following lemma says that if a function $F$ on the vertices of the Grassmann graph is locally constant on its edges, then it is globally constant inside some zoom-in/zoom-out combination.
\begin{lemma}\label{lem:Grassmann_decoding_baby}
There is an absolute constant $C>0$ such that the following holds. Let $n,\ell,q$ be nonnegative integers such that $n-q-3\geq \ell\geq q+3$. Suppose $\Sigma$ is a finite set and $F[\cdot]$ is a table where $F[L]\in \Sigma\cup\{\nil\}$ for each $L\in \mathsf{Gr}(\bF_{2}^{n},\ell)$, such that 
\begin{equation}
\Pru{(L_{1},L_{2})\sim \cG(\bF_{2}^{n},\ell)}{F[L_{1}]=F[L_{2}]\neq \nil\Big.}\geq 2^{-q+3}.
\end{equation}
Then there exist an element $\sigma\in \Sigma$ and subspaces $Q\in \mathsf{Gr}(\bF_{2}^{n},q)$, $W\in \mathsf{Gr}(\bF_{2}^{n},n-q)$ such that $Q\subseteq W$ and
\[
\Pru{L\in \mathsf{Gr}(\bF_{2}^{n},\ell)}{F[L]=\sigma\,\Big|\, Q\subseteq L\subseteq W}\geq 2^{-Cq^{2}}.
\]
\end{lemma}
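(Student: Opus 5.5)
The plan is to mimic the proof of \Cref{thm:Grassmann_decoding}, with the family of level sets $\cL_{\sigma}=\{L\in\mathsf{Gr}(\bF_{2}^{n},\ell)\mid F[L]=\sigma\}$, for $\sigma\in\Sigma$, playing the role of the sets $\cL_{f}$. We take $C$ to be the constant of \Cref{lem:Grassmann_expansion} and argue by contradiction: suppose no triple $(\sigma,Q,W)$ attains agreement $2^{-Cq^{2}}$. By \Cref{def:Grassmann_pseudorandom}, this says exactly that every level set $\cL_{\sigma}$ is $(q,2^{-Cq^{2}})$-pseudorandom.

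Since the hypotheses $n-q-3\geq\ell\geq q+3$ of \Cref{lem:Grassmann_expansion} hold, that lemma gives, for each $\sigma$,
\[
\Pru{(L_{1},L_{2})\sim \cG(\bF_{2}^{n},\ell)}{L_{1},L_{2}\in\cL_{\sigma}\big.}\leq 2^{-q+2}\cdot\Pru{L\in\mathsf{Gr}(\bF_{2}^{n},\ell)}{L\in\cL_{\sigma}\big.}.
\]
The sets $\cL_{\sigma}$ are pairwise disjoint, and the event $\{F[L_{1}]=F[L_{2}]\neq\nil\}$ is the disjoint union over $\sigma$ of the events $\{L_{1},L_{2}\in\cL_{\sigma}\}$. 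Summing the displayed inequality over $\sigma\in\Sigma$ therefore bounds the passing probability by
\[
2^{-q+2}\sum_{\sigma}\Pr{L\in\cL_{\sigma}}\leq 2^{-q+2}<2^{-q+3}.
\]
This contradicts the hypothesis, so some $\cL_{\sigma}$ is not pseudorandom, which is precisely the desired conclusion.

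No averaging over linear functionals is needed here, unlike in \Cref{thm:Grassmann_decoding}, because the level sets partition the non-$\nil$ entries. The argument is also uniform in $q$: for $q=0$ the bound $2^{2}\cdot\Pr{\cdot}$ is at least $1\cdot\Pr{\cdot}$ trivially, and the hypothesis probability $2^{3}>1$ is impossible, so that case holds vacuously. The only point that needs checking is that the parameter ranges match those of \Cref{lem:Grassmann_expansion}, and they match verbatim. I do not expect a genuine obstacle.
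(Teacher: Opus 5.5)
Your proof is correct and matches the paper's: both partition the non-$\nil$ entries into the level sets $\cL_{\sigma}$ and apply \Cref{lem:Grassmann_expansion} to them. The paper averages over a uniformly random $\sigma\in\Sigma$ to exhibit a single $\cL_{\sigma}$ violating the expansion bound $2^{-q+2}$, which is the same computation as your summation over $\sigma$, just stated directly instead of by contradiction.
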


\begin{proof}
For each element $\sigma\in \Sigma$, let $\cL_{\sigma}$ denote the set of subspaces $L\in \mathsf{Gr}(\bF_{2}^{n},\ell)$ such that $F[L]=\sigma$. We have
\begin{equation}\label{eq:BKS_argument_vertices_baby}
\Exu{\sigma\in \Sigma}{\Pru{L\in \mathsf{Gr}(\bF_{2}^{n},\ell)}{L\in \cL_{\sigma}\big.}}=\Exu{L\in \mathsf{Gr}(\bF_{2}^{n},\ell)}{\Pru{\sigma\in \Sigma}{\sigma=F[L]\big.}}\leq \frac{1}{|\Sigma|}.
\end{equation}
Furthermore, due to the assumption of the lemma, we have
\begin{align}
\Exu{\sigma\in \Sigma}{\Pru{(L_{1},L_{2})\sim \cG(\bF_{2}^{n},\ell)}{L_{1},L_{2}\in\cL_{\sigma}\big.}}&=\Exu{(L_{1},L_{2})\sim\cG(\bF_{2}^{n},\ell)}{\Pru{\sigma\in \Sigma}{\sigma=F[L_{1}]=F[L_{2}]\big.}}\nonumber\\
&=\frac{1}{|\Sigma|}\cdot\Pru{(L_{1},L_{2})\sim\cG(\bF_{2}^{n},\ell)}{F[L_{1}]=F[L_{2}]\neq \nil\Big.}\geq \frac{2^{-q+3}}{|\Sigma|}.\label{eq:BKS_argument_edges_baby}
\end{align}
Therefore, combining~\Cref{eq:BKS_argument_vertices_baby,eq:BKS_argument_edges_baby}, we know that there exists an element $\sigma\in \Sigma$ such that
\[
\Pru{(L_{1},L_{2})\sim \cG(\bF_{2}^{n},\ell)}{L_{1},L_{2}\in\cL_{\sigma}\big.}>2^{-q+2}\cdot \Pru{L\in \mathsf{Gr}(\bF_{2}^{n},\ell)}{L\in \cL_{\sigma}\big.}.
\]
It then follows from~\Cref{lem:Grassmann_expansion} that $\cL_{\sigma}$ is not $(q,2^{-Cq^{2}})$-pseudorandom, which is exactly the desired conclusion.
\end{proof}

Our second decoding theorem (restated below) is a similar variant of~\Cref{lem:Grassmann_decoding_baby} with ``abundant zoom-ins.''

\grdecodingB*
\begin{proof}
One can use \Cref{lem:Grassmann_decoding_baby} to prove~\Cref{thm:grdecoding2} in an entirely analogous way to how~\Cref{thm:Grassmann_decoding} is used to prove~\Cref{thm:grassmann_linear_dec}.
\end{proof}

\section{Linearity Testing and List Decoding Bounds}\label{sec:list_decoding_bound}

In this appendix, we briefly explain how to prove~\Cref{lem:dkkms_list_bound}. The core of the proof is an $\ell$-vs.-$b$-dimensional Grassmann agreement test, whose soundness is given by the following lemma from~\cite[Theorem B.1]{KMS1} (see also~\cite[Lemma 3.1]{DKKMS18}). 

\begin{lemma}[\cite{KMS1,DKKMS18}]\label{lem:list_decoding_linearity_test}
There is an absolute constant $C>0$ such that the following holds. Let $n,\ell,b$ be positive integers such that $n\geq 2\ell$ and $40\leq b\leq \ell/4$. Suppose $F[\cdot]$ is a table where $F[L]\in L^*$ for each $L\in \mathsf{Gr}(\bF_{2}^{n},\ell)$, such that
\[
\Pru{(L_{1},L_{2})\sim \cG(\bF_{2}^{n},\ell,b)}{\left.F[L_{1}]\right|_{L_{1}\cap L_{2}}=\left.F[L_{2}]\right|_{L_{1}\cap L_{2}}\Big.}\geq \varepsilon,
\]
for some parameter $\varepsilon\in [2^{4-b/4},1]$. Then there exists a linear map $f:\bF_{2}^{n}\rightarrow \bF_{2}$ such that
\[
\Pru{L\in \mathsf{Gr}(\bF_{2}^{n},\ell)}{F[L]=\left.f\right|_{L}\Big.}\geq C^{-1}\varepsilon^{3}.
\]
\end{lemma}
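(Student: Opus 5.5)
This lemma is imported from \cite{KMS1,DKKMS18}, so the plan is to re-derive it along the standard route used there rather than prove it from scratch. The first step is the usual reduction of an $\ell$-vs.-$b$ agreement test to a direct-product or linearity-type statement. We sample $L_{1},L_{2}$ from $\cG(\bF_{2}^{n},\ell,b)$ by first choosing the intersection $B=L_{1}\cap L_{2}\in\mathsf{Gr}(\bF_{2}^{n},b)$ and then two independent extensions. For each $B$ we define the ``popularity'' distribution $F[L]|_{B}$ over $L\supseteq B$. Cauchy--Schwarz then shows that for an $\Omega(\varepsilon)$ fraction of $B$ there is a plurality value $g_{B}\in B^{*}$ with $\Pr[L\supseteq B]{F[L]|_{B}=g_{B}}\geq\varepsilon/2$. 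The hypothesis $\varepsilon\geq 2^{4-b/4}$ is used to absorb the $O(2^{-b/2})$ error from the sampled pairs being slightly non-uniform and from the event $\dim(L_{1}\cap L_{2})\neq b$ under natural couplings.

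Next we turn the table $B\mapsto g_{B}$ into a global linear function. Consider triples $B\subseteq L\supseteq B'$ with $\dim(B\cap B')$ about $b/2$. Since each $F[L]$ is linear, whenever $F[L]|_{B}=g_{B}$ and $F[L]|_{B'}=g_{B'}$ we get $g_{B}$ and $g_{B'}$ agreeing on $B\cap B'$. This gives a consistency test on $\mathsf{Gr}(\bF_{2}^{n},b)$ with intersection about $b/2$, which passes with probability $\Omega(\varepsilon^{2})$. Because the intersection dimension is large (linear in $b$) rather than $b-1$, this test has standard soundness with no zoom-ins. We prove it by a direct-product-testing argument in the style of Impagliazzo--Kabanets--Wigderson: fix a random $B_{0}$, and define $f(x)$ by majority vote of $g_{B}(x)$ over $B\ni x$ with $g_{B}|_{B\cap B_{0}}=g_{B_{0}}|_{B\cap B_{0}}$. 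Linearity of $f$ then follows from a three-query argument, since any $x,y$ lie together in a typical $B$ on which $g_{B}$ is linear. Together with the sampling properties of the Grassmann inclusion graph (spectral gap $2^{-\Omega(\ell-b)}$), this shows that $g_{B}=f|_{B}$ for an $\Omega(\varepsilon^{2})$ fraction of $B$.

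Finally we lift back to $\ell$-spaces. For a random $L$ and a random $B\subseteq L$, the events $F[L]|_{B}=g_{B}$ and $g_{B}=f|_{B}$ occur together with probability $\Omega(\varepsilon^{3})$. If $F[L]\neq f|_{L}$, the two disagree on a hyperplane of $L$ that contains a random $B$ with probability at most $2^{-b}\ll\varepsilon^{3}$. So $F[L]=f|_{L}$ with probability at least $C^{-1}\varepsilon^{3}$.

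The main obstacle is the middle step: proving the IKW-style global consistency with only polynomial loss and with the intersection dimension only a constant fraction of $b$. Both the hypothesis $b\leq\ell/4$ (which keeps $L$ rich enough to contain many near-independent $B,B'$) and the lower bound $\varepsilon\geq2^{4-b/4}$ enter here. This is where I would follow \cite[Theorem B.1]{KMS1} most closely.
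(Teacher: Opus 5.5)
The paper does not prove this lemma; it imports it from \cite[Theorem B.1]{KMS1} (see also \cite[Lemma 3.1]{DKKMS18}) and only adds a wrapper for $\nil$ entries afterwards. Your outline has a genuine gap: it defers to that same source exactly where the content lies, and the preprocessing you add does not reduce the problem. Replacing $F$ on $\ell$-spaces by plurality values $g_B$ on $b$-spaces produces a $b$-vs-$(b/2)$ agreement test of the same kind. Its ``standard soundness with no zoom-ins'' is precisely what the lemma asserts, in a regime the lemma itself does not cover. The intersection exceeds a quarter of the dimension, and $\varepsilon\ge 2^{4-b/4}$ does not put a pass probability of order $\varepsilon^{2}$ above the threshold $2^{4-b/8}$ the lemma would require at this level, so you cannot even recurse. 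The IKW-style sketch also omits the actual argument. You must show that the $B$'s consistent with a good $B_0$ are mostly consistent with one another. In general you only control the margin of the majority vote at most points $x$, so a self-correction step is needed before any three-point linearity argument applies. Finally, agreement proved for $B$'s meeting $B_0$ in dimension $b/2$ still has to be transferred to uniformly random $B$.

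The final lifting step is also unjustified as written, and it does not give the stated exponent. Let $p_B(g)$ be the fraction of $L\supseteq B$ with $F[L]|_B=g$. What you need is $\mathbb{E}_B\bigl[p_B(g_B)\,\mathbf{1}[g_B=f|_B]\bigr]=\Omega(\varepsilon^{3})$. But $\mathbb{E}_B[p_B(g_B)]=\Omega(\varepsilon)$ and $\mathbb{P}_B[g_B=f|_B]=\Omega(\varepsilon^{2})$ separately imply nothing of the sort. The decoder applied to $\{g_B\}$ does not see the weights, and the $B$'s on which it agrees with $f$ may be ones whose plurality has weight close to $2^{-b}$. You must first set $g_B=\nil$ whenever $p_B(g_B)<\varepsilon/4$ (the $b$-level test still passes with probability $\Omega(\varepsilon^{2})$ by the same Cauchy--Schwarz) and use a $\nil$-tolerant decoder. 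Even then the bound is $(\varepsilon/4)\cdot\mathbb{P}_B[g_B=f|_B\neq\nil]-2^{-b}$. Reaching $C^{-1}\varepsilon^{3}$ therefore requires the $b$-level step to turn pass probability $\Omega(\varepsilon^{2})$ into agreement $\Omega(\varepsilon^{2})$, i.e.\ with only linear loss, which is a stronger statement than the one you are proving. With the lemma's own cubic loss you would land near $\varepsilon^{7}$. The detour through $b$-spaces only costs you. It is cleaner to run the IKW/KMS argument directly on the $\ell$-level table, where linearity of each $F[L]$ already converts approximate agreement into exact agreement via the hyperplane observation you use at the end. Alternatively, simply cite the result, as the paper does.
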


The following lemma is a slight variant where some entries are allowed to be $\nil$.
\begin{lemma}\label{lem:list_decoding_linearity_test_nil}
There is an absolute constant $C\geq 40$ such that the following holds. Let $n,\ell,b$ be positive integers such that $n\geq 2\ell$ and $C\leq b\leq \ell/4$. Suppose $F[\cdot]$ is a table where $F[L]\in L^*\cup\{\nil\}$ for each $L\in \mathsf{Gr}(\bF_{2}^{n},\ell)$, such that
\begin{equation}\label{eq:list_decoding_linearity_test}
\Pru{(L_{1},L_{2})\sim \cG(\bF_{2}^{n},\ell,b)}{\left.F[L_{1}]\right|_{L_{1}\cap L_{2}}=\left.F[L_{2}]\right|_{L_{1}\cap L_{2}}\Big.}\geq \varepsilon,
\end{equation}
for some parameter $\varepsilon\in [2^{4-b/4},1]$. Then there exists a linear map $f:\bF_{2}^{n}\rightarrow \bF_{2}$ such that
\[
\Pru{L\in \mathsf{Gr}(\bF_{2}^{n},\ell)}{F[L]=\left.f\right|_{L}\Big.}\geq C^{-1}\varepsilon^{3}.
\]
\end{lemma}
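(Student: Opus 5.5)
The plan is to reduce \Cref{lem:list_decoding_linearity_test_nil} to \Cref{lem:list_decoding_linearity_test} by filling each $\nil$ entry with an independent uniformly random linear functional. Two things must then be checked: filling in random values does not lower the passing probability, and a random entry almost never agrees with any fixed global function.

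\textbf{Step 1: the filled-in table.} Let $\widetilde F$ be the random table with $\widetilde F[L]=F[L]$ whenever $F[L]\neq\nil$, and $\widetilde F[L]$ uniform in $L^*$ otherwise, independently over $L$. A pair $(L_1,L_2)$ that passed for $F$ has no $\nil$ entries, so it still passes for $\widetilde F$. Hence for every realization the passing probability of $\widetilde F$ in $\cG(\bF_2^n,\ell,b)$ is at least $\varepsilon$.

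\textbf{Step 2: apply the known lemma.} Since $\varepsilon\geq 2^{4-b/4}$ and $b\geq C\geq 40$, \Cref{lem:list_decoding_linearity_test} applies to every realization of $\widetilde F$. Fix one realization. We obtain a linear $f$ with
\[
\Pru{L\in\mathsf{Gr}(\bF_2^n,\ell)}{\widetilde F[L]=f|_L}\geq C_0^{-1}\varepsilon^3.
\]

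\textbf{Step 3: remove the spurious agreement.} This is the main obstacle, because $f$ may depend on the random fill. I would fix the fill by an averaging argument. For a fixed $f$, the number $X_f$ of $\nil$-locations where the random entry equals $f|_L$ is a sum of independent indicators, each with mean $2^{-\ell}$. By Chernoff, $X_f$ exceeds a fraction $\tfrac12 C_0^{-1}\varepsilon^3\geq 2^{-3b/4}\gg 2^{-\ell}$ of $\mathsf{Gr}(\bF_2^n,\ell)$ only with probability $\exp(-\Omega(\varepsilon^3|\mathsf{Gr}|))$. A union bound over all $2^n$ functionals $f$ costs a factor $2^n$, which is negligible against $\exp(-\Omega(\varepsilon^3|\mathsf{Gr}|))$ since $|\mathsf{Gr}(\bF_2^n,\ell)|\geq 2^{\ell(n-\ell)}$ and $\ell\geq 4b$.

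So some realization of the fill has no $f$ gaining more than half its agreement from filled-in entries. For that realization, the $f$ produced in Step 2 satisfies
\[
\Pru{L}{F[L]=f|_L}\geq \tfrac12 C_0^{-1}\varepsilon^3.
\]
Setting $C=\max(40,2C_0)$ then gives \Cref{lem:list_decoding_linearity_test_nil}.
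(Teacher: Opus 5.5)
Your proposal is correct and follows the paper's proof essentially verbatim: fill each $\nil$ entry with an independent uniform functional, then use Chernoff and a union bound over the $2^n$ linear maps to fix a realization in which no $f$ gains much agreement from filled-in locations. Then apply \Cref{lem:list_decoding_linearity_test} to that realization and subtract the spurious agreement. The differences are cosmetic: the paper uses the threshold $2^{1-\ell}$ instead of your $\tfrac12 C_0^{-1}\varepsilon^3$, and your stated bound $\tfrac12 C_0^{-1}\varepsilon^3\geq 2^{-3b/4}$ actually needs $C_0\le 2^{11}$. However, all you use is that this threshold is $\gg 2^{-\ell}$, which holds since $b\le \ell/4$.
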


\begin{proof}
We define a random table $F'[\cdot]$ as follows: for each $L\in \mathsf{Gr}(\bF_{2}^{n},\ell)$,
\begin{enumerate}[label=(\arabic*)]
\item if $F[L]\neq \nil$, let $F'[L]=F[L]$;
\item if $F[L]=\nil$, pick a uniformly random linear functional $h\in L^*$ and let $F'[L]=h.$
\end{enumerate}
Therefore, for each $L\in \mathsf{Gr}(\bF_{2}^{n},\ell)$ the entry $F'[L]$ always takes value in the space $L^*$. 

For any fixed linear map $f:\bF_{2}^{n}\rightarrow \bF_{2}$, the random variables
\[
Z_{f}(L)=\ind{F'[L]=\left.f\right|_{L}\text{ and }F[L]=\nil\Big.}\qquad\text{for }L\in \mathsf{Gr}(\bF_{2}^{n},\ell)
\]
take values in $\{0,1\}$, have expected values at most $2^{-\ell}$, and are mutually independent. Therefore, by Chernoff bound we have
\begin{align*}
\Pru{F'}{\frac{1}{\left|\mathsf{Gr}(\bF_{2}^{n},\ell)\right|}\sum_{L\in\mathsf{Gr}(\bF_{2}^{n},\ell)}Z_{f}(L)\geq 2^{1-\ell}}&\leq \exp\left(-\frac{1}{3}\cdot 2^{-\ell}\cdot \left|\mathsf{Gr}(\bF_{2}^{n},\ell)\right|\right)\\
&\leq \exp\left(-\frac{1}{3}\cdot 2^{\ell(n-\ell-1)}\right)<2^{-n},
\end{align*}
where in the last transition we used the condition that $n\geq 2\ell\geq 320$. By a union bound over all linear maps $f:\bF_{2}^{n}\rightarrow \bF_{2}$, it follows that with nonzero probability over $F'$ we have
\[
\Exu{L\in \mathsf{Gr}(\bF_{2}^{n},\ell)}{Z_{f}(L)\big.}\leq 2^{1-\ell}.
\]
In other words, there exists a realization of the random table $F'[\cdot]$, which we denote by $F_{0}[\cdot]$, such that 
\begin{enumerate}[label=(\arabic*)]
\item for each $L\in \mathsf{Gr}(\bF_{2}^{n},\ell)$, we have $F_{0}[L]\in L^*$;
\item for each $L\in \mathsf{Gr}(\bF_{2}^{n},\ell)$, we have $F_{0}[L]=F[L]$ whenever $F[L]\neq \nil$;
\item $\Pru{L\in \mathsf{Gr}(\bF_{2}^{n},\ell)}{F_{0}[L]=\left.f\right|_{L}\text{ and }F[L]=\nil\Big.}\leq 2^{1-\ell}$ for any linear map $f:\bF_{2}^{n}\rightarrow \bF_{2}$.
\end{enumerate}

Since $F_{0}[\cdot]$ only replaces the $\nil$ values in $F[\cdot]$, it is clear that~\Cref{eq:list_decoding_linearity_test} also holds for $F_{0}$, i.e.
\[
\Pru{(L_{1},L_{2})\sim \cG(\bF_{2}^{n},\ell,b)}{\left.F_{0}[L_{1}]\right|_{L_{1}\cap L_{2}}=\left.F_{0}[L_{2}]\right|_{L_{1}\cap L_{2}}\Big.}\geq \varepsilon,
\]
and hence we may apply~\Cref{lem:list_decoding_linearity_test} to obtain a linear map $f:\bF_{2}^{n}\rightarrow \bF_{2}$ such that
\[
\Pru{L\in\mathsf{Gr}(\bF_{2}^{n},\ell)}{F_{0}[L]=\left.f\right|_{L}\Big.}\geq C_{0}^{-1}\varepsilon^{3},
\]
for some absolute constant $C_{0}>0$. Combining this with the second and third properties of $F_{0}[\cdot]$ listed above, it follows that
\[
\Pru{L\in \mathsf{Gr}(\bF_{2}^{n},\ell)}{F[L]=\left.f\right|_{L}\Big.}\geq C_{0}^{-1}\varepsilon^{3}-2^{1-\ell}\geq (2C_{0})^{-1}\varepsilon^{3},
\]
where the last transition holds as long as $\varepsilon^{3}\cdot 2^{\ell}$ is sufficiently large.
\end{proof}

The next corollary is a version of~\Cref{lem:list_decoding_linearity_test_nil} that also takes zoom-ins into account.

\begin{corollary}\label{cor:list_decoding_linearity_test_nil}
There is an absolute constant $C\geq 40$ such that the following holds. Fix positive integers $n,\ell,b$ and a nonnegative integer $q$ such that $n\geq 2\ell$ and $C(q+1)\leq b\leq \ell/4$. Let $Q\in\mathsf{Gr}(\bF_{2}^{n},q)$ be a subspace. Suppose $F[\cdot]$ is a table where $F[L]\in L^*\cup\{\nil\}$ for each $L\in \mathsf{Gr}(\bF_{2}^{n},\ell)$, such that
\begin{equation}
\Pru{(L_{1},L_{2})\sim \cG(\bF_{2}^{n},\ell,b)}{\left.F[L_{1}]\right|_{L_{1}\cap L_{2}}=\left.F[L_{2}]\right|_{L_{1}\cap L_{2}}\,\Big|\,Q\subseteq L_{1},L_{2}}\geq \varepsilon,
\end{equation}
for some parameter $\varepsilon\in [2^{-b/8},1]$. Then there exists a linear map $f:\bF_{2}^{n}\rightarrow \bF_{2}$ such that
\[
\Pru{L\in\mathsf{Gr}(\bF_{2}^{n},\ell)}{F[L]=f|_{L}\,\Big|\,Q\subseteq L}\geq C^{-1}\cdot 2^{-3q}\varepsilon^{3}
.\]
\end{corollary}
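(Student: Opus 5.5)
The plan is to reduce to \Cref{lem:list_decoding_linearity_test_nil} by quotienting out the zoom-in $Q$. Fix a complement: choose an isomorphism $\bF_{2}^{n}/Q\cong \bF_{2}^{n-q}$. Subspaces $L\in\mathsf{Gr}(\bF_{2}^{n},\ell)$ containing $Q$ correspond bijectively to $\bar L=L/Q\in\mathsf{Gr}(\bF_{2}^{n-q},\ell-q)$. Under this correspondence, conditioning a pair $(L_{1},L_{2})\sim\cG(\bF_{2}^{n},\ell,b)$ on $Q\subseteq L_{1},L_{2}$ gives exactly the uniform distribution $\cG(\bF_{2}^{n-q},\ell-q,b-q)$ on $(\bar L_{1},\bar L_{2})$, since $(L_{1}\cap L_{2})/Q=\bar L_{1}\cap\bar L_{2}$.

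The difficulty is that $F[L]$ need not vanish on $Q$, so it does not descend to $\bar L$. We handle this by averaging over a \emph{shift}. Pick a uniformly random linear functional $h\in Q^*$ and extend it arbitrarily but canonically to $\tilde h\in(\bF_{2}^{n})^*$, say via a fixed complement. Define the table $\bar F_{h}[\bar L]$ as follows. If $F[L]\neq\nil$ and $F[L]|_{Q}=h$, then $(F[L]-\tilde h|_{L})$ vanishes on $Q$, and we let $\bar F_{h}[\bar L]$ be the induced functional on $\bar L$. Otherwise we set $\bar F_{h}[\bar L]=\nil$.

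When $F[L_{1}]$ and $F[L_{2}]$ agree on $L_{1}\cap L_{2}\supseteq Q$, they have a common restriction to $Q$, which equals $h$ with probability $2^{-q}$. In that case $\bar F_{h}$ passes the test on $(\bar L_{1},\bar L_{2})$. Averaging over $h$, some fixed $h$ gives passing probability at least $2^{-q}\varepsilon$ for the $(\ell-q)$-vs.-$(b-q)$ test in $\bF_{2}^{n-q}$.

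Now apply \Cref{lem:list_decoding_linearity_test_nil} with parameters $n'=n-q$, $\ell'=\ell-q$, $b'=b-q$ and $\varepsilon'=2^{-q}\varepsilon$. It yields $\bar f$ agreeing with $\bar F_{h}$ on at least $C^{-1}2^{-3q}\varepsilon^{3}$ fraction of $\bar L$. Lift by setting $f=\bar f\circ(\text{quotient map})+\tilde h$. Then $\bar F_{h}[\bar L]=\bar f|_{\bar L}$ implies $F[L]=f|_{L}$, which gives the conclusion.

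The main obstacle is verifying the parameter hypotheses, which is routine but is where the constant $C$ gets adjusted.
\begin{itemize}
\item We need $n'\geq 2\ell'$, which follows from $n\geq 2\ell$.
\item We need $C_{0}\leq b'\leq \ell'/4$. Taking $C$ large, $b\geq C(q+1)$ gives $b-q\geq C_{0}$. For the upper bound, $b\le\ell/4$ gives $b-q\le(\ell-q)/4$ only when $q\le 0$, so we must check this slightly more carefully. If necessary, we enlarge the pair's intersection or absorb the slack into $C$, using that the analysis of \Cref{lem:list_decoding_linearity_test} tolerates $b'\le\ell'/4+O(q)$.
\item We need $\varepsilon'\geq 2^{4-b'/4}$. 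This follows from $\varepsilon\geq 2^{-b/8}$ and $b\geq C(q+1)$, since $2^{-q-b/8}\geq 2^{4-(b-q)/4}$ once $b/8\geq 4+5q/4$.
\end{itemize}
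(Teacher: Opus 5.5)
This is essentially the paper's proof: you pigeonhole a functional $h\in Q^*$ at a cost of a factor $2^{-q}$, pass to the $(\ell-q)$-vs.-$(b-q)$ test modulo $Q$, and invoke \Cref{lem:list_decoding_linearity_test_nil}. (The paper restricts to a complement $Q^{\mathsf{c}}$ via $L\mapsto F[L+Q]|_L$ instead of using the quotient with the shift $\tilde h$; the two are equivalent.) One correction: your concern about $b'\le \ell'/4$ is unfounded. The condition $4(b-q)\le \ell-q$ is equivalent to $4b\le \ell+3q$, which follows from $4b\le \ell$ for every $q\geq 0$, so you should drop the vague ``tolerance'' patch — nothing in the paper supports it, and it is not needed.
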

\begin{proof}
Since there are exactly $2^{q}$ linear functionals $Q\rightarrow \bF_{2}$, there exists a linear functional $h\in Q^{*}$ such that
\begin{equation}\label{eq:list_decoding_h_on_Q}
\Pru{(L_{1},L_{2})\sim \cG(\bF_{2}^{n},\ell,b)}{\left.F[L_{1}]\right|_{L_{1}\cap L_{2}}=\left.F[L_{2}]\right|_{L_{1}\cap L_{2}}\textup{ and }\left.F[L_{1}]\right|_{Q}=h\,\Big|\,Q\subseteq L_{1},L_{2}}\geq 2^{-q}\varepsilon,
\end{equation}

Let $Q^{\mathsf{c}}\in \mathsf{Gr}(\bF_{2}^{n},n-q)$ be a linear subspace such that $Q\cap Q^{\mathsf{c}}=\{0\}$. Define a table $F'[\cdot]$ as follows: for each $L\in \mathsf{Gr}(Q^{\mathsf{c}},\ell-q)$,
\begin{enumerate}[label=(\arabic*)]
\item if $F[L+Q]\neq \nil$ and $\left.F[L+Q]\right|_{Q}=h$, let $F'[L]=\left.F[L+Q]\right|_{L}$;
\item if $F[L+Q]=\nil$ or $\left.F[L+Q]\right|_{Q}\neq h$, let $F'[L]=\nil$.
\end{enumerate}
It then follows from~\Cref{eq:list_decoding_h_on_Q} that
\[\Pru{(L_{1},L_{2})\sim \cG(Q^{\mathsf{c}},\ell-q,b-q)}{\left.F'[L_{1}]\right|_{L_{1}\cap L_{2}}=\left.F'[L_{2}]\right|_{L_{1}\cap L_{2}}\Big.}\geq 2^{-q}\varepsilon.\]
Applying~\Cref{lem:list_decoding_linearity_test_nil} to the table $F'[\cdot]$ and then translating the conclusion back to $F[\cdot]$ yields the desired result.
\end{proof}

We now explain how to prove~\Cref{lem:dkkms_list_bound}. The proof of~\Cref{lem:dkkms_list_bound} (with $\nil$ entries disallowed) in~\cite{DKKMS18} runs a sunflower argument (similar to the proof of~\Cref{lem:sunflower_transfer}) to show that if we have too many $(f_{i},W_{i})\in \mathsf{LIST}_{Q}^{\vec{\tau}}(F)$ with $W_{i}$'s having the same codimension, then a subcollection of them can be ``patched together'' into a $(f,W)\in \mathsf{LIST}_{Q}^{\vec{\tau}}(F)$ with $\operatorname{codim}W$ smaller than the codimension of $W_{i}$'s. Since this contradicts the maximality requirement in~\Cref{def:LIST_Q}, one can conclude that $\mathsf{LIST}_{Q}^{\vec{\tau}}(F)$ does not contain too many pairs with the same zoom-out dimension. Since there are only $(w+1)$ possible zoom-out dimensions, the size of the list itself can consequently be bounded.

In order to allow $\nil$ entries, the only change one has to make to the proof is to use~\Cref{cor:list_decoding_linearity_test_nil} instead of~\Cref{lem:list_decoding_linearity_test} as the linearity-test input. We also record the following lemma that is used both in the proof of~\Cref{lem:dkkms_list_bound} and in the proof of~\Cref{lem:sunflower_transfer}.

% The next lemma gives an upper bound on the number of global linear functions that can be decoded from a given table $F$.
% \textcolor{red}{Dor: either cite or prove? I believe I attributed such stuff to Blinovsky in earlier papers (but gave a full proof because it was in Russian).}
\begin{lemma}[{\cite[Theorem 2.6]{KMS1}}]\label{lem:list_decoding_bound_baby}
There is an absolute constant $C\geq 40$ such that the following holds. Fix positive integers $n,\ell,b$, a nonnegative integer $q$ and a parameter $\eta\in (0,1)$ such that 
\[n\geq 2\ell,\qquad C(q+1)\leq b\leq \ell/4,\qquad\text{and}\qquad \eta\geq 2^{-b/8}.\] 
Let $Q\in\mathsf{Gr}(\bF_{2}^{n},q)$ be a subspace. Suppose $F[\cdot]$ is a table where $F[L]\in L^*\cup\{\nil\}$ for each $L\in \mathsf{Gr}(\bF_{2}^{n},\ell)$. Then there are at most $2/\eta$ different linear functions $f:\bF_{2}^{n}\rightarrow \bF_{2}$ such that
\begin{equation}\label{eq:list_decoding_baby_agreement}
\Pru{L\in \mathsf{Gr}(\bF_{2}^{n},\ell)}{F[L]=f|_{L}\,\Big|\,Q\subseteq L}\geq \eta.
\end{equation}
\end{lemma}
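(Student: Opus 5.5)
The plan is to reduce the count to a near-orthogonality argument in the zoomed-in space, with the linearity test of \Cref{cor:list_decoding_linearity_test_nil} used only to control pairwise collisions. First we zoom into $Q$. Each $f$ satisfying \eqref{eq:list_decoding_baby_agreement} has a restriction $f|_{Q}\in Q^*$. Entries $F[L]$ with $L\supseteq Q$ whose restriction to $Q$ disagrees with $f|_Q$ can never count toward $f$'s agreement. So we group the candidate functions by their restriction $h=f|_Q$.

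For each group we pass to a complement $Q^{\mathsf c}$, as in the proof of \Cref{cor:list_decoding_linearity_test_nil}. There the candidates become distinct linear functions $g$ on $Q^{\mathsf c}$, and the derived table $F'$ on $\mathsf{Gr}(Q^{\mathsf c},\ell-q)$ keeps only the entries consistent with $h$. Hence all candidates with the same $h$ satisfy $\Pr_{L}[F'[L]=g|_L]\geq\eta$, with $L$ uniform.

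Next we bound pairwise collisions: for two distinct linear functions $f_1\neq f_2$ (on $\bF_2^n$, agreeing or not on $Q$),
\[
\Pru{L}{f_1|_L=f_2|_L\,\big|\,Q\subseteq L}\leq 2^{-(\ell-q)}.
\]
If $f_1-f_2$ does not vanish on $Q$ the probability is $0$. Otherwise the kernel of $f_1-f_2$ is a hyperplane containing $Q$, and a random $L\supseteq Q$ lies inside it with probability at most $2^{-(\ell-q)}$.

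Now suppose there are $M$ candidates $f_1,\dots,f_M$. The events $A_i=\{F[L]=f_i|_L\}$ each have probability at least $\eta$. Two of them can co-occur only when $f_i|_L=f_j|_L$, so
\[
\Pr[A_i\cap A_j]\leq 2^{-(\ell-q)}.
\]
Taking $M'=\min(M,\lceil 2/\eta\rceil+1)$ of the candidates, Bonferroni gives
\[
1\geq \Pr\Bigl[\bigcup_{i\le M'} A_i\Bigr]\geq M'\eta-\binom{M'}{2}2^{-(\ell-q)}.
\]
The requirements $\eta\geq 2^{-b/8}$ and $b\le \ell/4$ give $M'^2\,2^{-(\ell-q)}\ll 1$. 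If $M\geq M'=\lceil 2/\eta\rceil+1$, the right side would exceed $1$, which is impossible. Hence $M\le 2/\eta$.

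The main obstacle is keeping the constants tight enough to get exactly $2/\eta$ rather than, say, $2/\eta+1$. This is a matter of applying Bonferroni with exactly $\lfloor 2/\eta\rfloor+1$ sets and checking that the error term $\binom{M'}{2}2^{-(\ell-q)}$ is below $M'\eta-1-\eta\ge 1$. The check goes through because $2^{-(\ell-q)}\le 2^{-3\ell/4}$ is far smaller than $\eta^2\ge 2^{-\ell/16}$.

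The $\nil$ entries cause no difficulty: a $\nil$ entry lies in no $A_i$. The linearity-testing machinery is therefore not needed for this counting bound; it is needed only in the existence statements (\Cref{cor:list_decoding_linearity_test_nil}) used alongside it.
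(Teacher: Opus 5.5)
Your argument is correct and is essentially the paper's proof. Both rest on the same key estimate: for distinct candidates, $f_i|_L=f_j|_L$ holds for at most a $2^{q-\ell}$ fraction of $L\supseteq Q$. You close with a truncated Bonferroni inequality on $\lfloor 2/\eta\rfloor+1$ candidates, whereas the paper applies Cauchy--Schwarz to the fraction of candidates agreeing with $F[L]$, averaged over $L\supseteq Q$; this handles all $m$ candidates at once and needs no truncation.

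Two small remarks. Your initial grouping by $f|_Q$ and the passage to $Q^{\mathsf c}$ are never used and can be dropped. Also, with $M'=\lfloor 2/\eta\rfloor+1$ you only need the error term to be below $M'\eta-1>1$; your stated bound $M'\eta-1-\eta\ge 1$ need not hold for that choice.
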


\begin{proof}
Let $f_{1},\dots,f_{m}:\bF_{2}^{n}\rightarrow \bF_{2}$ be the pairwise distinct linear functions satisfying~\Cref{eq:list_decoding_baby_agreement}. Let
\[\mu=\Pru{L\in \mathsf{Gr}(\bF_{2}^{n},\ell),\; i\in [m]}{F[L]=f_{i}|_{L}\,\Big|\,Q\subseteq L}\geq \eta.\]
To prove $m\leq 2/\eta$, assume on the contrary that $m>2/\eta$. We note that for any distinct indices $i,j\in [m]$, since $f_{i}$ and $f_{j}$ are distinct linear functions, we have
\[
\Pru{L\in\mathsf{Gr}(\bF_{2}^{n},\ell)}{f_{i}|_{L}=f_{j}|_{L}\,\Big|\,Q\subseteq L}=\Pru{L\in\mathsf{Gr}(\bF_{2}^{n},\ell)}{L\subseteq \ker(f_{i}-f_{j})\,\Big|\,Q\subseteq L}\leq 2^{q-\ell}.
\]
Therefore, we have
\begin{align*}
\Exu{L\in \mathsf{Gr}(\bF_{2}^{n},\ell)}{\Pru{i\in [m]}{F[L]=f_{i}|_{L}\big.}^{2}\,\middle|\,Q\subseteq L}&=\Exu{L\in \mathsf{Gr}(\bF_{2}^{n},\ell)}{\Pru{i,j\in [m]}{F[L]=f_{i}|_{L}=f_{j}|_{L}\big.}\,\middle|\,Q\subseteq L}\\
&\leq \frac{\mu}{m}+\frac{m-1}{m}\cdot 2^{q-\ell}\leq \frac{\mu\eta}{2}+2^{q-\ell}< \mu^{2}.
\end{align*}
On the other hand, by Cauchy-Schwarz we have
\[
\Exu{L\in \mathsf{Gr}(\bF_{2}^{n},\ell)}{\Pru{i\in [m]}{F[L]=f_{i}|_{L}\big.}^{2}\,\middle|\,Q\subseteq L}\geq \Pru{L\in \mathsf{Gr}(\bF_{2}^{n},\ell),\; i\in [m]}{F[L]=f_{i}|_{L}\,\Big|\,Q\subseteq L}^{2}=\mu^{2},
\]
leading to a contradiction.
\end{proof}

\section{Covering Properties}\label{sec:covering_prop}
Several covering-property-type statements were left unproved in~\Cref{sec:outer_PCP,sec:middle_PCP,sec:inner_PCP}. In this section, we complete their proofs. 

\subsection{Covering in the Outer PCP}\label{subsec:covering_outer}
We first prove~\Cref{prop:covering_outer} (restated below), which analyzes the distribution of $E'$ in a sample $(E',A_{\inn},A_{\midd})\sim \cJ_{\outt}(E,q_{1},q_{2},\beta)$ conditioned on $(A_{\inn},A_{\midd})$. 
\coveringouterA*
\begin{proof}
We write \(q=q_{1}+q_{2}\). Without loss of generality, we assume 
\[E=[3k]\qquad\text{and}\qquad\operatorname{Triples}(E)=\bigl\{\{3i-2,3i-1,3i\}\,\big|\, i\in [k]\bigr\}\]
for some positive integer $k$. 

Let \(P\subseteq E\) be the union of the supports of all vectors appearing in the fixed advice sequences \(A_{1}\) and \(A_{2}\). For each $i\in [k]$, let 
\[P_{i}=\bigl\{j\in \{1,2,3\}\,\big|\,3i-3+j\in P\bigr\}.\]
Consider a sample
\[
(E',A_{\inn},A_{\midd})\sim \cJ_{\outt}([3k],q_{1},q_{2},\beta)
\]
conditioned on the event \(\{A_{\inn}=A_{1}\}\cap\{A_{\midd}=A_{2}\}\). Under this conditioning, the marginal distribution of \(E'\) is independent over the triples of $E$: for every $i\in [k]$,
\begin{enumerate}[label=(\arabic*)]
\item if $|P_{i}|\geq 2$, then $3i-2,3i-1$ and $3i$ are all included in $E'$ with probability 1;
\item if $P_{i}$ is a singleton set $\{j\}$, then $3i-2,3i-1$ and $3i$ are all included in $E'$ with probability $1-\eta_{1}$, and with probability $\eta_{1}$ only $3i-3+j$ is included in $E'$, where
\begin{equation}\label{eq:covering_eta_1}
\eta_{1}=\frac{(\beta/3)(1-2^{-q})}{(\beta/3)(1-2^{-q})+(1-\beta)(1-2^{-q})2^{-2q}}=\frac{2^{2q}\beta}{2^{2q}\beta+3(1-\beta)}\leq 2^{2q}\beta;
\end{equation}
\item if $P_{i}=\emptyset$, then $3i-2,3i-1$ and $3i$ are all included in $E'$ with probability $1-\eta_{0}$, and with probability $\eta_{0}$ only (a uniformly random) one of them is included in $E'$, where
\begin{equation}\label{eq:covering_eta_2}
\eta_{0}=\frac{\beta\cdot 2^{-q}}{\beta\cdot 2^{-q}+(1-\beta)2^{-3q}}=\frac{2^{2q}\beta}{2^{2q}\beta+(1-\beta)}\leq 2^{2q}\beta. 
\end{equation}
\end{enumerate}
In each of the above three cases, the triple $\{3i-2,3i-1,3i\}$ is included entirely in $E'$ with probability at least $1-2^{2q}\beta=1-4^{q_{1}+q_{2}}\beta$.
\end{proof}
Building on the proof of~\Cref{prop:covering_outer}, we can now prove the main covering property needed for the middle PCP (\Cref{prop:covering_outer_2}, restated below), which asserts that the distribution of $A_{\mathsf{mid}}$ is close to uniform over Alice's space even when conditioned on $A_{\mathsf{in}}$ (similar arguments appeared in~\cite{KhotSafra,KMS1}).
\coveringouterB*
\begin{proof}[Proof]
As in the proof of~\Cref{prop:covering_outer}, we assume
\[E=[3k]\qquad\text{and}\qquad\operatorname{Triples}(E)=\bigl\{\{3i-2,3i-1,3i\}\,\big|\, i\in [k]\bigr\}\]
for some positive integer $k$. Let \(P\subseteq E\) be the union of the supports of all vectors appearing in the fixed advice sequence \(A_{1}\). For each $i\in [k]$, let 
\(P_{i}=\bigl\{j\in \{1,2,3\}\,\big|\,3i-3+j\in P\bigr\}\).
We organize the proof into the following two steps.

\paragraph{Step 1: description of the distribution.} Similarly to~\Cref{eq:covering_eta_1,eq:covering_eta_2}, we let
\[
\eta_{1}=\frac{2^{2q_{1}}\beta}{2^{2q_{1}}\beta+3(1-\beta)}\leq 2^{2q_{1}}\beta\qquad\text{and}\qquad \eta_{0}=\frac{2^{2q_{1}}\beta}{2^{2q_{1}}\beta+(1-\beta)}\leq 2^{2q_{1}}\beta.
\]
Let $\cD_{0}$ be the uniform distribution over
\[(\mathbb F_{2}^{q_{2}})^{3}=\left\{\bigl(x^{(1)},x^{(2)},x^{(3)}\bigr)\,\middle|\,x^{(1)},x^{(2)},x^{(3)}\text{ are vectors in }\bF_{2}^{q_{2}}\right\}.\]
For each $j\in [3]$, let $\cD_{j}$ be the uniform distribution over the subset
\[\left\{\bigl(x^{(1)},x^{(2)},x^{(3)}\bigr)\in (\bF_{2}^{q_{2}})^{3}\,\middle|\,x^{(i)}=0\text{ for all }i\in [3]\setminus \{j\}\right\}\subseteq (\bF_{2}^{q_{2}})^{3}.\]
For each $i\in [k]$, we define a distribution $\cD^{(i)}$ over $(\mathbb F_{2}^{q_{2}})^{3}$ as follows (according to the conditional distribution of $E'\cap \{3i-2,3i-1,3i\}$ described in the proof of~\Cref{prop:covering_outer}):
\begin{enumerate}[label=(\arabic*)]
\item if $|P_{i}|\geq 2$, let $\cD^{(i)}=\cD_{0}$;
\item if $P_{i}$ is a singleton set $\{j\}$, let $\cD^{(i)}$ be the mixture $(1-\eta_{1})\cD_{0}+\eta_{1}\cD_{j}$;
\item if $P_{i}=\emptyset$, let $\cD^{(i)}$ be the mixture $(1-\eta_{0})\cD_{0}+\frac{1}{3}\eta_{0}\cD_{1}+\frac{1}{3}\eta_{0}\cD_{2}+\frac{1}{3}\eta_{0}\cD_{3}$.
\end{enumerate}
It is straightforward to verify that in each of the above three cases, for any tuple $X\in (\bF_{2}^{q_{2}})^{3}$ we have
\begin{equation}\label{eq:covering_3row_pointwise}
\left|\sqrt{\cD^{(i)}(X)/\cD_{0}(X)}-1\right|\leq \left|\cD^{(i)}(X)/\cD_{0}(X)-1\right|\leq 2^{2q_{2}}\max\{\eta_{0},\eta_{1}\}\leq 2^{2q_{2}+2q_{1}}\beta.
\end{equation}

Now consider a sample
\[
(E',A_{\inn},A_{\midd})\sim \cJ_{\outt}([3k],q_{1},q_{2},\beta)
\]
conditioned on the event \(\{A_{\inn}=A_{1}\}\). The conditional distribution of $E'$ is as described in the proof of~\Cref{prop:covering_outer} and does not depend on $q_{2}$. It is then easy to see from the proof of~\Cref{prop:covering_outer} that the conditional distribution of $A_{\midd}$ is the output distribution of the following process:
\begin{enumerate}
    \item For each \(i\in [k]\), sample a tuple of three vectors \(\bigl(x^{(3i-2)},x^{(3i-1)},x^{(3i)}\bigr)\sim \cD^{(i)}\).
    \item Form the $3k$-by-$q_{2}$ matrix \(X\in \mathbb F_{2}^{3k\times q_{2}}\) whose $i$-th row is $x^{(i)}$ for each $i\in [3k]$. Let \(\cM_{P}\) denote the resulting distribution of \(X\).
    \item Output the $q_{2}$ columns of $X$.
\end{enumerate}

\paragraph{Step 2: tensorization of Hellinger affinity.} It now suffices to show that $\cM_{P}$ is close to uniform over $\bF_{2}^{3k\times q_{2}}$ (the space of $3k$-by-$q_{2}$ matrices).

For each $i\in[k]$, by \Cref{eq:covering_3row_pointwise} we have that the squared Hellinger distance between $\cD^{(i)}$ and $\cD_{0}$ is
\[
\frac{1}{2}\cdot\Exu{X\sim \cD_{0}}{\left(\sqrt{\cD^{(i)}(X)/\cD_{0}(X)}-1\right)^{2}}\leq 2^{4q_{1}+4q_{2}-1}\beta^{2},
\]
and therefore 
\[
\sum_{X\in (\bF_{2}^{q_{2}})^{3}}\sqrt{\cD^{(i)}(X)\cD_{0}(X)}=1-\frac{1}{2}\sum_{X\in (\bF_{2}^{q_{2}})^{3}}\left(\sqrt{\cD^{(i)}(X)}-\sqrt{\cD_{0}(X)}\right)^{2}\geq 1-2^{4q_{1}+4q_{2}-1}\beta^{2}.
\]
Consider the distributions
\[\prod_{i=1}^{k}\cD^{(i)}=\cM_{P}\qquad\text{and}\qquad\prod_{i=1}^{k}\cD_{0}=\cD_{0}^{\otimes k}.
\]
over $\bF_{2}^{3k\times q_{2}}$. By the tensorization we have
\begin{align*}
\sum_{X\in \bF_{2}^{3k\times q_{2}}}\sqrt{\cM_{P}(X)\cD^{\otimes k}_{0}(X)}&=\prod_{i=1}^{k}\sum_{X\in (\bF_{2}^{q_{2}})^{3}}\sqrt{\cD^{(i)}(X)\cD_{0}(X)}\\
&\geq (1-2^{4q_{1}+4q_{2}-1}\beta^{2})^{k}\geq 1-2^{4q_{1}+4q_{2}-1}\beta^{2}k.
\end{align*}
Consequently, the squared Hellinger distance between $\cM_{P}$ and $\cD_{0}^{\otimes k}$ is
\[
\frac{1}{2}\sum_{X\in \bF_{2}^{3k\times q_{2}}}\left(\sqrt{\cM_{P}(X)}-\sqrt{\cD^{\otimes k}_{0}(X)}\right)^{2}\leq 2^{4q_{1}+4q_{2}-1}\beta^{2}k,
\]
and hence by Cauchy-Schwarz,\footnote{For any two probability vectors $p,q\in \bR^{n}$, we have $\|p-q\|_{1}\leq \|\sqrt{p}-\sqrt{q}\|_{2}\cdot\|\sqrt{p}+\sqrt{q}\|_{2}\leq 2\|\sqrt{p}-\sqrt{q}\|_{2}$.} the total variation distance between $\cM_{P}$ and $\cD_{0}^{\otimes k}$ is
\[
\frac{1}{2}\sum_{X\in \bF_{2}^{3k\times q_{2}}}\left|\cM_{P}(X)-\cD^{\otimes k}_{0}(X)\right|\leq \sqrt{2}\cdot\sqrt{2^{4q_{1}+4q_{2}-1}\beta^{2}k}=2^{2q_{1}+2q_{2}}\beta\sqrt{k}.
\qedhere\]
\end{proof}

\subsection{Covering in the Middle PCP}\label{subsec:covering_middle}

In the middle PCP, we introduced a new joint distribution $\cJ_{\midd}(E,q_{1},q_{2},\beta,\ell)$ based on the ``thinning with advice'' distribution $\cJ_{\outt}(E,q_{1},q_{2},\beta)$. Specifically, in addition to the tuple $(E',A_{\inn},A_{\midd})$ sampled in the latter joint distribution, a subspace $U\subseteq \bF_{2}^{E'}$ is generated. Since the way we generated $U$ in  $\cJ_{\midd}(E,q_{1},q_{2},\beta,\ell)$ can be viewed as sampling ``advice vectors'' and taking their span, the following proposition follows rather easily from~\Cref{prop:covering_outer_2}.

\coveringmiddleA*
\begin{proof}

We write \(q=q_{1}+q_{2}\) and $\ell'=\ell-q_{2}$.

According to~\Cref{def:Joint_dist_S_A_Aprime_U}, when $(E',A_{\inn},A_{\midd},U)$ is sampled from $\cJ_{\midd}(E,q_{1},q_{2},\beta,\ell)$, a vector sequence $A'\in (\bF_{2}^{E'})^{\ell'}$ is generated, and then $U$ is taken to be the sum $\operatorname{span}(A_{\midd})+\operatorname{span}(A')$. Let \[A_{\inn}\uplus A_{\midd} \in (\bF_{2}^{E'})^{q_{1}+q_{2}}\]
denote the concatenation of the two vector sequences $A_{\inn}$ and $A_{\midd}$. Note that before any conditioning, the distribution of $(E',A_{\inn}\uplus A_{\midd}, A')$ is identical to $\cJ_{\outt}(E,q,\ell',\beta)$. Therefore, it follows from~\Cref{prop:covering_outer_2} that after conditioning on $\{A_{\inn}=A_{1}\}\cap\{A_{\midd}=A_{2}\}$,
\begin{equation}\label{eq:covering_middle_1}
\textup{the distribution of }A'\textup{ is }\Bigl(4^{q+\ell'}\beta\sqrt{|E|/3}\Bigr)\textup{-close to uniform over }(\bF_{2}^{E})^{\ell'}. 
\end{equation}

Since $\dim\operatorname{span}(A_{2})=q_{2}$, it is easy to see that
\begin{equation}\label{eq:covering_middle_2}
\Pru{A\in (\bF_{2}^{E})^{\ell'}}{\dim\bigl(\operatorname{span}(A_{2})+\operatorname{span}(A)\bigr)=\ell}\geq 1-2^{\ell-|E|}.
\end{equation}
Furthermore, if $A\in (\bF_{2}^{E})^{\ell'}$ is a uniformly random vector sequence, then the distribution of $\bigl(\operatorname{span}(A_{2})+\operatorname{span}(A)\bigr)$ conditioned on the event \[\bigl\{\dim\bigl(\operatorname{span}(A_{2})+\operatorname{span}(A)\bigr)=\ell\bigr\}\]
is uniform over the set \(\big\{U\in \mathsf{Gr}(\mathbb F_{2}^{E},\ell)\,\big|\, \operatorname{span}(A_{2})\subseteq U\big\}\). Therefore, it follows from \Cref{eq:covering_middle_1,eq:covering_middle_2} that conditioned on $\{A_{\inn}=A_{1}\}\cap\{A_{\midd}=A_{2}\}$, the distribution of $U=\bigl(\operatorname{span}(A_{\midd})+\operatorname{span}(A')\bigr)$ has total variation distance at most
\[
4^{q+\ell'}\beta\sqrt{|E|/3}+2^{\ell-|E|}=4^{q_{1}+\ell}\beta\sqrt{|E|/3}+2^{\ell-|E|}
\]
to the uniform distribution over \(\big\{U\in \mathsf{Gr}(\mathbb F_{2}^{E},\ell)\,\big|\, \operatorname{span}(A_{2})\subseteq U\big\}\).
\end{proof}

\subsection{Covering in the Inner PCP}
We finish by stating the following general probabilistic fact that establishes the covering property achieved by the lazy parallel repetition step. It is essentially the same as~\Cref{claim:prob_fact_intro}, but stated more generally in a way that is suitable for us. 

\begin{proposition}\label{prop:covering_inner_pre}
Fix positive integers $k,t$ and finite sets $\scrA_{1},\dots,\scrA_{t}$ with $|\scrA_{j}|=k$ for each $j\in [t]$. Suppose that for each $j\in [t]$ there is a probability distribution $\mu_{j}$ over $\scrA_{j}$ and a surjective map $\pi_{j}:\scrA_{j}\rightarrow \scrB_{j}$ onto some finite set $\scrB_{j}$. Let 
\(
\pi:\prod_{j=1}^{t}\scrA_{j}\rightarrow \prod_{j=1}^{t}\scrB_{j}
\)
be the coordinate-wise map defined by
\[\pi(a_{1},\dots,a_{t})=(\pi_{1}(a_{1}),\dots,\pi_{t}(a_{t}))\qquad\text{for all }(a_{1},\dots,a_{t})\in {\textstyle\prod_{j=1}^{t}\scrA_{j}}.\]
Define a probability distribution $\mu$ on $\prod_{j=1}^{t}\scrA_{j}$ to be the output distribution of the following process: 
\begin{enumerate}
\item Sample a uniformly random index $c\in [t]$.
\item Sample $a_{c}\sim \mu_{c}$. For each $j\in [t]\setminus\{c\}$, sample $a_{j}$ uniformly at random from $\scrA_{j}$.
\item Output $(a_{1},\dots,a_{t})$.
\end{enumerate}
For each $b\in \prod_{j=1}^{t}\scrB_{j}$, let $\operatorname{dist}(b)$ be the total variation distance between 
\begin{enumerate}[label=(\arabic*)]
\item the distribution of $a\sim \mu$ conditioned on the event $\{\pi(a)=b\}$ (if $\Prs{a\sim \mu}{\pi(a)=b}=0$, let $\operatorname{dist}(b)=1$), and
\item the uniform distribution over the subset $\pi^{-1}(\{b\})\subseteq \prod_{j=1}^{t}\scrA_{j}$.
\end{enumerate}
Then we have 
\[\Exu{a\sim \mu}{\operatorname{dist}(\pi(a))\big.}\leq \sqrt{\frac{k}{4t}}.\]
\end{proposition}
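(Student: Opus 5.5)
The plan is to compare $\mu$ with the product uniform distribution $\nu$ on $\prod_j\scrA_j$, and to prove first a density estimate and then a fiberwise version of it. The density is explicit. Writing $u_j$ for the uniform distribution on $\scrA_j$ (so $u_j(a)=1/k$), we have
\[
\frac{\mu(a)}{\nu(a)}=\frac{1}{t}\sum_{c=1}^{t}\frac{\mu_c(a_c)}{u_c(a_c)}=\frac1t\sum_{c=1}^t k\,\mu_c(a_c)=:\frac1t\sum_{c=1}^t X_c(a).
\]
Under $\nu$ the $X_c$ are independent, nonnegative and of mean $1$. Each satisfies $\Exs{\nu}{X_c^2}=k\sum_a\mu_c(a)^2\le k$, so $\mathrm{Var}(X_c)\le k-1$. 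Hence $\Exs{\nu}{(\mu/\nu-1)^2}\le (k-1)/t\le k/t$.

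The key observation is that conditioning on a fiber keeps this product structure. Since $\pi$ acts coordinatewise, the uniform distribution on $\pi^{-1}(b)$ is $\nu_b=\prod_j u_j(\cdot\mid \pi_j^{-1}(b_j))$, which is again a product of uniform distributions, now on the sets $\pi_j^{-1}(b_j)$. Moreover, $\mu(\cdot\mid\pi(a)=b)$ has density $(\mu/\nu)/Z_b$ with respect to $\nu_b$, where
\[
Z_b=\Exs{\nu_b}{\mu/\nu}=\frac{\Prs{\mu}{\pi(a)=b}}{\nu(\pi^{-1}(b))}.
\]
With this, $\operatorname{dist}(b)=\frac12\Exs{\nu_b}{|\mu/\nu-Z_b|}/Z_b$. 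Averaging against $\mu$'s marginal on $b$, which is $Z_b\,\nu(\pi^{-1}(b))$, the factor $Z_b$ cancels:
\[
\Exu{a\sim\mu}{\operatorname{dist}(\pi(a))}=\frac12\sum_b \nu(\pi^{-1}(b))\,\Exs{\nu_b}{\bigl|\mu/\nu-Z_b\bigr|}=\frac12\Exs{\nu}{\bigl|\mu/\nu-\Exs{\nu}{\mu/\nu\mid\pi}\bigr|}.
\]
Fibers with $Z_b=0$ carry no $\mu$-mass, so they can be ignored.

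The remaining step, which I expect to be the only real point, is bounding this conditional $L^1$ deviation. Cauchy--Schwarz reduces it to $\frac12\sqrt{\Exs{\nu}{\mathrm{Var}_\nu(\mu/\nu\mid\pi)}}$. Conditioning on $\pi$ under $\nu$ again gives independent coordinates, with $a_j$ uniform on $\pi_j^{-1}(b_j)$. So $\mu/\nu=\frac1t\sum_c X_c$ is still a sum of conditionally independent terms, and
\[
\mathrm{Var}(\mu/\nu\mid\pi)=t^{-2}\sum_c\mathrm{Var}(X_c\mid\pi_c).
\]
Averaging over $b$, $\Exs{\nu}{\mathrm{Var}(X_c\mid\pi_c)}\le \mathrm{Var}_\nu(X_c)\le k$ by the law of total variance. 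Putting these together gives
\[
\Exu{a\sim\mu}{\operatorname{dist}(\pi(a))}\le \frac12\sqrt{k/t}=\sqrt{k/(4t)},
\]
which is the claimed bound.
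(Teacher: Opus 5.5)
Your proof is correct and follows essentially the same route as the paper. Both arguments write the density of $\mu$ against the uniform product measure as $\frac1t\sum_c k\mu_c(a_c)$, use that fibers of $\pi$ are product sets so this remains an average of conditionally independent terms, and combine Cauchy--Schwarz with additivity of variance and a bound of $k$ on each term's second moment. The only difference is bookkeeping: the paper applies Cauchy--Schwarz fiber by fiber, using $f_j\le k$ to get $\operatorname{dist}(b)\le\frac12\sqrt{k/(t\,g(b))}$, and then averages $\operatorname{dist}^2$, whereas you first rewrite the $\mu$-average exactly as $\frac12\,\mathbb{E}_\nu\bigl|\mu/\nu-\mathbb{E}_\nu[\mu/\nu\mid\pi]\bigr|$ and then apply Cauchy--Schwarz once together with the law of total variance.
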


\begin{proof}
For each $j\in [t]$, let $f_{j}:\scrA_{j}\rightarrow [0,+\infty)$ be the probability density function of $\mu_{j}$; that is, let $f_{j}(a)=k\cdot\mu_{j}(\{a\})$ for all $a\in \scrA_{j}$. It is then easy to see that the probability density function of $\mu$ is given by
\[
f(a_{1},\dots,a_{t}):=\frac{1}{t}\sum_{j=1}^{t}f_{j}(a_{j}),\qquad\text{for all }(a_{1},\dots,a_{t})\in \textstyle\prod_{j=1}^{t}\scrA_{j}.
\]
For each $j\in [t]$, let $g_{j}:\scrB_{j}\rightarrow [0,+\infty)$ be defined by 
\[g_{j}(b_{j})=\begin{cases}
\Exs{a_{j}\in \pi_{j}^{-1}(\{b_{j}\})}{f_{j}(a_{j})},&\text{if }\pi_{j}^{-1}(\{b_{j}\})\neq \emptyset,\\
0,&\text{otherwise}.
\end{cases}\]
For each $b\in \prod_{j=1}^{t}\scrB_{j}$, let
\[
g(b)=\frac{1}{t}\sum_{j=1}^{t}g_{j}(b_{j})=\Exu{a\in \pi^{-1}(\{b\})}{f(a)}.
\]

For any $b\in \prod_{j=1}^{t}\scrB_{j}$ that lies in the support of $g$, the total variation distance between the distribution of $a\sim \mu$ conditioned on $\{\pi(a)=b\}$
and the uniform distribution over $\pi^{-1}(\{b\})$ is
\begin{align*}
\frac{1}{2}\cdot\Exu{a\in \pi^{-1}(\{b\})}{\left|\frac{f(a)}{g(b)}-1\right|}&\leq \frac{1}{2 g(b)}\sqrt{\Varu{a\in \pi^{-1}(\{b\})}{f(a)}}\tag{by Cauchy-Schwarz}\\
&=\frac{1}{2t\cdot g(b)}\sqrt{\sum_{j=1}^{t}\Varu{a_{j}\in \pi_{j}^{-1}(b_{j})}{f_{j}(a_{j})}}\tag{using the fact that $\pi^{-1}(\{b\})$ is a product set}\\
&\leq \frac{1}{2t\cdot g(b)}\sqrt{\sum_{j=1}^{t}k\cdot \Exu{a_{j}\in \pi^{-1}(b_{j})}{f_{j}(a_{j})}}\tag{$f_{j}$ is upper-bounded by $k$}\\
&\leq \frac{1}{2}\sqrt{\frac{k}{t\cdot g(b)}}.\tag{by the definition of $g(b)$}
\end{align*}
Let $\nu$ be the distribution of $\pi(a)$ when $a$ is sampled uniformly from $\prod_{j=1}^{t}\scrA_{j}$.
We then have
\[
\Exu{a\sim \mu}{\operatorname{dist}(\pi(a))^{2}\big.}=\Exu{b\sim \nu}{g(b)\cdot\operatorname{dist}(b)^{2}}\leq \Exu{b\sim \nu}{\frac{k}{4t}}=\frac{k}{4t},
\]
and hence $\Exs{a\sim \mu}{\operatorname{dist}(\pi(a))}\leq \sqrt{k/(4t)}$ by Cauchy-Schwarz.
\end{proof}

\end{document}